\numberwithin{equation}{section}
\theoremstyle{plain}
\newtheorem{theorem}{Theorem}[chapter]
\newtheorem{proposition}[theorem]{Proposition}
\newtheorem{lemma}[theorem]{Lemma}
\newtheorem{corollary}[theorem]{Corollary}
\theoremstyle{definition}
\newtheorem{definition}[theorem]{Definition}
\newtheorem{example}[theorem]{Example}
\newtheorem{remark}[theorem]{Remark}
\numberwithin{section}{chapter}
\theoremstyle{plain}
\newtheorem{step}{Step}
\DeclarePairedDelimiter\abs{\lvert}{\rvert}%
\DeclarePairedDelimiter\norm{\lVert}{\rVert}%
\newcommand{\opnorm}{\@ifstar\@opnorms\@opnorm}
\newcommand{\@opnorms}[1]{%
  \left|\mkern-3mu\left|\mkern-3mu\left|
   #1
  \right|\mkern-3mu\right|\mkern-3mu\right|
}
\newcommand{\@opnorm}[2][]{%
  \mathopen{#1|\mkern-3mu#1|\mkern-3mu#1|}
  #2
  \mathclose{#1|\mkern-3mu#1|\mkern-3mu#1|}
}
\DeclareMathOperator{\tr}{Tr}
\newcommand{\dist}{\mathrm{dist}}
\newcommand{\diam}{\mathrm{diam}}
\renewcommand{\d}{\mathrm{d}}
\newcommand{\Z}{\mathbb{Z}}
\newcommand{\R}{\mathbb{R}}
\newcommand{\p}{\varphi}
\newcommand{\rhoMT}{{\varrho}}
\newcommand{\TN}{T_N}
\newcommand{\HN}{\mathrm{H}_N}
\newcommand{\HL}{\mathrm{H}_\Lambda}
\newcommand{\wpzc}{\mathpzc{w}}
\newcommand{\Bcal}{\mathcal{B}}
\newcommand{\Ccal}{\mathcal{C}}
\newcommand{\Gcal}{\mathcal{G}}
\newcommand{\Hcal}{\mathcal{H}}
\newcommand{\Pcal}{\mathcal{P}}
\newcommand{\Ical}{\mathcal{I}}
\newcommand{\Kcal}{\mathcal{K}}
\newcommand{\Mcal}{\mathcal{M}}
\newcommand{\Ocal}{\mathcal{O}}
\newcommand{\Rcal}{\mathcal{R}}
\newcommand{\Qcal}{\mathcal{Q}}
\newcommand{\Scal}{\mathcal{S}}
\newcommand{\Ucal}{\mathcal{U}}
\newcommand{\Vcal}{\mathcal{V}}
\newcommand{\Wcal}{\mathcal{W}}
\newcommand{\Xcal}{\mathcal{X}}
\newcommand{\Zcal}{\mathcal{Z}}
\newcommand{\Ascr}{\mathscr{A}}
\newcommand{\Cscr}{\mathscr{C}}
\newcommand{\Escr}{\mathscr{E}}
\newcommand{\Lscr}{\mathscr{L}}
\newcommand{\Mscr}{\mathscr{M}}
\newcommand{\Nscr}{\mathscr{N}}
\newcommand{\Qscr}{\mathscr{Q}}
\newcommand{\Uscr}{\mathscr{U}}
\newcommand{\Vscr}{\mathscr{V}}
\newcommand{\Wscr}{\mathscr{W}}
\newcommand{\Zscr}{\mathscr{Z}}
\newcommand{\QU}{\mathscr{Q}_\mathscr{U}}
\newcommand{\Pclk}{{\Pcal^{\rm cl}_k}}
\newcommand{\Pck}{{\Pcal^{\rm c}_k}}
\newcommand{\Pc}{{\Pcal^{\rm c}}}
\newcommand{\Pckp}{{\Pcal^{\rm c}_{k+1}}}
\newcommand{\Ccl}{{\Ccal^{\rm cl}}}
\newcommand{\weightzeta}{{\overline{\zeta}}}
\newcommand{\myW}{\mathcal{W}}
\newcommand{\myZ}{\mathcal{Z}}
\newcommand{\myT}{{\boldsymbol{T}}}
\newcommand{\Malt}{{\boldsymbol{M}}}
\newcommand{\myS}{{\boldsymbol{S}}}
\newcommand{\myR}{\boldsymbol{R}}
\newcommand{\AP}{A_{\mathcal{P}}}
\newcommand{\AB}{A_{\mathcal{B}}}
\newcommand{\vertiii}[1]{{\left\vert\kern-0.25ex\left\vert\kern-0.25ex\left\vert#1 \right\vert\kern-0.25ex\right\vert\kern-0.25ex\right\vert}}
\newcommand{\vertiiiReg}[1]{{\vert\kern-0.25ex\vert\kern-0.25ex\vert #1   \vert\kern-0.25ex\vert\kern-0.25ex\vert}}
\newcommand{\1}{{\mathds1}}
\newcommand{\nullL}{{\mathsf N}}
\newcommand{\Ran}{{R_0}}
\newcommand{\mf}{\mathfrak}
\DeclareMathOperator{\loc}{loc}
\DeclareMathOperator{\sgn}{sgn}
\DeclareMathOperator{\Span}{span}
\DeclareMathOperator{\sym}{sym}
\DeclareMathOperator{\tay}{Tay}
\DeclareMathAlphabet{\mathpzc}{OT1}{pzc}{m}{it}
\newcommand{\mpzc}{\mathpzc{m}}
\newcommand{\hpzc}{\mathpzc{h}}
\newcommand{\C}{\mathbb{C}}
\newcommand{\NN}{\mathbb{N}}
\newcommand{\myB}{\mathcal{B}}
\newcommand{\pn}{{r_0}}  %  number of derivatives of $F$
\newcommand{\pphi}{{p_\Phi}} %  number of discrete derivatives of the field
\newcommand{\BX}{\mathcal X}   % the underlying Banachspace, i.e $\R^\Lambda/ \hbox{null elements}$
\newcommand{\BY}{\mathcal Y}   % second space for polynomials in two variables  
\newcommand{\BXp}{  {\mathcal X}'}   % for second norm on  the underlying Banachspace
\newcommand{\BXt}{\tilde {\mathcal X}} % for a second norm on the Banach space
\newcommand{\headingh}{h}
\newcommand{\headingAkkp}{\boldsymbol{A}_{k:k+1}^X}
\newcommand{\headingAk}{\boldsymbol{A}_{k}^X}
\newcommand{\headingRk}{\boldsymbol{R}_{k+1}^{(\boldsymbol{q})}}
\newcommand{\headingC}{\boldsymbol{C}^{(\boldsymbol{q})}}
\newcommand{\headingB}{\boldsymbol{B}^{(\boldsymbol{q})}}
\newcommand{\headingA}{\boldsymbol{A}^{(\boldsymbol{q})}}
\newcommand{\headingd}{d}
\newcommand{\headingr}{r}
\DeclareMathOperator{\Id}{Id}
\definecolor{forestgreen}{cmyk}{0.91,0,0.88,0.12}
\newcommand{\gray}[1]{{\color{gray}#1}}
\begin{document}

\frontmatter

\title{Cauchy-Born Rule  from Microscopic Models \\
with Non-convex Potentials}

\author{Stefan Adams \dag}
\address{Mathematics Institute, University of Warwick, Coventry CV4 7AL, United Kingdom}
\email{S.Adams@warwick.ac.uk}

\author{Simon Buchholz}
\address{Institute for Applied Mathematics \& Hausdorff Center for Mathematics, Universit\"at Bonn, Endenicher Allee 60, D-53115 Bonn, Germany}
\email{buchholz@iam.uni-bonn.de}

\author{Roman Koteck\'y}
\address{Center for Theoretical Study, Charles University, Jilsk\'a 1, Prague, Czech Republic}
\email{kotecky@cts.cuni.cz}
\address{Mathematics Institute, University of Warwick, Coventry CV4 7AL, United Kingdom}

\author{Stefan M\"uller}
\address{Institute for Applied Mathematics \& Hausdorff Center for Mathematics, Universit\"at Bonn, Endenicher Allee 60, D-53115 Bonn, Germany}
\email{stefan.mueller@hcm.uni-bonn.de}

\date{May 30, 2024}

\subjclass[2010]{Primary 82B28, 60G60; Secondary 82B41, 60K35, 74B20}

\keywords{Renormalisation group; random field of gradients; nonlinear elasticity; surface tension; multi-scale analysis}  

%\dedication{Dedication text (use \\[2pt] for line break if necessary)} 
 
 \begin{abstract}
%\RK{230201: Not touched yet}
 
We study gradient field models on an integer lattice with non-convex interactions. 
These models emerge in distinct branches of physics and mathematics under various names. 
In particular, as zero-mass lattice (Euclidean) quantum field theory, models of random interfaces, 
and as mass-spring models of nonlinear elasticity.
Our attention is mostly  devoted to the latter with random vector valued fields as displacements for atoms of crystal structures,
where our aim is to prove the strict convexity of the free energy as a function of affine deformations 
for low enough temperatures and small enough deformations. 
This claim can be interpreted as a form of verification of the Cauchy-Born rule at small non-vanishing temperatures 
for a class of these models.   
We also show that  the scaling limit of the Laplace transform of the corresponding Gibbs measure (under a proper rescaling)
corresponds to the Gaussian gradient field with a particular covariance. 

The  proofs are based on a multi-scale (renormalisation group analysis) technique
needed in view of strong correlations of the studied gradient fields. 
To cover a sufficiently wide class of models, we extend these techniques from the standard case 
with rotationally symmetric nearest neighbour interaction to a more general situation with finite range interactions without any symmetry.
Our presentation is entirely self-contained covering the details of the needed renormalisation group methods.
\end{abstract} 

\maketitle
 
\tableofcontents  

\mainmatter

\chapter[Introduction] {Introduction} 
 \label{se:introduction}

This paper has two related goals.

Firstly, we seek to establish  uniform convexity properties for a class of lattice gradient models with non-convex microscopic interactions.

Secondly, we extend the rigorous renormalisation group techniques developed by Brydges and collaborators 
to vector models with finite range interactions without discrete rotational symmetry. 
The lack of  such symmetry,  leads to a need of a significant enlargement of  the set of relevant terms in the renormalisation treatment.

Regarding the first goal, we consider  \emph{gradient random fields} $\{\varphi(x)\}_{x\in\Lscr}$
indexed by a lattice $\Lscr$ with values in $\R^m$, $\varphi(x)\in\R^m$. 
The term \emph{gradient} is referring to the assumption that the distribution depends only on  gradients 
$\nabla_{e}\varphi(x)=\varphi(x+e)-\varphi(x)$.
These type of fields are used as effective models of crystal deformation or phase separation.
In the former case, where $m=3$ and $\Lscr\subset\R^3$, the value $\varphi(x)$ plays the role of a displacement
of an atom labelled by a site $x$ of a crystal under deformation. 
In the latter case, with $m=1$ and $\Lscr=\Z^2$,
the model is a discretization of a phase separation in $ \mathbb{R}^{3} $ with $\varphi(x)\in\R$ 
corresponding to the position of the (microscopic) phase separation surface. 
The model is a reasonably effective approximate description in spite of the fact that it ignores overhangs 
of the separation surface as well as any correlations inside and between the coexisting phases.

The distribution of random fields $\varphi(x)$ is given in terms of a Gibbs measure with Hamiltonian 
defined by means of  a finite range gradient potential.
To make this precise, let $A \subset \Z^d$ be a finite set and let  $U : (\R^m)^{A} \to \R$ be an interaction potential defined
on maps $\eta: A \to \R^m$. The fact that the interaction $U$ leads to a gradient random field amounts to an assumption
 that $U$ is invariant with respect to translations in $\R^m$, i.e. 
$U(t_a \eta) = U(\eta)$ where $t_a \eta(x) = \eta(x) + a$ for $x \in A$ and $a \in \R^m$.  As a result, for connected sets
$A$, $U(\psi)$ depends only on the discrete gradient of $\eta$.  For $x \in \Z^d$, we use $\tau_x(A)$ to denote the translation of
$A$ by $x$, $\tau_x(A) = x + A = \{ x+y : y \in A\}$ and for a map $\p: \Z^d \to \R^m$ we use $\tau_x \p(y)$ 
to denote the translated map $\tau_x \p(y) :=\p(y-x)$ and $\p_A$ to denote the restriction of $\p$ to the set $A$.
Then, for a finite set $\Lambda$ we introduce the Hamiltonian
$$ \HL(\p) = \sum_{x \in \Lambda:  \tau_x (A) \cap \Lambda \ne \emptyset}  U( (\tau_{-x}\p)_A).$$
Note that $(\tau_{-x}\p)_A$ depends only on the values of $\p$ on $\tau_x(A)$. 

The finite volume Gibbs distribution with a boundary condition $\psi_{\Z^d\setminus \Lambda}$ is given by the measure
$$
\gamma_{\beta,\Lambda}^{\psi}({\rm d}\varphi)=\frac{1}{Z_{\beta,\Lambda}(\psi)}\exp\big(-\beta \HL(\varphi)\big)\prod_{x\in\Lambda}{\rm d}\varphi(x)\prod_{x\in\Z^d\setminus \Lambda}\delta_{\psi(x)}({\rm d}\varphi(x)),
$$
where the partition function---the normalisation constant $ Z_{\beta,\Lambda}(\psi)$---is the integral of the density.
One is particularly interested in the boundary conditions 
$$
\psi_{F}(x)=Fx
$$
corresponding to a linear deformation $F:\R^d\to\R^m$. 

An object of basic relevance in this context is the  \textit{free energy\/} defined by the limit
$$
W_{\!\beta}(F)=-\lim_{\Lambda\uparrow\mathbb{Z}^d}\frac{1}{\beta|\Lambda|}\log Z_{\beta,\Lambda}(\psi_{F})
$$

In the scalar case, $m=1$,  the  map $F$ is actually a linear functional determined by a vector---a macroscopic tilt  $u\in\R^d$
defining the boundary condition  $\psi_{u}(x)=(u,x)$. The  free energy $W_{\!\beta}(u)$  then   corresponds 
to the interface free energy/surface tension $\sigma_\beta(u)$ with a given tilt---the price to pay for tilting a macroscopically flat interface.

Our main results relate to the strict convexity of the interface free energy $\sigma_\beta(u)$ 
as a function of the tilt $u$ in the scalar case 
and of the free energy $W_{\!\beta}(F)$ as a function of affine deformation $F$ in the vector case, respectively.
Actually, instead of random fields with affine Dirichlet boundary conditions $F$ we use the Funaki-Spohn trick 
that amounts to considering the fixed affine field $F$ with added periodic random field with vanishing mean,
see Section 2 for a detailed description.

Existing results concerning strict convexity  pertain to  the scalar case.
For a uniformly strictly convex symmetric nearest neighbour potential $U\in C^2(\R)$, 
Funaki and Spohn  \cite{FS97} established the \emph{convexity} of $ \sigma_{\beta} $ 
and use it in the derivation of the hydrodynamical limit of the Landau-Ginsburg model. This result was strengthened
to the claim of the \emph{strict convexity} of the surface tension  in \cite{DGI00} and \cite{GOS01}. 
 The strongest version proving that  $ \sigma_{\beta} \in C^{2,\gamma}$ under a mild additional regularity assumption 
 on the interaction $U$ was obtained very recently by Armstrong and Wu in \cite{AW19}.
 All those results use the random walk representation of Helffer and Sj\"ostrand
in a crucial way. This representation  requires the condition that the potential $U$ is uniformly strictly convex.

The case of a non-convex potential was treated by Cotar, Deuschel, and M\"uller \cite{CDM09}, 
who showed the strict convexity of the surface tension in the small $ \beta $ (high temperature) regime for potentials  of the form
$$
U(t)=U_0(t)+g(t), 
$$ 
where $U_0 $ is uniformly strictly convex and $ g \in C^2(\mathbb{R})$ has a negative bounded second derivative 
such that $ \sqrt{\beta}\norm{g^{\prime\prime}}_{L^1(\mathbb{R})} $ is sufficiently small. 
The unpublished precursor \cite{AKM16} of the present paper was still dealing with a nearest neighbour scalar case
obtaining the strict convexity of the surface tension for a class of  non-convex---and not necessarily symmetric---potentials with
$\beta$ large enough  (low temperatures) and sufficiently small tilts.

In the present paper we include  $\R^m$-valued random fields and consider microscopic  multibody interactions 
 which can have a general finite range
and go beyond nearest-neighbour or next-nearest neighbour internactions (both generalisations are 
crucial for applications to nonlinear elasticity).
All our main general results are collected in Chapter~\ref{sec:setting}.
An additional difficulty in the case of vector random field with application to the mass-spring models of 
discrete nonlinear elasticity stems from the fact that frame invariance implies that the interaction is necessarily
invariant under rotations.
This leads to a degeneracy of the quadratic form given by the second derivative of
the interaction $U$ at its minimum
that prevents the convexity of the free energy.
As explained in Chapter~\ref{sec:elasticity} devoted to a detailed discussion of this case,
 we will overcome this difficulty by adding a suitable discrete null Lagrangian to the Hamiltonian.
As a result, we arrive at proving  strict convexity of $W_{\!\beta}(F)$ when restricted to symmetric matrices $F$.

Such convexity results for the free energy are closely related to the so-called Cauchy-Born rule
in crystal elasticity.  This rule originates in work of Cauchy and was extended by 
Born and Huang \cite{BH54}, see Ericksen's paper  \cite{Eri08} for a  detailed discussion 
and a review or the literature. The Cauchy-Born rule for atomistic models at zero temperature states that the minimisers of the discrete
energy $\HL$ on finite domains with affine boundary conditions are given by the corresponding affine lattice deformations.
It  thus provides a crucial bridge between atomistic and continuum theories, since one can 
define an energy density function in the continuum theory by computing the energy per unit volume
for an affine transformation of the lattice. In the language of the calculus of variations, 
the fact that affine boundary conditions for the discrete energy lead to affine minimisers, corresponds to discrete quasiconvexity of the local interaction energy $U$. Strict quasiconvexity requires that the affine deformation is
the only minimiser for affine boundary condition and represents a stability property of the affine minimiser.
Often, the Cauchy-Born rule is only required for affine maps which are sufficiently close to a rigid motion. 

There is no obvious counterpart for the Cauchy-Born rule at finite temperature,  since the basic object
is not  a single deformation anymore, but a random field with the  probability measure  $\gamma_{\beta,\Lambda}^{\psi_F}$. 
Here it is useful to employ the Funaki-Spohn trick introduced in \cite{FS97}---replacing affine boundary condition by a  
periodic random field with the additional  fixed affine deformation 
$F$ incorporated directly in the Hamiltonian, cf.  the formulas \eqref{eq:defHF}, \eqref{eq:defWN}, and \eqref{eq:finite_volume_de} defining the corresponding Hamiltonian $\HN^F(\p) $, free energy
$W_{\!\beta,N}(F)$, and Gibbs measure $ \gamma_{\beta,N}^F(\d \p)$.

A weak version of the  Cauchy-Born at finite temperature is   the statement
that the free energy  density function $ F \mapsto W_{\beta, \Lambda}(F) $ is quasiconvex,
i.e., minimisers of the continuum  energy functional  $ \varphi \mapsto \int_\Omega W_{\beta, \Lambda}(\nabla \varphi(x)) \, dx$ 
subject to affine boundary conditions are affine. In Theorem~\ref{T:deW}   
we show that this is the case for sufficiently large $\beta$ (small temperature),
  discrete torus $\Z^d/ L^N \Z^d$ large enough, affine deformation $F$  sufficiently close to a rigid motion, 
  and the microscopic interaction $U$ satisfying certain regularity conditions.

In fact, we show that $W_{\beta, \Lambda}$ is strictly quasiconvex in this regime. Strict quasiconvexity  is
a stability condition which  penalizes fluctations at a mesoscopic scale. 
We note in passing that, under very general conditions on $U$,  Koteck\'y and Luckhaus \cite{KL14} have
shown  (for slightly different boundary conditions) that the thermodynamic limit 
$W_\beta:= \lim_{\Lambda \uparrow \Z^d} W_{\beta, \Lambda}$ exists
and is quasiconvex for all affine maps, not just those close to a rigid motion. 
 Their result, however, does not give strict quasiconvexity and hence does not penalize fluctuations for approximate minimisers. 
 
 A stronger version of the Cauchy-Born rule at finite temperature consists in requiring that the 
 probability measure $\gamma_{\beta,\Lambda}^{F}$ is concentrated near $0$ and approximately Gaussian.  
 Proving such a result is currently beyond the scope of our approach. We can show, however, that the measures 
 $\gamma_{\beta,\Lambda}^{F}$ converge to a Gaussian measure centered at zero  if we perform a suitable  rescaling of space,   
 see Theorem~\ref{T:descaling}.
 
 There is a sizable literature on the Cauchy Born rule at finite temperature, see e.g.,
 \cite{SSR99,DTMP05,XY06,XY08,YMLL15,LXY21,DWYY22}. In these works the point
 of view is slightly different from ours. The Cauchy-Born rule is seen as a method to
 approximate the discrete problem at finite temperature by a continuous problem. To  perform
 explicit calculations,  the harmonic approximation is used, i.e., instead of a measure like
 $\gamma^F_{\beta,N}$ one looks at a Gaussian measure which involves the second derivatives of the
 internal energy at the state considered.  
 The purpose of our work is exactly a rigorous analysis of the non-Gaussian case.

Our  proofs are based on multi-scale (renormalisation group analysis) techniques. 
However, to cover a sufficiently wide class of models, we need to extend these techniques from the standard case 
with rotationally symmetric nearest neighbour interaction to a more general situation with finite range interactions without any symmetry.

The second goal of the present paper thus is to show in detail how  the rigorous renormalisation approach of 
Brydges and collaborators (see \cite{BY90} for early work, \cite{Bry09} for a survey and 
\cite{BS15I, BS15II, BS15V, BBS19} for recent developments) 
can be extended to accommodate our class of models without a discrete rotational symmetry of the interaction.

In accordance with the general renormalisation group strategy,
the resulting partition function $Z_{\beta,\Lambda}(\psi_F)$ is obtained by a sequence of ``partial integrations'' (labelled by an index $k$).
The ``partial integrations'' are based on a finite range decomposition of the covariance of a linearization of the Hamiltonian.
The result of each of them is expressed in terms of two functions: the \emph{``irrelevant'' polymers} $K_k$ 
that are decreasing with each subsequent integration, and the \emph{``relevant'' ideal Hamiltonians} $H_k$---homogeneous quadratic functions of
gradients $\nabla\varphi$ parametrized by a fixed finite number of parameters. 
To fine-tune the procedure so that the final integration  yields a result with a straightforward bound 
we need to assure the smoothness of the procedure with respect  to the parameters of a suitably chosen ``seed Hamiltonian''. 
It turns out, however,  that the derivatives with respect to those parameters  lead to a loss of regularity of functions $K_k$ and $H_k$ 
considered as elements in a scale of Banach spaces if one uses standard extension of the finite range decomposition 
by Brydges and collaborators to the case without rotational symmetry.
We show that this problem can be overcome by using a refined finite range decomposition\cite{Buc18}, 
see Chapter~\ref{sec:FRD} for further discussion.

A brief overview of the general strategy of proof and the main new ideas compared to earlier work is given
in Section~\ref{se:main_new_ideas}. 
A summary of the concrete technical implementation of our approach is given in Chapter~\ref{sec:explanation}.

\chapter[Setting and Main Results] {Setting and Main Results}\label{sec:setting}

In the present chapter we formulate the main results of this work. 
They come in two forms.
First, we describe 'concrete' results with a general  class of Hamiltonians meeting particular assumptions allowing for certain 
non-convex interactions.  
To include finite range potentials (as opposed to just nearest neighbour interactions of standard gradient models), we introduce the equivalent notion
of generalized gradient models.
In Section~\ref{sec:ggm}  we describe the results for this general class of models---Theorem~\ref{thm:strictconvexity}  (local strict convexity of the free energy as a function of the imposed affine boundary conditions)
 and  Theorem~\ref{th:scalinglimit_concrete} (Gaussian scaling limit).
 Section~\ref{sec:discrete_elasticity_main} is then devoted to the corresponding results  for discrete elasticity, 
see Theorems~\ref{T:deW} and~\ref{T:descaling}.

Secondly, in Section~\ref{sec:main_abstract} we discuss the corresponding  abstract results, Theorems~\ref{th:pertcomp_E}  and~\ref{th:scalinglimit}, 
 formulated in terms of a rather general perturbation of Gaussian  measures. 
By splitting the concrete Hamiltonian arising for generalized gradient models into a quadratic part and a perturbation, one  easily sees that
the abstract results presented in Section~\ref{sec:main_abstract} imply  the concrete results for generalized gradient models, see Sections~\ref{sec:abstract_to_ggm} and~\ref{sec:emb}.

The proof of our basic results, Theorems~\ref{th:pertcomp_E}  and~\ref{th:scalinglimit}, will cover the main part of this work. 
The proof is based on a rigorous renormalisation approach  which has been systematically developed by Brydges, Slade, and Bauerschmidt 
over the last decades, see   \cite{Bry09, BS10,BBS19} for surveys and additional references to earlier and related work. 
 We review this approach,
and its modifications and extensions needed for our setting, in detail in Chapter~\ref{sec:explanation} below.  In Section~\ref{se:main_new_ideas} of the current chapter we give a very quick preview of
our adaptation of  the renormalisation group approach.  
We believe that it is of independent interest and might be useful
for the use of renormalisation group methods in other contexts, too.

Returning to a more precise description of the 'concrete'  results for generalized gradient models, the main assertion of 
Theorem~\ref{thm:strictconvexity} can be 
expressed informally as follows. 
If the (microscopic) local interaction energy
is sufficiently regular  (including subexponential growth bounds for finitely many  derivatives) and 
has a 'non degenerate'  strict minimum at zero, then the finite volume free energies (as functions of the imposed affine boundary condition $F$)
 are uniformly convex on a fixed small ball $B_{\delta}(0)$, provided
the inverse temperature $\beta$ satisfies $\beta \ge \beta_0$ for $\beta_0$ sufficiently large.
 Here the precise definition of 'non-degenerate' is given by the inequalities     \eqref{eq:V1bis_new} and  \eqref{eq:V3_new}.    
They involve a condition at $0$ and a global condition with uniformity at  $\infty$. Of course  $0$ may be replaced by any linear map and the conditions and the conclusion
are invariant under subtracting an affine function from the local interaction energy. 

Actually, we show more. The finite volume free energies admit a low temperature
 expansion of the form $W_{\!\beta,N}(F) = \Uscr(\bar F) + c_{\beta,N} + \frac1\beta \Wcal_{\beta,N}$,
 where $\Uscr(\bar F)$ is the interaction energy of the generalized gradient field $\bar F$ corresponding to the linear deformation $F$ (expressed in terms of  
 generalized interaction potential $\Uscr$ corresponding to the finite range potential $U$ of the original model), $c_{\beta,N}$ are constants, and 
 $F\mapsto \Wcal_{\beta, N}(F)$ is a function whose  the $C^r$ norm  is bounded on $B_{\delta}(0)$, uniformly in $N$ and in $\beta \ge \beta_0$.
 
Theorem~\ref{th:scalinglimit_concrete}
 asserts that under the same assumptions  on the microscopic energy, the  finite volume Gibbs measures approach, after suitable
rescaling,  a Gaussian measure in the infinite volume limit. More precisely, we show that, at least for a subsequence, the Laplace transforms
of the Gibbs measures converge. The relevant scaling corresponds to a central limit theorem scaling of the discrete gradient fields, see
Remark~\ref{re:correlation_structure_concrete}.

The setting of discrete elasticity poses an additional difficulty. As dictated by the condition of \emph{frame invariance}---an indispensable feature of microscopic models of nonlinear elasticity---the local interaction energy
is invariant under the action of the group of  $\mathrm{SO}(d) = \{ A \in \R^{d \times d} : A^T A = \1, \, \det A=1\}$ of proper rotations. Thus the minimum of the local interaction energy,
which is typically realized at the identity deformation, can never be non-degenerate. Nonetheless we identify  natural conditions on the local
interaction energy which allow us to conclude that the finite volume free energies is as convex as possible. More precisely, the free energy
is a function on $d \times d$  matrices and  we show that (for $\beta \ge \beta_0$)
the restriction 
to a small ball  around
the identity  in the space of symmetric matrices is uniformly strictly convex. It is easy to see that the $\mathrm{SO}(d)$ invariance of the 
local interaction energy implies that the finite volume free energies $W_{\!\beta, N}$ satisfy $W_{\!\beta, N}(QF) = W_{\!\beta,N}(F)$ for every
$Q \in \mathrm{SO}(d)$ and every $F \in \R^{d \times d}$.  Since every $d \times d$ matrix $F$ can be written as $F = Q S$ with
$Q \in SO(d)$ and $S$ symmetric, the free energy is determined by its values on symmetric matrices and uniform convexity on symmetric
matrices is the natural convexity condition. We also show that the free energy is uniformly
quasiconvex at matrices in $B_{\delta/2}(\1)$, see  \eqref{eq:uniformly_quasiconvex}.

The key idea to overcome the degeneracy induced by $\mathrm{SO}(d)$ invariance is described in detail in the next chapter, 
Chapter~\ref{sec:elasticity}.
The point is that we manage to employ a standard technique from nonlinear elasticity:
find a function---a  so-called discrete null Lagragian---such that  
i)  adding it to the local interaction energy, the Gibbs measure for fixed affine boundary conditions is unchanged, and 
ii)  the sum of the two energies has a nondegenerate minimum (in the sense of 
inequalities     \eqref{eq:V1bis_new} and  \eqref{eq:V3_new}) at the identity map.

\section{General setup}\label{sec:setup}
Fix an odd integer $L\geq 3$ and a dimension $d\ge 2$. Let $T_N=(\mathbb{Z}/(L^N\mathbb{Z}))^d$ be 
the $d$-dimensional \emph{discrete torus} of side length $L^N$ where $N$ is a positive integer.
We equip $T_N$ with the quotient distances  $\abs{\cdot}$ and  $\abs{\cdot}_\infty$ induced by the Euclidean 
and   maximum norm respectively.
Define the space of $m$-component fields on $T_N$ as
\begin{align}
\label{eq:defofVN}
\Vcal_N=\{\varphi: T_N\rightarrow \R^m\}=(\R^m)^{T_N}. 
\end{align}
Since the energies we consider are shift invariant we are only interested in gradient fields. 
However,  the condition of being a gradient is not entirely straightforward in dimension $d\geq 2$; 
thus  we rather work with  usual  fields modulo a  constant or, equivalently, with fields with the vanishing average  
\begin{align}
\label{eqdefofadmissible}
\varphi\in\Xcal_N=\Bigl\{\varphi\in \Vcal_N: \sum_{x\in T_N}\varphi(x)=0\Bigr\}
\end{align}
that are in one-to-one correspondence with \emph{gradient fields}.
Let the dot denote the standard scalar product on $\R^m$ which is later extended to $\mathbb{C}^m$. 
For $\varphi,\psi\in \Xcal_N$ the expression
\begin{align}
\label{E:scprodT}
(\varphi, \psi) =\sum_{x\in T_N} \varphi(x)\cdot \psi(x) 
\end{align} 
defines a scalar product on $\Xcal_N$ and this turns $\Xcal_N$ into a Hilbert space.
We use  $\lambda_N $ for  the $m(L^{Nd}-1)$-dimensional Hausdorff measure on $ \Xcal_N $, 
equip the space $ \Xcal_N $ with the $\sigma$-algebra 
$ \boldsymbol{\Bcal}_{\Xcal_N}$ 
induced by the Borel $\sigma$-algebra with respect to the product topology, 
and use  $ \Mcal_1(\Xcal_N)=\Mcal_1(\Xcal_N,  \boldsymbol{\Bcal}_{\Xcal_N}) $ 
to denote the set of probability measures on $ \Xcal_N $, 
referring to  elements in $ \Mcal_1(\Xcal_N) $ as  \emph{random gradient fields}.

The \emph{discrete forward} and \emph{backward derivatives} are defined by
\begin{align}
\begin{split}
(\nabla_i\varphi)_s(x)   & =\varphi_s(x+e_i)-\varphi_s(x)\qquad s\in\{1,\ldots, m\}, \quad i\in\{1,\ldots, d\},\\
(\nabla_i^*\varphi)_s(x) & =\varphi_s(x-e_i)-\varphi_s(x) \qquad s\in\{1,\ldots, m\}, \quad i\in\{1,\ldots, d\}.
\end{split}
\end{align}
Here $e_i$ are the standard coordinate unit vectors in $\mathbb{Z}^d$.
Forward and backward derivatives are adjoints of each other. 
We use $\nabla \varphi (x)$ and $\nabla^* \varphi (x)$ for the corresponding $m\times d$ matrices.

In this article we study a class of random gradient fields defined in terms of Gibbs measures
introduced by means of Hamiltonians $\HN^F: \Xcal_N\to \R   $ that are in  turn determined by a finite range potential 
$U:( \R^m)^{A}\to\R$ and a linear map $F : \R^d \to \R^m$ which serves as a boundary condition.

Here,  $A\subset \Z^d$  is a finite set and  we
use $\Ran$ to denote  the range of the potential $U$,  $\Ran=\diam_\infty A$. 
We assume  that $U$ is invariant with respect to translations in $\R^m$, i.e., 
$U(\psi)=U(t_a \psi)$
for any  $\psi\in  ( \R^m )^{A}$ with $(t_a \psi)_s(x)= \psi_s(x) +a_s$, $a\in \R^m$.
For connected sets $A$ this is equivalent to saying that $U(\psi)$ depends on $\psi$ only through 
$\nabla \psi$. 

Given $U$, we want to define a Hamiltonian which describes the action of $U$ on fields $\eta: \Z^d \to \R^m$ by
adding the action of $U$ on translations of $\eta$, restricted to $A$. Actually, we are interested in fields
which satisfy an affine boundary condition at $\infty$. To make this precise, we use the Funaki-Spohn trick \cite{FS97}
to consider fields of the form $F + \p: \Z^d \to \R^m$ where $F: \R^d \to \R^m$ is a linear map and $\p$ with values in $ \Xcal_N$ is a periodic random field.

We will frequently identify a map $\p: T_N \to \R^m$ with an $L^N \Z^d$-periodic map from $Z^N$ 
to $\R^m$, denoted by the same symbol.
Also, we often identify $F$ with a $m \times d$ matrix and we write $F$ also for the restriction of the map to $\Z^d$, $A$
or the set $x + A = \{ x + y : y \in A\}$. For any map $\eta: \Z^d \to \R^m$ we write $\eta_B$ for the restriction 
of $\eta$ to the set $B \subset \Z^d$. For $x \in \Z^d$ we define the translated map  $\tau_x \eta$ 
by  $\tau_x \eta(y) = \eta(y-x)$.

Note that for a linear map $F$ the translated maps $\tau_x F$ are affine maps which differ from $F$ only a by
constant map. Thus for an $L^N \Z^d$-periodic map $\p$,  the map
$$ x \mapsto U( (\tau_{-x} (F + \p))_A) $$
is also $L^N \Z^d$-periodic, since $U$ is invariant under shifts by a constant map.
Hence we can view $x \mapsto U( (\tau_{-x} (F + \p))_A)$ as a map from $T_N$ to $\R$ and we can define
the Hamiltonian $\HN^F: \Xcal_N \to \R$ by
\begin{align*}
\HN^F(\p)=\sum_{x\in T_N}  U( (\tau_{-x} (F + \p))_A).
\end{align*}
Using again that $\tau_{-x}F$ and $F$ differ only by a constant map,  we can rewrite the expression of $\HN^F$ as
\begin{align}
\label{eq:defHF}
\HN^F(\p)=\sum_{x\in T_N}  U( F +(\tau_{-x} \p)_A).
\end{align}
Note that $(\tau_{-x} \p)_A$ depends only on the values of $\p$ on the translated set $x + A$.

For a concrete example one might consider the following simple model of two-dimensional discrete
elasticity with nearest-neighbour and next-nearest-neighbour interactions: take $m=d=2$, $A = \{0,1\}^2$
and 
$$ U(\eta) = \tfrac12 \sum_{x,y \in \{0,1\}^2 : |x-y| = 1} U_1( |\eta(x) - \eta(y)|) +  \sum_{x,y \in \{0,1\}^2 : |x-y| = \sqrt 2} 
U_2( |\eta(x) - \eta(y)|)$$
where $U_1$ and  $U_2$ are functions defined on   $[0, \infty)$.
In this case one gets
\begin{align*}
 H^F_N(\p) = & \,  \sum_{x,y \in T_N : |x-y| = 1} U_1(| F(x-y) + \p(x) - \p(y))|  \\
  + & \,  \sum_{x,y \in T_N : |x-y| = \sqrt 2} 
U_2(| F(x-y) + \p(x) - \p(y))|.
\end{align*}

For the choice $U_i(t)  =  \frac12 K_i (t- a_i)^2$ this energy is considered in in \cite[eqn.\  (3.3)]{FT02}.

One can also consider additional multibody contributions to the energy,
for example a term $U_3(v)$ where $v$ is oriented area of the quadrilateral with corners 
$$(g_1, g_2, g_3, g_4): =     \left( \eta((0,0), \eta((0,1)),   \eta((1,1)) \eta((0,1))\right)$$ given by 
$$ v = \tfrac12 \det(g_2-g_1, g_4-g_1) + \tfrac12 \det(g_4-g_3, g_2 - g_3).$$
see \cite[p.\ 458]{FT02}.  The zero-temperature problem of minimizing the Hamiltonian $\HN$ subject to affine
Dirichlet boundary conditions and subject the constraint  that the map $\p$ is discretely orientation preserving (i.e., the oriented
area of every quadrilateral defined by $\p( x + \{0,1\}^2)$ is nonnegative) is discussed in \cite{FT02}.
For a discussion which microscopic interactions $U$ are allowed in our approach in the context of
discrete elasticity, see Sections~\ref{sec:discrete_elasticity_main} and~\ref{se:elasticity_examples}.

The \emph{finite volume gradient Gibbs measure $\gamma_{\beta,N}^F$  under a deformation $F$} is then defined as
\begin{align}  
\label{eq:finite_volume_de}
\gamma_{\beta,N}^F(\d \p)=\frac{1}{Z_{\beta,N}(F,0)}\exp\left(-\beta \HN^F(\p)\right)\lambda_N(\d\p),
\end{align}
where  $Z_{\beta,N}(F,0)$ is the normalizing \emph{partition function}.
Here the normalization $Z_{\beta,N}(F,0)$ is a special case, with $f=0$, of a useful 
\emph{generalized  partition function $Z_{\beta,N}(F,f)$ with a source term $f\in\Xcal_N$} defined by 
\begin{align}\label{eq:defofZ}
	Z_{\beta,N}(F,f)=\int_{\Xcal_N}\exp\bigl(-\beta \HN^F(\p)+(f,\p)\bigr)\lambda_N(\d\p).
\end{align}
In particular,  the generalized partition function characterizes  the Gibbs measure $\gamma_{\beta,N}^F$ and will be used to analyse its scaling limit. 

While  a natural  long-term objective is the specification of the gradient Gibbs measures 
with a given deformation as it was done in \cite{FS97} for the scalar case with convex interactions, 
in the present paper we will restrict our attention to the analysis of the partition function $Z_{\beta,N}(F,0)$ 
and the scaling limit of the partition function $Z_{\beta,N}(F,f)$. 
In particular, we investigate local convexity properties of the functions
\begin{align}  
\label{eq:defWN}
W_{\!\beta,N}(F)=-  \frac{\ln\, Z_{\beta,N}(F,0)}{\beta L^{Nd}}
\end{align}
and of the free energy 
\begin{align}
\label{eq:defW}
W_{\!\beta}(F)=\lim_{N \to \infty} W_{\!\beta, N}(F) = -\lim_{N\rightarrow  \infty}\frac{\ln\, Z_{\beta,N}(F,0)}{\beta L^{Nd}}.
\end{align}
For the scaling limit of the gradient Gibbs measure  we will analyse the  Laplace transform
\begin{align}
\label{eq:defLaplace}
\lim_{N\rightarrow \infty}\mathbb{E}_{\gamma_{\beta,N}^F}e^{(f_N,\p)}= 
\lim_{N\rightarrow \infty}\frac{Z_{\beta,N}(F,f_N)}{Z_{\beta,N}(F,0)}, 
\end{align}
where $f_N\in\Xcal_N$ is the rescaled discretization $f_N(x)=L^{-N(\frac{d+2}{2})}f(L^{-N}x)$  
of a smooth function $f:\mathbb{T}^d \rightarrow \R^m$ with average zero. Here $\mathbb{T}^d$ is a shorthand for the torus $(\R/\mathbb{Z})^d$.
The function $f_N$ is a slowly varying test function that allows us to examine the long distance behaviour
of the random field $\p$.

Let us remark that when  $m=d$, this is the setting for microscopic model of nonlinear elasticity  
with $F$ representing an affine deformation applied to a solid 
as will be discussed in detail in Chapter~\ref{sec:elasticity}.
In the scalar case, $m=1$, the model describes the behaviour of a random microscopic interface 
and  the map $F$ actually determines a vector---a macroscopic tilt  $u\in\R^d$
applied to the discrete interface and the free energy $W_{\!\beta}(F)$ 
then  corresponds to the interface free energy/surface tension $\sigma_\beta(u)$ with a given tilt.

\section{Main results for generalized gradient models}\label{sec:ggm}

In this section we first recall the natural equivalence of shift invariant interaction energies $U$ on a field $\varphi$
with energies $\Uscr$ that  act on the generalized  (discrete) gradients $D\varphi$ of $\varphi$, see Lemma~\ref{L:U-Uscr}.
Then we introduce an orthogonal decomposition of the Euclidean space of generalized   gradients $\Gcal$ into the subspace $\Gcal^\nabla$ corresponding to the first order
gradients (or, equivalently, linear maps) and a space $\Gcal^\perp$ corresponding to the higher gradients. In particular,  for every symmetric operator $\boldsymbol Q: 
\Gcal \to \Gcal$ we can define the operator $\boldsymbol Q^\nabla$ by restricting the corresponding quadratic form to $\Gcal^\nabla$, or, equivalently, to the corresponding linear maps,
see  \eqref{eq:defQscr_nabla} and  \eqref{eq:define_bsQ_nabla}.

With this notation in place, we state the main results for generalized gradient models: Theorem~\ref{thm:strictconvexity} 
covers locally uniform convexity of the free energy, while
Theorem~\ref{th:scalinglimit_concrete}  shows that  the limit of  an appropriate scaling of the Gibbs state is Gaussian.
\smallskip

We begin with the equivalence of shift invariant interaction energies and energies on generalized gradients.
In the previous section we considered finite range interactions with finite support $A \subset \Z^d$.  
We may assume, without loss of generality,  that $A$ is a  discrete cube of side $\Ran$. After a possible shift we may thus assume that
 $A = Q_{\Ran} :=\{0,\ldots, {\Ran}\}^d$.
We introduce the $m$ dimensional space of shifts $\Vcal_{Q_{\Ran}}=\{(a,\ldots,a)\in (\R^m)^{Q_{\Ran}}: a\in \R^m\}$ and its orthogonal complement $\Vcal_{Q_{\Ran}}^\perp$ in $(\R^m)^{Q_{\Ran}}$.
General interactions of range ${\Ran}$ are thus functions on the $m(({\Ran}+1)^{d}-1)$-dimensional space $\Vcal_{Q_{\Ran}}^\perp\simeq (\R^m)^{Q_{\Ran}}/\sim$ of local field configurations, where 
 the equivalence relation $\sim$ identifies configurations that  differ only by a constant field.
However, for our analysis it is more convenient to use an equivalent formulation with a space of local deformations introduced in terms
of  higher order derivatives of the fields.

We consider sets of multiindices $\Ical$ satisfying
\begin{align}\label{eq:def_Ical}
\{e_i\in \R^d:\,1\leq i\leq d\}\subset \Ical\subset 
\{\alpha\in \mathbb{N}_0^d\setminus \{(0,\ldots,0)\}:\;  \abs{\alpha}_\infty\leq {\Ran}\}.
\end{align}
Moreover we define the specific set $\Ical_{\Ran} :=\{\alpha\in \mathbb{N}_0^d\setminus \{(0,\ldots,0)\}:\;  \abs{\alpha}_\infty\leq {\Ran}\}$.
Note that the case $\Ical=\{e_1,\ldots ,e_d\}$ corresponds to nearest neighbour interactions.

Further, we consider the  vector space 
\begin{align}\label{eq:def_Gcal}
\Gcal=(\R^m)^\Ical
\end{align}
of extended gradients equipped with the standard scalar product 
\begin{equation}
\label{E:scprodG}
(z,z)_{\Gcal}=\sum_{\alpha\in\Ical} z_\alpha\cdot z_\alpha,  \  z=(z_\alpha)_{\alpha\in\Ical}\in \Gcal
\end{equation}	 
and the corresponding norm $|\cdot|_{\Gcal}$. When the set $\Gcal$  is clear from context,  we omit the subscript $\Gcal$.
We write $\Gcal_{\Ran} $ if $\Ical=\Ical_{\Ran}$.	
 For any $\p\in\Xcal_N$ and any $x\in T_N$, we then use 
$D\p(x)$ to denote \emph{the extended gradient}---the vector $(\nabla^\alpha \p(x))_{\alpha\in\Ical}\in\Gcal$
with $\nabla^\alpha \p(x)= \prod_{j=1}^d  \nabla_j^{\alpha(j)} \p(x)$.

Assuming that $L>{\Ran}+1$, so that the definition of  $D\p$ does not wrap around the torus, 
we have the following equivalence.
\begin{lemma}
\label{L:U-Uscr}
There exists  a linear isomorphism  $\Pi:\Gcal_{\Ran}\to \Vcal_{Q_{\Ran}}^\perp$ 
inducing  a one-to-one correspondence between functions on $\Vcal_{Q_{\Ran}}^\perp$ 
and those on $\Gcal_{\Ran}$. 
Namely, for any $U:\Vcal_{Q_{\Ran}}^\perp\to {\R}$, 
there is $\Uscr:\Gcal_{\Ran}\to \R$ such that $\Uscr(D\psi(0))=U(\psi)$ for any $\psi\in \Vcal_{Q_{\Ran}}^\perp$.
\end{lemma}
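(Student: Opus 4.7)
The plan is to construct $\Pi$ as the inverse of the linear map
\[
D_0: (\R^m)^{Q_{\Ran}} \to \Gcal_{\Ran}, \qquad \psi \mapsto D\psi(0) = (\nabla^\alpha \psi(0))_{\alpha \in \Ical_{\Ran}},
\]
restricted to $\Vcal_{Q_{\Ran}}^\perp$. Both source and target have dimension $m((\Ran+1)^d - 1)$, so it suffices to show that $D_0$ is surjective with kernel exactly the $m$-dimensional space $\Vcal_{Q_{\Ran}}$ of constant configurations; the restriction of $D_0$ to $\Vcal_{Q_{\Ran}}^\perp$ is then a linear bijection onto $\Gcal_{\Ran}$, and I take $\Pi$ to be its inverse.

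Since $D_0$ splits over the $m$ scalar components, it is enough to treat the case $m=1$. The key fact is the discrete Taylor expansion
\[
\psi(x) = \sum_{\beta \leq x} \binom{x}{\beta}\, \nabla^\beta \psi(0), \qquad x \in Q_{\Ran},
\]
where $\beta \leq x$ means the componentwise order; this follows inductively from $\nabla_j \psi(x) = \psi(x+e_j) - \psi(x)$ (and indeed reduces, coordinate by coordinate, to the one-dimensional identity $\psi(n) = \sum_{k=0}^n \binom{n}{k} \nabla^k \psi(0)$). Equivalently, the matrix of the map $\psi \mapsto (\nabla^\alpha \psi(0))_{\alpha \in \{0,\ldots,\Ran\}^d}$ is unitriangular with respect to the componentwise partial order on multiindices, hence invertible on $\R^{Q_{\Ran}}$. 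Dropping the $\alpha=0$ coordinate and applying rank-nullity, the scalar map to $\R^{\Ical_{\Ran}}$ is surjective with kernel exactly the constants. Taking the $m$-fold direct sum gives the claim for $D_0$.

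Finally, given $U:\Vcal_{Q_{\Ran}}^\perp \to \R$, I define $\Uscr := U\circ \Pi : \Gcal_{\Ran} \to \R$; then $\Uscr(D\psi(0)) = U(\Pi(D\psi(0))) = U(\psi)$ for every $\psi \in \Vcal_{Q_{\Ran}}^\perp$, and conversely any $\Uscr:\Gcal_{\Ran}\to\R$ yields $U := \Uscr \circ \Pi^{-1}$ on $\Vcal_{Q_{\Ran}}^\perp$. The argument is purely linear-algebraic and I anticipate no real obstacle; the only non-trivial input is the invertibility of the discrete-Taylor matrix, which is entirely standard. The hypothesis $L > \Ran + 1$ is used only to ensure that the extended gradient $D\p(x)$ on the torus is well defined as an object in $\Gcal_{\Ran}$, so that this local isomorphism transfers to the setting on $\TN$.
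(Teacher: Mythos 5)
Your proof is correct and follows essentially the same route as the paper: both construct $\Pi$ as the inverse of the linear map $\psi\mapsto D\psi(0)$ restricted to $\Vcal_{Q_{\Ran}}^\perp$, using the equal dimensions $m(({\Ran}+1)^d-1)$, and then set $\Uscr=U\circ\Pi$. The only difference is that you justify the key injectivity/kernel claim (that $D\psi(0)=0$ forces $\psi$ constant) explicitly via the discrete Taylor expansion and unitriangularity, a step the paper merely asserts; this is a welcome, fully correct elaboration rather than a different argument.
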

\begin{proof}
Both spaces $\Gcal_{\Ran}$ and $\Vcal_{Q_{\Ran}}^\perp$  have the same dimension $m(({\Ran}+1)^d-1)$.
The isomorphism between them can be explicitly given by the map 
$\Vcal_{Q_{\Ran}}^\perp\ni \psi\mapsto  D\psi(0)\in \Gcal_{\Ran}$.
This map is linear and injective ($D\psi_1(0)=D\psi_2(0)$ implies $\psi_1-\psi_2\in \Vcal_{Q_{\Ran}}$).
We define  $\Pi$ to be its inverse.

For any $U: \Vcal_{Q_{\Ran}}^\perp \to \R$, we define $\Uscr:\Gcal_{\Ran}\to \R$ by $\Uscr(z)=U(\Pi(z))$. 
Given that $\Pi$ is an isomorphism, we have $\Uscr(D\psi(0))=U(\psi)$ for any $\psi\in \Vcal_{Q_{\Ran}}^\perp$.
\end{proof}

There are obvious generalisations of the previous lemma to index sets $\Ical$ 
with the property that if $\alpha\in \Ical$ and $\beta\leq \alpha$ then $\beta\in \Ical$. 
In particular a similar statement holds for $A=\{0,e_1,\ldots,e_d\}$ and $\Ical=\{e_1,\ldots,e_d\}$.
\medskip

We next consider an orthogonal  decomposition of the space $\Gcal$ into a space isomorphic to linear maps
and the orthogonal space that corresponds to higher gradients. Define
\begin{align} \label{eq:G_nabla}
\Gcal^\nabla= & \, \{z\in \Gcal: z_\alpha=0 \text{ for }  \abs{\alpha}_1\neq 1\},  \\
 \Gcal^\perp= & \, \{z\in \Gcal: z_\alpha=0 \text{ for } \abs{\alpha}_1 = 1\},
 \end{align}
 where, $\abs{\alpha}_1=\sum_{i=1}^d \abs{\alpha_i}$.
 Note that the  dimension of $\Gcal^\nabla$ is  $md$ while the dimension of $\Gcal^\perp$
is $m(|\Ical|-d)$).
For any   $z\in\Gcal$ we denote by  $z^\nabla$ and $z^\perp$ the orthogonal projections of $z$
to  $\Gcal^\nabla$ and $\Gcal^\perp$, respectively.
We refer to $z^\nabla$ as to the \emph{gradient components} of $z$. 
In the following, we introduce  the convention of---unless  explicitly stated otherwise---skipping the index $1$  in the norm for 
multiindices (i.e., $\abs{\alpha}=\abs{\alpha}_1$) and $2$ in the Euclidean norm for vectors (i.e., $\abs{z}=\abs{z}_2$).

The vector space of linear maps $F:\R^d\to\R^m$ can be identified with the $md$-dimensional space $\Gcal^\nabla$ 
employing the isomorphism  $F\mapsto \overline{F}=DF(x)$ (for any $x\in \R^d$).
On $\mathop{\mathrm{Lin}}(\R^d; \R^m) \simeq \R^{m \times d}$  we define the usual Hilbert-Schmidt  scalar product by
\begin{align}
(F, G) = \sum_{i=1}^d  F e_i \cdot G e_i = \sum_{i=1}^d \sum_{s=1}^m F_{i,s} G_{i,s}.
\end{align}
With this  scalar product the isomorphism $F \mapsto \overline{F}$ becomes an isometry;
often we will  not   distinguish between $\abs{F}$ and $\abs{\overline F}$.

Let $\boldsymbol{Q}: \mathcal G \to \mathcal G$ be a symmetric  linear operator and consider, for $\alpha, \beta \in \Ical$, the
canonical projection $\pi_\alpha: \Gcal \to \R^m$ and injection  $\imath_\alpha: \R^m \to \mathcal G$  given by
$\pi_\alpha(z) = z_\alpha$ and $(\imath_\beta(y))_\gamma = \delta_{\beta \gamma} y$ where $ \delta_{\beta \gamma}$
is the Kronecker delta symbol. Define the operators $\boldsymbol Q_{\alpha \beta}: \R^m \to \R^m$ by 
$\boldsymbol Q_{\alpha \beta} = \pi_\alpha \circ  \boldsymbol Q \circ\imath_\beta$. 
If $\Qscr(z) =(z, \boldsymbol Q z)_\Gcal$ is the quadratic form induced by $\boldsymbol Q$ then 
$\Qscr(z) = \sum_{\alpha,\beta \in \Ical} (z_\alpha, \boldsymbol Q_{\alpha \beta} z_\beta)_{\R^m}$.
In particular
\begin{align}
\label{eq:defQ}
\Qscr(D\p(x))=\sum_{\alpha,\beta\in \Ical}  \nabla^\alpha\p(x)\cdot \boldsymbol{Q}_{\alpha\beta}\nabla^\beta\p(x). 
\end{align}
We define $\Qscr^\nabla$ as the restriction of $\Qscr$ to $\Gcal^\nabla$. Thus, 
\begin{equation}  \label{eq:defQscr_nabla}
\Qscr^\nabla(z) =  \sum_{|\alpha| = |\beta|=1} (z_\alpha, \boldsymbol Q_{\alpha \beta} z_\beta)_{\R^m} = \sum_{i,j=1}^d (z_i, Q_{ij} z_j)_{\R^m} \quad 
\text{for all $z \in \Gcal^\nabla$.}
\end{equation}
Here for $\alpha = e_i$ and $\beta = e_j$ we use the abbreviations $\boldsymbol Q_{ij} = \boldsymbol Q_{e_i e_j}$ and $z_i = z_{e_i}$.
By $\boldsymbol{Q}^\nabla$ we denote the  symmetric linear operator from $\Gcal^\nabla$ itself which corresponds to $\Qscr^\nabla$. 
By a slight abuse of notation,  we also denote by $\boldsymbol{Q}^\nabla$ the operator
from $\mathrm{Lin}(\R^d;\R^m)$ (or $\R^{m \times d}$) to itself
which is induced by the isometry $F \mapsto \bar F = DF$ from $\mathrm{Lin}(\R^d;\R^m)$ to $\Gcal^\nabla$. 
Thus
\begin{equation}  \label{eq:define_bsQ_nabla}
(F, \boldsymbol Q^{\nabla} G) := (\overline F, \boldsymbol Q^\nabla \overline G) = (\overline F, \boldsymbol Q \overline G)
\end{equation}
for all $F, G \in \R^{m \times d}$. 
For $i,j =1, \ldots, d$ and $s, t =1, \ldots, m$, we define the components of $\boldsymbol Q^{\nabla}$ by
\begin{align}  \label{eq:nabla_QU_components}
(\boldsymbol{Q}^\nabla)_{i,j;s,t} = (e_s \otimes e_i,  \boldsymbol Q^\nabla e_t \otimes e_j) =( e_s,   \boldsymbol Q_{ij} e_t)
\end{align}
\smallskip

We next formulate conditions on $\Uscr$ which are sufficient for our main results.
First note that if $\Uscr: \Gcal_{\Ran} \to \R$ is the energy that corresponds to a shift invariant interaction $U: \Vcal_{Q_{\Ran}}$ via
Lemma~\ref{L:U-Uscr}, then we get $U(\psi+F)=\Uscr(D\psi(0)+\overline{F})$ 
for any $\psi\in\Vcal_{Q_{\Ran}}^\perp$  where $\overline F = DF$ as above.
Thus the Hamilitonian $\HN^F(\p)$ can be expressed as
\begin{equation}
\label{eq:U-Uscr}
\HN^F(\p)=\sum_{x\in T_N} U( F+ (\tau_{-x}\p)_A) = \sum_{x\in T_N} \Uscr( \overline{F}+ D\p(x)).
\end{equation}

Substituting this equality into \eqref{eq:defofZ} and \eqref{eq:finite_volume_de}, we get for any function  $\Uscr$ on $\Gcal$
directly the corresponding free energy $W_{\!\beta,N}^\Uscr(F)$ and Gibbs state  $\gamma_{\beta,N}^{\Uscr,F}(\d \p)$,
\begin{equation}Z^{\Uscr}_{\beta,N}(F,f) = 
\int_{\Xcal_N}\exp\bigl(-\beta \sum_{x \in T_N} \Uscr(F+D\p(x)) +(f,\p)\bigr)\lambda_N(\d\p),
\end{equation}
\begin{equation} 
W_{\!\beta,N}^\Uscr(F)=-  \frac{\ln\, Z_{\beta,N}^\Uscr(F,0)}{\beta L^{Nd}}
\end{equation}
\begin{equation}    \label{eq:Gibbs_Uscr} 
 \gamma_{\beta,N}^{\Uscr,F}(\d \p)=\frac{1}{Z_{\beta,N}(F,0)}\exp\left(-\beta \sum_{x \in T_N} \Uscr(F+D\p(x))
\right)\lambda_N(\d\p).
\end{equation}
We will usually drop the index $\Uscr$ when the function $\Uscr$ is clear from the context.

For any twice differentiable function  $\Uscr$ on $\Gcal$ we define the symmetric quadratic form $\QU$ by
\begin{equation}  
\label{eq:define_QU_V}
\QU(z) := D^2\Uscr(0)(z,z).
\end{equation}
For integers  $r_0\geq 3$ and $r_1 \ge 0$ we now consider the following conditions
\begin{equation} \label{eq:V1_new}   
 \Uscr\in C^{r_0+r_1} (\Gcal);
\end{equation}
\begin{equation}  \label{eq:V1bis_new}
\omega_0 \abs{z}^2 \le \Qscr_\Uscr(z) \le \omega_0^{-1} \abs{z}^2  \quad \text{for some $\omega_0\in(0,1)$;}
\end{equation} 
\begin{equation}
\label{eq:V3_new}     \Uscr(z) - D\Uscr(0) z - \Uscr(0)  \geq  \omega\abs{z}^2   \text{ for all } z \in \mathcal G,   \text{and some
$\omega \in (0,  \frac{\omega_0}{8})$;} 
\end{equation}                 
\begin{equation}
\label{eq:V4_new}   \lim_{t \to \infty} t^{-2} \ln \Psi(t) = 0 
\text{ where }  
\Psi(t) := \sup_{\abs{z} \le t}   \sum_{3 \le \abs{\alpha} \le r_0+r_1}   \frac{1}{\alpha!} \abs{\partial^\alpha \Uscr(z)}.
\end{equation}

\begin{theorem}
\label{thm:strictconvexity} Suppose that a function  $\Uscr$ on $\Gcal$ satisfies  the conditions  \eqref{eq:V1_new}--\eqref{eq:V4_new}  for some 
$r_0\ge 3$ and $r_1 \geq  2$.
Then there exist positive constants $\beta_0 $ and $ \delta$
such that the corresponding free energies  $W_{\!\beta,N}^{\Uscr}\bigr|_{B_{\delta}(0)}$ are  $C^{r_1}$ and uniformly  convex for $\beta\geq \beta_0$.
In particular,  $D^2 W_{\!\beta, N}^{\Uscr}(F)(\dot{F},\dot{F}) \ge \frac{\omega_0}{4}  \abs{\dot{F}}^2$. 
Also, every limit $W_{\!\beta}^{\Uscr} = \lim_{\ell \to \infty} W_{\!\beta,N_\ell}^{\Uscr}$ is uniformly convex. 
\end{theorem}

The proof of Theorem~\ref{thm:strictconvexity} (in Section~\ref{sec:abstract_to_ggm})  
actually gives the following stronger result. For $\beta \geq \beta_0$ there exist functions 
$\Wcal_{\!\beta,N} \in C^{r_1}(B_\delta(0))$ and constants $c_{\beta,N}$ such that 
\begin{align}  \label{eq:beta_expansion_ggm}
W_{\!\beta, N}^{\Uscr}(F)= & \, \Uscr(\overline{F})+ c_{\beta,N} + \frac{\Wcal_{\beta,N}^{\Uscr}(F)}{\beta} 
\end{align}
and the $C^{r_1}(B_\delta(0))$ norm of the functions $\Wcal_{\beta,N}^{\Uscr}$ is bounded, 
uniformly for $N \ge 1$ and  $\beta \ge \beta_0$. See \eqref{eq:sigmaexprN} and Theorem~\ref{th:bound_Wcal}.

Let us remark that for a long time even for gradient interface models with uniformly convex potentials
it was only known that the free energy is in $C^{1,1}$ and the question
whether the free energy is in $C^2$ was an open problem (see, e.g., \cite{FS97}).
Only very recently it was shown in \cite{AW19} that the free energy
is in $C^{2,\alpha}_{\mathrm{loc}}$ for some $\alpha>0$ provided  that the 
local interaction is described by an isotropic  nearest neighbour interaction energy 
$U(\varphi_A) = \sum_{i=1}^d V(\varphi(e_i) - \varphi(0))$, with $A = \{0, e_1, \ldots, e_d\}$,  and that  the interaction potential
$V  : \R \to \R$   is symmetric, uniformly convex, and in $C^{2,\gamma}$ for some $\gamma>0$.
The result in \cite{AW19} is global and requires less regularity on the local interaction then  Theorem~\ref{thm:strictconvexity},
 but it crucially relies on global uniform convexity, because it uses, among various other tools,  the Helffer-Sj\"ostrand operator.
 We are in particular interested in application to nonlinear elasticity where global convexity is not possible in view of the fundamental
 $\mathrm{SO}(d)$ symmetry (see Section~\ref{sec:continuous_elasticity}) and where one needs to go beyond nearest neighbour interactions.
  In that setting, local uniform convexity of the  free energies  at sufficiently low temperatures as expressed in 
  Theorem~\ref{thm:strictconvexity} appears to be the best possible result. The loss of $r_0 =3$ derivatives in passing from 
  $\Uscr$ to the perturbative contribution $\Wcal_{\beta,N}$ to the free energy in  \eqref{eq:beta_expansion_ggm} 
  is due to our method and we do not know whether this can be improved.
 
We also note  that by Remark \ref{rem:coercive} in Section \ref{sec:reformulation_de}, the conditions  \eqref{eq:V1bis_new}
 and \eqref{eq:V3_new} can be replaced by the following weaker conditions. There exists $\omega'_0$ and $\omega' \in (0, \frac{\omega'_0}{8})$ such that 
\begin{equation}  \label{eq:V1bis_new_weakened}
\omega'_0 \abs{z^\nabla}^2 \le \Qscr_\Uscr(z) \le {\omega'_0}^{-1} \abs{z}^2  \quad \text{for all $z \in \Gcal$,}
\end{equation} 
\begin{equation}
\label{eq:V3_new_weakened}    
 \Uscr(z) - D\Uscr(0) z - \Uscr(0)  \geq  \omega'\abs{z^\nabla}^2   \text{ for all } z \in \mathcal G.
 \end{equation}
 Here $z^\nabla$ denotes the gradient component of $z$, i.e. the orthogonal projection 
of $z$ onto the space $\Gcal^\nabla$, defined in  \eqref{eq:G_nabla}.
Thus it suffices to have a lower bound in terms of the gradient part $z^\nabla$ rather than the full vector $z$. 
\medskip

We now turn to the  scaling limit of the Gibbs measures. The result below  is a generalisation of a result first shown in \cite{Hil15}.
 Recall that the gradient Gibbs 
measure $\gamma_{\beta,N}^{\Uscr,F}$ with deformation $F$   is given by 
  \eqref{eq:Gibbs_Uscr}. 
  
By $\boldsymbol{Q}_\Uscr$ we denote the symmetric operator on $\mathcal G$ which corresponds to
the symmetric bilinear form $D^2\Uscr(0)$. For the definition of the $\boldsymbol{Q}_\Uscr^\nabla$, the restriction $\boldsymbol{Q}_\Uscr$
to linear maps, and its components, see 
 \eqref{eq:define_bsQ_nabla} and \eqref{eq:nabla_QU_components}.

\begin{theorem} \label{th:scalinglimit_concrete}
Suppose that the assumptions  \eqref{eq:V1_new}--\eqref{eq:V4_new}  hold with $r_0 =3$ and $r_1 = 0$.
There exists an integer $L_0$ such that for every odd integer $L \ge L_0$ there is a positive  constant $\delta$ with the following property.
For every $\beta \ge 1$ and every $F \in B_\delta(0)$ there exists a subsequence  $(N_\ell)$ and 
a matrix $\boldsymbol{q}\in \R^{(m\times d)\times (m\times  d)}_{\mathrm{sym}}$
such that for $f\in C^\infty(\mathbb{T}^d, \R^m)$ with $\int_{\mathbb{T}^d} f=0$ 
and  $f_N(x)=L^{-N\frac{d+2}{2}}f(L^{-N}x)$, we have
\begin{equation}  \label{eq:convergence_laplace_transform_ggm}
\lim_{\ell\rightarrow \infty}   \mathbb{E}_{  \gamma_{  \beta , N_{\ell} }^{\Uscr,F} }e^{(f_{N_\ell},\varphi)}= 
e^{\frac1{2\beta}(f,\mathscr{C}^{\Uscr}_{\mathbb{T}^d} f)}.
\end{equation}
Here $ \gamma_{  \beta , N_{\ell} }^{\Uscr,F}$ is the Gibbs state corresponding to the potential $\Uscr$ 
and  $\mathscr{C}^{\Uscr}_{\mathbb{T}^d}$ is the inverse of the operator $\mathscr{A}^{\Uscr}_{\mathbb{T}^d}$ 
acting on $u\in H^1(\mathbb{T}^d,\R^d)$ with vanishing mean $\int u=0$  by
\begin{equation}  
\label{eq:limiting_operator2}
(\mathscr{A}^{\Uscr}_{\mathbb{T}^d}u)_s=-\sum_{t=1}^m \sum_{i,j=1}^d (\boldsymbol{Q}_\Uscr^{\nabla}-\boldsymbol{q})_{i,j;s,t} \partial_i\partial_j u_t. 
\end{equation}
\end{theorem}

\begin{remark}\hfill 
\label{re:correlation_structure_concrete}
\begin{itemize}[leftmargin=0.6cm]
\item[(1)]
Note that the rescaling $L^{-\frac{Nd}{2}}$ would correspond to a central limit law behaviour 
of the  random field. 
Due to the strong correlations we need to use the stronger rescaling with $L^{-N(\frac{d+2}{2})}$.
One easily sees that the scaling limit of the gradient field $\nabla \p$ involves the central limit scaling, 
cf. e.g.\ \cite{BS11} and \cite{NS97}.
\item[(2)] 
The matrix $\boldsymbol{q} $ and thus also  the operators $\mathscr{A}^{\Uscr}_{\mathbb{T}^d}$  and $ \mathscr{C}^{\Uscr}_{\mathbb{T}^d}$
depend on $\beta$ and $F$.  In order to avoid  overloaded   notation we omit an explicit reference to this dependence.

Note, however, that the limiting covariance is dominated by the gradient-gradient contribution
of the interaction while the higher order terms are not directly present, see also \cite{NS97}. 
In other words, the limiting covariance $\mathscr{C}$ depends only on the action of $\boldsymbol Q$ on the subspace $\mathcal G^\nabla$, 
defined in \eqref{eq:G_nabla}%after Lemma~\ref{L:U-Uscr}, 
and hence only on $\boldsymbol Q^\nabla$. Recall  that $\mathcal G^\nabla$ is naturally identified with the space of linear maps from $\R^d$ to $\R^m$ or,
equivalenty, with the space of matrices $\R^{m \times d}$.
There is, in general, an implicit dependence  of the limiting covariance   on the higher gradient terms 
$\boldsymbol{Q}_{\alpha \beta}$  of the quadratic interaction,  as well as  on the nonlinear terms in the interaction, through the matrix $\boldsymbol{q}$.
In fact such a dependence on  higher gradient terms of the quadratic interaction arises already in the Gaussian setting.

The higher  gradient  terms can change the local correlation structure. 
They have, however, little influence on the  long distance correlations because, roughly speaking, their long wave Fourier modes 
are very small and decay like  $\abs{p}^{\abs{\alpha} + \abs{\beta} }$ with  $\abs{\alpha} + \abs{\beta} \geq 3$, compared to
a  $\abs{p}^2$  decay for the gradient-gradient interaction. 
In term of the proof, the main point is that in equation   \eqref{eq:covariance_scaling_higher}  the terms with $|\alpha| \ge 2$ or $|\beta| \ge 2$ converge to zero.
\end{itemize}
\end{remark}

\begin{remark}[Passage to subsequences]  \label{re:subsequences}
Hilger  \cite{Hil18} has shown that for a scalar model with nearest neighbour
interactions one has convergence for the full sequence, both for Laplace transform of the Gibbs measures,
see  \eqref{eq:convergence_laplace_transform_ggm},  and for the free energy. 
Her approach follows closely the idea of Brydges and Slade  \cite[Section1.8.3]{BS15V} who compare the 
finite volume renormalisation flow to an infinite volume flow. We believe that the reasoning in \cite{Hil18} can
be adapted to our situation, but since the discussion of the finite volume renormalisation flow is already
technically quite demanding, we do not discuss such an extension in this work. 
The existence of the thermodynamic limit $ \lim_{N \to \infty} W_{\!\beta, N}(F)$ can probably be obtained
from  less sophisticated arguments, see, for example,  \cite{KL14} for results for slightly
different boundary conditions which require only  very weak conditions on the local  interaction $U$.
\end{remark}

\section{Main results for discrete elasticity}   \label{sec:discrete_elasticity_main}

 In this section we consider models of discrete elasticity and analyse local convexity properties of the free energy
 and the scaling limit of Gibbs measures. Indeed, the study of such models is a key  motivation for the present work
 and it is the reason why we considered vector-valued fields and interactions beyond nearest neighbour interactions
 in the previous  section. An additional difficulty in discrete nonlinear elasticity is that the invariance
of the local interaction energy under rotations leads to a degeneracy of the quadratic form $\Qscr_\Uscr$ 
which for elasticity corresponds to the second derivative of $\Uscr$ at the identity map.
In particular,  the  lower bound in  \eqref{eq:V1bis_new}  cannot hold. 
Hence  the results in the previous  section  cannot be applied directly. 
We will overcome this difficulty by adding a suitable discrete null Lagrangian,  see Definition~\ref{de:null_lagrangian},    
equation  \eqref{eq:hqsecway} and  Lemmas~\ref{le:equivalence_de_ggm} and \ref{le:embedding_de}  in the next  chapter.  
In this section we focus on the statement of the main results for discrete elasticity.

 We consider the general setting of  Section~\ref{sec:setup}  with $m=d$. 
For simplicity (and without loss of generality), we suppose that
the support set $A$ of the potential $U$ contains the unit cell of $\Z^d$, i.e.,  $\{0,1\}^d\subset A$.
As before we use the splitting $(\R^d)^A= \Vcal_A \oplus \Vcal_A^\perp$,
where  $\Vcal_A\sim \R^d$ is the $d$-dimensional subspace of shifts 
\begin{align}
\Vcal_A=\{(a,\dots,a)\in (\R^d)^A: a\in \R^d\},
\end{align}
 and $\Vcal_A^\perp$ is the  $d(\abs{A}-1)$-dimensional orthogonal complement of $\Vcal_A$.

For a linear map $F:\R^{d}\to \R^{d}$ we consider the extension to $(\R^{d})^A$   given by  $(F\psi)(x)=F(\psi(x))$. 
For ease of notation we will use the same symbol $F$ for the original map and the extension to $(\R^{d})^A$
and similarly for the extension to $(\R^d)^{\Z^d}$.

We assume that the potential $U:\bigl(\R^{d}\bigr)^A\rightarrow \R$  satisfies the following conditions: 

\begin{enumerate}[(H1),leftmargin=0.9cm]
\item[(H1)] \textit{Invariance under rotations and shifts:}
We have
\begin{align}
U(\psi)=U(\mathbf{R}(t_a \psi)) 
\end{align}
for any   $\psi\in  ( \R^d )^{A}$ and any $\mathbf{R}\in SO(d)$, $a\in \R^d$, with
	   
 \noindent
 $\mathbf{R}(t_a \psi)(x)= \mathbf{R}(\psi(x)+a).$
\item[(H2)] 
\textit{Ground state:} 
$U(\psi)\geq 0$ and $U(\psi)=0$ if and only if $\psi$ is a rigid body rotation, i.e., 
there exists  $\mathbf{R}\in \mathrm{SO}(d)$ and $a\in\R^d$ such that $\psi(x)=\mathbf{R}x+a$ for any $x\in A$.
\item[(H3)] \textit{Smoothness and convexity:}
Let $\1\in  ( \R^d )^{A}$ denote the identity configuration $\1(x)=x$. Assume that  $U$ is a $C^2$ function and 
$D^2U (\1)$ is positive definite on the subspace orthogonal to shifts and infinitesimal rotations given by skew-symmetric linear maps.
\item[(H4)] \textit{Growth at infinity:}
\begin{align}  \label{eq:H4_growth_at_infinity}
\liminf_{\psi\in \Vcal_A^\perp,\, |\psi|\rightarrow \infty}\frac{U(\psi)}{|\psi|^d}>0. 
\end{align}
\item[(H5)] \textit{Additional smoothness and subgaussian bound:} 
\begin{align}  \label{eq:subgaussian_bound}
 \lim_{|\psi| \to \infty}  |\psi|^{-2}  \ln \Big(  \sum_{2 \le |\alpha|_{1} \le  r_0+r_1}  \frac{1}{\alpha!}  \abs{\partial_\psi^\alpha U(\psi)} \Big) = 0,
\end{align}
where  we use the notation $\partial_\psi^\alpha U(\psi)= \prod_{x\in A} \prod_{s=1}^d \frac{\partial^{|\alpha|}}{\partial_{\psi_s(x)}^{\alpha(x,s)}}U(\psi)$
for any multiindex $\alpha: A \times \{1,\dots, d\} \to \mathbb N$.
\end{enumerate}
The first four conditions are the same as in \cite{CDKM06}. The last condition is a minor
additional  regularity assumption for the potential. It was stated as a separate item to make clear
that it is only required in the renormalisation group analysis but not in 
the convexification argument in Section~\ref{sec:discrete_null_lagrangians} below.

In \cite{CDKM06} the assumptions (H1) to (H4) are used to prove that the Cauchy-Born rule holds at zero temperature,
in the sense that the energy minimiser subject to affine boundary conditions is affine.
Here we use this result as a starting point for a study of the Gibbs distribution for the Hamiltonian $\HN$ at low temperatures.
The ground state in the setting of discrete elasticity corresponds to the linear  deformation given by the identity map.
Therefore we consider deformations $F\in \R^{d\times d}$   for which  $F-\1$  is small.  

We recall from  \eqref{eq:defHF} that the Hamiltonian $\HN^F$ is defined by
\begin{align}  \label{eq:defHF_elastic}
\HN^F(\p)=\sum_{x\in T_N} U(( \tau_{-x}\p)_A +F).
\end{align}
and we recall the definition of  the corresponding partition function $Z_{\beta,N}(F,0)$
in \eqref{eq:defofZ} and the finite volume free energy
\begin{align}  \label{eq:W_N_beta_elastic}
W_{\!\beta, N}^{U}(F) = - \frac{\ln Z_{\beta,N}(F, 0)}{\beta L^{Nd}}
\end{align} 
in \eqref{eq:defWN}.
 
Note that $W_{\!\beta, N}^{U}$ inherits the rotational invariance of $U$, i.e.
 \begin{align}  \label{eq:WNbeta_frame_indifferent}
 W_{\!\beta, N}^{U}(RF) = W_{\!\beta, N}^{U}(F) \quad \hbox{for all $R \in SO(d)$.}
 \end{align}
This follows immediately from the fact that the Hausdorff measure $\lambda_N$ on the space $ \Xcal_N$  of 
$L^N$ periodic fields with average zero is invariant under the map $\p \mapsto R \p$.

\begin{theorem}\label{T:deW}
Suppose that the potential $U$ satisfies the assumptions $(H1)$ to $(H5)$ with $r_0 =3$ and $r_1 \ge 2$.

Then for all sufficiently large odd $L$ there exist  positive constants $\beta_0 $, $c_0$, and $ \delta$   such that, 
for any $\beta\geq \beta_0$  and any $N \ge 1$ the functions $W_{\!\beta,N}^{U}: B_\delta(\1)\to \R$ are in  $C^{r_1}$	
and  the restrictions $W_{\!\beta,N}^{U}:B_\delta(\1) \cap \R_{\mathrm{sym}}^{d\times d}\to \R$ are uniformly convex.
Moreover, there exists a subsequence $(N_\ell)$ such that $W_{\!\beta,N_\ell}^{U}$ converges in $C^{r_1-1}$ to  
a  free energy $W_{\!\beta}^{U}$.
For  $r_1 \ge 3$  the restrictions of $W_{\!\beta}^{U}$ to  $B_\delta(\1)\cap \R_{\mathrm{sym}}^{d\times d}$ is uniformly convex.

In addition, there exists a null Lagrangian $\nullL$ (actually a multiple of the determinant function) such that
 the functions $W_{\!\beta,N}^{U} + \nullL$ are uniformly convex in $B_\delta(\1)$. 
 Finally, the functions $W_{\!\beta,N}^{U}$ are uniformly quasiconvex at matrices $G \in B_{\delta/2}(\1)$,
i.e., for all open and bounded sets $\Omega \subset \R^d$ and for all $C^1$ functions $\phi: \Omega \to \R^d$ with compact support,
\begin{equation}   \label{eq:uniformly_quasiconvex}
\int_\Omega  \bigl(W_{\!\beta, N}^{U}(G + \nabla \phi)   -W_{\!\beta, N}^{U}(G)\bigr) \, dx  \ge c_0 \int_\Omega |\nabla \phi|^2 \, dx.
\end{equation}
\end{theorem}
\medskip

In view of the rotational invariance of $W_{\!\beta, N}$, see   \eqref{eq:WNbeta_frame_indifferent}, we cannot expect convexity on $B_\delta(\1)$. 
Convexity on the symmetric matrices is a natural substitute since $W_{\!\beta, N}$ is determined by its values on symmetric matrices. 
Indeed $W_{\!\beta, N}(F) = W_{\!\beta, N}( \sqrt{F^T F})$ for all $F \in B_\delta(\1)$.

We also get the following result for the scaling limit of the Gibbs measures,  see 
 \eqref{eq:define_bsQ_nabla} and  \eqref{eq:nabla_QU_components}, for the definitions of $\boldsymbol Q_U^\nabla$
 and  $(\boldsymbol{Q}^\nabla_U)_{i,j;s,t}$, respectively. Here $Q_U$ is the symmetric linear operator on $\Gcal$ 
 induced by the symmetric bilinear form $D^2 \Uscr(\1)$.

\begin{theorem}\label{T:descaling}  
Suppose the potential $U$ satisfies the assumptions $(H1)$ to $(H5)$ with $r_0 =3$ and $r_1\ge 2$.
Then there exists an integer $L_0$ such that for every odd integer $L \ge L_0$ there is a positive  constant $\delta $ with the following property.
For every $\beta \ge 1$ and every $F \in B_\delta(0)$ there exists a subsequence  $(N_\ell)$ and 
a matrix $\boldsymbol{q}\in \R^{(m\times d)\times (m\times  d)}_{\mathrm{sym}}$
such that for $f\in C^\infty(\mathbb{T}^d,\R^d)$ with $\int_{\mathbb{T}^d} f=0$ and  $f_N(x)=L^{-N\frac{d+2}{2}}f(L^{-N}x)$,
\begin{align}
\lim_{\ell\rightarrow \infty}\mathbb{E}_{\gamma_{\beta,N_\ell}^{U,F}}e^{(f_{N_\ell},\p)}= e^{\frac1{2 \beta}(f,\mathscr{C}_{\mathbb{T}^d} f)}  .                                                      
\end{align}
Here, $\mathscr{C}_{\mathbb{T}^d} :f\to \mathscr{A}_{\mathbb{T}^d}^{-1}f$ is the inverse of the operator $\mathscr{A}_{\mathbb{T}^d}$ 
acting on functions  $u\in H^1(\mathbb{T}^d,\R^d)$ with  $\int u=0$   by
\begin{align}
(\mathscr{A}_{\mathbb{T}^d} u)_s =-\sum_{t=1}^d \sum_{i,j=1}^d (\boldsymbol{Q}_U^\nabla - \boldsymbol{q})_{i,j;s,t} \partial_i \partial_j u_t. 
\end{align}
\end{theorem}

For a discussion why only the restriction $\Qscr_U^\nabla$ and not the full quadratic form $\Qscr_U$ appears in the limiting covariance,  
see Remark~\ref{re:correlation_structure_concrete}.
The operators  $\boldsymbol Q^\nabla$ and $\boldsymbol Q^\nabla - \boldsymbol q$  are not positively definite 
on the set of all matrices because skew-symmetric matrices are in their null space. 
These operators  are, however, positive definite on symmetric matrices. 
By Korn's inequality this implies that $\mathscr{A}$ is an elliptic operator and that its inverse $\mathscr C$ is well-behaved. 
Actually we will see in the proof of Theorem~\ref{T:descaling} that the operator $\mathscr A$ can be also written in terms
of $\boldsymbol Q^\nabla_{U + \nullL}$ such that 
$\boldsymbol Q^\nabla_{U +\nullL}$ and  $\boldsymbol Q^\nabla_{U +\nullL} - \boldsymbol q$ are positive definite. 

 Regarding the choice of a subsequence, see Remark~\ref{re:subsequences}.

\section[Main results for abstract perturbations]{Main results for abstract perturbations}  \label{sec:main_abstract}

To set the stage, we  derive an expansion for the free energy and the Gibbs measure, by splitting the
interaction energy $\Uscr$ into a quadratic term and a remainder.

As before, we assume that the quadratic form $\QU = D^2 \Uscr(0)$  satisfies  the bounds
\begin{equation}
\label{eq:Qlowerbound}
\omega_0 \abs{z}^2 \leq \QU(z)\leq \omega_0^{-1}\abs{z}^2\  \text{ for all }\   z\in\Gcal
\end{equation}
for some $\omega_0\in (0,1)$. 

Similarly to \cite{AKM16}, we introduce  the  function $\overline{\Uscr}:\Gcal\times \R^{m\times d}\rightarrow \R$  by
\begin{equation}
\label{eq:defofUbar}
\overline{\Uscr}(z,F)=\Uscr(z+\overline{F})- \Uscr(\overline{F})- D\Uscr(\overline{F})(z)-\frac{\QU(z)}{2}. 
\end{equation}
It describes the remainder of the Taylor expansion of $\Uscr(z+\overline{F})$ around $\overline{F}$
collecting all third order terms plus the difference $D^2\Uscr(\overline{F})(z,z)-D^2\Uscr(0)(z,z)$ 
since we want to keep only the quadratic term  that does not depend on $\overline{F}$.
Notice that the function  $\Vscr(z)  = \overline{\Uscr}(z,0)=\Uscr(z)- \Uscr(0) - D\Uscr(0)z - \frac{\QU(z)}{2}$
is the third order remainder of the Taylor expansion of $\Uscr$ yielding  $\Vscr(0) = D\Vscr(0) = D^2 \Vscr(0) =0$.

The Hamiltonian $\HN^F$ can be expressed in terms of the function $\overline{\Uscr}$  as
\begin{align}
\label{eq:finalEnergyManip}
\HN^F(\p) = & \sum_{x\in T_N} \Uscr(D\p(x)+\overline{F}) \\
= &\  L^{Nd}\Uscr(\overline{F})+                                                                             
\sum_{x\in T_N} \Bigl(\overline{\Uscr}(D\p(x),F)+\frac{\Qscr_\Uscr(D\p(x))}{2}\Bigr)\notag,
\end{align}
where we used that the terms linear in $D\p(x)$ cancel because $\sum_{x\in T_N}D\p(x)=0$ in the periodic setting.
Using equation \eqref{eq:finalEnergyManip} we can rewrite the partition function \eqref{eq:defofZ}  as
\begin{equation}
Z_{\beta,N}(F,f)= e^{-\beta L^{Nd}\Uscr(\overline{F})}
\int_{\Xcal_N} e^{(f,\p)} e^{-\beta \sum_{x\in T_N} (\overline{\Uscr}(D\p(x),F) 
+\frac{\QU(D\p(x))}{2})}\lambda_N(\d\p).                                                            
\end{equation}
The positive quadratic form $\beta\QU$ defines the Gaussian probability measure
\begin{equation}  
\label{eq:gaussian_QU}
\mu_\beta(\d \p)=\frac1{Z_{\beta,N}^{\QU}}  \exp\bigl(-\tfrac{\beta}{2}\sum_{x\in T_N}
\QU(D\p(x))\bigr) \lambda_N(\d\p)                                       
\end{equation}
with an appropriate normalization  factor $Z_{\beta,N}^{\QU}$.

Thus
\begin{align}
Z_{\beta,N}(F,f)=e^{-\beta L^{Nd}\Uscr(\overline{F})} Z_{\beta,N}^{\QU} \int_{\Xcal_N} e^{(f,\p)}e^{-\beta\sum_{x\in T_N} 
\overline{\Uscr}(D\p(x),F)}\,\mu_\beta(\d\p).
\end{align}
Finally, rescaling the field by $\sqrt{\beta}$,  introducing the Mayer function corresponding to the remainder $\overline{\Uscr}$,
\begin{align}
\label{eq:defofKuVbeta}
\Kcal_{F,\beta,\Uscr}(z)=\exp\bigl(-\beta \overline{\Uscr}(\tfrac{z}{\sqrt{\beta}},F)\bigr)-1 ,
\end{align}
and using the shorthand $\mu=\mu_1$, we express  the partition function $Z_{\beta,N}(F,f) $ in terms of the  polymer expansion  
\begin{align}  
\begin{split}
\label{eq:initialfinal}
Z_{\beta,N}(F,f) 
& =e^{-\beta L^{Nd}\Uscr(\overline{F})} Z_{\beta,N}^{\QU} \int_{\Xcal_N} 
e^{(f,\frac{\p}{\sqrt \beta})}e^{-\beta\sum_{x\in T_N} \overline{\Uscr}(\frac{D\p(x)}{\sqrt{\beta}},F)}\,\mu(\d\p)\\
& =e^{-\beta L^{Nd}\Uscr(\overline{F})} Z_{\beta,N}^{\QU}\int_{\Xcal_N} 
e^{(\frac{f}{\sqrt \beta},\p)}\prod_{x\in T_N}(1+\Kcal_{F,\beta,\Uscr}(D\p(x)))\,\mu(\d\p) \\
& =e^{-\beta L^{Nd}\Uscr(\overline{F})} Z_{\beta,N}^{\QU}\int_{\Xcal_N}
e^{(\frac{f}{\sqrt \beta},\p)}\sum_{X\subset T_N}\prod_{x\in  X}\Kcal_{F,\beta,\Uscr}(D\p(x))\,\mu(\d\p).
\end{split}
\end{align}
Here we use the convention $\prod_{x\in\emptyset}\Kcal(D\p(x))=1$ for the empty product.
The integral in the last expression gives the perturbative contribution 
\begin{align}
\label{eq:Zpertcomp}
\Zcal_{\beta,N}\Bigl(F,\frac{f}{\sqrt \beta}\Bigr)=\int_{\Xcal_N}e^{(\frac{f}{\sqrt \beta},\p)}
\sum_{X\subset T_N}\prod_{x\in X}\Kcal_{F,\beta,\Uscr}(D\p(x))\,\mu(\d\p) .                                                                
\end{align}
Introducing  the perturbative components of the free energy by
\begin{align}\label{eq:defvarsigmaN}
\Wcal_{\beta,N}(F)=- \frac{\ln\Zcal_{\beta,N}(F,0)}{L^{Nd}}\text{  and  }
\Wcal_\beta(F)=-\lim_{N\rightarrow \infty}\frac{\ln\Zcal_{\beta,N}(F,0)}{L^{Nd}},
\end{align}
we can rewrite the $W_{\!\beta, N}$ and the  free energy  $W_{\!\beta}$ defined in   \eqref{eq:defWN} and \eqref{eq:defW} as
\begin{align}
\label{eq:sigmaexprN}
W_{\!\beta, N}(F)= & \, \Uscr(\overline{F})+\frac{\Wcal_{\beta,N}(F)}{\beta}-     \frac{1}{\beta L^{Nd}}\ln  Z_{\beta,N}^{\QU}, \\
\label{eq:sigmaexpr}
W_{\!\beta}(F)=  & \, \Uscr(\overline{F})+\frac{\Wcal_\beta(F)}{\beta}-\lim_{N\rightarrow \infty}\frac{1}{\beta L^{Nd}}\ln  Z_{\beta,N}^{\QU}.
\end{align}
Here, in both expressions, the last term is a constant independent of $F$.
Equations \eqref{eq:initialfinal}--\eqref{eq:Zpertcomp}  and  \eqref{eq:defvarsigmaN}--\eqref{eq:sigmaexprN} 
give the desired splitting of the partition function and the free energy into  a leading order low temperature contribution and a remainder. 
 The main task is to control the remainder.

\medskip

%It will be useful to 
To do so, we generalize  the formulation slightly. Instead of a particular $\Kcal_{F,\beta,\Uscr}$ as  above,
we consider a general function $\Kcal: \Gcal\to\R$  and instead of the quadratic form $\QU$ depending on $\Uscr$,  
we consider a general positive definite quadratic form $\Qscr$. We then  define  the  partition function
\begin{align}
\label{eq:Zpertcomp_general}
\Zscr_{N}(\Kcal,\Qscr,f)=\int_{\Xcal_N}e^{(f,\p)}
\sum_{X\subset T_N}\prod_{x\in X}\Kcal(D\p(x))\,\mu_{\Qscr}(\d\p)                                                             
\end{align}
with  the Gaussian  probability  measure 
\begin{align}  
\label{eq:gaussian_Q}
\mu_{\Qscr}(\d \p)=\frac1{Z_{\beta,N}^{\Qscr}}  \exp\bigl(-\tfrac{1}{2}
\sum_{x\in T_N}\Qscr(D\p(x))\bigr) \lambda_N(\d\p).                                       
\end{align}
Introducing the free energies
\begin{equation}
\label{eq:defvarsigmaN_general}
\Wscr_{\!N}(\Kcal,\Qscr)=-\frac{\ln\Zscr_{N}(\Kcal,\Qscr,0)}{L^{Nd}}
\end{equation}
and 
\begin{equation}\label{eq:defvarsigma_general}
\Wscr(\Kcal,\Qscr)=-\lim_{N\rightarrow \infty}\frac{\ln\Zscr_{N}(\Kcal,\Qscr,0)}{L^{Nd}},
\end{equation}
we readily get
\begin{equation}\label{eq:Wcal=Wscr}
\Wcal_{\beta,N}^{\Uscr}(F)=\Wscr_{\!N}(\Kcal_{F,\beta,\Uscr},\Qscr_{\Uscr}) \ \text{ and }\
\Wcal_\beta^{\Uscr}(F)=\Wscr(\Kcal_{F,\beta,\Uscr},\Qscr_{\Uscr}).
\end{equation}

The key result of this paper consists in a good control  of the behaviour of the partition function $\Zscr_{N}(\Kcal,\Qscr,f)$ 
and thus also  the corresponding $\Wscr_{\!N}(\Kcal,\Qscr)$ and $\Wscr(\Kcal,\Qscr)$ for a  class  of admissible perturbations $\Kcal$. 
We first introduce an appropriate  function space which encodes  natural conditions on the perturbations $\Kcal$. 
We will then formulate conditions on the concrete  interaction
energies $\Uscr$ that guarantee that $K_{F,\beta,\Uscr}$ (accompanied by $\Qscr=\Qscr_{\Uscr}$) belongs to that  space.

 Let $\Qscr: \Gcal \rightarrow \R$ be a positive definite quadratic form, $\zeta \in (0,1)$,  and $r_0\geq 3$ an integer. 
We define the Banach space $\boldsymbol{E}_{\zeta, \Qscr}$ consisting of functions
$\Kcal:\Gcal=\left(\R^m\right)^{\Ical}\rightarrow \R$ such that that the following norm is finite
\begin{equation}
\label{eq:normE}
\norm{ \Kcal }_{\zeta, \Qscr} = 
\sup_{z\in \Gcal} \sum_{\abs{\alpha}_1\leq r_0} \frac1{\alpha !}\abs{\partial^\alpha \Kcal(z)}e^{-\frac{1}{2}(1-\zeta)\Qscr(z)} .                                           
\end{equation}
We will usually use the abbreviations
\begin{equation}  \label{eq:normE_abbr_tracking}
\boldsymbol E = \boldsymbol E_{\zeta, \Qscr},\quad \norm{ \cdot }_{\zeta} = \norm{ \cdot }_{\zeta, \Qscr}.
\end{equation}

The following theorem then provides bounds for the perturbative free energy. 

\begin{theorem}   \label{th:pertcomp_E}
 Fix the spatial dimension $d$, the number of components $m$, the range of interaction $\Ran$,
 the set of multiindices 
 $\{e_1,\ldots,e_d\}\subset\Ical \subset\{\alpha\in \mathbb{N}_0^d\setminus \{(0,\ldots,0)\}:\;  \abs{\alpha}_\infty\leq \Ran\}$,
 real constants  $\omega_0 > 0$, $\zeta \in (0,1)$ and  an integer $r_0 \ge 3$.
 For $\Kcal \in \boldsymbol E$,  let $\Wscr_{\!N}(\Kcal,\Qscr)$ be defined by 
 \eqref{eq:Zpertcomp_general} and \eqref{eq:defvarsigmaN_general}.
  
Then there exist  $L_0\in \mathbb{N}$ such that for every odd integer $L \ge L_0$ 
there exists a positive constant $\rhoMT$   with the following properties. 
For any integer $N \ge 1$ and any quadratic form   $\Qscr$ on $\Gcal= (\R^m)^\Ical$  that  satisfies the bounds 
\begin{equation}
\label{eq:Qlowerbound_again}
\omega_0 \abs{z}^2 \leq \Qscr(z)\leq \omega_0^{-1}\abs{z}^2\  \text{ for all }\   z\in\Gcal,
\end{equation} 
the map $ \Kcal \mapsto \overline\Wscr_{\!N}(\Kcal)$ defined as $\overline{\Wscr}_N(\Kcal)=\Wscr_{\!N}(\Kcal, \Qscr)$   
is $C^\infty$ in $B_\rhoMT(0) \subset \boldsymbol E_{\zeta, \Qscr}$ and its derivatives are bounded independently of $N$, i.e., 
\begin{equation}  
\label{eq:bound_D_W_K}
\frac1{\ell!} \norm{ D^\ell \overline{\Wscr}_N(\Kcal)(\dot \Kcal, \ldots, \dot \Kcal) } \le 
C_\ell\, \norm{ \dot \Kcal}^{\ell}_{\zeta, \Qscr} \text{ for all } \Kcal \in B_\rhoMT(0) \text{ and } \ell \in \mathbb{N}.
\end{equation}
In particular there exist $\overline\Wscr\in C^r(B_\rhoMT(0))$ and a subsequence $N_n\to \infty$ such that $\overline\Wscr_{N_n}$ 
converges to $\overline\Wscr$ for all $r \in \mathbb{N}$    and the derivatives  of $\overline\Wscr$ are bounded as  in \eqref{eq:bound_D_W_K}.
\end{theorem}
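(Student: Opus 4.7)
The plan is to carry out the rigorous renormalisation group analysis alluded to in the introduction and to read its output not as a single number but as a smooth function of the input perturbation $\Kcal$. First I would decompose the Gaussian measure $\mu_{\Qscr}$ as a convolution $\mu_{\Qscr} = \mu_{1}\ast\mu_{2}\ast\cdots\ast\mu_{N}$ of finite-range Gaussian measures, with $\mu_{k}$ supported on fluctuations of scale $L^{k}$. Such a decomposition exists under the ellipticity bounds \eqref{eq:Qlowerbound_again}, and the margin $\zeta\in(0,1)$ built into the weight of the norm \eqref{eq:normE} is designed precisely to absorb the geometric loss this decomposition incurs at each step. With the Mayer expansion already exhibited in \eqref{eq:Zpertcomp_general}, the integrand becomes a polymer functional to which the successive integrations against $\mu_{1},\ldots,\mu_{N}$ can be applied.

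Next I would set up each integration as a renormalisation group step $(H_{k},K_{k})\mapsto (H_{k+1},K_{k+1})$ acting on scale-dependent Banach spaces, with $K_{k}$ the ``irrelevant'' polymer functional on $L^{k}$-blocks and $H_{k}$ a finite-dimensional ``relevant'' quadratic seed Hamiltonian. The step produces an extensive free-energy contribution $E_{k}L^{(N-k)d}$ together with a new polymer. The core technical statement to establish is a contraction estimate of the form $\|K_{k+1}\|_{k+1}\le\eta\,\|K_{k}\|_{k}$ with some $\eta<1$, valid when $L$ is large and $\|\Kcal\|_{\zeta,\Qscr}<\rhoMT(L)$; this forces the $K_{k}$ to decay geometrically and shows that $\sum_{k=0}^{N}E_{k}L^{(N-k)d}$ assembles, modulo a boundary contribution, into a quantity equal to $-\ln\Zscr_{N}(\Kcal,\Qscr,0)$ and hence into a $\overline{\Wscr}_{N}(\Kcal)$ with bounds uniform in $N$.

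For the smoothness claim I would differentiate the RG flow with respect to its initial datum $\Kcal$. Each step is a smooth map between the ambient Banach spaces and the differentials $D_{\Kcal}K_{k}$ themselves satisfy a contractive linear recursion with $\Kcal$-independent coefficients, so a Leibniz-rule expansion over the $N$ scales yields the estimate \eqref{eq:bound_D_W_K} with constants $C_{\ell}(L)$ independent of $N$. The delicate point, and what I expect to be the main obstacle, is the finite-dimensional, unstable, ``relevant'' direction $H_{k}$: it must be tuned as a function of $\Kcal$ so that $H_{N}$ remains bounded, and I would solve this tuning by an implicit function theorem applied to the boundary condition on the flow. This is precisely where the loss-of-regularity in the scale of weighted norms warned about in the introduction intervenes, but it is absorbed into the constants $C_{\ell}(L)$ without spoiling $N$-uniformity.

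Finally, to extract the subsequential limit, for each fixed $r\in\mathbb{N}$ I would apply Arzel\`a--Ascoli to the equicontinuous $C^{r}$-bounded family $\{\overline{\Wscr}_{N}\}_{N}$ on the closed ball $\overline{B_{\rhoMT}(0)}\subset\boldsymbol{E}$, and then diagonalise over $r$ to obtain a single subsequence $N_{n}\to\infty$ along which $\overline{\Wscr}_{N_{n}}$ converges in every $C^{r}$ to a limit $\overline{\Wscr}\in C^{\infty}$ inheriting the same derivative bounds \eqref{eq:bound_D_W_K}.
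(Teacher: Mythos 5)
Your outline reproduces the broad skeleton of the paper's argument (finite range decomposition, splitting into relevant Hamiltonians $H_k$ and irrelevant polymer activities $K_k$, contraction of the $K$-direction for large $L$ and small $\norm{\Kcal}_{\zeta,\Qscr}$, a stable-manifold/implicit-function-theorem treatment of the expanding $H$-direction, $N$-uniform smoothness in $\Kcal$, and a compactness argument for the subsequential limit), but it has a genuine gap at the fine-tuning step. You propose to choose $H_0$ as a function of $\Kcal$ so that the flow satisfies the terminal condition, yet you never explain how the resulting quantity is related to the original partition function $\Zscr_N(\Kcal,\Qscr,0)$, in which there is no initial Hamiltonian at all: once $H_0$ is tuned, the object $\big(\boldsymbol R_{N+1}\cdots\boldsymbol R_1(e^{-H_0}\circ K_0)\big)(\TN,0)$ is not the integral you want, and with $H_0=0$ fixed the unstable direction prevents the estimates from closing. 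The paper resolves this by enlarging the problem: it constructs a whole \emph{family} of finite range decompositions for the perturbed forms $\Qscr^{(\boldsymbol{q})}(z)=\Qscr(z)-(\boldsymbol{q}z^{\nabla},z^{\nabla})$, proves that the convolution maps $\boldsymbol R_k^{(\boldsymbol{q})}$ are smooth in $\boldsymbol{q}$ with $N$-independent bounds, inserts a free relevant Hamiltonian $\mathcal H$ through $\widehat K_0(\mathcal K,\mathcal H)=e^{-\mathcal H}\mathcal K$, and then performs a \emph{second} fixed-point argument producing $\widehat{\mathcal H}(\mathcal K)$ with $\Pi_{H_0}\widehat Z(\mathcal K,\widehat{\mathcal H}(\mathcal K))=\widehat{\mathcal H}(\mathcal K)$. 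Only then do the inserted factors recombine, the quadratic part of $\mathcal H$ is absorbed into the Gaussian measure and the constant part pulled out, yielding the representation $\Zscr_N(\Kcal,\Qscr,0)=\frac{Z_N^{(\widehat{\boldsymbol q}_N)}}{Z_N^{(0)}}\,e^{L^{Nd}\widehat e_N(\Kcal)}\int(1+\widehat K_N)\,\d\mu^{(\widehat{\boldsymbol q}_N)}_{N+1}$ from which $\overline\Wscr_N$ and the bounds \eqref{eq:bound_D_W_K} are read off (this also requires the separate, $N$-uniform smoothness in $\boldsymbol{q}$ of $L^{-Nd}\ln\big(Z_N^{(\boldsymbol q)}/Z_N^{(0)}\big)$). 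Without the $\boldsymbol{q}$-family this cancellation mechanism is simply unavailable, so your argument as stated computes the wrong quantity.

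Two further points. Your remark that the ``loss of regularity'' in the weighted norms ``is absorbed into the constants'' is backwards: the whole point of the finite range decomposition used here is that derivatives with respect to the tuning parameter cost only an $L^p$ norm of the integrand, so there is \emph{no} loss of regularity and the classical implicit function theorem applies; a genuine loss of regularity cannot be hidden in constants and would force a Nash--Moser-type scheme as in earlier work. Finally, be careful with the compactness step: the closed ball $\overline{B_{\rhoMT}(0)}\subset\boldsymbol E$ is not compact, so Arzel\`a--Ascoli does not apply verbatim on this infinite-dimensional domain; the uniform $C^\ell$ bounds are the substantive content here, and the extraction of a convergent subsequence needs a slightly more careful formulation (the paper's quantitative corollary, Theorem~\ref{th:pertcomp}, works on a finite-dimensional parameter domain precisely so that this step is immediate).
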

This is the main technical theorem of the paper. 
The main steps of the proof will be summarised in Chapter~\ref{sec:explanation}  and the theorem
 will be eventually proven in Chapter~\ref{sec:proofs}. 

An immediate consequence of   Theorem~\ref{th:pertcomp_E} is that the function 
 $F\mapsto \Wscr_{\!N}(\Kcal_F,\Qscr)$ has the desired smoothness properties. 
 Here $\R^{m \times d}\ni F\mapsto \Kcal_F\in \boldsymbol{E}$   is a function that satisfies  suitable conditions and   $\Qscr$ is a fixed quadratic form.
 Specifically, we have the following result.

\begin{theorem}   
\label{th:pertcomp}    
Let $d$, $m$, $\Ran$, $\Ical$, $\omega_0$, $\zeta$, $r_0$, $L\geq L_0$, $\rhoMT$ be 
 as in Theorem~\ref{th:pertcomp_E}. Fix a quadratic from $\Qscr$ which satisfies \eqref{eq:Qlowerbound_again}.
Let $r_1 \ge 2$ be an integer.
Then  for each integer $N \ge 1$, each open set $\Ocal \subset \R^{m \times d}$
and any map $\Ocal \ni F\rightarrow \Kcal_F\in \boldsymbol{E}_{\zeta, \Qscr}$ of class $C^{r_1}$ that  satisfies the bounds	
\begin{align}
\sup_{F \in \Ocal} \norm{\Kcal_F}_{ \zeta, \Qscr}  & < \rhoMT, \\
\Theta := \sup_{F \in \Ocal} \sum_{\abs{\gamma} \le r_1}  \frac{1}{\gamma!}\norm{  \partial^\gamma_F \Kcal_F}_{ \zeta, \Qscr} &  < \infty,
\end{align}
the function $F \mapsto \Wcal_N(F):=\Wscr_{\!N}(\Kcal_F,\Qscr)$ is in $C^{r_1}(\Ocal)$, and the derivatives 

\noindent
 $\abs{\partial_F^\alpha \Wcal_N(F)}$, $\abs{\alpha} \le r_1$, can be bounded in terms of  
 $\Theta$.
 
In particular, there exists $\Wcal\in C^{r_1-1,1}(\Ocal)$ and a subsequence $N_n \to \infty$ such that $\Wcal_{N_n} \to \Wcal$ 
in  $C^{r_1-1}$, in $C^{r_1-1}(\Ocal)$ or in $C^{r_1-1}_{\rm loc}(\Ocal)$  and the derivatives of $\Wcal$ up to order $r_1-1$ 
as well as the Lipschitz constant of the $(r_1-1)$-st derivative are bounded in terms of $L$ and $\Theta$.
\end{theorem}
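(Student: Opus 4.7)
The proof will be a chain-rule corollary of Theorem~\ref{th:pertcomp_E}. The plan is to rewrite $\Wcal_N$ as a composition: introduce the map $\Phi:\Ocal\to B_\rhoMT(0)\subset \boldsymbol E_{\zeta,\Qscr}$ defined by $\Phi(F)=\Kcal_F$, so that $\Wcal_N(F)=\overline\Wscr_N(\Phi(F))$, where $\overline\Wscr_N$ is the functional from Theorem~\ref{th:pertcomp_E} associated with the fixed quadratic form $\Qscr$. The hypotheses on $F\mapsto\Kcal_F$ say exactly that $\Phi$ is of class $C^{r_1}$ with image in $B_\rhoMT(0)$ and
$\sup_F\sum_{\abs{\gamma}\le r_1}(\gamma!)^{-1}\norm{\partial_F^\gamma\Kcal_F}_{\zeta,\Qscr}\le\Theta$,
while Theorem~\ref{th:pertcomp_E} tells us that $\overline\Wscr_N\in C^\infty(B_\rhoMT(0))$ with the $N$-independent operator-norm bounds $(\ell!)^{-1}\norm{D^\ell\overline\Wscr_N(\Kcal)}\le C_\ell(L)$ on its Fréchet derivatives.

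Iterating the Banach-space chain rule $r_1$ times yields $\Wcal_N\in C^{r_1}(\Ocal)$; the explicit bounds come from Faà di Bruno's formula, which expresses each $\partial_F^\alpha\Wcal_N(F)$, $\abs{\alpha}\le r_1$, as a finite linear combination
\begin{equation*}
\partial_F^\alpha\Wcal_N(F) \;=\; \sum_\pi c_{\alpha,\pi}\, D^{\abs{\pi}}\overline\Wscr_N(\Kcal_F)\bigl[\partial_F^{\pi_1}\Kcal_F,\ldots,\partial_F^{\pi_{\abs{\pi}}}\Kcal_F\bigr],
\end{equation*}
with the sum over partitions $\pi=(\pi_1,\ldots,\pi_{\abs{\pi}})$ of $\alpha$ into nonzero multiindices and combinatorial weights $c_{\alpha,\pi}$. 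Each summand is estimated in absolute value by $\abs{\pi}!\,C_{\abs{\pi}}(L)\prod_i\norm{\partial_F^{\pi_i}\Kcal_F}_{\zeta,\Qscr}$ because $D^\ell\overline\Wscr_N(\Kcal_F)$ is a bounded symmetric $\ell$-linear form on $\boldsymbol E_{\zeta,\Qscr}$; grouping the factors $(\pi_i!)^{-1}\norm{\partial_F^{\pi_i}\Kcal_F}_{\zeta,\Qscr}$ against the definition of $\Theta$ then produces a bound on $\abs{\partial_F^\alpha\Wcal_N(F)}$ depending only on $L$, $r_1$ and $\Theta$, uniformly in $N$ and in $F\in\Ocal$.

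For the limit, $\{\Wcal_N\}$ is equibounded in $C^{r_1}(\Ocal)$ with constants depending only on $L$ and $\Theta$; in particular, since $r_1\ge 2$, the derivatives of order $r_1-1$ are uniformly Lipschitz on every subset compactly contained in $\Ocal$. Arzelà--Ascoli then supplies a subsequence $N_n\to\infty$ along which $\Wcal_{N_n}\to \Wcal$ in $C^{r_1-1}_{\mathrm{loc}}(\Ocal)$, and the uniform Lipschitz bound on the $(r_1-1)$-st derivatives passes to the limit, giving $\Wcal\in C^{r_1-1,1}(\Ocal)$ with the claimed bounds. All genuine analytic content sits in Theorem~\ref{th:pertcomp_E}; the only point requiring mild care is that the norm $\norm{\cdot}_{\zeta,\Qscr}$ is used simultaneously to measure the operator norms of $D^\ell\overline\Wscr_N$ and the partials $\partial_F^\gamma\Kcal_F$, so that the contractions in Faà di Bruno produce bounds of exactly the advertised form.
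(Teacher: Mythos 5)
Your argument is correct and is exactly the route the paper takes: the paper's proof is the one-line remark that the claim "follows from Theorem~\ref{th:pertcomp_E} and the chain rule," and your write-up simply fills in the Fa\`a di Bruno bookkeeping and the Arzel\`a--Ascoli extraction that this remark leaves implicit.
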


\begin{proof}
The claim follows from   Theorem~\ref{th:pertcomp_E} and the chain rule.
\end{proof}
\bigskip

Finally we address the scaling limit of the model.
This is a statement about  the Laplace transform of the measure with density
$$
\sum_{X\subset T_N}\prod_{x\in X} \Kcal(D\p(x))\,\mu_{\Qscr}(\d \p)/\Zscr(\Kcal,\Qscr,0).
$$
Recall that  $\boldsymbol{Q}$ is the operator associated to the quadratic form $\Qscr$ via \eqref{eq:defQ}.  
Moreover, the  operator $\boldsymbol{Q}^\nabla$ is obtained by restriction the quadratic form $\Qscr$ to
the subspace $\Gcal^\nabla$ which corresponds to linear maps or, equivalently to matrices in $\R^{m \times d}$,
see  \eqref{eq:define_bsQ_nabla} and  \eqref{eq:nabla_QU_components} 
for the definition of $Q^\nabla$  and its components $(\boldsymbol{Q}^\nabla_U)_{i,j;s,t}$.

\begin{theorem} \label{th:scalinglimit} 
Fix the spatial dimension $d$, the number of components $m$, the range of interaction $\Ran$,
the set of multiindices 
$$
\{e_1,\ldots,e_d\}\subset\Ical \subset\{\alpha\in \mathbb{N}_0^d\setminus \{(0,\ldots,0)\}:\;  \abs{\alpha}_\infty\leq \Ran\},
$$
real constants  $\omega_0 > 0$, $\zeta \in (0,1)$ and an integer $r_0 \ge 3$. 
Let $\mathbb{T}^d = \R^d/\Z^d$. 
For $f\in C^\infty(\mathbb{T}^d, \R^m)$ with $\int_{\mathbb{T}^d}f=0$ 
define $f_N\in \mathscr{V}_N$ by $f_N(x)=L^{-N\frac{d+2}{2}}f(L^{-N}x)$.
 
Then there exists  $L_0\in \mathbb{N}$ such that for every odd integer $L \ge L_0$ 
there exists a constant $\rhoMT  > 0$ with the following properties. 
For any quadratic form   $\Qscr$ on $\Gcal= (\R^m)^\Ical$  that  satisfies the bounds 
\begin{equation}
\label{eq:Qlowerbound_again2}
\omega_0 \abs{z}^2 \leq \Qscr(z)\leq \omega_0^{-1}\abs{z}^2\  \text{ for all }\   z\in\Gcal
\end{equation}
and any $\Kcal\in B_{\rhoMT}(0)\subset \boldsymbol{E}_{\zeta,\Qcal}$ there is a subsequence $N_\ell\to \infty$   
and a matrix $\boldsymbol{q}\in \R^{(m\times d)\times (m\times  d)}_{\mathrm{sym}}$ 
such that for all $f\in C^\infty(\mathbb{T}^d, \R^m)$, the partition function
 $\Zscr_N(\Kcal,\Qscr,f_N)$ defined by \eqref{eq:Zpertcomp_general} satisfies
\begin{equation}
\label{eq:scaling_limit_abstract}
\lim_{\ell\rightarrow \infty}  \frac{\Zscr_{N_\ell}(\Kcal,\Qscr,f_{N_\ell})}{\Zscr_{N_\ell}(\Kcal,\Qscr,0)}= e^{\frac1{2}(f,\mathscr{C}f)},                                                        
\end{equation}
where $\mathscr{C}: f\to \mathscr{A}^{-1}f$ is the inverse of the operator 
$\mathscr{A}$ acting on $u\in H^1(\mathbb{T}^d,\R^d)$ with $\int u=0$  by
\begin{equation}  
\label{eq:limiting_operator}
(\mathscr{A}u)_s=-\sum_{t=1}^m \sum_{i,j=1}^d (\boldsymbol{Q}^{\nabla}-\boldsymbol{q})_{i,j;s,t} \partial_i\partial_j u_t. 
\end{equation}
\end{theorem}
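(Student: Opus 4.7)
The ratio $\Zscr_N(\Kcal,\Qscr,f_N)/\Zscr_N(\Kcal,\Qscr,0)$ is, by construction, the Laplace transform at $f_N$ of the probability measure with density $\prod_x(1+\Kcal(D\varphi(x)))\,\mu_{\Qscr}(\d\varphi)$, normalised. The plan is to feed this source-dependent partition function through the same multi-scale (renormalisation group) machinery that underlies Theorem~\ref{th:pertcomp_E}, tracking how the smoothly varying source $f_N$ evolves under the RG flow and how the relevant quadratic corrections it extracts assemble into the limit matrix $\boldsymbol{q}$.

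First I would rerun the finite-range decomposition of $\mu_{\Qscr}$ scale by scale, now carrying the source term $(f_N,\varphi)$ along. After $k$ integration steps the partition function is rewritten as a Gaussian expectation (with respect to a flowed covariance) of an evolved polymer activity $\Kcal_k^{(N)}$ coupled to the field and to the source, together with an accumulated relevant quadratic correction $\delta\boldsymbol{q}_k^{(N)}$ extracted from the relevant component of the polymer activities at each scale. The rescaling $f_N(x)=L^{-N(d+2)/2}f(L^{-N}x)$ is exactly the canonical scaling of a free field of dimension $d$, so $f_N$ is slowly varying at every intermediate scale $L^k$ with $k\le N$ and couples to $\Kcal_k^{(N)}$ only through irrelevant operators that are absorbed into the flowed polymer activity without contaminating $\delta\boldsymbol{q}_k^{(N)}$ at leading order.

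Next I would use the uniform bounds of Theorem~\ref{th:pertcomp_E} to conclude that the sequence of symmetric tensors $\boldsymbol{q}^{(N)} := \sum_{k=0}^{N-1}\delta\boldsymbol{q}_k^{(N)} \in \R^{(m\times d)\times(m\times d)}_{\mathrm{sym}}$ is bounded and therefore admits, along a subsequence $N_\ell\to\infty$, a limit $\boldsymbol{q}$, while simultaneously the terminal polymer activity $\Kcal_N^{(N)}$ becomes negligible in the relevant norm. On scale $N$ the remaining integral is, up to a $1+o(1)$ correction, a purely Gaussian integral with effective quadratic form $\boldsymbol{Q}-\boldsymbol{q}^{(N)}$ and source $f_N$, so completing the square gives
\begin{equation*}
\frac{\Zscr_N(\Kcal,\Qscr,f_N)}{\Zscr_N(\Kcal,\Qscr,0)} = \exp\Bigl(\tfrac{1}{2}\bigl(f_N,\,\mathscr{C}^{(N)}_{\boldsymbol{q}^{(N)}} f_N\bigr)\Bigr)\bigl(1+o(1)\bigr),
\end{equation*}
where $\mathscr{C}^{(N)}_{\boldsymbol{q}^{(N)}}$ is the discrete Green's function of the lattice operator with symbol $\boldsymbol{Q}-\boldsymbol{q}^{(N)}$ acting on mean-zero fields on $T_N$.

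Finally, the convergence $(f_N,\mathscr{C}^{(N)}_{\boldsymbol{q}^{(N)}} f_N)\to (f,\mathscr{C}f)$ along the subsequence follows from a standard Fourier-series computation on $\mathbb{T}^d=\R^d/\Z^d$: the dimension-tuned rescaling of $f_N$ is precisely the one under which the discrete quadratic form with coefficients $\boldsymbol{Q}-\boldsymbol{q}^{(N)}$, tested against smooth functions, converges to the continuum operator $\mathscr{A}$ of \eqref{eq:limiting_operator}, and inversion is continuous on the orthogonal complement of constants. The main technical obstacle I anticipate is proving rigorously that the source-dependent pieces of the flowed polymer activities contribute only to the irrelevant part of the RG flow, i.e.\ do not modify the extraction of $\boldsymbol{q}$ and cumulatively give only an $o(1)$ multiplicative error in the ratio; this demands a uniform-in-$N$ norm estimate on the source-dependent components of the RG coordinates, in parallel with the core bounds of Theorem~\ref{th:pertcomp_E} but tracking the linear dependence on $f_N$ at each scale.
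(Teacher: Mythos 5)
Your overall template (reduce the ratio to an explicit Gaussian factor with a renormalised covariance times a $1+o(1)$ remainder, then pass to the continuum by Fourier analysis) is the same as the paper's, but the step where the proof actually lives --- the treatment of the source term --- has a genuine gap. You propose to carry $(f_N,\p)$ through the RG flow and you assert that it ``couples only through irrelevant operators''. That assertion is unjustified and, as stated, wrong in the paper's power counting: terms linear in the field are exactly of the relevant type (the linear monomials $\nabla^\alpha\p_i$ with $1\le|\alpha|\le\lfloor d/2\rfloor+1$ in Section~\ref{se:polymers}), and none of the machinery of Theorems~\ref{prop:smoothnessofS}, \ref{prop:contractivity} and \ref{maintheorem} carries a source coordinate, so your plan would require rebuilding the smoothness and contraction estimates with an observable/source component --- precisely the ``technical obstacle'' you defer rather than resolve. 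The paper sidesteps this entirely: it applies the representation formula \eqref{thmequality2} of Theorem~\ref{maintheorem}, which is obtained from the $f=0$ analysis by a single exact Gaussian translation $\p\mapsto\p+\mathscr C^{(\widehat{\boldsymbol q}_N(\Kcal))}f_N$, so that all $f_N$-dependence sits in the explicit prefactor $e^{\frac12(f_N,\mathscr C^{(\widehat{\boldsymbol q}_N)}f_N)}$ and in the shifted argument of the exponentially small $\widehat K_N$; no source ever enters the flow. Relatedly, the limiting matrix $\boldsymbol q$ is not an accumulation of per-scale extractions $\delta\boldsymbol q_k$ but the single fine-tuned matrix $\widehat{\boldsymbol q}_N(\Kcal)\in B_\kappa$ produced by the stable-manifold and second fixed-point arguments of Chapter~\ref{sec:finetuning}; its boundedness (hence subsequential convergence) is automatic, whereas boundedness of your scale-by-scale sums is not supplied by Theorem~\ref{th:pertcomp_E}.

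A second concrete problem is your ``complete the square on scale $N$'' step: the measure in the last integration is $\mu_{N+1}^{(\boldsymbol q)}$, whose covariance is only the last slice $\mathscr C_{N+1}^{(\boldsymbol q)}$ of the finite range decomposition, so completing the square there cannot produce the full Green's function of $\boldsymbol Q-\boldsymbol q^{(N)}$; to obtain the full covariance one must either translate before decomposing (the paper's route) or sum the Gaussian source contributions over all scales, which again presupposes the source-dependent flow you have not constructed. You also omit the one nontrivial estimate that remains even after the representation formula: showing that $\int \widehat K_N(\TN,\p+\mathscr C^{(\boldsymbol q_N)}f_N)\,\mu_{N+1}^{(\boldsymbol q_N)}(\d\p)\to 0$ is not a consequence of $\|\widehat K_N\|_N^{(A)}\le C\eta^N$ alone, because the argument is shifted by an $N$-dependent field; the paper controls the weight $w_{N:N+1}^{\TN}(\mathscr C^{(\boldsymbol q_N)}f_N)$ uniformly in $N$ using \eqref{eq:w2}, the covariance bounds \eqref{Ahatestimate}, and the discrete-derivative bounds \eqref{eq:discderivfN} on $f_N$, before invoking \eqref{eq:keyboundKN}. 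Your final Fourier/Riemann-sum step for $(f_N,\mathscr C^{(\boldsymbol q_N)}f_N)\to(f,\mathscr C f)$ is correct in spirit and matches the paper, but the two gaps above mean the proposal as written does not yield the theorem.
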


For the choice of the scaling and the fact that $\mathscr A$ depends only on the restriction  $\boldsymbol Q^\nabla$ of $\boldsymbol Q$  to $\Gcal^\nabla$, 
see  Remark~\ref{re:correlation_structure_concrete}.
Regarding the choice of subsequence, see Remark~\ref{re:subsequences}.
\bigskip

\section[New ideas in the proof of the abstract perturbation result]{The proof of the abstract perturbation result: \\
informal preview  and indication of new ideas}
\label{se:main_new_ideas}

In this section we give a brief and  informal  preview of the general strategy of 
the proofs, as well as  a short  description of the novelties of our approach.
A  summary of the concrete technical implementation of our  approach  is given  in Chapter~\ref{sec:explanation}.

Recall that our goal is to analyse 
 the integral in \eqref{eq:Zpertcomp_general}, namely
\begin{align}
\Zscr_{N}(\Kcal,\Qscr,f)=\int_{\Xcal_N}e^{(f,\p)}
\sum_{X\subset T_N}\prod_{x\in X}\Kcal(D\p(x))\,\mu_{\Qscr}(\d\p)                                                             
\end{align}
and, in particular, to  study the perturbative contribution to the free energy $\Wscr_{\!N}(\Kcal,\Qscr)=-L^{-Nd}{\ln\Zscr_{N}(\Kcal,\Qscr,0)}$ 
defined in \eqref{eq:defvarsigmaN_general}. 
Since we are interested in convexity properties, we need to control the second derivatives of $\Wscr_{\!N}(\Kcal,\Qscr)$ with respect to $\Kcal$. 

We first recall  a heuristic argument why a brute force analysis
 leads to a logarithmic divergence for the second derivatives of  $\Wscr$ and thus a multiscale analysis is required.
 
Then we briefly describe a particular multiscale strategy which has been developed and refined  over the last 30 years, 
in particular by Brydges and collaborators, and is the basis of our approach. 
This strategy involves three key ingredients:
\begin{itemize}
\item Finite-range decomposition of the Gaussian measure $\mu_\Qscr$.
\item Reblocking.
\item Extraction of relevant or marginal terms and fine-tuning of the initial Gaussian measure.
\end{itemize}

We close the Section with a description of  our key  modifications and refinements of this strategy: 

\begin{itemize}
\item A new finite range decomposition,
\item A new large field regulator, and
\item A modified reblocking scheme.
\end{itemize}
Readers familiar with the general multiscale analysis developed by Brydges and collaborators 
may directly move to the subsection on the  new finite range decomposition.

\subsection*{The need for a multiscale analysis}
The main interest of this paper lies in the exploration of convexity properties of the free energy as a function of the deformation $F$, 
i.e., bounds on its second derivative. By the chain rule applied to $F\to \Kcal_F$
 (combining \eqref{eq:defofKuVbeta} with \eqref{eq:defofUbar} as summarized at the beginning of  
 Section~\ref{sec:abstract_to_ggm} below) this boils down to  bounds on the expression
\begin{align}
D^2_{\Kcal} \Wcal(\Kcal,\Qscr)(\dot{\Kcal},\dot{\Kcal}).
\end{align}
 As an example,  let us consider the case $\Kcal=0$. The evaluation of the second derivative yields
\begin{align}
\begin{split}
D^2_{\Kcal} \Wcal(0,\Qscr)(\dot{\Kcal},\dot{\Kcal})
&=L^{-Nd}\biggl(\sum_{x,y\in T_N} \int_{\Xcal_N} \dot{\Kcal}(D\p(x))
\dot{\Kcal}(D\p(y))\, \mu_\Qscr(\d \p)
\\
&\quad - \sum_{x,y\in T_N}\int_{\Xcal_N} 
\dot{\Kcal}(D\p(x))\, \mu_{\Qscr}(\d \p)
\int_{\Xcal_N} 
\dot{\Kcal}(D\p(y))\, \mu_\Qscr(\d \p)\biggr)
\end{split}
\end{align}
 Assuming further that $\dot{\Kcal}$ is centred with respect to 
 $\mu_{\Qscr}$  we find the simpler expression
 \begin{align}
\begin{split}
D^2_{\Kcal} \Wcal(0,\Qscr)(\dot{\Kcal},\dot{\Kcal})
&=L^{-Nd}\sum_{x,y\in T_N} \int_{\Xcal_N} \dot{\Kcal}(D\p(x))
\dot{\Kcal}(D\p(y))\, \mu_\Qscr(\d \p)
\\
&=\sum_{x\in T_N} \int_{\Xcal_N} \dot{\Kcal}(D\p(x))
\dot{\Kcal}(D\p(0))\, \mu_\Qscr(\d \p)
\end{split}
\end{align}
where we used translation invariance in the second step.
While our setting is very general, the key difficulty is already present when 
$\mu_\Qscr(\d \p)$ is the scalar ($m=1$) discrete massless Gaussian free field, i.e., 
the operator $\Qscr=\Delta$ is the discrete lattice Laplacian $\Delta$.
 It is well known that the gradient of the discrete Gaussian free field decays critically, i.e.,
\begin{align}
\mathbb{E}_{\mu_{\Delta}} (\nabla_i\p(x)\nabla_j \p (y))=
\int_{\Xcal_N} \nabla_i\p(x)
\nabla_j\p(y)\, \mu_{\Delta}(\d \p)
\sim |x-y|^{-d}
\end{align}
and a similar statement holds for general $\Qscr$ satisfying \eqref{eq:Qlowerbound_again}.
Therefore the best possible general bound for the expression
$\mathbb{E}_{\mu_\Qscr} (\dot{\Kcal}(D\p(x))\dot{\Kcal}(D \p (y)))$ is also of order $|x-y|^{-d}$. We conclude that then for a 'typical' $\dot\Kcal$,
\begin{align}
\sum_{x\in T_N} \left|\int_{\Xcal_N} \dot{\Kcal}(D\p(x))
\dot{\Kcal}(D\p(0))\, \mu_\Qscr(\d \p)\right|
\gtrsim \sum_{x\in T_N\setminus \{0\}} |x|^{-d}\to \infty
\end{align}
as $N\to \infty$.  
This shows that the long range correlations of the Gaussian fields prevent a brute force analysis and,  
instead, we need to carefully exploit cancellations of the terms.

\subsection*{Finite range decomposition of the Gaussian measure}
To avoid the logarithmic divergence we use a multiscale analysis and begin by 
decomposing  the Gaussian field as a sum of Gaussian fields that capture the correlations on specific scales. 
The key ingredient  is the convolution  property of Gaussian random variables, namely
if $\p_1\sim N(0, \Cscr_1)$ and $\p_2\sim N(0, \Cscr_2)$ are independent Gaussian fields with covariance operators 
$\Cscr_1$ and $\Cscr_2$ then $\p_1+\p_2\sim N(0, \Cscr_1+\Cscr_2)$ is again a Gaussian field
and its covariance is the sum of covariances of $\p_1$ and $\p_2$.
In terms of integrals this identity can be expressed using $\Ascr_1=\Cscr_1^{-1}$, $\Ascr_2=\Cscr_2^{-1}$, and
$\Ascr_3=(\Cscr_1+\Cscr_2)^{-1}$ as follows
\begin{align}
\int f(\p)\, \mu_{\Ascr_3}(\d \p) =
 \int f(\p_1+\p_2)
\mu_{\Ascr_1}(\d \p_1)\, \mu_{\Ascr_2}(\d \p_2)
=
 \int f(\p)
\,(\mu_{\Ascr_1}\ast \mu_{\Ascr_2})(\d \p).
\end{align}
Here we used that $\p_i$ and $-\p_i$ follow the same distribution.
We consider a decomposition $\mu_\Qscr=\mu_1\ast \ldots\ast\mu_{N+1}$ of the Gaussian measure $\mu_\Qscr$ with $\mu_k$  capturing
 the correlation structure of $\mu_\Qscr$ on the length scale $\sim L^k$. 
We can then rewrite the integral in  \eqref{eq:Zpertcomp_general} as 
\begin{align}
\Zscr_{N}(\Kcal,\Qscr,0)=\int_{\Xcal_N}
\sum_{X\subset T_N}\prod_{x\in X}\Kcal\biggl(\sum_{k=1}^{N+1} D\p_k(x)\biggr)\,\mu_1(\d\p_1)\ast\ldots \ast\mu_{N+1}(\d\p_{N+1})                                                             
\end{align}
and integrate out the fluctuations scale by scale.
One possibility to obtain such a decomposition is to use a dyadic decomposition of the covariances in Fourier space,
i.e., the (discrete) Fourier transform of the kernel of the covariance operator $\Cscr_k$ of $\mu_k$ 
satisfies $\hat{\Ccal}_k(p)=\boldsymbol{1}_{p\in A_k} \hat{\Ccal}(p)$ where $\Ccal $ is the kernel 
generating the covariance operator $\Cscr$ of $\mu_\Qscr$ and $A_k=\{L^{-k-1}\leq |p|\leq L^{-k}\}$ is 
an annulus in the Fourier space. 
For a definition  and relevant properties of the discrete Fourier transform we refer to Section~\ref{sec:FRD}. 
The problem when working with such a decomposition is that, while $\mu_1$ essentially captures the correlation structure 
of $\mu_{\Qscr}$ on length scale $L$, it also weakly correlates points with larger distance. 
This causes substantial problems for the bookkeeping of the exponential number of terms. 
Note that the initial integrand given by 
\begin{align}\label{eq:product_structure}
\sum_{X\subset T_N} \prod_{x\in X} \Kcal(\nabla\p(x))
= \prod_{x\in T_N}(1+\Kcal(\nabla\p(x)))
\end{align}
contains $2^{|T_N|}$ terms but those can be expressed  as a product of $|T_N|$ terms. 
However, after integrating out the measure $\mu_1$ for a dyadic decomposition in Fourier space this structure is completely lost 
and the resulting combinatorial explosion complicates the control of the integral. 

Therefore, we  use  a more involved \emph{finite range decomposition} of the measure $\mu_\Qscr$. 
Finite range decompositions are again  decompositions  of the Gaussian $\p$ into a sum of Gaussian fields $\p_k$
with distributions $\mu_k$ that have the property that $\mu_k$ captures the correlation structure on length scale $L^k$. 
In addition, however, they satisfy $\mathbb{E}_{\mu_k}(\nabla_i\p(x)\nabla_j\p(x))=0$ for $|x-y|\gtrsim L^k$, 
i.e., the fields are independent for well separated points.
This implies that for $x,y$ such that $|x-y|\gtrsim L^k$
\begin{align}
\begin{split}
\int_{\Xcal_N} 
\Kcal(\nabla &(\p_k+\psi)(x))\Kcal(\nabla(\p_k+\psi)(y))\, \mu_k(\d\p_k)
\\&=
\int_{\Xcal_N} 
\Kcal(\nabla(\p_k+\psi)(x))\, \mu_k(\d\p_k)
\int_{\Xcal_N} 
\Kcal(\nabla(\p_k+\psi)(y))\, \mu_k(\d\p_k).
\end{split}
\end{align}

\subsection*{Reblocking}
It turns out that after integrating out the measures $\mu_k$ of a finite range decomposition we can retain a similarly simple structure 
as in \eqref{eq:product_structure} by collecting terms on length scale $L^k$.
We refer to  this process as \emph{reblocking}   since it bears some  resemblance to the block spin approach.
 The combination of integrating out one fluctuation field followed by a reblocking operation will be called \emph{renormalisation map} or renormalisation transform. 
The reblocking operation allows us to find simple expression for the functionals
\begin{align}
F_k(\p) =\int_{\Xcal_N}
\sum_{X\subset T_N}\prod_{x\in X}\Kcal\left(D\p(x) + \sum_{i=1}^{k} D\p_i(x)\right)\,\mu_1(\d\p_1)\ast\ldots \ast\mu_{k}(\d\p_{k})                                                              
\end{align}
that we obtain after integrating out the  first $k$ fluctuation fields.
 The actual implementation of the reblocking procedure and thus an explicit definition of the renormalisation transform can be found
in Section~\ref{sec:defRenormTransform}.

\subsection*{Extraction of relevant terms and fine-tuning}
The proper definition of the renormalisation map is slightly more  involved because  for naive definitions or the renormalisation map
 the natural  norm of the functionals  $F_k$
 will not be bounded uniformly in $k$. 
Instead, in every step we need to extract a finite number of low degree polynomials in the gradient field and introduce a separate bookkeeping for those terms. 
For gradient fields those terms are 
 \begin{align}\label{eq:relevant_terms}
 1, \,  \nabla^\alpha \p_i(x),\,
   \nabla^\beta \p_i(x) \nabla^\gamma \p_j(x)
 \end{align}
 where $1\leq |\alpha|\leq 1 + \lfloor d/2\rfloor$, $|\beta|=|\gamma|=1$ and $1\leq i, j\leq m$.
  In the renormalisation nomenclature they  correspond to the relevant (and marginal) operators. 
  The proper treatment of those terms is implicitly contained in the definition 
of the renormalisation map in Section~\ref{sec:defRenormTransform}.
However, the complete definition of the extraction operator
appears later in Section~\ref{se:projection_Pi2}. 

Finally we note that the Gaussian measure $\mu_\Qscr$ does not capture the long range behaviour of the model (e.g., the covariance structure)
because  the perturbation $\Kcal$ affects this in a non-trivial way (see also the statement of Theorem~\ref{th:scalinglimit}). 
In fact, the bookkeeping of the relevant operators  captures the effect of $\Kcal$ on the long range behaviour. 
Mathematically it is difficult to  track those terms explicitly. 
Instead, we employ a careful fine-tuning procedure replacing the  quadratic form $\Qscr$ in the  initial  integral \eqref{eq:Zpertcomp_general} 
with respect to the measure $\mu_{\Qscr}$ by a  replacement $\mu_{\Qscr'}$ judicially chosen so that 
$\mu_{\Qscr'}$ captures the model behaviour on large scales (i.e., the scaling limit). 
This is accomplished by making sure that the total contribution of the
relevant operators in \eqref{eq:relevant_terms} that we collect during the renormalisation steps
 explicitly vanishes as we integrate out all the fluctuations  from $\mu_1$  up to $\mu_N$.
 Eventually, the convenient operator $\Qscr'$   is found by a certain  fixed point argument (see Section~\ref{sec:finetuning}).

\subsection*{A new finite range decomposition}
We now highlight the two main differences of our approach  to previous works, namely 
 the use of improved finite range decompositions  and the construction of new large field regulators.
 We begin with the finite range decomposition.
 
One key difference with respect to earlier works is that we consider an anisotropic setting. 
The fine-tuning argument then requires us to carry out the whole analysis
not only for a single Gaussian measure $\mu_\Qscr$, but for a family of Gaussian measures from which we 
find the optimal choice $\mu_{\Qscr'}$ by a fixed point argument.
Therefore we need an entire family of finite range decompositions parametrized by the quadratic form $\Qscr$. 
In contrast, in  isotropic settings, e.g., for the $\p^4$-model, it is sufficient to consider a fixed finite range decomposition
 and simply rescale the fields. Suitable decompositions for the anisotropic setting  were constructed in \cites{AKM13,Bau13}. 
  One additional difficulty, which arises in the application of the fixed point argument,
  is that we need to  consider derivatives of the finite range decomposition with respect to the implicated quadratic forms $\Qscr$. 
  Those  derivatives, however, are unbounded for the decompositions from \cite{AKM13} and  \cite{Bau13}. 
In the precursor work  \cite{AKM16} this problem was solved by a rather involved analysis of fixed point arguments for maps with loss of regularity.
 Instead,  we rely here on the finite range decomposition constructed in \cite{Buc18} which is a small modification of the earlier constructions.  
  For this decomposition the derivatives with respect to $\Qscr$ exist and all our maps are differentiable without loss of regularity. 
  This allows us to avoid substantial technical difficulties mentioned above. 
  We review the properties of the finite range decomposition in Section~\ref{sec:FRD}. 
  This is then used as a crucial ingredient in the proof of the smoothness of the renormalisation map which can be found in Section~\ref{sec:smoothness}.
 
\subsection*{A new large field regulator}
An important technical innovation of this work is the construction of a new large field regulator. 
While providing a more detailed explanation and motivation of our construction in Section~\ref{sec:weights}, here we only give a brief overview.
 Generally the  \emph{large field problem}  refers to the problem of bounding the contribution of large fields $|\p|$ 
 to a functional integral $\int F(\p)\,\mu(\d \p)$. 
 Typically, the functionals $F$ are unbounded for large fields $|\p|$ but this should cause no problems because the measure $\mu$ is rapidly decaying.
 However, a formal treatment of this issue can be challenging. 
 Similar to previous works we consider norms that essentially boil down to 
 \begin{align}
 |F_k|_k = \sup_{\p} \frac{|F_k(\p)|}{w_k(\p)},
 \end{align}
 i.e., a weighted $L^\infty$ norm where the large field regulator $w_k$ determines the allowed growth behaviour as $|\p|\to \infty$ 
 (the actual norm on $F_k$ is more complicated but this does not matter here).
 The initial weight $w_0(\p)$ determines the class of perturbations $\Kcal$ that can be handled ($\Kcal$ can grow at most as fast as $w_0$)
 and therefore the results are more general the larger $w_0$ is, while the renormalisation analysis becomes more involved.
 Previously (see \cite{AKM16}) the large field regulators $w_k$ were carefully designed
 and complicated expressions which involved a large constant $h$, e.g.,
 \begin{align}
 w_k(\p)\propto \exp\left({h^{-2} \left(W_1(\nabla\p)+W_2(\nabla^2\p)+W_3(\nabla^3 \p)\right)}\right)
 \end{align}
 for some polynomials $W_1$, $W_2$, and $W_3$.
The key observation here is that we can instead choose
\begin{align}
w_0(\p) \propto \exp\left({\frac{(1-\varepsilon) (\p, \Qscr\p)}{2}}\right)
\end{align}
for any positive constant $\varepsilon>0$ and then essentially define $w_{k+1}\propto w_k\ast \mu_{k+1}$.
It is clear that this choice is optimal (up to $\varepsilon$) as the functionals $F$ must be integrable with respect to $\mu$. 
We emphasize that this has the consequence that we just need to assume any quadratic lower bound 
$\omega|z|^2$ for the potential $\Uscr$ in \eqref{eq:V3_new} while earlier approaches would require the lower bound 
$(1-\varepsilon)\Qscr_\Uscr(z)/2$ for some small $\varepsilon>0$, i.e., 
the potential would have to grow essentially as fast at $\infty$ as the quadratic Taylor expansion at the origin.
To clarify the difference we consider the example of gradient interface models where the Hamiltonian is of the form 
\begin{align}
\HN(\p) = \sum_{x\in T_N} \sum_{i=1}^d V(\nabla_i \p(x)).
\end{align}
The  interaction potential $V$ is often decomposed as follows $V(s)=s^2/2 + W(s)$ with $W(0)=W'(0)=W''(0)=0$.
The results in \cite{AKM16} then only cover $W(s)\geq -\varepsilon s^2/2$, i.e.,
$V(s)\geq (1-\varepsilon) s^2/2$ for some small
$\varepsilon >0$ while here we can handle all $W$ such that
$W(s)\geq -(1-\varepsilon) s^2$, i.e., $V(s)\geq \varepsilon s^2$ for arbitrarily small $\varepsilon >0$.
This improvement is of particular  importance for discrete elasticity
where we can handle a very general class of potentials in \gray{the} Theorems~\ref{T:deW} and \ref{T:descaling}. 
On the other hand it is unclear whether there are any potentials of interest  that satisfy the stronger growth conditions that were required in earlier works.

\subsection*{Further modifications}
We use a slightly different reblocking scheme  where we assign the contribution of small polymers (see Section~\ref{se:polymers} 
for definitions regarding the reblocking) to a cube on the next scale while previous works distributed their contribution to several  polymers. 
This simplifies  working out an explicit definition of the renormalisation map in Section~\ref{sec:defRenormTransform}.

Finally, we  treat \emph{vector valued fields} and \emph{general finite range interaction potentials} 
instead of just nearest neighbour interactions, but this is mostly a matter of additional notation.

\section[From abstract perturbations to generalized gradient models]
{The abstract perturbation theorem implies the results \\ for the generalized gradient models}  \label{sec:abstract_to_ggm}

The connection between the concrete conditions on 	$\Uscr$ used in Theorems~\ref{thm:strictconvexity} 
and~\ref{th:scalinglimit_concrete} for generalized gradient models and the abstract results in Theorems~\ref{th:pertcomp_E}, \ref{th:pertcomp},    
and~\ref{th:scalinglimit} is given by the following result.
Recall from  \eqref{eq:defofKuVbeta}  that  $\Kcal_F=\Kcal_{F,\beta, \Uscr}$ is defined as 
\begin{align}
\Kcal_F(z) = \Kcal_{F,\beta,\Uscr}(z)=
\exp\bigl(-\beta\overline{\Uscr}(\frac{z}{\sqrt{\beta}},F)\bigr)-1, 
\end{align}
 with 
\begin{equation}
\label{eq:defofUbar_repeat}
\overline{\Uscr}(z,F)=\Uscr(z+\overline{F})- \Uscr(\overline{F})- D\Uscr(\overline{F})(z)-\frac{\QU(z)}{2},
\end{equation}
where $\Qscr_\Uscr(z) = D^2 \Uscr(0)(z,z)$,
recall  \eqref{eq:defofUbar}.

\begin{proposition}  
\label{prop:embedding}                      
Let $r_0\geq 3$ and $r_1 \ge 0$ be integers and assume that 
\begin{equation} \label{eq:V1}   
 \Uscr\in C^{r_0+r_1} (\Gcal).   
\end{equation}
and
\begin{equation}  \label{eq:V1bis}
\omega_0 \abs{z}^2 \le \Qscr_\Uscr(z) \le \omega_0^{-1} \abs{z}^2
\end{equation} 
for some $\omega_0\in(0,1)$.
Let  $0<\omega \le \frac{\omega_0}{8}$ and suppose  that $\Uscr : \mathcal G \to \R$ satisfies the additional conditions
\begin{equation}
\label{eq:V3}     \Uscr(z) - D\Uscr(0) z - \Uscr(0)  \geq  \omega\abs{z}^2   \text{ for all } z \in \mathcal G,   \text{and} 
\end{equation}                 
\begin{equation}
\label{eq:V4}   \lim_{t \to \infty} t^{-2} \ln \Psi(t) = 0 
\text{ where }  
\Psi(t) := \sup_{\abs{z} \le t}   \sum_{3 \le \abs{\alpha} \le r_0+r_1}   \frac{1}{\alpha!} \abs{\partial^\alpha \Uscr(z)}.
\end{equation}

Then there exist  $\tilde \zeta$ (depending on $\omega$ and $\omega_0$), 
$\delta_0 > 0$ (depending on $\omega$, $\omega_0$ and  $\Psi(1)$), 
 $C_1$ (depending on $\omega$, $r_0$ and the function $\Psi$)
and $\Theta$ (depending on  $\omega$, $r_0$, $r_1$ and the function  $\Psi$)   
such  for all $\delta \in (0,  \delta_0]$  and all $\beta \ge 1$
the map $B_\delta(0)  \ni F \mapsto \Kcal_F = \Kcal_{F,\beta, \Uscr} \in \boldsymbol E = \boldsymbol E_{\tilde \zeta}$ is $C^{r_1}$ and satisfies
\begin{equation}  \label{eq;Kscr_smallness}
 \norm{\Kcal_F}_{\tilde \zeta, \Qscr_\Uscr} \le C_1 (\delta + \beta^{-1/2})
\end{equation}
as well as
\begin{equation} \label{eq;Kscr_bound_derivatives} 
\sum_{\abs{\gamma} \le r_1} \frac{1}{\gamma!} \,   \norm{ {\partial}^\gamma_F   \Kcal_F}_{\tilde \zeta,\Qscr_\Uscr}  \le \Theta.
 \end{equation}
In particular, given $\rhoMT >0$, there exists $\delta > 0$ and $\beta_0 \ge 1$ (both depending on  $ \omega$, $ \omega_0$,  and the function $\Psi$)
 such that, for all $\beta \ge \beta_0$ and all $F \in B_\delta$, we have \eqref{eq;Kscr_bound_derivatives} and 
\begin{equation}  
\label{eq;Kscr_bound_derivatives_rho}
\norm{\Kcal_{F}}_{\tilde \zeta,\Qscr_\Uscr} \le \rhoMT.
\end{equation}
\end{proposition}

The proof is  more or less straightforward, but a bit lengthy and we postpone it to  the end of this chapter,
Section~\ref{sec:emb}.
The proof shows
%It is shown there  
that we may take $\tilde \zeta = \frac{\omega \omega_0}{2}$.
Explicit expressions for $\delta_0$, $C_1$, and $\Theta$ are given in   \eqref{eq:define_delta_0},  \eqref{eq:define_C1_Kscr},  
and  \eqref{eq:define_Theta_Kscr},  respectively. 
The proof also shows that the dependence of $\delta_0$ and $C_1$ on the number of derivatives of $\Uscr$ can be slightly improved. 
If we set $\Psi_r(t) := \sup_{\abs{z} \le t}  \sum_{3\le \abs{\alpha} \le r}   \frac{1}{\alpha!} \abs{\partial^\alpha \Uscr(z)}$,
then $\delta_0$ depends on $\omega$, $\omega_0$, and $\Psi_3(1)$, 
while  $C_1$ depends on $\omega$, $r_0$, and the function $\Psi_{r_0}$.
\footnote{Recall our convention that for multiindices $\abs{\alpha}=\abs{\alpha}_1$.}

\begin{remark}    \label{re:singular_perturbation} \hfill
\begin{enumerate}[1.,leftmargin=0cm]
\item
 We may assume without loss of generality that $\Uscr(0) = D\Uscr(0) = 0$ since both, the Mayer function $\overline \Uscr$, 
and assumptions of the proposition, are invariant under adding an affine function to $\Uscr$. 
The lower  growth assumption \eqref{eq:V3} is then much weaker than the corresponding condition in \cite{AKM16}. 
Indeed,  assumption   \eqref{eq:V3} only requires any quadratic bound from below while in \cite{AKM16} the condition 
$ \Uscr(z)\geq \frac{\Qscr(z)}{2}-\varepsilon \abs{z}^2$ for some small $\varepsilon> 0$ was imposed,  
i.e., the potential was assumed to grow almost as fast as the quadratic approximation at $0$. 
\item
Let us emphasize that we do not require that $0$ is a minimum of the potential.
In fact, the theorem remains valid by replacing $0$ by  $z_0$ in all four assumptions, once $z_0$ is such that
\begin{align}
\Uscr(z)-D \Uscr(z_0)(z-z_0)-\Uscr(z_0)\geq \omega \abs{z-z_0}^2
\end{align}
for some $\omega\geq 0$.     
\item
The proposition can be generalized to some singular potentials,
e.g., it is possible to consider potentials $\Uscr+\Vscr$ where $\Uscr$ is as before and $\Vscr:\mathcal{G}\to \R\cup \{\infty\}$ satisfies $\Vscr(z)\geq 0$, 
$0\notin \mathrm{supp}\, \Vscr$, and $e^{-\Vscr}\in C^{r_0+r_1}$. 
The one dimensional potential $\Vscr(x)=\eta(x) \abs{x-2}^{-1}$ where $\eta\in C^\infty_c((1,\infty))$ satisfies $e^{-\Vscr(x)}\in C^\infty(\R)$, 
hence non-trivial examples for such potentials with singularities exist.

Let us briefly indicate the necessary extensions to prove this result.
Suppose that $\varepsilon >0$ is chosen such that
 $\dist (0,\mathrm{supp}\,\Vscr)\geq \varepsilon$.
We choose $\delta_0\leq \varepsilon/2$.
On the complement of the support of $\overline\Vscr$ we can argue as in the proof of Proposition~\ref{prop:embedding} below. 
If $(z/\sqrt{\beta},F)$ is in the support of  $\overline{\Vscr}$ and $\abs{F}\leq \delta_0$ we conclude that $\abs{z}\geq \varepsilon \sqrt{\beta}/2$. 
In this regime we use the estimate
\begin{align}
\abs*{ e^{-\beta \overline \Vscr (\frac{z}{\sqrt{\beta}},F)-\beta \overline \Uscr (\frac{z}{\sqrt{\beta}},F)}-1}_{T_{z,F}}
 \leq
  \abs*{ e^{-\beta \overline \Vscr (\frac{z}{\sqrt{\beta}},F)}}_{T_{z,F}} \abs*{ e^{-\beta \overline \Uscr (\frac{z}{\sqrt{\beta}},F)}}_{T_{z,F}} +\abs{1}_{T_{z,F}}
\end{align}
where $\abs{\cdot}_{T_{z,F}}$ is defined in \eqref{eq:defT_zFnorm} below.
Then the first term can be controlled by the assumption on $\Vscr$ and the second term is bounded in \eqref{eq:finalDerivativeEstimate}.
The condition $\abs{z} \geq \varepsilon \sqrt{\beta}/2$ implies that 
when multiplied with the weight of the $\lVert\cdot\rVert_{\boldsymbol{E}}$ norm both summands are exponentially small in $\beta$.
\end{enumerate}
\end{remark}

Combining Proposition \ref{prop:embedding} and Theorem \ref{th:pertcomp}  we get the following result.
\begin{theorem}   \label{th:bound_Wcal}
 Assume that $\Uscr$ satisfies  \eqref{eq:V1}-- \eqref{eq:V4}. Then there exists and $L_0$ such that for every odd integer $L \ge L_0$
there exist positive $\delta_1$ and $\beta_1 $  such that for $\beta \geq \beta_1$ the functions $\Wcal_{\beta,N}^\Uscr$ 
defined in \eqref{eq:defvarsigmaN} are  $C^{r_1}(B_{\delta_1}(0))$ and their $C^{r_1}$ norm is bounded uniformly for all  $N \ge 1$ and $\beta \ge \beta_1$.
\end{theorem}

\begin{proof}[Proof of Theorem~\ref{thm:strictconvexity}]
Recall from \eqref{eq:sigmaexprN} that
\begin{align}
\label{eq:sigmaexprN_repeat}
W_{\!\beta, N}(F)= & \, \Uscr(\overline{F})+\frac{\Wcal_{\beta,N}(F)}{\beta}-     \frac{1}{\beta L^{Nd}}\ln  Z_{\beta,N}^{\QU}
\end{align}
and note that the last term is independent of $F$. 
It follows from Theorem~\ref{th:bound_Wcal} that there is a constant $\Xi>0$ independent of $\beta$ and $\delta$ 
such that  $|D^2\Wcal_{\beta,N}(\dot{F},\dot{F})|\leq \Xi |\dot{F}|^2$ in $B_\delta(0)$ for $\beta\geq \beta_1$ and $\delta\leq \delta_1$.
The bound \eqref{eq:V4} on the third derivative of $\Uscr$ implies that there is a $\delta_2>0$ such that, for $\delta\leq \delta_2$ and $F \in B_\delta(0)$,
$$
\abs{D^2 \Uscr(\overline F)(z,z) -\Qscr_\Uscr(z)} 
=  \abs{D^2 \Uscr(\overline F)(z,z) - D^2 \Uscr(0)(z,z)} \le \frac{\omega_0}{2} \abs{z}^2
$$
and thus
\begin{equation}
D^2 \Uscr(\overline{F}) \geq \frac{\omega_0}{2} \abs{z}^2.
\end{equation}
Let $\beta_2=4 \Xi/ \omega_0$. 
Then for $\beta\geq \max(\beta_1,\beta_2)$, $\delta\leq \min(\delta_1,\delta_2)$, and $F\in B_\delta(0)$,
\begin{align}
D^2W_{\!\beta,N} (F)(\dot{F},\dot{F}) &=
D^2 \Uscr(\overline{F})(\overline{\dot{F}},\overline{\dot{F}})
+\frac{D^2\Wcal_{\beta,N}( F)}{\beta}(\dot{F},\dot{F})\\
& \geq \frac{\omega_0}{2} \abs{\overline{\dot{F}}}^2 -\frac{\Xi}{ 4\Xi/  \omega_0} 
\abs{\dot{F}}^2 \geq \frac{\omega_0}{4}  \abs{\dot{F}}^2.\notag
\end{align}
The assertion for the limit  $W_{\!\beta},$ follows from the fact  that the pointwise limit of uniformly convex functions is uniformly convex. 
\end{proof}

\begin{proof}[Proof of Theorem~\ref{th:scalinglimit_concrete}]
Combining \eqref{eq:initialfinal}, \eqref{eq:Zpertcomp}, and \eqref{eq:Zpertcomp_general}, we get
\begin{equation}
\mathbb{E}_{ \gamma_{  \beta,N  }^{F,\Uscr} }e^{(f_{N},\p)}
=\frac{Z_{\beta,N}(F,f_N)}{Z_{\beta,N}(F,0)}\Zcal_{\beta,N}
=\frac{\Zcal_{\beta,N}(F,\frac{f_N}{\sqrt{\beta}})}{\Zcal_{\beta,N}(F,0)}
=\frac{\Zscr_N(\Kcal_{F,\beta,\Uscr},\Qscr_{\Uscr},\frac{f_N}{\sqrt{\beta}})} {\Zscr_N(\Kcal_{F,\beta,\Uscr},\Qscr_{\Uscr},0)}.
\end{equation}
The assumptions ensure that Theorem~\ref{th:scalinglimit} can be applied which implies the claim.
\end{proof}

\section{Embedding of the initial perturbation} \label{sec:emb}

In this section we prove Proposition~\ref{prop:embedding} which shows that the
conditions imposed in the setting of generalized gradient models are sufficient to apply the abstract
perturbation theory in Section~\ref{sec:main_abstract}.
The proof is  more or less straightforward, but a bit lengthy, and may be skipped on first reading.

\begin{proof}[Proof of Proposition~\ref{prop:embedding}]
The main point is to obtain the additional factor $\beta^{-1/2} + \delta$ in \eqref{eq;Kscr_smallness}
which can be made as small as desired by taking $\delta$ small and $\beta$ large. 
This factor essentially comes from the third order Taylor expansion. 
We may assume that $\Uscr(0) = D\Uscr(0) = 0$ since the second and higher order derivatives of $\Uscr$
(and thus also the function $\overline \Uscr$) and the assumptions in 
Proposition~\ref{prop:embedding} are invariant under addition of an affine function to $\Uscr$.
The rest of the argument is then essentially an exercise in estimating polynomials and their exponentials.
Observe that, for functions $f \in C^{r_0}(\mathcal G)$, the norms $\abs{ f}_{T_{z}}$ introduced in Appendix A amount to
$$
\abs{ f}_{T_{z}} = \sum_{\abs{\alpha} \le r_0} \frac{1}{\alpha!} \abs*{ \partial^\alpha_z f(z)}
$$
(see Example~\ref{Ex:T0norm} and equation~\eqref{eq:T0_norm_for_Rp_new}).

The proof of Proposition~\ref{prop:embedding} can be split into the following steps:
\medskip

\begin{step} 
For any $f \in C^{r_0}(\mathcal G)$  we have
\begin{equation} 
\label{eq:exp_bound_Kscr_1}
\abs{e^f }_{T_z} \le  e^{f(z)}   (1 + \abs{f}_{T_z})^{r_0}
\end{equation}
and
\begin{equation} 
\label{eq:exp_bound_Kscr_2}
\abs{e^f -1 }_{T_z} \le  \max(e^{f(z)},1)    (1 + \abs{f}_{T_z})^{r_0}  \, \abs{f}_{T_z}.
\end{equation}
\end{step} 
\medskip

We first note that for $f_1, f_2 \in C^{r_0}(\mathcal G)$ we have  $\abs{ f_1 f_2}_{T_z} \le \abs{ f_1}_{T_z} \, \abs{f_2}_{T_z}$. 
This follows abstractly from  Proposition~\ref{pr:product_estimate_taylor} and Example~\ref{Ex:T0norm} in the appendix. 
Alternatively one can easily verify this by a direct calculation using that the (truncated) product of Taylor polynomials 
is the Taylor polynomial of the product. 
To prove \eqref{eq:exp_bound_Kscr_1} we  set $\tilde f(y) = f(y)- f(z)$. 
Then  $ e^{f(y)} = e^{f(z)} e^{\tilde f(y)}$. 
Since $\tilde f(z) = 0$ the $r_0$-th order Taylor polynomial of $e^{\tilde f}$ at $z$ agrees with the Taylor polynomial 
of $\sum_{m=0}^{r_0}  \frac{1}{m!}  (\tilde f)^m$. 
By the triangle inequality and the  product property we get
$$
 \abs{ e^{\tilde f}}_{T_{z}} \le \sum_{r=0}^{r_0}  \frac{1}{r!}  \abs{\tilde f}_{T_{z}}^r\le (1 + \abs{ \tilde f}_{T_{z}})^{r_0} 
\le  (1 + \abs{f}_{T_{z}})^{r_0}. 
$$
This finishes the proof of  \eqref{eq:exp_bound_Kscr_1}.
Now \eqref{eq:exp_bound_Kscr_2} follows from the identity
$$ 
e^f -1 = \int_0^1   e^{\tau f}   f \, \d \tau,
$$
Jensen's inequality, and the product property.
 
 We will now use the claims of Step 1 for 
 \begin{align}
 \label{eq:Ubeta}
 f(z) = \Uscr_\beta(z, F) = \beta \overline \Uscr( \frac{z}{\sqrt \beta}, F).
 \end{align}
\medskip

\begin{step} 
For $\beta \ge 1$ and $\abs{F} \le \delta \le 1$, we have
\begin{equation}  \label{eq:pointwise_bound_U_beta}
\abs{\Uscr_\beta( \cdot, F)}_{T_z} \le  (\beta^{-1/2} +  \delta)  \widetilde  \Psi(\abs{z}) 
\end{equation}
where 
\begin{equation}
\widetilde \Psi(t)  :=  3(1 + t)^3  \,   \Psi(t+1).    
\end{equation}
\end{step} 
\medskip

Actually, we show a slightly  stronger bound,
\begin{equation} \label{eq:pointwise_bound_U_beta_optimal}
\abs{\Uscr_\beta( \cdot, F)}_{T_z}  \le \bigl[ 3 (1+\abs{z}_\infty^2) \abs{\overline F} + (1+\abs{z}_\infty)^3 \beta^{-1/2}  \bigr]  
\, \Psi\bigl( \tfrac{|z|}{\sqrt \beta} + \delta\bigr).
\end{equation}

Let us remark that in this proof $D$ refers as usual to  total derivatives and $\partial$ to partial derivatives.
Without reference to $z$ or $F$, the derivatives $\partial \Uscr$ (or $\partial_i \Uscr$) 
and $D \Uscr$ refer to the derivatives of the function $\Uscr$ evaluated at the corresponding argument, 
while $\partial_{z_i} \Uscr(\frac{z}{\sqrt{\beta}}+\overline F)$ and $D_z \Uscr(\frac{z}{\sqrt{\beta}}+\overline F)$ 
refer to the derivatives of the map $z\to  \Uscr(\frac{z}{\sqrt{\beta}}+\overline F)$.
Clearly $\partial_{z_i} \Uscr(\frac{z}{\sqrt{\beta}}+\overline F)=\frac{1}{\sqrt{\beta}}\partial_i \Uscr(\frac{z}{\sqrt{\beta}}+\overline F)$ and
$ \partial_{F_i} \Uscr(\frac{z}{\sqrt{\beta}}+\overline F)=\partial_i \Uscr(\frac{z}{\sqrt{\beta}}+\overline F)$.

For derivatives of the  third or higher  order we use that
$$
\partial^\alpha_z \Uscr_\beta(z, F)  = \beta^{1 - \frac{|\alpha|}{2} } \, \partial^\alpha \Uscr( \frac{z}{\sqrt \beta} + \overline F),  
$$
which yields
$$   
\sum_{3 \le |\alpha| \le r_0}  \frac{1}{\alpha!} |\partial^\alpha_z \Uscr_\beta(z, F)|      \le \beta^{-1/2} \, \Psi(\frac{\abs{z}}{\sqrt \beta} + \delta).
$$
To estimate the lower order terms, we use the third order Taylor expansion in $z$. This yields
\begin{align}
\Uscr_\beta(z, F) = \,  \frac12 D^2 \Uscr(\overline F)&(z,z) -\frac12 D^2 \Uscr(0)(z,z)\\ +& \beta^{-1/2} \int_0^1 \frac{(1-\tau)^2}{2}   
D^3 \Uscr(\overline F+\frac{\tau z}{\sqrt \beta} )(z,z,z) \, \d \tau ,  \notag
\end{align}
\begin{align}
D_z \Uscr_\beta(z, F)(\dot z) =  \, D^2 \Uscr(\overline F)&(z,\dot z) -D^2 \Uscr(0)(z,\dot z) \\+& \beta^{-1/2} \int_0^1  (1-\tau )  
D^3 \Uscr(\overline F+\frac{\tau z}{\sqrt \beta} )(z,z,\dot z) \, \d \tau , \notag
\end{align}
\begin{align}
D^2_z \Uscr_\beta(z, F)(\dot z_1, \dot z_2) =  \, D^2\Uscr(\overline F)&(\dot z_1,\dot z_2) - D^2 \Uscr(0)(\dot z_1,\dot z_2)\\+ &\beta^{-1/2} \int_0^1  
D^3 \Uscr(\overline F+\frac{\tau z}{\sqrt \beta})(z,\dot z_1,\dot z_2) \, \d \tau .\notag
\end{align}
Using further the bound  
\begin{equation}
\label{eq:D2F-D20}
\abs{D^2 \Uscr(\overline F)(\dot z_1, \dot z_2) - D^2 \Uscr(0)(\dot z_1, \dot z_2)} \le  \int_0^1 D^3 \Uscr(\tau \overline F)(\overline F, \dot z_1, \dot z_2) \, \d \tau 
\end{equation}
combined with
$$  
\frac{1}{3!} |  D^3 \Uscr(\tau \overline F)(\overline F, z, z) |\le
   \sum_{|\alpha|=3} \frac{1}{\alpha!}  \Big|\partial^\alpha \Uscr (\tau \overline F)\Big| \,  |z|_\infty^2 \, 
|\overline F|_\infty,
$$
as well as
\begin{align}
\frac{1}{3!}  \Big|D^3 \Uscr(\overline F+\frac{\tau z}{\sqrt \beta} )(z,z,z)\Big| \le&
\frac{1}{3!}  \sum_{i_1, i_2, i_3=1}^{\mathrm{dim} \mathcal G} |\partial_{i_1} \partial_{i_2} \partial_{i_3} \Uscr(\overline F+ \frac{\tau z}{\sqrt \beta} ) |\,  |z|_\infty^3  \notag\\
=&  
\sum_{|\alpha|=3}  \frac{1}{\alpha!} \Big|\partial^\alpha   \Uscr (\overline F+\frac{\tau z}{\sqrt \beta} )\Big| \,  |z|_\infty^3\notag
\end{align}
with   $\int_0^1 \frac{(1-\tau)^2}{2}d\tau=\frac{1}{3!}$, we deduce that
\begin{equation}  \label{eq:bound_sup_Ubeta}
|\Uscr_\beta(z, F)|  \le(3 |z|_\infty^2 \, |\overline F|_\infty + |z|_\infty^3   \beta^{-1/2} )\,  \Psi(\frac{|z|}{\sqrt \beta} + \delta).
\end{equation}
Reasoning similarly for the first and second derivatives of $\Uscr_\beta$, we obtain \eqref{eq:pointwise_bound_U_beta_optimal}. 
Since $|\overline F|_\infty \le |\overline F| = |F|$ we deduce  \eqref{eq:pointwise_bound_U_beta}.
  \medskip

\begin{step}  There exist $\delta_0 >0$ such that 
 \begin{equation}  \label{eq:U_beta_quadratic_below}
 -\Uscr_\beta(z, F) \le \frac12 \QU(z) - \frac\omega2 \, |z|^2   \text{ for all }  \beta \ge 1,  \, \, z \in \mathcal G, \, \, F \in B_{\delta_0}(0).
 \end{equation}
\end{step} 
 \medskip
 
 Using  the definitions \eqref{eq:defofUbar} and   \eqref{eq:Ubeta}, we need to show that 
$$
\beta\bigl( \Uscr(\overline{F}+\frac{z}{\sqrt\beta})-\Uscr(\overline{F})-D \Uscr(\overline{F})(\frac{z}{\sqrt\beta})  \bigr)
\ge  \frac\omega2 \, \abs{z}^2 .
$$ 
 For $F=0$ this follows directly from the assumption \eqref{eq:V3},
$$
 \beta\bigl( \Uscr(\frac{z}{\sqrt\beta})-\Uscr(0)-D \Uscr(0)(\frac{z}{\sqrt\beta})  \bigr)
\ge  \beta \omega \, \abs*{{\frac{z}{\sqrt\beta}}}^2 = \omega \, \abs{z}^2 \ge \frac\omega2 \, \abs{z}^2.
$$
 
 This can be extended to the case when $F$ is small if compared with $z/\sqrt{\beta}$.
On the other hand, if  $F$ is comparable or bigger than  $z/\sqrt{\beta}$, we can rely on the third order  Taylor expansion around 0.

Indeed,  consider first the case when $\frac{z}{\sqrt \beta}$ is large. 
Let  $\kappa := \frac{9}{\omega \omega_0} \ge 9$ and assume that $\frac{|z|}{\sqrt{\beta}}\geq \kappa \delta$ and $|\overline{F}| = |F| \leq \delta$.
The estimate \eqref{eq:V3} with the assumption $\Uscr(0) = D\Uscr(0) = 0$  implies
\begin{equation}
\label{eq:betaV}
\beta \Uscr(\overline{F}+\frac{z}{\sqrt{\beta}}) \geq                                                                                        
\omega \beta\bigl| \overline{F}+\frac{z}{\sqrt{\beta}} \bigr|^2 \ge  \omega \beta
\bigl( \frac{\abs{z}}{\sqrt{\beta}}-\abs{\overline{F}}\bigr)^2  \ge 
\omega\bigl(1-\frac1\kappa\bigr)^2 \abs{z}^2 .
\end{equation}
 For $z$ and $F$ as before and  using $D\Uscr(0)=0$, $D^2\Uscr(0)=\QU$, and the third order Taylor expansion, we bound
\begin{equation}
\beta\abs*{D\Uscr(\overline{F})(\frac{z}{\sqrt{\beta}})} \leq  
 \beta \abs*{D^2\Uscr(0)(\overline{F},\frac{z}{\sqrt{\beta}})}  + 
 \sup_{|\xi| \le |\overline{F}|}\frac{\beta}2  \abs*{  D^3\Uscr(\xi)(\overline{F},\overline{F},\frac{z}{\sqrt{\beta}}) }                                                                     
\end{equation}
Evaluating the first term as
\begin{align}
\beta\abs*{ D^2\Uscr(0)(\overline{F},\frac{z}{\sqrt{\beta}})}  \le& 
\beta \abs*{ D^2\Uscr(0)(\overline{F},\overline{F})}^{1/2}\abs*{ D^2\Uscr(0)(\frac{z}{\sqrt{\beta}},\frac{z}{\sqrt{\beta}})}^{1/2}\notag\\
\le&
\frac{\beta}{\omega_0} |F| \frac{|z|}{\sqrt{\beta}}  \le 
\frac{\beta}{\kappa \omega_0}  \bigl(\frac{|z|}{\sqrt{\beta}}\bigr)^2= \frac{|z|^2}{\kappa \omega_0}\notag
\end{align}
and the second term, assuming that $3 \Psi(1) \delta\le 1$,  as
$$
\sup_{|\xi| \le |\overline{F}|}\frac{\beta}2  \abs*{  D^3\Uscr(\xi)(\overline{F},\overline{F},\frac{z}{\sqrt{\beta}}) }\leq 
3 \beta \Psi(1) \delta \frac{1}{\kappa}  \abs*{\frac{z}{\sqrt \beta}}^2  \le \,  
 \frac{|z|^2}{\kappa} ,     
$$
we get the bound
\begin{equation}
\label{eq:betaV1}
\beta\abs*{D\Uscr(\overline{F})(\frac{z}{\sqrt{\beta}})} \leq   \bigl( 1+\frac1{\omega_0}  \bigr) \frac{\abs{z}^2}{\kappa}.
\end{equation}

Similarly,  assuming again that $\delta \le  \frac{1}{ 3 \Psi(1)}$,  we get
\begin{align}
\label{eq:betaV0}
\beta|\Uscr(\overline{F})|\leq &   
\beta \abs*{D^2\Uscr(0)(\overline{F},\overline{F})}  +  
\sup_{|\xi| \le |\overline{F}|}\frac{\beta}2  \abs*{  D^3\Uscr(\xi)(\overline{F},\overline{F},\overline{F}) }  \\  \le&
\beta\Bigl(\frac{\delta^2}{\omega_0} + 3\Psi(1)\delta^3\Bigr)\le (1+\frac1{\omega_0})\frac{|z|^2}{\kappa^2}.    \notag                                                             
\end{align}

Combining  the bounds \eqref{eq:betaV}, \eqref{eq:betaV1} and \eqref{eq:betaV0} imply \eqref{eq:U_beta_quadratic_below} once
$$
\bigl( 1+\frac1{\omega_0}  \bigr) \frac1{\kappa}(1+\frac1{\kappa})\le \omega\bigl(\bigl(1-\frac1\kappa\bigr)^2-\frac12\bigr).
$$
For this to be true, it suffices when	
$$
2(1+\frac1{\kappa})\le \kappa\omega_0\omega\bigl(\bigl(1-\frac1\kappa\bigr)^2-\frac12\bigr).
$$	
Indeed, with the choice $\kappa=\frac9{\omega\omega_0}\ge 9$, the left hand side is bounded from above by $2(1+1/9)=20/9$
while the right hand side from below by $9 ((8/9)^2-1/2)=  47/18>20/9$.

It remains to consider the case $|z|/ \sqrt \beta < \kappa \delta$.  
We choose 
 \begin{equation} \label{eq:define_delta_0}
  \delta_0 := 
\min\Big(\frac{1}{1+\kappa}, \frac{3\omega_0}{16 \kappa \Psi(1)}\Big) .
 \end{equation}
 With $|z|/ \sqrt \beta < \kappa \delta$ and $ \delta \le \delta_0$, we get  $\frac{|z|}{\sqrt \beta} + \delta \le (\kappa + 1)  \delta \le 1$ . 
 Hence,  from   \eqref{eq:bound_sup_Ubeta} with $|z|_\infty \le |z|$, $\kappa\ge 9$,   and assuming $\omega\le \frac{\omega_0}2$, we get
\begin{align}
 |\Uscr_\beta(z, F)| \le (3 + \kappa) \delta  \Psi(1)  |z|^2   \le&  \frac43 \kappa   \delta  \Psi(1)  |z|^2  \le   \frac14 \omega_0 |z|^2
\notag\\ \le & \frac12 (\omega_0 -   \omega) |z|^2 \le \frac12 \QU(z) - \frac12 \omega |z|^2.\notag
 \end{align} 
 Thus \eqref{eq:U_beta_quadratic_below} holds for this choice of $\delta_0$ and $|z|/ \sqrt \beta \le \kappa \delta$. 
 Finally for   $\delta \le \delta_0$ also the condition $\delta \le  \frac{1}{ 3 \Psi(1)}$ is satisfied and thus
 \eqref{eq:U_beta_quadratic_below} holds for all $z$ and all $F \in B_{\delta_0}(0)$. 
 \medskip

\begin{step}  Let  $0<\delta<\delta_0$ with $\delta_0 \le 1$  given by \eqref{eq:define_delta_0}. 
Then, with $\tilde\zeta=\frac{\omega \omega_0}{2}$, we have
\begin{equation} \label{eq:global_bound_U_beta}
\|  e^{- \Uscr_\beta(\cdot, F)}-1 \|_{\tilde\zeta, \Qscr_{\Uscr}}  \le C_1( \delta+\beta^{-1/2} ) \text{ for all } \beta \ge 1, \, \, F \in B_{\delta}(0).
\end{equation}
\end{step} 
 \medskip
 
Combining \eqref{eq:exp_bound_Kscr_2},  \eqref{eq:pointwise_bound_U_beta} and \eqref{eq:U_beta_quadratic_below}
and using that $\beta^{-1/2} + \delta \le 2$ we get

\begin{equation}   \label{eq:pointwise_exp_Ubeta_minus_1}
| e^{- \Uscr_\beta(\cdot, F)} -1 |_{T_z} 
 \le  e^{  \frac12 \QU(z) - \frac\omega2  |z|^2 }        (\beta_0^{-1/2} + \delta_0)\,   \widetilde  \Psi(|z|) \,  (1 +  2 \widetilde  \Psi(|z|))^{r_0}.
 \end{equation}
Given that $\frac12  \frac{\omega \omega_0}{2} \QU(z) \le \frac14 \omega |z|^2$ we have
\begin{equation} \label{eq:estimates_weights_norm_zeta}
e^{-\frac12 (1- \frac{\omega \omega_0}{2}) \QU(z)} \le e^{-\frac12 \QU(z)} \, e^{\frac14 \omega |z|^2}.
\end{equation}
 Thus  multiplying   \eqref {eq:pointwise_exp_Ubeta_minus_1} by the weight  $e^{-\frac12 (1- \frac{\omega \omega_0}{2}) \QU(z)}$  
 and setting  
\begin{equation} \label{eq:define_C1_Kscr}
C_1 = \sup_{t \ge 0}  e^{-   \frac{\omega}{4}  t^2 } \,  \widetilde \Psi(t) ( 1 +2  \widetilde \Psi(t))^{r_0}  < \infty  \quad
\hbox{with} \quad \widetilde \Psi(t)  =  3 (1 + t)^3  \,   \Psi(t+1), 
\end{equation}
we get  \eqref{eq:global_bound_U_beta}, thus completing Step 4.
\medskip

The estimates \eqref{eq:pointwise_exp_Ubeta_minus_1} and \eqref{eq:estimates_weights_norm_zeta} 
imply that the assumptions of Lemma \ref{le:criterion_diff_zeta} below hold. 
This shows that $F\to \Kcal_F$ is continuous. 
Together with the result of the previous step this ends the proof for $r_1=0$.

It remains to show the bound \eqref{eq;Kscr_bound_derivatives} for the derivatives with respect to $F$.
Considering  first  the case $|\gamma|=1$, we need to estimate
\begin{align}
 \frac{\partial}{\partial F_i} e^{- \Uscr_\beta(z, F)} = - e^{- \Uscr_\beta(z, F)} \, \frac{\partial}{\partial F_i} \Uscr_\beta(z, F).
 \end{align}
By the chain and product rules, the derivatives  $\partial^\alpha_z$ of this expression exist for $|\alpha| \le r_0$. 
Moreover by \eqref{eq:exp_bound_Kscr_1}
and the product property of the $| \cdot |_{T_z}$ norm,
\begin{equation} \label{eq:pointwise_bound_first_exp_Ubeta}
 \Big| \frac{\partial}{\partial F_i} e^{- \Uscr_\beta(\cdot , F)}  \Big|_{T_z} 
 \le e^{- \Uscr_\beta(z, F)} ( 1+ | \Uscr_\beta(\cdot, F)|_{T_z})^{r_0}
\, \Big|   \frac{\partial}{\partial F_i} \Uscr_\beta(z, F) \Big|_{T_z}.
\end{equation}
 Then it remains to bound 
$\Big|   \frac{\partial}{\partial F_i} \Uscr_\beta(z, F) \Big|_{T_z}$.

For the higher derivatives with respect to $F$ the combinatorics becomes more complicated. Therefore,
it is actually useful to introduce the  norm $\abs{\cdot}_{T_{z,F}}$ for Taylor polynomials in two variables 
(see Appendix \ref{se:polynomials_several_variables}),
\begin{align}
\label{eq:defT_zFnorm}
| f|_{T_{z,F}} := \sum_{|\alpha| \le r_0} \sum_{|\gamma| \le r_1}
\frac{1}{\alpha!} \frac{1}{\gamma!}  \big| \partial^\alpha_z \partial^\gamma_F  f(z,F) \big|.
\end{align}
Note that, in particular, the expression $\Big|   \frac{\partial}{\partial F_i} \Uscr_\beta(z, F) \Big|_{T_z}$ is controlled by this norm.
As a preparation we show an estimate similar to the result of Step 2 for the $|\cdot |_{T_{z,F}}$ norm of $\Uscr_\beta(z, F)$.
\medskip

\begin{step} For $\beta\geq 1$ and $|F|\leq 1$ we have
\begin{align}\label{eq:prop2.6Step5}
\begin{split}
 |\Uscr_{\beta}(z,F)|_{T_{z,F}} 
  &\leq 2^{r_0+r_1+1}(1+|z|)^3 \Psi(|z|+1).
 \end{split}
\end{align} 
\end{step} 
\medskip

To estimate the terms in the definition of $|\cdot|_{T_{z,F}}$ norm, we distinguish three cases depending on the order of derivatives.

For $|\gamma|=0$ we have shown in Step 2 that for $\beta\geq 1$ and $|F|\leq 1$,
\begin{align}\label{eq:T_zF1}
 \sum_{|\alpha| \le r_0} \frac{1}{\alpha!}   \big| \partial^\alpha_z   \Uscr_\beta(z,F) \big|
 =|\Uscr_{\beta}(\cdot,F)|_{T_z}\leq 6(1+|z|)^3 \Psi(1+|z|). 
\end{align}

For $|\gamma|\geq 1$ and $|\alpha| \geq 2$ we use that
$ | \partial^\alpha_z \partial^\gamma_{F} \Uscr_\beta(z, F) | = \beta^{1-|\alpha|/2}\big| \partial^{\alpha + \gamma} \Uscr(\frac{z}{\sqrt \beta} + \overline F)\big|$.
The  combinatorial identity 
\begin{align}
 \sum_{\alpha +\gamma = \kappa} \frac{1}{\alpha!} \frac{1}{\gamma!} = \frac{1}{\kappa!}  \sum_{\alpha +\gamma = \kappa} \frac{\kappa!}{\alpha! \gamma!} = \frac{1}{\kappa!} 2^{|\kappa|}
 \end{align}
then implies
\begin{align}\label{eq:T_zF2}
\begin{split}
 \sum_{\substack{2 \le |\alpha| \le r_0, \\
  1\leq  |\gamma| \le r_1}}  \frac{1}{\alpha!} \frac{1}{\gamma!} |  \partial^\alpha_z \partial^\gamma_F \Uscr_\beta(z, F) |
& \le 2^{r_0+r_1}  \sum_{3\leq |\kappa|\le r_0 + r_1} \frac{1}{\kappa!}  \Big| \partial^{\kappa} \Uscr(\frac{z}{\sqrt \beta} + \overline F)\Big|
\\ &
\le 2^{r_0+r_1} \Psi(|z| + 1).
\end{split}
\end{align}

For the terms with $\alpha = 0$ 
and $|\gamma|\geq 1$, one differentiates  with respect to $F$ the second order Taylor expansion of $\Uscr_\beta$ in the variable $z$,
\begin{align}\label{eq:secondOrderTaylorUbeta}
 \Uscr_\beta(z, F) = \int_0^1 (1-\tau)  D^2 \Uscr(\tau \frac{z}{\sqrt \beta}+\overline F)(z,z) \, d\tau - \frac{ D^2 \Uscr(0)(z,z)}{2}.
\end{align} 
 Using the identity
\begin{align}
 \sum_{|\gamma| = k} \frac{1}{\gamma!} |\partial^\gamma f(F)| = \frac{1}{k!}   \sum_{i_1, \ldots i_k=1}^{\mathrm{dim} \mathcal G} |  \partial^{i_1}  \ldots \partial^{i_k} f(F)|
\end{align}
valid for any $f\in C^k(\mathcal{G})$, we get
\begin{align}
\sum_{j_1,\dots, j_\ell}\sum_{\alpha:\abs{\alpha}=k}\frac{1}{\alpha!} |\partial_{j_1}\dots \partial_{j_\ell} \partial^\alpha f(z)|=
\frac{(k+\ell)!}{k!}\sum_{\overline\alpha:\abs{\overline\alpha}=k+\ell}\frac{1}{\overline\alpha!} |\partial^{\overline\alpha} f(z)|.
\end{align}
Hence \eqref{eq:secondOrderTaylorUbeta} implies
\begin{align}\label{eq:T_zF3}
 \sum_{1\leq |\gamma| \le r_1} \frac{1}{\gamma!} | \partial^\gamma_F  \Uscr_\beta(z, F) | \le  
 \frac{(r_1+2)!}{2 r_1!} 
 |z|_\infty^2  \Psi(|z|+1)  \le \frac{(r_1+2)^2}{2} |z|^2 \Psi(|z|+1)
\end{align}
Similarly, the Taylor expansion for the derivative, 
\begin{align}
D_z\Uscr_\beta(z,F)(\dot z) = \int_0^1  D^2 \Uscr(\tau \frac{z}{\sqrt \beta}+\overline F)(z, \dot z) \, \d \tau -   D^2 \Uscr(0)(z, \dot z),
\end{align}
implies that 
\begin{align}\label{eq:T_zF4}
 \sum_{|\alpha|=1}  \sum_{1 \leq |\gamma| \le r_1} \frac{1}{\gamma!} | \partial^\alpha_z \partial^\gamma_F  \Uscr_\beta(z, F) | \le 
 \frac{(r_1+2)!}{ r_1!} 
  |z|_\infty  \Psi(|z|+1) \le (r_1+2)^2   |z| \Psi(|z|+1).
\end{align}
Thus, combining \eqref{eq:T_zF1}, \eqref{eq:T_zF2}, \eqref{eq:T_zF3}, and \eqref{eq:T_zF4}
we obtain \eqref{eq:prop2.6Step5} since $(r_1+2)^2\leq 8\cdot 2^{r_1} \leq 2^{r_0+r_1}$.
\medskip

\begin{step} Derivatives with respect to $F$.
\smallskip

\noindent
Let $\delta_0$ and $\tilde\zeta$ be like in Step 4.
The map $B_{\delta_0}(0)\ni F \mapsto e^{-\Uscr_\beta(\cdot, F)}\in \boldsymbol E$ is $r_1$ times continuously differentiable
and
\begin{equation}\label{eq;Kscr_bound_derivatives_repeated}
\sum_{|\gamma| \le r_1} \frac{1}{\gamma!} \,   \norm{ {\partial}^\gamma_F   \Kcal_F}_{\tilde \zeta,\Qscr_\Uscr}  \le \Theta.
 \end{equation}
with $\Theta$ depending on  $\Psi$, $\omega$, $r_0$,  $r_1$,  and $\Ran$.
\end{step} 
\medskip

By the chain and product rule it follows that the the derivatives $\partial^\alpha_z \partial^\gamma_F e^{\Uscr_\beta(z, F)}$ exists 
for all $|\alpha| \le r_0$ and $|\gamma| \le r_1$ and are continuous in $(z, F)$. 
To get a bound for $| \partial^\gamma_F e^{\Uscr_\beta(z, F)}|_{T_z}$ and to prove higher differentiability
of $F \mapsto e^{\Uscr_\beta(\cdot, F)}$, we proceed similarly to Step 4.
As shown in  Appendix \ref{se:polynomials_several_variables} the product property extends to the norm $|\cdot|_{T_{z,F}}$.

From the product property one deduces  as in Step 1 that
\begin{align}
 \sum_{|\gamma| \le r_1}  \frac{1}{\gamma!}  \big|  \partial_F^{\gamma}  e^{f(\cdot, F)} |_{T_z}  = 
|e^f|_{T_{z,F}} \le e^{f(z,F)} \, (1 + |f|_{T_{z,F}})^{r_0+r_1}.
\end{align}
For $f =- \Uscr_\beta$  we find with the results of Step 3 and Step 5 that
\begin{align}
\begin{split}
\label{eq:finalDerivativeEstimate}
\sum_{|\gamma| \le r_1}  \frac{1}{\gamma!}  \big|  \partial_F^{\gamma}  &e^{-\Uscr_\beta(\cdot, F)} |_{T_z}  
\leq e^{\frac12 \QU(z)-\frac{\omega}{2} |z|^2}\left(1+2^{r_0+r_1+1}(1+|z|)^3\Psi(|z|+1)\right)^{r_0+r_1} 
\\
&\leq
e^{\frac12 (1-\tilde \zeta)\QU(z)} e^{-\frac14 \omega|z|^2}\left(1+2^{r_0+r_1+1}(1+|z|)^3\Psi(|z|+1)\right)^{r_0+r_1},
\end{split}
\end{align}
where we used \eqref{eq:estimates_weights_norm_zeta} and the definition of $\tilde \zeta$ in the second step.
Invoking also  Lemma~\ref{le:criterion_diff_zeta} below,  it follows by induction in $|\gamma|$ that the map
$ F \mapsto e^{-\Uscr_\beta(\cdot, F)}$ is $r_1$ times continuously differentiable as a map from $B_{\delta_0}(0)$ to $\boldsymbol E$. 
Moreover \eqref{eq:finalDerivativeEstimate} implies the estimate 
\eqref{eq;Kscr_bound_derivatives_repeated}
for the higher derivatives with 
\begin{equation}  \label{eq:define_Theta_Kscr}
\Theta =  (|\mathcal G |+1)^{r_1}  \sup_z       e^{- \frac14 \omega  |z|^2} \left(1+2^{r_0+r_1+1}(1+|z|)^3\Psi(|z|+1)\right)^{r_0+r_1},
\end{equation}
where $(|\mathcal G |+1)^{r_1} \ge \abs{\{\gamma: |\gamma|\leq r_1\} }$ counts the number
of terms in the sum $\sum_{|\gamma|\leq r_1}$ which arises because we interchange 
the sum with the supremum in the definition of the $\lVert\cdot\rVert_{\tilde{\zeta},\QU}$ norm.
\end{proof}

\begin{lemma}   \label{le:criterion_diff_zeta}
Let $\Ocal$ be an open set in a finite dimensional space  and $h:\Ocal \times \Gcal\to \R$ a map satisfying  two conditions:
\begin{enumerate}[label=(\roman*), leftmargin=0.9cm]
\item
\label{it:criterion_diff_zeta1} 
For each $(F,z) \in \Ocal \times \mathcal G$ and each $\alpha$ with $|\alpha| \le r_0$ the partial derivatives
$ \partial_z^\alpha  h(F,z)$
exist and are continuous in $\Ocal \times \mathcal G$,
\item
\label{it:criterion_diff_zeta2} 
$\lim_{|z| \to \infty} e^{-\frac12 (1-\zeta) \QU(z)} \sup_{F \in \Ocal} |  h(F,\cdot)|_{T_z} = 0$.
\end{enumerate}
Define the function $g : \Ocal \to \boldsymbol E_\zeta$ by taking $(g(F))(z)=h(F,z)$. 
Then $g \in C^0(\Ocal,\boldsymbol E_\zeta)$. Moreover, if the conditions \ref{it:criterion_diff_zeta1} 
and \ref{it:criterion_diff_zeta2} hold for all partial derivatives
$h_i(F,z)=\frac{\partial }{\partial F_i}h(f,z)$ then $g\in C^1(\Ocal,\boldsymbol E_\zeta)$.
\end{lemma}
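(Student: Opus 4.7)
The plan is a standard two-region splitting argument: the weight $e^{-(1-\zeta)\Qscr_\Uscr(z)/2}$ in the $\boldsymbol{E}_\zeta$ norm together with hypothesis \ref{it:criterion_diff_zeta2} takes care of large $\abs{z}$, while on a large ball the continuity and compactness supplied by \ref{it:criterion_diff_zeta1} takes over. The $C^1$ statement is then a standard consequence of the $C^0$ statement applied to the partial derivatives $h_i$.

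First I would prove the $C^0$ assertion. Fix $F_0\in\Ocal$ and $\varepsilon>0$. By \ref{it:criterion_diff_zeta2} choose $R$ so large that
\begin{equation*}
\sup_{F\in\Ocal}\, e^{-\frac12(1-\zeta)\Qscr_\Uscr(z)}\,\abs{h(F,\cdot)}_{T_z}<\varepsilon/4\quad\text{whenever }\abs{z}\ge R.
\end{equation*}
For any $F\in\Ocal$ one then has $\sup_{\abs{z}\ge R} e^{-\frac12(1-\zeta)\Qscr_\Uscr(z)}\,\abs{h(F,\cdot)-h(F_0,\cdot)}_{T_z}\le\varepsilon/2$. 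Next pick a closed ball $\overline{B_r(F_0)}\subset\Ocal$; by \ref{it:criterion_diff_zeta1} every $\partial_z^\alpha h$ with $\abs{\alpha}\le r_0$ is continuous on the compact set $\overline{B_r(F_0)}\times\overline{B_R(0)}$, hence uniformly continuous there. Thus there is $\eta\in(0,r]$ such that $\abs{F-F_0}<\eta$ implies $\abs{\partial_z^\alpha h(F,z)-\partial_z^\alpha h(F_0,z)}<\varepsilon/(2 C_R)$ uniformly in $\abs{z}\le R$, where $C_R:=\#\{\alpha:\abs{\alpha}\le r_0\}\cdot\sup_{\abs{z}\le R}e^{-\frac12(1-\zeta)\Qscr_\Uscr(z)}$. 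Summing over $\alpha$ and taking the supremum over $\abs{z}\le R$ yields a contribution $\le\varepsilon/2$, and combining both regions gives $\norm{g(F)-g(F_0)}_{\zeta,\Qscr_\Uscr}<\varepsilon$.

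For the $C^1$ statement, let $e_i$ be a basis vector and $F_0\in\Ocal$. By hypothesis \ref{it:criterion_diff_zeta1} applied to $h$, pointwise for each $z$ the fundamental theorem of calculus gives
\begin{equation*}
\frac{h(F_0+te_i,z)-h(F_0,z)}{t}-h_i(F_0,z)=\int_0^1\bigl[h_i(F_0+ste_i,z)-h_i(F_0,z)\bigr]\,\d s,
\end{equation*}
and the same identity holds after applying $\partial_z^\alpha$ (valid because $\partial_z^\alpha h_i=\partial_z^\alpha\partial_{F_i}h=\partial_{F_i}\partial_z^\alpha h$, using continuity of these mixed partials from \ref{it:criterion_diff_zeta1} for both $h$ and $h_i$). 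Taking $|\cdot|_{T_z}$, multiplying by the weight and passing to the $\sup_z$, I obtain
\begin{equation*}
\Bigl\|\tfrac{g(F_0+te_i)-g(F_0)}{t}-h_i(F_0,\cdot)\Bigr\|_{\zeta,\Qscr_\Uscr}\le\sup_{s\in[0,1]}\norm{h_i(F_0+ste_i,\cdot)-h_i(F_0,\cdot)}_{\zeta,\Qscr_\Uscr}.
\end{equation*}
Applying the $C^0$ result of the first part to $h_i$ (which satisfies \ref{it:criterion_diff_zeta1}, \ref{it:criterion_diff_zeta2} by assumption) shows the right-hand side tends to $0$ as $t\to 0$. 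Thus the partial derivatives $\partial_{F_i}g(F_0)$ exist in $\boldsymbol{E}_\zeta$ and equal $h_i(F_0,\cdot)$. Another application of the $C^0$ part to each $h_i$ shows $F\mapsto h_i(F,\cdot)\in\boldsymbol{E}_\zeta$ is continuous, so all partials of $g$ are continuous and $g\in C^1(\Ocal,\boldsymbol{E}_\zeta)$.

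The only real obstacle is the splitting in the first paragraph; once one sees that the Gaussian weight is uniform in $F$ by \ref{it:criterion_diff_zeta2} and that $F$-continuity of the derivatives $\partial_z^\alpha h$ promotes to uniform continuity on compact sets in $z$, both parts are routine. The derivative step adds nothing new beyond a Fubini/interchange justification for $\partial_z^\alpha\partial_{F_i}$, which is immediate from the joint continuity in \ref{it:criterion_diff_zeta1}.
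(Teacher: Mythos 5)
Your proof is correct and follows essentially the same route as the paper: the $C^0$ part is the same large-$\abs{z}$/compact-set splitting (hypothesis \ref{it:criterion_diff_zeta2} for $\abs{z}\ge R$, uniform continuity of the $z$-derivatives on compacts for $\abs{z}\le R$), and the $C^1$ part is the same fundamental-theorem-of-calculus identity estimated in the $\abs{\cdot}_{T_z}$ norm, combined with the $C^0$ statement applied to the $h_i$ and the finite dimensionality of $\Ocal$. Your extra remark justifying the interchange of $\partial_z^\alpha$ with the $F_i$-integration is a point the paper passes over silently (via ``Jensen's inequality for $\abs{\cdot}_{T_z}$''), so it is a welcome, not a divergent, addition.
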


\begin{proof}
To prove that $F\to h(F,\cdot)$ is a continous map from $\Ocal$ to $\boldsymbol E_\zeta$ note that $h$ is uniformly continuous on compact subsets of $\Ocal \times \mathcal G$. Let $\delta > 0$. 
By assumption there exists an $R$ such that $\sup_{F \in \Ocal}  e^{-\frac12 (1-\zeta) \Qscr(z)} |h(F,\cdot)|_{T_z} \le \delta$ if $|z| > R$. 
Let $F_k \to F$. 
Then 
\begin{align*}
 \limsup_{k \to \infty} \| h(F_k, \cdot) -& h(F, \cdot) \|_\zeta
 = \, 
\limsup_{k \to \infty} \,  \sup_{z \in \mathcal G}  e^{\frac12(1-\zeta) \QU(z)}  | h(F_k, \cdot) - h(F, \cdot) |_{T_z}  \\
 \le  & \, 2 \delta + \limsup_{k \to \infty}  \sup_{|z| \le R}  e^{-\frac12 (1-\zeta) \QU(z)}  | h(F_k, \cdot) - h_i(F, \cdot) |_{T_z}
 = 2 \delta
\end{align*}
by uniform continuity on compact sets. Since $\delta > 0$ was arbitrary, this  establishes that $g\in C^0(\Ocal,\boldsymbol E_\zeta)$.

Assume now that all partial derivatives $h_i=\frac{\partial}{\partial F_i}h$ satisfy \ref{it:criterion_diff_zeta1} and \ref{it:criterion_diff_zeta2}.
The same reasoning as before implies that $F \mapsto h_i(F,\cdot)$ is a continuous map from $\Ocal$ to $\boldsymbol E_\zeta$.
Then we use that
\begin{align}
 h( F + \eta e_i, z) - h(F, z) - h_i(F,z)\eta = \int_0^1 [h_i( F + t\eta,z) - h_i(F,z)]  \, \eta \, \d t, 
\end{align}
divide by $\eta$, use Jensen's inequality for $| \cdot |_{T_z}$ and take the limit $\eta \to 0$ to  show that the map 
$g:\Ocal \to \boldsymbol E_\zeta$  has partial derivatives  given by $h_i( F, \cdot)$. 
Moreover these partial derivatives  are continuous. 
Since $\Ocal$ is in a finite dimensional space this implies the assertion that $g\in C^1(\Ocal, \boldsymbol E_\zeta)$. 
\end{proof}

\chapter{Discrete Nonlinear Elasticity} \label{sec:elasticity}

In this chapter we show how the results for discrete elasticity can be deduced from the results for non-degenerate gradient models. 
The main difficulty is that in discrete elasticity  the local interaction energy  $U$ is invariant under the action of
the group $\mathrm{SO}(d)$ of proper rotations. 
Thus the second derivative of $U$ at the identity (which is the natural minimum energy state) is degenerate. 
We will overcome this degeneracy by adding a so-called discrete null Lagrangian $\nullL$ to $U$.

This will allow us to break the degeneracy of $U$, see Lemma~\ref{le:embedding_de}, in particular inequalities \eqref{eq:V1bis2} and \eqref{eq:V32}.
 At the same time,  for fixed linear boundary conditions $F$,  the addition of $\nullL$ will not change the Gibbs measure 
 and will  effect the free energy only in a trivial way, see  Lemma~\ref{le:null_lagrangian_periodic} as well as  equations   
  \eqref{eq:equivalence_myW} and \eqref{eq:equivalence_laplace}.
Thus the desired assertions for discrete elasticity follow easily from  Theorems~\ref{th:bound_Wcal} and~\ref{th:scalinglimit_concrete}, 
see the end  of Section~\ref{sec:reformulation_de} for the detailed argument. 

To set the stage and to motivate the concept of  discrete null Lagrangians,  
we begin this chapter by a very brief discussion of continuous nonlinear elasticity and continuous null Lagrangians. 
Readers familiar with these topics may directly advance  to Section~\ref{sec:discrete_null_lagrangians}.

\section{Continuous nonlinear elasticity and continuous null Lagrangians}  \label{sec:continuous_elasticity}

In this section we discuss three points:  

\noindent
(i) the general setting of continuous nonlinear elasticity and the fact that the
stored energy density cannot be convex due to the $\mathrm{SO}(d)$ symmetry; 

\noindent
(ii) the definition and characterisation of null Lagrangians in this setting;

\noindent
(iii) a simple  argument that the stored energy density can be made uniformly convex in the neighbourhood of the identity
by adding a suitable null Lagrangian. 

These results are not really needed for the rest of this work, 
but they provide background and motivation for our constructions in discrete elasticity.
Regarding nonlinear elasticity, we just give a bare bones description, see, e.g.,  \cite{Bal77, Cia88, Bal02}
for further information. 

Let $d \ge 2$. A $d$-dimensional elastic body is described by an open and bounded set $\Omega \subset \R^d$, called the \emph{reference configuration} of
the body. A map $\varphi: \Omega \to \R^d$ is called a deformation of the body and the image $\varphi(\Omega)$ is called the \emph{deformed configuration}
or current configuration. To such a map $\varphi$ one assigns the \emph{elastic energy}
\begin{equation} \label{eq:elastic_energy_1}
I(\varphi) = \int_\Omega W(x, \nabla \varphi(x)) \, dx. 
\end{equation}
One usually chooses the reference configuration $\Omega$ so that the identity map is a global minimiser of $I$. If, in addition, the body is homogeneous, then
the expression for the elastic energy reduces to 
\begin{equation} \label{eq:elastic_energy_2}
I(\varphi) = \int_\Omega W( \nabla \varphi(x)) \, dx. 
\end{equation}
In what follows, we will work in this setting.
In particular,  the function $W: \R^{d \times d} \to \R$ has a global minimum at the identity matrix $\1$. 
This function is referred to as the \emph{stored energy density}
\footnote{One can also consider functions $W: \R^{d \times d} \to \R \cup \{\infty\}$. This allows one to incorporate
certain constraints, e.g., the constraint $\det \nabla \varphi > 0$ which states that deformations  should be infinitesimally orientation preserving.}
We may assume without loss of generality that $W(\1) = 0$ and hence $W(F) \ge 0$ for all $F \in \R^{d \times d}$.

The equilibrium states of an elastic body correspond to minimisers or, more generally, stationary points of the elastic energy $I$,
subject to suitable boundary conditions. Here we focus on Dirichlet boundary conditions
\begin{equation} \label{eq:dirichlet_bc}
u = u_0 \quad \text{on $\partial \Omega$,}
\end{equation}
where $u_0 : \partial \Omega \to \R^d$ is a given map.

Many of the analytical subtleties of nonlinear elasticity arise from the fact that the elastic energy is invariant under rigid motions of the deformed configuration. 
This is equivalent to the invariance property
\begin{equation} \label{eq:SOd_invariance_continuous}
W(QF) = W(F) \quad \text{for all $Q \in \mathrm{SO}(d)$ and all $F \in \R^{d \times d}$.}
\end{equation}
It follows that  $W= 0$ on $\mathrm{SO}(d)$. For a realistic elastic body one also has $W(F) > 0$ if $F \notin \mathrm{SO}(d)$
since every deformation which is not a rigid motion costs energy. 
In particular,  $W$ \emph{cannot be convex} since then $W$ would vanish on the convex hull of $\mathrm{SO}(d)$
 which is strictly larger than $\mathrm{SO}(d)$. The failure of convexity rules out the use of soft functional analytic methods to study
 question such as the existence of minimisers of $I$ subject to the Dirichlet boundary condition  \eqref{eq:dirichlet_bc}.
 Indeed the existence of minimisers was only established in the pioneering work of Ball \cite{Bal77}.
 
 The failure of uniform convexity can already been seen on the infinitesimal level. If $W$ is a $C^2$ function then the invariance property
 \eqref{eq:SOd_invariance_continuous} implies that 
 \begin{equation} \label{eq:degeneracy_D2W_continuous}
 D^2 W(\1)(G, G) = 0 \quad \text{for all skew-symmetric $d \times d$ matrices $G$}
 \end{equation}
since the tangent space of $\mathrm{SO}(d)$ at the identity consists of the  skew-symmetric $d \times d$ matrices.
\medskip

This is the same difficulty as that one we face in the discrete setting. We will  show  below that we can modify $W$ in such a way that it becomes
uniformly convex near $\1$ without changing the minimisers of the Dirichlet problem. 
The key concept is that of null Lagrangians. Informally, a function $f: \R^{d \times d}  \to \R$ is a null Lagrangian
if $\int_\Omega f(\nabla \varphi) \, dx$ depends only on the boundary values of $\varphi$. 
A precise definition, for a general case of $\R^m$-valued vector fields, is as follows.

\begin{definition} \label{de:null_lagrangian_continuous} 
Let $\Omega \subset \R^d$ be bounded and open (and non-empty). A continuous map $f: \R^{m \times d} \to \R$ is a \emph{null Lagrangian} if and only if
\begin{equation}  \label{eq:continuous_NL1}
\int_\Omega f(\nabla \varphi) \, dx = \int_\Omega f(\nabla \psi) \, dx 
\end{equation}
whenever
\begin{equation} \label{eq:continuous_NL2}
\varphi, \psi \in C^2(\overline \Omega, \R^m) \quad \text{and $\psi-\varphi$ has compact support in $\Omega$.}
\end{equation}
\end{definition}
Here $C^2(\overline \Omega, \R^m)$ denotes the space of $C^2$  functions such that  their derivatives up to order $2$ 
have continuous extensions to the closure $\overline \Omega$. 

\begin{remark} \label{re:continous_NL}\hfill
\begin{enumerate}[1.,leftmargin=0.3cm]
\item From the definition, the notion of a null Lagrangian may depend on the choice of the open, non-empty set $\Omega$. 
Proposition~\ref{pr:null_lagrangian_continuous} below shows that it does not.
\item Also, by Proposition~\ref{pr:null_lagrangian_continuous} \gray{below}, every null Lagrangian is a polynomial of degree $\le \min(m,d)$. 
Thus  the property \eqref{eq:continuous_NL1} extends to functions $\varphi, \psi$ in the Sobolev space $W^{1,p}(\Omega; \R^m)$ 
as long as $\varphi-\psi \in W^{1,p}_0(\Omega; \R^m)$ and $p \ge \min(m,d)$. 
In particular, we see that for Dirichlet boundary conditions, the minimisers of the functional $I$ do not change if we replace $W$ by $W+f$. 
\end{enumerate}
\end{remark}

The following result shows that null Lagrangians exist  and gives  an explicit characterisation.
 It is a special case of results for more general integrands, see \cite{BCO81} and the references therein.

\begin{proposition}  \label{pr:null_lagrangian_continuous} 
Let $f: \R^{m \times d} \to \R$ be continuous and let $\Omega \subset \R^d$ be bounded, open and non-empty.
Then the  following properties are equivalent.
\begin{enumerate}[label=(\roman*),leftmargin=0.9cm]
\item  \label{it:NL1}  The function $f$ is a null Lagrangian;
\item   \label{it:NL2} For every $F \in \R^{m \times d}$ and   $\varphi \in C_c^\infty(\Omega; \R^m)$ one has
\begin{equation}   \label{eq:null_lagrangian_affine}
\int_\Omega \bigl(f(F +\nabla  \varphi)  - f(F)\bigr) \, dx = 0;
\end{equation}
\item    \label{it:NL3} The function $f$ is rank-one affine, i.e.,  for every $F \in \R^{m \times d}$,  $a \in \R^m$ and $\xi \in \R^d$ the map 
$t \mapsto f(F + t a \otimes \xi)$ is an affine function on $\R$;
\item     \label{it:NL4}    The expression $f(F) $ is the sum of a constant term and  a linear combination of minors (subdeterminants) of $F$;
\item  \label{it:NL5} The function $f$ is $C^2$ and for every $\varphi \in C^2(\Omega)$ we have
\begin{equation}  \label{eq:EL1}
 \int_\Omega  Df(\nabla \varphi) \cdot D \eta \, dx = 0 \quad \text{for all $\eta \in C_c^2(\Omega; \R^m)$}
\end{equation}
or, equivalently, 
\begin{equation}    \label{eq:EL2} -\mathrm{div} Df(D\varphi) = 0 \quad \text{for all $\varphi \in C^2(\Omega;\R^m)$.}
\end{equation}
\end{enumerate}
\end{proposition}

In  \eqref{eq:EL1} the dot $\cdot$ denotes the usual scalar product on the space of  $m \times d$ matrices and $C_c^2(\Omega; \R^m)$
denotes the space of  those $C^2$ maps from $\Omega$ to $\R^m$ that  have compact support in $\Omega$. 
The property   \eqref{eq:EL2} motivates the name 'null Lagrangian': for \emph{every} $C^2$ map $\varphi$ the 
Euler-Lagrange equations are satisfied identically. 

\begin{proof} This is well known. We include a proof  for the convenience of the reader. 
We show that  
\eqref{it:NL1} $\Longrightarrow$ \eqref{it:NL2} $\Longrightarrow$ \eqref{it:NL3} $\Longrightarrow$ \eqref{it:NL4} $\Longrightarrow$ \eqref{it:NL1} and
\eqref{it:NL1} $\Longleftrightarrow$ \eqref{it:NL5}. 

\eqref{it:NL1} $\Longrightarrow$ \eqref{it:NL2}: \quad Obvious.

\eqref{it:NL2} $\Longrightarrow$ \eqref{it:NL3}:  \quad We have to show that, for every $\lambda \in (0,1)$. 
\begin{equation} \label{eq:rank_one_convexity}
f(F)  = \lambda f(F - (1-\lambda) a \otimes \xi) + (1-\lambda) f(F + \lambda a \otimes \xi).
\end{equation}
First note that   by approximation  \eqref{eq:null_lagrangian_affine} also holds for every Lipschitz map $\varphi$ which has compact support in $\Omega$. 
Let $h: \R \to \R$ be a one-periodic Lipschitz  function such that $h' = - (1-\lambda)$ on $(0, \lambda)$ and $h' = \lambda$ on $(\lambda, 1)$,
let $\eta \in C_c^\infty(\Omega)$ and let $k$ be an integer.
Apply  \eqref{eq:null_lagrangian_affine} with the maps $\varphi = \eta  a \frac1k h(k x \cdot \xi)$. 
Letting $k \to \infty$ we see that 
\begin{equation}
\int_\Omega   \lambda f(F -  \eta (1-\lambda) a \otimes \xi) + (1-\lambda) f(F +  \eta \lambda a \otimes \xi) - f(F) = 0.
\end{equation}
Choosing a sequence of functions $\eta_\ell \in C_c^\infty(\Omega)$ that converges to a characteristic function we deduce
\eqref{eq:rank_one_convexity} 

\eqref{it:NL3} $\Longrightarrow$ \eqref{it:NL4}: \quad 
This is a nice exercise in multilinear algebra, see, e.g.,   \cite[Theorem 6.1]{CDKM06}.

\eqref{it:NL4} $\Longrightarrow$ \eqref{it:NL1}: \quad It suffices to show the result for a single $r \times r$ subdeterminant. 
We give a proof based on differential forms which nicely encodes the multilinear properties of subdeterminants.
Alternatively one can prove the result by directly using the definition of determinants. This is particularly easy for dimensions $d=2$ 
and $d=3$ which are the most interesting cases, see the end of the proof. 

After a possible permutation of the coordinates in $\R^m$ and $\R^d$ we may assume that 
$f(\nabla \varphi) = \det \frac{\partial (\varphi_1, \ldots, \varphi_r)}{\partial(x_1, \ldots, x_r)}$. 
Let $\Omega' \subset \Omega$ be an open set with smooth boundary which includes the support of $\varphi - \psi$. 
Writing $dx$ for the differential form $dx^1 \wedge \ldots \wedge dx^d$ and using the Leibniz rule
for differential forms and Stokes theorem we get
\begin{align}  \label{eq:NL_differential_forms}
 \, \int_{\Omega'} f(\nabla \varphi) dx 
 =& \,  \int_{\Omega'} d\varphi_1 \wedge d\varphi_2  \ldots \wedge d\varphi_r  \wedge dx^{r+1} \ldots \wedge dx^d    \nonumber \\
=& \, \int_{\Omega'} d (\varphi_1 d\varphi_2   \ldots \wedge d\varphi_r \wedge dx^{r+1} \ldots\wedge dx^d)     \nonumber \\
= & \, \int_{\partial \Omega'} \varphi_1 d\varphi_2   \ldots \wedge d\varphi_r \wedge dx^{r+1} \ldots\wedge dx^d.    \nonumber
\end{align}
Since $\Omega'$ includes the support of $\varphi-\psi$  we can replace $\varphi$ by $\psi$ in the last integral.
Using   \eqref{eq:NL_differential_forms}  for $\psi$ instead of $\varphi$ we conclude that 
$ \int_{\Omega'} f(\nabla \varphi)  dx =  \int_{\Omega'} f(\nabla \psi)  dx$.
Moreover, $\nabla \psi= \nabla \varphi$ on $\Omega \setminus \Omega'$ and thus 
$ \int_{\Omega} f(\nabla \varphi)  dx =  \int_{\Omega} f(\nabla \psi)  dx$.

For $r=2$ the calculation above simply boils down to the identity
\begin{align}
\det  \frac{\partial (\varphi_1, \varphi_2)}{\partial(x_1,  x_2) } = \frac{\partial}{\partial x_1}  
\Bigl( \varphi_1  \frac{\partial \varphi_2}{\partial x_2} \Bigr)
-   \frac{\partial}{\partial x_2}  \left( \varphi_1  \frac{\partial \varphi_2}{\partial x_1} \right),
\end{align}
and  for $r=3$ a similar explicit formula which expresses the determinant as a divergence 
using $\varphi_1$ and $2 \times 2$ minors of $\varphi_2$, $\varphi_3$ is easily derived. 

\eqref{it:NL1} $\Longrightarrow$ \eqref{it:NL5}: \quad By \eqref{it:NL4} the function $f$ is a polynomial and hence $C^2$. 
Apply the condition in the definition of the null Lagrangian with $\psi = \varphi+t \eta$, differentiate with respect to $t$ and
evaluate at $t=0$.

\eqref{it:NL5} $\Longrightarrow$ \eqref{it:NL1}: \quad Set $\varphi_t = \varphi + t (\psi-\varphi)$
and  $h(t) = \int_\Omega f(\nabla \varphi_t)$. Note that  $\eta:= \psi-\varphi \in C_c^2(\Omega; \R^m)$. 
Thus  \eqref{eq:EL1} implies that $h'(t) = 0$ for all $t \in [0,1]$. Hence $h(0) = h(1)$ which is the desired conclusion. 
\end{proof}

\medskip

Finally we show that by adding a null Lagrangian we can locally achieve uniform convexity provided that $D^2W(\1$) positive
definite in the subspace perpendicular to the tangent space of $\mathrm{SO}(d)$ at $\1$.

\begin{proposition}  \label{pr:uniform_convexity continuous} Suppose that $W: \R^{d \times d} \to \R$
is $C^2$ in a neighbourhood of $\1$ and
satisfies
\begin{equation}
W(QF) = W(F) \quad \text{for all $Q \in \mathrm{SO(d)}$ and $F \in \R^{d \times d}$}
\end{equation}
and
\begin{equation}
DW(\1) = 0.
\end{equation}
Suppose further  that there exists a $c > 0$ such that  
\begin{equation} \label{eq:D2_sym}
D^2 W(\1)(G, G) \ge c |G|^2 \quad \text{for all symmetric $d \times d$ matrices $G$.}
\end{equation}
Then for all sufficiently small $\alpha >0$ there exists a  $c' >0$ and a $\delta > 0$ such that  
\begin{equation}
D^2(W + \alpha \det)(F)(G,G) \ge c' |G|^2 \quad \text{for all $G \in \R^{d \times d}$ and all $F \in B_\delta(\1)$.}
\end{equation}
\end{proposition}

\begin{proof} It suffices to show that for all sufficiently small $\alpha > 0$ there exists a $c' >0$ such that
\begin{equation}  \label{eq:D2Walpha_1}
D^2 (W + \alpha \det)(\1)(F,F) \ge 2 c' |F|^2.
\end{equation}
Then the assertion follows by the continuity of the second derivative.

The main point is to show that $D^2 \det(\1)$ is positive definite on skew-symmetric matrices. 
To do so,  note that for a skew-symmetric $G$ the matrix $e^G$ is in $\mathrm{SO}(d)$. 
Differentiating the identity $1 = \det e^{tG}$ twice and evaluating at $t=0$, we 
get
\begin{align}  \label{eq:D2det_a}
0 = \frac{d}{dt}|_{t=0} D\det(e^{t G})(e^{tG} G) = D^2 \det(\1)(G,G) + D\det(\1)(G^2).
\end{align}
Thus
\begin{align}  \label{eq:D2det_b}
D^2 \det(\1)(G,G) = -\tr G^2 = \tr G^T G = |G|^2 \quad \text{for all skew symmetric $G$.}
\end{align}

Now, let $G$ be skew-symmetric and let $H$ be symmetric. Computing the second derivative
of $t \mapsto W(e^{tG}(\1 + t H)) = W(\1 + t H)$ at $t=0$ and using that $DW(\1) = 0$, we get
\begin{align}
D^2 W(\1)(G+H, G+H) = D^2 W(\1)(H,H) \ge c |H|^2.
\end{align}

Combining this with the estimates
$D^2 \det(\1)(H,H) \le C |H|^2$  and
\begin{equation}
2 D^2 \det(\1)(G,H) \le 2 C |G| |H| \le \frac12  |G|^2 + 2   C^2  |H|^2, 
\end{equation}
we get
\begin{equation}
D^2(W + \alpha \det)(\1)(G+H, G+H) \ge  (c -  \alpha(C + 2 C^2) |H|^2 + \frac{\alpha}{2}  |G|^2.
\end{equation}
Since every matrix $F$ can be written as $F = G+ H$ with $G$ skew-symmetric and $H$ symmetric and since $|G+H|^2 = |G|^2 + |H|^2$
we see that  \eqref{eq:D2Walpha_1} holds provided that $0 < \alpha < \frac12 \frac{c}{C+2 C^2}$ and 
$c' = \frac12 \min(c- \alpha(C + 2 C^2), \alpha/2)$.
\end{proof}

\section{Discrete null Lagrangians}  \label{sec:discrete_null_lagrangians}

We saw in \eqref{eq:U-Uscr} that the Hamiltonian $\HN^F$ can be formulated in terms of a potential $U$ 
with finite range support $A$ as well as in terms of the generalized gradient potential $\Uscr$.
In discrete elasticity, however,
the potential  $U$ and thus also $\Uscr$ has a degenerate minimum due to the invariance under rotiatons
and we cannot directly apply the results stated in the in Section~\ref{sec:ggm}
on generalized gradient models.
Instead we first need to gain local coercivity. This can be done with the help of an  addition of a discrete null  Lagrangian.

Let us first  introduce the concept of discrete null Lagrangians.
\begin{definition}  \label{de:null_lagrangian}
A function $\nullL: \left(\R^d\right)^A\rightarrow \R$ is called a \emph{discrete null Lagrangian} 
 if for any finite set $\Lambda\subset\mathbb{Z}^d$ and  any  $\p, \tilde{\p}\in (\R^d)^{\Z^d}$ 
 such that $\p(x)=\tilde{\p}(x)$ for all $x\notin \Lambda$ we have the following identity
\begin{align}\label{eq:defofNullLagrangian}
\sum_{x \in \Lambda_A} N( (\tau_{-x} \p)_A)=  
\sum_{x \in \Lambda_A} N( (\tau_{-x} \tilde\p)_A) 
\quad \hbox{where $\Lambda_A := \{ x \in \Z^d : (x+A) \cap \Lambda \neq \emptyset\}$.}
\end{align}
\end{definition}

If $\nullL$ is a discrete null Lagrangian and  $\p(x)=Fx$ for $x\notin \Lambda$ then, in particular,
\begin{align}
\sum_{x \in \Lambda_A} N( (\tau_{-x} \p)_A)=  
\sum_{x \in \Lambda_A} N( \tau_{-x} F) 
	\end{align}
It is useful to note that \eqref{eq:defofNullLagrangian} holds if and only if
\begin{align}  \label{eq:criterion_null_Lagrangian}
\sum_{x \in \Lambda'} \nullL( (\tau_{-x} \p)_A)=         
		\sum_{x \in \Lambda'}    \nullL( (\tau_{-x} \p)_A)
\quad \hbox{for some  finite $\Lambda'$ with $\Lambda_A \subset \Lambda' \subset \Z^d$.}
\end{align}
This follows immediately from the observation that $x \in \Lambda' \setminus \Lambda_A$ implies that
$x+A \subset \Z^d \setminus \Lambda$ and hence $(\tau_{-x} \p)_A  = (\tau_{-x} \tilde \p)_A$.

\begin{example} \label{ex:linear_NL} 
Let $A=\{0, y\}$ where $y\in \Z^d$ and  $\nullL(\p)  =\p(y) - \p(0). $ 
Then $\nullL$ is a discrete null Lagrangian. To see this we use the criterion \eqref{eq:criterion_null_Lagrangian}.
Consider a cube  $\Lambda'$ which is so large that  
\begin{align}  \label{eq:linear_Lambda_prime_large}
\Lambda_A \subset \Lambda', \quad \Lambda  \cap \big( (y + \Lambda') \setminus \Lambda'\big) = \emptyset \quad \hbox{and} \quad 
\Lambda  \cap \big( \Lambda' \setminus (y + \Lambda')\big) = \emptyset.
\end{align}
Now
$$ 
\sum_{x \in \Lambda'} \nullL( (\tau_{-x}\p)_A) = \sum_{ z \in (y + \Lambda' )\setminus \Lambda'} \p(z)  - \sum_{z \in \Lambda' \setminus (y +\Lambda')} \p(z).
$$ 
Thus the assertion follows from \eqref{eq:linear_Lambda_prime_large} since $\p = \tilde \p$ in $\Z^d  \setminus \Lambda$. \\
It follows that all the maps $\p \mapsto \nabla^\alpha \p(0)$ with $\alpha \ne 0$ are discrete null Lagrangians.
\end{example}

\begin{example}  \label{ex:det_NL} An important example of a nonlinear discrete null Lagrangian is given by the discrete determinant. 
For $d=2$ and a map $\psi:\{0,1\}^2 \to \R^2$ one defines the discrete determinant as the oriented area of the
oriented quadrilateral generated by the points $\psi(0)$, $\psi(e_1)$, $\psi(e_1+e_2)$, $\psi(e_2)$. 
Since this quadrilateral can be seen as the sum of the two oriented triangles 
$\psi(0), \psi(e_1), \psi(e_2)$ and $ \psi(e_1+e_2), \psi(e_2), \psi(e_1)$ this oriented area is given by 
\begin{align*} 
\nullL_{\rm det}(\psi)  =  & \,  \frac12  \det( \psi(e_1) - \psi(0) \, | \, \psi(e_2) - \psi(0)) \\
+  & \, \frac12 \det( \psi(e_2) - \psi(e_1+e_2)   \, | \,  \psi(e_1) - \psi(e_1+e_2)      )
\end{align*}
Then for a square 
$Q_l = \{0, 1, \dots, l\}^2$  with the oriented boundary 
$$
\vec P_l= \bigl((0,0), (1,0),\!..,(l,0),(l,1),\!..,(l,l), (l-1,l),\!.., (0,l), (0,l-1),\!..,(0,0)\bigr),
$$ 
the sum $\sum_{x \in Q_l} \nullL((\tau_{-x} \p)_{\{0,1\}^2})$ is the sum of the oriented areas of the image of the subsquares  
of $(0,l)^2$ of size $1$ and hence the oriented area of the oriented polygon $\p(\vec P_l)$.
Thus it follows from the criterion \eqref{eq:criterion_null_Lagrangian} that 
 $\nullL$ is a discrete null Lagrangian (given $\Lambda$, take $\Lambda' = Q_l-\lfloor \frac{l}2\rfloor (1,1,\dots, 1)$ with sufficiently large $l$).

To generalize the discrete determinant to higher dimensions, it is useful to reformulate first the case $d=2$ and to express 
$\nullL_{{\rm det}}$  with the help of the continuous null Lagrangian $\det \nabla \psi$
for $\psi: \Omega \subset \R^2 \to \R^2$. For  $\p: \Z^2 \to \R^2$ define $I(\p): \R^2 \to \R^2$ as the multilinear interpolation, i.e., 
for $x \in \Z^2$ the map $I(\p_{|x + [0,1]^2})(y)$ is the unique map which is affine in each coordinate direction $y_i$ 
and agrees with $\p$ on $x + \{0,1\}^2$. 
Note that $I(\p)$ is defined consistently along the  lines $x_i \in \Z$ and is continuous on $\R^2$.
Also, note that $I(\p)(\vec P_1)$ is the boundary of the polygon generated by the points $\p(0)$, $\p(e_1)$, $\p(e_1+e_2)$, $\p(e_2)$.
Thus
\begin{equation} 
\label{E:NQ}
\nullL_{\rm{det}}(\p) = \int_{(0,1)^2} \det \nabla I(\p) \, \d x
\end{equation} 
and 
\begin{equation} 
\label{E:Nb}
\sum_{x \in Q_l} \nullL((\tau_{-x} \p)_{\{0,1\}^2}) = \sum_{x \in Q_l} 
 \int_{x+(0,1)^2} \det \nabla I(\p) \, \d x = \int_{(0,l)^2} \det \nabla I(\p) \, \d x.
\end{equation} 
By Proposition~\ref{pr:null_lagrangian_continuous} and Definition~\ref{de:null_lagrangian_continuous}, the
integral on the right hand side  of \eqref{E:Nb} depends only on $I\p_{| \partial (0,l)^2}$ and thus only 
on $\p_{| \vec P_l}$. This gives another proof that $\nullL_{ \rm{det}}$ is a discrete null Lagrangian.

For $d \ge 3$ and $\p : \Z^d \to \R^d$ we define the multilinear interpolation in the same way. We then define 
the  discrete determinant $\nullL_{\rm{det}}: (\R^d)^{\{0,1\}^d} \to \R$ by
\begin{align}  \label{eq:def_discrete_det}
 \nullL_{\rm{det}} (\p) = \int_{(0,1)^d} \det \nabla I(\p) \, \d x.
 \end{align}
It follows that, in analogy with \eqref{E:Nb}, we have
\begin{equation} 
\label{E:Nb_higher_d}
\sum_{x \in Q_l}  \nullL_{\rm{det}} ((\tau_{-x} \p)_{\{0,1\}^d}) = \int_{(0,l)^d} \det \nabla I(\p) \, \d x.
\end{equation} 
Thus the left hand side of    \eqref{E:Nb_higher_d}  depends only on $I \varphi|_{\partial(0,l)^d}$ and hence
only on $\varphi_{ \mathbb Z^d \cap \partial(0,l)^d}$. This shows that $\nullL_{\rm{det}}$ is a discrete null Lagrangian.

Note that for each $y \in (0,1)^d$ the expression $\nabla I(\p)(y)$ is a linear combination of the values
$\p(x)$ for $x \in \{0,1\}^d$. Thus $\nullL_{\rm{det}}$ is a homogeneous polynomial of degree $d$ on $ (\R^d)^{\{0,1\}^d}$. 
If $F: \R^d \to \R^d$ is linear and  $\p_F(y) = Fy$ for all $y \in \{0,1\}^d$ then $I\p(y) = Fy$ and thus
\begin{align}  \label{eq:consistencey_discrete_det}
 \nullL_{\rm{\det}}(F) = \det F.
\end{align}
\end{example}

Discrete null Lagrangians are defined using Dirichlet boundary conditions on $\Z^d$. 
One can extend the null Lagrangian property to periodic perturbations. 
We will only need the following result.

\begin{lemma}  \label{le:null_lagrangian_periodic}
Assume that $\nullL : (\R^d)^A \to \R$ is a discrete shift-invariant  null Lagrangian
and assume that $\nullL$ is bounded on bounded sets.  Let $F: \R^d \to \R^d$ be a linear map.
Assume that $|A|_\infty := \sup \{ |y|_\infty : y  \in A \} \le \frac18 L^N$. Then for all periodic functions $\p: T_N = \Z^d/ L^N \Z^d \to \R^d$
\begin{align}  \label{eq:null_lagrangian_periodic}
\sum_{x \in T_N} \nullL( (F + (\tau_{-x} \p)_A) = \sum_{x \in T_N} \nullL(F) = L^{Nd} \nullL(F).
\end{align}
where $F$ also denotes the restriction of $F$ to $A$.
\end{lemma}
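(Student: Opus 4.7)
The plan is to prove the two equalities separately. The second one is a direct consequence of shift invariance of $\nullL$ in $\R^d$: for any $x \in T_N$, viewing $F_{\tau_x(A)}$ as the map $a \mapsto F(x+a) = Fa + Fx$ on $A$ shows that $F_{\tau_x(A)}$ equals $F_A$ shifted by the constant vector $Fx \in \R^d$. Hence $\nullL(F_{\tau_x(A)}) = \nullL(F_A)$, and summing over $x \in T_N$ produces the factor $L^{dN}$.

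For the first equality the strategy is to reduce to the defining property of a discrete null Lagrangian via a cutoff. First I would extend $\p$ periodically to $\bar\p : \Z^d \to \R^d$; combining shift invariance with $L^N\Z^d$-periodicity of $\bar\p$ shows that the map $x \mapsto \nullL((F+\bar\p)_{\tau_x(A)})$ is $L^N\Z^d$-periodic on $\Z^d$. For each large $M \in \mathbb{N}$, I set $\Lambda_M := \{0, \dots, ML^N - 1\}^d$ and define the truncation $\hat\p_M := \bar\p \cdot \mathbf{1}_{\Lambda_M}$. Since $F + \hat\p_M$ coincides with $F$ on $\Z^d \setminus \Lambda_M$, the criterion \eqref{eq:criterion_null_Lagrangian} applied with $\Lambda' = (\Lambda_M)_A$ yields
\begin{equation*}
\sum_{x \in (\Lambda_M)_A} \nullL((F+\hat\p_M)_{\tau_x(A)}) \;=\; \sum_{x \in (\Lambda_M)_A} \nullL(F_{\tau_x(A)}) \;=\; |(\Lambda_M)_A|\, \nullL(F_A).
\end{equation*}

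The next step is to split the left-hand side into the inner region $I_M := \{x : \tau_x(A) \subset \Lambda_M\}$, on which $\hat\p_M$ agrees with $\bar\p$ on $\tau_x(A)$, and the complementary boundary region $B_M := (\Lambda_M)_A \setminus I_M$. Writing $S := \sum_{x \in T_N} \nullL((F+\p)_{\tau_x(A)})$, the $L^N$-periodicity of the inner summand reduces the inner sum to $(|I_M|/L^{dN})\,S$ up to a boundary correction coming from the partial periods. The assumption $|A|_\infty \le \tfrac18 L^N$ guarantees clean geometric counts, namely $|I_M|$ and $|(\Lambda_M)_A|$ both equal $(ML^N)^d + O(M^{d-1}L^{N(d-1)})$, and also ensures that the identification of $\tau_x(A)$ with a subset of $T_N$ does not wrap around. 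Dividing the resulting identity by $M^d$ and sending $M \to \infty$ then kills all $O(M^{-1})$ boundary contributions and forces $S = L^{dN} \nullL(F_A)$, which combined with the second equality proves the lemma.

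The main obstacle is controlling the boundary sum over $B_M$ uniformly in $M$, since the linear part $Fx$ in $(F+\hat\p_M)_{\tau_x(A)}$ diverges as $|x| \to \infty$ while $\nullL$ is only assumed bounded on bounded sets. Shift invariance of $\nullL$ in $\R^d$ rescues the argument: subtracting the constant $Fx$ from each argument reduces the configuration to $a \mapsto Fa + \hat\p_M(x+a)$ on $A$, whose norm is bounded uniformly in $x$ by $|F|\,|A|_\infty + \lVert \p \rVert_\infty$. The boundedness of $\nullL$ on bounded sets then yields a uniform bound for each term of the boundary sum, and the count $|B_M| = O(M^{d-1} L^{N(d-1)})$ makes the total boundary error negligible on scale $M^d$, as required.
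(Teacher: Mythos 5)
Your argument is correct and is essentially the paper's own proof: both extend $\p$ periodically, cut it off outside a box of side $\sim M L^N$, apply the defining null-Lagrangian identity against the affine comparison field $F$, bound the boundary terms by subtracting the constant $Fx$ (shift invariance) and invoking boundedness on bounded sets, and conclude by dividing by $M^d$ and letting $M\to\infty$. The only differences are your sharp indicator cutoff (versus the paper's plateau cutoff $\eta$, chosen so that the inner region is an exact union of full periods and no partial-period correction is needed) and a harmless bookkeeping slip: the boundary and partial-period counts should be $O(M^{d-1}L^{Nd})$ rather than $O(M^{d-1}L^{N(d-1)})$, which is immaterial since $N$ is fixed.
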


\begin{proof}   
The proof is standard, but we include it for the convenience of the reader. Fix $F$ and $\p$. 
We use the set $[\frac{-L^N-1}{2}, \frac{L^N-1}{2}]^d$ as the fundamental domain of $T_N$. 
Extend $\p$ to an $L^N$-periodic function on $\Z^d$. Let $M$ be a large  odd integer and consider a cut-off function $\eta: \Z^d \to [0,1]$ such that
\begin{align}
\eta(x) = 
\begin{cases}
1 \quad &\hbox{if $|x|_\infty \le (M-2) \frac{L^N-1}{2}  + |A|_\infty$}, 
\\
 0 \quad &\hbox{if $|x|_\infty \ge  M \frac{L^N-1}{2}  -  2 |A|_\infty$},  
 \end{cases} 
\end{align}
Apply the criterion  \eqref{eq:criterion_null_Lagrangian} with the set $\Lambda' =  \Lambda_M := [- \frac{M L^N-1}{2} ,  \frac{M L^N- 1}{2}]^d$,   
$\tilde \p = 0$ and $\eta \p$ in place of $\p$ and use that $\tau_{-x} F$ and $F$ differ only by a constant map and that $\nullL$ is shift invariant.
This gives
\begin{align} \label{eq:nl_periodic_cut}
 \sum_{x \in \Lambda_M} \nullL( F + (\tau_{-x} (\eta \p))_A) = \sum_{x \in \Lambda_M} \nullL(F) = M^d L^{Nd} \nullL(F).
\end{align}
Now
$$  \sum_{x \in \Lambda_{M-2}} \nullL( F + (\tau_{-x} (\eta \p))_A)  = (M-2)^d \sum_{x \in T_N} \nullL(  F + 
(\tau_{-x}  \p)_A) $$
Since $\p$ is bounded on $\Z^d$,  we get  $|F y  + (\tau_{-x}(\eta \p))(y) | \le C$ for all $y \in A$ and $x \in \Z^d$.
Using the assumption that $\nullL$ is bounded on bounded sets we get 
$| \nullL (F + (\tau_{-x}(\eta \p))_A)| \le C'$
and 
$$ \sum_{x \in \Lambda_M \setminus \Lambda_{M-2} }  | \nullL (F + (\tau_{-x}(\eta \p))_A)| \le C' (M^d - (M-2)^d) L^{Nd}.$$
Dividing  \eqref{eq:nl_periodic_cut} by $M^d$ and passing to the limit $M \to \infty$,  we get 
\eqref{eq:null_lagrangian_periodic}.
\end{proof}

We now make use of discrete null Lagrangians for improving the convexity properties of elastic interaction energies.
Recall assumptions (H1) to  (H5) introduced in Section~\ref{sec:discrete_elasticity_main}.

\begin{theorem}\label{th:convexification}[Theorem 5.1 in \cite{CDKM06}]
	Under the assumptions (H1)-(H4) there is a shift invariant discrete null Lagrangian $\nullL\in C^\infty(\left(\R^d\right)^A,\R)$ 
	and a shift invariant function $E\in C^2(\left(\R^d\right)^A,\R)$ such that: 
\begin{enumerate}[label=(\roman*),leftmargin=0.9cm]
\item $E$ is uniformly convex on the subspace $\Vcal_A^\perp$ orthogonal to the shifts;
\item For  all $\psi\in \bigl(\R^d\bigr)^A$
\begin{align}\label{eq:propofHineq}
		U(\psi)+ \nullL(\psi)\geq E(\psi);
\end{align}
\item For $\psi\in \bigl(\R^d\bigr)^A$ that are close to rotations $\psi_\mathbf{R}(x)=\mathbf{R}x$ with $\mathbf{R}\in \mathrm{SO}(d)$,
\begin{align}\label{eq:propofHeq}
	U(\psi)+ \nullL(\psi) =E(\psi) .                    
\end{align}
\end{enumerate}

In fact one can take  $\nullL = \alpha \nullL_{\rm{det}}$ for some $\alpha \in \R$ where $\nullL_{\rm{det}}$ is  the discrete determinant
defined in  \eqref{eq:def_discrete_det}. 
Hence $\nullL$ is  a polynomial of degree $d$ and,  in particular, it is smooth. 
Moreover, $\nullL$ depends only  on the values of the deformation in one unit cell
whose corners are contained in $A$ and for affine maps $F:\mathbb{Z}^d\to \R^d$\gray{,} restricted to  $A$ 
one has  
\begin{align}\label{eq:NeqDet}
\nullL(F) = \alpha \det F.
\end{align}
\end{theorem}

\begin{remark}  \label{re:explain_null_lagrangian} 
The heart of the matter is to show that for small  $\alpha>0$ the quadratic form $D^2(U + \alpha  N_{\rm{det}})(z)$ is positive definite 
on $\mathcal V_A^\perp$ for  $z=\1$ (and hence for $z$ in a small neighbourhood of $\1$). This is easy. 
Indeed, $D^2 U(\1)$ is positive semidefinite on $\Vcal_A^\perp$ since 
$\1$ is a minimum of $U$ and, by assumption,
positive definite on the complement of the space $\Scal \subset \Vcal_A^\perp$ of skew symmetric linear maps. It thus suffices 
to show that $D^2 N_{\rm{det}}$ is positive definite on $\mathcal S$. For $F \in \Scal$ we have $N_{\rm{\det}}(F) = \det F$. 
Now by  \eqref{eq:D2det_b} we have $D^2\det(\1)(F,F) \ge |F|^2$ for skew symmetric $F$.
\end{remark}
 \medskip

\section{Reformulation of discrete elasticity as a  non-degenerate generalized gradient model} 

\sectionmark{Discrete elasticity as generalized gradient models}\label{sec:reformulation_de}

In the following we will  rephrase the model
given by the Hamiltonian \eqref{eq:U-Uscr} in the setting introduced in Section \ref{sec:setup}.
We first give an overview of the general strategy.
The key idea is to consider the energy given by $U+\nullL$ instead of 
$U$, where $\nullL$ is the discrete null Lagrangian in Theorem~\ref{th:convexification}.
 The function $U+\nullL$ is bigger than a strictly convex function and agrees with it in a neighbourhood of the identity. 
In particular the second derivative at the identity  is strictly positive (modulo shift invariance)
so it almost falls in the class of energies satisfying the assumptions of Proposition \ref{prop:embedding}
(up to a trivial shift from $0$ to  the generalized gradient $\1_{Q_\Ran}$  of the identity map on $Q_{\Ran}$).
One minor issue is that  relying on Lemma~\ref{L:U-Uscr}, we restricted the passage from finite range interaction $U$
on $Q_{\Ran}$  to generalized gradient interactions $\Uscr$ on $\Gcal_{\Ran}$. 
Given that  the interaction term $U$  depends only on the field in $A$,  its second derivative  will not, in general, 
satisfy  the coercivity condition \eqref{eq:Qlowerbound} when $A \subsetneq Q_{\Ran}$.
The addition of another null Lagrangian, however, gives us an energy that has a strictly positive  Hessian at the identity.

To implement the strategy described above,  we first note that the assumption $\{0,1\}^d\subset A$ implies for any   $\psi \in (\R^d)^A$ that
\begin{align*}
|\nabla \psi(0)|^2\leq 2\sum_{i=1}^d (|\psi(0)|^2+|\psi(e_i)|^2)\leq 2d |\psi|^2,
\end{align*}
with $|\cdot|$  the  Euclidean norm on $(\R^d)^A$.
Let $E: (\R^d)^A \to \R$ be the function in Theorem~\ref{th:convexification}.
Uniform convexity of $E$ orthogonal to shifts and shift invariance of $E$ imply that
 there is a constant $\mu>0$ such that, for $\psi\in \Vcal_A^\perp$  and all $G \in \R^{d \times d}$
\begin{align}\label{eq:Hcoerc1}
\begin{split}
E(G+\psi)&\geq E(G)+DE(G)(\psi) +{\mu}|\psi|^2\\ &
\geq E(G)+DE(G)(\psi) +\frac{\mu}{2d}|\nabla\psi(0)|^2.
\end{split}
\end{align}
Since the first and the last expression are shift invariant we conclude that,  for all
$\psi \in (\R^d)^A$ and all $G \in \R^{d \times d}$,
\begin{align}\label{eq:Hcoerc2}
E(G+\psi)\geq  E(G)+DE(G)(\psi) +\frac{\mu}{2d}|\nabla\psi(0)|^2.
\end{align} 

Taking $G$ as the identity, we see that the growth of $E$ is controlled from below by the gradient in one point. 
We now introduce a null Lagrangian that allows 
us to redistribute the gradient lower bound to gain coercivity on $(\R^d)^{Q_{\Ran}}$.
In view of future applications, we state the following lemma for general $m$-dimensional vector fields. 
For our applications in elasticity we only need the case $m=d$.
\begin{lemma}\label{le:N_0} Define
$\nullL_0: (\R^{m})^{Q_{\Ran}}\rightarrow \R$ by
\begin{align}
\nullL_0(\psi)=-  \sum_{i=1}^d \abs{\nabla_i \psi(0)}^2 + \frac{1}{\Ran(\Ran+1)^{d-1}}\sum_{i=1}^d\sum_{y,y+e_i\in Q_{\Ran}} 
\abs{\nabla_{i}\psi(y)}^2.                                                                                           
\end{align}
Then the function $\nullL_0$ is a null Lagrangian and ${\nullL_0}(\psi)=0$ if $\psi$ is the restriction of an affine map.
\end{lemma}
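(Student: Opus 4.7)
The plan is to verify the null Lagrangian property via the criterion \eqref{eq:criterion_null_Lagrangian}. Given a finite $\Lambda\subset\Z^d$ and fields $\p,\tilde\p$ with $\p=\tilde\p$ outside $\Lambda$, I would choose a large cube $\Lambda'\supset\Lambda_{Q_{\Ran}}$ whose boundary lies at distance more than $\Ran+1$ from $\Lambda$, and show
\begin{equation*}
\sum_{x\in\Lambda'}\nullL_0(\p_{\tau_x(Q_{\Ran})})=\sum_{x\in\Lambda'}\nullL_0(\tilde\p_{\tau_x(Q_{\Ran})}).
\end{equation*}

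The decisive step is to exchange the order of summation. Since $\nabla_i \p_{\tau_x(Q_{\Ran})}(y)=\nabla_i\p(x+y)$, both sides can be written in the form
\begin{equation*}
\sum_{i=1}^{d}\sum_{z\in\Z^d}c_i(z)\,|\nabla_i\p(z)|^2,
\end{equation*}
where, with $S_i:=\{y\in Q_{\Ran}:y+e_i\in Q_{\Ran}\}$, the coefficient $c_i(z)$ equals $-1$ if $z\in\Lambda'$ (and $0$ otherwise) coming from the diagonal term, plus $\frac{1}{\Ran(\Ran+1)^{d-1}}\cdot\#\{y\in S_i:z-y\in\Lambda'\}$ coming from the second term. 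A direct count yields $|S_i|=\Ran(\Ran+1)^{d-1}$, independent of $i$, and this is precisely the normalization built into the definition of $\nullL_0$.

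The key cancellation is then immediate: if $z-S_i\subset\Lambda'$, the two contributions are $-1$ and $+1$, so $c_i(z)=0$; if $z\notin\Lambda'$ and $(z-S_i)\cap\Lambda'=\emptyset$, both contributions vanish. Hence the support of $c_i$ is contained in a neighbourhood of $\partial\Lambda'$ of width at most $\Ran$. By the choice of $\Lambda'$, this boundary layer---together with its shift by $e_i$---is disjoint from $\Lambda$, so $|\nabla_i\p(z)|^2=|\nabla_i\tilde\p(z)|^2$ throughout the support of $c_i$, proving the identity. The only nontrivial point is this boundary-layer bookkeeping; once the geometry of the support of $c_i$ is pinned down, the null Lagrangian property reduces to the algebraic cancellation between the normalizations.

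For the second assertion, if $\psi(y)=Fy+a$ is the restriction of an affine map then $\nabla_i\psi(y)=Fe_i$ is independent of $y$; substituting into the definition of $\nullL_0$ and again invoking $|S_i|=\Ran(\Ran+1)^{d-1}$ gives $\nullL_0(\psi)=-\sum_i|Fe_i|^2+\sum_i|Fe_i|^2=0$.
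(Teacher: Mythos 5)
Your proof is correct and follows essentially the same route as the paper: both verify the criterion \eqref{eq:criterion_null_Lagrangian} on a large cube $\Lambda'$, use the count $\#\{y\in Q_{\Ran}:y+e_i\in Q_{\Ran}\}=\Ran(\Ran+1)^{d-1}$ to cancel against the normalization, and observe that the uncancelled terms live in a boundary layer of $\Lambda'$ where $\p=\tilde\p$. The only organizational difference is that the paper first decomposes $\nullL_0$ into the elementary two-point null Lagrangians $\nullL_{y,i}(\psi)=|\nabla_i\psi(y)|^2-|\nabla_i\psi(0)|^2$ and shifts each sum separately, whereas you carry out the same shift-and-cancel bookkeeping in aggregate via the coefficient function $c_i(z)$.
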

\begin{proof} This is similar to Example~\ref{ex:linear_NL}. Note that
$$
\#\{ y \in Q_{\Ran} : y + e_i \in Q_{\Ran}\} = \Ran(\Ran+1)^{d-1}.
$$ 
Thus
\begin{align} & \nullL_0(\psi) = \frac{1}{\Ran(\Ran+1)^{d-1}}\ \sum_{i=1}^d\sum_{y,y+e_i\in Q_{\Ran}}  \nullL_{y,i}(\psi)  \\
\hbox{where} \quad & \nullL_{y,i}(\psi) =| \nabla_i \psi(y)|^2 - |\nabla_i \psi(0) |^2.
\end{align}
Thus it suffices to show that  for all $y \in Q_{\Ran}$ with $y + e_i \in Q_{\Ran}$ the map  $\nullL_{y,i}: (\R^d)^{Q_{\Ran} }\to \R$ is a null Lagrangian. 
Fix such $y$ and $i$. We use the criterion  \eqref{eq:criterion_null_Lagrangian}. 
 Assume that $\tilde \p = \p$ in $\Z^d \setminus \Lambda$. Take $\Lambda'$ so large that $\Lambda_{Q_{\Ran}} \subset \Lambda'$ and 
$\big( (y + \Lambda') \Delta \Lambda' \big)  \cap \Lambda_{Q_{\Ran}} = \emptyset$. 
Here $\Delta$ denotes the symmetric set difference. 
We have
\begin{align*}
& \, \sum_{x \in \Lambda'}  \nullL_{y,i}(  (\tau_{-x}\p)_{Q_{\Ran}}) =  \sum_{x \in \Lambda'}  \bigl(|\nabla_i \p(y+x)|^2 - |\nabla_i \p(x)|^2 \bigr)\\
= & \,  \sum_{z \in (y + \Lambda') \setminus \Lambda'}  |\nabla_i \p(z)|^2
- \sum_{z \in \Lambda' \setminus (y + \Lambda')} |\nabla_i \p(z)|^2.
\end{align*}
Since $\big( (y + \Lambda') \Delta \Lambda' \big)  \cap \Lambda_{Q_{\Ran}} = \emptyset$,
the sums on the right hand side only involve points $z$ with $z \notin \Lambda_{Q_{\Ran}}$.
If $z \notin \Lambda_{Q_{\Ran}}$ then
$(z + Q_{\Ran}) \cap \Lambda = \emptyset$. In particular $z \notin \Lambda$ and
$z + e_i \notin \Lambda$ and hence $\nabla_i \tilde \p(z) = \nabla_i \p(z)$.
It follows that $\sum_{x \in \Lambda'}  \nullL_{y,i}(  (\tau_{-x}\p)_{Q_{\Ran}}) = 
\sum_{x \in \Lambda'}  \nullL_{y,i}(  (\tau_{-x}\tilde \p)_{Q_{\Ran}})$ and hence
$\nullL_{y,i}$ is a null Lagrangian.

Finally,  if  $\psi$ is the restriction of an affine map, then $\nabla_i \psi(y) = \nabla_i \psi(0)$ and therefore $\nullL_0(\psi) = 0$. 
\end{proof}
\smallskip

We use $\1_{Q_\Ran}$ to denote the identity map on $Q_\Ran$ and
define the  energies $\tilde{U},\tilde{\nullL}, \tilde{E}:(\R^d)^{Q_{\Ran}}\rightarrow \R$  by 
\begin{align}
\tilde{U}(\psi)&=U(\psi_A),    \label{eq:tildeU_de}  \\
\tilde{\nullL}(\psi)&=\nullL(\psi_A)+\frac{\mu}{2d}\nullL_0(\psi  - \1_{Q_0}),   \label{eq:tildeN_de}\\
\tilde{E}(\psi)&=E(\psi_A)+\frac{\mu}{2d}\nullL_0(\psi  - \1_{Q_0}).  \label{eq:tildeE_de}
\end{align}
Those functionals inherit the properties $\tilde{U}+\tilde{\nullL}\geq \tilde{E}$ with equality in a neighbourhood
of rotations (restrictions of rotations are still rotations). Moreover,  from \eqref{eq:Hcoerc2} 
with $G = \1$ we infer that
for any $\psi\in (\R^d)^{Q_{\Ran}}$,
\begin{align}\label{eq:hqfirstway}
\begin{split}
&\, \tilde{E}(\1_{Q_{\Ran}}+\psi)\\
=&\, E(\1_{A}+\psi_A)+\frac{\mu}{2d}\frac{1}{\Ran(\Ran+1)^{d-1}}\sum_{i=1}^d\sum_{x,x+e_i\in Q_{\Ran}}                           
|\nabla_{i}\psi(x)|^2  -\frac{\mu}{2d} |\nabla\psi(0)|^2	\\	
\geq&\, \tilde{E}(\1_{Q_{\Ran}})+D \tilde{E}(\1_{Q_{\Ran}})(\psi) 
+\frac{\mu}{2d}\frac{1}{\Ran(\Ran+1)^{d-1}}\sum_{i=1}^d\sum_{x,x+e_i\in Q_{\Ran}} |\nabla_{i}\psi(x)|^2                                                                              
\end{split}
\end{align}
where we used that $D\nullL_0(0)=0$ and $\nullL_0(0)=0$.

Recalling Lemma~\ref{L:U-Uscr},
we use the isomorphism $\Pi:\Gcal_{\Ran}\to \Vcal_{Q_{\Ran}}^\perp$ to define the functions $\Uscr,\Escr,\Nscr:\Gcal_{\Ran}\to \R$
by
\begin{align}  \label{eq:Uscr_de}	
\Uscr(z)=\tilde{U}(\Pi (z)+\1_{Q_{\Ran}}), \quad 
\Escr(z)=\tilde{E}(\Pi (z)+\1_{Q_{\Ran}}), \quad \hbox{and} \quad 
 \Nscr(z)=\tilde \nullL(\Pi (z)+\1_{Q_{\Ran}}).
\end{align}
In view  of \eqref{eq:propofHineq} and \eqref{eq:propofHeq}  the functions $\Uscr$, $\Escr$, and $\Nscr$ satisfy
\begin{align}\label{eq:propofHscrineq}
\Uscr(z)+\Nscr(z)&\geq \Escr(z),
\\ \label{eq:propofHscreq}
\Uscr(z)+\Nscr(z) &= \Escr(z) \quad \text{for $z$ close to $0$}.
\end{align}
Moreover, the definition of $\Uscr$ implies that 
\begin{align}  
U( (\1 + F+ (\tau_{-x} \p)_A)=\Uscr(\overline{F}+D\p(x)). 
\end{align}
Hence the Hamiltonian for the discrete elasticity model defined in
\eqref{eq:defHF_elastic} can be written as
\begin{align}  
\label{E:HN^1}
\HN^{\1 +F}(\p)=\sum_{x\in T_N} \Uscr(\overline{F}+D\p(x)). 
\end{align}
The functionals $\Uscr$, $\Nscr$, and $\Escr$ are differentiable
since they are a composition of a differentiable map  and a linear map.
Moreover,  \eqref{eq:propofHscrineq}, \eqref{eq:propofHscreq}, and the bound \eqref{eq:hqfirstway} 
imply that there is $\omega_1>0$ such that for all $z\in \Gcal_{\Ran}$
\begin{align}
\begin{split}
\Uscr(z)+\Nscr(z)&\geq \Escr(z)\geq  \Escr(0)+D\Escr(0)(z) + \omega_1 |z|^2 \\ 
& =  \, (\Uscr + \Nscr)(0) +  D(\Uscr + \Nscr)(0)z + \omega_1 |z|^2.   
\label{eq:hqsecway}  
\end{split}                                
\end{align}
where we used that $\sum_{i=1}^d\sum_{y,y+e_i\in Q_{\Ran}} \abs{\nabla_{i}\psi(y)}^2$ 
defines a norm on $\Vcal^\perp_{Q_{\Ran}} \simeq \Gcal_{\Ran}$ and  all norms on  a finite dimensional space are equivalent.

\smallskip

We now proceed as follows. 
We first show that the free energy and the Gibbs measure for the discrete elasticity
model with interaction potential $U$ can be expressed in terms of a generalized gradient model with potential $\Uscr+\Nscr$,
see  Lemma~\ref{le:equivalence_de_ggm} below. 
Next show that under the assumptions (H1)--(H5) the potential $\Uscr + \Nscr$ satisfies the conditions 
in Proposition~\ref{prop:embedding}, see Lemma~\ref{le:embedding_de} below. 
Once this is done the main results for discrete elasticity,  Theorem~\ref{T:deW}  and Theorem~\ref{T:descaling}, 
will follow easily from the abstract perturbation results,
Theorem~\ref{th:pertcomp} and Theorem~\ref{th:scalinglimit}.

As in \eqref{eq:define_QU_V}  and \eqref{eq:defofUbar}, we define the quadratic part
\begin{align}
\Qscr_{\Uscr + \Nscr}(z) := D^2(\Uscr + \Nscr)(0)(z,z)
\end{align}
and the function 
\begin{align}
\label{eq:defofU+Nbar}
\overline{(\Uscr+\Nscr)}(z,F)=(\Uscr+\Nscr)(z+\overline{F})- (\Uscr+\Nscr)(\overline{F})- 
D(\Uscr+\Nscr)(\overline{F})(z)-\frac{\Qscr_{\Uscr + \Nscr}(z)}{2}.
\end{align}
Note that \eqref{eq:hqsecway} implies 
\begin{align} \label{eq:lower_bound_Q_U+N}
 \Qscr_{\Uscr + \Nscr}(z) \geq 2\omega_1 |z|^2
\end{align}
Since $U$ and $N$  and  hence $\Uscr$ and $\Nscr$ are $C^2$ we also have
\begin{align} \label{eq:upper_bound_Q_U+N}
 \Qscr_{\Uscr + \Nscr}(z) \leq \frac{1}{\omega_2} |z|^2
\end{align}
for some $\omega_2 > 0$. 

Lemma~\ref{le:null_lagrangian_periodic} implies that
\begin{align}\label{eq:Nprop}
	L^{Nd}\Nscr(\overline{F})=\sum_{x\in T_N} \Nscr(D\p(x)+\overline{F}). 
\end{align}
Writing $\Uscr = - \Nscr + (\Uscr + \Nscr)$ and using   \eqref{E:HN^1}, \eqref{eq:Nprop}, and \eqref{eq:defofU+Nbar}, we find
\begin{align}
\begin{split}\label{eq:EandV}
\HN^{\1 +F}(\p) = -L^{Nd}\Nscr(\overline{F}) & + \sum_{x\in T_N}(\Uscr+\Nscr)(D\p(x)+\overline{F})
\\= L^{Nd}\Uscr(\overline F) + &\sum_{x\in T_N}\overline{(\Uscr+\Nscr)}(D\p(x),F)
\\&+ \sum_{x\in T_N}\Bigl(D(\Uscr+\Nscr)(\overline{F})(D\p(x))+\frac{\Qscr_{\Uscr + \Nscr}(D\p(x))}{2}\Bigr)
\\ = L^{Nd}\Uscr(\overline{F}) + &\sum_{x\in T_N} 
\Bigl(\overline{(\Uscr + \Nscr)}(D\p(x),F)+\frac{\Qscr_{\Uscr + \Nscr}(D\p(x))}{2} \Bigr). 
\end{split}
\end{align}
In the last equality we used  the identity $\sum_{x\in T_N}D\p(x)=0$. 
As a result,  the partition function for the discrete elasticity model defined in \eqref{eq:defofZ} can be expressed  as
\begin{align}  \label{eq:de_partition_rewritten}
Z_{\beta,N}^U(\1 + F,  f)) = e^{-\beta L^{Nd}\Uscr(\overline{F})}
Z_{\beta,N}^{\Qscr_{\Uscr + \Nscr}} 
\Zcal_{\beta,N}^{\Uscr + \Nscr}\Bigl(F, \frac{f}{\sqrt \beta}\Bigr) 
\end{align}
where 

\begin{align}   \label{eq:Gaussian_part_Qscr_Uscr_plus_Nscr}
Z_{\beta,N}^{\Qscr_{\Uscr + \Nscr}} = 
\int_{\Xcal_N}  e^{-\frac12 \beta \sum_{x \in T_N} \Qscr_{\Uscr+\Nscr}(D\p(x))} \, \mu(\d \p)   
\end{align}
and
\begin{align}\label{eq:Zpertcomp_rewritten}
\Zcal_{\beta,N}^{\Uscr +\Nscr}(F,f):=
\int_{\Xcal_N}e^{(f,\p)}\sum_{X\subset T_N}\prod_{x\in X}K_{F,\beta,\Uscr+\Nscr}(D\p(x))\, \mu(\d \p),                                                             
\end{align}
with $K_{F, \beta, \Uscr + \Nscr}$  defined by replacing $\Uscr$ by $\Uscr + \Nscr$ and 
$\overline{\Uscr }$ by $\overline{\Uscr + \Nscr}$ in   \eqref{eq:defofKuVbeta}, 
\eqref{eq:defofUbar} and \eqref{eq:define_QU_V}.
The calculations so far can be summarised as follows.

\begin{lemma} \label{le:equivalence_de_ggm}
Let $Z_{\beta,N}^{U}(F,0)$ denote the partition function of the discrete elasticity model
with interaction $U$ and deformation $F$, let $\gamma_{\beta,N}^{F, U}$ denote the corresponding
finite volume Gibbs measure, let 
\begin{align*}  
W_{\!\beta, N}^U(F) = - \frac{\ln Z_{\beta,N}(F, 0)}{\beta L^{Nd}},
\end{align*}
and let 
$$ 
\Wcal^{U}_{\beta,N}(F) = \beta(W_{\!\beta, N}^U(F) - U(F)) + \frac{\ln Z^{Q_U}_{\beta,N}}{L^{Nd}}
$$
be the perturbative component of the free energy introduced in \eqref{eq:defvarsigmaN}. 
 Let $Z_{\beta,N}^{\Uscr + \Nscr}(F,0)$ denote the partition function of the generalized
gradient model with interaction $\Uscr + \Nscr$ and deformation $F$,  let $\gamma_{\beta,N}^{F, \Uscr + \Nscr}$
be the corresponding Gibbs measure, and let 
$$ 
\Wcal^{\Uscr + \Nscr}_{\beta,N}(F)  = - \frac{  \ln \Zcal_{\beta,N}^{\Uscr + \Nscr}(F, 0)    }{   L^{Nd}    }
$$
be  the quantity in \eqref{eq:defvarsigmaN}.
Then 
\begin{align}  \label{eq:equivalence_Z}
Z_{\beta,N}^{U}(\1+ F,0) = & \,  e^{\beta L^{Nd} \Nscr( \overline F)}   Z_{\beta,N}^{\Uscr + \Nscr}(F,0),\\
  \label{eq:equivalence_myW}
 \Wcal^{U}_{\beta,N}(\1 + F) = &\,  \Wcal^{\Uscr + \Nscr}_{\beta,N}(F),\\
\label{eq:equivalence_laplace} 
\mathbb{E}_{   \gamma_{\beta,N}^{\1 + F, U}} e^{(f, \p)} = & \, \mathbb{E}_{ \gamma_{\beta,N}^{F, \Uscr + \Nscr} } e^{(f,\p)}.
\end{align}
\end{lemma}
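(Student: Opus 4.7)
The plan is to read off the three identities of the lemma from the rewriting of the Hamiltonian performed in~\eqref{eq:EandV}, with a single non-trivial bookkeeping step for~\eqref{eq:equivalence_myW}. The starting point is the identity, already established in the lines above the lemma,
\begin{equation*}
H_N^{\1+F}(\p) = -L^{Nd}\Nscr(\overline F) + \sum_{x \in T_N}(\Uscr+\Nscr)(D\p(x) + \overline F),
\end{equation*}
which is a consequence of the null-Lagrangian identity $\sum_x\Nscr(D\p(x)+\overline F) = L^{Nd}\Nscr(\overline F)$ supplied by Lemma~\ref{le:null_lagrangian_periodic}. Exponentiating and integrating against $\lambda_N$, the first term on the right pulls out as the overall factor $e^{\beta L^{Nd}\Nscr(\overline F)}$, and the remaining integral is, by the definition~\eqref{eq:defofZ} applied to the potential $\Uscr+\Nscr$, precisely $Z^{\Uscr+\Nscr}_{N,\beta}(F,0)$. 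Under the natural identification $\Nscr(F) = \Nscr(\overline F)$, which follows from the definition of $\Nscr$ via $\Pi$ together with the vanishing of $\nullL_0$ on affine maps and the fact that $\nullL_{\mathrm{det}}$ agrees with $\det$ on restrictions of linear maps, this yields~\eqref{eq:equivalence_Z}. The Laplace identity~\eqref{eq:equivalence_laplace} then follows by the same manipulation in the presence of the source $(f,\p)$, which is untouched by the rewriting of the Hamiltonian: the factor $e^{\beta L^{Nd}\Nscr(\overline F)}$ appears in both $Z^U_{N,\beta}(\1+F,f)$ and $Z^U_{N,\beta}(\1+F,0)$ and cancels in the ratio.

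For the free-energy identity~\eqref{eq:equivalence_myW}, I would take logarithms of~\eqref{eq:equivalence_Z}, combine with the representation~\eqref{eq:de_partition_rewritten}, and use $\Uscr(\overline F) = U(\1+F)$ (immediate from~\eqref{eq:tildeU_de},~\eqref{eq:Uscr_de} and shift-invariance of $U$). A direct substitution then gives
\begin{equation*}
\Wcal^U_{N,\beta}(\1+F) = \Wcal^{\Uscr+\Nscr}_{N,\beta}(F) + \frac{1}{L^{dN}}\ln\frac{Z^{Q_U}_{N,\beta}}{Z^{\Qscr_{\Uscr+\Nscr}}_{N,\beta}}.
\end{equation*}
The real content of~\eqref{eq:equivalence_myW} is thus the identity $Z^{Q_U}_{N,\beta} = Z^{\Qscr_{\Uscr+\Nscr}}_{N,\beta}$ between the two Gaussian normalisations, and this is the main nontrivial step of the proof. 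I would establish it in two parts. First, shift-invariance of $U$ yields $D^2U(\1)(c,\cdot)=0$ for any constant field $c$, so $Q_U(\p_{\tau_x(A)})$ is unaffected by subtracting the mean of $\p_{\tau_x(A)}$; translating this through the definition of $\Uscr$ via $\Pi$ gives $\sum_x Q_U(\p_{\tau_x(A)}) = \sum_x\Qscr_\Uscr(D\p(x))$ pointwise in $\p$, and hence $Z^{Q_U}_{N,\beta} = Z^{\Qscr_\Uscr}_{N,\beta}$.

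Second, I would show that $\sum_x\Qscr_\Nscr(D\p(x)) = 0$ for every $\p\in\Xcal_N$, which immediately implies $Z^{\Qscr_{\Uscr+\Nscr}}_{N,\beta} = Z^{\Qscr_\Uscr}_{N,\beta}$. The key observation is that $\Nscr$ is a polynomial in $z$ of degree at most $d$, being built from the discrete determinant $\alpha\nullL_{\mathrm{det}}$ provided by Theorem~\ref{th:convixification} and from the quadratic $\nullL_0$ of Lemma~\ref{le:N_0}. Applying the null-Lagrangian identity with $F=0$ therefore yields the polynomial identity $\sum_x\Nscr(D\p(x)) = L^{Nd}\Nscr(0)$, which must hold order by order in the homogeneity degree in $\p$. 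The linear term vanishes since $\sum_x\nabla^\alpha\p(x) = 0$ on the torus for every $\alpha\in\Ical_\Ran$; the quadratic term is precisely $\tfrac12\sum_x\Qscr_\Nscr(D\p(x))$, which must therefore vanish identically. All remaining content of the lemma is then direct substitution of the Hamiltonian rewriting into the definitions of the various partition functions, free energies and Gibbs expectations.
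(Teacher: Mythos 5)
Your proposal is correct and follows essentially the same route as the paper: the factor $e^{\beta L^{Nd}\Nscr(\overline F)}$ is pulled out via the periodic null-Lagrangian identity (giving \eqref{eq:equivalence_Z} and, by cancellation in the ratio, \eqref{eq:equivalence_laplace}), and \eqref{eq:equivalence_myW} is reduced to $\sum_{x}\Qscr_{\Nscr}(D\p(x))=0$, which you obtain by homogeneity decomposition of $\sum_x \Nscr(D\p(x))=L^{Nd}\Nscr(0)$ exactly as the paper does by polarising $\Nscr(sD\p+tD\p)$. Your additional explicit verification that $Z^{Q_U}_{N,\beta}=Z^{\Qscr_{\Uscr}}_{N,\beta}$, via shift-invariance of $D^2U(\1)$, fills in a step the paper treats as tacit and is carried out correctly.
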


\begin{proof} 
Equation \eqref{eq:initialfinal}, applied to $\Uscr + \Nscr$ instead of $\Uscr$ gives
\begin{align}  
\begin{split}
 Z_{\beta,N}^{\Uscr + \Nscr}(F,f) =\hspace{5pt} &   e^{-\beta L^{Nd}(\Uscr(\overline{F}) + \Nscr(\overline F))}   Z_{\beta,N}^{\Qscr_{\Uscr + \Nscr}} 
\Zcal_{\beta,N}^{\Uscr + \Nscr}\Bigl(F,\frac{ f}{\sqrt \beta}\Bigr)  \\
\underset{ \eqref{eq:de_partition_rewritten}}{=}  &
e^{-\beta L^{Nd}\Nscr(\overline{F})}     Z_{\beta,N}^{U}(\1+ F,f)    \label{eq:consequence_Zpertcomp}
\end{split}
\end{align}
Taking $f=0$ we get \eqref{eq:equivalence_Z}.
Dividing both sides by the corresponding expression for $f=0$ we get  \eqref{eq:equivalence_laplace}.
It follows from  
\eqref{eq:de_partition_rewritten} that 
\begin{align}
\begin{split}
 \, L^{Nd} \Wcal^{\Uscr + \Nscr}_{\beta,N}(F)& = - \ln Z_{\beta,N}^{U}(\1+ F,0) - \beta L^{Nd} \Uscr(\overline F) + \ln Z_{\beta,N}^{\Qscr_{\Uscr + \Nscr}}\\
 &=  \,  L^{Nd} \Wcal^{U}_{\beta,N}(\1 + F) +  \ln Z_{\beta,N}^{\Qscr_{\Uscr + \Nscr}} -  \ln Z_{\beta,N}^{\Qscr_{\Uscr}}
\end{split} 
 \end{align}
Thus \eqref{eq:equivalence_myW} follows if we can show that
\begin{align}  \label{eq:independence_Nscr_1}
Z_{\beta,N}^{\Qscr_{\Uscr + \Nscr}} = Z_{\beta,N}^{\Qscr_{\Uscr}}.
\end{align}
To prove  \eqref{eq:independence_Nscr_1},  note that Lemma~\ref{le:null_lagrangian_periodic}      implies that
$$ \sum_{x \in T_N} \Nscr(s D\p(x) + t D\p(x)) = \sum_{x \in T_N} \Nscr(0) = 0.$$
Taking the derivative with respect to $s$ and $t$ at $s=t=0$ we obtain that
$ \sum_{x \in T_N} \Qscr_\Nscr(D\p(x)) = 0$. This yields
\eqref{eq:independence_Nscr_1}.
\end{proof}

We now prove that the potential $\Uscr + \Nscr$ satisfies the conditions in Proposition \ref{prop:embedding}
so that we can apply the results from the previous chapter.

\begin{lemma}  \label{le:embedding_de}
Under the hypotheses (H1), (H2), (H3), (H4), and (H5) 
in Section~\ref{sec:discrete_elasticity_main} the function $\Uscr + \Nscr$ 
satisfies the assumptions of Proposition \ref{prop:embedding}, i.e.,
\begin{align}
\label{eq:V12}    \Uscr+\Nscr\in C^{r_0+r_1} (\Gcal_{\Ran}),                                                                      \\ 
\label{eq:V1bis2}  \omega_0 |z|^2 \le \Qscr_{\Uscr + \Nscr}(z) \le \omega_0^{-1}|z|^2, \\
\label{eq:V32}     \Uscr(z)+\Nscr(z)  -  \bigl(\Uscr(0)-\Nscr(0)\bigr) - D\bigl(\Uscr(0)+\Nscr(0)\bigr) (z) \geq \omega_1 |z|^2 ,  
\text{ and}                  \\ 
\label{eq:V42}   \lim_{t \to \infty} t^{-2} \ln \Psi(t) = 0 \quad \hbox{where} 
\quad \Psi(t) := \sup_{|z| \le t}  \sum_{3 \le |\alpha| \le r_0+r_1}   \frac{1}{\alpha!} |\partial^\alpha \bigl(\Uscr(z)+\Nscr(z)\bigr)|.
\end{align}	
with $\omega_0 = \min(2 \omega_1, \omega_2)$, where $\omega_1$ and $\omega_2$ are the constants in
\eqref{eq:hqsecway} and   \eqref{eq:upper_bound_Q_U+N}, respectively.
\end{lemma}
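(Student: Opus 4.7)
The plan is to verify conditions \eqref{eq:V12}--\eqref{eq:V42} by combining the properties of $\Uscr$ and $\Nscr$ assembled in Section~\ref{sec:de} with hypotheses (H1)--(H5). Much of the work has already been done: the quadratic bound \eqref{eq:V1bis2} and the coercivity \eqref{eq:V32} are essentially restatements of inequalities appearing earlier in the excerpt, so the only substantive task is to transfer the smoothness and sub-gaussian assumptions on $U$ through the affine reparametrisation $\Pi$ and to control the contribution of the null Lagrangian $\Nscr$.

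First I would dispose of \eqref{eq:V12}. Writing $\Uscr(z) = U((\Pi z + \1_{Q_{\Ran}})|_A)$ exhibits $\Uscr$ as $U$ composed with an affine map into $(\R^d)^A$; since (H5) requires the derivatives of $U$ up to order $r_0+r_1$ to exist, we have $\Uscr \in C^{r_0+r_1}(\Gcal_{\Ran})$. Similarly $\Nscr$ is the sum of a polynomial (coming from $\nullL = \alpha \nullL_{\det}$, of degree $d$) and a quadratic term ($\nullL_0$), hence $\Nscr \in C^\infty(\Gcal_{\Ran})$.

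Next I would collect the quadratic form estimates. The lower bound in \eqref{eq:V1bis2} is exactly \eqref{eq:lower_bound_Q_U+N}, obtained by identifying the Taylor coefficient $D^2(\Uscr+\Nscr)(0)$ in \eqref{eq:hqsecway}; the upper bound is \eqref{eq:upper_bound_Q_U+N}, which follows from the $C^2$ regularity already established. Taking $\omega_0 = \min(2\omega_1, \omega_2)$ gives the required two-sided bound. The coercivity \eqref{eq:V32} is a direct rewriting of \eqref{eq:hqsecway}.

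The main obstacle is the sub-gaussian bound \eqref{eq:V42}. I would split the sum by linearity into contributions from $\Uscr$ and $\Nscr$. For $\Uscr$, writing $\Uscr(z) = U(\Lambda z + \1_A)$ with $\Lambda$ linear and invoking the chain rule (which simplifies since derivatives of $\Lambda$ beyond first order vanish), one obtains
\begin{equation*}
\sum_{3 \le |\alpha| \le r_0+r_1} \tfrac{1}{\alpha!} |\partial^\alpha \Uscr(z)| \le C \sum_{2 \le |\beta| \le r_0+r_1} \tfrac{1}{\beta!} |\partial_\psi^\beta U(\Lambda z + \1_A)|
\end{equation*}
with $C$ depending only on $\Lambda, r_0, r_1$. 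Since $|\Lambda z + \1_A| \le |\Lambda||z| + |\1_A|$, hypothesis (H5) delivers the sub-gaussian bound for the right-hand side. For $\Nscr$, derivatives of order $\ge 3$ come only from $\nullL_{\det}$ and are polynomials of degree at most $\max(d-3,0)$, so they grow polynomially in $|z|$, which is a fortiori sub-gaussian. Adding the two estimates and using $\ln(a+b) \le \ln 2 + \max(\ln a, \ln b)$ then yields \eqref{eq:V42}.
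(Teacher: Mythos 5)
Your proposal is correct and follows essentially the same route as the paper: smoothness via the affine reparametrisation and (H5) together with the polynomial/quadratic nature of $\Nscr$, the two-sided quadratic bound from \eqref{eq:lower_bound_Q_U+N} and \eqref{eq:upper_bound_Q_U+N}, coercivity directly from \eqref{eq:hqsecway}, and the sub-gaussian condition by splitting off the $\Nscr$-polynomial and pulling the $\Uscr$-term back to (H5) through the chain rule. The paper's proof is just a terser statement of these same steps.
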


\begin{proof}
The first condition is a consequence of the smoothness of $\Qscr$, $\Nscr$, and $\Uscr$
which follows by the chain rule from the smoothness of $U$ postulated in (H5) and the smoothness of the polynomial $\nullL$.  
The second condition follows from  \eqref{eq:lower_bound_Q_U+N} and   \eqref{eq:upper_bound_Q_U+N}.
The third condition follows from  \eqref{eq:hqsecway}.
The last condition follows from the fact that the 
$\Uscr$-term is controlled by (H5) and the chain rule and  that $\Nscr$ is a polynomial.
\end{proof}

Finally we  show how to  deduce the results for the discrete elasticity model  from
the results for abstract perturbations of quadratic Hamiltonians.

\begin{proof}[Proof of Theorem~\ref{T:deW}]
\emph{Uniform smoothness of $W^{U}_{\!\beta,N}$ in $B_\delta(\1)$ and convergence}.\\
By \eqref{eq:equivalence_myW} and the definition of $\Wcal^{U}_{\!\beta, N}$  (see
 Lemma~\ref{le:equivalence_de_ggm}) we have
 \begin{align}   \label{eq:expansion_W_beta_elasticity}
 W^{U}_{\!\beta,N}(\1 + F)  = & \, U(\1+ F) + \frac1\beta \Wcal^{U}_{\!\beta,N}(\1+ F) - c_{\beta,N} \\
 = & \, \, U(\1+ F) + \frac1\beta \Wcal^{\Uscr + \Nscr}_{\!\beta,N}(F) - c_{\beta,N}  \nonumber 
 \end{align}
 where 
 \begin{align}  \label{eq:proof_2dot6_c_beta_N}  c_{\beta,N}  = 
 \frac{1}{\beta L^{Nd} } \ln Z^{Q_\Uscr}_{\beta,N} =  \frac{1}{\beta L^{Nd} } 
 \int_{\Xcal_N} e^{- \beta \sum_{x \in T_N} Q_{\Uscr}(D\p(x))}  \, \lambda_N(d\p)
 \end{align}
Recall that, for $G \in \R^{d \times d}$ we use the shorthand notation  $U(G)$ for the expression $U(\psi_G|_{A})$ where $\psi_G(x) = G x$.

 By Lemma~\ref{le:embedding_de} the potential $\Uscr + \Nscr$ satisfies the assumption of Proposition \ref{prop:embedding}
(with $\omega := \frac{\omega_0}{8}$).
Thus Lemma~\ref{le:embedding_de} and  Theorem~\ref{th:pertcomp} imply that 
there exists a $\delta >0$ such that the derivatives of $\Wcal^{\Uscr + \Nscr}_{\!\beta,N}$
up to order $r_1$ are uniformly bounded. The assertion about convergence of subsequences 
follow from the Arzela-Ascoli theorem.

\medskip
\emph{Convexity properties of $W^{U}_{\!\beta,N}$ and $W^{U}_{\!\beta,N} + \nullL$.}

 By (H3)  the  quadratic form $D^2 U(\1)$ is positive definite on symmetric matrices.
 Thus  
restriction of $U$ to $B_\delta(\1) \cap \R^{d \times d}_{\mathrm{sym}}$ is uniformly convex if $\delta > 0$ is chosen sufficiently small. Hence   \eqref{eq:expansion_W_beta_elasticity} and the uniform bounds
 on $D^2 \Wcal^{\Uscr + \Nscr}_{\!\beta,N}$  in $B_\delta(\1)$ imply that 
$W^{U}_{\!\beta,N}$ is uniformly convex on $B_\delta(\1) \cap \R^{d \times d}_{\mathrm{sym}}$
if  $\beta \ge \beta_0$ with $\beta_0$ sufficiently large.

By  Theorem~\ref{th:convexification}, the function $U+ \nullL$ agrees with a uniform convex function $E$ in $B_\delta(\1)$ if $\delta > 0$
is chosen sufficiently small, Thus the uniform  bounds
 on $D^2 \Wcal^{\Uscr + \Nscr}_{\!\beta,N}$ in $B_\delta(\1)$ imply that
 $W_{\!\beta,N}^{U} +\nullL$ is uniformly convex in $B_\delta(\1)$ if $\beta \ge \beta_0$, for sufficiently large $\beta_0$.
 
 \medskip
 \emph{Uniform quasiconvexity at points in $B_{\delta/2}(\1)$}.\\
We need to show  that there
exists a $c_0 > 0$ such that  for all open bounded set $\Omega$ and
all $C^1$ function $\phi: \Omega \to \R^d$ with compact support on $\Omega$ we have
\begin{equation} \label{eq:strict_qc_proof}
\int_\Omega W^{U}_{\!\beta, N}(G + \nabla \phi) -  W^{U}_{\!\beta, N}(G) \, dx \ge c_0 \int_\Omega
|\nabla \phi|^2 \, dx.
\end{equation}
whenever $G \in B_{\delta/2}(\1)$

We first note that that it suffices to show that
\begin{equation} \label{eq:global_convexity_W_plus_N}
(W_{\!\beta,N}^{U} + \nullL)(G+G') - (W_{\!\beta,N}^{U} + \nullL)(G) -   D(W_{\!\beta,N}^{U} + \nullL)(G) G'
\ge c_0 |G'|^2,
\end{equation}
for all $G \in B_{\delta/2}(\1)$ and all $G' \in \R^{d \times d}$, whenever  $\beta \ge \beta_0$ with $\beta_0$ sufficiently
large. 
Indeed, it suffices to apply \eqref{eq:global_convexity_W_plus_N} with $G' = \nabla \phi(x)$, integrate over
$x$ and use that
$$ \int_\Omega  \nabla \phi(x) \, dx = 0,$$
by the divergence theorem,  and
$$ \int_\Omega \nullL(G + \nabla \phi) - \nullL(G) \, dx  = 0$$
by Proposition~\ref{pr:null_lagrangian_continuous},   since $\nullL$ is a multiple of the determinant.

For $|G'| < \delta/2$ the inequality  \eqref{eq:global_convexity_W_plus_N}
follows from that fact that $W_{\!\beta,N}^{U} + \nullL$ is uniformly convex in $B_\delta(\1)$. 

For $|G'| \ge \delta/2$ we argue as follows.
By  Theorem~\ref{th:convexification},  $U + N \ge E$ for a $C^2$ map $E$ which is uniformly convex on $\Vcal_A^\perp$. Since $\nullL$ is a discrete null Lagrangian, it follows directly from the definition 
of $W_{\!\beta,N}^{U}$   (see \eqref{eq:defWN} and  \eqref{eq:defofZ}) that
\begin{align*}
W_{\!\beta,N}^{U} + \nullL = W_{\!\beta,N}^{U + \nullL} \ge W_{\!\beta,N}^{E} 
\end{align*}
We will show below that the uniform convexity of $E$ on $\Vcal_A^\perp$ implies that
\begin{equation}   \label{eq:lower_bound_free_E}
 W_{\!\beta,N}^{E} \ge E - c'_{\beta,N} 
\end{equation}
with 
\begin{equation}
c'_{\beta,N} := \frac{1}{\beta L^{Nd}}   \ln  \int_{\Xcal_N} e^{-\beta c_2 \sum_{x\in T_N}  |\nabla \p|^2(x)} d\lambda_N
\end{equation}
for some $c_2 > 0$.

On the other hand it follows from   \eqref{eq:expansion_W_beta_elasticity}, the uniform bounds on
$ \Wcal^{\Uscr + \Nscr}_{\!\beta,N}$ and on $D \Wcal^{\Uscr + \Nscr}_{\!\beta,N}$ in $B_\delta(\1)$, 
and the identity $U+N = E$ in $B_\delta(\1)$ that
\begin{eqnarray}   \label{eq:comparison_W_U_plusN_E}
(W_{\!\beta,N}^{U} + \nullL)(G) - E(G) &\le& c\beta^{-1} - c_{N,\beta}, \\
|D(W_{\!\beta,N}^{U} + \nullL)(G) - DE(G)| &\le& c \beta^{-1}.
  \label{eq:comparison_DW_U_plusN_E}
\end{eqnarray}
The uniform convexity of $E$ on  $\Vcal_A^\perp$ implies
that, for all $G' \in \R^{d\times d}$, 
\begin{equation}  \label{eq:convexity_E}
E(G+G')  - E(G) - DE(G) G' \ge c_1  |G'|^2.
\end{equation}
Thus 
\begin{align} \label{eq:global_convexity_W_plus_N_bis}
& \, (W_{\!\beta,N}^{U} + \nullL)(G+G') - (W_{\!\beta,N}^{U} + \nullL)(G) -   D(W_{\!\beta,N}^{U} + \nullL)(G) G'  \\
\ge & \, 
c_1 |G'|^2 - c'_{N,\beta} + c_{N, \beta} -  \frac{c}{\beta}( 1+ |G'|)
\end{align}
Thus,  for $c_0 \in (0, c_1)$,  we obtain the desired estimate \eqref{eq:global_convexity_W_plus_N}
for $|G'| \ge \delta_0/2$ and $\beta \ge \beta_0$ with $\beta_0$ sufficiently large, 
provided that we can show that
\begin{equation}   \label{eq:bound_cN_beta}
c_{\beta,N} - c'_{\beta,N}   \ge - c \beta^{-1}  
\end{equation}
for $\beta \ge \beta_0$.

Thus it only remains to show   \eqref{eq:lower_bound_free_E} and  \eqref{eq:bound_cN_beta}.
We first show   \eqref{eq:lower_bound_free_E}.
By \eqref{eq:Hcoerc2} we have
\begin{equation} \label{eq:pointwise_lower_bound_D2E}
E(F + \psi_A) \ge E(F) + DE(F)   \psi_A \ge c_2 |\nabla \psi(0)|^2  \quad \text{for all $\psi \in \Xcal_N$,}  
\end{equation}
with $c_2 = \mu/(2d)$.

By definition, every $\p \in \Xcal_N$ satisfies $\sum_{x \in T_N} \p(x) = 0$. Thus
$\sum_{x \in T_N} (\tau_{-x} \p)_A = 0$. 
Thus the definition    \eqref{eq:defHF_elastic}  of the Hamiltonian  $\HN^{E,F}$ and  \eqref{eq:pointwise_lower_bound_D2E}
imply that 
\begin{align*}
\HN^{E,F}(\p) =  & \,  \sum_{x\in T_N} E((\tau_{-x} \p)_A +F)  \ge L^{Nd} E(F) + c_2 \sum_{x \in T_N} |\nabla (\tau_{-x} \varphi)(0)|^2  \\ 
=  & \, 
L^{Nd} E(F) + c_2 \sum_{x \in T_N} |\nabla \varphi(x)|^2 .
\end{align*}
It follows that
\begin{align}   \label{eq:estimate_W_E_free}
W^{E}_{\!\beta,N}(F) = & \,  -\frac{1}{\beta L^{Nd}}  \ln \int_{\Xcal_N} e^{-\beta \HN^{E,F}(\p)} \lambda_N(d\p) 
 \\
\ge & \, E(F)   -\frac{1}{\beta L^{Nd}}   \ln  \int_{\Xcal_N} e^{-\beta c_2 \sum_{x\in T_N}  |\nabla \p|^2(x)} \lambda_N(d\p)
\nonumber
\end{align}
and this proves  \eqref{eq:lower_bound_free_E}.

To show  \eqref{eq:bound_cN_beta}, we  first use the
 the change  of variables $\p \mapsto \sqrt{\beta c_2} \p$ in the definition of $c'_{\beta,N}$.
Since $\lambda_N$ is the $L^{Nd}-1$ dimensional Hausdorff measure on $\Xcal_N$, we get 
 \begin{align}   \label{eq:bound_cN_prime_beta}  
 c'_{\beta,N} =  & \, \frac{1}{\beta L^{Nd}}  
 \ln  \int_{\Xcal_N} e^{-\beta c_2 \sum_{x\in T_N}  |\nabla \p|^2(x)}  \lambda_N(d\p)
\\
=  & \,  \frac1\beta \frac{L^{Nd}-1}{L^{Nd}} 
  \ln \sqrt{\beta c_2} + 
  \frac{1}{\beta L^{Nd}}   \ln  \int_{\Xcal_N} e^{- \sum_{x\in T_N}  |\nabla \p|^2(x)} \lambda_N(d\p).
   \nonumber 
  \end{align}
Next,  we derive a lower bound for the quantity $c_{\beta,N}$ given by \eqref{eq:proof_2dot6_c_beta_N}.
To get an upper bound for $Q_\Uscr(D\p(x))$,  we note that by assumption the elastic interaction energy
$U: (\R^d)^A \to \R$ is $C^2$. Hence the energy $\Uscr: \Gcal \to \R$ defined by 
  \eqref{eq:tildeU_de}  
   and \eqref{eq:Uscr_de} is also $C^2$. It follows that $Q_\Uscr(z) = D^2 \Uscr(0)(z,z)$ satisfies the estimate
$ Q_\Uscr(z) \le c |z|^2$.
 Now recall that
the generalized gradient $D\p(x)$ can be expressed as a linear combinations the values  $\nabla \p(y)$  
for $y \in x + Q_{R_0}$. Thus we have the  estimate
$$ \sum_{x \in T_N} |D\p(x)|^2 \le C \sum_{x \in T_N} |\nabla \p(x)|^2.$$
Hence 
$$ \sum_{x \in T_N} Q_\Uscr(D\p(x)) \le c_3 \sum_{x \in T_N} |\nabla \p(x)|^2,$$
and using the  change of variables $\p \mapsto \sqrt{ \beta c_3} \p$ we get,  as in   \eqref{eq:bound_cN_prime_beta}, 
\begin{align*}
c_{\beta,N} = & \,  \frac{1}{\beta L^{Nd}} \int_{\Xcal_N}  e^{-\beta \sum_{x \in T_N} 
Q_{\Uscr}(D\p(x))} \, \lambda_N(d\p)  \\
\ge & \,   
\frac1\beta \frac{L^{Nd}-1}{L^{Nd}} 
  \ln \sqrt{\beta c_3} + 
  \frac{1}{\beta L^{Nd}}   \ln  \int_{\Xcal_N} e^{- \sum_{x\in T_N}  |\nabla \p|^2(x)} \lambda_N(d\p)
\end{align*}
Subtracting the identity \eqref{eq:bound_cN_prime_beta}
 for $c'_{\beta,N}$,  we get  \eqref{eq:bound_cN_beta}.
\end{proof}

\begin{proof}[Proof of Theorem~\ref{T:descaling}]
By Lemma~\ref{le:embedding_de} the potential $\Uscr + \Nscr$ satisfies the assumption of Proposition \ref{prop:embedding}
(with $\omega := \frac{\omega_0}{8}$) and thus  Theorem~\ref{th:scalinglimit_concrete} can be applied to the
generalized gradient model. Together with \eqref{eq:equivalence_laplace} 
this gives
\begin{align}
\lim_{\ell\rightarrow \infty}    \mathbb{E}_{ \gamma_{\beta,N_\ell}^{\1 + F} }  e^{(f_{N_{\ell}},\p)}= 
e^{\frac1{2\beta}(f,\mathscr{C}_{\mathbb{T}_d}f)}                                                        
\end{align}
where $\mathscr{C}_{\mathbb{T}_d}$ is the inverse of the operator $\mathscr{A}_{\mathbb{T}_d}$ given by
\begin{align} \label{eq:limiting_operator3}
(\mathscr{A}_{\mathbb{T}_d}u)_s=-\sum_{t=1}^d \sum_{i,j=1}^d (\boldsymbol{Q}_{\Uscr+\Nscr} -\boldsymbol{q})_{i,j;s,t} \partial_i\partial_ju_t. 
\end{align}

It only remains to show that in the definition of $\mathscr A$ we may replace $\boldsymbol{Q}_{\Uscr+\Nscr}$
by $\boldsymbol{Q}^\nabla_U$.

First note that  the operator $\mathscr A_{\mathbb{T}_d}$ depends only on the action of $\boldsymbol Q_{\Uscr + \Nscr}$ 
on the subspace $\mathcal G^\nabla_{\Ran}$.
Now each $z \in \mathcal G^\nabla_{\Ran}$ is of the form $z = \overline{F} = DF$ where $F: \R^d \to \R^d$ is linear. By the definition 
of $\Uscr$ and $\Nscr$, see  \eqref{eq:tildeU_de}--\eqref{eq:tildeE_de} and \eqref{eq:Uscr_de}, we have
\begin{align}
(\Uscr + \Nscr)(\overline F) = (U + \nullL)(\1_A + F_A) + \frac{\mu}{2d} \nullL_0(\1_{Q_{\Ran}} +
 F_{Q_{\Ran}}) = (U + \nullL)(\1_A + F_A)
\end{align}
since $\nullL_0$ vanishes on linear maps. 
Thus 
\begin{align}
\boldsymbol{Q}_{\Uscr + \Nscr}(\overline F) = D^2(U + \nullL)(\1)(F_A, F_A) = \boldsymbol{Q}^\nabla_{U+\nullL}(F)
\end{align}
where we used the definition \eqref{eq:define_bsQ_nabla}  of $\boldsymbol{Q}^\nabla_{U+\nullL}$.
It follows that  the operator $\mathscr A_{\mathbb{T}_d}$ can be written as
\begin{align}  \label{eq:limiting_operator4}
(\mathscr{A}_{\mathbb{T}_d}u)_s=-\sum_{t=1}^d \sum_{i,j=1}^d (\boldsymbol{Q}_{U+\nullL}^\nabla-\boldsymbol{q})_{i,j;s,t} \partial_i\partial_j u_t. 
\end{align}
Now $\boldsymbol Q_{U +\nullL}^\nabla = \boldsymbol Q_U^\nabla + \boldsymbol Q_\nullL^\nabla$ and it only remains to show that $\boldsymbol Q_\nullL^\nabla$ generates the zero operator. Multiplying by a test function $g \in C^\infty(\mathbb{T}^d, \R^d)$, 
denoting the scalar product on $\R^d$ by $\cdot$, recalling that $\nullL(F_A) = \alpha \det F$ and
using  that  $\det$ is a null Lagrangian  on maps defined on 
$\mathbb{T}^d$, i.e. $\int_{\mathbb{T}^d} (\det(\1 + \nabla h) - \det \1 ) \, \d x = 0$  for all $h \in C^\infty(\mathbb{T}^d, \R^d)$, 
we get
\begin{align}
&- \int_{\mathbb{T}^d}  g \cdot \sum_{i,j=1}^d (\boldsymbol{Q}_{\nullL}^\nabla)_{i,j} \partial_i\partial_j f \, \d x 
=    \int_\mathbb{T^d}\sum_{i,j=1}^d \partial_i g  \cdot (\boldsymbol{Q}_\nullL^\nabla)_{ij}\partial_jf \\
= & \, \int_\mathbb{T^d}   \alpha D^2\det(\1)(\nabla f,\nabla g)                               
=\frac{\d}{\d s}\frac{\d}{\d t}_{| s=t=0} \int_\mathbb{T^d} \alpha \det(\1+s\nabla f+t\nabla g)=0.                                
\end{align}
Thus in \eqref{eq:limiting_operator4} we may replace $\boldsymbol Q_{U + \nullL}^\nabla$ by $\boldsymbol Q_U^\nabla$ and this finishes the proof of 
Theorem~\ref{T:descaling}.
\end{proof}
	
\begin{remark}\label{rem:coercive}
Completely independent from the analysis of discrete elasticity,  the null Lagrangian $\nullL_0$ introduced in 
Lemma~\ref{le:N_0}
can be used to gain coercivity in generalized gradient models with
$\R^m$ valued fields.
Recall that for $\Uscr: \Gcal \to \R$ we defined $\Qscr_{\Uscr}(z) = D^2 \Uscr(z,z)$. 
Recall also that for  $z \in \Gcal$ the expression  $z^\nabla$ denotes  the orthogonal projection of $z$ to the space 
$\Gcal^\nabla$ of gradients introduced in \eqref{eq:G_nabla}.
Assume that $\Uscr$ satisfies the conditions \eqref{eq:V1_new}  and \eqref{eq:V4_new},
but instead of  \eqref{eq:V1bis_new} and \eqref{eq:V3_new}  the function $\Uscr$ only satisfies the following weaker
condition.
There exists
$\omega'_0$ and $\omega' \in (0, \frac{\omega'_0}{8})$ such that 
\begin{equation}  \label{eq:V1bis_new_weakened}
\omega'_0 \abs{z^\nabla}^2 \le \Qscr_\Uscr(z) \le {\omega'_0}^{-1} \abs{z}^2  \quad \text{for all $z \in \Gcal$,}
\end{equation} 
\begin{equation}
\label{eq:V3_new_weakened}    
 \Uscr(z) - D\Uscr(0) z - \Uscr(0)  \geq  \omega'\abs{z^\nabla}^2   \text{ for all } z \in \mathcal G.
 \end{equation}
 Thus,  in comparison with  \eqref{eq:V1bis_new} and \eqref{eq:V3_new}, in the lower bounds $z$ has been replaced by $z^\nabla$.

 Let $\nullL_0$ denote the null Lagrangian in Lemma~\ref{le:N_0} and define $\Nscr_0$ by
  $\Nscr_0(z) =\nullL(\Pi(z))$, where $\Pi: \Gcal \to \Vcal^\perp$ is the linear  isomorphism introduced in 
 Lemma~\ref{L:U-Uscr}. Then $N_0(\psi) = \Nscr_0(D\psi(0))$ where $D\psi$ denotes the extended gradient
 introduced after \eqref{E:scprodG}. 
 Note that  $\psi \mapsto D\psi(0)$ is an isomorphism from $\Vcal^\perp$ to $\Gcal$ and that 
 for $z = D\psi(0)$ we have $z^\nabla = (\nabla_1 \psi(0), \ldots \nabla_d \psi(0))$. 
 Since $\psi \mapsto \sum_{i=1}^d\sum_{y,y+e_i\in Q_{\Ran}} 
\abs{\nabla_{i}\psi(y)}^2$ is a positive definite quadratic form on $\Vcal^\perp$ and since $\Pi$ and $\Pi^{-1}$ 
 are bounded as linear maps between finite dimensional vector spaces, it follows from  Lemma~\ref{le:N_0}  that
 $$ \mathcal \Nscr_0(z) \ge c_1 |z|^2 -  c_2 \abs{z^\nabla}^2$$ 
 for some $c_1, c_2 > 0$.      
 
Set 
$$ 
\Uscr' = \Uscr + \frac{\omega'}{c_2} \Nscr_0.
$$
Since $\Nscr(0) = 0$ and $D\Nscr(0) = 0$, the assumption \eqref{eq:V3_new_weakened}   implies that
\begin{equation}
 \Uscr'(z) - D\Uscr'(0) z - \Uscr'(0)  \geq \frac{ \omega' c_1}{c_2}\abs{z}^2   \text{ for all } z \in \mathcal G.
 \end{equation}
 Similarly we get
 \begin{equation} Q_{\Uscr'}(z) \ge \frac{ \omega' c_1}{c_2}\abs{z}^2   \text{ for all } z \in \mathcal G.
 \end{equation}
 Thus $\Uscr'$ satisfies  \eqref{eq:V1bis_new} and \eqref{eq:V3_new} with $\omega_0 = \frac{ \omega' c_1}{c_2}$
 and $\omega = \frac{ \omega' c_1}{16 c_2}$. Since $\Nscr_0$ is quadratic, the derivaties of order three and higher agree for $\Uscr$ and $\Uscr'$. 
 Hence $\Uscr'$ also satisfies  \eqref{eq:V1_new} and \eqref{eq:V4_new}.
 It follows that $\Uscr'$ satisfies the assumptions 
 of Theorems~\ref{thm:strictconvexity} and~\ref{th:scalinglimit_concrete}.
 
 Since $\nullL_0$ is a null Lagrangian and $\nullL_0$ vanishes on affine maps (i.e. $\Nscr_0(\overline F) = 0$ for a linear map $F$) 
 we get, as in  \eqref{eq:consequence_Zpertcomp},
 \begin{equation}  \label{eq:nullL_0_generalized_Z}
  Z_{\beta,N}^{\Uscr'}(F,f) =  Z_{\beta,N}^{\Uscr}(F,f) 
 \end{equation}
 Taking $f= 0$ we see that, in particular, 
 \begin{equation}
 W^{\Uscr'}_{\beta, N}(F) = W^{\Uscr}_{\beta, N}(F) 
 \end{equation}
Dividing  \eqref{eq:nullL_0_generalized_Z} by the corresponding identity for $f= 0$ we see
that
\begin{equation}
\mathbb{E}_{   \gamma_{\beta,N}^{F, \Uscr'}} e^{(f, \p)} =  \, \mathbb{E}_{ \gamma_{\beta,N}^{F, \Uscr} } e^{(f,\p)}.
\end{equation}
Hence the conclusions of Theorems~\ref{thm:strictconvexity} and~\ref{th:scalinglimit_concrete} hold for $\Uscr$
and not just for $\Uscr'$. 
\end{remark}

\section{Examples}  \label{se:elasticity_examples}

Our theory applies to a wide range of interactions, including  two-body interactions (as in \cite{FT02})
as well as multibody interactions which appear for example 
 in the Stillinger-Weber potential \cite{SW85} or the Tersoff potential \cite{Ter88}.
 Our definition of the Hamiltonian in  \eqref{eq:defHF} and our  hypotheses (H1) to (H5)  in Section~\ref{sec:discrete_elasticity_main} do, however, 
 impose some structural restrictions on the interaction potentials.
 
First, as in \cite{EM07} or \cite{LXY21}, we need to 
assume that the potentials have finite range, i.e., atoms which have a sufficiently large distance in the 
\emph{reference} lattice $\Z^d$ do not interact. 
As pointed out  in \cite[p.\ 1714]{LXY21},  the finite range assumption is common in atomistic simulations.

Secondly, in contrast to \cite{EM07, LXY21},  we
need to assume in addition a polynomial  lower growth bound of order $d$ for large discrete gradients, see \eqref{eq:H4_growth_at_infinity}.
The lower growth bound enters in two ways. We need a quadratic lower bound to control the contribution of
large fields, see the discussion of large field regulators in Section~\ref{se:main_new_ideas}.
Actually,  a certain quadratic lower bound for the total energy is also implicit in \cite{LXY21} through the use of the harmonic approximation. 

The stronger condition  of a lower growth bound of the order of the dimension $d$ comes from the fact that we want to express  
the free energy as convex function plus a null Lagrangian, namely the discrete determinant, see \cite{CDKM06}. 
Such lower growth conditions are common in continuum elasticity to ensure the existence of global minimisers.

 A third, more technical condition included in our hypotheses is that the potential should be
sufficiently regular, namely of class $C^{r_0 + r_1}$ with  $r_0 \ge 0$ and $r_1 \ge 3$.
In particular the potential should remain finite when the distance of the images of  two different lattice points goes
to zero,  while for most potentials used in practise the potential goes to $+\infty$ in this situation. 
This is mostly a technical restriction and we can, in fact, also allow certain singular potentials, see  
 the third point in  Remark~\ref{re:singular_perturbation}.

Regarding the first two restrictions, the finite range assumption and the lower growth bound, 
are both proxies for disallowing 
large scale rearrangement of the lattice. Indeed, if one allows arbitrary rearrangements of the lattice
then, in the thermodynamic limit, the free energy $W_\beta$ is invariant 
under the group $\mathrm{GL}(\Z,d)$ of linear lattice-preserving maps. From this one can deduce that  $W_\beta(F)$ depends only on 
the volume change $\det F$, i.e., in the thermodynamic limit  we obtain a fluid rather than solid
(for the rigorous argument at zero temperature, see  \cite{FT89}). The fluid-like behaviour 
in the thermodynamic limit is easy to understand conceptually: on an extremely long time scale
the body can not sustain any shear, due to the creation and motion of dislocations.

For the zero temperature case, E and Ming \cite{EM07} do not impose a lower growth bound. 
They do not look, however,  for global minimisers of the discrete energy, but only a for local  minimisers
 with respect to variations which are small in the $W^{1,\infty}$ norm. They 
 show that for sufficiently weak and regular applied forces 
 and affine boundary conditions sufficiently close to the identity
 there exist such local minimisers which are close to the identity in the $W^{1,\infty}$ norm. 
 Thus the local minimiser and the allowed competitors do not explore the range of large discrete gradients.
 For the case of positive temperature,  \cite[Thm. 1 and Thm. 2]{LXY21}  Luo, Xiang, and Yang
 prove a similar results for local minimisers of the free energy, using the harmonic approximation. 
 Thus again deformations with large discrete gradients do not appear explicitly, but only
 implicitly through the formula for the entropy of a Gaussian measure. 
 
 We remark in passing that  we use the lattice $\Z^d$ only to label the  interaction points. 
 In particular $\Z^d$,  can be replaced by a Bravais lattice. Using the flexibility of the microscopic energy $U$, 
 one can also extend the setting to complex lattices.

\chapter{Explanation of the Method}\label{sec:explanation}

In this chapter we outline our general approach. 
It follows closely the programme for the rigorous renormalisation group analysis  of  functional integrals 
which has been systematically developed by Brydges, Slade and collaborators over the last decades, see   \cite{Bry09, BS10,BBS19} 
for surveys and additional references to earlier and related work. 
Additional features in our setting are the fact that we need a whole family of finite range decompositions in order to have enough parameters 
for the fine-tuning process and that we work with an almost optimal family of weights or large-field regulators. 

Let us remark that the analysis of Brydges, Slade,   and collaborators for the 4-dimensional weakly self avoiding random walk 
and $|\p|^4$ theory  \cite{BBS14, BBS15, BS15V} also requires a family of finite range decompositions for the operators 
$(-\Delta + m^2)^{-1}$ where $m^2 \in [0, \delta]$ is a parameter. 
However, they only need continuity in $m^2$ of the renormalisation map while we need to show smoothness.

\section{Set-up}
We focus on an outline of the strategy to prove  Theorem~\ref{th:pertcomp_E}, the proof of Theorem~\ref{th:scalinglimit}  is very similar.
We want to study the integral
\begin{align}
\Zscr\equiv \Zscr_{N}(\Kcal,\Qscr,0) =   \int_{\Xcal_N}  \sum_{X \subset \TN}  K(X, \p) \, \mu^{(0)} (\d\p) 
\end{align}
where as before $\TN=(\mathbb{Z}/L^N\mathbb{Z})^d$,
\begin{align}\label{eq:explanation_K}
K(X, \p) = \prod_{x \in X}  \mathcal K(D\p(x)),
\end{align} 
and 
$\mu^{(0)}(\d \p)=\mu_{\Qscr}(\d \p)$ is the Gaussian measure given by 
$$ \mu^{(0)}(d \p)  =  \frac{1}{Z^{(0)}} e^{ - \frac12 \sum_{x \in \TN}  \Qscr(D\p(x))}   \lambda_N(\d\p).$$
Let us recall that $\nabla$ denotes the discrete forward difference operator here.
It turns out that it is convenient to embed this problem into a  more general family of problems of the form
\begin{equation}  \label{eq:family_for_Z}
{\Zscr}(H_0, K_0, \boldsymbol q): =  \int_{\Xcal_N} (e^{-H_0} \circ K_0)( \TN, \p) \mu^{(\boldsymbol q)}(\d\p).
\end{equation}
Here $\boldsymbol q$ is a small symmetric $md \times md$ matrix and $\mu^{(\boldsymbol q)}$ is the Gaussian measure given by 
$$ \mu^{(\boldsymbol q)}(d \p)  =  \frac{1}{Z^{(\boldsymbol q)}}    e^{ - \frac12 \sum_{x \in \TN} \Qscr(D\p(x))   - (\boldsymbol q \nabla \p(x), \nabla \p(x))}   \lambda_N(\d\p).$$
The circle product $\circ$ of maps  $F, G$ defined on the subsets of $\TN$ is given  by
\begin{align}\label{eq:def_circ_explanation}
F\circ G(X)=\sum_{Y\subset X} F(Y)G(X\setminus Y)
\end{align}
for $X\subset \TN$.
For the sake of the present definition, we just temporarily assume that for every  $X\subset \TN$,  functions
$F(X), G(X): \Xcal_N\to \R$ are given---some restrictions will be introduced later.
The sum includes the empty set for which we set $F(\emptyset) = G(\emptyset) = 1$. 
The definition of the circle product is motivated by the following property. If $F$ and $G$ factor, i.e. if $F(X) = \prod_{x \in X} F(\{x\})$ and 
$G(X) = \prod_{x \in X} G(\{x\})$ then 
\begin{align}\label{eq:fac_circ_explanation}
F \circ G(X) = \prod_{x \in X}(F + G)(\{x\}).
\end{align}
The term $H_0$ plays a special role which will be further discussed below. It only contains so called relevant terms,
namely constants and certain linear and quadratic expressions in $\p$. More specifically we assume that 
\begin{align}\label{eq:explain_additive_structure_ham}
H_0(X, \p) =& \sum_{x \in X}H_0(\{x\}, \p) \ \text{ with }\\ 
H_0(\{x \}, \p) = & a_\emptyset + \sum_{1\le |\alpha| \le \lfloor d/2 \rfloor + 1}    \sum_{i=1}^m  a_{i,\alpha}    \nabla^\alpha \p_i(x)
  + \frac12 ({\boldsymbol a} \nabla \p(x), \nabla \p(x))   \label{eq:explain_relevant_ham}
\end{align}
where $\boldsymbol a$ is a symmetric $md \times md$ matrix and $a_{i,\alpha}$ an $a_\emptyset$
are scalar coefficients. 

The original problem corresponds to the choices $\boldsymbol q = 0$, $H_0(X, \p) = 0$ and 
$K_0(X, \p) = \prod_{x \in X}  \mathcal K(D\p(x))$. 

\section{Finite range decomposition}\label{sec:FRD_overview}
The first idea is to replace the integration {with respect to } the Gaussian measure $\mu^{(\boldsymbol q)}$,
by a sequence of integrations {with respect to } Gaussian measures $\mu_{k}^{(\boldsymbol q)}$, $k=1, \ldots N+1$,
such that the measure $\mu_{k}^{(\boldsymbol q)}$ essentially detects the behaviour of the fields $\p$ on the spatial
scales between $L^{k-1}$ and $L^k$. 

More precisely,  we express  the  translation-invariant covariance operator $\mathscr{C}^{(\boldsymbol q)}$ 
of the Gaussian measure $\mu^{(\boldsymbol q)}$ as a sum of translation-invariant covariance operators with finite range, i.e., 
\begin{align}   \label{eq:explain_finite_range}
 \mathscr{C}^{(\boldsymbol{q})}=\sum_{k=1}^{N+1} \mathscr{C}^{(\boldsymbol{q})}_k,
 \text{ and the corresponding kernels satisfy   }         
\mathcal{C}^{(\boldsymbol{q})}_k=-C_k\, \, \text{for  }\, |x|_\infty\geq \frac{L^k}{2}.
\end{align}
Moreover,
the kernel
 $\mathcal{C}_{k}^{(\boldsymbol q)}$ behaves like the Green's function of the discrete Laplace operator
on scale $L^{k-1}$, i.e., 
\begin{align} 
\bigl|\nabla^\alpha\mathcal{C}_{k}^{(\boldsymbol q)}(x)  \bigr|\leq
\begin{cases}
C_{\alpha} L^{-(k-1)(d-2+|\alpha|)}\;\text{for}\;d+|\alpha|>2,\\
C_{\alpha}\ln(L) L^{-(k-1)(d-2+|\alpha|)}\;\text{for}\;d+|\alpha|=2.
\end{cases}
\end{align}
Then $\mu^{(\boldsymbol q)} = \mu^{(\boldsymbol q)}_{N+1}  \ast  \ldots \ast  \mu^{(\boldsymbol q)}_1$
and thus the quantity $\Zscr(H_0, K_0, \boldsymbol q)$ can be expressed as an $N+1$ fold integral. 
Alternatively we can
 define the convolution operator $\boldsymbol R^{(\boldsymbol q)}_k$ by 
$$  ( \boldsymbol R^{(\boldsymbol q)}_k F)(\psi)  = \int_{\Xcal_N}   F(\psi + \p)  \mu_{k}^{(\boldsymbol q)}(\d\p). $$
Then the integral we are interested in can be written as
\begin{align}  \label{eq:explain_R_representation}
 {\Zscr}(H_0, K_0, \boldsymbol q) = \big( \boldsymbol R^{(  \boldsymbol q)}_{N+1}   \boldsymbol R^{(  \boldsymbol q)}_{N} \ldots \boldsymbol  R^{(  \boldsymbol q)}_{1}  (e^{-H_0} \circ K_0)      \big)( \TN,0).
\end{align}

\section{The renormalisation map}
 In view of  \eqref{eq:explain_R_representation} the key idea is to define a map $ \boldsymbol{T}_k^{(\boldsymbol{q})}: (H_k, K_k) \mapsto (H_{k+1}, K_{k+1})$
 such that 
 \begin{align}  \label{eq:explain_prop_Tk}
 e^{-H_{k+1}} \circ K_{k+1} (\TN) = \boldsymbol{R}^{(\boldsymbol q)}_{k+1} (e^{-H_k} \circ K_{k}(\TN)).  
 \end{align}
  For ease of notation for the rest of the chapter we usually do not denote the dependence of the renormalisation map  on $\boldsymbol{q}$ explicitly, i.e. we write
$\boldsymbol{T}_k$ instead of $\boldsymbol{T}_k^{(\boldsymbol{q})}$.
Then 
\begin{equation}  \label{eq:explain_rep_by_N}
 {\mathcal Z}(H_0, K_0, \boldsymbol q) = \big( \boldsymbol R^{(  \boldsymbol q)}_{N+1} (e^{-H_N} \circ K_N)  \big)( \TN,0)
= \int_{\Xcal_N}  (e^{-H_N} \circ K_N)(   \TN,\p) \, \mu^{(\boldsymbol q)}_{ N+1}(\d\p).
\end{equation}

Of course, the property  \eqref{eq:explain_prop_Tk} does not determine $\boldsymbol{T}_k$ uniquely. 
Indeed, for any $\tilde{H}_k$ of the form  \eqref{eq:explain_additive_structure_ham} and \eqref{eq:explain_relevant_ham},
we can use \eqref{eq:fac_circ_explanation} to write
\begin{align}
\label{eq:tildeHk}
e^{-H_k}\circ K_k(X)=&(e^{-\tilde{H}_k}+e^{-H_k}-e^{-\tilde{H}_k})\circ K_k(X)\\
=&\bigl( e^{-\tilde{H}_k}\circ(e^{-H_k}-e^{-\tilde{H}_k})\circ K_k\bigr)(X)=e^{-\tilde{H}_k}\circ \tilde{K}_k(X),\notag
\end{align}
where $\tilde{K}_k=(e^{-H_k}-e^{-\tilde{H}_k})\circ K_k$.
The guiding principle for the definition of $\boldsymbol{T}_k$ is that we want $\boldsymbol{T}_k(0,0) = (0,0)$ 
and that the derivative of $\boldsymbol{T}_k$ at the origin is contracting in $K_k$ and expanding in $H_k$. 
This will allow us to apply the stable manifold theorem to show that the 
term on the right hand side of  \eqref{eq:explain_rep_by_N} is $1$ up to an exponentially small correction provided that
we chose $H_0$ suitably in dependence of $K_0$, see the next section. 
Indeed, the special form of relevant Hamiltonians given in   \eqref{eq:explain_relevant_ham}
stems from the fact that exactly monomials of this form 
do not lead to a contraction under application of ${\boldsymbol R}_k^{(\boldsymbol q)}$ 
if we equip the space of functionals with natural scale dependent norms.
See the last two paragraphs of Section~\ref{se:polymers} for further discussion on relevant vs. irrelevant monomials. 
The definition of the map $\boldsymbol{T}_k$ thus involves three key steps:
\begin{itemize}[leftmargin=0.4cm]
\item Integration against $\mu_{k+1}^{(\boldsymbol q)}$, i.e., application of $\boldsymbol{R}^{(\boldsymbol q)}_{k+1}$.
\item Extraction of the relevant terms,  see \eqref{eq:defoftildeHk}.
\item Coarse-graining to maps defined on disjoint blocks of size $L^{k+1}$ ($(k+1)$-blocks)
and their union ($(k+1)$-polymers) rather than single points and subsets of $\TN$, 
see \eqref{eq:defofHk+1} and \eqref{eq:defofKk+1}.
\end{itemize}
The motivation for the coarse graining is that  a   field $\p$ which is typical
under the next-scale measure $\mu_{k+2}^{(\boldsymbol q)}$ varies only slowly on scale $L^{k+1}$. 
The circle product  is adjusted to the coarse graining: 
for two maps $F, G$ on $k$-polymers the circle product is
defined as
$ F \circ G(X)   = \sum_{\text{$k$-polymer } Y, Y \subset X} F(Y) G(X \setminus Y)$. In particular for $k=N$ there are only two polymers,
the whole torus $\TN$ and the empty set. Thus the right hand side of  \eqref{eq:explain_rep_by_N}
simplifies further since $e^{-\HN} \circ K_N (\TN) = e^{-\HN}(\TN) + K_N(\TN)$.

The key results about the maps $\boldsymbol{T}_k$ are contained in Theorems~\ref{prop:smoothnessofS} and~\ref{prop:contractivity} below: 
they are smooth in a small neighbourhood  (uniformly in $k$ and $N$) and the derivatives at the origin are given by
$$ 
D \boldsymbol{T}_k(0) \binom{\dot H}{\dot K} = \begin{pmatrix} \boldsymbol{A}_k & \boldsymbol{B}_k \\
 0 & \boldsymbol{C}_k \end{pmatrix}  \binom{\dot H}{\dot K} 
$$
where 
\begin{equation}   \label{eq:explain_contraction}   
\norm{ \boldsymbol{A}_k^{-1} } \le c < 1, \quad \norm{ \boldsymbol{C}_k } \le c < 1.
\end{equation}
These estimates give a precise formulation of the idea that the flow is contracting in the $K$ variable and expanding in the $H$ variable. 

\section{Application of the stable manifold theorem and fine-tuning}
The uniform smoothness of the maps $\boldsymbol{T}_k$ and the contraction estimates 
 \eqref{eq:explain_contraction} allow us to apply a discrete version of the stable manifold theorem. 
 This guarantees  that  there exists a smooth function $\hat H_0$ such that for each sufficiently small $K_0$
  the flow starting with $(\hat H_0(K_0, \boldsymbol q), K_0)$ satisfies $\HN = 0$ and $\| K_N \| \le C \eta^N$  for a suitable $\eta < 1$.
 This is described in full  detail in Chapter~\ref{sec:finetuning}  below  (for a slightly modified situation). 
 
 The basic idea is very simple. 
 One considers the vector containing as its coordinates  $H$'s and $K$'s on all scales,  $ Z= (H_0, \dots, H_{N-1}, K_1, \dots, K_{N})$ 
 and a weighted norm $\| Z \| = \max (\max_{0 \le k \le N-1} \eta^{-k}  \| H_k\|,   \max_{ 1\le k \le N}  \eta^{-k} \|K_k\| )$. 
 The space of vectors with finite norm is denoted by $\mathcal Z$. 
 Then one reformulates the conditions that $(H_{k+1}, K_{k+1}) = \boldsymbol{T}_k(H_k, K_k)$ and $\HN = 0$ as a fixed point condition. 
  More precisely one defines  a map $\widetilde{\mathcal T}$ on $\mathcal Z$ which has $\boldsymbol q$ and $K_0$ as  additional parameters
 such that every $Z$ which satisfies $\widetilde{\mathcal T}(\boldsymbol{q},K_0, Z) = Z$ 
 also satisfies $(H_{k+1}, K_{k+1}) = \boldsymbol{T}_k(H_k, K_k)$  for $k \le N-2$ and $\boldsymbol{T}_{N-1}(H_{k-1}, K_{k-1}) = (0, K_N)$. 

 The contraction estimates  \eqref{eq:explain_contraction} will imply  that  that map 
 $\widetilde{\mathcal T}(\boldsymbol q, K_0, \cdot)$ does indeed have a fixed point $Z = \hat  Z(\boldsymbol q, K_0)$ for every small $K_0$. 
 Then the  map $\hat H_0$ is obtained by taking the $H_0$ component of $\hat Z$. 
 As a result, we get for each small $K_0$
 $$ 
 {\mathcal Z}(\hat H_0(K_0, \boldsymbol q), K_0, \boldsymbol q) = \int_{\mathcal{X}_N}  (1 + K_N(\TN, \p))  \,   \mu^{(\boldsymbol q)}_{N+1}(\d\p)
 $$
 where $K_N$ is exponentially small (and depends smoothly on $K_0$ and $\boldsymbol q$). 
 If, by chance,  $\hat H_0(K, 0) = 0$ then we have solved our original problem. 
 In general, there is, however, no reason why this should be true. 
 
 In the final step we will thus use the freedom to tune the  free parameter $\boldsymbol q$ 
 so that the effects of $\boldsymbol q$ in the Gaussian measure $\mu^{(\boldsymbol q)}$ 
 and the effect of $\hat H_0(\boldsymbol q, K_0)$ cancel exactly up to a constant term  which can be pulled out of the integral.
 Thus the final dependence of our original partition function $Z(\mathcal K)$ on $\mathcal K$ 
 is encoded in this constant term, up to an exponentially small term which comes from $K_N$. 
 This allows to reach easily the final conclusion.
 
 The details of this fine-tuning procedure are explained in Chapter~\ref{sec:finetuning}. 
 It is actually convenient to write an enlarged family of problems in a slightly different way. 
 Instead of working only with $\boldsymbol q$ as the main free parameter
 we use a full relevant Hamiltonian (see  \eqref{eq:explain_relevant_ham}) 
 as the free parameter and identify $\boldsymbol q$ with the quadratic part $\boldsymbol a$ of the relevant Hamiltonian. 
 Denoting the relevant Hamiltonian by $\mathcal H$ and its quadratic part by $\boldsymbol q(\mathcal H)$ 
 we are led to study the family of problems
 $$
  \int_{X_N}   \big(  e^{-H_0} \circ \hat K_0(\mathcal H, \mathcal K)   \big)  (\TN)  \, \mu^{(\boldsymbol q(\mathcal H))}(\d\p)
   \quad \hbox{with} \quad  \hat K_0(\mathcal H, \mathcal K) = e^{-\mathcal H} \mathcal K,
   $$
 see Chapter~\ref{sec:finetuning}. 
 We then show as above that there exists a function $\hat H_0$ such that the choice $H_0 = \hat H_0(\mathcal H, \mathcal K)$ leads to $\HN = 0$ 
 and an exponentially small $K_N$. 
 In fact we can use exactly the argument given above in connection with the observation that the map
 $(\mathcal H, \mathcal K) \to e^{-\mathcal H} \mathcal K$ is smooth. 
  It is then easy to see  that there exists a map $\hat{\mathcal H}$  such that 
  $\hat H_0( \hat{\mathcal H}(\mathcal K), \mathcal K) =\hat{ \mathcal H}(\mathcal K)$ and that the integral for $\mathcal H = \hat{\mathcal H}(\mathcal K)$ 
  and $H_0 = \hat H_0(\mathcal H, \mathcal K)$ agrees with our original integral up to a scalar factor.

 \section{A glimpse at the implementation of the strategy}
 Our main objects are relevant Hamiltonians $H_k$ and perturbations $K_k$. 
 The relevant Hamiltonians are described by a finite number of parameters:  
 $a_\emptyset$ for the constant part, $a_{\alpha, i}$ for the linear part and ${\boldsymbol a}$ for the quadratic part. 
 The $K_k$ are functions depending on  a $k$-polymers $X$ and the field $\p$. 
 One key ingredient is to design the renormalisation group (RG) maps $\boldsymbol{T}_k$ so that at each step the relevant terms are correctly extracted. 
 This can already be guessed at the level of the linearised problem. 
 Another key ingredient is to design norms for $H_k$ and $K_k$ which allow us to prove uniform smoothness and contraction estimates. 
 The construction of such norms will be described in detail in  Chapter~\ref{sec:description}. 
 Here we just mention three guiding principles
\begin{itemize}[leftmargin=0.4cm]
 \item The norms at the scale $k$ for the fields $\p$  should be such that a field which is 'typical' 
 under the measure $\mu^{(\boldsymbol q)}_k$ has norm approximately of the order $1$;
 \item For a fixed $k$ polymer, the norm on the functional $\p \mapsto K(X, \p)$ should be dual to the field norm. 
 For linear functionals it is clear what duality means. 
 Homogeneous polynomials of degree $r$ can be viewed as linear functionals on the $r$-fold tensor product of the space of fields 
 and there is a natural way to design norms which behave well under tensorisation (see Appendix~\ref{se:norms_polynomials});
 \item Our starting perturbation factors, i.e. $K_0(X) = \prod_{x \in X} K_0(\{x\})$. 
 This suggests that for small $K$ the size of $K(X)$ should decrease exponentially in the number of  blocks in $X$.  
 The property to factor is lost in the iteration. 
 To keep the idea that the contribution from large polymers is exponentially small, 
 we introduce in the definition of the norm of $K_k$ a weight $A^{|X|_k}$, where $|X|_k$ is the number of $k$-blocks in the polymer $X$.
 \end{itemize}
 
 Two further points turn out to be important. 
 First, while the factorisation property is in general lost, the finite range condition  \eqref{eq:explain_finite_range} on the covariance 
  in the finite range decomposition ensures that  the \emph{factorisation between polymers that are separated by one block} still holds. 
 Here we use the fact that we work on fields with zero average. 
 Thus the  action of the kernel $\mathcal{C}^{(\boldsymbol{q})}_k$ on fields by discrete convolution does not change  if we add a constant to the kernel. 
 Hence the    condition $\mathcal{C}^{(\boldsymbol{q})}_k= -C_k$ for  $|x|_\infty\geq \frac{L^k}{2}$ is equivalent 
 to assuming that $\mathcal{C}^{(\boldsymbol{q})}_k$ is supported in  $\{ x :  |x|_\infty < \frac{L^k}{2}  \}$.

 This factorisation property for polymers that are separated by one block allows us to  track  $K_k(X, \cdot)$ only for \emph{connected} polymers $X$. 
 The functional for general polymers is then obtained by multiplying over the connected components.
 
 The second point is the so called \emph{large field problem}. 
 With exponentially small probability, very large values of the field $\nabla \p(x)$ may arise. 
 Since a  typical perturbation $\mathcal K(D\p(x)) = e^{-\overline{\mathcal U}(D\p(x))} - 1$ contains also exponential terms, 
 care has to be taken that the integrals in each step are well-defined. 
 This problem is well known in rigorous renormalisation theory and handled by the introduction of carefully chosen weights, 
 or large-field regulators, in the norms of $K_k$. 
 In Chapter~\ref{sec:weights} we present a new construction of weights which leads to almost optimal weights. 

  \chapter{Choice of Parameters}\label{sec:tracking}
  The precise implementation of the RG construction involves a number of parameters which help to fine-tune the properties
  of the RG map and to ensure the key smoothness and contraction estimates in Theorems~\ref{prop:smoothnessofS} and \ref{prop:contractivity}
  from which the main results  Theorem \ref{th:pertcomp_E} and \ref{th:scalinglimit} can be deduced. 
  
  The purpose of this { chapter} is to give an overview over these parameters and to explain how they are chosen. 
  Detailed descriptions are given in the following { chapters}. Here we focus on a bird's eye view to emphasise the
  idea that the parameters can be sequentially chosen in such a way that all the restriction that arise in the following chapters
  can be eventually satisfied simultaneously. 
  
Readers who  want to plunge immediately into the details of the argument may skip this chapter upon first reading 
and refer back to   it for a quick overview why all the restrictions which appear in various parts of the proof are consistent.
  
Actually, a majority of the parameters can be chosen once and for all (depending possibly  only on some basic parameters
that can be fixed at the outset, like the dimension $d$ of the model and the maximal order $R_0$ of discrete derivatives in a coordinate direction). 
  We will refer to all these as \emph{fixed parameters} and we will not track how the various constants depend on them. 
  A list of  fixed parameters is given in Section~\ref{se:fixed_parameters} below. 
  We first discuss the free parameters that we will adjust to obtain the desired smoothness and contraction estimates.

  \section{The free parameters \texorpdfstring{$L$}{L}, \texorpdfstring{$\protect\headingh$}{h}, and \texorpdfstring{$A$}{A}}  \label{se:free_parameters}
  There are three free parameters, namely
 \begin{longtable}[t]{cp{11cm}}
 $L \in \NN$ & The \emph{size of the basic block}.\\
 $h \gg 1$    & A \emph{scaling factor in the norm for the fields};  
 the field norm on level $k$ involves a term $h_k^{-1}$ with $h_k = 2^k h$, see  \eqref{eq:primal_norm} and \eqref{eq:definition_weights}. 
 A field which is typical   on scale $k$ (i.e., under the measure $\mu_{k+1}$) has norm  of order $h_k^{-1}$. 
  Since the norms on functionals are defined by duality, the standard Hamiltonian
  $ H(\p) = \sum_{x \in B} |\nabla \p|^2$ for a block $B$ on scale $k$ has norm $h_k^2$. 
  In our earlier work \cite{AKM16} we used a scaling factor $h$ which was independent of $k$. 
  The reason we now need scaling factors  $h_k$ which grow sufficiently rapidly in $k$ is related to the new choice 
  of  nearly optimal weights (see Chapter~\ref{sec:weights}). 
  Among others, we want to bound the field norm by the increase in the logarithm of the 
  weights as we go from scale $k$ to $k+1$ (see \eqref{eq:w9}). 
  This essentially requires that $\sum_{k < N} h_k^{-2}$ can be bounded independently of $N$. 
  A similar issue arises for the estimates  \eqref{eq:w5} and  \eqref{eq:w6}.
  The choice of exponentially growing scaling factors is mostly for convenience. 
  We cannot  allow for faster than exponential growth
  because factors of $h_{k+1}/ h_k$ appear in the proof of the change of scale estimates in 
  Lemma~\ref{le:norms_pointwise} and Lemma~\ref{le_contraction_I}.\\
   $A \gg 1$ & A parameter which \emph{penalises the contributions of functionals
  defined on long polymers}. The norm on functionals involves a supremum over all $k$-polymers $X$
  of $A^{|X|} \| K(X)\|_{k}$ where $|X|$ denotes the number of $k$-blocks in $X$. 
 \end{longtable} 
  
  Our goal is to  show that there exists a number $L_0$ and functions $L \mapsto h_0(L)$ and
 $L \mapsto A_0(L)$ such that the renormalisation maps $\boldsymbol{T}_k = \boldsymbol{T}_k^{(\boldsymbol q)}$ have good properties 
 (in a suitable  small neighbourhood of $0$ and for sufficiently small $\boldsymbol q$) if
 $$ 
 L\ge L_0, \quad h \ge h_0(L), \quad \hbox{and} \quad A \ge A_0(L).
 $$ 
  
  In the following we first review the choice for the fixed parameters. Then we describe the key steps in the proof
  and discuss which restrictions on the free parameters $L$, $h$, and $A$ arise in each step.

  \section{Fixed parameters}    \label{se:fixed_parameters} 
  The following parameters are fixed once and for all and dependence on them is usually 
  not indicated in the following:
  \begin{longtable}[t]{cp{10.6cm}}
    $d$ & Spatial dimension.
    \\ \vspace{0.2cm} %
  $m$ & The number of components of the field $\p$.
  \\ \vspace{0.2cm} %
     $R_0$ & A nonzero integer which determines the maximal number of discrete (forward) derivatives through the set
     
   $\{e_1,\ldots,e_d\}\subset \Ical \subset \{\alpha\in \mathbb{N}_0^d\setminus \{(0,\ldots,0)\}:\;  \abs{\alpha}_\infty\leq R_0\} $.
   \\ \vspace{0.2cm} %
  $r_0 \ge 3$ & An integer that measures smoothness of the functionals in the field.
Loosely speaking, the restriction $r_0 \ge 3$ arises from the fact that the third order terms are always irrelevant,  but quadratic terms are not. 
More precisely, the condition $r_0 \ge 3$ is crucial for  the two-norm estimate   \eqref{eq:two_norm_concrete}. 
In particular,  this estimate  allows us to deduce the crucial contraction estimate for  $\boldsymbol{C}^{(\boldsymbol{q})}$ 
from a contraction estimate for the action of  the extraction operator $1 - \Pi_2$ on  Taylor polynomials at zero.  
See Lemma~\ref{le_contraction_I} and Lemma~\ref{le:contraction_single_block_new} in connection with   \eqref{eq:decomp_C} for further details. 
Our standard choice is
\begin{equation} \label{eq:def_r_0}
r_0 = 3.
\end{equation}
\\ \vspace{0.2cm}%
$r_1 \ge 2$ & An integer that measures smoothness with respect to external parameters (e.g., the deformation  $F$).
\\ \vspace{0.2cm}   $\pphi$ & Number of discrete derivatives in the definition of the field norm $| \phi|_{j, X}$. 
We need  $\pphi \ge \lfloor d/2 \rfloor + 2$ to get the right decay in $L$ in the Poincar\'e type estimate 
in Lemma~\ref{le:discrete_taylor_approximation}  which is the main ingredient in the proof of  the contraction estimate for $1 - \Pi_2$
  (see Lemma~\ref{le_contraction_I}). 
  We will take
  \begin{equation}  \label{eq:def_p_Phi}
  \pphi = \lfloor d/2 \rfloor +2.
  \end{equation}
  \\ \vspace{0.2cm} %
  $M$ & Number of discrete derivatives in the definition of the quadratic form $\boldsymbol{M}^X_k$
  in \eqref{eq:defofMk}.
  We need $M \ge \pphi + \lfloor d/2 \rfloor + 1$ to be able to apply the discrete Sobolev
  embedding and to get control of $p_\Phi$ discrete derivatives in the supremum norm.
   We will take 
   \begin{equation} \label{eq:def_M}
    M = p_\Phi + \lfloor d/2 \rfloor + 1 = 2 \lfloor d/2 \rfloor + 3.
    \end{equation}
    \\ \vspace{0.2cm} %
    $R$ & A geometric parameter that is used to define a neighbourhood around blocks (see \eqref{eq:nghbhdscompact}). 
 It determines the allowed range of dependence of the functionals on the first scale, e.g.,
$\Kcal(\{x\},\p)$, $\Hcal_0(\{x\},\p)$, and $\boldsymbol{M}_0^{\{x\}}$ 
(see \eqref{eq:explanation_K}, \eqref{eq:explain_additive_structure_ham}, and \eqref{eq:defofMk}) may only depend on $\p{\restriction_{x+[-R,R]^d}}$.
   This implies that we need that $R\ge \max(R_0,M,\pphi)=\max(R_0,M)$. We will take
       \begin{equation}  \label{eq:def_R}
    R = \max(R_0, M) = \max(R_0, 2 \lfloor d/2 \rfloor + 3).
    \end{equation}
  \\ \vspace{0.2cm} %
   $n$ & The number of discrete derivatives controlled in the finite range decomposition (see Theorem~\ref{thm:frd}). We need
  $n \ge 2M$ to control the integral of the weights against the Gaussian measures obtained by 
  the finite range decomposition (see  Theorem~\ref{th:weights_final}\ref{w:w8} and its proof in Lemma~\ref{prop:W3}) and we will take 
  \begin{equation}  \label{eq:def_n}
  n = 2M = 4 \lfloor d/2 \rfloor + 6.
  \end{equation}
  \\ \vspace{0.2cm} %
 $\tilde n$ & A secondary parameter in the finite range decomposition (see Theorem \ref{thm:frd}) which relates to the decay 
of  the derivative of the Fourier symbols with respect to the quadratic form we decompose.
 We need   $\tilde n \ge n +    \lfloor d/2 \rfloor + 1$ to bound the derivative of
the maps $\boldsymbol{R}_k^{(\boldsymbol{q})}$ with respect to $\boldsymbol{q}$ (see  Theorem \ref{prop:finalsmoothness}) and we will take
\begin{equation}  \label{eq:def_tilde_n}
 \tilde  n =    n +    \lfloor d/2 \rfloor + 1 = 2M +  \lfloor d/2 \rfloor + 1 = 5 \lfloor d/2 \rfloor + 7.
  \end{equation}
  \\ \vspace{0.2cm} %
   $\omega_0 >0$ & A parameter that controls the coerciveness and boundedness of the quadratic
  form $\Qscr$. We require (see  \eqref{eq:Qlowerbound_again})
  \begin{align}
	\omega_0 |z|^2 \leq \Qscr(z) \leq \omega_0^{-1}|z|^2\  \text{ for all }\   z\in\Gcal = (\mathbb{R}^m)^{\Ical}.
\end{align}
 \\\vspace{0.2cm} %
  $\zeta \in (0,1)$ & This parameter controls the exponential weight in the norm $\| \cdot \|_{ \zeta}$
which is defined in \eqref{eq:normE}  and  measures the allowed  growth  of the perturbation $\mathcal K(z)$  as $z\to \infty$. 
\\ \vspace{0.2cm} %
$\weightzeta \in (0,\frac14)$ & This parameter analogously controls the growth of the weights, see \eqref{eq:defAk}. 
To make the norms of the perturbation $\mathcal K$ 
and the corresponding functional $K$ consistent we choose $\weightzeta = \tfrac14 \zeta$, see  \eqref{eq:definition_weightzeta}
 as well as  \eqref{eq:consistency_weightzeta}, \eqref{eq:w-1} and Lemma~\ref{le:I}. \\
 $\eta \in (0, \frac23]$ & This parameter controls the rate of convergence of $\|H_k\|$ and $\|K_k\|$.
More precisely it appears in the definition of the norm of the vector $(H_0, \ldots, H_{N-1}, K_1, \ldots K_N)$.
Vectors with norm $\le 1$ satisfy $\| H_k \| \le \eta^k$ and $\|K_k \| \le \eta^k$, see 
   \eqref{eq:norm_mcZ}. For the purpose of the current paper we could take $\eta = \tfrac23$, but other applications
   require smaller values of $\eta$.  
  \end{longtable}

\section{Choice of the free parameters in the key steps  of the proof}

The key technical results are the uniform smoothness and contraction   estimates
for the renormalisation maps $\boldsymbol{T}_k$, see Theorem~\ref{prop:smoothnessofS} and Theorem~\ref{prop:contractivity}.
From these estimates our main results in Chapter~\ref{sec:setting} can be derived  by general abstract arguments.
This deduction is  described in detail in Chapters~\ref{sec:proofs} and~\ref{sec:finetuning}. 
We first  show discrete stable manifold theorem (Theorem~\ref{propfixedpoint}). Together with a second
 fixed point result  (Lemma~\ref{lemmafixedpoint}) this allows us to prove  a representation formula for the 
partition function (Theorem~\ref{maintheorem}). From this representation formula the desired results follow easily, 
see Sections~\ref{sec:proof_GGM} and~\ref{sec:proof_scaling_limit}.

In the following we thus focus on  key steps in the proof of the smoothness and contraction estimates, namely:
\begin{itemize}[leftmargin=0.4cm]
\item Set-up:
\begin{list}{$-$}{\setlength{\leftmargin}{0.4cm}}
\item The construction of a family of finite range decompositions.
\item The definition of the RG map and factorisation properties.
\item The construction of weights.
\end{list}
\item Estimates for the basic operations:
\begin{list}{$-$}{\setlength{\leftmargin}{0.4cm}}
\item  The product estimates and submultiplicativity of the norms.
\item The uniform boundedness and  smoothness of the integration map ${\boldsymbol R}^{(\boldsymbol q)}_{k+1}$. 
\item The estimates for the extraction map  $\Pi_2$ and for $(1 - \Pi_2)$ (with a change of scales).
\end{list}
\item 
The uniform estimates for  the derivatives of the RG map (Theorem~\ref{prop:smoothnessofS}).
\item The contraction estimates for the linearised RG map (Theorem~\ref{prop:contractivity}).
\end{itemize}

The most delicate   steps are the construction of the weights and the linear estimates. 
The point of the linear estimates is to obtain  a sufficiently small bound on the contraction condition,
while for the smoothness estimates we do not need so precise control of the constants.
Bad constants in the smoothness estimates only lead to small neighbourhood $B_\varrho$ in which Theorem 2.4. applies,
but that is not a problem. 

We now review the role of the free  parameters in the key steps.

\subsection*{Family of finite range decompositions} 
In Theorem~\ref{thm:frd} we obtain a finite range decomposition for 
all quadratic forms with $\omega_0/2  \le \Qscr \le 2 \omega_0^{-1}$.
Dependence  of the estimates on $L$ is expressed
explicitly.  The parameters $h$ and $A$ do not appear.
A key property is that the convolution operators $\boldsymbol R^{(\boldsymbol q)}_k$, which correspond to the
finite range composition for the quadratic forms $\Qscr^{(\boldsymbol q)}(z) = \Qscr(z) - ( \boldsymbol q z^\nabla, z^\nabla)$,
depend smoothly on $\boldsymbol q$, with bounds independent of $N$, see Theorem~\ref{prop:finalsmoothness}.

\subsection*{Definition of the RG map: locality, factorisation, geometric properties.}
To make the combinatorics of the coarse-graining   and the properties of the finite range decomposition
interact nicely, we define various neighbourhoods of a polymer  and locality conditions
on the functionals $\p \mapsto K(X, \p)$,
see Section~\ref{se:polymers}.

  Consistency of these definitions requires $L \ge 2^d + R$. 
  The construction involves a map $\pi$ which assigns to a  polymer $X$  at scale $k$ (a union of blocks of size $L^k$)
  a polymer $\pi(X)$ at scale $k+1$. In general $X$ is not contained in $\pi(X)$, but the condition $L \ge 2^d + R$ guarantees
  that the corresponding small scale neighbourhoods, defined in  \eqref{eq:nghbhdscompact}, satisfy $X^\ast \subset \pi(X)^\ast$. 
  To ensure that the RG map preserves the 
  factorization property we need the stronger relation 
  \begin{equation}  \label{eq:restriction_L_geometric}
  L \ge 2^{d+2} + 4 R,
  \end{equation}
  see Lemma~\ref{le:K_kplus_factors} and
Proposition~\ref{pr:properties_RG_map}.

\subsection*{Weights}
To deal with the large field problem we introduce families of weak  weights $w_k^X(\p)$ and $w^X_{k:k+1}(\p)$
as well as strong weights $W^X_k(\p)$ which depend on the field $\p$ and a polymer $X$. 
These weights need to satisfy certain natural supermultiplicativity properties and need  to be consistent 
with application of the integration map $\boldsymbol R^{(\boldsymbol q)}_{k+1}$. These properties are summarised in 
Theorem~\ref{th:weights_final}. 
They hold provided that the following constraints are satisfied
\begin{eqnarray} \label{eq:constraint_L_weights}
L &\ge& 2^{d+3} + 16 R,    \\
 \label{eq:constraint_h_weights}
h &\ge& C \delta^{-1/2}(L),
\end{eqnarray}
Here $\delta(L)$ is a parameter that appears in the construction of the weights, 
see Theorem~\ref{th:weights_final} and  \eqref{eq:defofdelta}.

For the contraction estimate in Theorem~\ref{prop:contractivity}, it is crucial that the constant $\AB$
in the integration estimate \eqref{eq:w8} for a single block  does not depend on $L$.
Here we use that the smoothness estimates in the finite range decomposition have the optimal 
dependence on $L$, see \eqref{eq:discretebounds}, \eqref{eq:discreteboundsfinal} and
\eqref{eq:trace_bound_single_block}. The optimal $L$-dependence is an important improvement of the finite range decomposition 
in \cite{Buc18}    over the one in \cite{AKM13}. This improvement is related to the fact the the decomposition in  \cite{Buc18}  is based on  Bauerschmidt's decomposition \cite{Bau13},   rather than on \cite{BT06}.
The free parameter $A$ does not appear in the construction of the weights. 

\subsection*{Submultiplicativity of the norms and product estimates    }
The RG  map  $\boldsymbol{T}_k$ can be written as  composition of  linear maps, the harmless map $H \mapsto e^{-H}$ 
for relevant Hamiltonians,  and a number of polynomial maps which arise from the combinatorics
of the circle product and the coarse-graining procedure. The key difficulty is that the degree of the polynomials
is not bounded independent of $N$. Hence a natural  idea is to work with norms which are submultiplicative
so that products and  polynomials (as well as  their derivatives) can be easily estimated.
Submultiplicativity of the relevant norms, defined in \eqref{strongnorm}-\eqref{middlenorm}, essentially follows from 
general facts about tensor product norms on (Taylor) polynomials (see
Proposition~\ref{pr:product_estimate_taylor} in  Appendix~\ref{se:norms_polynomials}) and the
supermultiplicativity of the weights. The details are described in
   Sections~\ref{se:pointwise_submult} and~\ref{se:norms_submult}.
The submultiplicativity estimates require only the conditions \eqref{eq:constraint_L_weights} and
\eqref{eq:constraint_h_weights}  already discussed above. 
The estimates are stated for norms involving  fixed polymers and thus  the parameter $A$ does not appear in these estimates. 
The parameter $A$ will enter when we consider  the final norms such as 
$\norm{F}_{k}^{(A)}=\sup_{X }\norm{F(X)}_{k,X} A^{|X|_k}$ where the supremum is taken over connected $k$-polymers.

The  submultiplicativity  estimates usually hold for products of functionals defined for disjoint polymers.  
One key feature of the circle product $\circ$  is that it contains  only such products. 
This is actually the reason for expressing the RG maps  in terms of circle products.
The importance of the circle product was already realised in the early works on rigorous renormalisation,
see \cite{BY90},   pp. 354--355 as well as Section 7 in \cite{BF92}.

\subsection*{Uniform boundedness and smoothness of the integration map ${\boldsymbol R}^{(\boldsymbol q)}_{k+1}$ }
Boundedness of map  ${\boldsymbol R}^{(\boldsymbol q)}_{k+1}$, which acts by convolution, follows from the integration estimates
\eqref{eq:w7} and \eqref{eq:w8} for the weights, the fact that convolution commutes with Taylor expansion and basic properties of the field 
norm, see Lemma~\ref{le:keyboundRk} and its proof. 
Here we only need the condition   \eqref{eq:constraint_L_weights}
 which implies the integration estimates for the weights. 

In \cite{AKM16} the bound for the derivatives $D_{\boldsymbol q}^\ell  {\boldsymbol R}^{(\boldsymbol q)}_{k+1} K(X, \cdot) $ was rather delicate. 
Indeed, $D_{\boldsymbol q}^\ell  {\boldsymbol R}^{(\boldsymbol q)}_{k+1} K(X, \cdot)$ could only be bounded by a stronger norm of $K$, 
more precisely a norm which involves more derivatives of $K$ with respect to the field. 
Therefore the proof of the abstract stable manifold theorem had to be based on an implicit function theorem with loss of regularity. 

In the present paper this difficulty is overcome by using the new finite range decomposition in \cite{Buc18} for which 
one can easily obtain bounds for $D_{\boldsymbol q}^\ell  {\boldsymbol R}^{(\boldsymbol q)}_{k+1} F(X,  \cdot) $ for local functionals $F(X, \cdot)$
in terms of $L^p$ norms of $F(X, \cdot)$, see Theorem~\ref{prop:finalsmoothness}. 
Thus the bound for $D_{\boldsymbol q}^\ell  {\boldsymbol R}^{(\boldsymbol q)}_{k+1} K(X, \cdot) $ follows again 
from the integration of the estimate for the weights. The only new ingredient is that we need a bound for the $L^p$ norm 
of the weight, for some $p > 1$, see \eqref{eq:w7} and \eqref{eq:w8}.

\subsection*{Estimates for \texorpdfstring{$\Pi_2$}{Pi2} and \texorpdfstring{$(1 - \Pi_2)$}{1-Pi2}}
A key step in the definition of the map $\boldsymbol{T}_k$ is the extraction of relevant terms. We need  that this leads
to a bounded map from $K_k$ to $H_{k+1}$ and, more importantly,  that due to the extraction of the relevant
terms the linearisation  $\boldsymbol{C}^{(\boldsymbol{q})}_k$ of the map $K_{k} \mapsto K_{k+1}$ is a strong contraction. 

As we will discuss below,   the main step is to analyse the extraction at the level of Taylor polynomials at 
zero. This leads to the definition of the projection $\Pi_2$ and the remainder map $1- \Pi_2$, see  Section~\ref{se:projection_Pi2}. 
The key properties of these maps are stated in Lemma~\ref{le:Pi2_bounded} and Lemma~\ref{le_contraction_I}. 
 The corresponding estimates only rely on the definition of the field norms in \eqref{eq:primal_norm}. 
The free parameter $A$ does not appear. The dependence on $h$ (or $h_k = 2^k h$) cancels exactly.
We only need the mild geometric condition 
\begin{equation} 
\label{eq:bound_L_Pi2} 
L \ge 2^d + R
\end{equation}
which ensures that for any block $B \in \mathcal{B}_k$  we have $B^\ast \subset B^+$ 
and that  the functionals $H(\p, B)$ and $K(\p, B)$ depend only on the field $\p$ restricted to $B^+$. 
Under the additional harmless condition $L \ge 7$ one gets that, for $k \le N-1$, the set $B^{++}$ does not wrap around the torus
and hence one can work with fields on $\Z^d$ rather than on the torus $T_N = \Z^d / L^N \Z^d$, see the text before 
 \eqref{eq:fields_on_Zd} for further discussion. 
 Note that the earlier condition  \eqref{eq:constraint_L_weights}  implies both    \eqref{eq:bound_L_Pi2} and $L \ge 7$. 
 In addition this ensures that $X^\ast\subset X^{+}$, hence $X^\ast$ does not wrap around the torus on scale $N-1$ either.

\subsection*{Uniform smoothness estimates for the RG map, 
see Chapter~\ref{sec:smoothness}}
The renormalisation map  $\boldsymbol{T}_k^{(\boldsymbol{q})}$ can be written as
\begin{align}  \label{eq:renorme_with_Sk}
(H_{k+1}, K_{k+1}) = 	\boldsymbol{T}_k^{(\boldsymbol{q})}(H_k,K_k)=(\boldsymbol{A}_k^{(\boldsymbol{q})}H_k 
	+\boldsymbol{B}_k^{(\boldsymbol{q})}K_k, \myS_k(H_k,K_k,\boldsymbol{q})) 
\end{align}
where $\boldsymbol{A}_k^{(\boldsymbol{q})}$ and $\boldsymbol{B}_k^{(\boldsymbol{q})}$ are linear operators,
see \eqref{eq:define_Sk}.
We first focus on the smoothness and uniform estimates for the nonlinear map $\boldsymbol{S_k}$.
This map can be written as a composition of 
three polynomial maps $P_1$, $P_2$, and $P_3$, the integration map ${\boldsymbol R}^{(\boldsymbol q)}_{k+1}$, 
the projection $\Pi_2$,  and the exponential map $E$ on  ideal Hamiltonians given by
$E(H) = e^H$, see \eqref{eq:decomposition_S}.  
The bounds for the exponential map $E$ are contained in Lemma~\ref{le:smoothness_exp}  and require no restrictions on $L$,  $h$ or $A$.  
The bounds for the  other maps follow from   the bounds for the product map, the integration map and the map $\Pi_2$. 
The required restrictions on $L$ and $h$ take the form 
\begin{eqnarray} \label{eq:constraint_L_smoothness}
L &\ge& \max(2^{d+3} + 16 R, 4d (2^d + R)),     \\
 \label{eq:constraint_h_smoothness}
h &\ge& C \delta^{-1/2}(L),
\end{eqnarray}
see  \eqref{eq:hL_for_products}--\eqref{eq:L_for_R1R2}  and Lemmas~\ref{le:P2}--\ref{lemmaS2}.
Here $\delta(L)$ is as above the  parameter that appears in the construction of the weights, 
see Theorem~\ref{th:weights_final} and  \eqref{eq:defofdelta}.

The additional lower bound on $L$ as compared to 
\eqref{eq:constraint_L_weights}  comes from the geometric condition
\eqref{eq:estimate_Ustar_blocks} which appears in the analysis of the map $P_1$ and  reads
$$ |U^\ast|_k\leq 2|U|_k  \text{ if } L \ge 4d (2^d + R).$$

The restrictions on $A$ come from the need to compensate certain combinatorial terms in the estimates
of the polynomials maps 
and these restrictions  take the form 
$$ A \ge A_0(L)$$
An explicit choice of $A_0(L)$ is given in   \eqref{eq:bound_A0_smooth}

Otherwise the dependence of constants $L,h$ and $A$  is tracked explicitly
 in Chapter~\ref{sec:smoothness} and we get  an
explicit bound for the final neighbourhood $U_{\rho, \kappa}$ on which 
$S$ is smooth. For $\kappa$ we can take the value in Proposition~\ref{prop:W3}
and $\rho$ can be taken of the form $\rho =c  A^{-2}$ where $c$ is given explicitly in terms
of a constant in the finite range decomposition and the bound for the map $\Pi_2$, see Section~\ref{se:proof_smoothness_S}.

\subsection*{Contraction estimates, see Chapter~\ref{sec:contraction} }

We finally discuss the estimates for the linearisation of $\boldsymbol{T}_k = \boldsymbol{T}_k^{(\boldsymbol{q})}$ at $H=K=0$. 
For ease of notation we drop the superscript $\boldsymbol{q}$. 
The linear maps $\boldsymbol{A}_k$ and $\boldsymbol{B}_k$ which appear in  \eqref{eq:renorme_with_Sk} are easily estimated, 
see Sections~\ref{se:bound_Aq} and~\ref{se:bound_Bq}.
Moreover we have $\partial_{H_k} \boldsymbol{S}_k(0,0) = 0$. 
Thus the heart of the matter is the estimate for $\boldsymbol{C}_k := \partial_{K_k} \boldsymbol{S}_k(0,0)$, i.e., 
the linearisation of the map $K_k \mapsto K_{k+1}$. 

This is on the one hand easier than the smoothness estimate discussed above  since $\boldsymbol{C}_k$ 
has a very explicit form (see below) and since we do not have to deal with 
nonlinear terms and the combinatorics of polynomial expressions in $H_k$ and $K_k$. 
On the other hand the estimate of $\boldsymbol{C}_k$ is actually the most delicate estimate since we need 
that the norm of $\boldsymbol{C}_k$ is small (uniformly in $N$), 
while  for the smoothness estimate we only needed uniform boundedness. 
Indeed, smallness of $\boldsymbol{C}_k$ leads to the strongest restrictions on $L$, see below.
                                   
The linear map $\boldsymbol{C}_k$ is given by
\begin{align}
& (\boldsymbol{C}_k \dot{K})(U,\p) \\ 
\nonumber 
= & \sum_{B:\, \overline{B}=U}(1-\Pi_2)\int_{\Xcal_N}\dot{K}(B,\p+\xi)\,\mu_{k+1}^{(\boldsymbol{q})}(\d\xi)+               
\sum_{\substack{X\in \Pck\setminus \mathcal{B}_k\\  \pi(X)=U}}\int_{\mathcal{X}_N}\dot{K}(X,\p+\xi)\,\mu_{k+1}^{(\boldsymbol{q})}(\d\xi),
\end{align}  
see \eqref{eq:defofCk}.                                 
 The second term only involves connected polymers $X$ which are not single blocks. 
 For those the number of blocks decreases under coarse  graining. 
 More precisely if we denote by $|X|_k$ the number of $k$-blocks in $X$ and by $|U|_{k+1}$ the number of $(k+1)$-blocks in $U$ 
 then  $|U|_{k+1} = |\pi(X)|_{k+1} \le  \frac{1}{1+ 2\upalpha} |X|_k$ for some explicit $\upalpha > 0$ given  in  Lemma \ref{le:app1}.

 Thus the second term has small operator norm if we choose $A$ sufficiently large. 
 Indeed we will show in Lemma~\ref{le:contrlarge}   that it suffices to take
\begin{align}
A \ge   \max \Bigl( \frac{8}{   {\eta}   } {\AP}^2L^d(2^{d+1}+1)^{d2^d}, 
\bigl(\frac{8  \AP}{  {\eta}    \updelta}\bigr)^{\frac{1+2\upalpha}{2\upalpha}} \Bigr)
\end{align}
where $\AP = \AP(L)$ is the constant from     Theorem~\ref{th:weights_final}~\ref{w:w7} 
and $\updelta$ is the constant from Lemma \ref{le:app2}, respectively. 	
		
For the smallness of the first term the crucial ingredient is the smallness of map $1- \Pi_2$, see  Lemma~\ref{le_contraction_I}. 
Actually that lemma gives an estimate for the Taylor polynomial (with respect to $\p$)  at zero.
 To get the necessary estimate for the Taylor polynomial at an arbitrary point $\p$ 
 we use that the natural norms of the  derivatives of  $K$ of  order three or higher 
exhibit a decay of order $L^{-3d/2}$ when we pass from scale $k$ to scale $k+1$, see  \eqref{eq:two_norm_concrete} in Lemma~\ref{le:norms_pointwise}.

The detailed estimates for the first term can be found in Lemmas~\ref{le:contraction_single_block_new}
  and~\ref{le:contraction_single_block_prime}.
  They  hold under the assumption
$$ 
L \ge \max\bigl(  (    4 \eta^{-1}   C' \AB C_1)^{\frac{1}{d'-d}},  (    32 \eta^{-1}   C' \AB (C_2 +1))^{\frac{2}{d}},  { 2^{d+3}+16R}  \bigr),
$$
where $\AB$ is the constant in the single-block integration estimate for the weights, see  \eqref{eq:L_0_single_block_contraction}.

\chapter{Description of the Multiscale Analysis}  \label{sec:description}

In this chapter we introduce the key elements of the multiscale analysis.
We state the existence of  a suitable  finite range decomposition, give a precise definition of the RG map, and define function spaces and norms. 
We continue to work on the discrete torus $T_N=(\mathbb{Z}/(L^N\mathbb{Z}))^d$.

\section{Finite range decompositions}\label{sec:FRD}

In this section we recall the properties of the finite range decomposition which we will use in the following. 
Finite range decompositions of the initial Gaussian measure were introduced by  Brydges, Guadagni, and Mitter \cite{BGM04} 
to better exploit  locality and independence properties and to avoid the use of cluster expansion, see \cite{BT06}  and \cite{AKM13} for further developments.
We will use the finite range decomposition as proven in \cite{Buc18} which is based on the approach by Bauerschmidt \cite{Bau13}. 
We now describe the set-up and the results in detail. 

Recall that $\mathcal{G}=(\mathbb{R}^m)^{\Ical}$ where 
$\{e_1,\ldots,e_d\}\subset\Ical\subset\{\alpha\in \mathbb{N}_0^d\setminus \{0,\ldots,0\}: |\alpha|_\infty\leq R_0\}$ 
and the extended gradient is the vector $D\p(x)=(\nabla^\alpha\p(x))_{\alpha\in \Ical}\in \mathcal{G}$ (see Section~\ref{sec:ggm}). 
For a positive definite quadratic form $\Qscr$ on $\mathcal{G}$ the expression
\begin{align}
\frac{e^{-\frac12 \sum_{x\in T_N} \Qscr(D\p(x))}}{Z}\lambda_N(\d \p)
\end{align}
 defines a Gaussian measure on $\mathcal{X}_N$, the space of vector valued fields with average zero (see \eqref{eqdefofadmissible}).
Denoting the generator of $\Qscr$ by $\boldsymbol{Q}:\Gcal\rightarrow\Gcal$, we get a corresponding elliptic finite difference
operator $\mathscr{A}$ on $\Xcal_N$
\begin{align}
\label{E:A_Q}
\mathscr{A}_{\boldsymbol{Q}}\p=\sum_{\alpha,\beta\in \Ical} (\nabla^\alpha)^\ast 
\boldsymbol{Q}_{\alpha\beta}\nabla^\beta\p.                         
\end{align}

We use  $\mathscr{C}_{\boldsymbol{Q}}=\mathscr{A}_{\boldsymbol{Q}}^{-1}:\Xcal_N\to\Xcal_N$
 to denote the covariance of the Gaussian measure generated by  $\mathscr{A}_{\boldsymbol{Q}}$.

The following theorem states the existence of a finite range decomposition for the family $\mathscr{C}_{\boldsymbol{Q}}$. 
To state the necessary bounds, we first briefly introduce some notation and facts concerning the discrete Fourier transform.
For a more detailed discussion see \cite{AKM13}.
The operator $\mathscr{A}_{\boldsymbol{Q}}:\mathcal{X}_N\to \mathcal{X}_N$ commutes with translations,
hence its inverse $\mathscr{C}_{\boldsymbol{Q}}$ also commutes with translations.
Thus there exists a unique kernel $\mathcal{C}_{\boldsymbol{Q}}:T_N\to \R^{m\times m}$ with 
$\sum_{x\in T_N} \mathcal{C}_{\boldsymbol{Q}} (x)=0$  such that 
\begin{align}
(\mathscr{C}_{\boldsymbol{Q}}\p)(x)=\sum_{y\in T_N}\mathcal{C}_{\boldsymbol{Q}}(x-y)\p(y). 
\end{align}

Recall that $L\geq 3$ is odd. 
We introduce the dual torus 
\begin{align}\label{eq:definition_dual_torus}
\widehat{T}_N=\left\{-\frac{(L^N-1)\pi}{L^N},-\frac{(L^N-3)\pi}{L^N},\ldots ,\frac{(L^N-1)\pi}{L^N} \right\}^d.
\end{align}

For $p\in \widehat{T}_N$, we define the functions $f_p:T_N\rightarrow \mathbb{C}$  by $f_p(x)=e^{i\langle p,x\rangle}$.
Then the Fourier transform $\widehat{\psi}:\widehat{T}_N\rightarrow \mathbb{C}$ of a function $\psi:T_N\rightarrow \mathbb{C}$ is defined by
\begin{align}
	\widehat{\psi}(p)=\sum_{x\in T_N} f_p(-x) \psi(x). 
\end{align}

For vector and matrix valued functions the Fourier transform is defined component-wise.
In particular, the Fourier transform diagonalises translation invariant operators
\begin{align}
	\widehat{\mathscr{C}_{\boldsymbol{Q}}\p}(p)=\widehat{\mathcal{C}}_{\boldsymbol{Q}}(p)\widehat{\p}(p). 
\end{align}
We will also use the Plancherel identity
\begin{align}\label{eq:Plancherel}
	(\p,\psi)_{T_N}=\frac{1}{L^{Nd}}\sum_{p\in \widehat{T}_N} \widehat{\p}(p)\widehat{\psi}(p). 
\end{align}

The discrete derivatives satisfy
\begin{align}\label{eq:def_qp}
	\widehat{\nabla \p}(p)=q(p)\widehat{\p}(p) 
\end{align}
with $q_j(p)=e^{ip_j}-1$ for $1\leq j\leq d$.
For $p\in \widehat{T}_N$ we have $\frac{|p|}{2}\leq |q(p)|\leq |p|$.
We use the shorthand $q(p)^\alpha=\prod_{i=1}^d q_i(p)^{\alpha_i}$ for a multiindex $\alpha\in \mathbb{N}_0^d$.
The Fourier transform of the kernel $\mathcal{A}_{\boldsymbol{Q}}$ of the operator $\mathscr{A}_{\boldsymbol{Q}}$ is
therefore given by
\begin{align}\label{eq:fourierA}
	\widehat{\mathcal{A}}_{\boldsymbol{Q}}(p)=\sum_{\alpha,\beta\in \Ical} \overline{q}(p)^\alpha \boldsymbol{Q}_{\alpha\beta}q(p)^\beta.                                                          
\end{align}
and $\widehat{\mathcal{C}}_{\boldsymbol{Q}}(p)=\left(\widehat{\mathcal{A}}_{\boldsymbol{Q}}(p)\right)^{-1}$

We consider the set of all  quadratic forms $\Qscr$ that satisfy,
\begin{align}\label{eq:conditionForQ}
\omega_0|z^\nabla|^2\leq \Qscr(z)&\leq \frac{1}{\omega_0} |z|^2
\end{align} 
for some constant $\omega_0\in (0,1)$ which is a slightly weaker condition than \eqref{eq:Qlowerbound_again}. 
Note that \eqref{eq:fourierA} then implies that there exists a constant $\omega$ such that
\begin{align}\label{Ahatestimate}
\begin{split}    
\omega|p|^2\leq \widehat{\mathcal{A}}_{\boldsymbol{Q}}(p)\leq \frac1\omega  |p|^2,                     \\
\frac{\omega}{|p|^2}\leq \widehat{\mathcal{C}}_{\boldsymbol{Q}}(p)\leq \frac{1}{\omega |p|^2},                                     
\end{split}                                                                     
\end{align}
where $\omega$  only depends on $\omega_0$, $R_0$, and $d$. 
\begin{theorem}[Theorem~2.5 in \cite{Buc18}]\label{thm:frd}
Fix  $\overline\omega_0>0$.
Consider the family of symmetric, positive operators $\boldsymbol{Q}:\Gcal\rightarrow \Gcal$ corresponding to quadratic forms $\Qscr$ 
that satisfy strict inequality \eqref{eq:conditionForQ} with $\overline\omega_0$. Let $L> 3$ be  odd, $N\geq 1$ as before and let $\tilde{n}>n$ be two integers. 
Then there exists a family of finite range decomposition $\mathscr{C}_{\boldsymbol{Q},k}$, $k=1,2,\dots,N+1$, 
of the operator $\mathscr{C}_{\boldsymbol{Q}}$ such that 
\begin{align}\begin{split}
 & \mathscr{C}_{\boldsymbol{Q}}=\sum_{k=1}^{N+1} \mathscr{C}_{\boldsymbol{Q},k},     \text{ with }                \\
 & \mathcal{C}_{\boldsymbol{Q}, k}(x)=-C_k\, \, \text{for  }\, |x|_\infty\geq \frac{L^k}{2}, 
 \end{split}\end{align}
where $C_k\geq 0$ is a constant, positive semi-definite matrix that is  independent of $\boldsymbol{Q}$.
The family $\mathscr{C}_{\boldsymbol{Q}, k}$ satisfies the following bounds where all constants may depend on $R$, $d$, $m$, 
$\overline\omega_0$, $n$, and $\tilde{n}$. 
The $\alpha$-th discrete derivative for all $\alpha$  with $|\alpha|\leq n$ is bounded by
\begin{align}\label{eq:discretebounds}
\sup_{\left|\dot{\boldsymbol{Q}}\right|\leq 1}\left|\nabla^\alpha
D_{\boldsymbol{Q}}^\ell\mathcal{C}_{\boldsymbol{Q},k}(x)(\dot{\boldsymbol{Q}},\ldots,\dot{\boldsymbol{Q}})\right|\leq
\begin{cases}
C_{\alpha,\ell} L^{-(k-1)(d-2+|\alpha|)}\;\text{for}\;d+|\alpha|>2\\
C_{\alpha,\ell}\ln(L) L^{-(k-1)(d-2+|\alpha|)}\;\text{for}\;d+|\alpha|=2.
\end{cases}
\end{align}
Further, for kernels in Fourier space we have the following lower bounds  with a constant $c>0$,  
\begin{align}\begin{split}\label{finalfrdlower}
\widehat{\mathcal{C}}_{\boldsymbol{Q},k}(p)         & \geq                           
\begin{cases}	
cL^{-2(d+\tilde{n})-1}L^{2j}L^{(k-j)(-d+1-n)} & \text{ for }  L^{-j-1}<|p|\leq L^{-j}\text{ and } j< k\\
cL^{-2(d+\tilde{n})-1}L^{2k}                           & \text{ for } |p|\leq L^{-k-1},
\end{cases}
\end{split}\end{align}
and similar upper bounds  with a constant $C$,
\begin{align}
\begin{split} \label{finalfrdupper}
\left|\widehat{\mathcal{C}}_{\boldsymbol{Q},k}(p)\right|               & \leq                           
\begin{cases}
CL^{2(d+\tilde{n})+1}L^{2j}L^{(k-j)(-d+1-n)}                  & \text{ for }                   L^{-j-1}<|p|\leq L^{-j}\text{ and } j< k\\
CL^{2k} & \text{ for } |p|\leq L^{-k-1}. 
\end{cases}
\end{split}
\end{align}
For the derivatives of the kernels with $\abs{\dot{\boldsymbol{Q}}}\leq 1$ and $\ell\geq 1$ we finally have the following stronger bounds
\begin{align}
\left|\frac{\d^\ell}{\d s^\ell}\widehat{\mathcal{C}}_{\boldsymbol{Q}+s\dot{\boldsymbol{Q}},k}(p)\right|\leq
\begin{cases}
C_{\ell}L^{(d+\tilde{n})+1}L^{2j}L^{(k-j)(-d+1-\tilde{n})}\hspace{-0.15cm} &\text{for }  L^{-j-1}<|p|\leq L^{-j},  j< k,\\
C_{\ell}L^{2k}  & \text{for } |p|\leq L^{-k-1}.
\end{cases}
\end{align}
The lower and upper bound can be combined to give, for $\ell\geq 1$
and $\boldsymbol{Q}$, $\widetilde{\boldsymbol{Q}}$ satisfying \ref{eq:Qlowerbound} 
\begin{align}\label{keyquotientbound}
\begin{split}
\left| \frac{\d^\ell}{\d s^\ell}
\widehat{\mathcal{C}}_{\boldsymbol{Q}+s\dot{\boldsymbol{Q}},k}(p)\right|& \cdot 
\left| \widehat{\mathcal{C}}_{\widetilde{\boldsymbol{Q}},k}(p)^{-1}\right| 
\\
&\leq \begin{cases}
K_\ell L^{4(d+\tilde{n})+2}L^{(k-j)(n-\tilde{n})}  \; & \text{ for }   L^{-j-1}<|p|\leq L^{-j}\text{ and } j< k,\\
K_\ell L^{2(d+\tilde{n})+1}\;          & \text{ for } |p|\leq L^{-k-1}, 
\end{cases}
\end{split}
\end{align}
where the constants $K_\ell$ do not depend on $N$ or $k$.
\end{theorem}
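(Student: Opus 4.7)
Since the claim concerns the inverse $\mathscr{C}_{\boldsymbol{Q}} = \mathscr{A}_{\boldsymbol{Q}}^{-1}$ of a family of symmetric elliptic finite difference operators depending smoothly on $\boldsymbol{Q}$, the natural route is to represent $\mathscr{C}_{\boldsymbol{Q}}$ as an integral of operators with an explicit compactly supported kernel, and then dyadically decompose this integral. The plan is to follow Bauerschmidt's method from \cite{Bau13}, which is the approach used in \cite{Buc16} and which is what distinguishes it from the older construction of \cite{AKM13}. One chooses a smooth radial non-negative function $W \in C_c^\infty(\R^d)$ supported in $[-\tfrac{1}{4},\tfrac{1}{4}]^d$ with $\widehat W(0) \neq 0$, and for each scale $t > 0$ sets $W_t(x) = t^{-d} W(x/t)$, whose convolution kernel $W_t * W_t$ is then supported in $[-\tfrac{t}{2},\tfrac{t}{2}]^d$. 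Using $W_t$ one constructs an operator $\Theta_{\boldsymbol{Q}}(t)$ whose discrete kernel is supported in $\{|x|_\infty \le t/2\}$ and such that $\mathscr{C}_{\boldsymbol{Q}} = \int_0^\infty \Theta_{\boldsymbol{Q}}(t) \, dt/t$, modulo the additive constant matrices reflecting the zero-mean condition on $\Xcal_N$. Cutting the $t$-integral along the dyadic windows $[L^{k-1}, L^k]$ for $k = 1,\dots,N$ and collecting the tail $t \ge L^N$ into the last piece yields the desired decomposition $\mathscr{C}_{\boldsymbol{Q},k}$.

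\textbf{Finite range and Fourier bounds.} The finite range property $\mathcal{C}_{\boldsymbol{Q},k}(x) = -C_k$ for $|x|_\infty \ge L^k/2$ is immediate from the support of $W_t$; on the torus $T_N$ one works modulo constants, and the normalization $\sum_{x \in T_N} \mathcal{C}_{\boldsymbol{Q},k}(x) = 0$ determines $C_k$ uniquely. For the quantitative estimates one passes to the dual torus $\widehat{T}_N$: the Fourier symbol of $\Theta_{\boldsymbol{Q}}(t)$ is essentially a smooth cutoff concentrated near $|p| \sim t^{-1}$, which when combined with the uniform ellipticity $\omega|p|^2 \le \widehat{\mathcal{A}}_{\boldsymbol{Q}}(p) \le \omega^{-1}|p|^2$ from \eqref{Ahatestimate} yields both the upper and lower Fourier bounds \eqref{finalfrdlower}--\eqref{finalfrdupper} by splitting the $t$-integration into the regimes $L^{-j-1} < |p| \le L^{-j}$. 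The pointwise discrete-derivative bounds \eqref{eq:discretebounds} then follow from inverse Fourier transform: each factor $\nabla^\alpha$ contributes a factor $q(p)^\alpha$ with $|q(p)| \le |p|$, and one gains $L^{-|\alpha|(k-1)}$ by summing over $p$ on the dual torus while exploiting that $|\widehat{\mathcal{C}}_{\boldsymbol{Q},k}(p)|$ is already concentrated at $|p| \sim L^{-k}$.

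\textbf{Smoothness in $\boldsymbol{Q}$ and the ratio bound.} Since $\boldsymbol{Q} \mapsto \widehat{\mathcal{A}}_{\boldsymbol{Q}}(p)$ is a polynomial of bounded degree in the entries of $\boldsymbol{Q}$, the map $\boldsymbol{Q} \mapsto \widehat{\mathcal{C}}_{\boldsymbol{Q}}(p) = \widehat{\mathcal{A}}_{\boldsymbol{Q}}(p)^{-1}$ is smooth, and Fa\`a di Bruno together with \eqref{Ahatestimate} gives pointwise estimates of the form $|\partial_{\boldsymbol{Q}}^\ell \widehat{\mathcal{C}}_{\boldsymbol{Q}}(p)| \le C_\ell |p|^{-2}$ on $\widehat{T}_N \setminus \{0\}$. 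Inserting this into the explicit representation of $\widehat\Theta_{\boldsymbol{Q}}(t)$ and differentiating under the integral sign, which is justified by uniform pointwise control on the dual torus, yields the derivative bounds with improved decay governed by the parameter $\tilde n$. The ratio bound \eqref{keyquotientbound} is finally obtained by multiplying the derivative estimate by the lower bound \eqref{finalfrdlower} for $\widehat{\mathcal{C}}_{\widetilde{\boldsymbol{Q}},k}(p)$ case by case in the scale decomposition.

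\textbf{Main obstacles.} Two points require genuine care. First, to obtain the \emph{optimal} $L$-dependence of the constants in \eqref{eq:discretebounds}, which is crucial for the single-block constant $\AB$ to be independent of $L$ in the contraction estimate of Theorem~\ref{prop:contractivity}, one must use Bauerschmidt's sharp construction; the earlier approach of \cite{BT06} used in \cite{AKM13} lost factors of $L$ here. Second, to get the enhanced decay $L^{(k-j)(-d+1-\tilde n)}$ in the $\boldsymbol{Q}$-derivatives with $\tilde n > n$, one needs the test function $W$ to carry extra smoothness and vanishing moments so that each application of $\partial_{\boldsymbol{Q}}$ to $\widehat{\mathcal{C}}_{\boldsymbol{Q}}$ generates, through the identity $\partial_{\boldsymbol{Q}} \widehat{\mathcal{C}}_{\boldsymbol{Q}} = - \widehat{\mathcal{C}}_{\boldsymbol{Q}} (\partial_{\boldsymbol{Q}} \widehat{\mathcal{A}}_{\boldsymbol{Q}}) \widehat{\mathcal{C}}_{\boldsymbol{Q}}$, two extra powers of $|p|$ that can be converted to the desired gain in $L$ by the dyadic splitting; this is where the separation between the regularity parameters $n$ and $\tilde n$ enters.
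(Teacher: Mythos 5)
You should first note that the paper itself offers no proof of this statement: it is imported verbatim as Theorem~2.5 of \cite{Buc16} (based on \cite{Bau13}), so what you are really attempting is a reconstruction of the proof of the cited result. Your sketch points in a reasonable general direction (an integral representation of $\mathscr{C}_{\boldsymbol{Q}}$ cut along dyadic windows, Fourier analysis on the dual torus using \eqref{Ahatestimate}, smoothness in $\boldsymbol{Q}$ through the symbol), but as written it has genuine gaps precisely at the points that make the theorem nontrivial.

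Concretely: (i) the mechanism you describe --- a fixed spatial mollifier $W_t(x)=t^{-d}W(x/t)$ with $W_t\ast W_t$ compactly supported --- is essentially the construction of \cite{BT06} used in \cite{AKM13}, not Bauerschmidt's. In \cite{Bau13} one applies functions with compactly supported Fourier transform to the operator $\mathscr{A}_{\boldsymbol{Q}}$ itself and obtains the finite range property from the finite propagation speed of the associated (discrete) wave flow; this distinction is not cosmetic, since the optimal $L$-dependence in \eqref{eq:discretebounds}, which the surrounding paper explicitly needs so that the single-block constant in the contraction estimate is $L$-independent, is exactly what the \cite{BT06}-type construction does not deliver. (ii) Differentiating in $\boldsymbol{Q}$ under the integral with "uniform pointwise control" plus Fa\`a di Bruno and \eqref{Ahatestimate} only gives bounds of order $|p|^{-2}$; it does not produce the improved decay $L^{(k-j)(-d+1-\tilde n)}$ with a separate parameter $\tilde n>n$, uniformly in $N$ and $k$, which is the main technical content of \cite{Buc16} and requires building the extra regularity into the decomposition itself rather than extracting it afterwards, and hence also the ratio bound \eqref{keyquotientbound} does not follow by simply "multiplying" estimates case by case. (iii) Your normalisation argument would make the far-field constant depend on $\boldsymbol{Q}$, whereas the theorem asserts $C_k\ge 0$ is a positive semidefinite matrix independent of $\boldsymbol{Q}$; this $\boldsymbol{Q}$-independence must come out of the construction and cannot be fixed by a normalisation convention. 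Likewise the lower bound \eqref{finalfrdlower} is claimed for every Fourier mode of each individual piece, including $|p|$ far from $L^{-k}$, which requires positivity of each piece and quantitative tail control of the spectral cutoffs, not merely a splitting of the $t$-integration.
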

Let us recall one further theorem from \cite{Buc18}
that states that expectations with respect to $\mu_{\mathscr{C}_{k+1}^{\boldsymbol{Q}}}$ are differentiable in $\boldsymbol{Q}$. 
This will be a key ingredient in the proof of the smoothness of our renormalisation map.

\begin{theorem}[Theorem~4.5 in \cite{Buc18}]\label{prop:finalsmoothness}
  Let  $\mathscr{C}_{\boldsymbol{Q},k+1}$  a finite range decomposition as in Theorem
\ref{thm:frd}
with $\tilde{n}-n>d/2$ and $X\subset T_N$ be a subset with diameter
$D=\mathrm{diam}_\infty(X)\geq
L^k$. Let $F:\mathcal{V}_N\rightarrow \mathbb{R}$ be a
functional that is measurable
with respect to the $\sigma$-algebra generated by $\{\p(x)|\, x\in X\}$, i.e.,
$F$ depends only on the values of the field $\p$ in $X$. Then for $\ell\geq 1$ and $p>1$ the following bound holds
\begin{align}
 \left|\frac{\d^\ell}{\d t^\ell}\int_{\mathcal{X}_N}\left. F(\p)\, \mu_{\boldsymbol{Q}+t\boldsymbol{Q}_1,k+
 1}(\d\p)\right|_{t=0}\right|\leq C_{\ell,p}(L) (DL^{-k})^{\frac{d\ell}{2}} \left| \boldsymbol{Q}_1\right|^\ell
\lVert
F\rVert_{L^p(\mathcal{X}_N,\mu_{{\boldsymbol{Q},k+1}})}.
\end{align}
The constant depends in addition on $K_\ell$ from \eqref{keyquotientbound}
and therefore on $\omega_0$, $d$, $m$, $n$, $\tilde{n}$, and $R_0$.
\end{theorem}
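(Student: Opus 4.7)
My plan is to differentiate under the integral in $t$, peel off $F$ via Hölder's inequality, and bound the remaining polynomial correction using Gaussian hypercontractivity combined with the finite range property of $\mathscr{C}_{k+1}$ and the quotient estimate \eqref{keyquotientbound}. Setting $\mathscr{C}(t):=\mathscr{C}_{\boldsymbol{Q}+t\boldsymbol{Q}_1,k+1}$ and observing that the Gaussian density on $\Xcal_N$ is smooth in $t$ (since $\mathscr{C}(t)$ remains positive definite on the orthogonal complement of constants for small $t$), I would repeatedly differentiate to obtain
\begin{equation*}
\frac{\d^\ell}{\d t^\ell}\bigg|_{t=0}\int_{\Xcal_N} F(\p)\,\mu_{\mathscr{C}(t)}(\d\p)=\int_{\Xcal_N} F(\p)\,P_\ell(\p)\,\mu_{\mathscr{C}(0)}(\d\p),
\end{equation*}
where $P_\ell$ is a polynomial in $\p$ of degree at most $2\ell$ whose coefficients are polynomial expressions in traces and matrix elements of $\mathscr{C}(0)^{-1}\mathscr{C}^{(j)}(0)$ for $1\le j\le\ell$. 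The bound \eqref{keyquotientbound} provides uniform-in-$N$ control on these operators and accounts for the $|\boldsymbol{Q}_1|^\ell$ factor together with the $L$-dependent prefactor in $C_{\ell,p}(L)$.

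Applying Hölder reduces the task to bounding $\|P_\ell\|_{L^{p'}(\mu_{\mathscr{C}(0)})}$ by $C_{\ell,p}(L)(DL^{-k})^{d\ell/2}|\boldsymbol{Q}_1|^\ell$. A direct use of Gaussian hypercontractivity (Nelson) and Wick's theorem applied to $P_\ell$ as it stands would produce a factor involving the full dimension of $\Xcal_N$, namely $L^{Nd\ell/2}$; obtaining instead the claimed factor $(DL^{-k})^{d\ell/2}$ is the main obstacle. I would address this by first replacing $P_\ell$ in the integral against $F$ with its conditional expectation $\tilde P_\ell(\p|_X):=\mathbb{E}_{\mu_{\mathscr{C}(0)}}[P_\ell\mid\sigma(\p|_X)]$, exploiting the $\sigma(\p|_X)$-measurability of $F$. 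The finite range of $\mathscr{C}_{k+1}$---that is, $\mathscr{C}_{k+1}(x-y)=-C_k$ for $|x-y|_\infty\ge L^{k+1}/2$, with the constant immaterial on the zero-mean space $\Xcal_N$---implies that the kernels of $\mathscr{C}(0)^{-1}\mathscr{C}^{(j)}(0)$ are effectively supported at range $L^{k+1}$, so that after conditioning only an $L^{k+1}$-neighborhood of $X$ contributes nontrivially to $\tilde P_\ell$.

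The final step is to bound $\|\tilde P_\ell\|_{L^{p'}}$: I would apply hypercontractivity to reduce to an $L^2$-norm, evaluate the latter via Wick's theorem as a sum over pairings, and use \eqref{keyquotientbound} to estimate each kernel factor uniformly in $N$. The effective degrees of freedom are of order $(D/L^k)^d$ blocks of side $L^{k+1}$ covering an $L^{k+1}$-thickening of $X$, and a polynomial of degree $2\ell$ in this many centered Gaussians with bounded variance has $L^2$-norm of order $(D/L^k)^{d\ell/2}$ times the size of its coefficients---precisely the advertised bound. The hard part will be the bookkeeping in this localization step: quantitatively controlling how the kernels truncate under the conditional expectation, ensuring that the $L$-dependent constants absorb cleanly into $C_{\ell,p}(L)$, and confirming that no residual $N$-dependence survives the counting of pairings. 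Since the statement is Theorem~4.5 of \cite{Buc16}, I would follow the localization argument presented there for the delicate part.
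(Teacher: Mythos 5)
You should first note that the paper does not actually prove this statement: it is imported verbatim as Theorem~4.5 of \cite{Buc16}, so there is no internal argument to compare against, and your sketch has to stand on its own. In particular, the closing remark that you would "follow the localization argument presented there" is circular, since that localization argument is exactly the content of the theorem you are being asked to establish.

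The substantive gap is in the localization step itself. The overall architecture (differentiate the Gaussian density in $t$ to produce a polynomial $P_\ell$ of degree $\le 2\ell$ with coefficients built from $\mathscr{C}^{-1}\mathscr{C}^{(j)}$ and their traces, apply H\"older, then exploit the $\sigma(\p{\restriction_X})$-measurability of $F$ through a conditional expectation) is the right shape. But your justification that "the kernels of $\mathscr{C}(0)^{-1}\mathscr{C}^{(j)}(0)$ are effectively supported at range $L^{k+1}$" is not correct: $\mathscr{C}_{k+1}$ and its $t$-derivatives are finite range (the constant $C_k$ is $\boldsymbol{Q}$-independent, so it drops out of the derivatives), but the inverse $\mathscr{C}_{k+1}^{-1}$ is \emph{not} finite range, so the composite operators are not spatially localized and conditioning on $\p{\restriction_X}$ does not truncate them as you claim. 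The mechanism that yields $(DL^{-k})^{d\ell/2}$ rather than $D^{d\ell/2}$ (let alone $L^{Nd\ell/2}$) is spectral, not spatial: by \eqref{keyquotientbound} the symbol of $\mathscr{C}^{-1/2}\dot{\mathscr{C}}\,\mathscr{C}^{-1/2}$ is bounded by $K_\ell L^{O(1)}\abs{\boldsymbol{Q}_1}$ for $\abs{p}\lesssim L^{-k}$ and decays like $L^{(k-j)(n-\tilde n)}$ for $\abs{p}\sim L^{-j}$, $j<k$, and the hypothesis $\tilde n-n>d/2$ is precisely what makes the per-site Hilbert--Schmidt density of this operator of order $L^{-kd}$, i.e.\ one effective bounded-variance mode per $L^k$-block; only after this frequency-space input does a rank/conditioning count over a neighbourhood of $X$ produce the factor $(DL^{-k})^{d/2}$ per derivative. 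Your sketch never invokes $\tilde n-n>d/2$, and without it the assertion that one is dealing with "a polynomial of degree $2\ell$ in $(D/L^k)^d$ centered Gaussians with bounded variance" is unsupported, since each block of side $L^{k+1}$ carries $\sim L^{kd}$ field variables and their reduction to a single controlled mode is exactly the point at issue.
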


We already explained  in Chapter \ref{sec:explanation} that in order to prove Theorem~\ref{th:pertcomp} 
it is not sufficient to decompose the Gaussian measure generated by $\Qscr$ but we have to consider small perturbations of this quadratic form. 
However, it is sufficient to consider only perturbations of the gradient-gradient term of the quadratic form. 
They are parametrized by symmetric maps $\boldsymbol{q}: \mathbb{R}^{d\times m}\rightarrow \mathbb{R}^{d\times m}$ 
 and we use $|\boldsymbol{q}| $ to denote   its operator norm with respect to the standard scalar product on $\mathbb{R}^{d\times m}$.
We consider the family of quadratic forms $\Qscr^{(\boldsymbol{q})}$ given by
\begin{align}
\Qscr^{(\boldsymbol{q})}(z)=\Qscr(z){-}z^\nabla\cdot \boldsymbol{q}z^\nabla
\end{align}
 and the corresponding family of operators
\begin{align}\label{eq:definition_Aq}
	\mathscr{A}^{(\boldsymbol{q})}=\sum_{\alpha,\beta\in \Ical}\sum_{i,j=1}^m (\nabla^{(\alpha,i)})^\ast 
	\boldsymbol{Q}_{(\alpha,i),(\beta,j)}\nabla^{(\beta, j)}  {-}\sum_{|\alpha|=|\beta|=1}\sum_{i,j=1}^m \boldsymbol{q}_{(\alpha,i),(\beta,j)}(\nabla^{(\alpha,i)})^\ast \nabla^{(\beta,j)} 
\end{align}
where $\nabla^{(\alpha,i)}\p(x)=\nabla^\alpha\p_i(x)$ and $\boldsymbol{Q}$ denotes the generator of $\Qscr$.
 The partition function of the Gaussian measure generated by $\mathscr{A}^{(\boldsymbol{q})}$ will be denoted by 
\begin{align}\label{eq:measurenormalisation}
	Z^{(\boldsymbol{q})}=\int_{\Xcal_N} e^{-\frac{1}{2}(\p,\mathscr{A}^{(\boldsymbol{q})}\p)}\,\d\p. 
\end{align}

In the following we will always assume that 
\begin{align}
\boldsymbol{q}\in B_{\kappa}=B_\kappa(0)\coloneqq \{\boldsymbol{q}\in \R^{(d\times m)\times (d\times m)}_{\mathrm{sym}}:\, |\boldsymbol{q}|\leq {\kappa} \}
\end{align}
 for some ${\kappa}$ with
$\kappa< \frac{\omega_0}{2}$.
Later we will impose additional conditions on $\kappa$.

Note that the family $\Qscr^{(\boldsymbol{q})}$ satisfies the condition \eqref{eq:conditionForQ} 
with $\overline\omega_0=\omega_0/2$ for $\boldsymbol{q}\in B_{\kappa}$.
To obtain our main results, we fix a finite range decomposition  
as in Theorem~\ref{thm:frd} with parameters $\overline\omega_0=\omega_0/2$ and
\begin{align}\label{eq:defn}
	n=2M=4\left\lfloor \frac{d}{2}\right\rfloor+6 ,\quad 
	\tilde{n}=n+\left\lfloor \frac{d}{2}\right\rfloor+1        
	=5\left\lfloor \frac{d}{2}\right\rfloor+7.         
\end{align}
The choice is related to the choice of the norms and a Sobolev embedding as we will see later. 
In particular we obtain a finite range decomposition $\mathscr{C}_k^{(\boldsymbol{q})}$
with kernels $\mathcal{C}_k^{(\boldsymbol{q})}$ with $1\leq k\leq N+1$
for $\boldsymbol{q}\in B_\kappa$ of the covariances $\mathscr{C}^{(\boldsymbol{q})}=\left(\mathscr{A}^{(\boldsymbol{q})}\right)^{-1}$.
To state the result in Theorem~\ref{th:weights_final}
in slightly bigger generality we will consider more general choices of parameters there.

The key property of these decompositions is their finite range   which implies
for a random Gaussian field $\p$ with covariance $\mathcal{C}^{(\boldsymbol{q})}_k$ 
that
$\mathbb{E}(\nabla_i\p(x)\nabla_j\p(y))=\nabla_j^\ast\nabla_i\mathcal{C}^{(\boldsymbol{q})}_k(x-y)=0$ if $|x-y|\geq L^k/2$, 
rendering the gradient variables $\nabla_i\p(x)$ and $ \nabla_j\p(y)$  to be independent.
In particular, this implies 
\begin{align}\label{eq:finiteRangeProp} 
\mathbb{E}(F_1(\nabla\p{\restriction_X})F_2(\nabla\p{\restriction_Y}))              
=\mathbb{E}(F_1(\nabla\p{\restriction_X}))\mathbb{E}(F_2(\nabla\p{\restriction_Y})) 
\end{align}
for sets $X$ and $Y$ such that $\mathrm{dist}(X,Y)\geq L^k/2$.
In analytic terms this means
\begin{align}
\int_{\Xcal_N} F_1(\nabla\p{\restriction_X})F_2(\nabla\p{\restriction_Y}) \,\mu_{\mathscr{C}^{(\boldsymbol{q})}_k}                                                
= \int_{\Xcal_N} F_1(\nabla\p{\restriction_X}) \,\mu_{\mathscr{C}^{(\boldsymbol{q})}_k}  \int_{\Xcal_N} F_2(\nabla\p{\restriction_Y})                             
\,\mu_{\mathscr{C}^{(\boldsymbol{q})}_k}                                                
\end{align}
We will use this \emph{factorization property} frequently in the following.
Also, we will often use the shorthand  $\mu_k^{(\boldsymbol{q})}=\mu_{\mathscr{C}^{(\boldsymbol{q})}_k}$,
dropping occasionally $\boldsymbol{q}$ from  the notation.

If $\p$ is distributed according to $\mu$ and  the fields $\p_k$ are independent and  distributed according to $\mu_k$,
the finite range decomposition amounts, in probabilistic language, to the claim that
\begin{align}
\label{p_k}
	\p\overset{\mathcal{D}}{=}\sum_{k=1}^{N+1} \p_k
\end{align}
in distribution. Or, from the analytic viewpoint, this can be expressed in terms of the convolution of measures, i.e.,
\begin{align}
	\mu=\mu_1\ast\ldots \ast\, \mu_{N+1}. 
\end{align}

The renormalisation maps  are then defined by succesive integrations,
\begin{align}\label{eq:defRenormMap}
	(\boldsymbol{R}_k^{(\boldsymbol{q})}F)(\p)=\int_{\Xcal_N} F(\p+\xi)\, \mu_k^{(\boldsymbol{q})}(\d \xi) 
	=F\ast \mu_k^{(\boldsymbol{q})}(\p)                                                  
\end{align}
for $1\leq k\leq N+1$.
Later we will define Banach spaces of functionals that will guarantee that the map
$\boldsymbol{R}_k^{(\boldsymbol{q})}$ is well defined and continuous.
If $F$ is integrable with respect to $\mu^{(\boldsymbol{q})}$ this definition implies
\begin{align}
\label{E:R...R}
	\begin{split}
	\int_{\Xcal_N} F(\p)\,\mu^{(\boldsymbol{q})}(\d \p)
	  & = \int_{\Xcal_N\times\ldots\times \Xcal_N} F\Bigl(\sum_{i=1}^{N+1}\p_i\Bigr)\,\mu_1^{(\boldsymbol{q})}(\d\p_1) 
	\ldots\mu_{N+1}^{(\boldsymbol{q})}(\d \p_{N+1})\\
	  & =(\boldsymbol{R}_{N+1}^{(\boldsymbol{q})}\ldots \boldsymbol{R}_{1}^{(\boldsymbol{q})})(F)(0)                                                 
	\end{split}
\end{align}

\section{Polymers and relevant Hamiltonians}  \label{se:polymers}

In this section 
 we define certain subsets of the torus that will be used to organize the multiscale analysis. 
We also introduce  local and shift-invariant functionals and the concept of relevant Hamiltonians.
To keep these definitions simple we introduce the constant
\begin{align}\label{eq:definition_R}
R=\max(R_0,2\lfloor d/2\rfloor+3)=\max(R_0,M)
\end{align}
depending only on the range of the interaction $R_0$ and the dimension $d$.

When working with subsets of the torus it is convenient to identify subsets of $\Z^d$ 
with their image under the projection $\Z^d\to T_N=(\mathbb{Z}/(L^N\mathbb{Z}))^d$ without reflecting this in the notation.

There is a natural hierarchical paving corresponding to the correlation range 
of random fields governed by Gaussian fields $\p_k $ introduced in \eqref{p_k}. 
Namely, for $ k=0,1,2,\ldots,N $, we pave the torus $T_N $ by $ L^{(N-k)d} $ disjoint cubes of side length $ L^k $. 
These cubes are all translates by vectors in $ L^k\Z^d $ ($L$ is odd) of the cube  $B_0\subset T_N$ which is the image
of the set $\{z\in \mathbb{Z}^d:\; |z_i|\leq \frac{L^k-1}{2}\}$ under the projection $\Z^d\to T_N$.
We call such cubes  \emph{$k$-blocks} or blocks of $ k$-th generation, and use 
$\Bcal_k=\Bcal_k(T_N)$  to denote the set of all $ k$-blocks in $T_N$,
$$
\Bcal_k=\Bcal_k(T_N)=\{B\colon B \mbox{ is a } k\mbox{-block}\},\quad k=0,1,\ldots,N.
$$
Single vertices of the torus are $0$-blocks, the starting generation for the renormalisation group transforms.. The only $N$-block is the torus $T_N$ itself, ${\Bcal}_N=\{T_N\}$.

\begin{figure}
\begin{overpic}[abs,unit=1mm,scale=.75,
%grid
]
{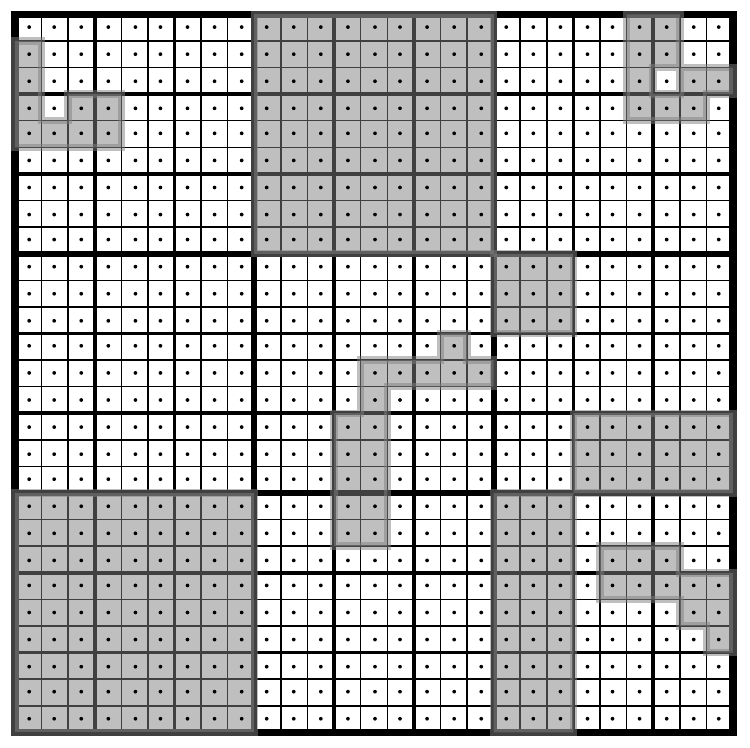}
\put(-4,80) {$X_1$}
\put(45,95) {$X_2$}
\put(82,95) {$X_1$}
\put(95,55) {$X_3$}
\put(95,35) {$X_4$}
\put(95,16) {$X_5$}
\put(65,-3) {$X_4$}
\put(45,-3) {$X_6$}
\put(16,-3) {$X_2$}
\end{overpic}
\caption{Torus $(\Z/(L\Z))^d$ with $d = 2$, $L = 3$, and $N = 3$ identified with  the square $\{z\in \Z^2: \abs{z}_\infty\le 13\}$. 
The sets $\Bcal_0$, $\Bcal_1$, and $\Bcal_2$ contain $27^2$, $9^2$, and $3^2$ blocks, respectively.
There are six connected polymers in the figure: 0-polymers $X_1$ (the two pieces touching the boundary are connected on the torus), 
$X_5$,  and $X_6$ (in the center),  1-polymers $X_3$ and $X_4$, and 2-polymer $X_2$ (again, the two pieces touching on the torus).
Only the 1-polymers $X_1$ and  $X_6$ are strictly disjoint from the other polymers---taking into account the connectedness on the torus,  
the  polymer $X_2$ is connected with each of the polymers $X_3$, $X_4$, and $X_5$.}
\label{fig:polymers}
\end{figure}

Next, we summarise a notation for particular unions and neighbourhoods of blocks:

\begin{itemize}[leftmargin=0.4cm]	
\item A union of $k$-blocks is called a $k$-\emph{polymer} and $\mathcal{P}_k$ will be the set of all $k$-polymers.
	Note that this definition of polymers  differs from the definitions inspired directly by  physics, in particular polymers need not be connected.
\item A nonempty set $X\subset T_N$ is \emph{connected} if for all $x,y\in X$ there is a sequence 
$x=x_0, x_1,\ldots , x_m=$
	$y$ with $x_i\in X$ for $0\leq i\leq m$ such that $|x_i-x_{i+1}|_\infty=1$ for $0\leq i\leq m-1$. 
	This notion corresponds to graph connectedness in the graph with vertices $\TN$ and edges between $x,y\in \TN$ if $|x-y|_\infty=1$.	 
	We illustrate connected polymers in Figure~\ref{fig:polymers}.
	  We say that connected $X, Y \subset T_N$ are \emph{touching} if $X\cup Y$ is connected.
\item Sets $A$ and $B$ are \emph{strictly disjoint} if $|a-b|_\infty>1$ for all $a\in A$ and $b\in B$. 
	 An important property is that for $X, Y\in \mathcal{P}_k$ such
	      that $X$ and $Y$ are strictly disjoint we have $\text{dist}(X,Y) > L^k$. 
	      If $\xi_k$ is distributed according to $\mu_k$ this implies that the gradient fields $\nabla \xi_k$ restricted to $X$ and $Y$ are independent by
	      the finite range property. 
\item We use $\Pck$  to denote the set of \emph{connected} $k$-\emph{polymers} (hence nonempty) and 
	 $\mathcal{C}(X)$  to denote the set of \emph{connected components of a polymer} $X$.
\item $\mathcal{B}_k(X)$ is the \emph{ set of  all } $k$-\emph{blocks} contained in a polymer $X$ 	with  $|X|_k$ denoting their \emph{number}.
\item The  \emph{closure} $\overline{X}\in\mathcal{P}_{k+1}$  of a $k$-polymer  $X\in \mathcal{P}_k$ 
	is the smallest $(k+1)$-polymer containing $X$. 
\item We say that a connected $k$-polymer $X\in \Pck$ is \emph{small}  if 
	$|X|_k\leq 2^d$. We use  $\mathcal{S}_k$ to denote the set of all small $k$-polymers. All other polymers
	      in $\Pck\setminus \mathcal{S}_k$ will be called \emph{large}. 
	      Small polymers are introduced because they need a special treatment in the renormalisation  procedure. 
	      The reason boils down to the fact  that for large polymers $X\in\Pck\setminus\mathcal{S}_k$ the
	      closure satisfies $|\overline{X}|_{k+1}\leq \alpha(d)|X|_k$ for some $\alpha(d)<1$. For $X\in\mathcal{S}_k$, 
	      however,  it is possible that  $|\overline{X}|_{k+1}=|X|_k$.
\item For any block $B\in \mathcal{B}_k$ and $k\geq 1$ let $\widehat{B}\in \mathcal{P}_k$ be the cube of side length $(2^{d+1}+1)L^k$ centred at $B$ 
              (i.e., with the same center as $B$). 
              Note that this is similar to the definition of the small set neighbourhood in \cite{AKM16} but the side length is slightly bigger.
	      For  $B\in  \mathcal{B}_0$ let $\widehat{B}\in  \mathcal{P}_0$ denote the cube centred at $B$ of side length $(2^{d+1}+2R+1)$ 
	      where $R$ denotes the range of the interaction as defined in \eqref{eq:definition_R}.
\item For any polymer $X\in\mathcal{P}_k$ and $k\geq 1$ we define the \textit{small  neighbourhood} $X^\ast\in \mathcal{P}_{k-1}$ of $X$ by
	      \begin{align}\label{eq:propOfSmallNb}
	      	X^\ast=\bigcup_{B\in \mathcal{B}_{k-1}(X)}\widehat{B}. 
	      \end{align}
	      For $k=0$ we define $X^\ast=X+[-R,R]^d\in \mathcal{P}_0$. 
	      Note that we view $\ast$ as a map from $\mathcal{P}_k$ to $\mathcal{P}_{k-1}$ for $k\geq 1$. 
	      In particular, $X^{\ast\ast}=(X^\ast)^\ast\in\mathcal{P}_{k-2}$ for $X\in\mathcal{P}_k$ and $k\geq 2$. 
	      If the scale of the considered polymer is not clear from the context it will be indicated explicitly.
	      The definition of $\widehat{B}$ implies that for  $X\in \mathcal{P}_k$, $k\geq 1$, and $x\in X^\ast$,
	      \begin{align}\label{eq:distXast}
	       \mathrm{dist}_\infty(x,X)\leq (2^d+R)L^{k-1}.
	      \end{align}
\item Finally, for any $X\in\mathcal{P}_k$ we define the \textit{large neighbourhood} 
	\begin{equation}
	\label{E:X+}
	X^+=\bigcup_{B\in \mathcal{B}_k  \text{ is   touching }X} B\ \text{ for } k\ge 1 \text{ and } X^+=X^\ast  \text{ for } k= 0.
	\end{equation}
	\end{itemize}
	Recall our convention to identify subsets of $\Z^d$ with subsets of $T_N$. 
For future reference we recapitulate the definitions of neighbourhoods:
\begin{align}
\begin{split}\label{eq:nghbhdscompact}
\widehat{B}&=	
\begin{cases}
B+[-2^d-R,2^d+R]^d\quad &\text{for $B\in \mathcal{B}_0$}\\
B+[-2^dL^k, 2^dL^k]^d\quad &\text{for $B\in \mathcal{B}_k$, $1\leq k\leq N-1$}
\end{cases}
\\
X^\ast&=
\begin{cases}
X+[-R,R]^d\quad &\text{for $X\in \mathcal{P}_0$}\\
X+[-2^d-R, 2^d+R]^d\quad &\text{for $X\in \mathcal{P}_1$}\\
X+[-2^dL^{k-1}, 2^dL^{k-1}]^d\quad &\text{for $X\in \mathcal{P}_k$, $2\leq k\leq N-1$}
\end{cases}
\\
X^+&=
\begin{cases}
X+[-R,R]^d\quad &\text{for $X\in \mathcal{P}_0$}\\
X+[-L^{k}, L^{k}]^d\quad &\text{for $X\in \mathcal{P}_k$, $1\leq k\leq N-1$}.
\end{cases}
\end{split}
\end{align}
Let us also collect several obvious geometric consequences of the definitions.

For strictly disjoint $U_1,U_2\in \Pcal_{k+1}$ and $L\geq 2^{d+2}+4R$ we have
\begin{align}\label{eq:distU1U2}
 \mathrm{dist}(U_1^\ast,U_2^\ast)\geq L^{k+1}-2(2^dL^k+R)\geq \frac{L^{k+1}}{2}.
\end{align}  
For $L\geq 2^d+R$ and $X\in \mathcal{P}_k$ we have
\begin{align}\label{XastsubsetXplus}
  X^\ast \subset X^+.
\end{align}
Indeed, for $k=0$ it holds by definition and
for $k\geq 1$ the inclusion follows from \eqref{eq:distXast}.
Moreover, for $L\geq 2^{d}+R$ and $k\geq 0$
\begin{align}\label{eq:polymernghd}
	X^\ast\subset X^+\subset Y^\ast \;\text{ for $X\in\mathcal{S}_k$ and $Y\in \mathcal{P}_{k+1}$ such that $X\cap Y\neq \emptyset$}. 
\end{align}
To verify the second inclusion,
let $B\in \mathcal{B}_{k}(X\cap Y)$.  We will show that then $X^+\subset \widehat{B}$ and thus $X^+\subset Y^\ast$.
Indeed, given that $X$ is small, it
 is contained in a cube of side length $(2^{d+1}-1)L^k$ centred at $B$.
For $k\ge 1$ this implies that
$X^+$ is contained in a cube of side length $(2^{d+1}+1)L^k$ centred at $B$, while for $k=0$ in a cube of
side length $2^{d+1}+2R+1$ centred at $B$. In both cases it implies that $X^+\subset \widehat{B}$.

Now we introduce the class of functionals we are going to work with.
We set
\begin{align}\label{eqstructoffunctionals}
\begin{split}
	M(\mathcal{P}_k,\Vcal_N)	
	=\{ F:\mathcal{P}_k\times \Vcal_N\rightarrow \mathbb{R}:\, F(X,\cdot)\in M(\Vcal_N),\; 
	F\; \text{local, transl.\ \& shift inv.}\}.
\end{split}	
	 \end{align}
 Here, $M(\Vcal_N)$ is the set of measurable real functions on $\Vcal_N$ 
with respect to the Borel $\sigma$-algebra.
Locality of $F$ is defined by assuming that $F(X,\p)$  depends only on the value of the field $\p$ on $X^\ast$, that is,
 assuming the equality
$F(X,\p)=F(X,\psi)$ to be valid whenever $\p{\restriction_{X^\ast}}=\psi{\restriction_{X^\ast}}$.  
The translation invariance of $F$ means that for any $a\in (L^k\mathbb{Z})^d$ we have
$F(\tau_a(X),\tau_a(\p))=F(X,\p)$, where $\tau_a(B)=B+a$ and $\tau_a\p(x)=\p(x-a)$.
Finally, for a local functional $F$ and a connected polymer $X$, the shift invariance means that $F(X,\p+\psi)=F(X,\p)$, 
where  $\psi$ is a constant function, $\psi(x)=c$ for $x\in X^\ast$. 
For general polymers $X$ we define the shift invariance by assuming that $F(X,\p+\psi)=F(X,\p)$ whenever
$\psi$ is a step function---a constant on each nearest neighbour graph-connected component of $X^\ast$.
Here nearest neighbour graph-connectedness refers to the usual nearest neighbour graph structure on $\TN$
(defining the set $E(\TN)$ of edges in $\TN$ as   $E(\TN)=\{\{x,y\}; x,y\in\TN$ such that $\abs{x-y}_2=1\}$ 
in contrast to the relation  $\abs{x-y}_\infty=1$ used when defining connectedness of polymers). 
Note that for $k\geq 1$ and $X\in \mathcal{P}_k$ the graph-connected components of $X^\ast$ agree with 
the connected components we defined before. 

It is convenient to define the  functionals on $\Vcal_N$ instead of $\Xcal_N$,
the space of fields with average zero which are in one-to-one correspondence with gradient fields.
 Nevertheless, all the measures $\mu_k^{(\boldsymbol{q})}$ appearing in the following are supported on $\Xcal_N$ which implies that
the functionals are only evaluated for $\p\in \Xcal_N$. 
Moreover the measures $\mu_k^{(\boldsymbol{q})}$ are absolutely continuous with respect to the Hausdorff measure on $\Xcal_N$.
Note that for $F\in M(\Vcal_N)$  such that $F(\p+c)=F(\p)$ for  any $\p\in \Vcal_N$ and any constant field $c\in \mathcal{V}_N$,  
the restriction $F{\restriction_{\mathcal{X}_N}}$ is measurable with respect to the Borel $\sigma$-algebra on $\mathcal{X}_N$. 
Indeed, the condition $F(\p+c)=F(\p)$ implies that for any Borel $O\subset \R$
with $A=(F{\restriction_{\mathcal{X}_N}})^{-1}(O)$ , we have
$F^{-1}(O)=A\times \Xcal_N^\perp \subset \Xcal_N\oplus \Xcal_N^\perp=\Vcal_N$ .

Let us  formulate an equivalent characterisation of shift invariance.
For any subset $X\subset \TN$ we introduce the set of edges $E(X)= \{\{x,y\}\in E(\Z^d); x,y\in X, \abs{x-y}_2=1\}$ 
and  the set of directed edges $\vec E(X)=\{(x,y),(y,x), \{x,y\}\in E\}$.
For $\p\in \Vcal_N$ we can view $\nabla \p$  as a
function  from $\vec{E}(\TN)$  to  $ \R^m$  by taking $\nabla \p((x,x+e_i))=\nabla_i\p(x)=\p(x+e_i)-\p(x)$ 
and $\nabla \p((x+e_i,x))=\nabla_i^*\p(x+e_i)=\p(x)-\p(x+e_i)$.

\begin{lemma}\label{le:shiftinvvsgradients}
A functional $F:\mathcal{P}_k\times\Vcal_N\rightarrow \mathbb{R}$
 is local and shift invariant  iff
 for each $X\in \mathcal{P}_k$ there is a functional $\widetilde{F}_X:\vec{E}(X^\ast)\to \mathbb{R}$ 
 such that $F(X,\p)=\widetilde{F}_X(\nabla\p{\restriction_{\vec{E}(X^\ast)}})$ for any $\p\in\Vcal_N$
 , i.\ e., $F(X,\cdot)$ is measurable with respect to the $\sigma$-algebra generated by
 $\nabla\p{\restriction_{\vec{E}(X^\ast)}}$.
\end{lemma}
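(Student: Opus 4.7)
The plan is to prove the two implications separately and then address measurability via a Doob--Dynkin type argument. The harder direction is the "only if" part, where the key algebraic observation is that two fields $\p_1,\p_2 \in \Vcal_N$ satisfy $\nabla\p_1{\restriction_{\vec E(X^\ast)}} = \nabla\p_2{\restriction_{\vec E(X^\ast)}}$ if and only if $\p_1-\p_2$ is constant on each graph-connected component of $X^\ast$. This is an elementary consequence of the fact that a function whose discrete gradient vanishes on all edges of a graph-connected set is constant there, together with the fact that knowing $\nabla \p$ on directed edges determines increments $\p(y)-\p(x)$ along nearest-neighbour paths.

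For the "if" direction, suppose $F(X,\p) = \widetilde F_X(\nabla\p{\restriction_{\vec E(X^\ast)}})$ for every $\p\in\Vcal_N$. Locality is immediate since $\nabla\p{\restriction_{\vec E(X^\ast)}}$ depends only on $\p{\restriction_{X^\ast}}$. Shift invariance follows because for a step function $\psi$ which is constant on each graph-connected component of $X^\ast$ we have $\nabla\psi{\restriction_{\vec E(X^\ast)}} = 0$, hence $\nabla(\p+\psi){\restriction_{\vec E(X^\ast)}} = \nabla\p{\restriction_{\vec E(X^\ast)}}$ and the two arguments of $\widetilde F_X$ coincide.

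For the "only if" direction, assume $F$ is local and shift invariant. Let $V_X \subset (\R^m)^{\vec E(X^\ast)}$ denote the range of the linear map $\p\mapsto \nabla\p{\restriction_{\vec E(X^\ast)}}$. I define $\widetilde F_X$ on $V_X$ by picking, for each $v\in V_X$, some $\p_v\in\Vcal_N$ with $\nabla\p_v{\restriction_{\vec E(X^\ast)}} = v$ and setting $\widetilde F_X(v) := F(X,\p_v)$, then extending $\widetilde F_X$ arbitrarily (say by $0$) outside $V_X$. Well-definedness is the crux: if $\p_1,\p_2$ both lie in the preimage of $v$ then $\p_1-\p_2$ is constant on each graph-connected component of $X^\ast$. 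Extending $(\p_1-\p_2){\restriction_{X^\ast}}$ to a step function $\psi \in \Vcal_N$ on all of $\TN$ (e.g., by $0$ outside $X^\ast$), shift invariance gives $F(X,\p_2+\psi) = F(X,\p_2)$, while locality gives $F(X,\p_2+\psi) = F(X,\p_1)$ since $\p_2+\psi$ and $\p_1$ agree on $X^\ast$. Hence $F(X,\p_1)=F(X,\p_2)$, so $\widetilde F_X$ is well defined, and by construction $F(X,\p) = \widetilde F_X(\nabla\p{\restriction_{\vec E(X^\ast)}})$ for all $\p$.

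Finally, the equivalence with measurability with respect to the $\sigma$-algebra generated by $\nabla\p{\restriction_{\vec E(X^\ast)}}$ follows from the Doob--Dynkin lemma: any function $F(X,\cdot)\in M(\Vcal_N)$ that factors through a measurable map $G(\p) = \nabla\p{\restriction_{\vec E(X^\ast)}}$ admits a Borel representative $\widetilde F_X$, and conversely any factorisation produces a $\sigma(G)$-measurable function. I expect the only mild technicality is ensuring that the extension of $\widetilde F_X$ off the linear subspace $V_X$ can be made Borel measurable, which is automatic since $V_X$ is a closed linear subspace of a finite-dimensional Euclidean space and both the identification with $\R^{\mathrm{rank}\, G}$ and the linear projection onto it are continuous.
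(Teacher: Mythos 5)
Your proof is correct and follows essentially the same route as the paper's: the key fact in both is that equality of $\nabla\p$ on $\vec{E}(X^\ast)$ forces the difference of two fields to be constant on each graph-connected component of $X^\ast$ (via the path argument), after which locality and shift invariance give the factorisation — the paper does this by normalising at base points $y_i\in Y_i$, you by checking well-definedness on the fibres, which is the same argument in slightly different clothing. Your additional care about Borel measurability of $\widetilde{F}_X$ (continuous linear section plus Doob--Dynkin) is a fine detail that the paper leaves implicit.
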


\begin{proof}
We first observe that for a graph connected set $Y$, a fixed  $y\in Y$, and $\eta: \vec{E}(Y)\to \R^m$,
 there is at most one function $\widetilde{\p}:Y\to \R^m$ such that $\widetilde{\p}(y)=0$ and $\nabla\widetilde{\p}=\eta$.
 Note that a necessary condition is that  $\eta((x,y))=-\eta((y,x))$ for any $(x,y)\in \vec{E}(Y)$.)
Indeed,  if there were two such functions  $\widetilde{\p}_1$ and $\widetilde{\p}_2$ 
and a point $z\in Y$ such that $\widetilde{\p}_1(z)\neq \widetilde{\p}_2(z)$,  we would get a contradiction since 
$\widetilde{\p}_1(z)-\widetilde{\p}_2(z)= \sum_{i=0}^{n} \eta_{(x_i,x_{i+1})}- \sum_{i=0}^{n} \eta_{(x_i,x_{i+1})}=0$ for any path $x_0=y,x_1,\ldots , x_n=z$. 
Such a path exists since the graph $(Y,E(Y))$ is connected. 

Now, let $F$ be shift invariant and local, $Y_1,Y_2, \dots, Y_n$ be the graph connected components of $X^\ast$,
and let $y_i\in Y_i, i=1,2,\dots,n$.  
Note that the argument above implies that for $\eta: \vec{E}(X^\ast)\to \R^m$ there is at most one $\p\in \mathcal{V}_N$ such
that $\p(y_i)=0$ and $\nabla\p{\restriction_{\vec{E}(X^\ast)}}=\eta$.
Then we define $\widetilde{F}_X(\eta)=F(\p)$ if such a $\p$ exists 
and $\widetilde{F}_X(\eta)=0$ otherwise.
For $\p\in \mathcal{V}_N$ define $\widetilde{\p}(x)=\p(x)-\sum_i \p(y_i)\mathbf{1}_{Y_i}(x)$.
Then by shift invariance we have
\begin{align}
F(X,\p)=F(X,\widetilde{\p})=\widetilde{F}_X(\nabla\widetilde{\p}{\restriction_{\vec{E}(X^\ast)}})=\widetilde{F}_X(\nabla\p{\restriction_{\vec{E}(X^\ast)}}).
\end{align}
 The  opposite implication is obvious.
\end{proof}

In addition to the set $M(\mathcal{P}_k,\Vcal_N)$ of functionals  we consider its obvious generalizations
$M(\Pck,\Vcal_N)$, $M(\mathcal{S}_k,\Vcal_N)$ and $M(\mathcal{B}_k, \Vcal_N)$.
We often shorten the notation to  $M(\mathcal{P}_k), M(\Pck), M(\mathcal{S}_k)$, and $M(\mathcal{B}_k)$.
Note that there are two
canonical inclusions $\iota_1: M(\mathcal{B}_k)\rightarrow M(\Pck)$ and $\iota_2:
M(\Pck)\rightarrow M(\mathcal{P}_k)$
given by $(\iota_1F)(X,\p)=\prod_{B\in \mathcal{B}_k(X)} F(B,\p)$ and  $(\iota_2F)(X,\p)=\prod_{Y\in \mathcal{C}(X)} F(Y,\p)$, respectively.
In the following we will usually drop $\iota$ from the notation and  write $F(X,\p)=F^X(\p)$ for 
$F\in M(\mathcal{B}_k)$ and $F(X,\p)=\prod_{Y\in \mathcal{C}(X)} F(Y,\p)$  for $
F\in M(\Pck)$.
 Vice versa, for functionals  $F\in M(\Pcal_k)$ that factor, i.e., that satisfy $F(X\cup Y)=F(X)F(Y)$ 
 for strictly disjoint polymers $X,Y\in \Pcal_k$ we can consider the restriction $F{\restriction_{\Pck}}$ that then satisfies $\iota_2(F{\restriction_{\Pck}})=F$. 
We will sometimes suppress the restriction in the notation. 

This leads  to a useful convention:
\begin{align}\label{convention}
\begin{split}    
&\text{\emph{Functionals in $M(\Pcal_{k})$ that factor over strictly disjoint polymers}}\\
&\text{\emph{are identified with functionals in $M(\Pck)$. }}
\end{split}  
\end{align}

We endow  $ M(\mathcal{P}_k)$ with an associative and commutative product $\circ_k$ (circle product),
\begin{align}\label{eq:def_circle_product}
	(F_1\circ_k F_2)(X,\p)=\sum_{Y\in\mathcal{P}_k(X)}F_1(Y,\p)F_2(X\setminus Y,\p),\ \ F_1,F_2\in M(\mathcal{P}_k)
\end{align}
that is useful to streamline the notation.
Recall that we introduced this operation already in \eqref{eq:def_circ_explanation} in Chapter~\ref{sec:explanation}.
Note that the circle product and its domain of definition depends on the scale $k$.
We will mostly simply write $\circ$ as the scale $k$ will usually be clear from context.
 An important property of the circle product is that it serves as a shorthand for the expansion of the product
\begin{align}
\label{eq:Bcirc}
	(F_1+F_2)^X(\p)=(F_1\circ F_2)(X,\p)
\end{align}
with $F_1,F_2\in M(\mathcal{B}_k)$. In particular this corresponds to \eqref{eq:fac_circ_explanation}
for $k=0$.

Finally, we
introduce the space of relevant Hamiltonians $M_0(\mathcal{B}_k)\subset M(\mathcal{B}_k)$ given by all functionals of the
form 
\begin{align}   \label{eq:description_relevant_hamiltonian}
	H(B,\p)=\sum_{x\in B}\mathcal{H}(\{x\},\p) 
\end{align}
where $\Hcal(\{x \},\p)$ is a linear combination of the following  \textit{relevant monomials}\,:
\begin{itemize}[leftmargin=0.4cm]
\item The constant monomial ${ \Mscr_\emptyset}(\{x\})(\p) \equiv 1$;
\item the linear monomials ${ \Mscr_{i,\alpha}}(\{x\})(\p) := \nabla^{i, \alpha} \p(x) := \nabla^\alpha \p_i(x)$ with $1\le |\alpha| \le \lfloor d/2 \rfloor + 1$;
\item the quadratic monomials ${ \Mscr_{(i, \alpha), (j, \beta)}}(\{x\})(\p) = \nabla^{\alpha}\p_i(x) \, \nabla^\beta \p_j(x)$ with $|\alpha| = |\beta|=1$.
\end{itemize}
The rationale for declaring exactly these monomials as relevant is based on the following heuristic argument concerning
the decay of their expectations under the measures $\mu_{k}$:
Let us assign  the \textit{scaling dimension} $[\p] = \frac{d-2}{2}$ to the field $\p$, and the scaling dimension
$[{ \Mscr_\mpzc}] = r [\p] + \sum_{i=1}^r |\alpha_i|$
to a general monomial ${ \Mscr_\mpzc}(\{x\})(\p) = \nabla^{\alpha_1}   \p_{i_1}(x)  \cdots \nabla^{\alpha_r} \p_{i_r}(x)$ (with $\alpha_i \ne 0$). 
The relevance of the scaling dimension follows from the asymptotics $\mathbb{E}_{\mu_{k}} |{ \Mscr_\mpzc}(\{ x\})|^2 \sim L^{-2k[{ \Mscr_\mpzc}]}$ 
and the fact that, by the smoothness properties of correlations of $\mu_{k}$, we expect that the fields 
$\p(x)$ and $\p(y)$ are correlated only if $|x-y| \leq c L^d$. 
 As a result,  for a $k$-block $B$ we get  $\mathbb{E}_{\mu_{k}}\bigl( \sum_{x \in B} { \Mscr_\mpzc}(\{ x\})^2\bigr)  \sim L^{-2k[{ \Mscr_\mpzc}] + 2kd}$.
  Hence the relevant monomials are exactly those for which the expectation of  $ |\sum_{x \in B} { \Mscr_\mpzc}(\{ x\})|$ under $\mu_{k}$  
  is not expected to decay for large $k$. 
One often  calls the monomials for which this quantity grows with $k$ \textit{relevant}, 
those for which it remains of order $1$ \textit{marginal} and those for which it decays as \textit{irrelevant}. 
To avoid clumsy notation such as 'not irrelevant'  or 'relevant or marginal' we include marginal monomials into our list of relevant polynomials. 

 Any $H\in M_0(\mathcal{B}_k)$  is clearly shift invariant and local (the fact that $B+[-R,R]\cap T_N\subset B^+$ once $R\geq \lfloor d/2\rfloor +1$
 implies that  $H\in M_0(\mathcal{B}_k)$  and thus $M_0(\mathcal{B}_k)\subset M(\mathcal{B}_k)$). 

\section{Definition of the renormalisation map}\label{sec:defRenormTransform}

In this section    we define the flow of the functionals under the renormalisation maps \eqref{eq:defRenormMap}. 
Our definition essentially agrees with the definition of the renormalisation map in \cite{Bry09}
except for one important difference in the regrouping of the terms. 
Instead of distributing the contribution of a small polymer $X\in \Scal_k$ to all blocks $B'\in \mathcal{B}_{k+1}$ that intersect $X$, 
we add the contribution to a single block $B'$ such that $B'\cap X\neq \emptyset$. 
To implement this we consider a suitable  map  $\pi:\Scal_k\to \Bcal_{k+1}$ 
that is invariant under translation $\tau_a$ with $a \in (L^{k+1}\Z)^d$ and then add the contribution of $X$ to $B'=\pi(X)$ 
(see \eqref{eq:defofpi} and \eqref{eq:pifactor} below).
This change has no influence on our results but it simplifies some arguments, e.g., 
the projection in \eqref{eq:def_projection} below needs to be defined only for single blocks. 
Adding the contribution of $X$ to a block $B'$ such that $X\not\subset B'$
requires us to add contributions of the form $(e^{-H})^{-1}$
to $K$ (see \eqref{eq:renormmap_reblocked} below). In our case this poses no problem
because the domain of definition and the norm of $H$ is invariant under $H\to -H$.
This is not true in general, in particular for  $\p^4$-theory the coefficient of the $\p^4$ term must be positive 
to obtain functionals that are integrable with respect to the Gaussian measures $\mu_k$.

The flow will be described by two sequences of functionals $H_k$ and $K_k$.
The coordinate $H_k\in M_0(\mathcal{B}_k)$ stems from the finite dimensional space of relevant Hamiltonians 
and collects the relevant and marginal directions 
whereas the perturbation $K_k\in M(\Pck)$ is an element of an infinite dimensional space 
that collects all remaining irrelevant directions of the model.
We introduce the map $\myT_k$ that maps the operators $H_k$ and $K_k$ to the next scale operators $H_{k+1}$ and $K_{k+1}$. 
Formally it is given by a map
\begin{align}  
\label{eq:def_RG_map1}
\myT_k:M_0(\mathcal{B}_k)\times  M(\Pck)\times \mathbb{R}^{(d\times m)\times (d\times m)}\rightarrow M_0(\mathcal{B}_{k+1})\times M(\Pckp),
\end{align}
where we reflected the fact that it also depends on the \emph{a priori} tuning matrix $\boldsymbol{q}\in \mathbb{R}^{(d\times m)\times (d\times m)}$ 
which is mostly suppressed in the notation in this section.
Using $\boldsymbol{R}_{k}^{(\boldsymbol{q})}$ or a shorthand $\boldsymbol{R}_{k}$ for $\myT_k$ with a fixed $\boldsymbol{q}$,
the key requirement for the renormalisation transformation is the identity
\begin{align}\label{eq:mainproprenorm}
	\boldsymbol{R}_{k+1}^{(\boldsymbol{q})}(e^{-H_k}\circ K_k)(\TN,\p)=(e^{-H_{k+1}}\circ K_{k+1})(\TN,\p). 
\end{align}
Moreover the renormalisation transformation must be chosen in such a way that the  map $K_k\mapsto K_{k+1}$  is contracting.
For most polymers this will follow from the definition of the norms and the fact that typically the number of  blocks
decreases when the scale is changed, i.e., $|\overline{X}|_{k+1}< |X|_k$.
 However, for $k$-blocks $X\in \mathcal{B}_k$, and in general also for $X\in \mathcal{S}_k$,  this is not true, $|\overline{X}|_{k+1}= |X|_k$.  
 As a result, we have to subtract the dominant part of their contribution and include it in the Hamiltonian $H_{k+1}$.
The process of selection of the relevant part that is to be included to the space of relevant Hamiltonians determines a projection 
\begin{align}\label{eq:def_projection}
\Pi_2 : M(\Bcal_k) \to M_0(\Bcal_k).
\end{align}
Existence, boundedness and further properties of this projection are  discussed in 
Section~\ref{se:projection_Pi2} below. 
Slightly informally, $\Pi_2 F$ is defined as a  ``homogenization''  of  the second order Taylor expansion $T_2$ around zero.
Namely, considering the second order
Taylor expansion of $F(B)$ given by $\dot \p \mapsto  F(B)(0) + DF(B)(0)(\dot \p) + \frac12 D^2 F(0)(\dot \p, \dot \p)$, 
we define $\Pi_2 F$ as the ideal Hamiltonian $F(B)(0) + \ell(\dot{\p})+Q(\dot{\p},\dot{\p})$ 
where $\ell$ is the unique linear relevant Hamiltonian  
that satisfies the condition $\ell(\dot{\p})=DF(B,0)(\dot{\p})$ for all $\dot{\p}$ whose restriction to $B^{+}$ is a polynomial of degree $\lfloor d/2 \rfloor + 1$  
and similarly $Q$ is the unique quadratic relevant Hamiltonian that agrees with $D^2F(B,0)$ on all functions whose restriction to   $B^{+}$ is affine.
Note that $B^{+}$ does not wrap around the torus for $k\leq N-1$ and $L\geq 5$ and, 
as a consequence, the condition that $\p$ restricted to $B^{+}$ is a polynomial is well defined.

We defer the definition of $\boldsymbol{T}_k$ and first motivate its definition with a sequence of manipulations starting
with the left hand side of \eqref{eq:mainproprenorm}.
We define the relevant Hamiltonian on the next scale by
\begin{align}
	H_{k+1}(B',\p)=\sum_{B\in \mathcal{B}_k( B')} \widetilde{H}_{k}(B,\p) 
\end{align}
where $\widetilde{H}_k(B,\p)$ is defined by 
\begin{align}
\label{eq:tildeH}
	\widetilde{H}_k(B,\p)=\Pi_2\boldsymbol{R}_{k+1}H_k(B,\p) -\Pi_2\boldsymbol{R}_{k+1}K_k(B,\p). 
\end{align}
Note that we  need to subtract only the contributions that stem from a single block.
In the following we write 
\begin{align}\label{eq:defIItildeJ}
I(B,\p+\xi)=\exp(-H_k(B,\p+\xi)),\quad \tilde{I}(B,\p)=\exp(-\widetilde{H}_k(B,\p)),\quad
\text{and}\quad \tilde{J}=1-\tilde{I}.
\end{align}
Using repeatedly the identities  \eqref{eq:Bcirc}, we rewrite the initial integral in \eqref{eq:mainproprenorm} in terms of the next scale Hamiltonian,
\begin{align}
\begin{split}\label{eq:factorization_circle_product}
	\int_{\mathcal{X}_N}
	I(\p+\xi)&\circ K_k(\p+\xi)\,\mu_{k+1}(\d\xi)
	\\
	&=\int_{\mathcal{X}_N}\tilde{I}(\p)\circ (I(\p + \xi)-\tilde{I}(\p))\circ K_k(\p+\xi)\, \mu_{k+1}(\d\xi)
	\\
	&=
	\int_{\mathcal{X}_N}\tilde{I}(\p)\circ (1-\tilde{I})(\p)\circ (I-1)(\p+\xi)\circ K_k(\p+\xi)
	\, \mu_{k+1}(\d\xi)
	\\
	&=\tilde{I}(\p)\circ \left(\int_{\mathcal{X}_N} \tilde{J}(\p)\circ(I-1)(\p+\xi)\circ K_k(\p+\xi)     \, \mu_{k+1}(\d\xi)\right).
	\end{split}
\end{align}
This allows  us to introduce an intermediate perturbation functional 
$\widetilde{K}_k:\mathcal{P}_k\times \Vcal_N\times \Vcal_N\rightarrow \mathbb{R}$  by
\begin{align}
\label{tildeK}
	\widetilde{K}_k(X,\p,\xi)=(\tilde{J}(\p)\circ (I(\p+\xi)-1)\circ K_k(\p+\xi))(X). 
\end{align}
The initial integral \eqref{eq:defRenormMap} then becomes
\begin{align}\label{eq:RIKeqKtilde}
	\boldsymbol{R}_{k+1}(e^{-H_k}\circ K_k)(\TN,\p)=
	\sum_{X\in\mathcal{P}_k(\TN)}\tilde{I}^{\TN\setminus X}(\p)\int_{\Xcal_N}\widetilde{K}_k(X,\p,\xi)\, 
	\mu_{k+1}(\d \xi).                                                                                              
\end{align}
In the next step we regroup the terms in a such a way that
we obtain an expression in the form $e^{-H_{k+1}}\circ K_{k+1}$ with 
$H_{k+1}\in M_0(\mathcal{B}_{k+1})$ and $K_{k+1}\in M(\Pckp)$. 
For  $X\in \mathcal{P}_k\setminus\mathcal{S}_k$ we just include the contribution of the integral of $\widetilde{K}_k(X,\p,\xi)$ to 
the terms labelled by $U=\overline{X}$ in $K_{k+1}$. 
Introducing,   on the spaces $M(\mathcal{P}_{k})$, the norms with the weight $A^{\abs{X}_k}$ 
 we will prove the contractivity of the linearization of the map $\boldsymbol{T}_k$. 
 For     $A>1$ and  $X\in \mathcal{P}_k\setminus \mathcal{S}_k$ for which we can show that $\abs{X}_{k+1} < \abs{X}_k$, 
 this is based on the suppression factor  $A^{\abs{X}_{k+1}-\abs{X}_k}$.
However, for $X\in  \mathcal{S}_k$ this strategy does not work since we might have $\abs{X}_{k+1} = \abs{X}_k$. 
In this case, as explained above, we have to include the dominant part of their contribution into the Hamiltonian $H_{k+1}$ as anticipated in \eqref{eq:tildeH}.

In addition, for $X\in  \mathcal{S}_k$, we have to determine to which of the blocks $B'\in \mathcal{B}_{k+1}$, 
among those that intersect $X$,  we attribute the  corresponding contribution. 
This is achieved in the following claim.
There exists a map $\widetilde\pi:\Pck\rightarrow {\Pckp}$ that is translation invariant, i.e.,
$\widetilde\pi(\tau_a X)=\tau_a\widetilde\pi(X)$ for $a\in (L^{k+1}\mathbb{Z})^d$ and satisfies 
\begin{align}\label{eq:defofpi}
	\widetilde\pi(X)=
	\begin{cases}
	  & \overline{X}\quad\text{if }\; X\in \Pck\setminus \mathcal{S}_k,                  \\
	  & B' \quad \text{where }\; B'\in \mathcal{B}_{k+1}\; \text{with } B'\cap X\neq \emptyset\quad \text{for }\, X\in \mathcal{S}_k.
	\end{cases} 
\end{align}
We then extend $\widetilde\pi$ to a map $\pi:\mathcal{P}_k\rightarrow {\mathcal{P}_{k+1}}$ defined on all polymers by
\begin{align}\label{eq:pifactor}
	\pi(X)=\bigcup_{Y\in \mathcal{C}(X)} \widetilde\pi(Y) \quad \forall\; X\in \mathcal{P}_k.
\end{align}
To show the existence of  a map $\widetilde\pi$, it suffices to define the image $\widetilde\pi(X)$ for any $X\in \mathcal{S}_k$ and show
that the resulting $\widetilde\pi $ is translation invariant. 
``Unwrapping'' the torus $T_N$, it can be viewed as a projection $P:\Z^d\to T_N$  with the preimage of any point $ x\in T_N$ being the set 
$P^{-1}(\{x\})=\{\tau_a x, a\in (L^{N}\mathbb{Z})^d\}$.
The preimage of any $X\in \mathcal{S}_k$ is a collection of sets $\{\tau_a \widehat X, a\in (L^{N}\mathbb{Z})^d\}$, 
where $\widehat X\subset \Z^d$ can be chosen as  a connected set $\widehat X\subset \Z^d$ (recall that any $X\in \mathcal{S}_k$ is connected, 
see Section~\ref{se:polymers}) for which $X=P(\widehat X)$. 
For any $X\in \mathcal{S}_k$ consider the  $k$-block  $B(X)\in \mathcal{B}_k(X)$ such that the preimage of its centre in $\widehat X$  
is the first one in the lexicographic order in $\mathbb{Z}^d$  among the preimages in $\widehat X$ of centres of  $k$-blocks in $\mathcal{B}_k(X)$.
We  determine  the image $\widetilde\pi(X)$ as the $(k+1)$-block $B'=\overline{B(X)}$.
Translation invariance of the map $\widetilde\pi$  follows immediately from the fact that  $B(\tau_a X)=\tau_a B(X)$ for any $a\in (L^{k}\mathbb{Z})^d$.

We claim that for $X\in \mathcal{P}_{k}$ and $L\geq 2^d+ R$,
\begin{align}\label{eq:pisetincl}
	\mathcal{P}_{k-1}\ni X^\ast\subset \pi(X)^\ast\in \mathcal{P}_k. 
\end{align}
By \eqref{eq:pifactor}, it is sufficient to show this for $X$ connected.
If $X\in \Pck\setminus \mathcal{S}_k$ the claim follows  by \eqref{eq:defofpi}. 
For $X\in \mathcal{S}_k$ this is a consequence of \eqref{eq:polymernghd} applied with $Y=\pi(X)$.

We define the function $\chi:\mathcal{P}_k\times \mathcal{P}_{k+1}\rightarrow \mathbb{R}$ by
\begin{align}
	\chi(X,U)=\mathbb{1}_{\pi(X)=U}. 
\end{align}
This definition ensures that $\sum_{U\in\mathcal{P}_{k+1}}\chi(X,U)=1$.
Using the relation \\
$(\TN\setminus X)\cup(X\setminus U)=
T_N\setminus(X\cap U)=(\TN\setminus U)\cup (U\setminus X)$ we rearrange the right hand side of \eqref{eq:RIKeqKtilde},
\begin{align}\label{eq:renormmap_reblocked}
\begin{split}
	\boldsymbol{R}_{k+1}&(e^{-H_k}\circ K_k)(\TN,\p)
	\\
	&=\sum_{U\in \mathcal{P}_{k+1}}{\tilde I}^{\TN\setminus U}(\p)\left[\sum_{X\in \mathcal{P}_k}\chi(X,U) \tilde{I}^{U\setminus 
	X}(\p)\tilde{I}^{-(X\setminus U)}(\p)\int_{\mathcal{X}_N}\widetilde{K}_k(X,\p,\xi)\,\mu_{k+1}(\d\xi) \right]                                                                       
\end{split}\end{align}
where the shorthand expression $I^{-X}=(I^X)^{-1}$ was used. Therefore we define
\begin{align}\label{eq:defofKprime}
	K_{k+1}(U,\p)=\sum_{X\in \mathcal{P}_k}\chi(X,U) \tilde{I}^{U\setminus 
	X}(\p)\tilde{I}^{-(X\setminus U)}(\p)\int_{\mathcal{X}_N}\widetilde{K}_k(X,\p,\xi) \,\mu(\d\xi)    
\end{align}
for any $U\in \mathcal{P}_{k+1}$. 

\begin{lemma}   \label{le:K_kplus_factors}
	For $H_k, \widetilde{H}_k\in M_0(\mathcal{B}_k)$, $I$, $\tilde{I}$, and $\tilde{J}$ as in
	\eqref{eq:defIItildeJ}, and $L\geq 2^{d+2}+4R$ the functional $K_{k+1}$ defined in \eqref{eq:defofKprime} has the following properties
	\begin{enumerate}[label=\roman*)]
		\item If $K_k$ is  translation invariant on scale $k$,
		i.e., $K_k(X,\p)=K_k(\tau_aX,\tau_a\p)$ for $a\in (L^k\mathbb{Z})^d$ then  $K_{k+1}$ is
		translation invariant on scale $k+1$;
		\item If $K_k$ is local, i.e., $K_k(X,\p)$ only depends on the values of $\p$ in $X^\ast$ then $K_{k+1}$ is local;
		\item If $K_k$ is invariant under shifts
		then $K_{k+1}$ is also shift invariant;
		\item \label{it:4KinM} If $K_k\in M(\mathcal{P}_k)$ then $K_{k+1}\in M(\mathcal{P}_{k+1})$;
		\item \label{it:5K_factors} If $K_k {\in M(\mathcal{P}_k)}$ factors on the scale $k$, i.e.,
		      \begin{align}
		      	K_k(X_1\cup X_2,\p)=K_k(X_1, \p)K_k(X_2,\p) \quad \text{for strictly disjoint }\; X_1,X_2\in \mathcal{P}_k, 
		      \end{align}
		      then $K_{k+1}$ factors on scale $k+1$, i.e.,
		      \begin{align}
		      	K_{k+1}(U_1\cup U_2,\p)=K_{k+1}(U_1, \p)K_{k+1}(U_2,\p) 
			\quad \text{for strictly disjoint }\; U_1,U_2\in \mathcal{P}_{k+1}. 
		      \end{align}
	\end{enumerate}
The last property shows that for $K_k\in M(\Pck)$ we have
	$K_{k+1}\in M(\Pckp)$ using our convention \eqref{convention} to identify
	functionals in $M(\Pcal_{k+1})$ that factor with functionals in $M(\Pckp)$.
\end{lemma}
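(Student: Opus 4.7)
The plan is to verify the five properties in order, building up from the bookkeeping properties (i)--(iv) to the factorization property (v) which is the main content. The overall strategy is to trace through the definition \eqref{eq:defofKprime} of $K'$, exploiting four ingredients: the translation invariance of $\pi$, the geometric inclusion $X^\ast \subset \pi(X)^\ast$ established in \eqref{eq:pisetincl}, the factorization properties of $\tilde I$, $I-1$ and $\widetilde K$ inherited from the circle-product structure, and the finite range property of $\mu_{k+1}$ via \eqref{eq:distU1U2}.

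For (i) I will substitute $X \mapsto \tau_{-a}X$ in the sum defining $K'(\tau_aU, \tau_a\p)$ for $a\in(L^{k+1}\Z)^d$; the map $\pi$ is $(L^{k+1}\Z)^d$--translation invariant by construction, and $\tilde I$, $\tilde I^{-1}$ and $\widetilde K$ all inherit translation invariance from the translation invariance of $H$, $\widetilde H$, and $K$ together with the invariance of the measure $\mu_{k+1}^{(\boldsymbol q)}$. For (ii) the key observation is that, whenever $\chi(X,U) = 1$, every factor in the summand depends on $\p$ only through $\p|_{U^\ast}$: the inclusion \eqref{eq:pisetincl} gives $X^\ast \subset U^\ast$, hence $(X\setminus U)^\ast \subset X^\ast \subset U^\ast$, while $(U\setminus X)^\ast \subset U^\ast$ follows from the definitions of neighbourhoods in \eqref{eq:nghbhdscompact} applied at scale $k$ versus $k+1$. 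The integrand $\widetilde K(X,\p,\xi)$ depends on $\p$ only through $\p|_{X^\ast} \subset \p|_{U^\ast}$, hence so does the integral. Property (iii) is entirely analogous, using that shifts act trivially on gradients and that $\tilde I$, $I$, $K$ are shift invariant; Lemma~\ref{le:shiftinvvsgradients} can be invoked if a cleaner formulation is wanted. Measurability in (iv) follows since $K'(U,\p)$ is a finite sum (the sum over $X$ with $\pi(X) = U$ is restricted by $X \subset U^\ast$ up to translation, and one can further restrict to $X\in \Pcal$ meeting $U$) of products of measurable functions and an integral of a measurable integrand.

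The core of the lemma is (v), the factorization of $K'$. Let $U_1, U_2 \in \Pcal'$ be strictly disjoint and set $U = U_1 \cup U_2$. Given $X\in\Pcal$ with $\pi(X) = U$, I will show that $X$ decomposes uniquely as $X = X_1 \cup X_2$ with $\pi(X_i) = U_i$ and $X_1, X_2$ strictly disjoint. Indeed, since $\pi$ factors over connected components by \eqref{eq:pifactor}, and each $\widetilde\pi(Y)$ for a connected component $Y\in\mathcal{C}(X)$ is connected, each $\widetilde\pi(Y)$ lies wholly in $U_1$ or wholly in $U_2$ (as any connected subset of the strictly disconnected union $U_1 \cup U_2$ lies in one piece). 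Grouping the components accordingly yields $X_1, X_2$, and strict disjointness of $X_1, X_2$ follows from $X_i^\ast \subset U_i^\ast$ together with $\dist(U_1^\ast, U_2^\ast) \ge L^{k+1}/2$ from \eqref{eq:distU1U2} (which uses $L\ge 2^{d+2}+4R$). Conversely any such pair $(X_1,X_2)$ arises from a unique $X$, so the sum over $X$ factorizes into a product of sums over $X_1$ and $X_2$. The summand factorizes accordingly: the products $\tilde I^{U\setminus X}$ and $\tilde I^{-(X\setminus U)}$ split because $\tilde I \in M(\mathcal{B}_k)$ is a product over blocks and $U_1,U_2$ (hence $U_i\setminus X_i$ and $X_i\setminus U_i$) are disjoint, while $\widetilde K(X,\p,\xi) = \widetilde K(X_1,\p,\xi)\widetilde K(X_2,\p,\xi)$ because $\tilde J$, $I-1$, and $K$ all factor over strictly disjoint $k$-polymers (the hypothesis gives this for $K$; the other two lie in $M(\mathcal{B}_k)$).

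The main obstacle, and the step that truly requires $L$ to be large, is the factorization of the Gaussian integral: one must show
\begin{equation*}
\int_{\Xcal_N} \widetilde K(X_1\cup X_2,\p,\xi)\,\mu_{k+1}^{(\boldsymbol q)}(\d\xi)
= \prod_{i=1,2} \int_{\Xcal_N} \widetilde K(X_i,\p,\xi)\,\mu_{k+1}^{(\boldsymbol q)}(\d\xi).
\end{equation*}
For this I will use that $\widetilde K(X_i,\p,\xi)$, as a function of $\xi$, is measurable with respect to gradients $\nabla\xi$ restricted to $X_i^\ast$ (via Lemma~\ref{le:shiftinvvsgradients} applied to $\tilde J$, $I-1$ and $K$), combined with the finite range property \eqref{eq:finiteRangeProp} of $\mu_{k+1}^{(\boldsymbol q)}$ at range $L^{k+1}/2$. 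The bound $\dist(X_1^\ast, X_2^\ast) \ge \dist(U_1^\ast, U_2^\ast) \ge L^{k+1}/2$ from \eqref{eq:distU1U2} is precisely what makes these two functionals independent under $\mu_{k+1}^{(\boldsymbol q)}$, and this is exactly where the hypothesis $L \ge 2^{d+2} + 4R$ is used. Combining all the factorizations yields $K'(U_1\cup U_2,\p) = K'(U_1,\p)\,K'(U_2,\p)$, completing the proof.
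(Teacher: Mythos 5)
Your proposal is correct and follows essentially the same route as the paper's proof: translation invariance of $\pi$, the inclusion $X^\ast\subset\pi(X)^\ast$ from \eqref{eq:pisetincl}, factorization of $\widetilde{K}$ through the circle-product structure, and independence of the Gaussian integrals via the finite range property together with $\dist(X_1^\ast,X_2^\ast)\geq \dist(U_1^\ast,U_2^\ast)\geq L^{k+1}/2$. The only cosmetic difference is in constructing the decomposition $X=X_1\cup X_2$: you group the connected components of $X$ according to which $U_i$ contains their image under $\widetilde\pi$, whereas the paper takes $X_i=U_i^\ast\cap X$; both yield the same unique splitting.
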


\begin{proof}
The first claim is a consequence of the translation invariance of $K_k, H_k$, and $\pi$.
	
For the second claim we observe that $I(X,\p)$, $\tilde{J}(X,\p)$, $\tilde{I}(X,\p)$, and $K_k(X,\p)$ only depend
on the values of 	$\p{\restriction{X^\ast}}$.
Moreover $\chi(X,U)=1$ implies by \eqref{eq:pisetincl} for $L\geq 2^d +R$ that $X^\ast\subset U^\ast$.
Since the $\ast$ operation is monotone and the renormalisation map $\boldsymbol{R}_{k+1}$ preserves locality, the functionals  $K_{k+1}(U,\p)$ only depend on $ \p{\restriction_{U^\ast}}$. 
	
To prove the shift invariance of $K_{k+1}$, we first notice that  $I$, $\tilde{I}$, and $\tilde{J}$ are shift invariant because they are compositions of
 $H_k, \widetilde{H}_k\in M_0(\mathcal{B}_k)$ with the exponential.
Thus, by Lemma \ref{le:shiftinvvsgradients}, the functionals  $K_k(X,\p)$, $I(X,\p)$, $\tilde{I}(X,\p)$, and $\tilde{J}(X,\p)$ can be
rewritten as functionals of  $\nabla \p{\restriction_{\vec{E}(X^\ast)}}$.
As in the locality argument this implies that $K_{k+1}$ can be written as 
$K_{k+1}(U,\p)=\widetilde{K}_k(U,\nabla\p{\restriction_{\vec{E}(U^\ast)},0})$ and is thus shift invariant.

The claim \ref{it:4KinM} is a consequence of the first three claims.
	
To prove the last claim, we observe that functionals $F, G\in M(\mathcal{P}_k)$ that factor on the scale $k$, satisfy   the equality 
$(F\circ G)(X\cup Y)=(F\circ G)(X) (F\circ G)(Y)$ whenever  the  polymers $X$ and $Y$ are strictly disjoint. 
Indeed,
\begin{align}\label{eq:circfactor}
\begin{split}		
		(F\circ G)(X\cup Y)
		&=\sum_{Z\subset X\cup Y} F(Z)G((X\cup Y)\setminus Z)
	\\	&=\sum_{\substack{Z_1\subset X \\ Z_2\subset Y}} F(Z_1\cup Z_2)G(X\cup Y\setminus (Z_1\cup Z_2))
\\		&=
		\sum_{Z_1\subset X }\sum_{Z_2\subset Y} F(Z_1) F(Z_2)G(X\setminus Z_1)G( Y\setminus  Z_2)
		\\
		&=(F\circ G)(X)(F\circ G)(Y).
		\end{split}
\end{align}
	Given that the circle product is associative, we can extend this to three functionals:
	the product $F\circ G\circ H$ factors  if $F$, $G$, and $H$ factor. In particular, the functional $\widetilde{K}_k$ 
	defined in \eqref{tildeK} factors on the scale $k$.

	 Let $U_1, U_2\in \mathcal{P}_{k+1}$ be strictly disjoint polymers and
	let $X\in \mathcal{P}_k$ be a polymer such that we have $\bigcup_{Y\in\mathcal{C}(X)}\pi(Y)=\pi(X)=U=U_1\cup U_2$. 
	We claim that there is a unique decomposition
	$X=X_1\cup X_2$ such that $X_1$ and $X_2$ are strictly disjoint and satisfy $\pi(X_i)=U_i$.

For the existence, consider $X_1=U_1^\ast\cap X$, $X_2=U_2^\ast\cap X$. 
Clearly, $X_1$ and $X_2$ are strictly disjoint and 	$X_1\cup X_2=X$
since by \eqref{eq:pisetincl} we know that $X\subset (U_1\cup U_2)^\ast=U_1^\ast \cup U_2^\ast$. 
The inclusions $\pi(X_i)\subset U_i^+$ together with $U_1^+\cap U_2=U_1\cap U_2^+=\emptyset$ 
and $U=\pi(X)=\pi(X_1)\cup \pi(X_2)$ imply that $\pi(X_i)=U_i$.

Uniqueness follows from the observation that $\pi(\widetilde{X}_i)=U_i$ implies by \eqref{eq:pisetincl} that $\widetilde{X}_i\subset U_i^\ast$, 
and thus $\widetilde{X}_i\subset X_i$.

	 Assuming $L\geq 2^{d+2}+4R$ and using \eqref{eq:distXast} and \eqref{eq:distU1U2}, we conclude that the distance between 
	 $X_1^\ast $ and $X_2^\ast$ is bigger than the range of $\mu_{k+1}$, 
	\begin{align}\label{eq:distX1X2}
		\mathrm{dist}(X_1^\ast,X_2^\ast)\geq \mathrm{dist}(U_1^{\ast},U_2^{\ast})
		\geq \frac{L^{k+1}}{2} .
	\end{align}
	Thus, using that $\widetilde{K}_k$ factors on scale $k$, we  get
	\begin{align}
	\begin{split}
		\int_{\Xcal_N}\widetilde{K}_k(X_1\cup                                                                                                          
		X_2,\p,\xi)\, \mu_{k+1}(\d \xi)
		&=\int_{\Xcal_N}\widetilde{K}_k(X_1,\p,\xi)\widetilde{K}_k(X_2,\p,\xi)\, \mu_{k+1}(\d\xi)
	\\		
		&=\int_{\Xcal_N}\widetilde{K}_k(X_1,\p,\xi)\,\mu_{k+1}(\d\xi)\int_{\Xcal_N}\widetilde{K}_k(X_2,\p,\xi)\, \mu_{k+1}(\d\xi).
		\end{split}
	\end{align}
	Finally, we observe that 
	\begin{align}
		\begin{split}
		(X_1\cup X_2)\setminus(U_1\cup U_2) & =(X_1\setminus U_1)\cup (X_2\setminus U_2)  \\
		(U_1\cup U_2)\setminus(X_1\cup X_2) & =(U_1\setminus X_1)\cup (U_2\setminus X_2). 
		\end{split}
	\end{align}
	The inclusion '$\subset$' holds in general, the other inclusion follows from $X_1\cap U_2=X_2\cap U_1=\emptyset$. 
	
	As a result, using manipulations similar to \eqref{eq:circfactor} for  strictly disjoint $U_1, U_2\in\mathcal{P}_{k+1}$, these facts imply 
	\begin{align}
	\begin{split}
		&K_{k+1}(U_1  \cup U_2,\p)                                                                                                
		    \\
		    &=\sum_{X\in \mathcal{P}_k}\chi(X,U_1\cup U_2) \tilde{I}^{(U_1\cup U_2)\setminus                                       
		X}(\p)\tilde{I}^{-(X\setminus (U_1\cup U_2))}(\p)\int_{\mathcal{X}_N}\widetilde{K}_k(X,\p,\xi)
		\,\mu_{k+1}(\d \xi)\\
		       & =\sum_{X\in \mathcal{P}_k}\mathbb{1}_{\pi(X)=U_1\cup U_2} \tilde{I}^{(U_1\cup U_2)\setminus                           
		X}(\p)\tilde{I}^{-(X\setminus (U_1\cup U_2))}(\p)\int_{\mathcal{X}_N}\widetilde{K}_k(X,\p,\xi)
		\,\mu_{k+1}(\d \xi)\\
		      &  =\hspace{-0.15cm}\sum_{X_1, X_2\in \mathcal{P}_k}\mathbb{1}_{\pi(X_1)=U_1}\mathbb{1}_{\pi(X_2)=U_2} 
		       \frac{\tilde{I}^{(U_1\cup U_2)\setminus 
		(X_1\cup X_2)}(\p)}{\tilde{I}^{(X_1\cup X_2)\setminus (U_1\cup U_2)}(\p)}
		\int_{\mathcal{X}_N}\widetilde{K}_k(X_1\cup X_2,\p,\xi)
		\,\mu_{k+1}(\d \xi) \\
		      &  =\hspace{-0.1cm}\sum_{X_1, X_2\in \mathcal{P}_k}\mathbb{1}_{\pi(X_1)=U_1}\mathbb{1}_{\pi(X_2)=U_2} 
		       \frac{\tilde{I}^{U_1\setminus  X_1}(\p)}{\tilde{I}^{X_2\setminus U_2}(\p)}
		       \frac{\tilde{I}^{U_2\setminus X_2}(\p)}{\tilde{I}^{X_1\setminus U_1}(\p)}\times\\
		       &\qquad\qquad\hspace{3cm}\times
		\int_{\mathcal{X}_N}\!\!\!\widetilde{K}_k(X_1,\p,\xi)\,\mu_{k+1}(\d \xi)\int_{\mathcal{X}_N}\!\!\!\widetilde{K}_k(X_2,\p,\xi) \,\mu_{k+1}(\d \xi)\\
		    &    =K_{k+1}(U_1,\p)K_{k+1}(U_2,\p)                                                                                          
	\end{split}\end{align}	
\end{proof}

For future reference we summarize a concise definition of $\boldsymbol{T}_k$.
Recall that we defined for $0\leq k\leq N-1$ 
and $H_k\in M_0(\Bcal_k)$ the next scale Hamiltonian $H_{k+1}\in M_0(\Bcal_{k+1})$ by
\begin{align}\label{eq:defofHk+1}
		H_{k+1}(B',\p)=\sum_{B\in \mathcal{B}_k( B')} \widetilde{H}_k(B,\p) 
	\end{align}
	where
	\begin{align}\label{eq:defoftildeHk}
		\widetilde{H}_k(B,\p)=\Pi_2\boldsymbol{R}_{k+1}^{(\boldsymbol{q})}H_k(B,\p)-\Pi_2\boldsymbol{R}_{k+1}^{(\boldsymbol{q})}K_k(B,\p). 
	\end{align}
	For 
	$K_k\in M(\Pck)$ we denote
	$\widetilde{K}_k(X,\p,\xi)=\left(1-e^{-\widetilde{H}_k(\p)}\right)\circ\left(e^{-H_k(\p+\xi)}-1\right)\circ K_k(X,\p+\xi)$ and we define $K_{k+1}\in
	M(\Pckp)$ for $U\in \Pckp$ by 
	\begin{align}\label{eq:defofKk+1}
	\begin{split}
		K_{k+1}(U,\p)&=\sum_{X\in \mathcal{P}_k}\chi(X,U) \exp\Big(-\hspace{-0.cm}\sum_{B\in \mathcal{B}_k(U\setminus   
		X)}\widetilde{H}_k(B,\p)\,+\sum_{B\in \mathcal{B}_k(X\setminus                   
		U)}\widetilde{H}_k(B,\p)\Big)\times
		\\
		&\hspace{3cm}\times\int_{\Xcal_N}\widetilde{K}_k(X,\p,\xi)\;\mu_{k+1}^{(\boldsymbol{q})}(\d\xi)
	\end{split}\end{align}
	where $\chi(X,U)=\mathbb{1}_{\pi(X)=U}$ and $\pi:\mathcal{P}_k\rightarrow \mathcal{P}_{k+1}$ was defined in \eqref{eq:pifactor}.
	Recalling our convention \eqref{convention} to identify functionals in $M(\Pcal_{k+1})$ that factor over strictly disjoint polymers 
	with functionals in $M(\Pckp)$, this definition agrees with the definition of $K_{k+1}$ in \eqref{eq:defofKprime}  because we showed in Lemma~\ref{le:K_kplus_factors}(v) that
	$K_{k+1}$ factors over connected components. 
	Moreover the equation \eqref{eq:defofKk+1} holds for all $U\in \Pcal_{k+1}$ for the extension of $K_{k+1}$ to $M(\Pcal_k)$.
	
\begin{definition}\label{def:Tk}
Let $0\leq k\leq N-1$. The renormalisation transformation 
\begin{align}  
		\boldsymbol{T}_k:M_0(\mathcal{B}_k)\times M(\Pck)\times \mathbb{R}^{(d\times m)\times (d\times m)}_{\mathrm{sym}}\rightarrow 
		M_0(\mathcal{B}_{k+1})\times M(\Pckp)                                                                             
	\end{align}
	is defined by
	\begin{align}   \label{eq:description_rg_trafo}
		\boldsymbol{T}_k(H_k,K_k,\boldsymbol{q})=(H_{k+1},K_{k+1})
	\end{align}
	where $H_{k+1}$ and $K_{k+1}$ are given by 
	\eqref{eq:defofHk+1} and \eqref{eq:defofKk+1} respectively.
\end{definition}
We have already shown the following result for $\boldsymbol{T}_k$.
\begin{proposition}  \label{pr:properties_RG_map}
For $L\geq 2^{d+2}+4R$ and $0\leq k\leq N-1$ the renormalisation transformation $\boldsymbol{T}_k$ is well defined
 and satisfies for $H_k\in M_0(\mathcal{B}_k)$,
$K_k\in M(\Pck)$, $H_{k+1}\in M_0(\mathcal{B}_{k+1})$, and
$K_{k+1}\in M(\Pckp)$ with $\boldsymbol{T}_k(H_k,K_k,\boldsymbol{q})=(H_{k+1},K_{k+1})$
the identity
\begin{align}\label{mainrenormprop_again}
\boldsymbol{R}_{k+1}^{(\boldsymbol{q})}(e^{-H_k}\circ K_k)(\TN,\p)=(e^{-H_{k+1}}\circ K_{k+1})(\TN,\p). 
\end{align}
\end{proposition}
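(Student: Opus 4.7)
The proof is essentially an assembly of ingredients already developed in this chapter, and I would organize it in three short steps: well-definedness of $H_{k+1}$, well-definedness of $K_{k+1}$, and the identity \eqref{mainrenormprop_again}.

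For the first step, I would observe that the convolution $\boldsymbol{R}_{k+1}^{(\boldsymbol{q})}$ preserves locality on the scale $k$ (the kernel $\mathcal{C}^{(\boldsymbol{q})}_{k+1}$ has finite range $L^{k+1}/2$, and for any $B\in\mathcal{B}_k$ the image $\boldsymbol{R}_{k+1}^{(\boldsymbol{q})}F(B,\cdot)$ still depends only on $\p$ restricted to a neighbourhood of $B$), so that $\boldsymbol{R}_{k+1}^{(\boldsymbol{q})}H(B,\cdot)$ and $\boldsymbol{R}_{k+1}^{(\boldsymbol{q})}K(B,\cdot)$ are smooth functionals on which $\Pi_2$ can be applied. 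By the defining property of $\Pi_2$ (recalled just before Section~\ref{sec:defRenormTransform}), each $\Pi_2\boldsymbol{R}_{k+1}^{(\boldsymbol{q})}H(B,\cdot)$ and $\Pi_2\boldsymbol{R}_{k+1}^{(\boldsymbol{q})}K(B,\cdot)$ is a linear combination of the relevant monomials in \eqref{eq:description_relevant_hamiltonian}. Hence $\widetilde H_k(B,\cdot)\in M_0(\mathcal{B}_k)$ and summing over $B\in\mathcal{B}_k(B')$ yields $H_{k+1}\in M_0(\mathcal{B}_{k+1})$; here the condition $L\ge 2^{d+2}+4R$ guarantees in particular that $B^+$ and $B^{++}$ do not wrap around the torus, which is what we need for $\Pi_2$ to make sense on scale $k$.

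For the second step, I would invoke Lemma~\ref{le:K_kplus_factors} directly: the formula \eqref{eq:defofKk+1} agrees with \eqref{eq:defofKprime}, and that lemma—whose hypothesis $L\ge 2^{d+2}+4R$ is exactly what we are assuming—asserts translation invariance on scale $k+1$, locality, shift invariance, membership in $M(\mathcal{P}_{k+1})$, and the factorization
\begin{align*}
K_{k+1}(U_1\cup U_2,\p)=K_{k+1}(U_1,\p)\,K_{k+1}(U_2,\p)
\end{align*}
for strictly disjoint $U_1,U_2\in\mathcal{P}_{k+1}$. Under our convention of identifying $M(\mathcal{P}_{k+1})$-functionals that factor over connected components with elements of $M(\Pckp)$, this gives $K_{k+1}\in M(\Pckp)$.

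For the third step, the identity \eqref{mainrenormprop_again} is exactly the chain of equalities \eqref{eq:factorization_circle_product}--\eqref{eq:renormmap_reblocked}: the first two lines of \eqref{eq:factorization_circle_product} use only the definition of $\circ$ combined with the identities $I=\tilde I + (I-\tilde I)$ and $(I-\tilde I)=(I-1)\circ(1-\tilde I)$; the final line uses the fact that $\tilde I^{\TN\setminus X}(\p)$ is deterministic with respect to $\xi$ and can be pulled out of the integral. Summing then over $X\in\mathcal P$ and re-indexing by $U=\pi(X)$ via the partition of unity $\sum_{U\in\mathcal{P}'}\chi(X,U)=1$ produces precisely $(e^{-H_{k+1}}\circ K_{k+1})(\TN,\p)$ with $K_{k+1}$ as in \eqref{eq:defofKprime}. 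Since every manipulation is an algebraic identity under the Gaussian integral (no convergence issue arises on $\TN$, where everything is finite-dimensional), there is no genuine obstacle; the only subtlety is the compatibility with the coarse-graining map $\pi$, which has already been arranged by the construction in \eqref{eq:defofpi}--\eqref{eq:pifactor} and the inclusion \eqref{eq:pisetincl}, both requiring the same lower bound $L\ge 2^d+R$ that is implied by our hypothesis.
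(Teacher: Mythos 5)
Your proposal is correct and follows essentially the same route as the paper: the paper's own proof simply invokes Lemma~\ref{le:K_kplus_factors} for the well-definedness of $K_{k+1}\in M(\Pckp)$ and notes that \eqref{mainrenormprop_again} is exactly the identity \eqref{eq:renormmap_reblocked} obtained from the chain \eqref{eq:factorization_circle_product}--\eqref{eq:renormmap_reblocked}. Your additional remarks on the well-definedness of $H_{k+1}$ and on the geometric conditions implied by $L\ge 2^{d+2}+4R$ are accurate but are treated as immediate from the definitions in the paper.
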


\begin{proof}
Lemma~\ref{le:K_kplus_factors} shows that $K_{k+1}\in M(\Pckp)$. 
Therefore the map $\boldsymbol{T}_k$ is well defined. 
Equation \eqref{mainrenormprop_again}  follows from 
\eqref{eq:renormmap_reblocked}.
\end{proof}
Of course the condition  \eqref{mainrenormprop_again} is not sufficient for our analysis. 
In addition we need smoothness and boundedness results for the map $\boldsymbol{T}_k$. 
This requires that the spaces $M(\Pck)$ are equipped with a norm.
 In the next section we  will define the relevant norms which will allow us to establish the smoothness result
and to prove contraction properties of $\boldsymbol{T}_k$.

\section{Norms}\label{se:norms_new}
 
Next we introduce suitable norms on the space $M(\mathcal{P}_k,\Vcal_N)$ of local functionals (see \eqref{eqstructoffunctionals}). 
For any $F\in M(\Pck,\Vcal_N)$ and any $X\in \Pck$  we define $F(X)\in M(\Vcal_N)$ by $F(X)(\p)=F(X,\p)$. 
Fixing  now $r_0\in \mathbb{N}$ with $r_0\geq 3$, we introduce a norm  $\norm{F(X)}_{k,T_\p}$ 
based on a norm of the $\pn$-th order Taylor polynomial of the functional $F(X)$  at $\varphi$
as well as the norm  $\norm{F(X)}_{k,X}=\sup_\varphi  w^{-X}(\varphi) \norm{F(X)}_{k,T_\p}$, where
$w^{-X}(\varphi) = \frac{1}{w^X(\varphi)}$ and $w^X$ is an appropriately chosen weight function. 
The main difference  in comparison with  \cite{AKM16} (which was based on earlier work of Brydges et al.,  
cf.\ e.g.\  \cite{BS15I} and \cite{Bry09}) is in the choice of these weights. 
The current choice allows  us to relax substantially the growth condition for the potential.
An additional difference with respect to \cite{AKM16} is that we use a different norm on polynomials 
(essentially the projective instead of the injective norm on the dual tensor product, see Section  \ref{se:appendix_sym_norm} in the appendix).  
This is not crucial but it puts our approach in line with the much more general framework developed in \cite{BS15I,BS15II}.

The main observation for the definition of the norms on Taylor polynomials
is that the action of polynomials can be linearised by looking at their action  on (direct sums) of tensor products. 
More precisely a homogeneous polynomial $P^{(r)}$ of degree $r$ on the space  of fields $\BX$ can be uniquely identified with a symmetric
$r$-linear form and hence with an element  $\overline{P^{(r)}}$ in the dual of $\BX^{\otimes r}$ (see Lemma~\ref{le:extension_polynomial}).

To define a linear action of a general polynomial $P$ we recall that  $\oplus_{r=0}^\infty \BX^{\otimes  r}$
is the space of  sequences $g = (g^{(0)}, g^{(1)}, \ldots )$ 
 with $g^{(r)} \in \BX^{\otimes r}$ and with only finitely many non-vanishing terms.
Then we define the dual pairing
\begin{equation}  \label{eq:pairing_P_g}
\langle P, g \rangle = \sum_{r= 0}^\infty \langle \overline{P^{(r)}}, g^{(r)} \rangle
\end{equation}
with  the space of test functions
\begin{equation}
\Phi := \Phi_\pn := \{ g \in  \oplus_{r=0}^\infty \BX^{\otimes  r}  :   g^{(r)} = 0 \ \text{ for all }\ r > \pn\}.
\end{equation}
The restriction to the space $\Phi_\pn$ means that  the linear maps $P$ correspond to  polynomials of order at most $\pn$.

In the following we take $\BX=\Vcal_N$ as the space of fields with norms defined on $\Phi$ as follows. 
On $\Vcal_N^{\otimes 0} = \R$ we take the usual absolute value on $\R$. 
Let 
$$X \in \mathcal{P}_k \ \text{ and }\  j \in \{k, k+1\}.$$
For $\varphi \in \Vcal_N$ and $x \in \Lambda$  we define $\nabla^{i, \alpha}_x \varphi = (\nabla^\alpha \varphi_i)(x)$
and  consider the norms
\begin{align}  \label{eq:primal_norm}
 | \varphi |_{j,X} =& \,  \sup_{x \in X^\ast} \sup_{1 \le i \le m} \,  \sup_{1 \le |\alpha| \le \pphi}  
\wpzc_j(\alpha)^{-1}  \left|\nabla_x^{i, \alpha} \p \right|\\
 = & \,  \sup_{x \in X^\ast} \sup_{1 \le i \le m}   \, \sup_{1 \le |\alpha| \le \pphi}   
  \wpzc_j(\alpha)^{-1}  \left|\nabla^\alpha \p_i(x)\right|.
 \notag
\end{align}
where 
\begin{equation}  \label{eq;definition_pphi}
\pphi = \lfloor d/2 \rfloor + 2
\end{equation}
 and  the weights $ \wpzc_j(\alpha) $ are given by
\begin{equation}  \label{eq:definition_weights}
\wpzc_j(\alpha) =  h_j \,  L^{-j |\alpha|}  \,  L^{-j\frac{d-2}{2}}    \quad  \hbox{with} \quad h_j = 2^j h. 
\end{equation}
The $|\cdot|_{j,X}$-norm for the fields depends on a $k$-polymer $X$ and a scale $j \in \{k, k+1\}$ 
and it measures the size of the field in a weighted maximum-norm in a neighbourhood of this polymer. 
The weights  are chosen so that a typical value of the field $\xi$ distributed according
to $\mu_{j+1}^{(\boldsymbol{q})}$ has norm of order $h_j^{-1}$ (cf. \eqref{eq:discretebounds}).
The parameters $h_j$ allow to control the scaling of the field norms $|\cdot|_{j,X}$
and since norms are defined by duality the  parameter $h_j$ also appears in the norm for Hamiltonians $H\in M_0(\mathcal{B})$.
See Section~\ref{se:free_parameters} for further discussion why we choose scaling factor $h_j$ which grow with $j$.

Viewing homogeneous terms  $g^{(r)} \in \Vcal_N^{\otimes r}$ 
as maps (or more precisely  equivalence classes of maps modulo tensor products involving 
constant fields, see  Section \ref{se:standard_example} in the appendix) from $\Lambda^r$ to  $(\R^p)^{\otimes r}$ with
$\nabla^{\alpha_j}$ acting on the $j$-th argument of $g_{i_1  \ldots i_r}^{(r)}$,
we  introduce the norm 
\begin{align}   \label{eq:tensor_norm}
| g^{(r)} |_{j, X} = & \,  \sup_{x_1,  \ldots, x_r \in { X^\ast}}  \, \sup_{\mpzc \in \mf m_{\pphi,r}} \,  \wpzc_j(\mpzc)^{-1} \,  
\nabla^{\mpzc_1} \otimes \ldots \otimes \nabla^{\mpzc_r} g^{(r)}(x_1, \ldots, x_r)   \\
=   & \, \sup_{x_1,  \ldots, x_r \in { X^\ast}}  \, \sup_{\mpzc \in \mf m_{\pphi,r}} \,  \wpzc_j(\mpzc)^{-1} \,  
\nabla^{\alpha_1} \otimes \ldots \otimes \nabla^{\alpha_r} g_{i_1 \ldots i_r}^{(r)}(x_1, \ldots, x_r) 
\notag
\end{align}
where  $\mf m_{\pphi, r}$ is the set of $r$-tuplets $\mpzc =(\mpzc_1, \ldots, \mpzc_r)$ with
$\mpzc_\ell = (i_\ell,  \alpha_\ell)$ and $1 \le |\alpha_\ell| \le \pphi$ and 
\begin{equation}   \label{eq:definition_weights_r}
\wpzc_j(\mpzc) = \prod_{\ell=1}^r \wpzc_j(\alpha_\ell). 
\end{equation}

The  norm defined above is actually the injective  tensor norm on $(\Vcal_N, | \cdot |_{j,X})^{\otimes r}$, 
see  \eqref{eq:tensor_norm_appendix},
implying, in particular, that
$$ 
|  \varphi^{(1)} \otimes \ldots \otimes \varphi^{(r)} |_{j,X} = | \varphi^{(1)} |_{j,X} \ldots  | \varphi^{(r)} |_{j,X} 
\ \text{ for any }\  \varphi^{(1)},\dots, \varphi^{(r)} \in \BX.
$$
We now define a norm on the space $\Phi$ of test functions by
\begin{equation}   \label{eq: norm_on_Phi}
| g |_{j, X} =  \sup_{r \in \NN_0}  | g^{(r)}|_{j,X}=\sup_{r\leq r_0} |g^{(r)}|_{j,X}.
\end{equation}
and a dual norm on polynomials by
\begin{equation}
| P|_{j,X} :=  \sup \{ \langle P, g \rangle  : g \in \Phi, \, | g |_{j,X} \le 1 \}.
\end{equation}
Assume that $F  \in C^{\pn}(\Vcal_N)$  satisfies the locality condition  with respect to a polymer $X\in\Pc$,
$$
\label{eq:locality_F_norms} 
F(\p+\psi )= F(\p) \ \text{ if }\ \psi|_{X^*}=0.
$$
We define
the pairing
\begin{equation}  \label{eq:pairing_F_g}
\langle F, g \rangle_\p := \langle \textstyle{\tay_\p F}, g \rangle.
\end{equation}
and the norm 
\begin{equation}  \label{eq:def_k_X_Tp_norm}
\abs{F}_{j, X,T_\p} = | \hbox{$\tay_\varphi$}F |_{j,X}= \sup \{ \langle  F, g \rangle_\p : g \in \Phi, \, | g|_{j,X} \le 1\}.
\end{equation}
Here  $\tay_\varphi F$ denotes the Taylor polynomial of order $\pn$ of $F$ at $\varphi$. 

We remark in passing that the right hand side of  \eqref{eq:def_k_X_Tp_norm} 
may be infinite since $|\cdot |_{j, X}$ is only a seminorm, but this will not occur
in the cases we are interested in, namely when $F$ is local and shift invariant 
in the sense described in the paragraph following \eqref{eqstructoffunctionals}. 
More precisely the right hand side of \eqref{eq:def_k_X_Tp_norm} 
is finite if and only if 
$\tay_\p F(\dot \p +\dot  \psi) =\tay_\p F(\dot \p)$ for all $\dot \p \in \Vcal_N$ and all  $\dot \psi \in \Vcal_N$ with $|\dot \psi|_{j,X} = 0$ 
(to see this one uses the fact $\Vcal_N$ is finite dimensional and  the zero norm elements of $\Vcal_N^{\otimes r}$ 
are linear combinations of tensor products $\xi_1 \otimes \ldots \otimes \xi_r$ where at least one of the $\xi_i$ has zero norm).
Note that $|\dot \psi|_{k,X} = 0$ implies that $\dot \psi$ is constant on 
each graph-connected component 	of $X^\ast$ and therefore by the definition
of shift invariance $F(\p+\dot\psi)=F(\p)$ for all $\p\in \Vcal_N$.

The final  norms for the functional $F$  are weighted  sup-norms  over $\p$ of the norm $\abs{F}_{k, X,T_\p}$. 
Dividing the norm $\abs{F}_{k,X,\p}$ by a regulator $w_k(\p)$, we allow the functional to grow for large fields.
A way to think about these regulators is that $\abs{F(\p)}\leq \norm{F(X)}  \,  w_k(\p)$.  
This bound must behave well with respect  to integration against $\mu_{k+1}$ and satisfy certain submultiplicativity properties.
The exact definition of the regulator is slightly involved and will be given in the next section. 

Now, we define  a norm on the  class of functionals  $M(\Pck) = M(\Pck,\Vcal_N)$ defined in \eqref{eqstructoffunctionals}. 
Writing $F(X)(\varphi) = F(X, \varphi)$ for any  $F \in M(\Pck,\Vcal_N)$, we  sometimes  use the abbreviation
\begin{equation} \label{eq:abbreviate_FX_Tphi}
\abs{F(X)}_{k, T_\p} := \abs{F(X)}_{k,X, T_\p}.
\end{equation}
Let $W_k^X, w_k^X, w_{k:k+1}^X\in M(\mathcal{P}_k)$ be weight functions that will be defined in the next section.
Let us denote $W^{-X}_k=(W^X_k)^{-1}$ and similarly for $w$. 
The strong and weak norms, respectively, by
\begin{align}\label{strongnorm}
	\vertiii{F(X)}_{k,X}        & =\sup_{\p} \abs{F(X)}_{k, T_\varphi}     \,       W^{-X}_k(\p),        \\
	\label{weaknorm}
	\norm{F(X)}_{k,X}     & =\sup_{\p} \abs{F(X)}_{k, T_\varphi}    \,    w_k^{-X}(\p),       \\
	\label{middlenorm}
	\norm{F(X)}_{k:k+1,X} & =\sup_{\p}\abs{F(X)}_{k, T_\varphi}    \,      w_{k:k+1}^{-X}(\p). 
\end{align}
The last norm is a version of the weak norm which lies between the weak norms of scales $k$ and $k+1$. 
In fact we will use the strong norm only for functionals in $M(\mathcal{B})$ which already factor over single blocks. 
We write $\vertiii{F}_k=\vertiii{F(B)}_{k,B}$ where the right hand side is independent of $B$ by translation invariance.

Finally, for any $A\ge 1$ we define the global weak norm for $F\in M(\Pck)$  given 
by a weighted maximum of the weak norms over the connected polymers
\begin{align}\label{globalweaknorm}
	\norm{F}_{k}^{(A)}=\sup_{X\in \Pck}\norm{F(X)}_{k,X} A^{|X|_k}
\end{align}
and similarly we define the norms $\norm{\cdot}_{k:k+1}^{(A)}$.
For polymers $X$  that are not connected we will usually  estimate the norm of $F(X, \cdot)$ 
by  the product of the norms of $F(Y_i, \cdot)$ where $Y_1, Y_2, \ldots$ are the connected components of $X$. 
 In Lemma~\ref{le:submult} we will state submultiplicativity properties of the norms needed for these estimates.
With the norm  \eqref{globalweaknorm} we also consider the version where we replace the weak $k$ norm by the in-between $k:k+1$ norm. 
 
We finally introduce another norm on the space of relevant Hamiltonians (at scale $k$).
 Recall that we defined these  to be  functionals of the form
\begin{align} \label{eq:relevant_hamiltonian_exp} 
\begin{split} H_k(B, \p) = L^{dk} a_\emptyset &+ \sum_{x \in B} \sum_{(i, \alpha) \in \mf v_1}    a_{i, \alpha} \nabla^\alpha \p_i(x) \\
&+ \sum_{x \in B} \, \, \sum_{(i, \alpha), (j, \beta) \in \mf v_2} a_{(i, \alpha), (j, \beta)} \nabla^\alpha \p_i(x) \nabla^\beta \p_j(x).
\end{split}
\end{align}
Here $B$ is a $k$-block and the index sets $\mf v_1$ and $\mf v_2$ are given
by
\begin{align}
\mf v_1 := \{ (i, \alpha) : 1 \le i \le m, \,  \alpha \in \NN_0^{\mathcal U}, 1 \le |\alpha| \le \lfloor d/2\rfloor +1 \},
\end{align}
\begin{align}
\mf v_2 := \{ (i, \alpha), (j, \beta)  : 1 \le i,j  \le m, \,  \alpha, \beta \in \NN_0^{\mathcal U}, \,  |\alpha|= |\beta| = 1,  \, (i, \alpha) \le (j, \beta)  \},
\end{align}
where $\mathcal{U}=\{e_1,\ldots, e_d\}$.
The expression  $(i, \alpha) \le (j, \beta)$ refers to any ordering on $\{1, \ldots, m\} \times \{e_1, \ldots, e_d\}$, e.g. lexicographic ordering. 
We use ordered indices to avoid double  counting since $ \nabla^\alpha \p_i(x) \nabla^\beta \p_j(x) =  \nabla^\beta \p_j(x) \nabla^\alpha \p_i(x)$.
We now introduce a norm for relevant Hamiltonians which is expressed directly in terms of the coefficients $a_{\mpzc}$ and given by 
 \begin{align}  \label{hamiltoniannorm}
 \| H_k \|_{k,0} = L^{kd} \, |a_\emptyset| + \sum_{(i, \alpha) \in \mf v_1}
h_k L^{kd} L^{- k \frac{d-2}{2}} L^{-k |\alpha|} \, |a_{i, \alpha}| + \sum_{\mpzc \in \mf v_2}  
h_k^2\,  |a_\mpzc|.
\end{align}
The weights in front of the coefficients are chosen in such a way that the  norm $ \| \cdot \|_{k,0}$ is equivalent
(uniformly in $k$ and $N$) to the strong norm $\vertiii{\cdot}$ (see Lemma~\ref{le:Pi2_bounded} and Lemma~\ref{le:Htp_vs_Hk0} below). 
Intuitively the weight in $L$ can also be understood by recalling that the typical value of $|\nabla^\alpha \p_i(x)|$
under $\mu_{k+1}$ is of order $L^{-k |\alpha|} L^{-k \frac{d-2}{2}}$. 

Note that the norms  depend on the constants $h_k$, $A$ and also on $L$ that will be chosen later.
We will need one additional norm because the renormalisation map $\boldsymbol{R}_{k+1}$ does not preserve factorisation on scale $k$ 
so that we cannot rely on submultiplicativity. 
This norm will only be required in the smoothness result in Chapter~\ref{sec:smoothness} and we postpone the definition of the last norm to that chapter.

\section{Properties of the renormalisation map}

Our definition of the renormalisation transformation $\boldsymbol{T}_k$  in Definition~\ref{def:Tk} satisfies the condition \eqref{eq:mainproprenorm}. 
A second requirement for the map $\boldsymbol{T}_k$ is that it separates relevant and irrelevant contributions properly. 
Observe that the origin $(0,0)$ is a fixed point of the transformation for every $\boldsymbol{q}$. 
The separation of relevant and irrelevant contributions can be made precise by showing that 
the linearisation of $\boldsymbol{T}_k$ at the origin defines a hyperbolic dynamical system.
A close look at the definition of $\boldsymbol{T}_k$ reveals that $H_{k+1}$ is in fact a linear function of $K_{k}$ and $H_{k}$, i.e., we can write
\begin{align}   \label{eq:define_Sk}
	\boldsymbol{T}_k(H_{k},K_{k},\boldsymbol{q})=(\boldsymbol{A}_k^{(\boldsymbol{q})}H_{k} 
	+\boldsymbol{B}_k^{(\boldsymbol{q})}K_{k}, \myS_k(H_{k},K_{k},\boldsymbol{q})) 
\end{align}
where $\boldsymbol{A}_k^{(\boldsymbol{q})}$ and $\boldsymbol{B}_k^{(\boldsymbol{q})}$ are linear operators.
We need two theorems concerning the renormalisation transformation $\boldsymbol{T}_k$. The
first theorem states local smoothness of the map $\myS_k$ which is required to apply an implicit function theorem. Let us use 
$\mathcal{U}_{\rho,\kappa}\subset M_0(\mathcal{B}_k)\times M(\Pck)\times \mathbb{R}^{(d\times m)\times (d\times m)}_{\mathrm{sym}}$
to denote the subset
\begin{align}
\begin{split}
\mathcal{U}_{\rho,\kappa}=\{(H_{k};K_{k},\boldsymbol{q})\in  M_0(\mathcal{B}_k)& \times M(\Pck)\times 
\mathbb{R}^{(d m)\times (d m)}_{\mathrm{sym}}:\, \\
&   \lVert  H_{k}\Vert_{k,0}<\rho,\,  \lVert K_{k}\rVert_{k}^{(A)}<\rho,\, | \boldsymbol{q}|<\kappa\}.      
\end{split}          
\end{align}

\begin{theorem}\label{prop:smoothnessofS}\label{PROP:SMOOTHNESSOFS}
Let $L_0=\max(2^{d+3}+16R,4d(2^d+R))$. 
For every $L\geq L_0$ there are $h_0(L)$, $A_0(L)$, and $\kappa(L)$
such that for  $h\geq h_0(L)$ and $A\geq A_0(L)$ there exists $\rho=\rho(A)$ 
such that the map $\myS_k$ satisfies
\begin{align}
\myS_k\in C^\infty\left(\mathcal{U}_{\rho,\kappa},\, \left( M(\Pckp),\lVert\cdot\rVert_{k+1}^{(A)}\right)\right). 
\end{align}
Moreover there are constants $C=C_{j_1,j_2,j_3}(A,L)$ such that 
\begin{align}\label{eq:propS:Claim}
\lVert D_1^{j_1}D_2^{j_2}D_3^{j_3}\myS_k(H_{k},K_{k},\boldsymbol{q})(\dot{H}^{j_1},\dot{K}^{j_2},\dot{\boldsymbol{q}}^{j_3})\rVert_{k+1}^{(A)} \leq 
C\lVert\dot{H}\rVert_0^{j_1}\left(\lVert\dot{K}\rVert_{k}^{(A)}\right)^{j_2}\lVert\dot{\boldsymbol{q}}\rVert^{j_3}             
\end{align}
for any $(H_{k},K_{k},\boldsymbol{q})\in \mathcal{U}_{\rho,\kappa}$ and any $j_1,j_2,j_3\geq 0$.
\end{theorem}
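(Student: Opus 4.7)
The plan is to decompose $\myS_k$ into a finite composition of elementary maps between suitable Banach spaces, prove uniform $C^\infty$ bounds on each constituent, and combine them via the chain and Leibniz rules. Reading off the definition \eqref{eq:defofKk+1}, $\myS_k$ is built from: (i) the exponential map $E\colon H\mapsto e^{-H}$ acting on the finite dimensional space $(M_0(\mathcal B_k),\|\cdot\|_{k,0})$, producing $\tilde I$ and $\tilde J$; (ii) the linear extraction map $\Pi_2$ which, in combination with $\boldsymbol R^{(\boldsymbol q)}_{k+1}$ as in \eqref{eq:defoftildeHk}, manufactures $\widetilde H$; (iii) the integration map $\boldsymbol R^{(\boldsymbol q)}_{k+1}$; and (iv) a handful of polynomial (in fact multilinear) maps $P_1,P_2,P_3$ encoding the circle products $\tilde J\circ(I-1)\circ K$ and the reblocking $\sum_X\chi(X,U)\,\tilde I^{U\setminus X}\tilde I^{-(X\setminus U)}(\cdots)$.

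For each constituent I would invoke the dedicated estimate developed earlier in the paper. The exponential map is entire on $(M_0(\mathcal B_k),\|\cdot\|_{k,0})$ with explicit derivative bounds (Lemma \ref{le:smoothness_exp}). The projection $\Pi_2$ is bounded linearly from the Taylor-norm space into $(M_0(\mathcal B_k),\|\cdot\|_{k,0})$ by Lemma \ref{le:Pi2_bounded}. The integration map $\boldsymbol R^{(\boldsymbol q)}_{k+1}$ is bounded between the appropriate weak norms by Lemma \ref{le:keyboundRk}, which rests on the weight integration estimates \eqref{eq:w7}--\eqref{eq:w8} and on the fact that convolution commutes with Taylor expansion; its smoothness in $\boldsymbol q$ is furnished by Theorem \ref{prop:finalsmoothness}, whose bound in terms of $L^p$-norms of the integrand is precisely what permits us to avoid the loss-of-regularity issue that plagued \cite{AKM16}. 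Finally, the multilinear maps $P_i$ are bounded on products of the weak (or intermediate $k\!:\!k+1$) norm spaces thanks to submultiplicativity, which in turn follows from the product estimate for Taylor polynomials (Proposition \ref{pr:product_estimate_taylor}) combined with the super-multiplicativity built into $w_k^X$ and $w_{k:k+1}^X$.

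Once the constituent bounds are in place, the chain rule expands $D_1^{j_1}D_2^{j_2}D_3^{j_3}\myS_k$ into a finite sum of products of derivatives of the building blocks, each summand controlled by the product of the individual estimates. Uniformity in $N$ is automatic because every constituent estimate is $N$-independent, and the finite range property confines all sums over polymers to scale $\le k+1$. The role of the intermediate norm $\|\cdot\|_{k:k+1,X}$ is to compensate for the fact that $\boldsymbol R^{(\boldsymbol q)}_{k+1}$ does not preserve factorisation on scale $k$: submultiplicativity is applied on the pre-integration side in the weak $k$-norm, and a single integration step converts the resulting $k\!:\!k+1$ control into a bound in the weak $(k+1)$-norm on the output.

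The main obstacle will be the combinatorics of the reblocking and the circle-product expansion. For each target $U\in\Pckp$ one sums over all $X\in\Pck$ with $\pi(X)=U$, and the expansion of $\tilde J\circ(I-1)\circ K$ brings a further sum over disjoint decompositions of $X$; the number of admissible configurations grows like $C(L)^{|U|_{k+1}}$, and this geometric factor must be absorbed by the weight $A^{-|U|_{k+1}}$ built into $\|\cdot\|_{k+1}^{(A)}$. This enforces $A\ge A_0(L)$. The geometric bound $|U^\ast|_k\le 2|U|_k$, required to invoke submultiplicativity on the $\ast$-neighbourhood, forces $L\ge 4d(2^d+R)$, while the construction of weights (Theorem \ref{th:weights_final}) demands $L\ge 2^{d+3}+16R$; together these yield the value $L_0$ in the statement. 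The radius $\rho=\rho(A)$ is then chosen so that the nonlinear series coming from the exponential and polynomial maps converge on $\mathcal U_{\rho,\kappa}$, delivering \eqref{eq:propS:Claim}, and $\kappa(L)$ is chosen small enough that $\boldsymbol q\in B_\kappa$ stays well inside the range of validity of the finite range decomposition of Theorem \ref{thm:frd}.
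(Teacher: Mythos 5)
Your decomposition is essentially the paper's: $\myS_k$ is written there as $P_1\bigl(E(-R_2),E(R_2),1-E(-R_2),R_1(P_3(P_2(E(-H),K)),\boldsymbol q)\bigr)$, each factor is controlled by exactly the lemmas you cite (Lemma~\ref{le:smoothness_exp}, Lemma~\ref{le:Pi2_bounded}, Lemma~\ref{le:keyboundRk}, Theorem~\ref{prop:finalsmoothness}, and the submultiplicativity package), and the constraints $L\ge\max(2^{d+3}+16R,\,4d(2^d+R))$, $A\ge A_0(L)$, $\rho=\rho(A)$, $\kappa=\kappa(L)$ arise exactly where you say they do. So the route is the same; however, there is one concrete gap in your sketch, at the point where you claim that Theorem~\ref{prop:finalsmoothness} ``is precisely what permits us to avoid the loss-of-regularity issue'' and that the intermediate norm $\lVert\cdot\rVert_{k:k+1}$ compensates for the failure of $\boldsymbol R^{(\boldsymbol q)}_{k+1}$ to preserve factorisation on scale $k$.

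The difficulty is that the $\boldsymbol q$-derivative bound for the integration map carries the factor $(D L^{-k})^{d\ell/2}$ with $D=\mathrm{diam}_\infty(X^\ast)$, so to get bounds uniform in $N$ one must control the diameter of every polymer on which $\boldsymbol R^{(\boldsymbol q)}_{k+1}$ is applied by the number of its blocks; this works only when $\pi(X)$ is connected, via $\mathrm{diam}(X^\ast)\le 2L^{k+1}|X|_k$ as in \eqref{eq:integration_bound_diam}. For a general polymer produced by the circle-product expansion the components may be spread over the whole torus and the naive chain-rule argument fails for $j_3\ge1$. The paper's device — which your proposal is missing and which the intermediate norm does \emph{not} supply (its role is only the weight consistency \eqref{eq:w6}--\eqref{eq:w8} under one integration step) — is the factorisation of $\boldsymbol R_{k+1}K$ over \emph{clusters}, i.e.\ over maximal subpolymers whose $\pi$-image is a single connected component (Lemma~\ref{le:factorise_cprime}), together with the auxiliary spaces $\widehat{\boldsymbol M}^{(A,B)}$ carrying the extra weight $B^{|\Ccal(X)|}$ (eventually $B=A$). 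Only on these spaces is the map $R_1$ smooth in $\boldsymbol q$ with $N$-independent constants (Lemma~\ref{le:R1}), and the reblocking map $P_1$ must then be set up to take products over $\Ccl(X_2)$, with the extra $B$-weight absorbed in the $A^{-(1+2\upalpha)|U|_{k+1}}$ gain of Lemma~\ref{le:P1}. Without introducing this cluster structure your composition, as written, does not yield the claimed bounds on $D_3^{j_3}\myS_k$ uniformly in $N$.
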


Note that to avoid overloading by indices that are  evident from the context, 
we skip indicating the scale for variables $\dot{H}$ and $\dot{K}$.

The proof of this theorem can be found in Chapter~\ref{sec:smoothness}.
The second theorem concerns the hyperbolicity of the linearisation of the renormalisation transformation.
Recall that $\eta\in \left(0,\tfrac23\right]$ is a fixed parameter that controls the contraction rate of the renormalisation flow.
\begin{theorem}\label{prop:contractivity}
The first derivative of $\boldsymbol{T}_k$ at $H_k=0$ and $K_k=0$ has the triangular form
\begin{align}\label{eq:triangular_derivative}
D\boldsymbol{T}_k(0,0,\boldsymbol{q})\begin{pmatrix}
\dot{H} \\ \dot{K}
\end{pmatrix}=\begin{pmatrix}
\boldsymbol{A}_k^{(\boldsymbol{q})} & \boldsymbol{B}_k^{(\boldsymbol{q})} \\
0                 & \boldsymbol{C}_k^{(\boldsymbol{q})} 
\end{pmatrix}
\begin{pmatrix}
\dot{H} \\ \dot{K}
\end{pmatrix}
\end{align}
where
\begin{align}\label{eq:defofAk}
(\boldsymbol{A}_k^{(\boldsymbol{q})}\dot{H})(B',\p) & =\hspace{-0.15cm}\sum_{B\in\mathcal{B}_k(B')}\dot{H}(B,\p)+L^{(k+1)d}
\hspace{-0.15cm}\sum_{(i, \alpha), (j, \beta) \in \mf v_2}   \hspace{-0.15cm}
a_{(i, \alpha), (j, \beta)}    \,    (\nabla^\beta)^\ast \nabla^\alpha \mathcal{C}^{(\boldsymbol{q})}_{k+1, ij}(0) \\        
\label{eq:defofBk}
(\boldsymbol{B}_k^{(\boldsymbol{q})}\dot{K})(B',\p) & = -
\sum_{B\in\mathcal{B}_k(B')}\Pi_2\left(\int_{\Xcal_N}\dot{K}(B,\p+\xi)\, \mu_{k+1}^{(\boldsymbol{q})}(\d\xi)\right)          \\
\label{eq:defofCk}
(\boldsymbol{C}_k^{(\boldsymbol{q})}\dot{K})(U,\p)  & =\sum_{B\in \mathcal{B}_k:\, \overline{B}=U}(1-\Pi_2)\int_{\Xcal_N}\dot{K}(B,\p+\xi)\,\mu_{k+1}^{(\boldsymbol{q})}(\d\xi)+  \\
\notag & \hspace{4cm}           
+\sum_{\substack{X\in \Pck\setminus \mathcal{B}(X)\\  \pi(X)=U}}\int_{\mathcal{X}_N}\dot{K}(X,\p+\xi)\,\mu_{k+1}^{(\boldsymbol{q})}(\d\xi).
\end{align}
There exists a constant $L_0$  such that there are constants
$h_0=h_0(L)$, $A_0=A_0(L)$, and $\kappa(L) > 0$ such that for any
$L\geq L_0$, $A\geq A_0(L)$, $h\geq h_0(L)$ and for $| \boldsymbol{q} |<\kappa(L)$ the following bounds hold independent of $k$ and $N$
\begin{align}\label{eq:contractionABC}
\lVert \boldsymbol{C}_k^{(\boldsymbol{q})}\rVert\leq \frac34 \eta,\quad \lVert (\boldsymbol{A}_k^{(\boldsymbol{q})})^{-1}\rVert\leq \frac34,\; 
\text{and}\;\lVert\boldsymbol{B}_k^{(\boldsymbol{q})}\rVert\leq \frac{1}{3}.                                             
\end{align}
Here the norms denote the operator norms  $(M(\Pck),\lVert\cdot\rVert_{k}^{(A)})\to (M(\Pckp),\lVert\cdot\rVert_{k+1}^{(A)})$, \newline
$(M_0(\mathcal{B}_{k+1}),\lVert\cdot\rVert_{k+1,0})\to (M_0(\mathcal{B}_{k}),\lVert\cdot\rVert_{k,0}),  (M(\Pck),\lVert\cdot\rVert_{k}^{(A)})\to
(M_0(\mathcal{B}_{k+1}),,\lVert\cdot\rVert_{k+1,0})$.
Moreover the derivatives of the operators with respect to $\boldsymbol{q}$  are bounded:
\begin{align}\label{eq:qderivABC}
\lVert \partial^\ell_{\boldsymbol{q}} \boldsymbol{A}^{(\boldsymbol{q})}_k\dot{H}\rVert_0\leq C\lVert \dot{H}\rVert_0,\; 
\lVert \partial^\ell_{\boldsymbol{q}} \boldsymbol{B}_k^{(\boldsymbol{q})}\dot{K}\rVert_0\leq C\lVert \dot{K}\rVert,\;    
\lVert \partial^\ell_{\boldsymbol{q}} \boldsymbol{C}_k^{(\boldsymbol{q})}\dot{K}\rVert\leq C\lVert \dot{K}\rVert\;      
\end{align}
for some constant $C=C_\ell(A,L)$. The proof shows that $L_0$ only depends on $d$, $m$, $R_0$, 
and on $\zeta$ and $\omega_0$ through $A_{\Bcal}$ where $A_{\Bcal}$ comes from Theorem~\ref{th:weights_final}.
\end{theorem}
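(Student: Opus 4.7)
The plan is to first obtain the triangular structure by direct linearisation of the explicit formulae in
Definition~\ref{def:Tk}, and then establish the three operator norm bounds separately, with the delicate
step being the bound on $\boldsymbol{C}_k^{(\boldsymbol{q})}$. The structure follows the general philosophy already outlined in
Chapter~\ref{sec:explanation} and Chapter~\ref{sec:tracking}, and most of the technical inputs are already
available from the earlier sections on the finite range decomposition, weights, the projection $\Pi_2$,
the contraction of $1-\Pi_2$, and the product/integration estimates.

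\textbf{Triangular form.} Starting from \eqref{eq:defofHk+1}--\eqref{eq:defoftildeHk} and
\eqref{eq:defofKk+1} with $(H,K)=(0,0)$, I would expand $\widetilde{H}_k$, $\tilde{I}-1$ and
$\widetilde{K}$ to first order. Only the linear piece of $e^{-H}-1$ and $1-e^{-\widetilde H}$ contribute,
and the only non-vanishing term in $\chi(X,U)\int\widetilde K\,\mu_{k+1}^{(\boldsymbol q)}$ at the linearised
level comes from $K$ itself. This yields \eqref{eq:triangular_derivative} and the explicit formulae for
$\boldsymbol{A}_k^{(\boldsymbol{q})}$, $\boldsymbol{B}_k^{(\boldsymbol{q})}$, $\boldsymbol{C}_k^{(\boldsymbol{q})}$; the constant term in
$\boldsymbol{A}_k^{(\boldsymbol{q})}$ arises from the second order Wick contraction
$\mathbb{E}_{\mu_{k+1}^{(\boldsymbol q)}}(\nabla^\alpha\p_i(x)\nabla^\beta\p_j(x)) = (\nabla^\beta)^\ast \nabla^\alpha\mathcal{C}^{(\boldsymbol{q})}_{k+1,ij}(0)$.

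\textbf{Bound on $\boldsymbol{A}_k^{(\boldsymbol{q})}$.} Using the equivalence of the coefficient norm
$\|\cdot\|_{k,0}$ with the strong norm $\vertiii{\cdot}$ and the representation \eqref{eq:defofAk},
$\boldsymbol{A}_k^{(\boldsymbol{q})}$ acts coefficient-by-coefficient. On the constant, linear, and off-diagonal quadratic
coefficients it is essentially a rescaling by a power of $L$, while on the diagonal quadratic coefficients
it is the identity plus the Wick contraction. Using the decay estimate \eqref{eq:discretebounds} for
$\nabla^\alpha\mathcal{C}^{(\boldsymbol{q})}_{k+1}$ with $|\alpha|=2$ and the weights
\eqref{eq:definition_weights}, the correction is of order $h_k^{-2}$, and can be made arbitrarily small
for $h\geq h_0(L)$. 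The explicit rescaling factor on the dominant scaling modes gives
$\|(\boldsymbol{A}_k^{(\boldsymbol{q})})^{-1}\|\leq \tfrac{3}{4}$ once $L$ is large, uniformly in $\boldsymbol{q}\in B_\kappa$.

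\textbf{Bound on $\boldsymbol{B}_k^{(\boldsymbol{q})}$.} Combining the boundedness of the integration map
$\boldsymbol{R}_{k+1}^{(\boldsymbol{q})}$ on the strong/weak norms (Lemma~\ref{le:keyboundRk}, based on the weight
integration estimates \eqref{eq:w7}--\eqref{eq:w8}) with the boundedness of $\Pi_2:M(\Bcal)\to M_0(\Bcal)$
(Lemma~\ref{le:Pi2_bounded}) and summing over the $L^d$ blocks inside $B'$, we obtain a bound of the form
$\|\boldsymbol{B}_k^{(\boldsymbol{q})}\dot K\|_{k+1,0} \leq C\,L^d\,A_{\Bcal}\,A^{-1}\|\dot K\|_k^{(A)}$,
so the factor $A^{-1}$ from the penalty $A^{|B|_k}=A$ on a single block dominates and gives
$\|\boldsymbol{B}_k^{(\boldsymbol{q})}\|\leq \tfrac{1}{3}$ for $A\geq A_0(L)$.

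\textbf{Bound on $\boldsymbol{C}_k^{(\boldsymbol{q})}$ — the main obstacle.} This is the most delicate step. I split the
right-hand side of \eqref{eq:defofCk} into the single block part with prefactor $(1-\Pi_2)$ and the
large polymer part. For the large polymer part, the geometric fact
$|U|_{k+1}=|\pi(X)|_{k+1}\leq \frac{1}{1+2\upalpha}|X|_k$ (Lemma~\ref{le:app1}) together with the
submultiplicativity of the weights and the integration estimate \eqref{eq:w7} gives a gain of
$A^{|U|_{k+1}-|X|_k}$ which can be made less than $\tfrac{\eta}{4}$ by choosing $A$ sufficiently large,
as indicated in Section~\ref{se:free_parameters}. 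For the single block contribution
$(1-\Pi_2)\boldsymbol{R}_{k+1}^{(\boldsymbol{q})}\dot K(B,\cdot)$, we first restrict to the Taylor polynomial at zero,
where Lemma~\ref{le_contraction_I} supplies a contraction factor $L^{-(d'-d)}$ or $L^{-d/2}$ from the
Poincar\'e-type estimate (this is where the choice $p_\Phi=\lfloor d/2\rfloor+2$ is used). To pass from
the Taylor polynomial at $0$ to the one at an arbitrary $\p$ the two-norm estimate
\eqref{eq:two_norm_concrete} from Lemma~\ref{le:norms_pointwise} gives the required $L^{-3d/2}$ decay on
the tail terms of order $\geq 3$. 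Combining the two contributions with the single-block integration bound
$A_{\Bcal}$ (independent of $L$ thanks to the improved decomposition \cite{Buc16}) I obtain a bound of the
form $C'A_{\Bcal}(L^{-(d'-d)}+L^{-d/2})\leq \tfrac{\eta}{2}$ once $L\geq L_0$. Adding the two parts
yields $\|\boldsymbol{C}_k^{(\boldsymbol{q})}\|\leq \tfrac{3}{4}\eta$.

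\textbf{$\boldsymbol{q}$-derivatives.} The $\boldsymbol{q}$-dependence of $\boldsymbol{A}_k^{(\boldsymbol{q})}$,
$\boldsymbol{B}_k^{(\boldsymbol{q})}$, $\boldsymbol{C}_k^{(\boldsymbol{q})}$ enters only through the Gaussian measure
$\mu_{k+1}^{(\boldsymbol{q})}$ (and, in $\boldsymbol{A}_k^{(\boldsymbol{q})}$, also through the explicit kernel
$\mathcal{C}^{(\boldsymbol{q})}_{k+1}(0)$). The required bounds \eqref{eq:qderivABC} therefore follow from
Theorem~\ref{prop:finalsmoothness} together with the $L^p$-integrability of the weights from
Theorem~\ref{th:weights_final}, by the same reasoning that was used in the bound of
$\boldsymbol{R}_{k+1}^{(\boldsymbol{q})}$ itself. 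This reduces the $\boldsymbol{q}$-smoothness of the linear operators to
estimates we have already assembled and does not give any additional restriction on $L$, $h$, or $A$
beyond those needed above.
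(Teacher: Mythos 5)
Your proposal is correct and follows essentially the same route as the paper: the explicit formulae and the triangular structure by linearising \eqref{eq:defoftildeHk} and \eqref{eq:defofKk+1}, the bound on $(\boldsymbol{A}_k^{(\boldsymbol{q})})^{-1}$ via the coefficient norm with the Wick-contraction correction of order $h_{k+1}^{-2}$ absorbed by taking $h$ large (Lemma~\ref{le:Aq}), the bound on $\boldsymbol{B}_k^{(\boldsymbol{q})}$ from Lemma~\ref{le:Pi2_bounded}, Lemma~\ref{le:keyboundRk} and the factor $A^{-1}$ (Lemma~\ref{le:bound_BQ}), and the split of $\boldsymbol{C}_k^{(\boldsymbol{q})}$ into a reblocking part controlled by $|\pi(X)|_{k+1}\le(1+2\upalpha)^{-1}|X|_k$ and large $A$, and a single-block part controlled by the $(1-\Pi_2)$ contraction, the two-norm estimate and the $L$-independent constant $\AB$ (Lemmas~\ref{le:contrlarge}, \ref{le:contraction_single_block_new}, \ref{le:contraction_single_block_prime}, \ref{le:contr}). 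Your treatment of the $\boldsymbol q$-derivatives of $\boldsymbol{C}_k^{(\boldsymbol{q})}$ directly via Theorem~\ref{prop:finalsmoothness} is a harmless variant of the paper's shortcut $\partial_{\boldsymbol q}^\ell\boldsymbol{C}_k^{(\boldsymbol{q})}=\partial^\ell_{\boldsymbol q}\partial_K\boldsymbol{S}_k(0,0,\boldsymbol q)$ combined with Theorem~\ref{prop:smoothnessofS}.

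One point you should make explicit rather than assert: the lower-left zero in \eqref{eq:triangular_derivative}. The $H$-direction linearisation of $\widetilde K$ on a single block is $\dot H(B,\p+\xi)-\Pi_2\boldsymbol{R}_{k+1}^{(\boldsymbol{q})}\dot H(B,\p)$, which does \emph{not} vanish pointwise in $\xi$; it vanishes only after integration against $\mu_{k+1}^{(\boldsymbol{q})}$, because $\boldsymbol{R}_{k+1}^{(\boldsymbol{q})}$ maps relevant Hamiltonians to relevant Hamiltonians (the linear monomials integrate to zero and the quadratic ones produce only an additional constant) and $\Pi_2$ acts as the identity on relevant Hamiltonians, so $\boldsymbol{R}_{k+1}^{(\boldsymbol{q})}\dot H-\Pi_2\boldsymbol{R}_{k+1}^{(\boldsymbol{q})}\dot H=0$. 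This cancellation is exactly how the extraction step was designed and is the substance behind your phrase that ``the only non-vanishing term comes from $K$ itself''; similarly, in the single-block bound for $\boldsymbol{C}_k^{(\boldsymbol{q})}$ the field-dependent factors $(1+|\p|_{k+1,B})^{\cdot}$ produced by the two-norm estimate must be absorbed into $w_{k+1}$ via property \eqref{eq:w9} of Theorem~\ref{th:weights_final}, which is worth stating when you carry out the estimate.
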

\begin{proof}
Here we only show the validity of  the expressions for the operators $\boldsymbol{A}_k^{(\bf{q})}$, $\boldsymbol{B}_k^{(\boldsymbol{q})}$, 
and $\boldsymbol{C}_k^{(\boldsymbol{q})}$ and the bound \eqref{eq:qderivABC}.
The bounds for the operator norms  will be shown in Chapter~\ref{sec:contraction} in Lemma~\ref{le:Aq}, 
Lemma~\ref{le:bound_BQ} and Lemma~\ref{le:contr}.
The proof of the bounds \eqref{eq:qderivABC} can be found in Corollary \ref{cor:bdAB} 
for the operators $\boldsymbol{A}^{(\boldsymbol{q})}_k$ and
$\boldsymbol{B}_k^{(\boldsymbol{q})}$. For $\boldsymbol{C}^{(\boldsymbol{q})}_k$ it follows from 
Theorem~\ref{prop:smoothnessofS} and the identity
\begin{align}
\partial^\ell_{\boldsymbol{q}} \boldsymbol{C}_k^{(\boldsymbol{q})}= \partial^\ell_{\boldsymbol{q}} \partial_{K_{k}} \boldsymbol{S}_k(0,0,\boldsymbol{q}). 
\end{align}
	
To obtain the formula for $\boldsymbol{A}_k^{(\boldsymbol{q})}$ we recall that by \eqref{eq:defofHk+1} and \eqref{eq:defoftildeHk} 
\begin{align}
(\boldsymbol{A}^{(\boldsymbol{q})}_k \dot{H})(B',\p)=\sum_{B\in \mathcal{B}_k(B')} \Pi_2\myR_{k+1}^{(\boldsymbol{q})}\dot{H}(B,\p). 
\end{align}
We write the Hamiltonian $\dot{H}$ as a sum of constant, linear and quadratic terms,
$\dot H(\p) =L^{dk} a_\emptyset + \ell(\p) + Q(\xi, \xi)$. Then
\begin{align}
\dot{H}(B,\p+\xi)=\dot{H}(B,\p)+Q(\xi, \xi)+\text{terms linear in $\xi$}
\end{align}
where in view of \eqref{eq:relevant_hamiltonian_exp}
\begin{align*}
Q(\xi, \xi) = \sum_{x \in B} \, \, \sum_{(i, \alpha), (j, \beta) \in \mf v_2} a_{(i, \alpha), (j, \beta)} \nabla^\alpha \xi_i(x) \nabla^\beta \xi_j(x).
\end{align*}

Linear terms vanish when integrated against $\mu_{k+1}(\d\xi)$.
Observe that the projection $\Pi_2$  preserves relevant Hamiltonians, i.e., $\Pi_2H_k=H_k$ for $H_{k}\in M_0(\mathcal{B}_k)$.
It remains to evaluate the integral of the quadratic form ${Q}(\xi,\xi)$. 
Since the covariance of $\mu_{k+1}$ is translation invariant we have for $\mathbb{E} = \mathbb{E}_{\mu_{k+1}}$
\begin{align}
\mathbb{E}(\nabla^\alpha \xi^i(x)\nabla^\beta \xi^j(y))=     \mathbb{E}((\nabla^\beta)^\ast\nabla^\alpha \xi^i(x)\xi^j(y)) = 
\big((\nabla^\beta)^\ast\nabla^\alpha \mathcal{C}_{ij}\big) (x-y).            
\end{align}
This implies that
\begin{align}
\int_{\Xcal_N} {Q}(\xi, \xi)\,\mu_{k+1}^{(\boldsymbol{q})}(\d\xi)=\sum_{x\in B}\sum_{(i, \alpha), (j, \beta) \in \mf v_2}   
a_{(i, \alpha), (j, \beta)}    \,    (\nabla^\beta)^\ast \nabla^\alpha \mathcal{C}^{(\boldsymbol{q})}_{k+1, ij}(0).                              
\end{align}
Summing over $B\in \mathcal{B}_k(B')$ we get the formula 
\eqref{eq:defofAk} for $\boldsymbol{A}^{(\boldsymbol{q})}$ using that $|B|=L^{dk}$ 
 and $|\mathcal{B}_k(B')|=L^d$ to obtain the prefactor $L^{(k+1)d}$ of the constant term. 
	
The formula for $\boldsymbol{B}_k^{(\boldsymbol{q})}$ is a direct consequence of the definitions \eqref{eq:defofHk+1} and \eqref{eq:defoftildeHk}.
	
We now derive the formula for $\boldsymbol{C}_k^{(\boldsymbol{q})}$.
Recall that we defined $\widetilde{K}_k(K_{k},H_{k})(\p,\xi)=(1 -e^{-\widetilde{H}_k(\p)}))\circ(e^{-H_k(\p+\xi)}-1)\circ K_k(\p+\xi)$. 
We calculate the derivative at 0 in direction $\dot{K}$, hence we set $H_k=0$ and 
\begin{align}
\widetilde{H}_k(B,\p)=-\Pi_2\boldsymbol{R}_{k+1}^{(\boldsymbol{q})} K_k(B,\p). 
\end{align}
This implies for the derivative of $\widetilde{K}_k$ at zero
\begin{align}
D_K\widetilde{K}_k(0)(\dot{K})(X,\p,\xi)=
\begin{cases}& \dot{K}(X,\p+\xi)-\Pi_2\boldsymbol{R}_{k+1}^{(\boldsymbol{q})}\dot{K}(X,\p)\text{ if $X\in\mathcal{B}_k$,} \\
& \dot{K}(X,\p+\xi)\text{ if $X\in \Pck\setminus \mathcal{B}_k$,}          \\
& 0\text{ if $X\in \mathcal{P}_k\setminus \Pck$}.                           
\end{cases}
\end{align}
The derivative vanishes for non-connected polymers because $K_{k}$ factors on scale $k$.
Now the definition \eqref{eq:defofKk+1} implies \eqref{eq:defofCk}.
	
Finally we show that the  derivative of $K_{k+1}$ with respect to $H_k$ vanishes. To this end we notice that
\begin{align}
D_H\widetilde{K}_k(0)(\dot{H})(X,\p,\xi)=                                             
\begin{cases}                                                                   
\dot{H}(X,\p+\xi)- \Pi_2\boldsymbol{R}_{k+1}^{(\boldsymbol{q})}\dot{H}(X,\p)\; \text{ for $X\in \mathcal{B}_k$}, \\
0 \;\text{ otherwise.}                                                          
\end{cases}                                                                     
\end{align}
Thus \eqref{eq:defofKk+1} implies that the derivative vanishes for $U\notin\mathcal{B}_{k+1}$ 
and we infer that for $B'\in \mathcal{B}_{k+1}$ 
\begin{align}
\begin{split}
D_HK_{k+1}(\dot{H})(B',\p)&=
\sum_{B\in \mathcal{B}_{k}(B')} \int_{\mathcal{X}_N} \dot{H}(B,\p+\xi)-
(\Pi_2\boldsymbol{R}_{k+1}^{(\boldsymbol{q})}\dot{H})(B,\p)\,\mu_{k+1}^{(\boldsymbol{q})} (\d \xi)\\
&=\sum_{B\in \mathcal{B}_{k}(B')} (\boldsymbol{R}_{k+1}^{(\boldsymbol{q})}\dot{H})(B,\p)-(\Pi_2\boldsymbol{R}_{k+1}^{(\boldsymbol{q})}\dot{H})(B,\p)=0
\end{split}	
\end{align}
where we used that $\boldsymbol{R}_{k+1}^{(\boldsymbol{q})}$ maps relevant Hamiltonians to relevant Hamiltonians as shown above and $\Pi_2$ is the identity on relevant Hamiltonians.
\end{proof}

\chapter{A New Large Field Regulator}\label{sec:weights}

In this chapter we construct a new large field regulator. It allows for
substantially rougher initial perturbations than the previous regulator in \cite{AKM16} or \cite{Bry09}.
Previously explicit estimates for carefully chosen Gaussian integrals were used to
construct the regulators. In the new approach we define the weights implicitly based
on the abstract formula for Gaussian integrals.
The next section contains the definition of the new large field regulator. Then, in Section~\ref{sec:LFR_comments}, we motivate the construction and relate it to earlier
work. Finally, in the remaining sections of this chapter, we prove the properties of the weights
stated in Theorem~\ref{th:weights_final} below.

\section{Introducing the weights}
\label{sec:NLRdef}

Recall that we defined the constant
\begin{align}\label{eq:defofMConstant}
M=M(d)=\pphi+\lfloor d/2\rfloor+1=2 \lfloor d/2\rfloor +3
\end{align}
 that is related to the discrete Sobolev embedding (note that compared to
\cite{AKM16} we changed $M$).
For $0\leq k \leq N$ and any $k$-polymer $X$  we define the linear operator $\boldsymbol{M}_k^X:\mathcal{X}_N\to \mathcal{X}_N$ by
\begin{align}\label{eq:defofMk}
	\boldsymbol M_k^X=\sum_{1\leq |\alpha|\leq M} L^{2k(|\alpha|-1)} (\nabla^\ast)^\alpha \chi_X \nabla^\alpha 
\end{align}
where $\chi_X:T_N\rightarrow \mathbb{R}$ is defined by 
\begin{align}
\chi_X(x)=\sum_{B\in \mathcal{B}_{k}(X)}
\boldsymbol{1}_{B^+}(x)
= \abs*{\{B\in \Bcal_k(X):  x\in B^+\}}.
\end{align}
Here $\boldsymbol{1}$ denotes the indicator function.
Recall (see Section~\ref{se:polymers}) that $B^+=(B+[-L^k,L^k]^d)\cap T_N$ for $k\geq 1$ and $B^+=(B+[-R,R]^d)\cap T_N$ for $k=0$.
Note that here and in the following we sometimes use the 
natural inclusion $\mathbb{R}\hookrightarrow \mathbb{R}^{m\times m}$ given by
$\lambda\to \lambda\, \mathrm{Id}$ 
without reflecting this in the notation.
Let us also introduce the operator
\begin{align}
\Malt_k=\sum_{1\leq |\alpha|\leq M} L^{2k(|\alpha|-1)} (\nabla^\ast)^\alpha\nabla^\alpha. 
\end{align}
The operators $\boldsymbol{M}_k^{\TN}$ and $\boldsymbol{M}_k$ are related by
\begin{align}\label{eq:MLambdaN}
\boldsymbol{M}_k^{\TN}=\Xi_k \sum_{1\leq |\alpha|\leq M} L^{2k(|\alpha|-1)} (\nabla^\ast)^\alpha\nabla^\alpha =\Xi_k \Malt_k
\end{align}
where $\Xi_k=|B^+|_k$, $B\in \mathcal{B}_k$ accounts for the sum over $\boldsymbol{1}_{B^+}$.
From the definition of $B^+$ we find
 $\Xi_0=(2R+1)^d$, $\Xi_N=1$, and $\Xi_k=3^d$ for $1\leq k<N$
 and therefore in particular
 \begin{align}\label{eq:thetamax}
 \Xi_k\leq \Xi_{\mathrm{max}}=(2R+1)^d. 
 \end{align}
 Note that
$\Malt_k$ is translation invariant and therefore diagonal in Fourier space.
 
 Recall that we consider the space $\Gcal=(\R^m)^\Ical$ where $\Ical$ satisfies  $\{e_1,\ldots,e_n\}\subset \Ical\subset \subset\{\alpha\in \mathbb{N}_0^d\setminus \{0,\ldots,0\}: |\alpha|_\infty\leq R_0\}$.
We assume that $\Qscr$ is a quadratic form on $\mathcal{G}$ 
that satisfies \eqref{eq:conditionForQ}.
From now on we use the shorthand notation 
 $\mathscr{A}= \mathscr{A}_{\boldsymbol{Q}}= \mathscr{A}^{(0)}$ for the operator generated by $\Qscr $ on $\mathcal{X}_N$ 
 (cf. \eqref{E:A_Q} and \eqref{eq:definition_Aq}),
\begin{align}\label{eq:A_again}
(\p,\mathscr{A}\p)=\sum_{x\in T_N} \Qscr(D\p(x)).
\end{align}

 Let $\weightzeta\in (0,\tfrac14)$ be a parameter. We will later set
\begin{align}\label{eq:definition_weightzeta}
\weightzeta=\zeta/4
\end{align}
 where
 $\zeta\in (0,1/2)$ is the parameter in the norm on $\boldsymbol{E}$
 that appears in Theorem \ref{th:pertcomp}.
Let $\delta_j=4^{-j}\delta>0$ be a sequence of real numbers
with $\delta$ to be specified later.
We define large field regulators $w_k^X,w_{k:k+1}^X$ for the weak norm for $0\leq k \leq N$ by
\begin{align}\label{eq:defofwk}
w_{k}^X(\p)=e^{\frac{1}{2}(\boldsymbol A_k^X\p,\p)},\quad     
w_{k:k+1}^X(\p)=e^{\frac{1}{2}(\boldsymbol A_{k:k+1}^X\p,\p)} 
\end{align}
where $\boldsymbol A_k^X$ and $\boldsymbol A_{k:k+1}^X$ are
linear symmetric operators on $\mathcal{X}_N$ that are defined iteratively by
\begin{equation}
\begin{aligned}\label{eq:defAk}
(\p,\boldsymbol A_0^X \p) &=(1-4\weightzeta)\sum_{x\in X}\Qscr(D\p(x))+\delta_0(\p,\boldsymbol M_0^X\p) \quad \text{for $X\in \mathcal{P}_0$}, \\
\boldsymbol A_{k:k+1}^X &=\left((\boldsymbol A_k^X)^{-1}-(1+\weightzeta)\mathscr C_{k+1}\right)^{-1} \quad \text{for $X\in \mathcal{P}_k$ and $0\leq k\leq N$},\\
\boldsymbol A_{k+1}^X &=\boldsymbol A_{k:k+1}^{X^{\ast}}+\delta_{k+1}\boldsymbol M_{k+1}^X \quad \text{for $X\in \mathcal{P}_{k+1}$ and $0\leq k\leq N-1$}.                                                                                           
\end{aligned}
\end{equation}
Here $\mathscr C_{k+1}$ is a finite range decomposition for the operator $\mathscr{A} = \mathscr{A}^{(0)}$ 
as in Theorem \ref{thm:frd}.
The definition of $\boldsymbol{A}_{k:k+1}^X$ is a bit sloppy because $\boldsymbol{A}_k^X$ is in general
not invertible, however the definition makes sense on the space $\ker (\boldsymbol{A}_k^X)^\perp$ and then $\boldsymbol{A}_{k:k+1}^X$ is
the extension by zero of this operator; see the beginning of Section~\ref{sec:proweights} and Lemma~\ref{prop:W1}~\ref{it:welldefined} below.
We use the neighbourhood $X^{\ast}$ in the definition of $\boldsymbol{A}_{k+1}^X$ to
account for the fact that in the reblocking step we also add contributions to $X$ that come from polymers that are not contained in $X$ but only in $X^\ast$.

We define the strong norm weight functions almost as in \cite{AKM16} by
\begin{align}  \label{eq:strong_weight}
	W_k^X(\p)=e^{\frac12(\boldsymbol z\p,\p)}\text{  with  } (\p,\boldsymbol{G}_k^X\p)=\frac{1}{h_k^2} \sum_{1\leq |\alpha|\leq \lfloor \frac{d}{2}\rfloor +1}L^{2k(|\alpha|-1)}(\nabla^\alpha 
	\p,\boldsymbol{1}_X                                                                                                                       
	\nabla^\alpha\p)                                                                                                                      
\end{align}
where as before $h_k=2^kh$   with $h=h(L)$ to be chosen later.

\section{Motivation and relation to earlier definitions}\label{sec:LFR_comments}
To motivate the definition of the weight functions, we add several observations.
In the evaluation of functional integrals $\int F(\p)\,\mu(\d\p)$ where $\mu$ is a Gaussian measure 
it is a well known problem that the functional $F$ is in general unbounded for large fields $\p$. 
This is the \textit{large field problem} that makes the construction of good norms for $F$ difficult. 
A more detailed discussion can be found in \cite{Bry09}.
In our approach  we defined the norms for $F$ in \eqref{weaknorm}  by  
$\norm*{F}_{k,X}=\sup_\p \abs{F(\p)}_{k,X,T_\p} (w_k^{X}(\p))^{-1}$ where $w_k^X$ are the weight functions. 
They regulate the allowed growth at infinity. The larger the weight function the weaker the norm. 
So the results get stronger, i.e., the class of admissible potentials is bigger, if we can choose $w_k$ bigger.
The growth assumptions for the potential $V$  in our theorems are weaker than those in \cite{AKM16} due to 
the larger weights that we construct in this chapter.

The weights need to satisfy two main requirements that ensure that the renormalisation map is bounded and that the norms are submultiplicative.
In the following we give a rough overview of those two requirements and why our new construction satisfies them. 
We first discuss the boundedness of the 
renormalisation map that constrains the growth of the weights.

The first key requirement for the norm is that the renormalisation map, i.e., convolution with the partial
measures $\mu_{k+1}$ is bounded. This yields the condition
\begin{align}
\begin{split}
\norm*{\int_{\Xcal_N} F(\p+\cdot)\, \mu_{k+1}(\d\p)}& =    
\sup_{\psi} w_{k+1}^{-X}(\psi) \abs*{\int_{\Xcal_N} F(\p+\psi)\, \mu_{k+1}(\d\p)}\\
& \leq \norm*{F} \sup_{\psi} w_{k+1}^{-X}(\psi) \int_{\Xcal_N} w_k^X(\p+\psi)\, \mu_{k+1}(\d\p).
\end{split}
\end{align}
In other words the renormalisation map is bounded if and only if
$w_k^X\ast \mu_{k+1}\lesssim  w_{k+1}^X$.
Therefore the optimal choice is $w_{k+1}^X\propto w_k^X\ast \mu_{k+1}$. In general this is a very 
implicit
definition that is not very useful. If, however, $w_k(\p)=e^{\frac12 (\p,\boldsymbol{A}_k^X\p)}$ is an  exponential of a quadratic
form, the convolution can be carried out explicitly and then the next weight has the same structure, i.e., it is again the exponential of 
a quadratic form.  Indeed, by general Gaussian calculus
the following identity holds for a given linear symmetric positive operator
$A$ on a finite dimensional vector space $V$ and a covariance operator $C$ 
\begin{align}\begin{split}\label{eq:GaussianCalculus}
	\int_{V} e^{\frac12 \left(A(\p+\psi),\p+\psi\right)}\,\mu_C(\d\psi)
	  & =\left(\frac{\det(C^{-1}-A)}{\det C^{-1}}\right)^{-\frac12}e^{\frac12( 
	(A^{-1}-C)^{-1}\p,\p)}
	\\
	  & =\det\left(\1-C^{\frac12}AC^{\frac12}\right)^{-\frac{1}2}e^{\frac12(    
	(A^{-1}-C)^{-1}\p,\p)}
	\end{split}
\end{align}
under the assumption that $A<C^{-1}$.
This implies that the next scale quadratic form is essentially given by the expression 
for $\boldsymbol{A}_{k:k+1}^X$ in \eqref{eq:defAk}.

We now briefly discuss the second key requirement for the norms of the functionals which is sub-multiplicativity for 
distant polymers, i.e., $\lVert F^XF^Y\rVert\leq \lVert F^X\rVert \cdot\lVert F^Y\rVert$ if $X$ and $Y$ are strictly
disjoint polymers. This condition is necessary to regroup the terms and estimate products.
Since the maximum norm is sub-multiplicative we find the condition 
$w_k^Xw_k^Y\geq w_k^{X\cup Y}$ for the weights. 
At first sight this might appear problematic because we have no explicit
expression for $w_k^X$. But it turns out that the finite range property of $\mu_{k+1}$ ensures that the
weight functions factor for strictly disjoint polymers if we choose
$w_{k+1}^X\propto w_k^X\ast \mu_{k+1}$.
To show this we note that $w_0^X(\p)$ only depends  on the values of $\p$ in a neighbourhood of $X$. 
The same is true for $w_k^X(\p)$ because it is a convolution of $w_0$ with some measure.
Then the factorisation follows by induction from the finite range property
\begin{align}\label{eq:factheu}
	w_{k+1}^{X\cup Y}=w_k^{X\cup Y}\hspace{-0.1cm}\ast \mu_{k+1}=                              
	(w_k^X\cdot w_k^Y)\ast \mu_{k+1}=(w_k^X\hspace{-0.1cm}\ast \mu_{k+1})( w_k^Y\hspace{-0.1cm}\ast\mu_{k+1}) 
	=w_{k+1}^Xw_{k+1}^Y.                                                        
\end{align}
Finally, let us  briefly mention why we need the second set of weights $w_{k:k+1}$ that includes the operator $\boldsymbol{M}_k^X$. 
The reason is twofold. On the one hand, in every step we also need to control contribution from 
the Hamiltonian terms on blocks
that are bounded in the strong norm but the
blocks are not separated from the considered polymer. Therefore sub-multiplicativity does not hold in this
case. Instead we add the operator $\boldsymbol{M}_k^X$ that allows us to bound the terms from the Hamiltonian. 
Secondly,
the field norm $|\p|_{k,X}$ must be controlled by the weight function $w_k^X$.
This is also guaranteed by
the addition of the term $\boldsymbol{M}_k^X$. It turns out, however, that this changes the weight functions only slightly for sufficiently
small prefactor $\delta$ (see Lemma~ \ref{le:opineq} below).

\section{Properties of the weight functions}
\label{sec:proweights}

Here and in the following we consider the extensions of the quadratic 
forms $\boldsymbol{G}_k^X$, $\boldsymbol{M}_k^X$, $\boldsymbol{A}_{k}^X$, and $\boldsymbol{A}_{k:k+1}^X$ 
to $\Vcal_N$ by
$\boldsymbol{G}_k^X\p=0, $ for $\p\in \Xcal_N^\perp=\{\text{constant fields}\}$ and similarly for the other forms.
Then we can also extend the weight functions $w_k^X$, $w_{k:k+1}^X$, and $W_k^X$ to $\Vcal_n$ using their definition \eqref{eq:defofwk} and \eqref{eq:strong_weight}. This extension has the property that $w_k^X(\p+\psi)=w_k(\p)$ if $\psi$ is a constant field. 

In the following theorem we collect the properties of the weight functions $w_k^X$, $w_{k:k+1}^X$, and $W_k^X$.
The claims of the theorem will be  reformulated and proven
directly in terms of the operators $\boldsymbol{A}_k^X$, $\boldsymbol{A}_{k:k+1}^X$, and $\boldsymbol{G}_k^X$ in the following sections.
We state our results for general values of $\pphi$, $M$, $n$,  and $\tilde n$
but we will later only use the weights for the parameters chosen as indicated before.
 Recall our convention that we do not indicate dependence on the fixed parameters $\omega_0$, 
 $\weightzeta$, $d$, $m$, $R_0$, $M$, $n$, and $\tilde{n}$.
\begin{theorem}\label{th:weights_final}
Consider $\Gcal$ as defined in \eqref{eq:def_Gcal} and let $\Qscr$ be a quadratic form
on $\Gcal$  satisfying 
\begin{align}
\omega_0 |z^\nabla|^2 \le \Qscr(z) \le \omega_0^{-1} |z|^2
\end{align}
 with a  constant $\omega_0\in (0,1)$ and let $\weightzeta\in (0,\frac14)$.
Let $M\geq\pphi+\lfloor \frac{d}{2}\rfloor +1$ and let  $\mathscr{C}_k^{(\boldsymbol{q})}$ be
a family of finite range decompositions for the quadratic forms  
$z \mapsto \Qscr(z) - (\boldsymbol q z^\nabla, z^\nabla)$,  with $n\geq 2M$ and $\tilde{n} > n$.
Then, for every 
\begin{align}\label{eq:condition_L_weights}
L\geq 2^{d+3}+16R,
\end{align} 
there are constants $\lambda >0$, $\delta(L)>0$,  $\kappa(L)$  (specified in \eqref{eq:defoflambda}, \eqref{eq:defofdelta},  
and \eqref{eq:defofkappa}) and $h_0(L)$ given by 
\begin{align}\label{eq:definition_h0}
h_0(L)=\delta(L)^{-\frac12} \max(8^{\frac12},c_d)
\end{align}
such that	the weight functions defined in \eqref{eq:defofwk}
and \eqref{eq:strong_weight}
are well-defined and satisfy:

\begin{enumerate}[label=\roman*)]
\item  
\label{w:w1}
For any $Y\subset X\in \mathcal{P}_k$, $0\leq k\leq N$, and $\p\in \mathcal{V}_N$, 
\begin{align}\label{eq:w1}
w_k^Y(\p)\leq w_k^X(\p)\quad \text{and} \quad w_{k:k+1}^Y(\p)\leq w_{k:k+1}^X(\p);
\end{align}
\item
\label{w:w2}
 The estimate 
\begin{align}\label{eq:w2}
w_k^X(\p)\leq \exp\left(\frac{(\p,\Malt_k\p)}{2\lambda}\right)
\quad \text{and} \quad w_{k:k+1}^X(\p)\leq \exp\left(\frac{(\p,\Malt_k\p)}{2\lambda}\right)
\end{align}
holds for $0\leq k\leq N$, $X\in \Pcal_k$, and $\p\in \Vcal_N$;
\item
\label{w:w3}
 For any strictly disjoint polymers $X,Y\in \mathcal{P}_k$, $0\leq k\leq N$,  and $\p\in \mathcal{V}_N$,
\begin{align}\label{eq:w3}
w_k^{X\cup Y}(\p)=w_k^X(\p)w_k^Y(\p);
\end{align}
\item
\label{w:w4}
 For any polymers $X,Y\in \mathcal{P}_k$  such that $\mathrm{dist}(X,Y)\geq \frac34 L^{k+1}$, $0\leq k\leq N$, and $\p\in \mathcal{V}_N$,
\begin{align}\label{eq:w4}
w_{k:k+1}^{X\cup Y}(\p)=w_{k:k+1}^X(\p)w_{k:k+1}^Y(\p);
\end{align}
\item 
\label{w:w4b}
For  any disjoint polymers $X,Y\in \mathcal{P}_k$, $0\leq k\leq N$, and 
$\p\in \mathcal{V}_N$,
\begin{align}\label{eq:w4b}
W_{k}^{X\cup Y}(\p)=W_{k}^X(\p)W_{k}^Y(\p);
\end{align} 
\item 
\label{w:w5}
For   $h \ge h_0(L)$,  disjoint polymers $X, Y\in \mathcal{P}_k$, $0\leq k\leq N$, and $\p\in \mathcal{V}_N$,
\begin{align}\label{eq:w5}
w_k^{X\cup Y}(\p)\geq w_k^X(\p)W_k^Y(\p);
\end{align}
\item 
\label{w:w6}
For  $h \ge h_0(L)$, $X\in \mathcal{P}_k$ and $U=\pi(X)\in \mathcal{P}_{k+1}$, $0\leq k\leq N-1$, and $\p\in \mathcal{V}_N$,
\begin{align}\label{eq:w6}
w_{k+1}^U(\p)\geq w_{k:k+1}^X(\p)\left(W_k^{U^+}(\p)\right)^2;
\end{align}
\item 
\label{w:w9}
For any $h \ge h_0(L)$ and  all $0\leq k\leq N-1$, $X\in \mathcal{P}_{k+1}$ and $\p\in \mathcal{V}_N$,
\begin{align}\label{eq:w9}
e^{\frac{|\p|_{k+1,X}^2}{2}}w_{k:k+1}^X(\p)\leq w_{k+1}^X(\p);
\end{align}
\item 
\label{w:w7}
Let $\rho = (1+\weightzeta)^{1/3}-1$.
There is a constant $A_\mathcal{P}=A_\mathcal{P}( L)$
such that for $\boldsymbol{q}\in B_\kappa$, $\overline{\rho}\in [0, \rho]$, $X\in \mathcal{P}_k$, $0\leq k\leq N$, and $\p\in \mathcal{V}_N$,
\begin{align}\label{eq:w7}
\left(\int_{\mathcal{X}_N} \left(w_k^X(\p+\xi)\right)^{1+\overline{\rho}}\,\mu_{k+1}^{(\boldsymbol{q})}(\d \xi)\right)^{\frac{1}{1+\overline{\rho}}}\leq \left(\frac{A_{\mathcal{P}}}{2}\right)^{|X|_k} w_{k:k+1}^X(\p);
\end{align}
\item 
\label{w:w8}
There is a constant $A_\mathcal{B}$ independent of $L$
such that for $\boldsymbol{q}\in B_\kappa$, $\overline{\rho}= [0, \rho]$ (with $\rho=(1+\weightzeta)^{1/3}-1$ as before), 
$X\in \mathcal{P}_k$, $0\leq k\leq N$, and $\p\in \mathcal{V}_N$,
\begin{align}\label{eq:w8}
\left(\int_{\mathcal{X}_N} \left(w_k^B(\p+\xi)\right)^{1+\overline{\rho}}\,\mu_{k+1}^{(\boldsymbol{q})}(\d \xi)\right)^{\frac{1}{1+\overline{\rho}}}\leq 
\frac{A_{\mathcal{B}}}{2} w_{k:k+1}^B(\p).
\end{align}
\end{enumerate}	
\end{theorem}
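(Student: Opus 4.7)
The plan is to reformulate each of the eleven claims as an operator inequality for the symmetric positive semidefinite quadratic forms $\boldsymbol{A}_k^X$, $\boldsymbol{A}_{k:k+1}^X$, and $\boldsymbol{G}_k^X$, and then prove those inequalities by induction on $k$, using the recursive definition \eqref{eq:defAk}, the Gaussian identity \eqref{eq:GaussianCalculus}, the finite range decomposition bounds from Theorem~\ref{thm:frd}, and a discrete Sobolev embedding consistent with the choice $M \geq \pphi + \lfloor d/2 \rfloor + 1$. A preliminary lemma must ensure that the operators are well-defined: since $\boldsymbol{A}_k^X$ need not be invertible on $\Xcal_N$, one works on the orthogonal complement of its kernel and verifies the upper bound $\boldsymbol{A}_k^X < (1+\weightzeta)^{-1}(\mathscr{C}_{k+1})^{-1}$, which guarantees that the Schur complement defining $\boldsymbol{A}_{k:k+1}^X$ is a positive form. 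This requirement fixes the magnitude of $\delta(L)$ and of $\kappa(L)$, the latter controlling how much $\mathscr{C}_{k+1}^{(\boldsymbol{q})}$ can differ from $\mathscr{C}_{k+1}$.

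First I would dispatch monotonicity \eqref{eq:w1} and the factorizations \eqref{eq:w3}, \eqref{eq:w4}, \eqref{eq:w4b}. Monotonicity follows by induction because $\chi_Y \leq \chi_X$ for $Y\subset X$ gives $\boldsymbol{A}_0^Y \leq \boldsymbol{A}_0^X$, and the recursion preserves the order since both the Schur complement and the addition of $\delta_{k+1}\boldsymbol{M}_{k+1}^X$ are monotone in $\boldsymbol{A}_k^X$. Additivity $\boldsymbol{A}_0^{X\cup Y} = \boldsymbol{A}_0^X + \boldsymbol{A}_0^Y$ for strictly disjoint polymers is manifest from the definition, and the inductive step is a block-diagonal Schur complement argument: whenever the separation of the supports of $\boldsymbol{A}_k^X$ and $\boldsymbol{A}_k^Y$ exceeds the range $\tfrac{1}{2}L^{k+1}$ of $\mathscr{C}_{k+1}$, one gets $((\boldsymbol{A}_k^X + \boldsymbol{A}_k^Y)^{-1} - (1+\weightzeta)\mathscr{C}_{k+1})^{-1} = \boldsymbol{A}_{k:k+1}^X + \boldsymbol{A}_{k:k+1}^Y$. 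The weaker separation $\tfrac{3}{4}L^{k+1}$ in \eqref{eq:w4} versus strict disjointness in \eqref{eq:w3} reflects the fact that $w_{k:k+1}^X$ carries an additional $\mathscr{C}_{k+1}$-convolution worth of correlation. The condition $L \geq 2^{d+3}+16R$ guarantees that these separations are inherited by the $\ast$-enlargements used in the recursion \eqref{eq:defAk}, so that the inductive hypothesis propagates. The upper bound \eqref{eq:w2} is proved by the same induction, with the running estimate $\boldsymbol{A}_k^X \leq \lambda^{-1}\Malt_k$; the geometric decay $\delta_j = 4^{-j}\delta$ ensures that the loss per step is summable.

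The Gaussian integration bounds \eqref{eq:w7} and \eqref{eq:w8} come directly from \eqref{eq:GaussianCalculus} applied with $A = (1+\overline\rho)\boldsymbol{A}_k^X$ and $C = \mathscr{C}_{k+1}^{(\boldsymbol{q})}$: for $\overline\rho \leq \rho = (1+\weightzeta)^{1/3}-1$ and $|\boldsymbol{q}|\leq\kappa(L)$, the admissibility condition $A < C^{-1}$ is met and the integral evaluates to a determinantal prefactor times the exponential of $\tfrac{1}{2}((\boldsymbol{A}_k^X)^{-1} - (1+\weightzeta)\mathscr{C}_{k+1}^{(\boldsymbol{q})})^{-1}$, which, after the permitted slack in $\weightzeta$, is dominated by $\bigl(w_{k:k+1}^X(\p)\bigr)^{1+\overline\rho}$. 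The determinantal prefactor is estimated per block using \eqref{finalfrdupper}, yielding $A_\mathcal{P}(L)^{|X|_k}$; the single-block variant \eqref{eq:w8} uses the $L$-independent trace-type estimate for $\mathscr{C}_{k+1}$ restricted to a block, available under the improved finite range decomposition of \cite{Buc16}. The comparison estimates \eqref{eq:w5}, \eqref{eq:w6}, and \eqref{eq:w9} reduce to operator inequalities of the form $\boldsymbol{A}_k^{X \cup Y} \geq \boldsymbol{A}_k^X + 2\boldsymbol{G}_k^Y$ and, for the field-norm bound, $\delta_k \boldsymbol{M}_k^X \geq C \, (\text{form computing } |\cdot|_{k+1,X}^2)$. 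The second inequality uses a discrete Sobolev embedding: because $M \geq \pphi + \lfloor d/2\rfloor + 1$, the derivatives $\nabla^\alpha\p$ with $|\alpha|\leq\pphi$ are controlled in $\ell^\infty(X^\ast)$ by $\ell^2$ norms of $\nabla^{\alpha+\beta}\p$ with $|\beta|\leq \lfloor d/2\rfloor+1$, which is precisely what the operator $\boldsymbol{M}_k^X$ supplies. The requirement $h \geq h_0(L) = C\delta(L)^{-1/2}$ arises because the strong weight $\boldsymbol{G}_k^X$ carries an $h_k^{-2}$ prefactor that must be dominated by $\delta_k \boldsymbol{M}_k^X$.

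The main obstacle will be establishing uniformity in $k$ and $N$ of the constants $\lambda$, $\delta(L)$, $\kappa(L)$, and $A_\mathcal{P}(L)$: each reblocking step strictly enlarges $\boldsymbol{A}_k^X$, and the determinantal prefactor in \eqref{eq:GaussianCalculus} could in principle accumulate a divergent product over the $N$ scales. The technical heart is a quantitative bound on $(\mathds{1} - \mathscr{C}_{k+1}^{1/2}\boldsymbol{A}_k^X \mathscr{C}_{k+1}^{1/2})^{-1}$ in the appropriate operator norm, obtained by combining the decay \eqref{finalfrdupper} of the Fourier symbol of $\mathscr{C}_{k+1}$ with the geometric sequence $\delta_j = 4^{-j}\delta$; the summability $\sum_j \delta_j < \infty$ keeps the inductive hypothesis $\boldsymbol{A}_k^X \leq \lambda^{-1}\Malt_k$ closed uniformly in $k$ and $N$. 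The detailed proof of each of the eleven items will be carried out in the sequence of lemmas in the following sections, following the outline above.
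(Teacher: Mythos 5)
Your overall architecture is the same as the paper's: recast all items as operator inequalities for $\boldsymbol{A}_k^X$, $\boldsymbol{A}_{k:k+1}^X$, $\boldsymbol{G}_k^X$, prove them by induction on $k$ via the recursion \eqref{eq:defAk}, use the Gaussian identity \eqref{eq:GaussianCalculus} together with the finite range (independence) property for the factorisations and for \eqref{eq:w7}--\eqref{eq:w8}, and use a discrete Sobolev embedding (this is where $M\ge \pphi+\lfloor d/2\rfloor+1$ and $h\ge c\,\delta(L)^{-1/2}$ enter) for \eqref{eq:w5}, \eqref{eq:w6}, \eqref{eq:w9}. The genuine gap is in the inductive invariant. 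You propose to close the induction with the running estimate $\boldsymbol{A}_k^X\le \lambda^{-1}\Malt_k$ (i.e.\ the statement of \eqref{eq:w2}), arguing that summability of $\delta_j=4^{-j}\delta$ keeps it closed. That bound is too weak: the step $\boldsymbol{A}_k^X\mapsto\boldsymbol{A}_{k:k+1}^X=\bigl((\boldsymbol{A}_k^X)^{-1}-(1+\weightzeta)\mathscr{C}_{k+1}\bigr)^{-1}$ requires $(\boldsymbol{A}_k^X)^{-1}>(1+\weightzeta)\mathscr{C}_{k+1}$, and the weak invariant only yields $(\boldsymbol{A}_k^X)^{-1}\ge\lambda\Malt_k^{-1}$. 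On low Fourier modes $|p|\lesssim L^{-k-1}$ one has $\lambda\widehat{\Malt}_k(p)^{-1}\approx\lambda|p|^{-2}$ with $\lambda\le\weightzeta\omega\ll1$, whereas $\widehat{\mathcal{C}}_{k+1}(p)$ can be a fixed fraction of the full covariance $\sim\omega^{-1}|p|^{-2}$ (cf.\ \eqref{Ahatestimate}, \eqref{finalfrdlower}); so positivity, hence well-definedness of the Schur complement, cannot be deduced, let alone an iterable bound of the same shape. The paper carries the much stronger invariant \eqref{eq:akbound}, $\boldsymbol{A}_k^X\le\bigl(\lambda\Malt_k^{-1}+(1+\weightzeta_k)\sum_{j=k+1}^{N+1}\mathscr{C}_j\bigr)^{-1}$, which stores the entire not-yet-integrated covariance with a reservoir factor $\weightzeta_k=2\weightzeta-\sum_{i\le k}\mu\Xi_{\mathrm{max}}\delta_i\ge\weightzeta$ that is spent slowly (see \eqref{eq:definition_hj_deltaj}). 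Subtracting $(1+\weightzeta)\mathscr{C}_{k+1}$ then manifestly preserves positivity, and the whole difficulty is concentrated in the main technical matrix estimate, Lemma~\ref{le:opineq}: adding $\delta_{k+1}\Malt_{k+1}$ can be absorbed by trading $\mu\Xi_{\mathrm{max}}\delta_{k+1}$ out of the reservoir while replacing $\Malt_k^{-1}$ by $\Malt_{k+1}^{-1}$, proved by a Fourier-mode analysis split at $|p|\approx L^{-k}$ using \eqref{finalfrdlower}--\eqref{finalfrdupper}. This lemma, and the bookkeeping with $\weightzeta_k$, are exactly where $\lambda$, $\mu$ and $\delta(L)$ are fixed; they are missing from your plan, and without them the induction behind (i), (ii) and the admissibility condition underlying (x)--(xi) does not close.

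Two secondary points. First, the $L$-independence of $A_{\mathcal{B}}$ in \eqref{eq:w8} is not obtained just from the improved decomposition bounds: for a single block the paper replaces the ($L$-dependent) cutoff-function trace estimate by averaging $\boldsymbol{A}_k^{\tau_a(B)}$ over $\sim 2^{-d}L^{(N-k)d}$ well-separated translates of $B$ and using additivity, and your sketch should make that mechanism explicit. Second, your claimed inequality for (vi), $\boldsymbol{A}_k^{X\cup Y}\ge\boldsymbol{A}_k^X+2\boldsymbol{G}_k^Y$, should read as in the paper: $\boldsymbol{A}_k^{X\cup Y}\ge\boldsymbol{A}_k^X+\boldsymbol{G}_k^Y$ for \eqref{eq:w5} and $\boldsymbol{A}_{k+1}^U\ge\boldsymbol{A}_{k:k+1}^X+2\boldsymbol{G}_k^{U^+}$ for \eqref{eq:w6}, the factor $2$ and the passage between scales being what forces $\delta h^2\ge 8$ in the choice of $h_0(L)$.
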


For the contraction estimate in Theorem~\ref{prop:contractivity} it is crucial that the constant $\AB$
in the integration estimate \eqref{eq:w8} for a single block  does not depend on $L$.
To show this we  use that the smoothness estimates in the finite range decomposition have the optimal 
dependence on $L$, see \eqref{eq:discreteboundsfinal} and
\eqref{eq:trace_bound_single_block} below. The optimal $L$ dependence is an important improvement of the finite range decomposition 
in \cite{Buc18}    over the one in \cite{AKM13}. This improvement is related to the fact the the decomposition in  \cite{Buc18}  is based on 
Bauerschmidt's decomposition \cite{Bau13},   rather than on \cite{BT06}.

\begin{proof}
The theorem  follows from a sequence of lemmas in the following sections.
Lemma~\ref{prop:W1} establishes basic properties of 
the operators $\boldsymbol{A}_k^X$ and $\boldsymbol{A}_{k:k+1}^X$ that imply
\ref{w:w1} and \ref{w:w2}.
Lemma~\ref{prop:W2} concerns factorisation properties of the 
operators $\boldsymbol{A}_k^X$ and $\boldsymbol{A}_{k:k+1}^X$ that allow us to conclude
\ref{w:w3}-\ref{w:w6}. 
Lemma~\ref{prop:W3} gives a bound on a particular  determinant that implies \ref{w:w7} and \ref{w:w8}.
Finally,  Lemma~ \ref{prop:W4} bounds the field norm $|\cdot|_{k,X}$ in terms of the weights. This easily yields property \ref{w:w9}.
\end{proof}

\section{The main technical matrix estimate}

In this section we prove a crucial technical estimate which shows that the 
iterative procedure   \eqref{eq:defAk} introducing the operators  $A_{k} \rightarrow A_{k:k+1} \rightarrow A_{k+1}$
is well-defined.

We first recall some standard facts about monotone matrix  functions. 
We say that two Hermitian matrices $A$ and $B$ satisfy $A \le B$ if $(Ax,x) \le (Bx, x)$ for all $x$. 
We say that a map $f$ from a subset  $U$ of  the Hermitian  matrices to the Hermitian  matrices is 
matrix monotone
if $A \le B$ implies $f(A) \le f(B)$ for all $A,B \in U$. 
\begin{lemma} \label{le:matrix_monotone}\hfill

\begin{enumerate}[label=(\roman*),leftmargin=0.7cm]
\item The map $A \mapsto -A^{-1}$ is matrix-monotone on the set of positive definite Hermitian matrices.
\item Let $C$ be Hermitian and positive definite. For positive definite Hermitian  matrices $A$ with $A < C^{-1}$
define
\begin{equation} \label{eq:algebraic_integration}
f(A) := (A^{-1} - C)^{-1}.
\end{equation}
Then $f$ is matrix monotone.
\item\label{it:matrix_monotone3} If we extend $f$ to Hermitian  matrices $A$ with $0 \le A < C^{-1}$  by
\begin{equation}  \label{eq:algebraic_integration2}
f(A) =  \begin{cases} ((A_{\ker A^\perp})^{-1} - (P_{\ker A^\perp} C P_{\ker A^\perp}))^{-1}  & \hbox{on $\ker A^\perp$,}\\
0 & \hbox{on $\ker A$.}
\end{cases}
\end{equation}
then the extended function is still matrix monotone. 
\item\label{it:matrix_monotone4} If $0 \le A < C^{-1}$ then $A^{1/2} C A^{1/2} <  \1$ and the extended function $f$ satisfies
\begin{equation}  \label{eq:algebraic_integration3}
f(A) = A^{1/2} (\1 - A^{1/2} C A^{1/2})^{-1} A^{1/2}.
\end{equation}
 {
There is the following absolutely convergent series representation for $f$ and $0\le A<C^{-1}$
\begin{align}\label{eq:algebraic_integration4}
f(A)=\sum_{i=0}^\infty A(CA)^i.
\end{align}
} 
\end{enumerate}
\end{lemma}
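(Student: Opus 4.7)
I would prove the four parts in the logical order (i), (ii), (iv), (iii), since the closed form in (iv) is the cleanest tool for handling the extension in (iii).

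Part (i) is standard: $0 < A \le B$ is equivalent to $B^{-1/2}AB^{-1/2} \le \1$; inverting this operator inequality yields $B^{1/2}A^{-1}B^{1/2} \ge \1$, whence $A^{-1} \ge B^{-1}$. Part (ii) is then a double application of (i). Indeed, if $0 < A \le B < C^{-1}$, then $A^{-1} - C \ge B^{-1} - C > 0$, and a second use of (i) gives $(A^{-1}-C)^{-1} \le (B^{-1}-C)^{-1}$.

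For (iv), I first verify the closed form on positive definite $A$ via the algebraic identity $A^{1/2}(A^{-1} - C)A^{1/2} = \1 - A^{1/2}CA^{1/2}$; inverting gives $f(A) = A^{1/2}(\1 - A^{1/2}CA^{1/2})^{-1}A^{1/2}$. To see that this formula extends to all $A \ge 0$ with $A < C^{-1}$, I note that $C^{-1} - A > 0$ is equivalent to $\1 - C^{1/2}AC^{1/2} > 0$, and since $C^{1/2}AC^{1/2}$ and $A^{1/2}CA^{1/2}$ share the same nonzero spectrum (the standard $MN$--$NM$ trick with $M = C^{1/2}A^{1/2}$, $N = A^{1/2}C^{1/2}$), the second condition is equivalent to $A^{1/2}CA^{1/2} < \1$. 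Consequently the right-hand side of the closed form is well defined, and the series representation follows from the Neumann expansion
\[
(\1 - A^{1/2}CA^{1/2})^{-1} = \sum_{i=0}^\infty (A^{1/2}CA^{1/2})^i,
\]
together with the identity $A^{1/2}(A^{1/2}CA^{1/2})^i A^{1/2} = A(CA)^i$.

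Finally I would establish (iii) in two steps. First, I verify that the piecewise definition in (iii) agrees with the closed form in (iv): both functionals vanish on $\ker A$ (since $A^{1/2}$ does), while the restriction to $V := (\ker A)^\perp$ reduces the closed form to the positive-definite case applied to the compressed data $A|_V > 0$ and $P_V C P_V|_V$. The bookkeeping here rests on the fact that $\1 - A^{1/2}CA^{1/2}$ preserves both $V$ and $\ker A$, which follows because $A^{1/2}$ maps into $V$ and annihilates $\ker A$, together with $A^{1/2}|_V = (A|_V)^{1/2}$. Once the two formulas are identified, matrix monotonicity of the extension follows by perturbation: for $A \le B$ in the extended domain, the strictly positive perturbations $A_\varepsilon := A + \varepsilon \1$ and $B_\varepsilon := B + \varepsilon \1$ satisfy the hypotheses of (ii) for all small $\varepsilon > 0$, so $f(A_\varepsilon) \le f(B_\varepsilon)$, and continuity of the closed form in (iv) allows us to pass to the limit $\varepsilon \downarrow 0$. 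The main obstacle I anticipate is precisely this bookkeeping in identifying the two formulas on $\ker A$ and on $V$; the rest of the argument is routine manipulation of operator inequalities.
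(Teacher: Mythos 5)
Your proposal is correct, and in substance it follows the same basic strategy as the paper: part (i) by congruence with $B^{-1/2}$ and inversion, part (ii) by applying (i) twice, and the degenerate case handled by regularising with $A+\varepsilon\1$ and passing to the limit. The differences are organisational and in one sub-step. You prove (iv) before (iii) and make the closed form $A^{1/2}(\1-A^{1/2}CA^{1/2})^{-1}A^{1/2}$ the primary object: you obtain the strict bound $A^{1/2}CA^{1/2}<\1$ directly from $C^{1/2}AC^{1/2}<\1$ via the equality of the nonzero spectra of $MN$ and $NM$, whereas the paper gets it from the perturbation bound $A+\varepsilon\1\le(1-\delta)C^{-1}$ and a limit. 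You then verify explicitly that the block formula \eqref{eq:algebraic_integration2} coincides with the closed form (checking that $\1-A^{1/2}CA^{1/2}$ preserves $\ker A$ and its orthogonal complement, and that the compression reduces to the positive definite case), and deduce monotonicity of the extension from (ii) by the same $\varepsilon\downarrow 0$ argument; the paper instead asserts without detail that the extension \eqref{eq:algebraic_integration2} equals $\lim_{\varepsilon\downarrow 0}f(A+\varepsilon\1)$, which is exactly the bookkeeping you carry out. Your route is slightly longer but more self-contained on the identification step; the paper's is shorter at the price of leaving that verification to the reader. One small point to keep in mind when writing it up: the continuity claim at the end of your (iii) needs the invertibility of $\1-A^{1/2}CA^{1/2}$ at $\varepsilon=0$, which is precisely the first assertion of (iv), so the reordering (iv) before (iii) is not optional in your argument — state that dependence explicitly.
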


\begin{proof} The assertions are classical. We include a proof for the convenience of the reader.

The first assertion follows from L\"owner's theorem (\cite{Low34}, see also \cite{Han13}) since
the imaginary part of 
the map $\C \setminus \{0\} \ni z \mapsto - z^{-1} = \frac{-\bar z}{|z|^2}$ 
is non-negative in the upper half-plane.
Alternatively,  it can be proved elementary as follows. 
First, monotonicity is clear for $B = \1$ since for a positive definite symmetric matrix $A$ 
the condition $A \le \1$ is equivalent to $\mathrm{spec}(A) \subset (0,1]$ while the condition 
$A \ge 1$ is equivalent to $\mathrm{spec}(A) \subset [1, \infty)$. 
To prove the result for a general $B$ assume $A \le B$ and note that this implies
 $\overline{F}^T A F \le \overline{F}^T B F$ for all matrices $F$. 
 Taking $F = B^{-1/2}$ we get $B^{-1/2} A B^{-1/2} \le \1$ and thus
 $B^{1/2} A^{-1} B^{1/2} \ge \1$ which implies that $A^{-1} \ge B^{-1/2} \1 B^{-1/2} = B^{-1}$. 
 
 The second assertion follows by applying the monotonicity of the inversion map twice.
 
 The third assertion follows since the right hand side is the limit $\lim_{\varepsilon \downarrow 0} f(A + \varepsilon \1)$.
 
 The fourth assertion is clear for $0 < A < C^{-1}$.  Fix $A$ with $0 \le A < C^{-1}$. 
 Then there exist  $\delta > 0$ and $\varepsilon_0 > 0$
 such that for all $\varepsilon \in (0, \varepsilon_0)$ we have $A_\varepsilon := A + \varepsilon \1 \le (1- \delta) C^{-1}$. Thus
 $ C \le (1-\delta) A_\varepsilon^{-1}$ and hence $A_\varepsilon^{1/2} C A_\varepsilon^{1/2} \le (1-\delta) \1$.  
 Passing to the limit $\varepsilon \downarrow 0$ we get $A^{1/2} C A^{1/2} \le (1 - \delta) \1$ and the validity
 of   \eqref{eq:algebraic_integration3}.
  {
  Equation \eqref{eq:algebraic_integration4} follows
  from \eqref{eq:algebraic_integration3} by expanding the the Neumann series. 
} 
\end{proof}

We now  show the crucial technical lemma that allows us to find suitable bounds for the
operators $\boldsymbol{A}_k$ . Basically this lemma shows that for sufficiently small $\delta$ the $\boldsymbol{M}_k^X$ terms
are just a small perturbation of the operators.
\begin{lemma}\label{le:opineq}
Suppose that $\Qscr$, $M$, $n$, $\tilde{n}$, and $\mathscr{C}_k = \mathscr{C}_k^{(0)}$ satisfy the assumptions of Theorem \ref{th:weights_final}.
Then the following holds.	
For all  $\lambda\in (0,1/4)$ and $L\geq 3$ odd, there is a constant $\mu(\lambda, L) \geq 1$ such that 
for any $\varepsilon\in (0,1)$, $0\leq \delta<\frac{1+\varepsilon}{\mu}$ and for all $0\leq k \leq N-1$, the bound
\begin{align}\label{eq:basicweightineq}
\left(\lambda \Malt_{k}^{-1}+(1+\varepsilon)\sum_{j=k+2}^{N+1}            
\mathscr C_j\right)^{-1}+\delta \Malt_{k+1}\leq                                     
\left(\lambda \Malt_{k+1}^{-1}+(1+\varepsilon-\mu\delta)\sum_{j=k+2}^{N+1}\mathscr C_j\right)^{-1}
\end{align}
 holds in the sense of Hermitian  operators on $\mathcal{X}_N$.
\end{lemma}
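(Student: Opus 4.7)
The plan is to reduce the operator inequality to a pointwise scalar inequality in Fourier space, and then to handle the small-momentum regime, where the dangerous terms become comparable, with a separate argument using the lower bounds of the finite range decomposition. All operators in \eqref{eq:basicweightineq} are translation invariant, hence by the Plancherel identity \eqref{eq:Plancherel} the operator inequality reduces, on $\Xcal_N$, to a matrix inequality at each non-zero $p \in \widehat{T}_N$. From \eqref{eq:defofMk} one sees that $\widehat{\Malt_k}(p) = \sigma_k(p)\,\1_m$ with $\sigma_k(p) := \sum_{1\le |\alpha|\le M} L^{2k(|\alpha|-1)} |q(p)|^{2|\alpha|}$, so the $\Malt$-terms act as positive scalars on $\C^m$. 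Diagonalizing the positive semi-definite Hermitian matrix $\widehat C(p) := \sum_{j=k+2}^{N+1}\widehat{\mathcal C_j}(p)$, the inequality decouples into eigenspaces, and it suffices to prove, for every eigenvalue $c\ge 0$ of $\widehat C(p)$,
$$
\frac{1}{u + (1+\varepsilon)c} + \delta\,\sigma_{k+1}(p) \;\le\; \frac{1}{u' + (1+\varepsilon-\mu\delta)c},
$$
where $u := \lambda/\sigma_k(p)$ and $u' := \lambda/\sigma_{k+1}(p)$ (with $u\ge u'$ since $\sigma_{k+1}\ge\sigma_k$ termwise).

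Clearing denominators and using $\varepsilon\le 1$ together with the hypothesis $\mu\delta\le 1+\varepsilon$, it is enough to establish the cruder bound
$$
(u-u') + \mu\delta\,c \;\ge\; 8\,\delta\,\sigma_{k+1}(p)\bigl(u^2+c^2\bigr).
$$
Two quantitative ingredients will handle most of this. First, a direct termwise comparison in \eqref{eq:defofMk} yields $\sigma_{k+1}(p)/\sigma_k(p)\le L^{2(M-1)}$ for every $p$. Second, combining $c\le \widehat{\mathcal C}(p)\le c_1|p|^{-2}$ from \eqref{Ahatestimate} with the estimate $\sigma_{k+1}(p)\lesssim |q(p)|^2$ for $|p|\le L^{-(k+1)}$ (dominant $|\alpha|=1$ term in \eqref{eq:defofMk}) and, in the complementary range, with the rapid decay of $\widehat{\mathcal C_j}(p)$ for $j\ge k+2$ provided by \eqref{finalfrdupper}, one obtains a uniform bound $\sigma_{k+1}(p)\,c\le C_0(L)$. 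Choosing $\mu\ge 8C_0(L)$ then gives $\mu\delta c \ge 8\delta\sigma_{k+1}(p)\,c^2$ at once, and the remaining $c$-independent task $(u-u')\ge 8\delta\sigma_{k+1}(p)u^2$ simplifies to $(\sigma_{k+1}(p)-\sigma_k(p))\sigma_k(p) \ge 8\delta\lambda\,\sigma_{k+1}(p)^2$.

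The hard part will be this last inequality in the regime of very small momenta. There $\sigma_{k+1}-\sigma_k \sim L^{2k}(L^2-1)|q(p)|^4$ (the leading $|\alpha|=2$ contribution to \eqref{eq:defofMk}) while $\sigma_k,\sigma_{k+1}\sim |q(p)|^2$, so the ratio $(\sigma_{k+1}-\sigma_k)\sigma_k/\sigma_{k+1}^2\sim L^{2k}|q(p)|^2$ can be arbitrarily small on the torus and the bound fails with $c$ set to zero. I would therefore not separate $c=0$ from the $c$-dependent part, but instead use the lower bound \eqref{finalfrdlower} applied at scale $k+2$: for $|p|\le L^{-(k+2)}$ one has $c \ge c_3(L)\, L^{2(k+1)}$, which combined with $\sigma_{k+1}(p)\lesssim |q(p)|^2$ forces the $\mu\delta c$ term on the left to dominate $8\delta\sigma_{k+1}u^2$ once $\mu$ is chosen sufficiently large in terms of $\lambda$ and $L$. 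A three-range analysis according to $|p|\le L^{-(k+2)}$, $L^{-(k+2)}<|p|\le L^{-(k+1)}$ (where the explicit scaling of $\sigma_{k+1}-\sigma_k$ and intermediate lower bounds on $c$ suffice), and $|p|> L^{-(k+1)}$ (where $\sigma_{k+1}u^2 = \lambda^2\sigma_{k+1}/\sigma_k^2 \lesssim \lambda^2/\sigma_k$ is already small) then yields the scalar inequality uniformly in $p$ and completes the proof with $\mu=\mu(\lambda,L)$ chosen appropriately.
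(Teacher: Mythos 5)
Your overall route---reduce to Fourier modes, use that $\widehat{\Malt}_k(p)$ is a multiple of the identity so all matrices can be simultaneously diagonalised, then split the momentum range, controlling large momenta via the upper bounds \eqref{finalfrdupper} (with $n\ge 2M$) and small momenta via the lower bounds \eqref{finalfrdlower}---is essentially the paper's, packaged differently: you clear denominators into the single sufficient inequality $(u-u')+\mu\delta c\ \ge\ 8\delta\,\sigma_{k+1}(p)\,(u^2+c^2)$, whereas the paper bounds the difference of the two inverses directly. That reduction is valid, the uniform bound $\sigma_{k+1}(p)\,c\le C_0(L)$ does hold, and your treatment of the intermediate and large momentum ranges (using $\sigma_{k+1}/\sigma_k\le L^{2(M-1)}$ and, for $|p|\gtrsim L^{-k}$, the comparison $4\widehat{\Malt}_k\le\widehat{\Malt}_{k+1}$ implicit in your estimates) can be made rigorous with an $L$-dependent $\mu$, which the statement permits.

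The genuine gap is in the deep infrared. You apply \eqref{finalfrdlower} only at the fixed scale $k+2$, getting $c\ge c_3(L)\,L^{2(k+1)}$ for $|p|\le L^{-(k+2)}$, and then want $\mu\delta c$ to dominate $8\delta\,\sigma_{k+1}u^2\sim C\,\delta\lambda^2|p|^{-2}$. But on the torus $|p|$ ranges down to order $L^{-N}$ with $N$ arbitrary, so the quantity to be dominated grows like $|p|^{-2}$, unbounded in $N$ for fixed $k$, while your lower bound on $c$ is a fixed constant; no choice of $\mu=\mu(\lambda,L)$ uniform in $k$ and $N$ closes this step as written. What is needed---and what the paper establishes in \eqref{eq:specbounds}---is the scale-matched lower bound $c\ge\omega_1(L,\lambda)\,|p|^{-2}$ for all small $|p|$, obtained by applying \eqref{finalfrdlower} to the summand $\widehat{\mathcal{C}}_{j}$ with $L^{-j}\sim|p|$ (such a scale $j\ge k+2$ is always present in the sum $\sum_{j=k+2}^{N+1}$ in this regime, since nonzero torus momenta satisfy $|p|>L^{-N}$), together with the scale-$(k+2)$ bound for the finitely many scales just below $L^{-k}$. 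Replacing $c\gtrsim_L L^{2(k+1)}$ by $c\gtrsim_L|p|^{-2}$, your domination argument goes through with $\mu$ depending only on $\lambda$ and $L$, and the resulting proof is then parallel to the paper's two-regime argument.
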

\begin{remark}
The proof is quite technical and not very insightful. 
Therefore we first give a brief heuristic argument.  
All operators are diagonal in the Fourier space. 
Thus it is sufficient to show the bound for all Fourier modes 
$p\in \widehat{T}_N\setminus \{0\}$ of the kernels of the operators that actually are $m\times m$ matrices. 
Note that $\Malt_k$ acts diagonally with respect to the $m$ components and thus its Fourier modes are multiples of the identity.
We use  $\widehat{\Malt_k}(p)\in \R$ to  denote the coefficient of the Fourier mode and use
the embedding into $\R^{m\times m}$ when necessary.
Let $q(p)_j=e^{ip_j}-1$ and note that the Fourier multiplier of $\nabla$ is the vector $q(p)$ whose norm is of the order $ |p|$:
$|p|/2\leq|q(p)|\leq |p|$ for $p\in \widehat{T}_N$ (cf. \eqref{eq:def_qp}). 
Therefore we  can write 
\begin{align}\label{eq:Fourier_of_M_k}
\widehat{\Malt_k}(p)=\sum_{1\leq |\alpha|\leq M} L^{2k(|\alpha|-1)}|q(p)^{2\alpha}|
\end{align}
To shorten the notation we introduce the notation 
\begin{align}
\mathscr{C}_{k+1}^{N+1}=\sum_{j=k+1}^{N+1} \mathscr C_j.
\end{align}
	
There are two regimes $|p|\leq L^{-k}$ and $|p|\geq L^{-k}$ requiring different treatment.
	
Using \eqref{eq:Fourier_of_M_k},  for $|p|<L^{-k}$ we find that 
$\widehat{\Malt}_k(p)\approx |p|^2$.
Indeed, since, roughly speaking, $\widehat{\mathcal{C}}_j(p)\approx |p|^{-2}$ for $|p|\approx L^{-j}$,
we observe that  $|p|^{-2}\approx \widehat{{\mathcal C}}_{k+1}^{N+1}(p)$ for $|p|\leq L^{-k}$.
Hence, after factoring out the term $|p|^2$, we are left to show an inequality
of the type $\alpha^{-1}+\delta\leq (\alpha-\mu\delta)^{-1}$ for given real numbers $\alpha$ and $\delta$. 
This is true  for some large $\mu$ if $\alpha$ is uniformly bounded above and below and $\delta>0$ is bounded above.
	
For $|p|\geq L^{-k}$ the asymptotic behaviour is ${\widehat{\Malt}}_k(p)\approx |p|^{2M}L^{(2M-2)k}$
and $\widehat{\mathcal{C}}_k^{N+1}(p)\approx 0$.Then the bound is implied by $\widehat{\Malt}_k(p)\ll\widehat{\Malt}_{k+1}(p)$.
\end{remark}
\begin{proof}
Here, we implement rigorously the heuristics described above.
The first step is to compare the operators
${\Malt}_k$ and $\Malt_{k+1}$.
For $|p|\geq L^{-k}$ and $L\geq 8$ we observe using $|p|/2\leq |q(p)|\leq |p|$ that
\begin{align}
\begin{split}
4|q(p)|^2\leq 16|q(p)|^4L^{2k}&\leq \frac{32}{L^2}\sum_{|\alpha|=2} L^{2(k+1)(|\alpha|-1)}|q(p)^{2\alpha}|\\
&\leq \frac12\sum_{|\alpha|=2} L^{2(k+1)(|\alpha|-1)}|q(p)^{2\alpha}|\leq \frac12\widehat{\Malt}_{k+1}(p).
\end{split}
\end{align}
Hence,  for $|p|\geq L^{-k}$ and $L\geq 8$, we have
\begin{align}\label{eq:MkMk+1}
\begin{split}
4 \widehat{\Malt}_k(p)&=4\sum_{1\leq |\alpha|\leq M} L^{2k(|\alpha|-1)}|q(p)^{2\alpha}| \\
& \leq   4|q(p)|^{2}+\frac{4}{L^2}\sum_{2\leq |\alpha|\leq M} L^{2(k+1)(|\alpha|-1)}|q(p)|^{2|\alpha|} \leq \widehat{\Malt}_{k+1}(p).
\end{split}
\end{align}
We claim that there is a constant $k_0=k_0(\lambda)$ 
independent of $k$ and $N$ such that
\begin{align}\label{eq:sumCjMk} 
2\, \widehat{{\mathcal{C}}}_{k+2}^{N+1}(p)=2\sum_{k'=k+2}^{N+1} \widehat{\mathcal{C}}_{k'}(p)\leq \lambda \widehat{\Malt}_{k+1}(p)^{-1}                                                                                                                                  
\end{align}
for $|p|\geq L^{-k+k_0}$.
To prove this we observe that for $L^{-j-1}<|p|\leq L^{-j}$ and $j< k-k_0$ the sum on the left hand side is by \eqref{finalfrdupper}
dominated by a geometric series which implies
\begin{align}\begin{split}
\left|\sum_{k'=k+2}^{N+1} \widehat{\mathcal{C}}_{k'}(p)\right|&\leq C_1L^{2(d+\tilde{n})+1}L^{2j}L^{-(k+2-j)(d-1+n)} \\		
&=C_1L^{d+2\tilde{n}+2-n}L^{2j}L^{-(k+1-j)(d-1+n)}.
\end{split}
\end{align}
Note that 
\begin{align}\label{eq:qp_sum_bounded_power}
\sum_{|\alpha|=l} |q(p)^{2\alpha}|\leq |q(p)|^{2l}\leq |p|^{2l}
\end{align}	
	This implies that,  for $j\leq k$ and $|p|\leq L^{-j}$,  the right hand side of \eqref{eq:sumCjMk} satisfies
	\begin{align}\label{eq:boundMk+1}
		\widehat{\Malt}_{k+1}(p)\leq  \sum_{l=1}^M L^{2(l-1)(k+1)}L^{-2lj}\leq 2L^{-2j}L^{2(M-1)(k+1-j)}. 
	\end{align} 	
	Therefore we find that
	\begin{align}
		\lambda^{-1} \widehat{\Malt}_{k+1}(p)\left|\sum_{k'=k+2}^{N+1} \widehat{\mathcal{C}}_{k'}(p)\right|\leq 
		\frac{2C_1}{\lambda}L^{d+2\tilde{n}+2-n}L^{(k+1-j)(2M-1-d-n)}\leq \frac{1}{2}           
	\end{align}
	for $k-j>k_0$ with $k_0=k_0(\lambda)=\lceil\log_3(4C_1/\lambda)\rceil+d+2\tilde{n}+2-n$ where we used
	that $L\geq 3$ and  $n\geq 2M$  and thus $2M-1-d-n\leq -1$. 
	Note that the constant $C_1$ from \eqref{finalfrdupper} does not depend on $L$. 
	Hence, in particular, $k_0$ is independent of $L$.
	This proves \eqref{eq:sumCjMk}.
	The bounds  \eqref{eq:MkMk+1} and \eqref{eq:sumCjMk} thus, 
	for $|p|\geq L^{-k+k_0}$, $\delta<\frac{1}{4\lambda}$, and $\varepsilon<1$, jointly imply 
\begin{align}\begin{split}\label{eq:finalreglarge}
		\left(\lambda\widehat{\Malt}_k^{-1}(p)+ (1+\varepsilon)\widehat{{\mathcal{C}}}_{k+2}^{N+1}(p)\right)^{-1}+\delta \widehat{\Malt}_{k+1}(p)
		 & \leq                                                                                                   
		\frac{1}{\lambda}\widehat{\Malt}_k(p)+\delta\widehat{\Malt}_{k+1}(p)\\
		  & \leq \frac{1}{4\lambda}\widehat{\Malt}_{k+1}(p)+\frac{1}{4\lambda}\widehat{\Malt}_{k+1}(p) \\
		  & \leq \left(2\lambda\widehat{\Malt}^{-1}_{k+1}(p)\right)^{-1}                                     \\
		  & \leq\left(\lambda\widehat{\Malt}_{k+1}^{-1}(p)+(1+\varepsilon)\widehat{{\mathcal{C}}}_{k+2}^{N+1}(p)  
		\right)^{-1}.
		\end{split}
	\end{align}
	In the first and the last step we used the fact that the inversion of a Hermitian positive definite matrix 
	is a monotone operation (see Lemma~\ref{le:matrix_monotone}).
	This ends the proof for  $p\in\widehat{T}_N$ with $|p|\geq L^{-k+k_0}$.

	For $p\in \widehat{T}_N$ such that $|p|<L^{-k+k_0}$  we note that there are constants
	$\omega_1,\omega_2,\Omega_1,\Omega_2$ depending on $L$, $k_0(\lambda)$, and
	$\lambda$ such that
	\begin{align}\label{eq:specbounds}
		\omega_1|p|^{-2} & \leq\widehat{{{\mathcal{C}}}}_{k+2}^{N+1}(p)\leq 
		(1+\varepsilon)\widehat{{{\mathcal{C}}}}_{k+2}^{N+1}(p)\leq \Omega_1|p|^{-2} \text{ and } \\\label{eq:specbounds2}
		\omega_2|p|^{-2} & \leq  \lambda\widehat{{\Malt}}_{k+1}^{-1}(p)\leq                       
		\Omega_2|p|^{-2}.
	\end{align}
 Indeed, the upper bounds are trivial and even hold uniformly in  $k_0$ and $N$
	for all $p$ because
	$\widehat{\mathcal{C}}(p)\leq \Omega_1|p|^{-2}$ for some constant
	$\Omega_1$ by \eqref{Ahatestimate} and 
	$\boldsymbol M_{k+1}\geq -\Delta$. 
	The first lower bound follows from \eqref{finalfrdlower} which implies  the bound
	\begin{align}
		\widehat{{\mathcal{C}}}_{k+2}^{N+1}(p)\geq
		\widehat{{\mathcal{C}}}_j(p)\geq                        
		cL^{-2(d+\tilde{n})-1}L^{2j}\geq cL^{-2(d+\tilde{n})-3}|p|^{-2}
	\end{align}
	for $L^{-j-1}<|p|<L^{-j}$ and $j\geq k+2$.
	For $L^{-j-1}<|p|<L^{-j}$ and $k-k_0\leq j<k+2$ we use
	\begin{align}\begin{split}
	\widehat{{\mathcal{C}}}_{k+2}^{N+1}(p)&\geq	\widehat{\mathcal{C}}_{k+2}(p)\geq cL^{-2(d+\tilde{n})-1}L^{(k+2-j)(-d+1-n)}L^{2j}\\ 
	&		\geq cL^{-2(d+\tilde{n})-3}L^{(k_0+2)(-d+1-n)}|p|^{-2}. 
	\end{split}\end{align}
	Therefore the lower bound in \eqref{eq:specbounds} holds with 
	$\omega_1=cL^{-2(d+\tilde{n})-3+(k_0+2)(-d+1-n)}$.
	The second lower
	bound is a consequence of  \eqref{eq:qp_sum_bounded_power} which implies
	\begin{align}\begin{split}
		\widehat{\Malt}_{k+1}(p)  &\leq\sum_{l=1}^ML^{2(l-1)(k+1)}|p|^{2l}                    
		                      \\
		                      & \leq \sum_{l=1}^ML^{2(l-1)(k+1)}L^{2(l-1)(-k+k_0)} |p|^{2} 
		                       \leq                                                       
		2L^{2(M-1)(k_0+1)}|p|^2.
	\end{split}\end{align}
	if  $|p|<L^{-k+k_0}$. So the lower bound in \eqref{eq:specbounds2} holds with $\omega_2=\lambda(2L^{2(M-1)(k_0+1)})^{-1}$.

	Observe that $\widehat{{\mathcal{C}}}^{N+1}_{k+2}(p)$ and $\widehat{\Malt}_{k+1}(p)$ are Hermitian 
	and they commute because $\widehat{\Malt}_{k+1}(p)$ is a multiple of the identity. 
	Therefore we can work in a basis where both matrices are diagonal which  reduces
	the estimates to the scalar case $m=1$. 
	Then the bound we want to show  is essentially the estimate $(a-x)^{-1}-a^{-1}>x/a^2$ for $a>x>0$.
	In more detail, using \eqref{eq:specbounds} and the trivial estimate $\widehat{\Malt}_k(p)\leq \widehat{\Malt}_{k+1}(p)$, 
	we find for $|p|<L^{-k+k_0}$, $m=1$, and $0<\delta <(1+\varepsilon)/\mu$, 
	\begin{align}\begin{split}\label{eq:techbound}
		  & \left(\lambda\widehat{\Malt}_{k+1}^{-1}(p)+(1+\varepsilon-\mu\delta)\widehat{{\mathcal{C}}}_{k+2}^{N+1} 
		(p)\right)^{-1}-
		\left(\lambda\widehat{\Malt}_{k}^{-1}(p)+(1+\varepsilon)\widehat{{\mathcal{C}}}_{k+2}^{N+1}
		(p)\right)^{-1}\\
		  & \quad\geq                                                                                                  
		\left(\lambda\widehat{\Malt}_{k+1}^{-1}(p)+(1+\varepsilon-\mu\delta)\widehat{{\mathcal{C}}}_{k+2}^{N+1}
		(p)\right)^{-1}-
		\left(\lambda\widehat{\Malt}_{k+1}^{-1}(p)+(1+\varepsilon)\widehat{\mathcal{{C}}}_{k+2}^{N+1}
		(p)\right)^{-1}\\
		  & \quad\geq \frac{\mu\delta\widehat{{\mathcal{C}}}_{k+2}^{N+1}(p)}{                                             
		\left(\lambda\widehat{\Malt}_{k+1}^{-1}(p)+(1+\varepsilon-\mu\delta)\widehat{{\mathcal{C}}}_{k+2}^{N+1}
		(p)\right)
		\left(\lambda\widehat{\Malt}_{k+1}^{-1}(p)+(1+\varepsilon)\widehat{{\mathcal{C}}}_{k+2}^{N+1}
		(p)\right)}\\
		  & \quad\geq\frac{ \mu \delta \omega_1|p|^{-2}}{(\Omega_1+\Omega_2)^2|p|^{-4}}                              \\
		  & \quad\geq  \delta \widehat{\Malt}_{k+1}(p)                                                                
		\frac{\mu\omega_1\omega_2}{\lambda(\Omega_1+\Omega_2)^2}.
		\end{split}
	\end{align}
	Then for 
	\begin{align}
	\mu\geq \lambda\frac{(\Omega_1+\Omega_2)^2}{\omega_1\omega_2}
	\end{align}
	(where $\omega_1$, $\omega_2$, $\Omega_1$, and $\Omega_2$ where introduced in \eqref{eq:specbounds}
	and \eqref{eq:specbounds2}) the inequality \eqref{eq:basicweightineq} follows.
	For $m>1$ the claim follows by applying \eqref{eq:techbound} to each diagonal entry of the diagonalised matrices.
	The estimates \eqref{eq:finalreglarge} and \eqref{eq:techbound} imply the claim.
\end{proof}

\section{Basic properties of the operators \texorpdfstring{$\protect \headingAk$}{AkX} and \texorpdfstring{$\protect\headingAkkp$}{Ak:k+1X}}
Recall that $\weightzeta\in (0,\tfrac14)$ is a fixed parameter 
and $\Xi_{\mathrm{max}}$ was defined in \eqref{eq:thetamax}.
For given values of $h$, $\delta$, and $\mu$ (that will be specified later)
we define sequences
\begin{align}\label{eq:definition_hj_deltaj}
h_j=2^jh,\quad \delta_j=4^{-j}\delta, \quad  \weightzeta_j=2\weightzeta-\sum_{i=0}^j \mu\Xi_{\mathrm{max}}\delta_i.
\end{align} 

The following lemma proves the claims \ref{w:w1} and \ref{w:w2} from   Theorem~\ref{th:weights_final}.
\begin{lemma}\label{prop:W1}
Under the assumptions of Theorem~\ref{th:weights_final},
 for every $L\geq 2^{d+3}+16R$, there are constants $\lambda>0$, $\mu(L)>1$, 
	and $\delta(L)\in \left(0,\frac{\weightzeta}{2\mu\Xi_{\mathrm{max}}}\right)$
	 such that $\weightzeta_j\geq \weightzeta$ for all $j=0,\ldots ,N$, and 
	for all $0\leq k\leq N$:
\begin{enumerate}[label=(\roman*),leftmargin=0.7cm]
		\item \label{it:welldefined}
			The operators $\boldsymbol{A}_{k:k+1}^X$ and $\boldsymbol{A}_k^X$ are
		      well-defined, symmetric and non-negative operators on $\Xcal_N$ for any
		      $X\in\mathcal{P}_k$;
		\item \label{it:translation} Translation invariance: For any translation $\tau_a\p(x)=\p(x-a)$ with $a\in (L^k\mathbb{Z})^d/(L^N\mathbb{Z})^d$ 
		the equalities $(\p,\boldsymbol{A}_k^X\p)=(\tau_a\p,\boldsymbol{A}_k^{X+a}\tau_a\p)$ and 
		      $(\p,\boldsymbol{A}_{k:k+1}^X\p)=(\tau_a\p,\boldsymbol{A}_{k:k+1}^{X+a}\tau_a\p)$ hold;
		\item \label{it:locality} Locality: The operators $\boldsymbol{A}_k^X$ and $\boldsymbol{A}_{k:k+1}^X$ only depend on the values
		      of $\p$ in $X^{++}$ and the are shift invariant, i.e., they are measurable with respect to the $\sigma$-algebra generated by
		      $\nabla\p{\restriction_{\vec{E}(X^{++})}}$;
		\item \label{it:monotonicity} Monotonicity: For $Y\subset
		      X$ the inequalities $\boldsymbol{A}_k^Y\leq \boldsymbol{A}_k^X$ and $\boldsymbol{A}_{k:k+1}^Y\leq \boldsymbol{A}_{k:k+1}^X$  
		      hold in the sense of operators;
		\item \label{it:bound} Bounds: The weight functions are bounded from above as
		      follows 
		      \begin{align}\label{eq:akbound}
		      	\boldsymbol{A}_k^X       & \leq \left(\lambda \Malt_k^{-1}+ 	(1+\weightzeta_k)\sum_{j=k+1}^{N+1} \mathscr{C}_j \right)^{-1}   \\
		      	\label{eq:akk+1bound}
		      	\boldsymbol{A}_{k:k+1}^X & \leq \left(\lambda \Malt_{k}^{-1}+	(1+\weightzeta_k)\sum_{j=k+2}^{N+1} \mathscr{C}_j \right)^{-1}. 
		      \end{align}
	\end{enumerate}
\end{lemma}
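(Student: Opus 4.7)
The proof proceeds by induction on $k$, simultaneously establishing (i)--(v) for both $\boldsymbol{A}_k^X$ and $\boldsymbol{A}_{k:k+1}^X$. Before starting the induction I would fix the parameters in the order dictated by their dependencies: first choose $\lambda>0$ small enough that the base-case bound holds (see the next paragraph); then obtain $\mu=\mu(\lambda,L)$ from Lemma~\ref{le:opineq}; then pick $\delta(L)\in(0, 3\weightzeta/(4\mu\Xi_{\mathrm{max}}))$. With the geometric schedule $\delta_j=4^{-j}\delta$, this choice yields $\sum_{j=0}^\infty \mu\Xi_{\mathrm{max}}\delta_j=\tfrac43\mu\Xi_{\mathrm{max}}\delta\le\weightzeta$ uniformly in $N$, so that $\weightzeta_j\ge\weightzeta$ for every $j$.

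For the base case $k=0$, properties (i)--(iv) are immediate from the explicit definition of $\boldsymbol{A}_0^X$: symmetry and non-negativity because $\Qscr\ge0$ and $\boldsymbol{M}_0^X\ge0$; translation invariance because $\Qscr$, $\Malt$, and $X\mapsto\chi_X$ all transform covariantly; locality because $\Qscr(D\p(x))$ depends only on $\nabla\p$ near $x$ and $\chi_X$ is supported near $X$; monotonicity in $X$ because both summands are monotone. For the bound (v), I would use $\sum_{x\in X}\Qscr(D\p(x))\le(\p,\mathscr{A}\p)=(\p,\mathscr{C}^{-1}\p)$ (as $\Qscr\ge0$) and $\boldsymbol{M}_0^X\le\Xi_{\mathrm{max}}\Malt_0$, and then verify via diagonalisation in Fourier that
\begin{equation*}
(1-4\weightzeta)\mathscr{C}^{-1}+\delta_0\Xi_{\mathrm{max}}\Malt_0\le\bigl(\lambda\Malt_0^{-1}+(1+\weightzeta_0)\mathscr{C}\bigr)^{-1}.
\end{equation*}
This is a mode-by-mode scalar inequality of the same shape as the one proved in Lemma~\ref{le:opineq}: using the Fourier bounds $\omega|p|^2\le\widehat{\mathcal{A}}(p)\le\omega^{-1}|p|^2$, one splits the range of $|p|$ into a small-momentum and a large-momentum regime and establishes the inequality separately, which in turn fixes the admissible value of $\lambda$.

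For the inductive step, assume (i)--(v) hold at scale $k$. The bound (v) on $\boldsymbol{A}_k^X$ combined with $\weightzeta_k\ge\weightzeta$ gives $B_k^{-1}\ge(1+\weightzeta_k)\mathscr{C}_{k+1}\ge(1+\weightzeta)\mathscr{C}_{k+1}$, which is exactly the hypothesis under which the matrix-monotone extension of the inversion map in Lemma~\ref{le:matrix_monotone} applies, yielding well-definedness, symmetry, and non-negativity of $\boldsymbol{A}_{k:k+1}^X$. Translation invariance and monotonicity then pass through because $\mathscr{C}_{k+1}$ is translation invariant and the extended map is matrix monotone; locality follows from the Neumann series representation \eqref{eq:algebraic_integration4} combined with the finite-range property $\mathcal{C}_{k+1}(x)=-C_{k+1}$ for $|x|_\infty\ge L^{k+1}/2$, which ensures that applying $\mathscr{C}_{k+1}$ to any local operator broadens its support by at most $L^{k+1}/2$. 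Applying the monotonicity of $f$ to the inductive bound and dropping the non-negative term $(\weightzeta_k-\weightzeta)\mathscr{C}_{k+1}$ produces the claimed bound on $\boldsymbol{A}_{k:k+1}^X$. Finally, for $\boldsymbol{A}_{k+1}^X=\boldsymbol{A}_{k:k+1}^{X^\ast}+\delta_{k+1}\boldsymbol{M}_{k+1}^X$, properties (i)--(iv) are inherited from the two summands; the choice of $X^\ast$ rather than $X$ in the first summand is precisely what compensates for the spreading introduced by $\mathscr{C}_{k+1}$ and keeps locality controlled by $X^{++}$ at scale $k+1$. For (v), using $\boldsymbol{M}_{k+1}^X\le\Xi_{\mathrm{max}}\Malt_{k+1}$ reduces the claim to Lemma~\ref{le:opineq} with $\varepsilon=\weightzeta_k$ and $\delta=\delta_{k+1}\Xi_{\mathrm{max}}$, which delivers the bound at level $k+1$ with $\weightzeta_{k+1}=\weightzeta_k-\mu\Xi_{\mathrm{max}}\delta_{k+1}$, closing the induction.

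The main obstacle is ensuring uniformity in $N$: the parameter $\weightzeta_k$ must remain above $\weightzeta$ for every $k\le N$, irrespective of how large $N$ is, and this is precisely what the geometric schedule $\delta_j=4^{-j}\delta$ is designed to achieve, the lost constants at each scale forming a summable series independent of $N$. The heaviest technical input, Lemma~\ref{le:opineq}, is already established; the rest of the argument is careful bookkeeping of symmetry, locality and monotonicity through the iteration, together with the base-case Fourier computation that determines $\lambda$.
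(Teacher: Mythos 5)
Your overall architecture matches the paper's proof: induction on $k$, well-definedness and non-negativity of $\boldsymbol{A}_{k:k+1}^X$ from the matrix-monotone extension in Lemma~\ref{le:matrix_monotone}, the scale-$(k+1)$ bound from Lemma~\ref{le:opineq} with the telescoping constants $\weightzeta_k$, and the geometric schedule $\delta_j=4^{-j}\delta$ to keep $\weightzeta_j\ge\weightzeta$ uniformly in $N$; those steps are fine. The genuine gap is your justification of locality (item~\ref{it:locality}) for $\boldsymbol{A}_{k:k+1}^X$. You argue from the Neumann series \eqref{eq:algebraic_integration4} that applying $\mathscr{C}_{k+1}$ ``broadens the support by at most $L^{k+1}/2$''. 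If that were the operative mechanism the argument would fail: the series contains arbitrarily many factors of $\mathscr{C}_{k+1}$, so repeated broadening yields no fixed localisation set, and even a single broadening by $L^{k+1}/2$ leaves $X^{++}$, which exceeds $X^{+}$ only by $L^k$ (resp.\ $R$ for $k=0$). What actually saves locality is that every term of the series has the sandwich form $\boldsymbol{A}_k^X(\mathscr{C}_{k+1}\boldsymbol{A}_k^X)^i$, beginning and ending with the local operator: by symmetry and the inductive hypothesis, $\boldsymbol{A}_k^X\p$ is determined by $\nabla\p{\restriction_{\vec{E}(X^{++})}}$, and each intermediate quantity is a deterministic function of it, so no enlargement occurs at all; the finite-range property of $\mathscr{C}_{k+1}$ plays no role here (it is needed later, for the factorisation in Lemma~\ref{prop:W2}). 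The paper sidesteps this by reading locality off the Gaussian convolution identity $\int e^{\frac12(\boldsymbol{A}_k^X(\p+\xi),\p+\xi)}\,\mu_{(1+\weightzeta)\mathscr{C}_{k+1}}(\d\xi)=\mathrm{const}\cdot e^{\frac12(\boldsymbol{A}_{k:k+1}^X\p,\p)}$, whose left-hand side is manifestly measurable with respect to $\nabla\p{\restriction_{\vec{E}(X^{++})}}$.

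Two related points. First, your remark that the enlargement $X^\ast$ in $\boldsymbol{A}_{k+1}^X=\boldsymbol{A}_{k:k+1}^{X^{\ast}}+\delta_{k+1}\boldsymbol{M}_{k+1}^X$ ``compensates for the spreading introduced by $\mathscr{C}_{k+1}$'' has the logic backwards: $X^\ast$ is there for the reblocking and factorisation properties of the weights (Theorem~\ref{th:weights_final}~\ref{w:w6}), and it makes the locality check harder, not easier --- you still owe the geometric containment $(X^\ast)^{++}\subset X^{++}$, which holds because $L\ge 2^d+3R$ but is not verified in your write-up. Second, the parameter bookkeeping: the lemma requires $\delta\in\bigl(0,\tfrac{\weightzeta}{2\mu\Xi_{\mathrm{max}}}\bigr)$, so choose $\delta$ in that range rather than $\bigl(0,\tfrac{3\weightzeta}{4\mu\Xi_{\mathrm{max}}}\bigr)$; moreover your base-case inequality itself involves $\delta_0$, so $\delta$ must additionally be small compared with the constant $\Omega$ in $\Malt_0\le\Omega\mathscr{A}$ (the paper imposes $\delta\le\weightzeta/\Omega$), a condition not implied by your stated constraint, though easily added.
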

\begin{proof}
Note first that the estimate $\zeta_j\geq \zeta$ is an immediate consequence of the definition of $\delta$.
	
The proof  is by induction on $k$.
First, for $k=0$, the
properties \ref{it:welldefined}, \ref{it:locality}, and \ref{it:monotonicity}  are obvious. Indeed, $\Qscr$ has range at most $R$, is
positive, and $D\p(x)$ can be expressed as a function of $\nabla\p{\restriction_{\vec{E}(x+[0,R]^d)}}$. Similarly  $\boldsymbol{M}_0^X$ is non-negative, symmetric,
and monotone in $X$ and $\boldsymbol{M}_0^X\p$ only depends on the values of $\nabla \p$ 
restricted to the bonds $\vec{E}((X^+ +[-M,M]^d)\cap T_N)$ and
 $(X^+ +[-M,M]^d)\cap T_N\subset X^{++}$ since $R\geq M$  by \eqref{eq:definition_R}
and \eqref{eq:defofMConstant}.

	Translation invariance for $k=0$ follows from the facts that the discrete derivatives commute with translations and
	$\tau_{-a}\boldsymbol{1}_{X+a}\tau_a=\boldsymbol{1}_X$ where $\boldsymbol{1}_X$ denotes
	the multiplication operator with the indicator function of $X$ which implies translation
	invariance of the operators $\boldsymbol{M}_k$ in the set variable that is
	$( \p,\boldsymbol{M}_k^X\p)=(\tau_{a}\p,\boldsymbol{M}_k^{X+a}\tau_a\p)$. A similar statement holds 
	for $\Qscr$.
	Finally, we establish the bound \eqref{eq:akbound}.
	First we note that there exist two constants  $\Omega,\omega>0$  independent of $L$, such that  the operator 
	$\mathscr{A}$ (see \eqref{eq:A_again}) satisfies the bounds
	\begin{align}\label{eq:MequivA}
		\omega\mathscr{A}\leq \Malt_0\leq \Omega\mathscr{A}. 
	\end{align}
	This is a consequence of the fact that both operators have the Fourier modes bounded uniformly by $|p|^2$ from above and below. 
	For $\mathscr{A}$ the bounds follow from \eqref{Ahatestimate} and for $\Malt_0$ the lower bound follows from $|q(p)|^2\geq\frac{|p|^2}{4}$ 
	while the upper bound follows from $|q(p)|\leq |p|$ and the fact that the dual torus is bounded.
	Then, for $\delta_0\leq \weightzeta/\Omega$, we estimate 
	\begin{align}
		\begin{split}                                                           
		\boldsymbol{A}_0^X\leq \boldsymbol{A}_0^{\TN}&= (1-4\weightzeta)\mathscr{A}+\delta_0\Malt_0             
		\\ &
		\leq (1-3\weightzeta)\mathscr{A}                                                 
		\leq \frac{1}{(1+3\weightzeta)\mathscr{A}^{-1}}                                  
		\leq \frac{1}{{\weightzeta\omega}\Malt_0^{-1}+  (1+2\weightzeta)\mathscr{C}}. 
		\end{split}                                                             
	\end{align}
	Hence \eqref{eq:akbound} holds for $k=0$ and $\lambda\leq {\weightzeta}{\omega}$. We now fix
	\begin{align}\label{eq:defoflambda}
	\lambda=\min\left({\weightzeta}{\omega},\frac14\right)
	\end{align}
where $\omega$ was introduced in \eqref{eq:MequivA} and 
	\begin{align}\label{eq:defofmu}
	\mu=\mu(\lambda,L)>1
	\end{align}
	as in  Lemma~ \ref{le:opineq}. We then set
\begin{align}\label{eq:defofdelta}
	\delta(L)=\min\left(\frac{\weightzeta}{\Omega},\frac{\weightzeta}{2\mu\Xi_{\mathrm{max}}}\right),
	\end{align}
		 where $\Xi_{\mathrm{max}}$ was introduced in \eqref{eq:thetamax}
	 
	Now we perform the induction step from  $\boldsymbol{A}_k^X$ to 
	$\boldsymbol{A}_{k:k+1}^X$.
	
	 First we show that $\boldsymbol{A}_{k:k+1}^X$ is well
	defined. By the induction hypothesis, the operator $\boldsymbol{A}_k^X$ is
	non-negative and symmetric and the
	bound \eqref{eq:akbound} implies
	\begin{align}
		\boldsymbol{A}_k^X\leq \left((1+\weightzeta_k)\mathscr{C}_{k+1}\right)^{-1}.
	\end{align} 
  Since $\mathscr{C}_{k+1}$ is also symmetric, Lemma~\ref{le:matrix_monotone} 
   implies that $\boldsymbol{A}_{k:k+1}^X$ is well defined using the extension
   defined in \eqref{eq:algebraic_integration2}
	and it 
	can be expressed as follows
	\begin{align}\label{eq:Akredefine}
		\boldsymbol{A}_{k:k+1}^X=
		\left(\boldsymbol{A}_k^X\right)^{\frac12}\left(1-\left(\boldsymbol{A}_k^X\right)^{\frac12}\mathscr C_{ k +1}
		\left(\boldsymbol{A}_k^X\right)^{\frac12}\right)^{-1}\left(\boldsymbol{A}_k^X\right)^{\frac12}.                 
	\end{align}
	
	This expression shows that the operator $\boldsymbol{A}_{k:k+1}^X$  is symmetric and, 
	again by Lemma~\ref{le:matrix_monotone}, also  non-negative.
	Moreover, the matrix monotonicity that was stated in Lemma~\ref{le:matrix_monotone}
	implies that the monotonicity $\boldsymbol{A}_{k:k+1}^Y\leq \boldsymbol{A}_{k:k+1}^X$
	for $Y\subset X$ follows from the induction
	hypothesis
	$\boldsymbol{A}_{k}^Y\leq \boldsymbol{A}_{k}^X$.
	 
	 To prove the claim \ref{it:translation} for $\boldsymbol{A}_{k:k+1}^X$, we use the induction hypothesis, 
	 the series representation \eqref{eq:algebraic_integration4}
	 for $\boldsymbol{A}_{k:k+1}^X$, and the translation invariance of the kernel $\mathscr{C}_{k+1}$, i.e., 
	 $[\tau_a,\mathscr{C}_{k+1}]=\tau_a \mathscr{C}_{k+1} - \mathscr{C}_{k+1}\tau_a = 0$. 
	The easiest way to show the locality of $\boldsymbol{A}_{k:k+1}^X$ stated in \ref{it:locality} is based on the observation that, 
	by Gaussian integration \eqref{eq:GaussianCalculus}, we get the identity 
	\begin{align}
	\begin{split}		
		\int_{\Xcal_N} e^{\frac{1}{2}(\p+\xi,\boldsymbol{A}_{k}^X(\p+\xi))}\, 
		\mu_{(1+\weightzeta)\mathscr{C}_{k+1}}(\d \xi)
		= &\frac{e^{\frac{1}{2}(\p, 
		((\boldsymbol{A}_{k}^X)^{-1}-(1+\weightzeta)\mathscr{C}_{k+1})^{-1}\p)}}
		{\det\left(\1-\left((1+\weightzeta)\mathscr{C}_{k+1}\right)^{\frac12}\boldsymbol{A}_k^X
		\left((1+\weightzeta)\mathscr{C}_{k+1}\right)^{\frac12}\right)^{\frac12}}\\
		=&\frac{e^{\frac12 (\p,\boldsymbol{A}_{k:k+1}^X\p)}}
		{\det\left(\1-\left((1+\weightzeta)\mathscr{C}_{k+1}\right)^{\frac12}\boldsymbol{A}_k^X
		\left((1+\weightzeta)\mathscr{C}_{k+1}\right)^{\frac12}\right)^{\frac12}}. 
		\end{split}                    
	\end{align}
By the induction hypothesis the left hand side
	is measurable with respect to the $\sigma$-algebra 
	generated by $\nabla\p{\restriction_{\vec{E}(X^{++})}}$, hence the same is true for the right hand side.

	For the proof of \ref{it:bound} for $\boldsymbol{A}_{k:k+1}^X$ we first note that, by the monotonicity \ref{it:monotonicity}, 
	it is sufficient to prove the bound for $X=\TN$. 
	This is an immediate consequence of the bound for $\boldsymbol{A}_k^{\TN}$, Lemma~\ref{le:matrix_monotone} \ref{it:matrix_monotone3}, 
	and the inequality $\weightzeta_k\geq \weightzeta$ which implies
	\begin{align}
		\begin{split}
		\boldsymbol{A}_{k:k+1}^{\TN}&=((\boldsymbol{A}_k^{\TN})^{-1}-(1+\weightzeta)\mathscr{C}_{k+1})^{-1} 
		\\
		& \leq \left( \lambda\boldsymbol{M}_{k}^{-1}+(1+\weightzeta_k)\sum_{j=k+1}^{N+1}\mathscr{C}_j-(1+\weightzeta)\mathscr{C}_{k+1}\right)^{-1}
		\\
		& \leq  \left( \lambda\boldsymbol{M}_{k}^{-1}+(1+\weightzeta_k)\sum_{j=k+2}^{N+1}\mathscr{C}_j\right)^{-1}.
		\end{split}
	\end{align}

	It remains to show the induction step from $\boldsymbol{A}_{k:k+1}^X$ to $\boldsymbol{A}_{k+1}^X$. 
	We begin with the observation
	that the operators $\boldsymbol{M}_{k+1}^X$ are well-defined, symmetric, non-negative, monotone in $X$,
	and translation invariant.
	Moreover $\boldsymbol{M}_{k+1}^X$ only depends on 
	$\nabla\p{\restriction_{\vec{E}(X^++[-M,M]^d)}}$  and $X^++[-M,M]^d\subset X^{++}$  once  $L\geq M$, the inequality that 
	follows from $M\leq R\leq L$.
	
	Now, the points \ref{it:welldefined} and \ref{it:translation}
	follow from the induction hypothesis applied to $\boldsymbol{A}_{k:k+1}^{X^\ast}$
	and the previous observation.
	The claim  \ref{it:monotonicity} follows from the induction hypothesis for $\boldsymbol{A}_{k:k+1}$ applied to $X^\ast\subset Y^\ast$
	for $X\subset Y$ and the monotonicity of
	$\boldsymbol{M}_{k+1}^X$.
To show \ref{it:locality}, it remains to check that $\boldsymbol{A}_{k:k+1}^{X^\ast}$
is measurable with respect to $\nabla\p{\restriction_{\vec{E}(X^{++})}}$.
Using the induction hypothesis we are left to show the
inclusion $(X^\ast)^{++}\subset X^{++}$.
Note that by \eqref{eq:nghbhdscompact},
\begin{align}
(X^\ast)^{++}=
\begin{cases}
X+[-2^d-3R,2^d+3R]^d\quad & \text{for $X\in \mathcal{P}_1$},\\
X+[-(2^d+2)L^{k-1},(2^d+2)L^{k-1}]^d \quad &\text{for $X\in \mathcal{P}_k$, $k\geq 2$}.
\end{cases}
\end{align}
Therefore $(X^\ast)^{++}\subset X^{++}$ holds for $X\in \mathcal{P}_k$, $k\geq 1$, and $L\geq 2^d+3R$.
	
	Finally, the bound for $\boldsymbol{A}_{k+1}^X$ is a direct consequence of Lemma~\ref{le:opineq} and our choice for $\delta$.
	Indeed, recall that $\delta\Xi_{\mathrm{max}}\leq \frac{\weightzeta}{2\mu}\leq \frac{1}{\mu}$ 
	and $\delta_{k+1}\leq \delta$, hence Lemma~\ref{le:opineq} and the induction hypothesis imply	
	\begin{align}\notag	
	\boldsymbol{A}_{k+1}^X\leq \boldsymbol{A}_{k+1}^{\TN}
	&=\boldsymbol{A}_{k:k+1}^{\TN}+\delta_{k+1}\boldsymbol{M}_{k+1}^{\TN}
	\\
	&\leq \Big( \lambda\Malt_k^{-1}+(1+\weightzeta_k)\sum_{j=k+2}^{N+1}\mathscr{C}_j\Big)^{-1}+\delta_{k+1}\Xi_{\mathrm{max}}\Malt_{k+1}
	\\
	&\leq \Big(\lambda\Malt_{k+1}^{-1}+(1+\weightzeta_k-\mu \delta_{k+1} \Xi_{\mathrm{max}})\sum_{j=k+2}^{N+1}\mathscr{C}_j\Big)^{-1}.	
	\end{align}
	The claim follows from $\weightzeta_k-\weightzeta_{k+1}=\mu\Xi_{\mathrm{max}}\delta_{k+1}$.
\end{proof}

\section{Subadditivity properties of the operators \texorpdfstring{$\protect\headingAk$}{AkX} and \texorpdfstring{$\protect \headingAkkp$}{Ak:k+1X}}

In this section we prove that the weight operators satisfy additivity properties that directly imply 
the statements \ref{w:w3}-\ref{w:w6} in Theorem~\ref{th:weights_final}.
 In Chapter~\ref{sec:normprop} we will also prove that they imply that the norms we defined in Section \ref{se:norms_new} are sub-multiplicative.

\begin{lemma}\label{prop:W2}
	The  weight operators $\boldsymbol{A}_{k}^X$ and $\boldsymbol{A}_{k:k+1}^X$
	and the strong norm weight functions $\boldsymbol{G}_k^X$ defined in \eqref{eq:strong_weight}  satisfy  for $0\leq k\leq N-1$, 
	under the same assumptions as in Theorem~\ref{th:weights_final} with $\delta$ and $\lambda$ as in    Lemma~\ref{prop:W1}, 
	the following (sub)additivity properties:
\begin{enumerate}[label=(\roman*),leftmargin=0.7cm]
	\item \label{it:subadditivityI} Additivity: For any strictly disjoint $X, Y\in \mathcal{P}_k$, the
		      equality
		      \begin{align}\label{eq:subadd1}
		      	\boldsymbol{A}_{k}^{X\cup Y} & =\boldsymbol{A}_{k}^X+\boldsymbol{A}_{k}^Y 
		      \end{align}
		      holds. For any $X,Y\in \mathcal{P}_k$ such that $\mathrm{dist}(X,Y)\geq \frac{3}{4}L^{k+1}$,  we have
		      \begin{align}
		      	\label{eq:subadd2}
		      	\boldsymbol{A}_{k:k+1}^{X\cup Y} & =\boldsymbol{A}_{k:k+1}^X+\boldsymbol{A}_{k:k+1}^Y. 
		      \end{align}
		\item \label{it:subadditivityII} Subadditivity: For any disjoint $k$-polymers $X, Y\in\mathcal{P}_k$, the inequality 
		      \begin{align}\label{eq:subadd3}
		      	\boldsymbol{A}_k^X+\boldsymbol{G}_k^Y\leq \boldsymbol{A}_k^{X\cup Y} 
		      \end{align}
		      holds if $h^{-2}<\delta$. For any $(k+1)$-polymer $U\in\mathcal{P}_{k+1}$ and
		      a $k$-polymer $X\in\mathcal{P}_k$ such that $\pi(X)=U$, the inequality 
		      \begin{align}\label{eq:subadd4}
		      	\boldsymbol{A}_{k:k+1}^X+2\boldsymbol{G}_{k}^{U^+}\leq \boldsymbol{A}_{k+1}^U 
		      \end{align}
		      holds if $8h^{-2}<\delta$.
	\end{enumerate}
\end{lemma}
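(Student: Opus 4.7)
The plan is to prove (i) by induction on $k$, the decisive step being the additivity \eqref{eq:subadd2} of the intermediate operators $\boldsymbol{A}_{k:k+1}$, and to obtain (ii) directly from the monotonicity statements of Lemma~\ref{prop:W1} combined with the elementary pointwise bound $\chi_Z\geq \boldsymbol{1}_{Z'}$ for suitable sets.

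For part (i), the base case $k=0$ of \eqref{eq:subadd1} is immediate from the definition $\boldsymbol{A}_0^Z=(1-4\weightzeta)\sum_{x\in Z}\Qscr(D\p(x))+\delta_0\boldsymbol{M}_0^Z$, since both summands split over disjoint unions ($\chi_Z$ is literally additive in $Z$). For the inductive step one writes $\boldsymbol{A}_{k+1}^Z=\boldsymbol{A}_{k:k+1}^{Z^\ast}+\delta_{k+1}\boldsymbol{M}_{k+1}^Z$; since $(X\cup Y)^\ast=X^\ast\cup Y^\ast$ and $\boldsymbol{M}_{k+1}$ is always additive in the set argument, the step reduces to \eqref{eq:subadd2} applied at scale $k$. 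A short geometric calculation, using $X^\ast\subset X+[-2^dL^k,2^dL^k]^d$ for $k+1\geq 2$ (and the analogous bound with an additional $R$ for $k+1=1$), shows that if $X,Y\in\mathcal{P}_{k+1}$ are strictly disjoint and $L\geq 2^{d+3}+16R$, then $\dist(X^\ast,Y^\ast)\geq \tfrac34 L^{k+1}$, so that \eqref{eq:subadd2} indeed applies to $X^\ast,Y^\ast$.

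The heart of the matter, and what I expect to be the main obstacle, is the proof of \eqref{eq:subadd2} itself. The idea is to use the absolutely convergent Neumann series of Lemma~\ref{le:matrix_monotone}\ref{it:matrix_monotone4}, $\boldsymbol{A}_{k:k+1}^{Z}=\sum_{i\geq 0}\boldsymbol{A}_k^Z(C\boldsymbol{A}_k^Z)^i$ with $C=(1+\weightzeta)\mathscr{C}_{k+1}$. Since $\dist(X,Y)\geq \tfrac34 L^{k+1}$ is much stronger than strict disjointness at scale $k$, the inductive hypothesis gives $\boldsymbol{A}_k^{X\cup Y}=\boldsymbol{A}_k^X+\boldsymbol{A}_k^Y$, and expanding $(C(\boldsymbol{A}_k^X+\boldsymbol{A}_k^Y))^i$ reduces the claim to the vanishing of the mixed product $\boldsymbol{A}_k^XC\boldsymbol{A}_k^Y=0$. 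By Lemma~\ref{prop:W1}\ref{it:locality}, the matrix elements of $\boldsymbol{A}_k^Z$ are supported in $Z^{++}\times Z^{++}$, and under the standing hypothesis $L\geq 2^{d+3}+16R$ an elementary check (using $X^{++}\subset X+[-2L^k,2L^k]^d$ for $k\geq 1$, and $X^{++}\subset X+[-2R,2R]^d$ for $k=0$) gives $\dist(X^{++},Y^{++})\geq \tfrac12 L^{k+1}$, which exceeds the range of $\mathscr{C}_{k+1}$. On pairs $(w,z)$ of that separation the kernel $\mathcal{C}_{k+1}(w-z)$ equals the single constant matrix $-C_{k+1}$, so for any $\psi\in\mathcal{X}_N$ the vector $\eta:=\boldsymbol{A}_k^Y\psi$ is supported in $Y^{++}$ and
$(\boldsymbol{A}_k^XC\eta)(u)=-(1+\weightzeta)\bigl(\sum_{w}\boldsymbol{A}_k^X(u,w)\bigr)C_{k+1}\bigl(\sum_{z}\eta(z)\bigr)$,
which vanishes because the shift-invariance extension of $\boldsymbol{A}_k^Y$ gives $\sum_z\eta(z)=(\mathbf{1},\boldsymbol{A}_k^Y\psi)=(\boldsymbol{A}_k^Y\mathbf{1},\psi)=0$.

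For part (ii) no further induction is needed. For \eqref{eq:subadd3}, at $k=0$ the explicit formula for $\boldsymbol{A}_0^Z$ yields $\boldsymbol{A}_0^{X\cup Y}-\boldsymbol{A}_0^X\geq \delta_0\boldsymbol{M}_0^Y$ (after dropping the non-negative $\Qscr$-contribution), and for $k\geq 1$ the identity $(X\cup Y)^\ast=X^\ast\cup Y^\ast$, monotonicity of $\boldsymbol{A}_{k-1:k}$ (Lemma~\ref{prop:W1}\ref{it:monotonicity}), and the additivity $\boldsymbol{M}_k^{X\cup Y}-\boldsymbol{M}_k^X=\boldsymbol{M}_k^Y$ give the same conclusion. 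Restricting the sum defining $\boldsymbol{M}_k^Y$ to $1\leq |\alpha|\leq \lfloor d/2\rfloor+1$ and using $\chi_Y\geq \boldsymbol{1}_Y$ yields $\boldsymbol{M}_k^Y\geq h_k^2\boldsymbol{G}_k^Y$, and $\delta_k h_k^2=\delta h^2\geq 1$ under $h^{-2}<\delta$ closes the estimate. For \eqref{eq:subadd4}, the inclusion $X\subset U^\ast$ from \eqref{eq:pisetincl} gives $\boldsymbol{A}_{k:k+1}^X\leq \boldsymbol{A}_{k:k+1}^{U^\ast}$ by monotonicity, so it suffices to verify $\delta_{k+1}\boldsymbol{M}_{k+1}^U\geq 2\boldsymbol{G}_k^{U^+}$. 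Any $x\in U^+$ lies in a $(k+1)$-block touching $U$, hence inside $B^+$ for some $B\in\mathcal{B}_{k+1}(U)$, which gives $\chi_U\geq \boldsymbol{1}_{U^+}$; combined with $L^{2(k+1)(|\alpha|-1)}\geq L^{2k(|\alpha|-1)}$ this produces $\delta_{k+1}\boldsymbol{M}_{k+1}^U\geq \delta_{k+1}h_k^2\boldsymbol{G}_k^{U^+}=\tfrac{\delta h^2}{4}\boldsymbol{G}_k^{U^+}$, and the hypothesis $8h^{-2}<\delta$ is precisely what is required.
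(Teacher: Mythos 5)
Your proof is correct, and part (ii) as well as the induction skeleton of part (i) (base case $k=0$, the step from \eqref{eq:subadd2} at scale $k$ to \eqref{eq:subadd1} at scale $k+1$ via the geometry of $X^\ast$, and the reductions $\boldsymbol{M}_k^Y\geq h_k^2\boldsymbol{G}_k^Y$, $\delta_{k+1}\boldsymbol{M}_{k+1}^U\geq \delta_{k+1}h_k^2\boldsymbol{G}_k^{U^+}$) coincide with the paper's argument. The genuine difference is in the crucial step $\eqref{eq:subadd1}_k\Rightarrow\eqref{eq:subadd2}_k$. The paper argues probabilistically: since $e^{\frac12(\boldsymbol{A}_{k:k+1}^Z\p,\p)}$ is, up to a normalising constant, the convolution of $e^{\frac12(\boldsymbol{A}_k^Z\cdot,\cdot)}$ with $\mu_{(1+\weightzeta)\mathscr{C}_{k+1}}$ by the Gaussian identity \eqref{eq:GaussianCalculus}, and since the gradient fields restricted to $X^{++}$ and $Y^{++}$ are independent under that measure (finite range together with the locality of Lemma~\ref{prop:W1}~\ref{it:locality} and $\dist(X^{++},Y^{++})\geq L^{k+1}/2$), the integral factorises; evaluating at $\p=0$ matches the constants $c_{X\cup Y}=c_Xc_Y$ and the additivity of the exponents follows. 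The identity $\boldsymbol{A}_k^X C\boldsymbol{A}_k^Y=0$, which you make the centrepiece, appears in the paper only as a secondary, explicit check of $c_{X\cup Y}=c_Xc_Y$ via determinants, proved by essentially the same kernel/constancy argument you give (constancy of the kernel of $\mathscr{C}_{k+1}$ beyond range $L^{k+1}/2$, support of $\boldsymbol{A}_k^Y\psi$ in $Y^{++}$, and vanishing of the relevant sums because the extended operators annihilate constants). Your route instead expands the Neumann series \eqref{eq:algebraic_integration4} — legitimate here since \eqref{eq:akbound} with $\weightzeta_k\geq\weightzeta$ and the extra $\lambda\Malt_k^{-1}$ term gives the strict bound $\boldsymbol{A}_k^{X\cup Y}<\bigl((1+\weightzeta)\mathscr{C}_{k+1}\bigr)^{-1}$ — and kills every mixed word $\boldsymbol{A}^{Z_j}C\boldsymbol{A}^{Z_{j+1}}$ with $Z_j\neq Z_{j+1}$ (you should note that the reverse order $\boldsymbol{A}_k^YC\boldsymbol{A}_k^X=0$ is also needed; it follows by taking adjoints of the symmetric operators). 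What the algebraic route buys is that it bypasses the Gaussian factorisation entirely, so no convergence of integrals or bookkeeping of normalising constants is required; the paper's route, on the other hand, makes the mechanism transparent as a statement about independence under the finite-range measure, which is the form in which the same idea is reused elsewhere (e.g.\ in Lemma~\ref{le:K_kplus_factors}). Both proofs ultimately rest on the same two inputs: locality of $\boldsymbol{A}_k^Z$ in $Z^{++}$ and the finite range of $\mathscr{C}_{k+1}$, combined with the separation $\dist(X^{++},Y^{++})\geq L^{k+1}/2$.
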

\begin{proof}
We first prove \eqref{eq:subadd1} and proceed by induction. Note that for all $k\geq 0$ and
 any disjoint 
$X,Y\in \mathcal{P}_k$ we have
\begin{align}\label{eq:Mkadditive}
\boldsymbol{M}_k^{X\cup Y}=\boldsymbol{M}_k^X+\boldsymbol{M}_k^Y
\end{align}
since a block $B\in \mathcal{B}_k$ is contained in $X\cup Y$ if and only if either $B\subset X$ or
$B\subset Y$. 
From \eqref{eq:Mkadditive} with $k=0$ and \eqref{eq:defAk}, it follows  that 
\eqref{eq:subadd1} holds for $k=0$. Hence it suffices
to show that $\eqref{eq:subadd1}_k\Rightarrow \eqref{eq:subadd2}_k$ for $k\geq 0$
and $\eqref{eq:subadd2}_k\Rightarrow \eqref{eq:subadd1}_{k+1}$ for $k\geq 0$.

To prove the second statement,
$\eqref{eq:subadd2}_k\Rightarrow \eqref{eq:subadd1}_{k+1}$, we consider
strictly disjoint $X,Y\in \mathcal{P}_{k+1}$. Then $\mathrm{dist}(X,Y)\geq L^{k+1}$ and, by \eqref{eq:distXast},
 $X^\ast, Y^\ast\in \mathcal{P}_k$ satisfy 
\begin{align}
\mathrm{dist}(X^\ast, Y^\ast)\geq L^{k+1}-2(2^d+R)L^k\geq \frac{3}{4} L^{k+1}
\end{align}
for $L\geq 2^{d+3}+8R$.  
Then 
\begin{align}
\boldsymbol{A}_{k:k+1}^{X^\ast\cup Y^\ast}=\boldsymbol{A}_{k:k+1}^{X^\ast}+\boldsymbol{A}_{k:k+1}^{Y^\ast}
\end{align}
by $\eqref{eq:subadd2}_k$.
Together with \eqref{eq:Mkadditive} this implies $\boldsymbol{A}_{k+1}^{X\cup Y}=\boldsymbol{A}_{k+1}^X+\boldsymbol{A}_{k+1}^Y$.

To prove the statement $\eqref{eq:subadd1}_k\Rightarrow \eqref{eq:subadd2}_k$, we observe that by property \ref{it:locality} in Lemma~\ref{prop:W1}, 
the operator $\boldsymbol A_{k}^{X}$ is, for a $k$-polymer $X\in \mathcal{P}_k$, measurable with respect to the $\sigma$-algebra 
generated by $\nabla\p{\restriction_{\vec{E}(X^{++})}}$ and similarly $\boldsymbol{A}_k^{Y}\p$  is measurable with respect to the $\sigma$-algebra 
generated by $\nabla\p{\restriction_{\vec{E}(Y^{++})}}$. 
Let $X,Y\in \mathcal{P}_k$ be polymers such that $\mathrm{dist}(X,Y)\geq \frac34 L^{k+1}$. 
Note that $\mathrm{dist}(X^{++},Y^{++})\geq \mathrm{dist}(X,Y)-4L^k> L^{k+1}/2$ for $L>16$ and $k\geq 1$ 
and thus $\mathrm{dist}(X^{++},Y^{++})\geq \mathrm{dist}(X,Y)-4R> L/2$ for $k=0$ and $L\geq 16R$.
This implies that $\nabla\xi_{k+1}{\restriction_{\vec{E}(X^{++})}}$ and $\nabla\xi_{k+1}{\restriction_{\vec{E}(Y^{++})}}$
are independent under $\mu_{k+1}$ and therefore also under the measure $\mu_{(1+\weightzeta_{k+1})\mathscr{C}_{k+1}}$.
Hence the random variables $(\p+\xi_{k+1},\boldsymbol{A}_k^{X}(\p+\xi_{k+1}))$ and $(\p+\xi_{k+1},\boldsymbol{A}_k^{Y}(\p+\xi_{k+1}))$ 
are independent under the same measure for $\xi_{k+1}$ and any $\p$.
To simplify the notation we denote $C=(1+\weightzeta_{k+1})\mathscr{C}_{k+1}$.
Independence and the formula \eqref{eq:GaussianCalculus}  for Gaussian integration shows that there exist positive constants $c_X$, $c_Y$, and $c_{X\cup Y}$ such that
\begin{align}\begin{split}
		\frac{e^{\frac{1}{2}(\boldsymbol{A}_{k:k+1}^{X\cup Y}\p,\p)}}{c_{X\cup Y}}
		  &=\int_{\Xcal_N} e^{\frac{1}{2}(\boldsymbol{A}_{k}^{X\cup Y}(\p+\xi),\p+\xi)}\, \mu_{C}(\d \xi)  \\
		  & =\int_{\Xcal_N} e^{\frac{1}{2}((\boldsymbol{A}_{k}^{X}+\boldsymbol{A}_k^Y)(\p+\xi),\p+\xi)}\, \mu_{C}(\d \xi) \\
		  & =\int_{\Xcal_N} e^{\frac{1}{2}(\boldsymbol{A}_{k}^{X}(\p+\xi),\p+\xi)}\, \mu_{C}(\d \xi)              
		\int_{\Xcal_N} e^{\frac{1}{2}(\boldsymbol{A}_{k}^{Y}(\p+\xi),\p+\xi)}\, \mu_{C}(d \xi)\\
		  & =\frac{ e^{\frac{1}{2}(\boldsymbol{A}_{k:k+1}^{X}\p,\p)}}{c_X}\frac{e^{\frac{1}{2}(\boldsymbol{A}_{k:k+1}^{Y}\p,\p)}}{c_Y},               
		\end{split}
\end{align}
where,  in the second step,  we used the induction hypothesis. 
In the third step we usedthat the integral factors by the  finite range property of $\mathscr{C}_{k+1}$.
Evaluation for $\p\equiv 0$ shows that the constants must satisfy $c_{X\cup Y}=c_Xc_Y$ which implies
$\boldsymbol{A}_{k:k+1}^{X\cup Y}=\boldsymbol{A}_{k:k+1}^X+\boldsymbol{A}_{k:k+1}^Y$. 
The equality $c_{X\cup Y}=c_Xc_Y$  can also checked explicitly.
Using \eqref{eq:GaussianCalculus} we can rewrite
	\begin{align}
	\begin{split}
	 c_X^2c_Y^2
	 &=\det(\1-C^{\frac12}\boldsymbol A_k^XC^{\frac12})\det(1-C^{\frac12}\boldsymbol A_k^YC^{\frac12})
	 \\
	 &=\det(\1-C^{\frac12}(\boldsymbol A_k^X+\boldsymbol A_k^Y)C^{\frac12}+C^{\frac12}\boldsymbol A_k^XC\boldsymbol A_k^YC^{\frac12})
	 \\
	 &=\det(\1-C^{\frac12}\boldsymbol A_k^{X\cup Y}C^{\frac12})
	 =	 
	 c_{X\cup Y}^2.
	\end{split}
	\end{align}
Here we used the induction hypothesis for the linear term.
The  quadratic term vanishes because $\boldsymbol A_k^XC\boldsymbol{A}_k^Y=0$ which we now show.
 Note that $\mathrm{supp}\,(\boldsymbol A_k^Y\p)\subset X^{++}$. 
 Indeed, the symmetry of $\boldsymbol A_k^Y$ and the locality property \ref{it:locality} in Lemma~\ref{prop:W1} imply that 
 $(\psi,\boldsymbol A_k^Y\p)=0$ for any $\psi$ with $\mathrm{supp}\,\psi\cap Y^{++}=\emptyset$.
Since the kernel $\mathcal{C}(x)$ of $C$ is constant for $|x|_\infty\geq L^{k+1}/2$  we find that 
$C\p(x)=c$ for some constant $c$ for $x\notin B_{L^{k+1}/2}(\mathrm{supp}\, \p)$. 
Using $\mathrm{dist}(X^{++},Y^{++})\geq L^{k+1}/2$ we conclude that $\nabla C\boldsymbol{A}_k^Y\p{\restriction_{\vec{E}(X^{++})}}=0$
for all $\p \in \mathcal{X}_N$ and therefore $\boldsymbol A_k^XC\boldsymbol{A}_k^Y\p=0$.
	
Now we prove  \ref{it:subadditivityII}.
Recall that $\boldsymbol{G}_k^X$ was defined in \eqref{eq:strong_weight}.
We first observe that the following operator inequality is true for $h^{-2}<\delta$
\begin{align}\label{eq:relationMkGk}
	\delta_k\boldsymbol{M}_k^X\geq \delta_k \sum_{1\leq|\alpha|\leq M} L^{2k(|\alpha|-1)} (\nabla^\ast)^\alpha \boldsymbol{1}_{X^+}\nabla^\alpha
	\geq \delta_k h_k^2 \boldsymbol G_k^{X^+}\geq \boldsymbol G_k^X.                                                                                                                  
\end{align}
This implies \eqref{eq:subadd3} for $k=0$. 
For $k\geq 1$ the monotonicity   of $\boldsymbol{A}_{k-1:k}^X$ in $X$, the positivity of $\boldsymbol{M}_k^X$, 
and the additivity property \eqref{eq:Mkadditive} of  $\boldsymbol{M}_k^X$  imply
	\begin{align}
		\boldsymbol{A}_{k}^{X\cup Y}=\boldsymbol{A}_{k-1:k}^{(X\cup Y)^\ast}+\delta_k \boldsymbol{M}_k^{X\cup Y}\geq 
		\boldsymbol{A}_{k-1:k}^{X^\ast}+\delta_k\boldsymbol{M}_k^X+\delta_k\boldsymbol{M}_k^Y\geq                    
		\boldsymbol{A}_k^X+\boldsymbol{G}_k^Y.                                                               
	\end{align}
	
It remains to prove  \eqref{eq:subadd4}. 
Note that  $\delta_{k+1} h_{k+1}^2=\delta h^2\geq 8$.
Similar to \eqref{eq:relationMkGk} we conclude that for $U\in \mathcal{P}_{k+1}$
	\begin{align}\label{eq:prop4:1}
		\delta_{k+1}\boldsymbol{M}_{k+1}^U\geq \delta_{k+1} \sum_{1\leq|\alpha|\leq M} L^{2(k+1)(|\alpha|-1)} 
		(\nabla^\ast)^\alpha                                                                    
		\boldsymbol{1}_{U^+}\nabla^\alpha\geq \delta_{k+1} h_{k+1}^2                                
		\boldsymbol G_{k+1}^{U^+}\geq 8\boldsymbol G_{k+1}^{U^+}.                                               
	\end{align}
Recall that  by \eqref{eq:pisetincl} we have $X\subset X^\ast\subset U^\ast$ if $U=\pi(X)$.
Together with \eqref{eq:prop4:1} this  implies
	\begin{align}
		\boldsymbol{A}_{k+1}^{U}=\boldsymbol{A}_{k:k+1}^{U^\ast}+\delta_{k+1}\boldsymbol{M}_{k+1}^{U}\geq 
		\boldsymbol{A}_{k:k+1}^X+8\boldsymbol{G}_{k+1}^{U^+}                                      
		\geq \boldsymbol{A}_{k:k+1}^X+2\boldsymbol{G}_{k}^{U^+},                                   
	\end{align}
where we used in the last step that $h_{k+1}^2=4h_k^2$ and therefore $4\boldsymbol{G}_{k+1}^X\geq \boldsymbol{G}_k^X$. 
Note that in the last expression the operation $U^+$ in $\boldsymbol{G}_k^{U^+}$ is still on scale $k+1$.
\end{proof}

\section{Consistency of the weights under \texorpdfstring{$\protect\headingRk$}{Rk+1(q)}}
In this section we prove the necessary bounds that imply the integration property of the weights \ref{w:w7} and \ref{w:w8} in Theorem~\ref{th:weights_final}.
They follow from a Gaussian integration stated in \eqref{eq:GaussianCalculus} with
the operators $\boldsymbol{A}_k^X$ and the covariances $\mathscr{C}_{k+1}^{(\boldsymbol{q})}$.

\begin{lemma}\label{prop:W3}
 Under the same assumptions as in Theorem~\ref{th:weights_final} with $\delta$ and $\lambda$ as  in    
 Lemma~\ref{prop:W1} the operators $\boldsymbol{A}_k^X$ satisfy the following additional properties:
\begin{enumerate}[label=(\roman*),leftmargin=0.7cm]
		\item \label{it:determinant} 
		Let $\rho =(1+\weightzeta)^{\frac13}-1$ and $\overline{\rho}\in  [0, \rho]$. 
		There is a constant $A_{\mathcal{P}}$ depending on $\weightzeta$, and in addition on $L$ if $d=2$, 
		and a constant  $\kappa=\kappa(L)$ such that  for any $k$-polymer $X$ and 
		$\boldsymbol{q}\in B_\kappa=\{\boldsymbol{q}\in \mathbb{R}^{(d\times m)\times (d\times m)}_{\mathrm{sym}}|\; |\boldsymbol{q}|<\kappa\}$  
		the following estimate holds
		      \begin{align}\label{eq:determinant}
		      \det\left(\1-(1+ \overline{\rho})\left(\mathscr{C}_{k+1}^{(\boldsymbol{q})}\right)^{1/2}
		      \boldsymbol{A}_k^X\left(\mathscr{C}_{k+1}^{(\boldsymbol{q})}\right)^ {1/2}\right)^{-1/2}\leq \left(\frac{{A}_{\mathcal{P}}}{2}\right)^{|X|_k}.                            
		      \end{align}
		For blocks $X\in \mathcal{B}_k$ the same estimate holds for a constant $A_{\mathcal{B}}$ which does not depend on $L$.		    
		\item \label{it:integration} 
		Integration property: Let $A_{\mathcal{P}}$ and $\overline{\rho}$ be as in \ref{it:determinant}. 
		Then
		        \begin{align}\label{eq:weightintegral}
		      	\int_{\Xcal_N} e^{\frac{1+\overline{\rho}}{2}(\boldsymbol A_{k}^X(\p+\xi),\p+\xi)}\, \mu_{k+1}^{(\boldsymbol{q})}(\d \xi)\leq 
		      	\left(\frac{{A}_{\mathcal{P}}}{2}\right)^{|X|_k}e^{\frac{1+\overline{\rho}}{2}(\boldsymbol A_{k:k+1}^X\p,\p)}          
		      \end{align}
		for any polymer $X$ and the same bound with $A_{\mathcal{P}}$ replaced by $A_{\mathcal{B}}$ holds for any block $X\in \mathcal{B}_k$.
	\end{enumerate}
\end{lemma}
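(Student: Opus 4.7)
The integration bound \eqref{eq:weightintegral} will be a direct consequence of the determinant estimate \eqref{eq:determinant} through the Gaussian identity \eqref{eq:GaussianCalculus}, so the genuine work is to establish (i). For the reduction of (ii) to (i), I apply \eqref{eq:GaussianCalculus} with $A=(1+\overline\rho)\boldsymbol{A}_k^X$ and $C=\mathscr{C}_{k+1}^{(\boldsymbol{q})}$: the determinantal prefactor is bounded by (i), and the quadratic exponent becomes $\tfrac{1+\overline\rho}{2}\bigl(((\boldsymbol A_k^X)^{-1}-(1+\overline\rho)\mathscr C_{k+1}^{(\boldsymbol q)})^{-1}\p,\p\bigr)$. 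Both the invertibility hypothesis $(1+\overline\rho)\boldsymbol A_k^X<(\mathscr C_{k+1}^{(\boldsymbol q)})^{-1}$ and the required upper bound of this quadratic form by $\tfrac{1+\overline\rho}{2}(\boldsymbol A_{k:k+1}^X\p,\p)$ reduce, via Lemma~\ref{le:matrix_monotone} and Lemma~\ref{prop:W1}\ref{it:bound}, to the single operator inequality $(1+\overline\rho)\mathscr C_{k+1}^{(\boldsymbol q)}\le(1+\weightzeta)\mathscr C_{k+1}$. Since $\overline\rho\le\rho=(1+\weightzeta)^{1/3}-1$, it suffices to secure $\mathscr C_{k+1}^{(\boldsymbol q)}\le(1+\weightzeta)^{2/3}\mathscr C_{k+1}$, which follows from the $\ell=1$ case of \eqref{keyquotientbound} by integrating from $0$ to $\boldsymbol q$ and taking $\kappa=\kappa(L)$ small enough.

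Turning to (i), matrix monotonicity (Lemma~\ref{le:matrix_monotone}) combined with Lemma~\ref{prop:W1}\ref{it:bound} allows the replacement of $\boldsymbol A_k^X$ by the upper bound $\bar{\boldsymbol A}_k:=\bigl(\lambda\Malt_k^{-1}+(1+\weightzeta)\sum_{j=k+1}^{N+1}\mathscr C_j\bigr)^{-1}$ when controlling the operator norm. With
\begin{align*}
T_X:=(1+\overline\rho)\bigl(\mathscr C_{k+1}^{(\boldsymbol q)}\bigr)^{1/2}\boldsymbol A_k^X\bigl(\mathscr C_{k+1}^{(\boldsymbol q)}\bigr)^{1/2},
\end{align*}
one obtains $T_X\le c\,\1$ with $c=\tfrac{1+\rho}{1+\weightzeta}\bigl(1+o_\kappa(1)\bigr)<1$ uniformly in $k$, $N$, and $\boldsymbol q\in B_\kappa$ once $\kappa(L)$ is small. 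Feeding the elementary inequality $-\ln(1-t)\le t/(1-c)$ on $[0,c]$ into the spectral representation of $-\tfrac12\tr\ln(\1-T_X)$ then reduces the determinant bound to the trace estimate
\begin{align*}
\tr\bigl(\boldsymbol A_k^X\,\mathscr C_{k+1}^{(\boldsymbol q)}\bigr)\le c_0\,|X|_k,
\end{align*}
with $c_0$ allowed to depend on $L$ for general $X\in\Pcal_k$ but required to be $L$-independent for single blocks $X=B\in\Bcal_k$.

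The trace estimate itself is handled by combining locality with the sharp kernel bounds of Theorem~\ref{thm:frd}. By Lemma~\ref{prop:W1}\ref{it:locality}, $(\p,\boldsymbol A_k^X\p)$ depends on $\nabla\p$ only through its restriction to $\vec E(X^{++})$, and by Lemma~\ref{prop:W1}\ref{it:bound} it is dominated by $\lambda^{-1}(\p,\Malt_k\p)$; exploiting the strictly-disjoint additivity of Lemma~\ref{prop:W2}\ref{it:subadditivityI} to reduce to a block-by-block estimate, the trace is controlled by sums of the form
\begin{align*}
\sum_{1\le|\alpha|\le M}L^{2k(|\alpha|-1)}\sum_{x\in B^{++}}\bigl|(\nabla^*)^\alpha\nabla^\alpha\mathcal C_{k+1}^{(\boldsymbol q)}(0)\bigr|,
\end{align*}
with only $O(1)$ overcounting per block from overlaps of neighbouring $B^{++}$. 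The condition $n\ge 2M$ ensures that all required derivatives of the kernel are controlled by \eqref{eq:discretebounds}, and \eqref{keyquotientbound} provides the passage from $\mathcal C_{k+1}$ to $\mathcal C_{k+1}^{(\boldsymbol q)}$ uniformly on $B_\kappa$. For a single block the volume factor $|B^{++}|\sim L^{kd}$ combines with $L^{2k(|\alpha|-1)}$ and the decay $L^{-k(d-2+2|\alpha|)}$ of two $\alpha$-derivatives of $\mathcal C_{k+1}$ to cancel the $L$-scaling exactly, leaving a constant depending only on the fixed parameters $d$, $m$, $R_0$, $M$, $n$, $\tilde n$, $\omega_0$.

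The main obstacle is precisely this $L$-independence of $A_\Bcal$, which is essential for the contraction estimate in Theorem~\ref{prop:contractivity}. The cancellation above relies on the optimal $L$-exponents in \eqref{eq:discretebounds} supplied by the finite range decomposition of \cite{Buc16} (built on \cite{Bau13}); the earlier construction of \cite{AKM13} is suboptimal in its $L$-dependence and would not suffice here. In dimension $d=2$, the logarithmic factor in \eqref{eq:discretebounds} produces the advertised mild $L$-dependence of the polymer constant $A_\Pcal$ through the bookkeeping over a general $X\in\Pcal_k$, while the single-block cancellation, and hence $A_\Bcal$, remains untouched.
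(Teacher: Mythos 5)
Your overall architecture coincides with the paper's: (ii) is reduced to (i) via the Gaussian identity \eqref{eq:GaussianCalculus} together with matrix monotonicity and the comparison $(1+\overline\rho)\,\mathscr{C}_{k+1}^{(\boldsymbol q)}\le(1+\weightzeta)\,\mathscr{C}_{k+1}$ obtained from the $\ell=1$ case of \eqref{keyquotientbound} for small $\kappa(L)$; and (i) is obtained by first pushing the spectrum of $(1+\overline\rho)(\mathscr C_{k+1}^{(\boldsymbol q)})^{1/2}\boldsymbol A_k^X(\mathscr C_{k+1}^{(\boldsymbol q)})^{1/2}$ strictly below $1$ using \eqref{eq:akbound}, and then reducing the determinant to a trace bound $\tr\bigl(\boldsymbol A_k^X\mathscr C_{k+1}^{(\boldsymbol q)}\bigr)\lesssim|X|_k$ via an elementary logarithmic inequality. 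Up to this point your plan is sound and essentially the paper's.

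The gap is in the trace estimate itself, which is the real content of the lemma. First, the additivity \eqref{eq:subadd1} of Lemma~\ref{prop:W2} holds only for \emph{strictly disjoint} polymers, and the blocks of a (connected) polymer are touching, so it does not decompose $\boldsymbol A_k^X$ into block contributions; nor is there any bound of the form $\boldsymbol A_k^X\le C\,\boldsymbol M_k^X$ with the localized operator of \eqref{eq:defofMk} available from Lemma~\ref{prop:W1} — the upper bound there involves the \emph{global} operator $\Malt_k$. Second, combining "$\boldsymbol A_k^X$ is local" with "$\boldsymbol A_k^X\le\lambda^{-1}\Malt_k$" does not by itself yield a trace proportional to $|X|_k$: writing $\tr(\boldsymbol A_k^X\mathscr C_{k+1}^{(\boldsymbol q)})\le\lambda^{-1}\tr(\1_{X^{++}}\Malt_k\1_{X^{++}}\mathscr C_{k+1}^{(\boldsymbol q)})$ and expanding with the discrete Leibniz rule, the terms in which $j\ge1$ of the up to $2M$ derivatives fall on the sharp cutoff are of size $L^{2k(|\alpha|-1)}L^{-k(d-2+2|\alpha|-j)}=L^{-kd+kj}$ on a boundary layer of $\sim|X|_kL^{k(d-1)}$ sites, i.e. they contribute up to $|X|_kL^{k(2M-1)}$, not $O(|X|_k)$. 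This is exactly why the paper inserts a \emph{smooth} cutoff $\eta_X$ with $|\nabla^l\eta_X|\le\Theta L^{-lk}$ (this is also where the $d=2$ factor $\ln L$ from the underived kernel in \eqref{eq:discretebounds} enters $A_{\Pcal}$), and why, for the $L$-independent single-block constant $A_{\Bcal}$, it uses an entirely different device: averaging $\boldsymbol A_k^{\tau_a(B)}$ over $\sim L^{(N-k)d}$ strictly disjoint translates and invoking monotonicity to reduce to $\frac1{|T|}\tr(\Malt_k\mathscr C_{k+1})$, which involves only kernel derivatives of order $\ge2$ and the $L$-independent bound \eqref{eq:discreteboundsfinal}. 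Your "exact cancellation of the $L$-scaling" for a single block is correct at the level of exponents, but as stated it does not rest on a valid inequality, and the claim that the polymer case follows with "only $O(1)$ overcounting per block" bypasses precisely the step where the cutoff (or averaging) argument is needed and where the $L$-dependence of $A_{\Pcal}$ versus the $L$-independence of $A_{\Bcal}$ is decided.
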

\begin{proof}
	 
The statement \ref{it:determinant}  can  be proved  similarly to Lemma~ 5.3 in \cite{AKM16}. 
We rely  on the abstract Gaussian calculus sketched at the beginning of this section.
One difficulty is the fact that we need to renormalize the covariance. 
Hence we later need the integration property \ref{it:integration}
not only for $\mu_{k+1}^{(0)}$ but also for $\boldsymbol{q}$ in a small neighbourhood $B_\kappa(0)$. 
As in Section~\ref{sec:FRD} we impose the condition $\kappa\leq \omega_0/2$.	
This condition ensures that the finite range decomposition of the covariance $\mathscr{C}^{(q)}$ 
is defined for $\boldsymbol{q}\in B_\kappa$ under the assumption \eqref{eq:conditionForQ} on $\Qscr$. 
Clearly the left hand side of \eqref{eq:determinant} is decreasing in $\overline{\rho}$. 
Thus we only need to consider $\overline{\rho} = \rho $ in the proof of \ref{it:determinant}. 
	
The first step is to bound the spectrum of the operator 
$(\mathscr{C}_{k+1}^{(\boldsymbol{q})})^{\frac12}\boldsymbol{A}_k^X(\mathscr{C}_{k+1}^{(\boldsymbol{q})})^{\frac12}$. 
This	is a necessary condition for the convergence of the integral in \eqref{eq:weightintegral} and it is also needed to bound the determinant.
We show that the covariance operators $\mathscr{C}_{k+1}^{(\boldsymbol{q})}$ and $\mathscr{C}_{k+1}^{(0)}$ are comparable for small $\boldsymbol{q}$.
Namely, for a sufficiently small neighbourhood $B_\kappa$ of the origin, the inequality ${\mathscr{C}}_{k+1}^{(\boldsymbol{q})}\leq (1+\rho)\mathscr{C}_{k+1}^{(0)}$
holds for $\boldsymbol{q}\in B_\kappa$.
Since both operators are block-diagonal in the Fourier space, it is sufficient to show the estimate for all Fourier modes. 
Indeed, we observe that for $p\in \widehat{T}_N$ and $\boldsymbol{q}$ satisfying $|\boldsymbol{q}|<\omega_0/2$,
the bound \eqref{keyquotientbound} with $\ell=1$ implies 
	\begin{equation} 
		\left|\widehat{\mathcal{C}}_{k+1}^{(\boldsymbol{q})}(p)-\widehat{\mathcal{C}}_{k+1}^{(0)}(p)\right|
		  \leq  \int_{0}^1 \left| \frac{\d}{\d t}\widehat{\mathcal{C}}_{k+1}^{(t\boldsymbol{q})}(p)\right|\,\d t \\
		   \leq |\boldsymbol{q}| K_1 L^{4(d+\tilde{n})+2} \frac{1}{\left|(\widehat{\mathcal{C}}_{k+1}^{(0)}(p))^{-1}\right|}.                                    
	\end{equation}
From this and the bound $\mathrm{Id}/|A^{-1}|\leq A$, we infer that
	\begin{align}
	\widehat{\mathcal{C}}_{k+1}^{(\boldsymbol{q})}(p)-\widehat{\mathcal{C}}_{k+1}^{(0)}(p)
	\leq |\boldsymbol{q}| K_1 L^{4(d+\tilde{n})+2} \widehat{\mathcal{C}}_{k+1}^{(0)}(p).
	\end{align}
The claim now follows for $\boldsymbol{q}\in B_\kappa$  where 
	\begin{align}\label{eq:defofkappa}
	\kappa = \min(\rho L^{-4(d+\tilde{n})-2}/K_1,\omega_0/2).
	\end{align}
Note that here, the lower bounds for the finite range decomposition are essential.
	We can rewrite ${\mathscr{C}}_{k+1}^{(\boldsymbol{q})}\leq (1+\rho)\mathscr{C}_{k+1}^{(0)}$ equivalently as
	 \begin{align}\label{eq:estimateCkqCk0}
	 	({\mathscr{C}}_{k+1}^{(\boldsymbol{q})})^{\frac12}
	 	({\mathscr{C}}_{k+1}^{(0)})^{-1}
	 	({\mathscr{C}}_{k+1}^{(\boldsymbol{q})})^{\frac12}\leq (1+\rho).
	 \end{align}
The constants $\weightzeta_k$ that appear in Lemma \ref{prop:W1} satisfy the inequality  $\weightzeta_k\geq \weightzeta$, 
and we assumed $\weightzeta\in (0,1/4)$. Thus we have $\rho \in (0,1/4)$.
Using this, the bounds  \eqref{eq:akbound} and \eqref{eq:estimateCkqCk0},  for $X\in \mathcal{P}_k$ we estimate
	\begin{align}\begin{split}\label{eq:posofspec}
		(1+\rho)\left(\mathscr{C}_{k+1}^{(\boldsymbol{q})}\right)^{\frac{1}{2}}\boldsymbol{A}_{k}^X
		\left({\mathscr{C}}_{k+1}^{(\boldsymbol{q})}\right)^{\frac{1}{2}}
		  & \leq                                                 
		\frac{(1+\rho)}{1+\weightzeta_k}\left({\mathscr{C}}_{k+1}^{(\boldsymbol{q})}\right)^{\frac{1}{2}}
		\left({\mathscr{C}}_{k+1}^{(0)}\right)^{-1}
		\left({\mathscr{C}}_{k+1}^{(\boldsymbol{q})} \right)^{\frac{1}{2}}(p)\\
		  & \leq \frac{(1+\rho)^2}{(1+\rho)^3}<1-\frac{\rho}{2}. 
		\end{split}\end{align}
Therefore we have shown that the determinant in \eqref{eq:determinant} is non-vanishing.

To complete the proof of \eqref{eq:determinant}, we bound the trace of 
$(\mathscr{C}_{k+1}^{(\boldsymbol{q})})^{\frac12}\boldsymbol{A}_k^X(\mathscr{C}_{k+1}^{(\boldsymbol{q})})^{\frac12}$.
Recall that the operators $\mathscr{C}_{k+1}^{(\boldsymbol{q})}$ and $\boldsymbol{A}_k^X$ can be extended to $\Vcal_N$ so that they annihilate constant fields. 
This extension does not change the trace.
Let $\eta_X:T_N\rightarrow \mathbb{R}$ be a cut-off function such that $\eta_X{\restriction_{X^{++}}}=1$, $\mathrm{supp}(\eta)\subset X^{+++}$, and 
$\eta_X$ satisfies the smoothness estimate 
		\begin{align}\label{eq:derivativeboundcutoff}
		|\nabla^l\eta_X|\leq \Theta L^{-lk}
	\end{align} 
for $l\leq 2M$ where $\Theta$ does not depend on $L$ or $X$. 
We use  $m_{\eta_X}$ to denote the  operator of multiplication by $\eta_X$.
First we note that 
	\begin{align}
		\begin{split}                             
		m_{\eta_X}\boldsymbol{A}_k^Xm_{\eta_X}=\boldsymbol{A}_k^X 
		\end{split}                               
	\end{align}
because $\boldsymbol{A}_k^X$ is self adjoint and  depends only on $\p(x)$ for $x\in X^{++}$.
We observe that, for symmetric operators, the inequality  $A\geq B$ implies that $\mathrm{Tr}\, A\geq \mathrm{Tr}\, B$ which by \eqref{eq:akbound} yields
	\begin{align}
		\begin{split}\label{eq:firsttracestep}
		\tr \left(\mathscr{C}_{k+1}^{(\boldsymbol{q})}\right)^{\frac{1}{2}}\boldsymbol{A}_k^X\left(\mathscr{C}_{k+1}^{(\boldsymbol{q})}\right)^{\frac{1}{2}}
		  & =    
		\tr \left(\mathscr{C}_{k+1}^{(\boldsymbol{q})}\right)^{\frac{1}{2}}
		m_{\eta_X}\boldsymbol{A}_k^Xm_{\eta_X}\left(\mathscr{C}_{k+1}^{(\boldsymbol{q})}\right)^{\frac{1}{2}}\\
		  & \leq 
		\frac{1}{\lambda} 
		\tr \left(\mathscr{C}_{k+1}^{(\boldsymbol{q})}\right)^{\frac{1}{2}}m_{\eta_X}
		\Malt_km_{\eta_X}\left(\mathscr{C}_{k+1}^{(\boldsymbol{q})}\right)^{\frac{1}{2}}\\
		  & =    
		\frac{1}{\lambda} 
		\tr m_{\eta_X}\Malt_k m_{\eta_X}\mathscr{C}_{k+1}^{(\boldsymbol{q})}.
		\end{split}
	\end{align}
Here $\lambda$ is the quantity defined in  \eqref{eq:defoflambda}. 
In particular, $\lambda$ does not depend on $L$.
	
The remaining part of the proof is, up to some minor details, the same as in \cite{AKM16}{Lemma~5.3}.
For the trace calculation we will use the orthonormal basis  $e_x^i(y)=\delta_x^ye^i$ of $\Vcal_N$, where $e^i\in \mathbb{R}^m$ is a standard basis vector. 
Note that 
	\begin{align}
		\left(m_{\eta_X}\mathscr{C}_{k+1}^{(\boldsymbol{q})}e_{x_0}^i\right)(x)=\eta_X(x)\mathcal{C}_{k+1}^{(\boldsymbol{q})}(x-x_0)e^i. 
	\end{align}
For the evaluation of the operator $\Malt_k$, we need the product rule for discrete derivatives that reads
	\begin{align}
		\nabla_i(fg)=\nabla_if S_ig+S_if\nabla_i f 
	\end{align}
where
	\begin{align}
		(S_if)(x)\coloneqq \frac12 f(x)+\frac12 f(x+e_i). 
	\end{align}
The operators $S_i$ commute with discrete derivatives and we use the usual multiindex notation $S^\alpha$ for $\alpha\in \mathbb{N}_0^d$.
This implies that 
	\begin{align}
		\begin{split}
		  & \Malt_k m_{\eta_X}\mathscr{C}_{k+1}^{(\boldsymbol{q})}e_{x_0}^i(\cdot)=       
		  \hspace{-0.2cm}\sum_{|\alpha|\leq M}\hspace{-0.2cm}L^{2k(|\alpha|-1)} 
		\hspace{-0.2cm}\sum_{\substack{\beta_1+\beta_2=\alpha \\ \gamma_1+\gamma_2=\alpha}}\hspace{-0.2cm} 
		K_{\gamma_1, \gamma_2}^{\beta_1, \beta_2}\times\\
		&\hspace{3cm}
		\times(S^{\beta_2})^\ast S^{\gamma_2}(\nabla^{\beta_1})^\ast\nabla^{\gamma_1} \eta(\cdot)
		(S^{\beta_1})^\ast S^{\gamma_1}(\nabla^{\beta_2})^\ast\nabla^{\gamma_2} \mathcal{C}_k^{(\boldsymbol{q})}(\cdot-x_0)e^i
		\end{split}
	\end{align}
	where $K_{\gamma_1, \gamma_2}^{\beta_1, \beta_2}$ is a combinatorial constant.	
	Note that $\lVert S_i\Vert= 1$ where, only here, we use  $\lVert \cdot\rVert$ 
	to denote the operator norm with respect to the maximum norm $\lVert \cdot\rVert_\infty$ on $\Vcal_N$.
	The bound \eqref{eq:discretebounds} for the discrete derivatives of $\mathcal{C}_{k+1}^{(\boldsymbol{q})}$ 
	and the choice of $n\geq 2M$ for the regularity parameter of the finite range decomposition, jointly imply that there is a constant
	$C_M =C_M(L)>0$ such that
	\begin{align}\label{eq:discreteboundagain}
		\sup_{x\in T_N} | \nabla^\alpha\mathcal{C}_{k+1}^{(q)}(x)|\leq C_ML^{-k(d-2+|\alpha|)}\qquad\text{for all $|\alpha|\leq 2M$,} 
	\end{align}
	where $C_M$ is independent of $L$ for $d>2$, but $C_M\propto \ln(L)$ for $d=2$.
	Using this combined with \eqref{eq:derivativeboundcutoff} and \eqref{eq:discreteboundagain}, we get
	\begin{align}
		\begin{split}\label{eq:linftybound}
		  & \lVert \boldsymbol{M}_k m_{\eta_X}\mathscr{C}_{k+1}^{(\boldsymbol{q})}e_{x_0}^i\rVert_{\infty}                              \\
		  & \quad\leq \sum_{|\alpha|\leq M}\hspace{-0.15cm}L^{2k(|\alpha|-1)}\hspace{-0.15cm} 
		\sum_{\substack{\beta_1+\beta_2=\alpha \\ \gamma_1+\gamma_2=\alpha}}\hspace{-0.15cm} K_{\gamma_1, \gamma_2}^{\beta_1, \beta_2}
		\lVert(S^{\beta_2})^\ast
		S^{\gamma_2}(\nabla^{\beta_1})^\ast\nabla^{\gamma_1} \eta\rVert_\infty\times\\
                 &\hspace{6cm}\times
		\lVert (S^{\beta_1})^\ast S^{\gamma_1}(\nabla^{\beta_2})^\ast\nabla^{\gamma_2} \mathcal{C}_{k+1}^{(\boldsymbol{q})}(\cdot-x_0)e^i\rVert_\infty \\
		 & quad\leq                                                                                            
		\sum_{|\alpha|\leq M}L^{2k(|\alpha|-1)}\hspace{-0.15cm}
		\sum_{\substack{\beta_1+\beta_2=\alpha \\ \gamma_1+\gamma_2=\alpha}}\hspace{-0.15cm} K_{\gamma_1, \gamma_2}^{\beta_1, \beta_2} 
		\lVert\nabla^{\beta_1+\gamma_1}\eta\rVert_\infty
		\lVert \nabla^{\beta_2+\gamma_2} \mathcal{C}_{k+1}^{(\boldsymbol{q})}e^i\rVert_{\infty}\\
		  & \quad\leq                                                                                            
		\sum_{|\alpha|\leq M}L^{2k(|\alpha|-1)}\hspace{-0.15cm}
		\sum_{\substack{\beta_1+\beta_2=\alpha \\ \gamma_1+\gamma_2=\alpha}}\hspace{-0.15cm} K_{\gamma_1, \gamma_2}^{\beta_1, \beta_2} 
		\Theta L^{-k(|\beta_1|+|\gamma_1|)}C_M L^{-k(d-2+|\beta_2|+|\gamma_2|)}\\	
		&\quad\leq C_M\Theta K\sum_{l=1}^M L^{2k(l-1)}L^{-k(d-2+2l)}  \leq C_M\Theta KM L^{-kd}=\Omega L^{-kd}.                                                    
		\end{split}
	\end{align}
	Here $K$ is a purely combinatorial constant depending on the $K_{\gamma_1, \gamma_2}^{\beta_1, \beta_2}$.
	
	Using \eqref{eq:firsttracestep} and \eqref{eq:linftybound}, this implies 
	\begin{align}
	\begin{split}
	\label{eq:thirdtracestep}
		\tr \left(\mathscr C_{k+1}^{(\boldsymbol{q})}\right)^{\frac{1}{2}}\boldsymbol A_k^X\left(\mathscr{C}_{k+1}^{(\boldsymbol{q})}\right)^{\frac{1}{2}}
		   &\leq \frac{1}{\lambda}\sum_{x\in \TN}\sum_{i=1}^m (e_x^i,m_{\eta_X}\boldsymbol M_k^{\TN}m_{\eta_X}\mathscr C_{k+1}^{(\boldsymbol{q})}e_x^i) \\
		  &   \leq                                                                                                                                  
		\frac{1}{\lambda}\sum_{x\in \mathrm{supp}(\eta_X)}\sum_{i=1}^m\lVert \boldsymbol M_k^{\TN}m_{\eta_X}\mathscr
		C_{k+1}^{(\boldsymbol{q})}e_x^i\rVert_{\infty}
		  \\ &
		   \leq \frac{\Omega m|X^{+++}| L^{-kd}}{\lambda}                                                                                   
		   \leq \frac{\Omega m(7R+1)^d}{\lambda}|X|_k=\Theta_1|X|_k.                                                                                           
	\end{split}\end{align}
	where $\Theta_1$ depends on $L$ if $d=2$.
	The factor $(7R+1)^d$ arises because $X^{+++}=(X+[-3R,3R]^d)\cap T_N$ for $X\in \mathcal{P}_0$. It could be strengthened to $7^d$ for $k\geq 1$.
	
	The appearance of an $L$-dependent term seems to be only an artefact of the use of a cutoff function.
	Let us show how we can get rid of the $L$-dependence if $X$ is a single block. 
	This shows the second part of the first statement. First, let us consider $0\leq k\leq N-1$.
	Note that by \eqref{eq:discretebounds} there is a constant $C_M'$  independent of $L$ such that
	\begin{align}\label{eq:discreteboundsfinal}
		\sup_{x\in T_N} | \nabla^\alpha\mathcal{C}_{k+1}(x)|\leq C_M'L^{-k(d-2+|\alpha|)}\qquad\text{for all $1\leq|\alpha|\leq 2M$} .
	\end{align}
	Consider the set
	\begin{align}
		T=\bigl\{0, 2L^k, 4L^k \cdots , { (L^{(N-k)}-3)} L^k\bigr\}^d. 	
	\end{align}
	
	Then the blocks $\tau_a(B)$ and $\tau_b(B)$ with  $a,b\in T, a\neq b$, have distance at least $L^k$ for $B\in \mathcal{B}_k$. 
	Therefore we find using properties \ref{it:translation}, \ref{it:monotonicity}, and \ref{it:bound} from 
	Lemma~\ref{prop:W1} and  \ref{it:subadditivityI} from Lemma~\ref{prop:W2}
	\begin{align}\begin{split}
		\tr \mathscr{C}_{k+1}^{\frac12}\boldsymbol A_k^B\mathscr{C}_{k+1}^{\frac12}
		 & =\frac{1}{|T|} \sum_{a\in T} \tr 
		\mathscr{C}_{k+1}^{\frac12}\boldsymbol A_k^{\tau_a(B)}\mathscr{C}_{k+1}^{\frac12}
		\\ &
		 \leq \frac{1}{|T|} \tr                
		\mathscr{C}_{k+1}^{\frac12}\boldsymbol A_k^{\TN}\mathscr{C}_{k+1}^{\frac12}
		 \leq \frac{1}{\lambda|T|} \tr    
		\mathscr{C}_{k+1}^{\frac12}\Malt_k\mathscr{C}_{k+1}^{\frac12}
	\end{split}\end{align}
	This trace is estimated similarly to \eqref{eq:linftybound} using \eqref{eq:discreteboundsfinal} as follows
	\begin{multline}   \label{eq:trace_bound_single_block}
		\frac{1}{\lambda|T|} \tr
		\mathscr{C}_{k+1}^{\frac12}\Malt_k\mathscr{C}_{k+1}^{\frac12}
		= \frac{1}{\lambda|T|} \sum_{x\in T_N}\sum_{i=1}^m (e_x^i, \Malt_k\mathscr{C}_{k+1}e_x^i) 
		 \leq \frac{mL^{Nd}}{\lambda|T|} \lVert\Malt_k\mathcal{C}_{k+1}\rVert_\infty               \\
		 \leq \frac{KmL^{Nd}}{\lambda|T|} \sum_{l=1}^{2M} L^{2l(k-1)}C_M'L^{-k(d-2+2l)}      
		 \leq \frac{C_M'KmL^{Nd}}{\lambda|T|} L^{-kd} ,                                         
	\end{multline}
	where $K$ denotes again a combinatorial constant and none of the constants depends on $L$.
	Using the bound $|T|=(L^{N-k}-1)^d/2^d\geq 4^{-d}L^{(N-k)d}$ we find that there is a constant $\Theta_2$  independent of 
	$L$ such that for all blocks $B\in\mathcal{B}_k$
	\begin{align}
		\tr \mathscr{C}_{k+1}^{\frac12}\boldsymbol A_k^B\mathscr{C}_{k+1}^{\frac12}\leq \Theta_2|B|_k=\Theta_2. 
	\end{align}
Note that for $k=N$ there is only one block and we can use the same argument with $T=\{0\}$.

	The estimate for the determinant is now standard.
	We denote the eigenvalues of the operator 
	$(1+\rho)\left(\mathscr C_{k+1}^{(\boldsymbol{q})}\right)^{\frac{1}{2}}\boldsymbol A_k^X\left(\mathscr C_{k+1}^{(\boldsymbol{q})}\right)^{\frac{1}{2}}$ by
	$\lambda_i$. 
	Recall that $\rho=\rho(\weightzeta)<1/4$ is a constant and the bound  \eqref{eq:posofspec}  on the spectrum implies that $\lambda_i\in [0,1-\rho/2]$.
	Concavity of the logarithm implies $\ln(1-x)\geq  - \frac{\ln(2/\rho)}{1-\rho/2}x$ for $x\in [0,1-\rho/2]$.
	Using this we obtain the bound
	\begin{multline}\label{eq:detbound}
		\ln \det\left(                                                                                                                                    
		\1-(1+\rho)\left(\mathscr C_{k+1}^{(\boldsymbol{q})}\right)^{\frac{1}{2}}\boldsymbol A_k^X\left(\mathscr C_{k+1}^{(\boldsymbol{q})}\right)^                                                
		{\frac{1}{2}}\right)                                                                                                                              
		=
		\sum_i \ln(1-\lambda_i)                                                                                                                           
		\geq \frac{\ln(\rho/2)}{1-\rho/2}\sum_i \lambda_i   \\
		=-\frac{\ln(2/\rho)}{1-\rho/2}(1+\rho)\tr \left(\mathscr C_{k+1}^{(\boldsymbol{q})}\right)^{\frac{1}{2}}\boldsymbol A_k^X
		\left(\mathscr C_{k+1}^{(\boldsymbol{q})}\right)^{\frac{1}{2}}.                                                                                                                                   
	\end{multline}
	From \eqref{eq:thirdtracestep} we conclude that, for $A_{\mathcal{P}}\geq 2\exp(\Theta_1(1+\rho)\ln(2/\rho)/(2(1-\rho/2))$,
	\begin{align}
	\begin{split}
	& \ln \det\left( \1-(1+\rho)\left(\mathscr C_{k+1}^{(\boldsymbol{q})}\right)^{\frac{1}{2}}\boldsymbol A^X_k
		\left(\mathscr C_{k+1}^{(\boldsymbol{q})}\right)^ {\frac{1}{2}}\right)   \\
        &\hspace{2cm} \geq -\frac{\ln(2/\rho)}{1-\rho/2}(1+\rho)\tr \left(\mathscr C_{k+1}^{(\boldsymbol{q})}\right)^{\frac{1}{2}}\boldsymbol A_k^X
        \left(\mathscr C_{k+1}^{(\boldsymbol{q})}\right)^{\frac{1}{2}} \geq -2|X|_k\ln\left(\frac{A_{\mathcal{P}}}{2}\right)                                                                                                                    
		\end{split}
	\end{align}
	which implies the claim \eqref{eq:determinant}. 
	Similarly we find the same statement for blocks $B\in \mathcal{B}_k$  for the constant 
	$A_{\mathcal{B}}\geq 2\exp(\Theta_2(1+\rho)\ln(2/\rho)/(2(1-\rho/2))$ which does not depend on $L$.
	
	The integration property \ref{it:integration} follows directly from Gaussian calculus (which is justified because of \eqref{eq:posofspec})
	and the previous point \ref{it:determinant},
	\begin{align}
		\begin{split}
		\int_{\Xcal_N} & e^{\frac{1+\overline{\rho}}{2}(\boldsymbol A_{k}^X(\p+\xi),\p+\xi)}\, \mu_{k+1}^{(\boldsymbol{q})}(\d \xi) \\
		                & =  \left(\det\, \1-(1+\overline{\rho})\left(\mathscr C_{k+1}^{(\boldsymbol{q})}\right)^{\frac{1}{2}}
		                \boldsymbol A_k^X\left(\mathscr C_{k+1}^{(\boldsymbol{q})}\right)^{\frac{1}{2}}\right)^{-\frac{1}{2}}\times\\
		&\hspace{3cm}\times\exp\left(\frac{1}{2}\left(\p,(((1+\overline{\rho})\boldsymbol A_k^X)^{-1}-\mathscr C_{k+1}^{(\boldsymbol{q})})^{-1}\p\right)\right)\\
		                & \leq \left(\frac{A_{\mathcal{P}}}{2}\right)^{|X|_k}
		\exp\left(\frac{1+\overline{\rho}}{2}(\p,((\boldsymbol A_k^X)^{-1}-(1+\rho)^2\mathscr C_{k+1}^{(0)})^{-1}\p)\right) \\ 
		&\leq	\left(\frac{A_{\mathcal{P}}}{2}\right)^{|X|_k}e^{\frac{1+\overline{\rho}}{2}(\p,\boldsymbol A_{k:k+1}^X\p)},
		\end{split}
	\end{align}
	where we again used the monotonicity of the inversion in combination  with
the bound $(1+\overline{\rho})\mathscr C_{k+1}^{(\boldsymbol{q})}\leq
	(1+\rho)^2\mathscr C_{k+1}^{(0)}\leq (1+\weightzeta)\mathscr
	C_{k+1}^{(0)}$
	for $\boldsymbol{q}\in B_\kappa$. If $X$ is a block we can replace $A_{\mathcal{P}}$ by $A_{\mathcal{B}}$.
\end{proof}
Finally, we prove the property Theorem~\ref{th:weights_final}\ref{w:w9}.
\begin{lemma}\label{prop:W4}
	Under the same assumptions 
	as in Theorem~\ref{th:weights_final} and with $\delta$ and $\lambda$ as 
	 in Lemma~\ref{prop:W1}, the norm for the fields can be bounded in terms of the weights as follows.
	 For any polymer $X\in \mathcal{P}_{k+1}$ the  bound
		      \begin{align}
		      	|\p|_{k+1,X}^2\leq                            
		      	(\p,\boldsymbol{A}_{k+1}^X\p)-(\p,\boldsymbol{A}_{k:k+1}^X\p) 
		      \end{align}
		      holds if $h\geq h_0 = (M'3^{2M'}S /\delta)^{1/2}\coloneqq c_d\delta^{-1/2}$
		      where $M'=\lfloor \frac{d}{2}\rfloor+1$, $S=S(d)$ is the Sobolev
		      constant in Lemma~ \ref{le:Sobolev}, and 
		      $\delta$ is the constant from Lemma~\ref{prop:W1}.
\end{lemma}
\begin{proof}
	This property follows  from the discrete Sobolev inequality stated in the next lemma.

	\begin{lemma}\label{le:Sobolev}
		Let $B_\ell=[0,\ell]^d\cap \mathbb{Z}^d$ and 
		$M'=\lfloor \frac{d}{2}\rfloor+1$. For $f:B_\ell\rightarrow \mathbb{R}$ we define the norm
		\begin{align}
			\lVert f\rVert_{B_\ell,2}=\left(\sum_{x\in B_\ell} |f(x)|^2\right)^{\frac{1}{2}}. 
		\end{align}
		Then the following Sobolev inequality holds for some constant $S(d)>0$ 
		\begin{align}
			\max_{x\in B_\ell}|f(x)|\leq S(d) \ell^{-\frac{d}{2}}\sum_{0\leq |\alpha|\leq M'} \lVert (\ell\nabla)^\alpha f\rVert_2 
		\end{align}
		where we assume that $f$ is defined in a neighbourhood of $B_\ell$ such that all discrete derivatives exist.
	\end{lemma}
	\begin{proof}
	Sobolev already considered such inequalities on lattices, see \cite{Sob40} for a similar statement. 
	Also, a similar claim with $d$ derivatives appeared in   \cite{BGM04}[Proposition~B2] and \cite{BS15I}[Lemma~6.6].
		For the statement above a proof can be found, e.g., in \cite{AKM16}{Appendix A}. 
	\end{proof}
	
	We apply this lemma to the function $\nabla^\alpha\p_i$ for $1\leq |\alpha|\leq p_\Phi=\left\lfloor d/2\right\rfloor + 2$ and the set $B^{\ast}$ for $B\in  \mathcal{B}_{k+1}$. 
	Using that $B\subset B^\ast \subset B^+$ for $B\in \mathcal{B}_{k+1}$ we obtain
	that the side-length of $B^\ast$ is contained in $[L^{k+1},3L^{k+1}]$
	and therefore
	\begin{align}
		\begin{split}  \label{eq:sobolev_applic}
		\max_{x\in B^\ast}|\nabla^\alpha \p_i(x)|^2
		&\leq M'S(d) (L^{k+1})^{-d}\sum_{0\leq |\beta|\leq M'} \lVert (3L^{k+1}\nabla)^\beta\nabla^\alpha\p_i\rVert_{B^\ast,2}^2 \\
		&\leq M'3^{2M'}S(d) L^{-(k+1)d}\sum_{1\leq |\gamma|\leq M} 
		L^{2(|\gamma|-|\alpha|)(k+1)}\left(\nabla^{\gamma}\p_i,\boldsymbol{1}_{B^\ast}\nabla^{\gamma}\p_i\right)                  \\
		&\leq M'3^{2M'}S(d) L^{-(k+1)(d+2|\alpha|-2)}(\p,\boldsymbol{M}_{k+1}^B\p).                                                                        
		\end{split}                                                                                                                    
	\end{align}
	Here we used that $M=M'+\pphi=2\lfloor d/2\rfloor +3$ by \eqref{eq:defofMConstant} and the definition of $\boldsymbol{M}_k^X$ in \eqref{eq:defofMk}. 
	Note that the definition of $\boldsymbol{M}_k^B$ involves the term $\boldsymbol{1}_{B^+}\geq \boldsymbol{1}_{B^\ast}$.
	  Using the definition \eqref{eq:primal_norm} of the primal norm we deduce
	\begin{align}
		\begin{split}
		 \hspace{-1cm}|\p|_{k+1,B}^2 & 
		 =\frac{1}{h_{k+1}^2}\max_{x\in B^\ast}\max_{1\leq i\leq m}\max_{1\leq |\alpha|\leq p_\Phi} L^{(k+1)(d-2+2j)}|\nabla^\alpha\p(x)|^2 \\
		               &  \hspace{1cm}\leq \frac{M'3^{2M'}\Xi(d)}{h_{k+1}^2} (\p,\boldsymbol{M}_{k+1}^B\p)                                       
		                \leq \delta_{k+1}(\p,\boldsymbol{M}_{k+1}^B\p)                                                        
		\end{split}
	\end{align}
	provided that $h^2\geq M'3^{2M'}S(d) /\delta$.
	We can now easily conclude for a general polymer $X\in \Pcal_{k+1}$,
	\begin{align}
		\begin{split}
		 \hspace{-1cm}|\p|_{k+1,X}^2 & =\max_{B\in \mathcal{B}(X)}|\p|_{k+1,B}^2                                            
		 \leq \sum_{B\in \mathcal{B}(X)} \delta_{k+1}(\p,\boldsymbol{M}_{k+1}^B\p)                      
		 \\ & 
		               =\delta_{k+1}(\p,\boldsymbol{M}_{k+1}^X\p)=(\p,\boldsymbol{A}_{k+1}^X\p)-(\p,\boldsymbol{A}_{k:k+1}^X\p). 
		\end{split}
	\end{align}
\end{proof}

\chapter{Estimates for the Basic Operations}\label{sec:normprop}

In this chapter we prove estimates for the basic operations
 in the definition of the RG Map $\boldsymbol{T}_k$ (defined in Chapter~\ref{sec:defRenormTransform}): 
 products, the integration map ${\boldsymbol R}^{(\boldsymbol q)}_{k+1}$ (see \eqref{eq:defRenormMap},
and the extraction map $\Pi_2$ (see \eqref{eq:def_projection}). 
In addition we study the relation of the norms on consecutive scales $k$ and $k+1$. 
In view of the definition \eqref{eq:def_circle_product} of the circle product $\circ$ we essentially only have to deal 
with products of functionals defined on disjoint polymers.
Estimates of such products can also be seen as submultiplicativity properties of the norms defined above.

Here and in the following chapters we assume that for any given $L$ our norms are defined using
 weights as in Theorem~\ref{th:weights_final} with  $\weightzeta$, $M$, $\pphi$, $n$, and $\tilde{n}$ as
 indicated in Chapter \ref{sec:tracking}.
 
\section{Pointwise properties of the norms}  \label{se:pointwise_submult}

Specialising the general properties of norms on Taylor polynomials described in Appendix~\ref{se:norms_polynomials}
to the (injective) tensor  norms defined in   \eqref{eq:tensor_norm}
and the dual norm in 
 \eqref{eq:def_k_X_Tp_norm}
we obtain the following result.

Recall (see Section~\ref{se:polymers}) that $\vec{E}(X^\ast)$ denotes the set of directed edges in the small neighbourhood $X^\ast$ of a polymer  $X \in \mathcal{P}_k$. 
 Lemma~\ref{le:shiftinvvsgradients} states that functional 
$F:\mathcal{P}_k\times\Vcal_N\rightarrow \mathbb{R}$
 is local and shift invariant  if and only if 
 for each $X\in \mathcal{P}_k$
 the map $\p \mapsto F(X,\p)$ is measurable with respect to the $\sigma$-algebra generated by
 $\nabla\p{\restriction_{\vec{E}(X^\ast)}}$. Recall also that we always assume $r_0 \ge 3$. 

\begin{lemma} \label{le:norms_pointwise}
Let $X \in \Pcal_k$,  $F,G \in C^{r_0}(\Vcal_N)$ and assume that $F$ and $G$
are measurable with respect to the $\sigma$-algebra generated by $\nabla\p{\restriction_{\vec{E}}(X^\ast)}$.

Then
\begin{align}  \label{eq:product_estimate_concrete}
|FG|_{k,X, T_\p} \le |F|_{k,X,T_\p} \, |G|_{k,X, T_\p}
\end{align}
and  
 \begin{align}  \label{eq:two_norm_concrete}
  |F|_{k+1,X,T_\p} \le (1 + |\p|_{k+1,X})^3  \big(   |F|_{k+1,X, T_0} + 16 L^{-\frac32 d} \sup_{0 \le t \le 1} |F|_{k,X, T_{t \p}}\big).
\end{align}
\end{lemma}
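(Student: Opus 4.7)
The product estimate \eqref{eq:product_estimate_concrete} is essentially a packaging of two general facts: (i) the truncated product of the Taylor polynomials of $F$ and $G$ at $\p$ equals the Taylor polynomial at $\p$ of $FG$, and (ii) the dual norm $|\cdot|_{k,X}$ on polynomials of degree $\le r_0$ is submultiplicative with respect to truncated products, since it is defined by duality against the injective tensor norm on $(\Vcal_N,|\cdot|_{k,X})^{\otimes r}$. I would deduce \eqref{eq:product_estimate_concrete} directly by invoking the abstract version in Proposition~\ref{pr:product_estimate_taylor} applied to the standard example of Example~\ref{Ex:T0norm}; nothing in this step is scale--sensitive, so neither the weights $\wpzc_k$ nor the locality of $F,G$ enter beyond ensuring that the relevant Taylor polynomials at $\p$ are well defined.

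The two--norm estimate \eqref{eq:two_norm_concrete} rests on comparing the weights at consecutive scales. From the definition $\wpzc_j(\alpha)=h_j L^{-j|\alpha|}L^{-j(d-2)/2}$ with $h_j=2^j h$ and $|\alpha|\ge 1$, one has $\wpzc_k(\alpha)/\wpzc_{k+1}(\alpha)\ge L^{d/2}/2$, so by duality any homogeneous $P^{(r)}$ of degree $r$ satisfies
\[
|P^{(r)}|_{k+1,X}\le (2L^{-d/2})^{r}\,|P^{(r)}|_{k,X}.
\]
For $r\ge 3$ and $L$ large enough that $2L^{-d/2}\le 1$, this gives $(2L^{-d/2})^r\le 8L^{-3d/2}$, so the degree $\ge 3$ part of $\mathrm{Tay}_\p F$ contributes at most $8L^{-3d/2}|F|_{k,X,T_\p}$ to $|F|_{k+1,X,T_\p}$.

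For the degree $\le 2$ part I would Taylor--expand $D^r F(\p)$ at $0$ to total order $2$ with $C^3$ remainder (permissible since $r_0\ge 3$):
\[
D^r F(\p)=\sum_{j=0}^{2-r}\tfrac{1}{j!}D^{r+j}F(0)(\p^{\otimes j})+\int_0^1\tfrac{(1-t)^{2-r}}{(2-r)!}D^3F(t\p)(\p^{\otimes(3-r)})\,\d t\quad(r=0,1,2).
\]
Summing $\tfrac{1}{r!}D^r F(\p)(\dot\p^{\otimes r})$ over $r\le 2$ and collecting terms via the binomial identity, the polynomial part becomes $\mathrm{Tay}_0^{(\le 2)}F(\p+\dot\p)$, whose $|\cdot|_{k+1,X}$ norm I would bound by $(1+|\p|_{k+1,X})^2|F|_{k+1,X,T_0}$ using the tensor contraction inequality $|T(\xi\otimes\cdot)|\le |T|\cdot|\xi|$ for the injective norm. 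The $D^3F(t\p)$ remainder, after the same scale rescaling $(2L^{-d/2})^3\le 8L^{-3d/2}$ and the bound $\tfrac{1}{3!}|D^3F(t\p)|_{k,X}\le |F|_{k,X,T_{t\p}}$, produces, after summing the coefficients $\binom{3}{r}|\p|^{3-r}$ over $r=0,1,2$, a factor $(1+|\p|_{k+1,X})^3-1$ times $8L^{-3d/2}\sup_{t}|F|_{k,X,T_{t\p}}$. Adding the degree $\ge 3$ contribution $8L^{-3d/2}|F|_{k,X,T_\p}$ (which is absorbed into $\sup_t|F|_{k,X,T_{t\p}}$ at $t=1$) gives the coefficient $16=8+8$ in the claimed bound, and $(1+|\p|)^2\le (1+|\p|)^3$ upgrades the $|F|_{k+1,X,T_0}$ prefactor to the stated form.

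The main bookkeeping obstacle is the low--degree part: producing exactly the factor $16$ and the cubic polynomial $(1+|\p|_{k+1,X})^3$ rather than something of the form $C_r\,(1+|\p|_{k+1,X})^{r_0}$. The trick is to use Taylor only up to a $D^3$ remainder (not higher), so the $\p$--powers that arise from the Taylor remainder are exactly $3-r$ and combine via $\sum_r\binom{3}{r}|\p|^{3-r}=(1+|\p|)^3-1$; and to observe that the high--degree part, although it involves $D^rF$ up to $r_0$, picks up at least three factors of $2L^{-d/2}$, so it merges cleanly with the remainder contribution into the single constant $16\,L^{-3d/2}$.
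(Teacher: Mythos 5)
Your proposal is correct and essentially reproduces the paper's argument: the product estimate is exactly the appeal to Proposition~\ref{pr:product_estimate_taylor}, and your hand derivation of the two-norm bound — the scale comparison $|g^{(r)}|_{k,X}\le 2^rL^{-rd/2}\,|g^{(r)}|_{k+1,X}$ coming from $h_{k+1}/h_k=2$ and $|\alpha|\ge 1$, combined with splitting off the degree-$\ge 3$ part and treating the degree-$\le 2$ part by a third-order Taylor remainder with the contraction and polynomial estimates — is precisely the proof of Proposition~\ref{pr:two_norm_new} specialised to $\overline{r}=2$, which the paper instead cites as a black box with $\rho^{(3)}\le 16L^{-3d/2}$ (your bookkeeping in fact yields a constant $8$, which is only stronger). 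The one point you gloss over is that $|\cdot|_{j,X}$ is merely a seminorm on $\Vcal_N$, so the paper first passes to the quotient $\Vcal_N/\!\!\sim$, using the measurability hypothesis to ensure the Taylor polynomials descend there and the dual norms are finite, and noting that the null spaces at scales $k$ and $k+1$ coincide — a small step that is also needed to legitimise your duality argument for $|P^{(r)}|_{k+1,X}\le(2L^{-d/2})^r|P^{(r)}|_{k,X}$.
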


\begin{proof}
The first inequality will follow from Proposition~\ref{pr:product_estimate_taylor} applied to a certain quotient space.
In the following we will define this quotient space and show that
it is a Banach space on which the Taylor polynomials of $F$ and $G$ act.

We first note that $|{\psi}|_{k,X}=0$ implies that $\nabla\psi{\restriction_{\vec{E}(X^\ast)}}=0$
 and therefore by assumption $F(\p+\psi)=F(\p)$ for $\p\in \Vcal_N$.
Hence,  $F$ and $G$ have the property that 
 $(\tay_\p F)(\dot \p + \dot  \psi) = (\tay_\p F)(\dot \p )$ and   $(\tay_\p G)(\dot \p + \dot  \psi) = (\tay_\p G)(\dot \p)$ for all $\dot \p \in \Vcal_N$ 
 and all $\dot \psi \in \Vcal_N$ with $|\dot \psi|_{k,X} = 0$ (recall that $\tay_\p$ denotes the Taylor polynomial of order $r_0$ at $\p$).
 
This implies that the norms in \eqref{eq:product_estimate_concrete}
are finite (see the remark after 
 \eqref{eq:def_k_X_Tp_norm})
 and that the Taylor polynomials  act on the quotient space $\Vcal_N/\! \! \sim$ and on $\oplus_{r=0}^{r_0} (\Vcal_N/ \! \! \sim)^{\otimes r}$
 where $\p \sim \xi$ if and only if $|\xi - \p|_{k,X} = 0$. Moreover $|\cdot|_{k, X}$ is a norm on this quotient space. 
Thus  the assertion follows from 
Proposition~\ref{pr:product_estimate_taylor}.

Similarly for the second inequality we again use that 
$F$ acts on the quotient space $\Vcal_N/ \! \! \sim$ where $\p \sim \xi$ if $|\p - \xi|_{k,X}=0$. 
Since $|\p|_{k,X}=0\Leftrightarrow|\p|_{k+1,X}=0$ 
both $|\cdot|_{k,X}$ and $|\cdot|_{k+1,X}$ define norms on $\Vcal_N/ \! \! \sim$.
We may thus apply 
the two norm estimate  \eqref{eq:two_norm_new}  in Proposition~\ref{pr:two_norm_new}  with the  norms $|g|_{k,X}$ and $|g|_{k+1,X}$
 and $\overline{r}=2$. 
 It follows directly from the definition of the norms $|g|_{j,X}$ in  \eqref{eq:definition_weights},
   \eqref{eq:tensor_norm} and  \eqref{eq:definition_weights_r} (and the fact that $|\alpha_i| \ge 1$)
 that 
 $$ | g^{(r)}|_{k, X} \le 2^r  L^{-r \frac{d}{2}} \,  |g^{(r)}|_{k+1, X} \quad \forall g^{(r)} \in \Vcal_N^{\otimes r}.$$
Here we used in particular that $h_{k+1}/ h_k = 2$.
 Thus the quantity $\rho^{(3)}$ in Proposition~\eqref{pr:two_norm_new} satisfies
 $$\rho^{(3)} \le 16 L^{-\frac32 d}.$$
 Therefore the   two norm estimate   \eqref{eq:two_norm_new} 
with $\overline{r}=2$
implies \eqref{eq:two_norm_concrete}.
\end{proof}

\begin{lemma}\label{le:submultofsimplenorm}Let $\p \in  \Vcal_N$. Then 
\begin{enumerate}[label=\roman*)]
	\item 
	for any $F_1, F_2\in M(\mathcal{P}_k)$ and any (not necessarily disjoint) $X_1,X_2\in \mathcal{P}_k$, we
	have
	\begin{align}   \label{eq:submultofsimplenorm1}
		|F_1(X_1)F_2(X_2)|_{k,X_1\cup X_2,T_\p}\leq |F_1(X_1)|_{k,X_1,T_\p} \, |F_2(X_2)|_{k,X_2,T_\p};
	\end{align}
	\item 
	for any $F\in M(\mathcal{P}_k)$ and any polymer $X\in\mathcal{P}_k$ the bound 
	\begin{align}\label{eq:Fk+1smFk}
		|F(X)|_{k+1,\pi(X),T_\p}   \leq  |F(X)|_{k,X\cup \pi(X),T_\p }   \leq |F(X)|_{k,X,T_\p} 
	\end{align}
	holds if $L\geq 2^d+R$.
	\end{enumerate}
\end{lemma}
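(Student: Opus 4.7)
The plan is to reduce both parts to a combination of monotonicity of the primal norm in the polymer variable and the pointwise product estimate already proved in Lemma~\ref{le:norms_pointwise}. First I record the key monotonicity: if $Y_1 \subseteq Y_2$ are polymers at the same scale, then $Y_1^\ast \subseteq Y_2^\ast$, and the primal norm \eqref{eq:primal_norm} satisfies $|g|_{j,Y_1} \le |g|_{j,Y_2}$ for every test function $g$, since the supremum is now taken over a larger set. By duality and the definition \eqref{eq:def_k_X_Tp_norm}, this gives the reverse inequality $|F|_{j,Y_2,T_\p} \le |F|_{j,Y_1,T_\p}$ whenever $F$ has enough locality for the right-hand side to be finite.

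For part (i), $F_i(X_i)$ is measurable with respect to $\nabla\p{\restriction_{\vec{E}(X_i^\ast)}}$ and hence, using $X_i^\ast \subseteq (X_1\cup X_2)^\ast$, also with respect to $\nabla\p{\restriction_{\vec{E}((X_1\cup X_2)^\ast)}}$. The pointwise product estimate \eqref{eq:product_estimate_concrete} applied at the polymer $X_1 \cup X_2$ therefore yields
\begin{align*}
|F_1(X_1)F_2(X_2)|_{k, X_1\cup X_2, T_\p} \leq |F_1(X_1)|_{k, X_1\cup X_2, T_\p}\, |F_2(X_2)|_{k, X_1\cup X_2, T_\p},
\end{align*}
and the monotonicity above bounds each factor on the right by the norm taken on the smaller polymer $X_i$, giving \eqref{eq:submultofsimplenorm1}.

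For part (ii), the second inequality $|F(X)|_{k, X\cup\pi(X), T_\p} \le |F(X)|_{k, X, T_\p}$ is just the monotonicity applied to $X \subseteq X \cup \pi(X)$. For the first inequality I compare the primal norms $|g|_{k, X\cup \pi(X)}$ and $|g|_{k+1, \pi(X)}$ directly. The ratio of weights \eqref{eq:definition_weights} is
\begin{align*}
\frac{\wpzc_{k+1}(\alpha)}{\wpzc_k(\alpha)} = 2\, L^{-|\alpha|-(d-2)/2} \le 1
\end{align*}
for $L \ge 2$, so $\wpzc_k(\alpha)^{-1} \le \wpzc_{k+1}(\alpha)^{-1}$ for every admissible $\alpha$. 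At the same time, the hypothesis $L \ge 2^d + R$ allows us to invoke \eqref{eq:pisetincl}, which gives $X^\ast \subseteq \pi(X)^\ast$ with $X^\ast$ the scale-$k$ small neighbourhood of the $k$-polymer $X$ and $\pi(X)^\ast$ the scale-$(k+1)$ small neighbourhood of the $(k+1)$-polymer $\pi(X)$. Combining this with the trivial inclusion of $\pi(X)$ (and its scale-$k$ neighbourhood) into $\pi(X)^\ast$, we obtain $(X\cup\pi(X))^\ast \subseteq \pi(X)^\ast$. Increasing the supremum domain and the weights simultaneously yields $|g|_{k, X\cup\pi(X)} \le |g|_{k+1, \pi(X)}$, and dualising gives the desired bound on $F(X)$.

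The main bookkeeping point, which I expect to be the only subtle part of the argument, is the careful interpretation of the $\ast$ operation at the two different scales in part (ii): in $|F(X)|_{k+1, \pi(X), T_\p}$ one must treat $\pi(X)$ as a $(k+1)$-polymer so that $\pi(X)^\ast$ is the scale-$(k+1)$ small neighbourhood appearing on the right of \eqref{eq:pisetincl}, whereas in $|F(X)|_{k, X\cup\pi(X), T_\p}$ the union is a $k$-polymer and its $\ast$ is the scale-$(k-1)$ neighbourhood. With these conventions the lemma reduces to the general principle that enlarging the polymer or moving one scale upward makes the primal norm larger and, dually, the norm on functionals smaller.
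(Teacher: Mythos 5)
Your proof is correct and follows essentially the same route as the paper: part (i) via the pointwise product estimate \eqref{eq:product_estimate_concrete} on $X_1\cup X_2$ together with monotonicity of the primal norm in the polymer, and part (ii) by reducing the first inequality to the field-norm comparison $|\p|_{k,X\cup\pi(X)}\le|\p|_{k+1,\pi(X)}$ using the inclusion $X^\ast\subset\pi(X)^\ast$ from \eqref{eq:pisetincl} and the weight ratio $h_{k+1}^{-1}L^{(k+1)(\frac{d-2}{2}+|\alpha|)}\ge h_k^{-1}L^{k(\frac{d-2}{2}+|\alpha|)}$. You are in fact slightly more careful than the paper in noting that the scale-$k$ neighbourhood of $\pi(X)$ (and not only $X^\ast$) must be absorbed into the scale-$(k+1)$ neighbourhood $\pi(X)^\ast$, which is a harmless and correct addition.
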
 
\begin{proof} In view of \eqref{eq:product_estimate_concrete} (applied with $X = X_1 \cup X_2$, $F =F_1$ and $G= F_2$) the first inequality follows from the bound
	\begin{align}  \label{eq:pointwise_monotone_in_X}
		|F(X)|_{k,X\cup Y,T_\p} \leq|F(X)|_{k,X,T_\p} 
	\end{align}
	which itself is a consequence of the estimate $|{\p}|_{k,X}\leq |{\p}|_{k,X\cup Y}$
	and the definition of the dual norm $|\cdot|_{k,X,T_\p}$ in \eqref{eq:def_k_X_Tp_norm}.
	
	The second inequality in  \eqref{eq:Fk+1smFk} follows from   \eqref{eq:pointwise_monotone_in_X}. To prove the first inequality in 
	 \eqref{eq:Fk+1smFk}
	it is sufficient to show that for any polymer $X\in\mathcal{P}_k$ and any $\p\in \Vcal_N$ the primal norms satisfy the estimate 
		\begin{align} \label{eq:Fk+1smFk_dual}
		|\p|_{k,X\cup \pi(X)}\leq  |\p|_{k+1,\pi(X)} 
	\end{align}
for $L\geq 2^d+R$. 
Note that by \eqref{eq:pisetincl} the condition $L\geq 2^d+R$
	implies that $X^\ast \subset \pi(X)^\ast$. 
This fact and  the bound
	\begin{align}
	h_{k+1}^{-1}L^{(k+1)(\frac{d-2}{2}+|\alpha|)}\geq
	h_k^{-1}L^{k(\frac{d-2}{2}+|\alpha|)}\frac{L}{2}\geq h_k^{-1}L^{k(\frac{d-2}{2}+|\alpha|)}\end{align} 
	for $|\alpha|\geq 1$ imply  \eqref{eq:Fk+1smFk_dual}.
	
\end{proof}

\section{Submultiplicativity of the norms}  \label{se:norms_submult}

\begin{lemma}\label{le:submult}
	Assume that $L\geq 2^{d+3}+16R$ odd, and $h\geq h_0(L)$ where
	$h_0(L)$  is specified in  \eqref{eq:definition_h0} in Theorem~\ref{th:weights_final}.
	Let $K\in M(\mathcal{P}_k)$
	factor at scale $0\leq k\leq N-1$ and let $F, F_1, F_2, F_3\in M(\mathcal{B}_k)$. Then the following
	bounds hold:
\begin{enumerate}[label=(\roman*),leftmargin=0.7cm]
		\item \label{it:norm1}
			For every $X\in \Pcal_k$	
		 \begin{align}
		 \lVert K(X)\rVert_{k,X}\leq \prod_{Y\in \mathcal{C}(X)}
		      \lVert
		      K(Y)\rVert_{k,Y}
		     \\
		      \lVert K(X)\rVert_{k:k+1,X}\leq \prod_{Y\in \mathcal{C}(X)} \lVert
		      K(Y)\rVert_{k:k+1,Y}.
		      \end{align}
		      More generally the same bounds hold for any decomposition
		      $X=\bigcup_i Y_i$ such that the $Y_i$ are strictly disjoint;
		\item \label{it:norm2}
		For every $X,Y\in \mathcal{P}_k$ with $X$ and $Y$ disjoint
		\begin{align}
		\lVert K(Y) F^X\rVert_{k,X\cup Y}\leq \lVert
		      K(Y)\rVert_{k,Y}\vertiii{F}_k^{|X|_k};
		\end{align}
		\item \label{it:norm2b}
		For any polymers $X,Y,Z_1,Z_2\in \mathcal{P}_k$ such that $X\cap Y=\emptyset$,  $Z_1\cap Z_2=\emptyset$, and $Z_1, Z_2\subset
		      \pi(X\cup Y)\cup X\cup Y$
		      \begin{align}
		      \lVert F_1^{Z_1}F_2^{Z_2}F_3^{X}K(Y)\rVert_{k+1,\pi({X\cup Y})}\leq \lVert
		      K(Y)\rVert_{k:k+1,Y}\vertiii{F_1}_k^{|Z_1|_k}\vertiii{F_2}_k^{|Z_2|_k}
		      \vertiii{F_3}_k^{|X|_k};
		      \end{align}
		\item \label{it:norm3}For  $B\in \mathcal{B}_k$ 
		\begin{align}
		\vertiii{\mathbb{1}(B)}_{k,B}=1.
		\end{align}
	\end{enumerate}
\end{lemma}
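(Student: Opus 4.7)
The plan is to reduce each of the four bounds to a combination of the pointwise Taylor product estimate from Lemma~\ref{le:submultofsimplenorm} with the factorisation/(super)multiplicativity properties of the weight functions established in Theorem~\ref{th:weights_final}. In every case I would first bound the pointwise Taylor seminorm of a product of functionals by a product of seminorms, then control the reciprocal weight by an inequality running in the opposite direction, and finally take $\sup_\p$ using $\sup\prod\leq\prod\sup$. Claim (iv) is immediate: $\mathbb{1}(B)(\p)\equiv 1$, so its Taylor polynomial is the constant $1$ with $|\mathbb{1}(B)|_{k,B,T_\p}=1$, while $\boldsymbol{G}_k^B\geq 0$ gives $W_k^B(\p)\geq 1$ with equality at $\p=0$, hence $\vertiii{\mathbb{1}(B)}_{k,B}=1$.

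For part (i) in the weak norm, the factorisation hypothesis on $K$ and iteration of \eqref{eq:submultofsimplenorm1} give $|K(X)|_{k,X,T_\p}\leq\prod_{i}|K(Y_i)|_{k,Y_i,T_\p}$ for any strictly disjoint decomposition $X=\bigcup_i Y_i$, and property \ref{w:w3} of Theorem~\ref{th:weights_final} supplies the exact factorisation $w_k^X=\prod_i w_k^{Y_i}$; dividing and taking the supremum gives the bound. For part (ii) the pointwise step combines the same product estimate applied to $F^X=\prod_{B\in\mathcal{B}_k(X)}F(B)$ together with $K(Y)$, while the weight side uses properties \ref{w:w5} and \ref{w:w4b} to write $w_k^{X\cup Y}(\p)\geq w_k^Y(\p)W_k^X(\p)=w_k^Y(\p)\prod_{B\in\mathcal{B}_k(X)}W_k^B(\p)$, after which the supremum over $\p$ produces $\|K(Y)\|_{k,Y}\,\vertiii{F}_k^{|X|_k}$.

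For part (iii) I would set $U:=\pi(X\cup Y)$ and first use the change-of-scale inequality \eqref{eq:Fk+1smFk} of Lemma~\ref{le:submultofsimplenorm} together with the hypothesis $Z_1,Z_2\subset\pi(X\cup Y)\cup X\cup Y$ to dominate $|F_1^{Z_1}F_2^{Z_2}F_3^X K(Y)|_{k+1,U,T_\p}$ by the same quantity at scale $k$ on the polymer $\pi(X\cup Y)\cup X\cup Y$; iterated application of \eqref{eq:submultofsimplenorm1} then splits this into $|K(Y)|_{k,Y,T_\p}$ times the product of block contributions $|F_i(B)|_{k,B,T_\p}$. The weight is handled by applying property \ref{w:w6} of Theorem~\ref{th:weights_final} with the scale-$k$ polymer $X\cup Y$ and then monotonicity \ref{w:w1}, producing $w_{k+1}^U(\p)\geq w_{k:k+1}^Y(\p)\,(W_k^{U^+}(\p))^2$. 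Since $Z_1,Z_2,X\subset U^+$ by \eqref{eq:pisetincl}--\eqref{XastsubsetXplus} and $Z_1\cap Z_2=\emptyset$, every $k$-block $B\subset U^+$ is counted at most twice in the multiset $\mathcal{B}_k(Z_1)\sqcup\mathcal{B}_k(Z_2)\sqcup\mathcal{B}_k(X)$, which gives $\boldsymbol{G}_k^{Z_1}+\boldsymbol{G}_k^{Z_2}+\boldsymbol{G}_k^X\leq 2\boldsymbol{G}_k^{U^+}$ and therefore $(W_k^{U^+})^2\geq W_k^{Z_1}W_k^{Z_2}W_k^X$. Combining these ingredients and taking $\sup_\p$ yields the desired bound with the strong-norm factors.

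The main obstacle is part (i) in the intermediate $k{:}k{+}1$ norm, where I would need the supermultiplicativity $w_{k:k+1}^X(\p)\geq\prod_i w_{k:k+1}^{Y_i}(\p)$, equivalently $\boldsymbol{A}_{k:k+1}^X\geq\sum_i\boldsymbol{A}_{k:k+1}^{Y_i}$ for strictly disjoint $Y_i$ at scale $k$. Property \ref{w:w4} of Theorem~\ref{th:weights_final} only provides the clean factorisation when the components are separated by at least $\tfrac34 L^{k+1}$, while strictly disjoint $k$-polymers are only guaranteed to be separated by more than $L^k$. The plan is to exploit the series representation $\boldsymbol{A}_{k:k+1}^X=\sum_{n\geq 0}\boldsymbol{A}_k^X(C\boldsymbol{A}_k^X)^n$ from Lemma~\ref{le:matrix_monotone}\ref{it:matrix_monotone4} (with $C=(1+\weightzeta)\mathscr{C}_{k+1}$) combined with the exact additivity $\boldsymbol{A}_k^X=\sum_i\boldsymbol{A}_k^{Y_i}$ of Lemma~\ref{prop:W2}\ref{it:subadditivityI}: expanding $(C\boldsymbol{A}_k^X)^n$ splits each term into a diagonal sum reproducing $\sum_i\boldsymbol{A}_{k:k+1}^{Y_i}$ plus cross terms, and one then verifies, mirroring the elementary scalar identity $f(\alpha+\beta)-f(\alpha)-f(\beta)\geq 0$ for the superadditive function $f(t)=t/(1-tC)$, that the sum of the cross terms is nonnegative as a quadratic form because of the positivity of $C$. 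With this operator inequality in hand, the argument is then identical to the weak-norm case.
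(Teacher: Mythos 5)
Your treatment of parts (ii), (iii), (iv) and of part (i) in the weak norm $\lVert\cdot\rVert_{k,X}$ follows the paper's own route (pointwise submultiplicativity from Lemma~\ref{le:submultofsimplenorm} plus the weight properties \ref{w:w3}, \ref{w:w5}, \ref{w:w4b}, \ref{w:w6}, \ref{w:w1}); in (iii) your multiplicity-two counting of blocks in $Z_1\sqcup Z_2\sqcup X$ is a harmless rephrasing of the paper's step $\bigl(W_k^{U^+}\bigr)^2\geq W_k^{Z_1\cup Z_2}W_k^{X}$, and these parts are fine.

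The gap is in your handling of part (i) for the $\lVert\cdot\rVert_{k:k+1}$ norm. The operator superadditivity you propose, $\boldsymbol A_{k:k+1}^{X}\geq\sum_i\boldsymbol A_{k:k+1}^{Y_i}$ for strictly disjoint $Y_i$, does not follow from expanding the series \eqref{eq:algebraic_integration4}: the cross terms are products of non-commuting positive operators, e.g.\ at first order $\boldsymbol A_k^{Y_1}C\boldsymbol A_k^{Y_2}+\boldsymbol A_k^{Y_2}C\boldsymbol A_k^{Y_1}$ with $C=(1+\weightzeta)\mathscr C_{k+1}$, and such symmetrised products of positive operators are in general \emph{not} positive semidefinite; the scalar identity $f(\alpha+\beta)\geq f(\alpha)+f(\beta)$ does not transfer. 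In fact the inequality is false in general: take $V=\R^2$, $C=\bigl(\begin{smallmatrix}1&c\\ c&1\end{smallmatrix}\bigr)$, $A_1=a\,e_1e_1^T$, $A_2=b\,e_2e_2^T$ with $0<a,b<1$ and $0<c^2<(\tfrac1a-1)(\tfrac1b-1)$; with the extension \eqref{eq:algebraic_integration2} one computes that $f(A_1+A_2)-f(A_1)-f(A_2)$ has determinant $\frac{c^2}{(\alpha\beta-c^2)^2}\bigl(\frac{c^2}{\alpha\beta}-1\bigr)<0$ (where $\alpha=\tfrac1a-1$, $\beta=\tfrac1b-1$), so it is indefinite and the wanted quadratic-form inequality fails for some $\p$. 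What makes the argument work in the paper is not positivity of $C$ but the finite-range property: in the proof of Lemma~\ref{prop:W2} the cross terms $\boldsymbol A_k^{Y_1}C\boldsymbol A_k^{Y_2}$ vanish \emph{identically} once $\mathrm{dist}(Y_1,Y_2)\geq\tfrac34 L^{k+1}$, which yields the exact factorisation \ref{w:w4}, and that is the statement the paper's proof of (i) invokes for the $k{:}k{+}1$ norm. Your observation that strict disjointness at scale $k$ only guarantees separation $>L^k$ is a fair remark about the phrasing of the lemma, but the correct repair is not superadditivity (which is unavailable); it is to note that in every place where the $k{:}k{+}1$ bound is used the pieces are clusters associated with strictly disjoint $(k{+}1)$-polymers, whose mutual distance is at least $L^{k+1}-2(2^dL^k+R)\geq\tfrac34L^{k+1}$ under the standing assumption $L\geq 2^{d+3}+16R$, so that \ref{w:w4} applies directly.
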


\begin{proof} 
	The proof is the same as in \cite{AKM16} with the difference that the definition of the weight functions changed.
	The submultiplicativity from Lemma \ref{le:submultofsimplenorm} reduces the proof to the factorisation of the weight functions
	stated  in Theorem~\ref{th:weights_final}\ref{w:w3}.
	Indeed, for \ref{it:norm1} we observe that
	\begin{align}
		\begin{split}
		\lVert K(X)\rVert_{k,X} & =\sup_{\p\in \Vcal_N} \frac{|K(X)|_{k,X,T_\p}}{w_k^X(\p)}=                                    
		\sup_{\p\in \Vcal_N}
		\frac{\left|\prod_{Y\in\mathcal{C}(X)}K(Y)\right|_{k,X,T_\p}}{\prod_{Y\in\mathcal{C}(X)}w_k^Y(\p)}\\
		                        & \leq \sup_{\p\in \Vcal_N} \prod_{Y\in\mathcal{C}(X)}\frac{\left|K(Y)\right|_{k,Y,T_\p}}{w_k^Y(\p)} 
		\leq \prod_{Y\in\mathcal{C}(X)} \lVert K(Y)\rVert_{k,Y}.
		\end{split}
	\end{align}
	The same proof applies for a general decomposition $X=\bigcup_i Y_i$ into strictly disjoint sets $Y_i$. To prove the 
	estimate for the $\lVert\cdot\rVert_{k:k+1,X}$ norm  it suffices to use property   \ref{w:w4} in  
	Theorem~\ref{th:weights_final} instead of property  \ref{w:w3}.

	The proof of \ref{it:norm2} relies on Theorem~\ref{th:weights_final}~\ref{w:w5} which,
	together with   \eqref{eq:submultofsimplenorm1}, implies
	\begin{align}
		\begin{split}
		 \hspace{-1cm}\lVert K(Y)F^X\rVert_{k,X\cup Y} & \leq \sup_{\p\in                                                                                                            
		\Vcal_N}\frac{|K(Y)|_{k,Y,T_\p}\prod_{B\in\mathcal{B}_k(X)}|F(B)|_{k,B,T_\p}}{w_k^{Y}(\p)W_k^X(\p)} \\
		&  \hspace{-2cm}\leq \sup_{\p\in \Vcal_N}\frac{|K(Y)|_{k,Y,T_\p}}{w_k^X(\p)}\prod_{B\in\mathcal{B}_k(X)}\frac{|F(B)|_{k,B,T_\p}}{W_k^B(\p)} 
		      \leq \lVert K\rVert_{k,X}\vertiii{F}^{|X|_k}_k.                                                                             
		\end{split}
	\end{align}
	
	To prove \ref{it:norm2b} we use property \ref{w:w6}  in Theorem~\ref{th:weights_final} and estimate  \eqref{eq:Fk+1smFk}
	to get 
	\begin{align}
		\lVert F_1^{Z_1}&F_2^{Z_2}F_3^{X}K(Y)\rVert_{k+1,\pi(X\cup Y),T_\p} 
		     \leq   \sup_{\p \in \Vcal_N}  \frac{|F_1^{Z_1}F_2^{Z_2}F_3^{X}  K(Y)|_{k,X\cup Y\cup \pi(X\cup Y), T_\p}}{w_{k:k+1}^{X\cup 
		Y}(\p) \left(W_{k}^{\pi(X\cup Y)^+}(\p)\right)^2}.
	\end{align}
	where for $U \in  \mathcal{P}_{k+1}$ the neighbourhood $U^+$ is given by $U^+ = U + [-L^{k+1}, L^{k+1}]^d \cap T_N$, see \eqref{eq:nghbhdscompact}.
	Now  Theorem~\ref{th:weights_final}~\ref{w:w1} implies that $w_{k:k+1}^{X \cup Y} \ge w_{k:k+1}^Y$. 
	Moreover we have $X \subset \pi(X\cup Y)\cup X\cup Y$,  $Z_1 \cup Z_2\subset \pi(X\cup Y)\cup X\cup Y$  and $Z_1 \cap Z_2 = \emptyset$. Thus the factorisation property  \eqref{eq:w4b}  of the strong weight function yields
	\begin{align}
	 \left(W_{k}^{\pi(X\cup Y)^+}(\p)\right)^2 \ge  W_k^{Z_1 \cup Z_2}(\p)  W_k^{X}(\p)  =   W_k^{Z_1}(\p)  W_k^{ Z_2}(\p)  W_k^{X}(\p).
	\end{align}
Together with    \eqref{eq:submultofsimplenorm1} we get 
\begin{align}
\begin{split}
		\lVert F_1^{Z_1}F_2^{Z_2}F_3^{X}K(Y)&\rVert_{k+1,\pi(X\cup Y),T_\p} \\ 
		&  \hspace{-3cm}\leq                                                                                  
		\sup_{\p\in
		\Vcal_N}\hspace{-0.1cm}\frac{\prod_{B\in\mathcal{B}_k(Z_1)}|F_1(B)|_{k,B,T_\p}\prod_{B\in\mathcal{B}_k(Z_2)}|F_2(B)|_{k,B,T_\p}}
		{W_k^{Z_1}(\p)W_k^{Z_2}(\p)    }%\times\\ 
		 \frac{\prod_{B\in\mathcal{B}_k(X)}|F_3(B)|_{k,B,T_\p}|K(Y)|_{k,Y,T_\p}}{ W_k^X(\p)   w_{k:k+1}^{Y}(\p)}	\\
		&  \hspace{-3cm}\leq\vertiii{F_1}^{|Z_1|_k} \vertiii{F_2}^{|Z_2|_k}\vertiii{F_3}^{|X|_k} \lVert K(Y)\rVert_{k:k+1,Y}  .         
		\end{split}
	\end{align}
	where we used the definition of the norms in the last inequality. 
\end{proof}

\section{Regularity of the integration map}
The next lemma gives the  bound for the renormalisation maps $\boldsymbol{R}_k^{(\boldsymbol{q})}$. Moreover it states regularity of the renormalisation map with
respect to the parameter $\boldsymbol{q}$. This is one of the major differences compared to \cite{AKM16} where the authors have to deal with a
loss of regularity for the $\boldsymbol{q}$ derivatives. The regularity we obtain here is a consequence of the new finite range decomposition from
Theorem \ref{thm:frd} that was constructed in \cite{Buc18}.
\begin{lemma}\label{le:keyboundRk}
Assume that $L\geq 2^{d+3}+16R$ 	
and let $\AP=\AP(L)\geq 1$, $\kappa=\kappa(L) >0$  be the constants  from Theorem~\ref{th:weights_final}
Then for  $\boldsymbol{q}\in B_\kappa$ and $X\in \mathcal{P}_k$
\begin{align}
\label{eq:bdRkl0}
\lVert(\myR_{k+1}^{(\boldsymbol{q})}K)(X)\rVert_{k:k+1,X}\leq 
{\AP}^{|X|_k}\lVert K(X)\rVert_{k,X}.        
\end{align}
Let $X\in\mathcal{P}_k$ be a polymer such that $\pi(X)\in \Pckp$. Then for  $\ell \ge 1$ and $\boldsymbol{q}\in B_\kappa$
\begin{align}
\label{eq:bdRkl1}
\sup_{|\dot{\boldsymbol{q}}| \le 1} 	\lVert \partial^{\ell}_{\boldsymbol{q}}(\myR_{k+1}^{(\boldsymbol{q})}K)(X)
(\dot{\boldsymbol{q}}, \ldots, \dot{\boldsymbol{q}})\rVert_{k:k+1,X}\leq 
C_{\ell}(L) {\AP}^{|X|_k}\lVert K(X)\rVert_{k,X}.                        
\end{align}
	
The same bounds hold with $\AP$ replaced by $\AB$ if $X\in \mathcal{B}_k$ is a single block. 
\end{lemma}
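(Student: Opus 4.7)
The plan is to exploit the fact that $\boldsymbol{R}_{k+1}^{(\boldsymbol{q})}$ acts by convolution with $\mu_{k+1}^{(\boldsymbol{q})}$, so it commutes with differentiation in $\p$; hence $\tay_\p(\boldsymbol{R}_{k+1}^{(\boldsymbol{q})}K)(X,\cdot)=\int\tay_{\p+\xi}K(X,\cdot)\,\mu_{k+1}^{(\boldsymbol{q})}(\d\xi)$. Both bounds will then reduce to the integral estimates for the weights in Theorem~\ref{th:weights_final}, while the derivative bound will additionally invoke the smoothness of Gaussian expectations in $\boldsymbol{q}$ proved in Theorem~\ref{prop:finalsmoothness}.

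For \eqref{eq:bdRkl0}, I would pair with a test function $g\in\Phi$ satisfying $|g|_{k,X}\le 1$, use the triangle inequality for the integral, and estimate pointwise by the weak norm:
\begin{equation*}
|(\boldsymbol{R}_{k+1}^{(\boldsymbol{q})}K)(X)|_{k,X,T_\p}\le\int|K(X)|_{k,X,T_{\p+\xi}}\,\mu_{k+1}^{(\boldsymbol{q})}(\d\xi)\le\|K(X)\|_{k,X}\int w_k^X(\p+\xi)\,\mu_{k+1}^{(\boldsymbol{q})}(\d\xi).
\end{equation*}
Estimate \eqref{eq:w7} with $\overline{\rho}=0$ then bounds the last integral by $(\AP/2)^{|X|_k}w_{k:k+1}^X(\p)$; dividing by $w_{k:k+1}^X(\p)$ and taking the supremum over $\p$ gives \eqref{eq:bdRkl0}. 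For $X=B$ a single block, the same argument with \eqref{eq:w8} in place of \eqref{eq:w7} produces the constant $\AB$.

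For \eqref{eq:bdRkl1}, I would fix $\dot{\boldsymbol{q}}$ with $|\dot{\boldsymbol{q}}|\le 1$ and $g\in\Phi$ with $|g|_{k,X}\le 1$, and note that $\xi\mapsto\langle\tay_{\p+\xi}K(X,\cdot),g\rangle$ depends on $\xi$ only through its values on $X^\ast$, where $L^k\le D:=\diam_\infty(X^\ast)\le C(d,R)|X|_k L^k$ since $X\ne\emptyset$. Theorem~\ref{prop:finalsmoothness} with $p=1+\rho$, where $\rho=(1+\weightzeta)^{1/3}-1$, then gives
\begin{equation*}
\bigl|\partial^\ell_{\boldsymbol{q}}\langle\tay_\p(\boldsymbol{R}_{k+1}^{(\boldsymbol{q})}K)(X),g\rangle(\dot{\boldsymbol{q}}^\ell)\bigr|\le C_{\ell,p}(L)(DL^{-k})^{d\ell/2}\bigl\|\langle\tay_{\p+\cdot}K(X),g\rangle\bigr\|_{L^p(\mu_{k+1}^{(\boldsymbol{q})})}.
\end{equation*}
Bounding the integrand pointwise by $\|K(X)\|_{k,X}w_k^X(\p+\xi)$ and applying \eqref{eq:w7} with $\overline{\rho}=\rho$ yields the $L^p$ bound $\|K(X)\|_{k,X}(\AP/2)^{|X|_k}w_{k:k+1}^X(\p)$ uniformly in $g$; taking the supremum over $g$ and dividing by $w_{k:k+1}^X(\p)$ will produce
\begin{equation*}
\bigl\|\partial^\ell_{\boldsymbol{q}}(\boldsymbol{R}_{k+1}^{(\boldsymbol{q})}K)(X)(\dot{\boldsymbol{q}}^\ell)\bigr\|_{k:k+1,X}\le C_\ell(L)(DL^{-k})^{d\ell/2}(\AP/2)^{|X|_k}\|K(X)\|_{k,X}.
\end{equation*}
Since $(DL^{-k})^{d\ell/2}\le C|X|_k^{d\ell/2}$ and $|X|_k^{d\ell/2}2^{-|X|_k}$ is bounded uniformly in $|X|_k\ge 1$, this concludes \eqref{eq:bdRkl1}. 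For a single block, $D\le(2R+3)L^k$ makes $(DL^{-k})^{d\ell/2}$ an $L$-independent constant and \eqref{eq:w8} gives the bound with $\AB$.

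The main subtlety will be the absorption of the polynomial factor $(DL^{-k})^{d\ell/2}\lesssim|X|_k^{d\ell/2}$ coming from Theorem~\ref{prop:finalsmoothness} into $\AP^{|X|_k}$; this succeeds precisely because \eqref{eq:w7} provides $(\AP/2)^{|X|_k}$, leaving a geometric factor $2^{-|X|_k}$ that dominates any polynomial in $|X|_k$. A secondary technical point is that the application of Theorem~\ref{prop:finalsmoothness} requires $p>1$, which is why \eqref{eq:w7} and \eqref{eq:w8} are formulated with an exponent $1+\overline{\rho}>1$ rather than only $\overline{\rho}=0$.
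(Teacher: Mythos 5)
Your proposal is correct and follows essentially the same route as the paper's proof: for $\ell=0$ it commutes Taylor expansion with the convolution and invokes Theorem~\ref{th:weights_final}~\ref{w:w7} (resp.~\ref{w:w8} for a block), and for $\ell\ge1$ it applies Theorem~\ref{prop:finalsmoothness} with $p=1+\rho$, bounds the $L^p$ norm by the weight integration estimate, and absorbs the polynomial factor $(DL^{-k})^{d\ell/2}$ into the $2^{-|X|_k}$ left over from $(\AP/2)^{|X|_k}$. One small imprecision: your diameter bound $D\le C(d,R)\,|X|_k L^k$ actually requires the hypothesis $\pi(X)\in\Pckp$ (via $X^\ast\subset\pi(X)^\ast$ and $|\pi(X)|_{k+1}\le|X|_k$) and then comes with an extra factor of $L$, namely $D\le 2L^{k+1}|X|_k$ --- harmless here, since the constant $C_\ell(L)$ is allowed to depend on $L$.
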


%\SB{
\begin{remark}
The additional constraint that $\pi(X)\in \Pckp$  is necessary to bound 
derivatives of $\myR_{k+1}^{(\boldsymbol{q})}$ with respect to $\boldsymbol{q}$.
Indeed, Theorem~\ref{prop:finalsmoothness} bounds the derivative for a polymer $X\in \mathcal{P}_k$ in terms of its diameter $D$. But for arbitrary polymers $X$ the diameter can be of order $L^N$ even when $|X|_k=2$ which makes the bound in 
Theorem~\ref{prop:finalsmoothness} useless. We will see later that it sufficient to only have a bound for those $X\in \mathcal{P}_k$ such that $\pi(X)\in \Pckp$.
We expect that Theorem~\ref{prop:finalsmoothness}  is suboptimal and a bound only involving $|X|_k$ instead of the diameter of $X$ holds. Then we could
get rid of the assumption $\pi(X)\in \Pckp$ in  Lemma~\ref{le:keyboundRk}.
\end{remark}
%}

\begin{proof}
We first consider $\ell=0$. Here we argue similar to  \cite{AKM16}.
Since Taylor expansion commutes with convolution we have
\begin{align}
|(\myR_{k+1}^{(\boldsymbol{q})} K)(X)|_{k,X,T_\p} 
\le \int_{\Xcal_N} |K(X)|_{k,X,T_{\p + \xi} } \, \,   \mu_{k+1}^{(\boldsymbol{q})}(\d \xi).
\end{align}
It follows that 
\begin{align}\begin{split}
\lVert (\boldsymbol{R}_{k+1}^{(\boldsymbol{q})}K)(X)\rVert_{k:k+1,X} & \leq \sup_\p 
w_{k:k+1}^{-X}(\p)\int      \left|K(X)\right|_{k,X,T_{\p + \xi}}\,                                       
\mu_{k+1}^{(\boldsymbol{q})}(\d\xi)\\
& \leq \sup_\p w_{k:k+1}^{-X}(\p)\int \lVert K(X)\rVert_{k,X}\; w_k^X(\p+\xi)\,\mu_{k+1}^{(\boldsymbol{q})}(\d\xi) \\
& \leq \left(\frac{\AP}{2}\right)^{|X|_k} \lVert K(X)\rVert_{k,X}                     
\end{split}\end{align}
where we used Theorem~\ref{th:weights_final}~\ref{w:w7}.
 Using  Theorem~\ref{th:weights_final}~\ref{w:w8}, the constant $\AP$ can be replaced by $\AB$ for single blocks.
		
For the derivatives we argue similarly.  First we bound  the diameter of $X$. 
Note that we have $B\cap X\neq \emptyset$ for any block $B\in \mathcal{B}_{k+1}(\pi(X))$ by definition of $\pi$.
This implies $|\pi(X)|_{k+1}\leq |X|_k$. By \eqref{eq:pisetincl}
we have $X^\ast\subset \pi(X)^\ast$. We get using \eqref{eq:distXast}
\begin{align}  \label{eq:integration_bound_diam}
\mathrm{diam}(X^\ast)\leq \mathrm{diam}(\pi(X)^\ast)\leq L^{k+1}|\pi(X)|_{k+1}+2(2^d+R)L^k\leq 2L^{k+1}|X|_k.
\end{align}
	
Next we claim that for $\ell \ge 1$, $p > 1$ and  $D=\mathrm{diam}(X^\ast)$
\begin{align}
\begin{split}  \label{eq:commute_Taylor_diff_Q}
 \hspace{-1cm}\sup_{|\dot{\boldsymbol{q}}| \le 1} &\abs*{ \frac{d^\ell}{dt^{\ell}}_{| t= 0} \boldsymbol{R}_{k+1}^{(\boldsymbol{q} + t  \dot{\boldsymbol{q}})} K(X) }_{k, X, T_\p}
\\ &   \hspace{1cm}\leq C_{p,\ell}(L) (DL^{-k})^{\frac{d \ell }{2}}    \Bigl(\int_{\Xcal_N}   \, |K(X)|_{k,X,T_{\p +\xi}}^p \,  \, \mu_{k+1}^{(\boldsymbol{q})}(d\xi) \Bigr)^{1/p}.
\end{split}	
\end{align}
Indeed, we have
\begin{align}\label{eq:TayRK}
\Big\langle {\textstyle{\tay_\p}}  \frac{d^\ell}{dt^{\ell}}_{| t= 0}   (\boldsymbol{R}_{k+1}^{(\boldsymbol{q} + t \dot{\boldsymbol{q}})} K)(X,\p), g \Big\rangle
=  \frac{d^\ell}{dt^{\ell}}_{| t= 0} \int_{\Xcal_N} \langle \textstyle{\tay_{\p + \xi}} K(X), g \rangle\, \,  \mu_{k+1}^{(\boldsymbol{q} + t \dot{\boldsymbol{q}})}(\d\xi).
\end{align}
Denote the integrand in \eqref{eq:TayRK} by $F(\xi) =  F_{\p, g}(\xi)$ and note that we have
$| F(\xi)| \le |K(X)|_{k,X,T_{\p + \xi}}  \, |g|_{k,X}$.
Passing to absolute values and using Theorem \ref{prop:finalsmoothness} 
with  $\boldsymbol{Q}_1(z)$ being  the generator of the quadratic form  $z \mapsto - (\dot{\boldsymbol q} z^\nabla, z^\nabla)$, we  get 
 \begin{align*}
&   \hspace{-2cm}\,  \left|
\big  \langle  \textstyle{   \tay_\p  } \frac{d^\ell}{dt^{\ell}}_{| t= 0}    \boldsymbol{R}_{k+1}^{(\boldsymbol{q} + t \dot{\boldsymbol{q}})} K(X), g  \big\rangle \right|^p  \\
 \hspace{1cm}\le & \, C_{\ell,p}^p(L) (DL^{-k})^{\frac{d \ell p}{2}}\,   \|  F\|_{L^p(\Xcal, \mu_{k+1}^{(\boldsymbol{q})})}^p \, \lVert\dot{\boldsymbol{q}}\rVert^{p\ell} \\
 \hspace{1cm}\le & \,   C_{\ell,p}^p(L) (DL^{-k})^{\frac{d \ell p}{2}}  \int_{\Xcal_N}   \, |K(X)|_{k,X,T_{\p +\xi}}^p \,  \, \mu_{k+1}^{(\boldsymbol{q})}(d\xi) \,  |g|_{k,X}^p \, \lVert\dot{\boldsymbol{q}}\rVert^{p\ell}.
\end{align*}
Taking the supremum over $g$ with $|g|_{k, X} \le 1$ and over $\dot{\boldsymbol{q}}$ with $\lVert\dot{\boldsymbol{q}}\rVert \le 1$ we get \eqref{eq:commute_Taylor_diff_Q}.

Now   \eqref{eq:integration_bound_diam} implies that $DL^{-k} \le 2 L |X|_k$. 
Using that $x^{d\ell/2}2^{-x}$ is bounded and  \eqref{eq:commute_Taylor_diff_Q} we see that there is another
constant $C_{\ell,p}'(L)$ such that 
\begin{align}
\begin{split}
\sup_{\lVert\dot{\boldsymbol{q}}\rVert \le 1}& \left| \frac{d^\ell}{dt^{\ell}}_{| t= 0} \boldsymbol{R}_{k+1}^{(\boldsymbol{q} + t \dot{\boldsymbol{q}})} K(X) \right|_{k, X, T_\p} \\	
&  \leq C_{\ell,p}'(L)2^{|X|_k}\lVert K(X)\rVert_{k,X}\left(\int  |w_k^X(\p+\xi)|^p\mu_{k+1}^{(\boldsymbol{q})}(\d\xi)\right)^{\frac{1}{p}}.                                                                              
\end{split}
\end{align}
Now we set $p=1+\rho$ where $\rho=(1+\weightzeta)^{1/3}-1$. Then  Theorem~\ref{th:weights_final}~\ref{w:w7} implies 
\begin{align}
\begin{split}
\sup_{\lVert\dot{\boldsymbol{q}}\rVert \le 1} 	\left| \frac{d^\ell}{dt^{\ell}}_{| t= 0} \boldsymbol{R}_{k+1}^{(\boldsymbol{q} + t \dot{\boldsymbol{q}})} K(X) \right|_{k, X, T_\p} 
& \leq C_{\ell}(L)\,\AP^{|X|_k}\lVert K(X)\rVert_{k,X}\; w_{k:k+1}^X(\p).
\end{split}
\end{align}
The conclusion follows by multiplying with $w_{k:k+1}^{-X}(\p)$ and then taking the supremum over $\p$. Again, using Theorem~\ref{th:weights_final}\ref{w:w8} we can replace
$\AP$ by $\AB$ for single blocks.
\end{proof}

%\input{projection_Pi170929.tex}
%%% New subsection on estimates for $\Pi_2$, originally contained in  projection_Pi170929.tex}

\section{The projection \texorpdfstring{$\Pi_2$}{P2} to relevant Hamiltonians}  \label{se:projection_Pi2}
 In this section we introduce the projection $\Pi_2$ to relevant Hamiltonians and prove its 
key properties. The argument is based on a natural duality between relevant monomials in the fields and monomials  on $\Z^d$.
The projection $\Pi_2$ is a very special case of the operator $\loc$ (in fact $\loc_B$)
introduced by Brydges and Slade \cite{BS15II}, except that we do not need to symmetrise between
forward and backward derivatives. Since our situation is much simpler than the general case considered
in \cite{BS15II} we give a self-contained exposition, which follows the strategy  in \cite{BS15II},
  for the convenience of the reader. For $d \le 3$ a more simple-minded proof of the boundedness and contraction 
  properties of $\Pi_2$ was given in Lemma 6.2 and Lemma 7.3  of \cite{AKM16}. This argument can be extended to the case $d > 3$,
  but we prefer to follow the more elegant approach of \cite{BS15II}.
As pointed out in \cite{BS15II}, related questions are discussed in the paper \cite{dBR92} by de Boor and Ron.

Regarding dependencies on the various parameters we recall our convention that we do not indicate dependence 
on the fixed parameters described in Chapter~\ref{sec:tracking}. The parameter $A$ does not enter at all, so we only indicate dependence
on $L$ and $h$.  For the contraction estimate which involves norms on scales $k$ and $k+1$ we use that the ratio $h_{k+1}/ h_k$ is bounded, 
in fact with our choice $h_{k+1}/ h_k = 2$. 
An inspection of the proofs shows that the constants which appear in the rest of this section depend only
on the spatial dimension $d$ the number of components $m$ and the parameter $R = \max(R_0, M)$ where $R_0$ is the range
of the interaction and $M = \pphi + \lfloor d/2 \rfloor + 1 = 2 \lfloor d/2 \rfloor + 3$.

We follow closely the notation of \cite{BS15II}, with the following exception. Since we only deal with forward derivatives
we set 
$$\Ucal = \{e_1, \ldots, e_d\} \simeq \{1, \ldots, d\}$$
 (while in \cite{BS15II} $\Ucal$ is the set $\{ \pm e_1, \ldots, \pm e_d\}$
and we drop various subscripts $+$ which refer to forward derivatives. 

\paragraph{ Relevant monomials in the fields.}
Recall that we declared the following monomials to be relevant in Section~\ref{se:polymers}.
\begin{itemize}[leftmargin=0.4cm]
\item The constant monomial $ \Mscr_{\emptyset}(\{x\})(\p) \equiv 1$;
\item the linear monomials $\Mscr_{i, \alpha}(\{x\})(\p) := \nabla^{i, \alpha} \phi(x) := \nabla^\alpha \phi_i(x)$ \quad for $1 \le |\alpha| \le \lfloor d/2 \rfloor + 1$;
\item the quadratic monomials $ \Mscr_{(i, \alpha), (j, \beta)}(\{x\})(\p) = \nabla^{\alpha}\p_i(x) \, \nabla^\beta \p_j(x)$ \quad for $ |\alpha| = |\beta|= 1$.
\end{itemize}
We introduced the corresponding index sets (recall that $\Ucal = \{e_1, \ldots, e_d\}  \simeq \{1, \ldots, d\}$)
\begin{align}
\begin{split}
\mf v_0 := \{ \emptyset\}, 
\quad \mf v_1 := \{ (i, \alpha) : 1 \le i \le m, \, \alpha \in {\mathbb{N}}_0^{\mathcal U}, 1 \le |\alpha| \le \lfloor d/2\rfloor +1 \},
\\
% \end{align}
%\begin{align}
\mf v_2 := \{ \big(i, \alpha), (j, \beta)\big)   : 1 \le i,j  \le m, \,  \alpha, \beta \in {\mathbb{N}}_0^{\mathcal U},\,  |\alpha|= |\beta| = 1, \, (i, \alpha) \le (j, \beta)  \}.
\end{split}
\end{align}
and 
$\mf v = \mf v_0 \cup \mf v_1 \cup \mf v_2$. 
Here $(i, \alpha) \le (j, \beta)$ refers to any ordering on $\{1, \ldots, m\} \times \{e_1, \ldots, e_d\}$, e.g. lexicographic. 
We use ordered indices to avoid double counting since $ \Mscr_{(i, \alpha), (j, \beta)}(\{x\})(\p) =  \Mscr_{(j, \beta), (i, \alpha)}(\{x\})(\p)$.

In the following we will always consider levels $k$ with
$$ 0 \le k \le N-1.$$
For  a $k$-block $B$ and  $\mpzc \in \mf v$ we define
\begin{align} \Mscr_\mpzc(B) = \sum_{x \in B}  \Mscr_\mpzc(\{x\}).
\end{align}
We denote by $\Rcal = \Rcal_0 \oplus  \Rcal_1 \oplus \Rcal_2 $ the space of relevant Hamiltonians, with
\begin{align}
\Rcal_0 = \R, \quad \Rcal_1 = \Span \{ M_\mpzc(B) : \mpzc \in \mf v_1\}, 
\quad \Rcal_2 = \Span \{ M_\mpzc(B) : \mpzc \in \mf v_2\}.
\end{align}

Given a local functional $K(B)$ we want to extract a 'relevant' part $H= \Pi_2 K(B) \in \Rcal$
in such a way that the functional $K(B) - \Pi_2 K(B)$ measured in the next scale norm $\| \cdot \|_{k+1, B}$
is much smaller than $K(B)$ measured in the $\| \cdot \|_{k,B}$ norm, see Lemma~\ref{le_contraction_I} below.
 This is not true without extraction as can 
be seen by considering the constant functional. In fact we need to gain a factor which is small compared to $L^{-d}$
(to compensate the effect of reblocking which combines $L^d$ blocks on the scale $k$ to a single block on the scale $k+1$)
and for this we need to extract exactly the elements of $\Rcal$. 

We will show that $H = \Pi_2 K(B)$ can be characterised in the following way. 
Let $K^{(0)} + K^{(1)} 
+ K^{(2)} $ denote the second order Taylor polynomial of $K$ at $0$ written as 
a sum of the constant, linear and quadratic part. We will show that there exist unique 
$H^{(i)} \in \Rcal_i$ such that
\begin{align}
H^{(0)} = &\, K^{(0)};   \label{eq:condition_H0}\\
H^{(1)}(\p) = &\,  K^{(1)}(\p) \quad \hbox{for all $\p$ such that  $\p{\restriction_{B^+}}$ is a polynomial of degree $\le \lfloor d/2\rfloor + 1$;}      \label{eq:condition_H1} \\
H^{(2)}(\p) = & \, K^{(2)}(\p)  \quad \hbox{for all $\p$ such that $\p{\restriction_{B^+}}$ is a linear map.}  \label{eq:condition_H2}
\end{align}
Here the large set neighbourhood $B^+$ was defined in  \eqref{E:X+}.
We  then define $H = \Pi_2 K$ by $H= H^{(0)} + H^{(1)} + H^{(2)}$.

We can write this in a more concise notation by using the dual pairing $\langle K, g \rangle_0$ introduced in  
\eqref{eq:pairing_P_g} and \eqref{eq:pairing_F_g}.
Before we do so we note that both $H(\p)$ and $K(B)(\p)$ depend only on values of the field on the set  $B^+$ if $L\geq 2^d+R$
(see Section \ref{se:polymers}).

Since $k \le N-1$ the enlarged block  $B^{++}$ does not wrap around 
the torus $T_N$ for $L\geq 7$ and we can view $B^{++}$ as a subset of $\Z^d$ rather than 
of $T_N$. Note that $\nabla^\alpha \p_i(x)$ for $|\alpha|\leq \pphi$
and $x\in B^\ast$ only depends on $\p{\restriction_{B^{++}}}$ for $L\geq 2^d+R$
since by \eqref{XastsubsetXplus} $B^\ast\subset B^+$.

 We will thus consider in this section the space of fields
\begin{equation}   \label{eq:fields_on_Zd}  \Xcal =(\R^m)^{B^{++}}/ N_{k,B}  
\end{equation}
equipped with the norm 
$$|\p|_{k, B} = \frac{1}{h_k}\sup_{x \in B^*} \sup_{1 \le |\alpha| \le \pphi} \sup_{1 \le i \le m} L^{k |\alpha|}  L^{k \frac{d-2}{2}} |\nabla^\alpha \p_i(x)|
$$
where 
$$ N_{k,B} = \{ \p \in (\R^m)^{B^+} : | \p|_{k, B} = 0\}.$$
Note that $N_{k, B}$ contains in particular the constant functions. 

\paragraph{ Polynomials on $\Z^d$.}
We introduce a convenient basis for polynomials on $\Z^d$ as follows.
For $t \in \Z$ and $k \in \NN$ we define the polynomial
$$ t \mapsto \binom{t}{k} := \frac{     t (t-1) \ldots (t-k+1)    }{k!}    $$
and we extend this by $\binom{t}{0} =1$ and $\binom{t}{k} = 0$ if $k  \in \Z \setminus \NN_0$. Then 
$\nabla \binom{t}{k} = \binom{t}{k-1}$ where $\nabla$ denotes  the one dimensional forward difference
operator.
For a multiindex $\alpha \in \NN_0^{ \{ 1, \ldots, d  \} }$ and $z \in \Z^d$ define
\begin{equation} \label{eq:basic_polynomials}
 b_\alpha(z) = \binom{z_1}{\alpha_1}  \ldots \binom{z_d}{\alpha_d}. 
 \end{equation}
Then
\begin{equation}   \label{eq:duality_basic_polynomials}
 \nabla^\beta b_\alpha = b_{\alpha-\beta}.
\end{equation}
This relation leads to a natural duality between monomials in $\nabla$ and polynomials on $\Z^d$.
Finally we set 
\begin{align}
b_{(i, \alpha)}(z) = b_\alpha(z) e_i,
\end{align}
where $e_1, \ldots e_m$ is the standard basis of $\R^m$, and
\begin{align} \label{eq:def_polynom_tensor_new}
b_{\mpzc} = b_{i, \alpha} \otimes b_{j, \beta}    \quad \hbox{for $\mpzc =((i, \alpha), (j, \beta))$}.
\end{align}

We also define the normalized symmetrised tensor products
\begin{align} \label{eq:def_polynom_tensor}
f_{\mpzc} = N_{\mpzc} b_{\mpzc} = N_\mpzc \frac12 \big(  b_{i, \alpha} \otimes b_{j, \beta}  +  b_{j, \beta} \otimes b_{i, \alpha} \big)
  \quad \hbox{for $\mpzc = ((i, \alpha), (j, \beta))$.}
\end{align}
where 
\begin{align}  \label{eq:tensor_normalisation}  
N_{(i, \alpha), (j,\beta)} = \begin{cases} 1 & \hbox{if $(i, \alpha) = (j, \beta)$,}\\
2 & \hbox{if $(i, \alpha) \ne (j, \beta)$.}
\end{cases}
\end{align}

This agrees with the much more general  definition $N_{\mpzc} = \frac{|\overset{\to}{\Sigma}(\mpzc)|}{|\Sigma_0(\mpzc)|}$.
in (3.9) of \cite{BS15II}. There  $\overset{\to}{\Sigma}(\mpzc)$ denotes  the group of permutation that fix the species and $\overset{\to}{\Sigma}_0$
is the subgroup that fixes $\mpzc = (\mpzc_1, \mpzc_2)$. In our case there is only one species so that  $\overset{\to}{\Sigma}(\mpzc)$ is simply the
group of permutations of two elements and $\overset{\to}{\Sigma}_0(\mpzc) =  \overset{\to}{\Sigma}(m)$ if $\mpzc_1 = \mpzc_2$
and $\overset{\to}{\Sigma}_0 = \{ \mathrm{id} \}$ otherwise.

We now define the subspaces $\Pcal_k \subset {\Xcal}^{\otimes k}$ of (equivalence classes of) functions by
\begin{align}
\Pcal_0 := \R,  \quad \Pcal_1= \Span \{ b_{(i, \alpha)} : (i, \alpha) \in \mf v_1\},  \quad 
\Pcal_2 := \Span \{ f_\mpzc : \mpzc \in \mf v_2\}.
\end{align}
and we set $\Pcal = \Pcal_0 \oplus \Pcal_1 \oplus \Pcal_2$.

\paragraph{ Definition and properties of the projection $\Pi_2$.}
\begin{lemma} \label{le:exists_Pi2}  Let $K \in M(\Pck)$  and let $B$ be a $k$-block. Then there exists one and only one $H \in \Rcal$ such 
that
\begin{align} \label{eq:def_Pi2}
\langle H, g \rangle_0 = \langle K(B), g \rangle_0 \quad \forall g \in \Pcal.
\end{align}
\end{lemma}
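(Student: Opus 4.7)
The plan is to reduce the lemma to the non-degeneracy of the natural pairing $\langle\cdot,\cdot\rangle_0\colon \Vcal\times\Pcal\to\R$, since $\dim\Vcal=\dim\Pcal$ (both have natural bases indexed by $\mf v=\mf v_0\cup\mf v_1\cup\mf v_2$), so an injective linear system $\langle H,\cdot\rangle_0=\langle K(B),\cdot\rangle_0$ on $\Pcal$ has a unique solution.

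First I would observe that each $H\in\Vcal_r$ is a homogeneous polynomial in $\p$ of degree exactly $r$, and each $g\in\Pcal_s$ is a test function supported on the $s$-th tensor factor in $\Phi$; consequently $\langle H,g\rangle_0=0$ whenever $H\in\Vcal_r$ and $g\in\Pcal_s$ with $r\ne s$. Thus the system decouples as three independent problems, one on each $\Vcal_r\times\Pcal_r$, and it suffices to show that each pairing matrix is invertible. The case $r=0$ is trivial, as $\Vcal_0=\Pcal_0=\R$ and the pairing is multiplication, yielding $H^{(0)}=K(B)(0)$.

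For $r=1$, using $(b_{(j,\beta)})_i=\delta_{ij}b_\beta$ together with the key duality
\begin{equation*}
\nabla^\alpha b_\beta=b_{\beta-\alpha}
\end{equation*}
from \eqref{eq:duality_basic_polynomials}, the matrix of the pairing is
\begin{equation*}
\langle \Mscr_{i,\alpha}(B),b_{(j,\beta)}\rangle_0=\delta_{ij}\sum_{x\in B}b_{\beta-\alpha}(x).
\end{equation*}
This matrix is block-diagonal in the species index, and within each block is upper triangular with respect to any total order on the multiindices $\{\alpha:1\le|\alpha|_1\le\lfloor d/2\rfloor+1\}$ which extends the componentwise partial order (for instance, by $|\alpha|_1$ and then lexicographic), because $b_{\beta-\alpha}\equiv 0$ unless $\alpha\le\beta$ componentwise. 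The diagonal entries equal $\sum_{x\in B}b_0(x)=|B|\ne 0$, so the matrix is invertible and $H^{(1)}\in\Vcal_1$ exists and is unique.

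For $r=2$, using $|\alpha|=|\beta|=1$ gives $\nabla^\alpha b_{(j,\beta)}(x)=\delta_{ij}\delta_{\alpha\beta}$ pointwise, and a direct computation of the symmetric bilinear form $\tfrac12 D^2\Mscr_{\mpzc}(B)(0)$ applied to the symmetrised tensor $f_{\mpzc'}$ of \eqref{eq:def_polynom_tensor} yields
\begin{equation*}
\langle \Mscr_\mpzc(B),f_{\mpzc'}\rangle_0=\tfrac{|B|\,N_{\mpzc'}}{2}\bigl[\delta_{\mpzc_1,\mpzc'_1}\delta_{\mpzc_2,\mpzc'_2}+\delta_{\mpzc_1,\mpzc'_2}\delta_{\mpzc_2,\mpzc'_1}\bigr].
\end{equation*}
The ordering $\mpzc_1\le\mpzc_2$ built into the definition of $\mf v_2$ forces the bracket to vanish unless $\mpzc=\mpzc'$ (if the second delta product is nonzero then $\mpzc_1=\mpzc'_2\ge\mpzc'_1=\mpzc_2\ge\mpzc_1$ collapses all four entries to equality). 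The normalisation $N_\mpzc$ in \eqref{eq:tensor_normalisation} is chosen precisely so that the diagonal value equals $|B|$ in both subcases $\mpzc_1=\mpzc_2$ (where $N_\mpzc=1$ and both delta products contribute) and $\mpzc_1\ne\mpzc_2$ (where $N_\mpzc=2$ and only one contributes). Thus the degree-$2$ pairing matrix is $|B|\cdot\mathrm{Id}$ and $H^{(2)}\in\Vcal_2$ is uniquely determined. Setting $H=H^{(0)}+H^{(1)}+H^{(2)}\in\Vcal$ completes the proof.

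The main technical point is the degree-$2$ computation: verifying that the symmetrisation factors $N_\mpzc$ in \eqref{eq:tensor_normalisation} conspire with the ordering convention on $\mf v_2$ to produce a diagonal matrix with entries $|B|$. The degree-$1$ case is easier once one recognises the triangular structure obtained by extending the componentwise partial order to a total order on multiindices.
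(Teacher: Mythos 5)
Your proof is correct and follows essentially the same route as the paper: decouple the system by polynomial degree, solve the constant part trivially, observe that the quadratic pairing matrix $\langle \Mscr_\mpzc(B), f_{\mpzc'}\rangle_0 = L^{kd}\delta_{\mpzc\mpzc'}$ is diagonal thanks to the normalisation $N_\mpzc$ and the ordering convention on $\mf v_2$, and solve the linear part via the triangular matrix $\sum_{x\in B} b_{\beta-\alpha}(x)\,\delta_{ij}$ with nonzero diagonal $|B|=L^{kd}$. The only cosmetic difference is that you order the degree-one indices by extending the componentwise partial order, whereas the paper orders by $|\alpha|$; both yield the same triangularity.
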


We remark in passing that  \eqref{eq:def_Pi2} is equivalent to
\eqref{eq:condition_H0}--\eqref{eq:condition_H2}. For $H^{(0)}$ we simply evaluate at $\p = 0$, for $H^{(1)}$
we use test functions $\p$ such that $\p_{B^+}$ is a polynomial of degree $\lfloor d/2\rfloor +1$.
For $H^{(2)}$
 the implication 
 \eqref{eq:def_Pi2} $\Longrightarrow$   \eqref{eq:condition_H2} follows by taking $g = \p \otimes \p$ for a linear function $\p$. For the converse
 implication one can use polarisation, i.e., the identity $\frac{d}{ds} \frac{d}{dt}_{| s=t=0} (H^{(2)} - K(B))(s b_{i, \alpha} + t b_{j, \beta}) = 0$.

\begin{definition} \label{de:Pi2}
We define  $\Pi_2 K(B) = H$ where $H$ is given by Lemma~\ref{le:exists_Pi2}.
\end{definition}

We now state the main properties of  $\Pi_2$: the maps $\Pi_2$ is bounded on a fixed scale
and $1 - \Pi_2$ is a  contraction  under change of scale.

Recall that on relevant Hamiltonians $H = \sum_{\mpzc \in v} a_{\mpzc}  \Mscr_{\mpzc}(B)$ we defined
in \eqref{hamiltoniannorm}  the norm
\begin{align}  \label{hamiltoniannorm2}
\| H \|_{k,0} = L^{kd} |a_\emptyset| + \sum_{(i, \alpha) \in \mf v_1}
h_k L^{kd} L^{- k \frac{d-2}{2}} L^{-k |\alpha|} |a_{i, \alpha}| + \sum_{\mpzc \in \mf v_2}  
h_k^2 |a_\mpzc|.
\end{align}

\begin{lemma}[Boundedness of $\Pi_2$] \label{le:Pi2_bounded}
There exists a constant $C$ 
such that for $L\geq 2^d+R$ and $0\leq k\leq N-1$
\begin{align} \label{eq:bound_Pi2}
\| \Pi_2 K(B) \|_{k, 0} \le C | K(B)|_{k, B,T_0}.
\end{align}
\end{lemma}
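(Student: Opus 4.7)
The plan is to exploit the defining identity $\langle H, g\rangle_0 = \langle K(B), g\rangle_0$ of $H = \Pi_2 K(B)$ provided by Lemma~\ref{le:exists_Pi2}, and to bound each coefficient of
\begin{equation*}
H = L^{dk}a_\emptyset + \sum_{(i,\alpha) \in \mf v_1} a_{(i,\alpha)}\Mscr_{(i,\alpha)}(B) + \sum_{\mpzc \in \mf v_2} a_\mpzc \Mscr_\mpzc(B)
\end{equation*}
individually by pairing against a dual test element $g_\mpzc^* \in \Pcal$ satisfying $\langle \Mscr_{\mpzc'}(B), g_\mpzc^*\rangle_0 = L^{kd}\delta_{\mpzc,\mpzc'}$. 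Given such a $g_\mpzc^*$,
\begin{equation*}
L^{kd}|a_\mpzc| = |\langle K(B), g_\mpzc^*\rangle_0| \le |K(B)|_{k,B,T_0}\, |g_\mpzc^*|_{k,B},
\end{equation*}
so the task reduces to constructing $g_\mpzc^*$ and verifying that $L^{-kd}|g_\mpzc^*|_{k,B}$ matches, up to an $L$-independent constant, the reciprocal of the weight assigned to $a_\mpzc$ in the definition \eqref{hamiltoniannorm} of $\|\cdot\|_{k,0}$.

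The constant and quadratic cases are straightforward. For $\mpzc = \emptyset$ take $g_\emptyset^* = 1 \in \Pcal_0$: the identities $\langle \Mscr_\emptyset(B), 1\rangle_0 = L^{kd}$ and $|1|_{k,B} = 1$ give $L^{kd}|a_\emptyset| \le |K(B)|_{k,B,T_0}$ at once. For $\mpzc = ((i,\alpha),(j,\beta)) \in \mf v_2$ (so $|\alpha| = |\beta| = 1$) take $g_\mpzc^* = f_\mpzc \in \Pcal_2$ from \eqref{eq:def_polynom_tensor}; a direct computation using $\nabla^\gamma b_\alpha = b_{\alpha-\gamma}$ from \eqref{eq:duality_basic_polynomials}, the identity $b_0 \equiv 1$, the ordering convention on $\mf v_2$, and the normalisation $N_\mpzc$ of \eqref{eq:tensor_normalisation} shows that $\langle \Mscr_{\mpzc'}(B), f_\mpzc\rangle_0 = L^{kd}\delta_{\mpzc,\mpzc'}$ for all $\mpzc' \in \mf v_2$, with pairings against $\Mscr_\emptyset(B)$ or linear monomials vanishing by tensor degree. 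From $\nabla^\gamma b_\alpha = b_{\alpha-\gamma}$ and the pointwise bound $|b_\eta(x)| \le C L^{k|\eta|}$ on $B^\ast$ (whose diameter is $\sim L^k$) one gets $|b_{(i,\alpha)}|_{k,B} \le C h_k^{-1} L^{k(\frac{d-2}{2}+|\alpha|)}$, and submultiplicativity of the injective tensor norm \eqref{eq:tensor_norm} then yields $|f_\mpzc|_{k,B} \le C h_k^{-2} L^{kd}$; hence $h_k^2|a_\mpzc| \le C|K(B)|_{k,B,T_0}$, matching the weight.

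The main technical obstacle is the linear case, where the naive pairing
\begin{equation*}
\langle \Mscr_{(j,\gamma)}(B), b_{(i,\alpha')}\rangle_0 = \delta_{i,j}\sum_{x \in B} b_{\alpha'-\gamma}(x)
\end{equation*}
is not diagonal: it vanishes unless $\gamma \le \alpha'$ componentwise, and for $\gamma \le \alpha'$ the sum equals $c_{\alpha'-\gamma}L^{k(d+|\alpha'-\gamma|)}(1+O(L^{-k}))$ with $c_0 = 1$ and $c_\eta$ for $\eta > 0$ depending only on $\eta$. The rescalings $\tilde a_\gamma := L^{-k|\gamma|}a_{(i,\gamma)}$ and $\tilde b_{\alpha'} := L^{-k(d+|\alpha'|)}\langle H, b_{(i,\alpha')}\rangle_0$ turn this into a unit upper triangular system (with respect to the componentwise order on $\mf v_1$) $\tilde b_{\alpha'} = \sum_{\gamma \le \alpha'} c_{\alpha'-\gamma}\tilde a_\gamma$ with bounded coefficients independent of $k$ and $L$; inverting this finite-dimensional linear system and unwinding the rescaling yields a dual element $g_{(i,\gamma)}^* = \sum_{\alpha' \ge \gamma} C_{\gamma,\alpha'}\, b_{(i,\alpha')}$ with $|C_{\gamma,\alpha'}| \le C L^{-k(|\alpha'|-|\gamma|)}$, $C$ depending only on $d$, $m$, and $R$. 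Combined with $|b_{(i,\alpha')}|_{k,B} \le C h_k^{-1} L^{k(\frac{d-2}{2}+|\alpha'|)}$ this gives $|g_{(i,\gamma)}^*|_{k,B} \le C h_k^{-1} L^{k(\frac{d-2}{2}+|\gamma|)}$, from which $h_k L^{kd}L^{-k\frac{d-2}{2}}L^{-k|\gamma|}|a_{(i,\gamma)}| \le C|K(B)|_{k,B,T_0}$ follows. Summing the three contributions over the finitely many $\mpzc \in \mf v$ (whose cardinality depends only on $d$, $m$, $R$) completes the proof; the hypotheses $L \ge 2^d + R$ and $k \le N-1$ are used only to guarantee that $B^+$ and $B^{++}$ do not wrap around the torus, so the polynomials $b_\alpha$ are unambiguously defined as functions on $B^{++} \subset \Z^d$.
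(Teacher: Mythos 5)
Your proposal is correct and follows essentially the same route as the paper's proof: the constant term is read off directly, the quadratic coefficients come from the already-diagonal pairing with $f_\mpzc$ together with $|f_\mpzc|_{k,B}\le C h_k^{-2}L^{kd}$, and the linear coefficients are obtained by inverting the same unit-triangular rescaled system, your dual elements $g^*_{(i,\gamma)}$ being just a repackaging of the paper's inverse matrix acting on the rescaled test functions $\tilde b_{\alpha'}$. The only presentational differences are that the paper fixes the centering by assuming w.l.o.g.\ $0\in B$ before bounding the $b_\alpha$'s, and bounds the inverse via the nilpotency $(A-\1)^{\lfloor d/2\rfloor+1}=0$, whereas you appeal to boundedness of the inverse of a fixed-size unit-triangular matrix with bounded entries, which amounts to the same thing.
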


Since $\Pi_2 H = H$ for $H \in \Rcal$, 
Lemma~\ref{le:Pi2_bounded}
shows in particular that
$\| H\|_{k, 0} \le C |H|_{k,T_0} \le C\vertiii{H}_{k,B}$. We can also  prove the converse estimate, in fact a slightly stronger 
result  which will be useful to bound $e^H$ (see Lemma~\ref{le:smoothness_exp} below). 
Define
\begin{align}  \label{eq:norl_ell2_phi}\begin{split}
 | \p|_{k, \ell_2(B)}^2 :&=  \frac{1}{h_k^2}\sup_{(i, \alpha)  \in \mf v_1}  \frac{1}{L^{kd}} \sum_{x \in B} L^{2 k |\alpha| }  L^{k (d-2) }   |\nabla^\alpha \p_i(x)|^2
 \\ &
=  \frac{1}{h_k^2}\sup_{(i, \alpha)  \in \mf v_1} \sum_{x \in B}  L^{2k (|\alpha| - 1)} |\nabla^\alpha \p_i(x)|^2.
\end{split}
\end{align}
Then it follows directly from the definition of $|\p|_{k,B}$ in  \eqref{eq:primal_norm} that 
\begin{align} \label{eq:ell_phi_vs_sup_phi}
 | \p|_{k, \ell_2(B)} \le |\p|_{k,B}.
\end{align}

\begin{lemma}  \label{le:Htp_vs_Hk0} For $H \in M_0(\Bcal_k)$, $L\geq 3$, and $0\leq k\leq N$ we have
\begin{align}  \label{eq:est_HTp}
|H|_{T_\p} \le \big(1 + | \p|_{k, \ell_2(B)}\big)^2 \,    \|H\|_{k,0} \le 2( 1+ | \p |_{k, \ell_2(B)}^2) \, \| H\|_{k,0}
\end{align}
and in particular
\begin{align}  \label{eq:strong_vs_k0_norm}
\vertiii{H}_{k,B} \le 4 \|H \|_{k,0}.    
\end{align}
\end{lemma}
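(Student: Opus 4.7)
The plan is to exploit the fact that $H \in M_0(\Bcal_k)$ is at most quadratic in $\p$ (see \eqref{eq:relevant_hamiltonian_exp}), so that $\tay_\p H$ has only three nonzero homogeneous pieces, of orders $0$, $1$ and $2$. Since the test-function norm is a supremum over $r$ (see \eqref{eq: norm_on_Phi}), the dual norm splits as
\begin{align*}
|H|_{T_\p} = \sum_{r=0}^{2} \sup_{|g^{(r)}|_{k,B} \le 1} \bigl|\langle \overline{(\tay_\p H)^{(r)}}, g^{(r)} \rangle\bigr|,
\end{align*}
and it suffices to bound the three pieces separately.

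For each order I would compute the relevant multilinear form explicitly in terms of the coefficients $a_\emptyset, a_{i,\alpha}, a_\mpzc$ of $H$ and the test tensor $g^{(r)}$, and then apply Cauchy--Schwarz in $x \in B$ to pass from $\ell^1$- to $\ell^2$-sums of $|\nabla^\alpha \p_i(x)|$, which are controlled by $h_k L^{-k(|\alpha|-1)} |\p|_{k,\ell_2(B)}$ via \eqref{eq:norl_ell2_phi}. The resulting arithmetic is entirely bookkeeping: the dual weight for a linear test function is $\wpzc_k(\alpha) = h_k L^{-k|\alpha|} L^{-k(d-2)/2}$ by \eqref{eq:definition_weights}, and multiplying by the appropriate power of $|B|^{1/2} = L^{kd/2}$ produces exactly the coefficients $L^{kd}$, $h_k L^{kd} L^{-k(d-2)/2} L^{-k|\alpha|}$ and $h_k^2$ appearing in \eqref{hamiltoniannorm2}. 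Summing the three orders then yields
\begin{align*}
|H|_{T_\p} \le \bigl(1 + |\p|_{k,\ell_2(B)} + |\p|_{k,\ell_2(B)}^2\bigr)\, \|H\|_{k,0} \le \bigl(1 + |\p|_{k,\ell_2(B)}\bigr)^2 \|H\|_{k,0},
\end{align*}
which is the first inequality of \eqref{eq:est_HTp}. The second inequality reduces to the elementary estimate $(1+s)^2 \le 2(1+s^2)$ for $s \ge 0$.

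For \eqref{eq:strong_vs_k0_norm} I would compare \eqref{eq:strong_weight} and \eqref{eq:norl_ell2_phi}: the only difference between $(\p, \boldsymbol G_k^B \p)$ and $|\p|_{k,\ell_2(B)}^2$ is that the former sums over $(i,\alpha) \in \mf v_1$ while the latter takes the supremum, so $|\p|_{k,\ell_2(B)}^2 \le (\p, \boldsymbol G_k^B \p)$ and hence $W_k^B(\p) \ge \exp\bigl(\tfrac12 |\p|_{k,\ell_2(B)}^2\bigr)$. Combined with the first inequality of \eqref{eq:est_HTp}, this gives
\begin{align*}
\vertiii{H}_{k,B} \le \|H\|_{k,0} \sup_{s \ge 0} (1+s)^2 e^{-s^2/2} = 4 e^{-1/2} \|H\|_{k,0} \le 4 \|H\|_{k,0},
\end{align*}
the supremum being attained at $s=1$.

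The only subtle point is the order-$2$ bookkeeping: for $\mpzc = ((i,\alpha),(j,\beta)) \in \mf v_2$ with $(i,\alpha) \ne (j,\beta)$ the monomial $\Mscr_\mpzc$ appears only once in the ordered sum in \eqref{eq:relevant_hamiltonian_exp}, while the pairing of $\tfrac{1}{2!} D^2 \Mscr_\mpzc$ with $g^{(2)} \in \Xcal^{\otimes 2}$ involves a symmetrisation that contributes an extra factor of $2$, cancelling the $\tfrac{1}{2!}$ and producing exactly the coefficient $h_k^2$ in \eqref{hamiltoniannorm2}. This accounting is precisely the content of the normalisation constants $N_\mpzc$ in \eqref{eq:tensor_normalisation}. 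Apart from this combinatorial step there is no genuine difficulty; the weights in $\|H\|_{k,0}$ have been designed so that the dual estimates saturate term by term.
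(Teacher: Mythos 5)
Your argument is essentially the paper's: split $\tay_\p H$ into its constant, linear and quadratic contributions, bound each one by Cauchy--Schwarz over $x\in B$ against $|\p|_{k,\ell_2(B)}$ so that the weights defining $\|H\|_{k,0}$ match term by term, and then obtain \eqref{eq:strong_vs_k0_norm} by absorbing the polynomial factor into the strong weight via $|\p|_{k,\ell_2(B)}^2\le (\p,\boldsymbol{G}_k^B\p)$ (your optimisation of $(1+s)^2e^{-s^2/2}$ even gives the slightly better constant $4e^{-1/2}$). One small correction: the intermediate bound $\bigl(1+|\p|_{k,\ell_2(B)}+|\p|_{k,\ell_2(B)}^2\bigr)\|H\|_{k,0}$ is not valid in general, since the degree-one piece of $\tay_\p H$ is $H_1+2\overline{H_2}(\p\otimes\cdot)$, so its norm can be as large as $\bigl(A_1+2A_2\bigr)|\p|_{k,\ell_2(B)}$ with $A_2$ the quadratic part of $\|H\|_{k,0}$ (take $H$ purely quadratic to see the factor $2$ is attained); the correct total is $\bigl(1+2|\p|_{k,\ell_2(B)}+|\p|_{k,\ell_2(B)}^2\bigr)\|H\|_{k,0}=(1+|\p|_{k,\ell_2(B)})^2\|H\|_{k,0}$, which is exactly \eqref{eq:est_HTp}, so nothing in your conclusion is affected.
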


\begin{lemma}[Contraction estimate] \label{le_contraction_I} 
 There exists a constant $C$   
   such that
 for all $L\geq 2^d+R$ and 
$0 \le k \le N-1$
\begin{align}  \label{eq:contraction_T0}
|(1- \Pi_2)K(B)|_{k+1, B, T_0} \le C  L^{-(d/2 + \lfloor d/2 \rfloor +1)} |K(B)|_{k, B, T_0}.
\end{align}
\end{lemma}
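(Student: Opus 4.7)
The plan is to pass to the dual formulation of $|\cdot|_{k+1,B,T_0}$ and use the defining property of $\Pi_2$ to reduce the estimate to a polynomial-approximation problem for the test functions. By duality,
\begin{equation*}
|(1-\Pi_2)K(B)|_{k+1,B,T_0} = \sup_{|g|_{k+1,B}\le 1}|\langle(1-\Pi_2)K(B),g\rangle_0|,
\end{equation*}
with $g=(g^{(0)},\dots,g^{(r_0)})\in\Phi_{r_0}$. From Lemma~\ref{le:Pi2_bounded} and Lemma~\ref{le:Htp_vs_Hk0} we get $|(1-\Pi_2)K(B)|_{k,B,T_0}\le C|K(B)|_{k,B,T_0}$. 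Since $\Pi_2 K$ agrees with $K$ on $\mathcal{P}=\mathcal{P}_0\oplus\mathcal{P}_1\oplus\mathcal{P}_2$ (Definition~\ref{de:Pi2}), for each homogeneous component $g^{(r)}$ and every $q^{(r)}\in\mathcal{P}_r$ (setting $\mathcal{P}_r:=\{0\}$ for $r\ge 3$),
\begin{equation*}
|\langle(1-\Pi_2)K(B),g^{(r)}\rangle_0|\le C|K(B)|_{k,B,T_0}\,|g^{(r)}-q^{(r)}|_{k,B}.
\end{equation*}
It therefore suffices to prove, for every $r$ and every $g^{(r)}\in\mathcal{X}^{\otimes r}$, the approximation estimate $\inf_{q^{(r)}\in\mathcal{P}_r}|g^{(r)}-q^{(r)}|_{k,B}\le C'L^{-(d/2+\lfloor d/2\rfloor+1)}|g^{(r)}|_{k+1,B}$.

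The cases $r=0$ (trivially $q^{(0)}=g^{(0)}$) and $r\ge 3$ follow directly from the definitions of the weights: the ratio $\wpzc_{k+1}(\mpzc)/\wpzc_k(\mpzc)\le 2^rL^{-rd/2}$ (using $h_{k+1}/h_k=2$ and $|\alpha_\ell|\ge 1$) gives $|g^{(r)}|_{k,B}\le 2^rL^{-rd/2}|g^{(r)}|_{k+1,B}$, and for $d\ge 2$ one has $rd/2\ge 3d/2\ge d/2+\lfloor d/2\rfloor+1$.

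For $r=1$ we choose $q^{(1)}$ to be the discrete Taylor polynomial of $g^{(1)}$ of degree $\pphi-1=\lfloor d/2\rfloor+1$ at a base point $x_0\in B$; modulo its constant term (which is zero in the quotient space $\mathcal{X}$) this lies in $\mathcal{P}_1$. A discrete Taylor remainder estimate (the content of Lemma~\ref{le:discrete_taylor_approximation}, invoked also in the Poincar\'e-type estimates of Chapter~\ref{sec:tracking}) gives for $|\alpha|\le\pphi$ and $x\in B^\ast$
\begin{equation*}
|\nabla^\alpha(g^{(1)}-q^{(1)})(x)|\le CL^{k(\pphi-|\alpha|)}\sup_{|\gamma|=\pphi,\,y\in B^\ast}|\nabla^\gamma g^{(1)}(y)|,
\end{equation*}
with the convention that the prefactor is $1$ when $|\alpha|=\pphi$ (in which case $\nabla^\alpha q^{(1)}=0$ automatically). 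Multiplying by $\wpzc_k(\alpha)^{-1}$ and bounding $\sup|\nabla^\gamma g^{(1)}|\le\wpzc_{k+1}(\gamma)|g^{(1)}|_{k+1,B}$ produces, after telescoping the $L$-factors, the bound $2CL^{-\pphi-(d-2)/2}|g^{(1)}|_{k+1,B}=2CL^{-(d/2+\lfloor d/2\rfloor+1)}|g^{(1)}|_{k+1,B}$.

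For $r=2$ we take $q^{(2)}(x_1,x_2)=(\nabla_{x_1}\nabla_{x_2}g^{(2)})(x_0,x_0)\cdot(x_1-x_0)\otimes(x_2-x_0)$; after discarding terms constant in one of the two variables (which are trivial in $\mathcal{X}^{\otimes 2}$), $q^{(2)}\in\mathcal{P}_2$. We then split the estimate of $\nabla^{\alpha_1}_{x_1}\nabla^{\alpha_2}_{x_2}(g^{(2)}-q^{(2)})$ into cases: if $\max(|\alpha_1|,|\alpha_2|)\ge 2$ then $\nabla^{\alpha_1}\nabla^{\alpha_2}q^{(2)}=0$, and the weight-ratio scaling alone yields a factor $4L^{-(d-2+|\alpha_1|+|\alpha_2|)}\le 4L^{-(d+1)}$; if $|\alpha_1|=|\alpha_2|=1$, the remainder is the finite difference $h(x_1,x_2)-h(x_0,x_0)$ with $h=\nabla_{x_1}\nabla_{x_2}g^{(2)}$, and a first-order discrete Taylor expansion of $h$ contributes a factor $L^k$ times a higher-order derivative of $g^{(2)}$, again producing the rate $4L^{-(d+1)}$ after the two-variable weight computation. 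Since $d+1\ge d/2+\lfloor d/2\rfloor+1$ for all $d\ge 2$ (with equality when $d$ is even), the required bound follows. The main technical obstacle is the bookkeeping in the $r=2$ case, where one must exploit the quotient structure of $\mathcal{X}^{\otimes 2}$ to place $q^{(2)}$ in $\mathcal{P}_2$ while controlling the bi-variable remainder just sharply enough to recover the claimed contraction rate.
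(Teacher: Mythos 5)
Your proposal is correct and follows essentially the same route as the paper: reduce, via the defining property and boundedness of $\Pi_2$, to a best-approximation estimate of the test functions $g^{(r)}$ by the spaces $\Pcal_1$ and $\Pcal_2$ in the scale-$k$ norm, prove those by discrete Taylor remainder bounds (the content of Lemma~\ref{le:discrete_taylor_approximation}), and dispose of $r\ge 3$ by the pure weight-ratio gain $2^rL^{-rd/2}$. The only cosmetic difference is in the $r=2$ case, where the paper works with the non-symmetrised space $\widetilde{\Pcal}_2$ and then applies the symmetrisation operator $S$ (which has norm one and leaves the pairing unchanged), whereas you place $q^{(2)}$ directly in $\Pcal_2$; since the pairing with $(1-\Pi_2)K(B)$ depends only on $Sg$, this is harmless.
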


\paragraph{Proofs.}
\begin{proof}[Proof of Lemma~\ref{le:exists_Pi2} (existence and uniqueness of $\Pi_2$)]
Clearly $H^{(0)} = K^{(0)} = K(0)$. 

\textit{Step 1:} \quad There exist one and only one    $H^{(2)}  \in \Rcal_2$ such that 
\begin{align}  \label{eq:eqn_for_H2}
\langle H^{(2)}, g \rangle_0 = \langle K(B) , g \rangle_0 \quad \forall g \in \Pcal_2.
\end{align}
Indeed each $H^{(2)} \in \Rcal_2$ is of the form 
$H^{(2)} = \sum_{\mpzc \in \mf v_2} a_{\mpzc} \Mscr_{\mpzc}$. 
Now $\Mscr_{(i, \alpha), (j, \beta)}(B)$ defines a unique symmetric element  
of
$(\Xcal \otimes \Xcal)'$ via (see Lemma~\ref{le:extension_polynomial})
$$ \langle \Mscr_{(i, \alpha), (j, \beta)}(B), \p \otimes \psi\rangle = \frac12 
\sum_{x \in B}  \nabla^\alpha \p_i(x) \nabla^\beta \psi_j(x) +  \nabla^\beta \p_j(x) \nabla^\alpha \psi_i(x).
$$
Thus in view of   \eqref{eq:duality_basic_polynomials},  \eqref{eq:def_polynom_tensor} and \eqref{eq:tensor_normalisation}
we get
\begin{align}
\langle \Mscr_{\mpzc}(B), f_{\mpzc'} \rangle_0 = L^{kd} \delta_{\mpzc \mpzc'} \quad \forall \mpzc, \mpzc' \in \mf v_2. 
\end{align}
It follows that there is one and only one $H^{(2)}$ which satisfies   \eqref{eq:eqn_for_H2}
and the coefficients are given by
\begin{align}  \label{eq:coefficients_H2}
a_{\mpzc} = L^{-dk} \langle K(B), f_{\mpzc} \rangle_0 =L^{-kd}  \langle K^{(2)}, f_{\mpzc} \rangle_0  \quad \forall \mpzc \in \mf v_2.
\end{align}

\textit{Step 2:}   There exists one and only one    $H^{(1)}  \in \Rcal_1$ such that 
\begin{align}  
\langle H^{(1)}, \p \rangle_0 = \langle K(B) , \p \rangle_0 \quad \forall \p \in \Pcal_1.
\end{align}
Writing $H^{(1)} = \sum_{(i, \alpha) \in \mf v_1}  a_{i, \alpha}  \, \Mscr_{i, \alpha}(B)$ 
and testing against the
basis $\{b_{i', \alpha'} : (i', \alpha') \in \mf v_1\}$ of $\Pcal_1$ we see that the condition for $H^{(1)}$ is
equivalent to 
\begin{equation}  \label{eq:equation_coeff_H1}
\sum_{\mpzc \in \mf v_1} B_{\mpzc' \mpzc} \,  a_{\mpzc}= \langle K(B), b_{\mpzc'} \rangle_0  \quad  
\forall \mpzc' \in \mf v_1
\end{equation}
where
\begin{align}  \label{eq:coefficient_upper_triangular}
B_{\mpzc' \mpzc} = \sum_{x \in B}  \nabla^\alpha  b_{\alpha'}(x) \, \delta_{i i'} = \sum_{x \in B} b_{\alpha' - \alpha}(x)  \, \delta_{ii'} 
\quad \hbox{for $\mpzc = (i, \alpha)$, $\mpzc' = (i', \alpha')$.}
\end{align}
In particular
\begin{align}
B_{\mpzc \mpzc} = L^{dk} \qquad \hbox{and} \qquad  B_{\mpzc' \mpzc} = 0 \quad \hbox{if $|\alpha| > |\alpha'|$}.
\end{align}
Thus if we order the indices $(i, \alpha)$ in such a way that $(i, \alpha) < (i, \alpha')$ if $|\alpha| < |\alpha'|$ 
then $B$ is a triangular matrix with entries $L^{dk}$ on the diagonal. Therefore $B$ is invertible
and hence the coefficients of $H^{(1)}$ are uniquely determined.
\end{proof}

\begin{proof}[Proof of Lemma~\ref{le:Pi2_bounded} (boundedness of $\Pi_2$)] We have
\begin{align} \label{eq:bound_H0}
L^{kd} |a_\emptyset| = |H^{(0)}| = |K(0)|.
\end{align}
Since $L\geq 2^d+R$ we can again view $B^{++}$ as a subset of $\mathbb{Z}^d$.
Moreover, since the space of polynomials of a certain degree is invariant by translation we assume without loss of generality
that $0 \in B$. This implies that 
 $$ |b_{i, \alpha}|_{k, B} = \frac{1}{h_k} L^{k \frac{d}{2}} \quad \hbox{if  $|\alpha|=1$}
 \quad \hbox{and thus } 
 \quad |f_{\mpzc}|_{k,B}
 \le 2 \frac{1}{h_k^2} L^{kd}  \quad \forall \mpzc \in \mf v_2.
 $$
 Then  \eqref{eq:coefficients_H2} implies that
 \begin{align*} |a_{\mpzc}| \le L^{-dk}  |K^{(2)}|_{k, B,T_0} \,  | f_{\mpzc}|_{k, B} \le 2 \frac{1}{h_k^2}  |K^{(2)}|_{k, B,T_0} 
 \end{align*}
and therefore
\begin{align} \label{eq:bound_H2}
\sum_{\mpzc \in  \mf v_2}  h_k^2 |a_\mpzc| \le 2 \, \# \mf v_2 \, |K^{(2)}|_{k, B, T_0}.
\end{align}

To estimate the coefficients of $H^{(1)}$ we note that the system
 \eqref{eq:coefficient_upper_triangular} for the coefficients $a_{(i, \alpha)}$ decouples
 for different $i$ since $B_{(i, \alpha)(i', \alpha')} = C_{\alpha \alpha'} \delta_{i i'}$. 
 Hence it is sufficient to prove the estimate for the scalar case $m=1$. 
 It convenient to work in a rescaled basis. 
 Using again that $0 \in B$ we get 
for $|\alpha'| \ge |\alpha|$
$$ \sup_{x \in B^*} |\nabla^\alpha b_{\alpha'}(x)| = \sup_{x \in B^*} |b_{\alpha' - \alpha}(x)|
\le (\diam_\infty B^*)^{|\alpha'| - |\alpha|}$$
and the left hand side vanishes for $|\alpha'| < |\alpha|$. 
Thus
$$ |b_{\alpha'}|_{k, B}  \le \sup_{1 \le |\alpha| \le |\alpha'|} \frac{1}{h_k} L^{k \frac{d-2}{2}} L^{k |\alpha|} (\diam_\infty B^*)^{|\alpha'| - |\alpha|}
\le C' \frac{1}{h_k} L^{k \frac{d-2}{2}} L^{k |\alpha'|}$$
where 
$$ C' :=  ( L^{-k} \diam_\infty B^*)^{\pphi-1} =  ( L^{-k} \diam_\infty B^*)^{\lfloor d/2 \rfloor+1}$$
depends only on $d$ and $R$ (the dependence from $R$ arises from the fact that for $k =0$ we have
$\diam_\infty B^* = 2R+1$).

Now we use the basis of test functions given by
$$ \widetilde b_{\alpha'} = h_k  L^{-k \frac{d-2}{2}} L^{-k |\alpha'|} b_{\alpha'}.$$
Then 
\begin{equation} \label{eq:bound_rescaled_b}
| \widetilde b_{\alpha'}|_{k,B} \le C'.
\end{equation}
We define rescaled coefficients
$$ \widetilde a_\alpha = h_k L^{dk} L^{-k \frac{d-2}{2}} L^{-k |\alpha|}a_\alpha $$
In these new quantities  \eqref{eq:equation_coeff_H1} can be rewritten as
$$\sum_{\alpha \in \mf v_1} A_{\alpha' \alpha} \,  \widetilde a_\alpha = \langle K, \widetilde b_{\alpha'} \rangle.$$
with
\begin{align*}
 A_{\alpha', \alpha}  =  \, 
h_k^{-1}&L^{-dk}  L^{k \frac{d-2}{d}} L^{k |\alpha|} \, \, \, \,  h_k  L^{-k \frac{d-2}{2}} L^{-k |\alpha'|} \,  \, \, \, B_{\alpha' \alpha}
\\ 
\underset{  \eqref{eq:coefficient_upper_triangular}}{=}  
 L&^{-dk} L^{k (|\alpha| - |\alpha'|)} \sum_{x \in B} b_{\alpha'-\alpha}(x).
 \end{align*}
Hence
$$ 
A_{\alpha' \alpha} =\delta_{\alpha' \alpha}  \quad \hbox{if $|\alpha'|= |\alpha|$,} \quad  
|A_{\alpha' \alpha}|\le  \frac{1}{(\alpha'- \alpha)!} \quad \hbox{if $\alpha' - \alpha \in \NN_0^{\{1, \ldots, d\}} \setminus \{0\}$} 
$$
and $A_{\alpha' \alpha} = 0$  if $\alpha'_i <  \alpha_i$ for some $i \in \{ 1, \ldots, d\}.$  
This implies that $(A - \1)^{\lfloor d/2\rfloor + 1} = 0$. Indeed, let $V_\ell := \Span(e_\alpha: |\alpha| \le \ell)$. Then
$A^T- \1$ acts on $V_{\lfloor d/2\rfloor + 1}$ and we have $(A- \1)^T V_l \subset V_{l_1}$ 
and $(A^T - \1)  V_1 = \{0\}$. 
Thus
$$ 
A^{-1} = (\1 + (A - \1))^{-1} = \1 + \sum_{r=1}^{\lfloor d/2 \rfloor} (A- \1)^r.
$$
Since the matrix elements of $A-\1$ are bounded this implies that
$$ |\tilde a_\alpha| \le C\sup_{\alpha' \in \mf v_1} \langle K(B), \tilde b_{\alpha'} \rangle_0 
\underset{ \eqref{eq:bound_rescaled_b}}{\le} 
C C' |K^{(1)}(B)|_{k, B, T_0}.$$
Here $C$ is a combinatorial constant which depends only on the dimension $d$. 
Thus
\begin{equation}
\| H^{(1)} \|_{k, 0} = \sum_{\alpha \in \mf v_1} |\widetilde a_\alpha|  \le 
C  C' \,
 \#\mf v_1\,  |K^{(1)}(B)|_{k, B, T_0}
\end{equation}
in the scalar case $m=1$. For $m >1$ the equations for  the different components $i$ decouple and thus 
the estimate holds with an additional factor $m$. 
Combining this with  \eqref{eq:bound_H0} and  \eqref{eq:bound_H2} we get
$\|H \|_{k, 0} \le 
C
 \sum_{r=0}^2 |K^{(r)}(B)|_{k, B, T_0} \le C|K(B)|_{k, B, T_0}$.
\end{proof}

\begin{proof}[Proof of Lemma~\ref{le:Htp_vs_Hk0}]
The assertion \eqref{eq:strong_vs_k0_norm} follows from \eqref{eq:est_HTp}, the definition of the strong norm in
\eqref{strongnorm} and  \eqref{eq:strong_weight}
as well as  the estimates 
$|\p|_{k, \ell_2(B)}^2 \le \boldsymbol{G}^B_k(\p)$ and $(1+t) \le 2 e^{t/2}$ for $t \ge 0$.

To prove   \eqref{eq:est_HTp}
we use that 
 $|M_\emptyset(\{x\})|_{k,T_0} = 1$ and that by \eqref{eq:estimate_monomial_appendix}
and  \eqref{eq:weight_appendix} we have
\begin{align}  \label{eq:estimate_monomials}
|\Mscr_{i, \alpha}(\{x\})|_{k,B, T_0} \le h_k L^{-k|\alpha|} L^{-k \frac{d-2}{2}} \quad \hbox{and}  \quad
|\Mscr_{\mpzc}(\{x\})|_{k,B, T_0} \le h_k^2  L^{-kd}  \quad \forall \mpzc \in \mf v_2.
\end{align}

Now for $\p = 0$ the estimate   \eqref{eq:est_HTp} follows directly by summing \eqref{eq:estimate_monomials} over $x \in B$. 
For $\p\ne 0$ we use that for the decomposition of $H  = H_0 + H_1 + H_2$ in constant, linear and quadratic terms we get
$$ \textstyle{\tay_\p H} = \tay_0 H + (H_1(\p) + H_2(\p)) + L_\p$$
where $H_1(\p) + H_2(\p)$ is a constant term and $L_\p$ is the linear functional defined 
by $L_\p(\psi) = 2 \overline H_2(\p \otimes \psi)$
 or explicitly by
$$L_\p(\psi) =  \sum_{x \in B}  \, \,   \sum_{(i, \alpha) \le (j, \beta), |\alpha| = |\beta|=1} a_{(i, \alpha), (j, \beta)} \big( 
 \nabla^\alpha \p_i(x)  \nabla^\beta \psi_j(x) +  \nabla^\beta \p_j(x)  \nabla^\alpha \psi_i(x) \big).$$
 Since $\nabla^{\alpha} \psi_i(x) = \Mscr_{i, \alpha}(\{x\})(\psi)$ we get from   \eqref{eq:estimate_monomials} 
 (with $|\alpha| = 1$) and the Cauchy-Schwarz inequality for $\sum_{x \in B}$
 \begin{align}   \notag
  |L_\p|_{k, T_0} \le& \,  \sum_{x \in B}  \, \, \sum_{(i, \alpha) \le (j, \beta), |\alpha| = |\beta|=1} |a_{(i, \alpha), (j, \beta)}| 
\big(  | \nabla^{\alpha} \p_i(x)| + |\nabla^{\beta} \p_j(x)|    \big) \, \,  h_k  L^{-k \frac{d}{2}} \\
\le &  2\,  \sup_{(i, \alpha), |\alpha|=1}  \frac{1}{h_k} \Big( \sum_{x \in B}| \nabla^\alpha \p_i(x)|^2\Big)^{1/2}    \sum_{\mpzc \in \mf v_2} h_k^2   |a_{\mpzc}|  
\label{eq:Htp_vs_Hk0_aux1}
\end{align}
 It follows directly from the definition of $H_2$ and the inequality $|ab| \le \frac12 a^2 +\frac12 b^2$ applied to $\nabla^\alpha \p_i \nabla^\beta \p_j $
 that 
 \begin{align}   \label{eq:Htp_vs_Hk0_aux2}
 |H_2(\p)| \le \sup_{(i, \alpha), |\alpha|=1}  \frac{1}{h_k^2 } \Big( \sum_{x \in B}| \nabla^\alpha \p_i(x)|^2\Big)  \sum_{\mpzc \in \mf v_2} h_k^2   |a_{\mpzc}|.
 \end{align}
 Finally the Cauchy-Schwarz inequality for $\sum_{x \in B}$ gives
 \begin{align}  \label{eq:Htp_vs_Hk0_aux3}
 |H_1(\p)| \le  \sup_{(i, \alpha) \in \mf v_1}  \frac{1}{h_k} \Big( \sum_{x \in B}  L^{2k (|\alpha|-1)} \, | \nabla^\alpha \p_i(x)|^2\Big)^{1/2}
   \hspace{-0.1cm}\sum_{(i, \alpha) \in \mf v_1} h_k L^{k\frac{d}{2}}  L^{-k (|\alpha|-1)}  \,  |a_{(i, \alpha)}|  
 \end{align}
 Now   \eqref{eq:Htp_vs_Hk0_aux1}--\eqref{eq:Htp_vs_Hk0_aux3} imply that
 $$ |\textstyle{\tay_{\p} H - \tay_0} H|_{k, T_0} \le  \big(2 \| \p\|_{k, \ell_2(B)} +  \| \p\|^2_{k, \ell_2(B)}  \big)\, \|H \|_{k,0}.$$  
 Together with the estimate for $\p=0$, i.e.,  $|H|_{k,T_0} \le \|H \|_{k,0}$,  this concludes the proof.
 \end{proof}

\begin{proof}[Proof of  Lemma~\ref{le_contraction_I} (contraction estimate)]
This 
will easily follow from a duality argument given below  and the following result.
\end{proof}

\begin{lemma} \label{le:discrete_taylor_approximation}  There exists a constant $C$   such that for all $L \ge 2^d + R$
\begin{align}   \label{eq:contraction_taylor_remainder}
 \min_{P \in \Pcal_1} |\p - P|_{k, B} \le C L^{-(d/2 + \lfloor d/2 \rfloor +1)} |\p|_{k+1, B}  \quad \forall  \p \in \BX
\end{align}
and 
\begin{align}  \label{eq:contraction_taylor_remainder_g}
\min_{P \in \Pcal_2} |Sg - P|_{k, B} \le C L^{-(d+1)} |g|_{k+1, B}  \quad \forall g \in \BX \otimes \BX.
\end{align}
Here $S$ is the symmetrisation operator, defined by $S (\p \otimes \psi) = \frac12 (\p \otimes \psi) + \frac12 (\psi \otimes \p)$ and
linear extension. 
\end{lemma}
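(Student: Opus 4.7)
The plan is to realise the minimizing $P$ concretely as a discrete Taylor polynomial of $\p$ (respectively of $Sg$) about the centre $x_0$ of $B$, and then to read off the contraction constant by comparing the scale-$k$ and scale-$(k+1)$ weights $\wpzc_j(\alpha)$ defined in \eqref{eq:definition_weights}. Throughout, recall $\pphi=\lfloor d/2\rfloor+2$ and set $r:=\lfloor d/2\rfloor+1=\pphi-1$, so that $\Pcal_1$ is spanned by the basis polynomials $b_{(i,\alpha)}$ with $1\le|\alpha|\le r$ introduced in \eqref{eq:basic_polynomials}. Also note that for $k\le N-1$ and $L\ge 2^d+R$ the enlarged block $B^{++}$ does not wrap around the torus, so we may freely identify $\p{\restriction_{B^{++}}}$ with a function on $\Z^d$ and apply the discrete Taylor calculus based on the duality \eqref{eq:duality_basic_polynomials}.

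For the first bound, I would take $P\in\Pcal_1$ to be the order-$r$ discrete Taylor polynomial of $\p$ about $x_0$, i.e.
\begin{equation*}
 P(x)=\sum_{1\le|\alpha|\le r}\sum_{i=1}^m \nabla^\alpha \p_i(x_0)\,b_\alpha(x-x_0)\,e_i.
\end{equation*}
Then $\nabla^\beta(\p-P)(x_0)=0$ for every $\beta$ with $1\le|\beta|\le r$, while for $|\beta|=\pphi$ one has $\nabla^\beta P\equiv 0$ by \eqref{eq:duality_basic_polynomials}, hence $\nabla^\beta(\p-P)=\nabla^\beta\p$. Using an iterated discrete Taylor remainder and the elementary bound $|x-x_0|_\infty\le c_R L^k$ for $x\in B^\ast$, one obtains in both cases the pointwise inequality
\begin{equation*}
 |\nabla^\beta(\p-P)_i(x)|\le C(L^k)^{\pphi-|\beta|}\sup_{\substack{|\gamma|=\pphi\\ y\in B^\ast}}|\nabla^\gamma\p_i(y)|.
\end{equation*}
Multiplying by the scale-$k$ weight $h_k^{-1}L^{k(\frac{d-2}{2}+|\beta|)}$, the factor $L^{k(\pphi-|\beta|)}$ is exactly what is needed so that the right-hand side becomes independent of $\beta$ and equals $h_k^{-1}L^{k(\frac{d-2}{2}+\pphi)}\sup|\nabla^\gamma\p|$, which in turn, upon inserting $|\nabla^\gamma\p|\le h_{k+1}L^{-(k+1)(\frac{d-2}{2}+\pphi)}|\p|_{k+1,B}$ and using $h_{k+1}/h_k=2$, gives the contraction constant $2L^{-(\frac{d-2}{2}+\pphi)}=2L^{-(d/2+\lfloor d/2\rfloor+1)}$. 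Taking the supremum over $x\in B^\ast$, $1\le|\beta|\le\pphi$ and $1\le i\le m$ then yields \eqref{eq:contraction_taylor_remainder}.

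For the second bound I would use a bilinear discrete Taylor approximation. Since $\Pcal_2$ consists of symmetric bilinear expressions in $(x_1,x_2)$ with $|\alpha_1|=|\alpha_2|=1$, the natural choice is
\begin{equation*}
 P(x_1,x_2)=\sum_{\substack{|\alpha_1|=|\alpha_2|=1\\ i_1,i_2}}\nabla^{\alpha_1}_{x_1}\nabla^{\alpha_2}_{x_2}(Sg)_{i_1i_2}(x_0,x_0)\,b_{\alpha_1}(x_1-x_0)b_{\alpha_2}(x_2-x_0)\,e_{i_1}\otimes e_{i_2},
\end{equation*}
which by construction lies in $\Pcal_2$ and matches the mixed $(1,1)$ derivatives of $Sg$ at $(x_0,x_0)$. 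To estimate $|Sg-P|_{k,B}$ I split the supremum according to $(|\alpha_1|,|\alpha_2|)$: if $|\alpha_1|\ge 2$ or $|\alpha_2|\ge 2$ then $\nabla^{\alpha_1}_{x_1}\nabla^{\alpha_2}_{x_2}P=0$ since $P$ is linear in each slot, and the ratio of scale-$k$ and scale-$(k+1)$ weights produces the factor $(h_{k+1}/h_k)^2 L^{-(|\alpha_1|+|\alpha_2|+d-2)}\le 4L^{-(d+1)}$ because $|\alpha_1|+|\alpha_2|\ge 3$. In the remaining case $|\alpha_1|=|\alpha_2|=1$ the derivative of $Sg-P$ vanishes at $(x_0,x_0)$ by construction, so splitting the difference as $[\nabla^{\alpha_1}_{x_1}\nabla^{\alpha_2}_{x_2}(Sg-P)(x_1,x_2)-\nabla^{\alpha_1}_{x_1}\nabla^{\alpha_2}_{x_2}(Sg-P)(x_0,x_2)]+[\nabla^{\alpha_1}_{x_1}\nabla^{\alpha_2}_{x_2}(Sg-P)(x_0,x_2)-\nabla^{\alpha_1}_{x_1}\nabla^{\alpha_2}_{x_2}(Sg-P)(x_0,x_0)]$ and exploiting once more that $P$ is killed by any $\nabla$ of order $\ge 2$ in either variable, each summand is bounded by $CL^k$ times a mixed derivative of $Sg$ of total order $3$. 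The weight ratio then gives the same factor $4L^{-(d+1)}$, completing the proof.

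The delicate point of this argument is the bookkeeping in the bilinear case: one must verify that the $L^k$ lost in the finite difference is exactly compensated by the gain of one extra derivative, and that $|Sg|_{\cdot,B}\le|g|_{\cdot,B}$ so that the bound survives symmetrization. Everything else is a routine application of discrete Taylor's formula and the explicit form of the weights $\wpzc_j$, using only the geometric hypothesis $L\ge 2^d+R$ to guarantee that $B^\ast\subset B^+$ and that $B^{++}\subset\Z^d$ (i.e.\ does not wrap around the torus for $k\le N-1$).
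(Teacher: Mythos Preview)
Your proposal is correct and follows essentially the same approach as the paper: choose $P$ as a discrete Taylor polynomial of the appropriate order, use the discrete Taylor remainder to bound $\nabla^\beta(\p-P)$ by the highest-order derivative, and then compare the scale-$k$ and scale-$(k+1)$ weights to extract the contraction factor. Two minor implementation differences are worth noting. First, the paper expands about the corner $a$ of $B^\ast=a+[0,\rho]^d$ rather than the centre, because the discrete Taylor remainder estimate (Lemma~\ref{le:taylor_remainder}) is stated for nonnegative offsets; your choice of the centre $x_0$ would require a symmetric version of that lemma or simply switching to the corner. Second, for \eqref{eq:contraction_taylor_remainder_g} the paper first builds a non-symmetric $\widetilde P\in\widetilde\Pcal_2$ from $g$, bounds $|g-\widetilde P|_{k,B}$, and only then applies $S$ (using $\|S\|\le 1$), whereas you work with $Sg$ and a symmetric $P$ from the outset; the two routes yield the same $P=S\widetilde P$ and the same estimates.
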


\begin{proof}
Since $L\geq 2^d+R$ we can view $B^{++}$ as a subset of $\Z^d$.
We first show \eqref{eq:contraction_taylor_remainder}. It suffices to consider the scalar case $m=1$ since the estimate
can be done component by component. 
The small set neighbourhood  $B^\ast$ can be written as 
\begin{align}  \label{eq:Bast_discrete_taylor}
B^\ast = a + [0, \rho]^d  \quad \hbox{with}  \quad  L^{-k} \rho \le C
\end{align}
where 
  $C =  
      \max(2R+1, 3)$.
We will apply Lemma~\ref{le:taylor_remainder} for the estimate of the remainder term in the Taylor expansion  with
$$
s := \lfloor d/2 \rfloor + 1  = p_\Phi - 1
$$ 
and 
$$ 
M_s:= M_{s, \rho} = \sup \{ |\nabla^\alpha \p(x)| :  |\alpha| = s+1, \,  x \in \Z^d \cap \big( a + [0, \rho]^d\big)  \}.
 $$

Then it follows from the definition of the field norm $|\p|_{k+1, B}$ that
\begin{align} 
\label{eq:est_contraction_M}
 M_s \le h_{k+1}  L^{-(k+1) (s+1)} L^{-(k+1)  \frac{d-2}{2}}  \,  |\p|_{k+1, B}. 
\end{align}

Let $P = \tay_a^s \p$ be the discrete Taylor polynomial of  order $s$ of $\p$ at $a$.
Then by Lemma~\ref{le:taylor_remainder} we have for $t = |\beta| \le s$
\begin{align}\begin{split} \label{eq:taylor_remainder_norms}
\left| \nabla^\beta[ \p(x) - P(x)] \right| &\le M_s  \binom{|x-a|_1}{s-t+1}  \le M_s (d \rho)^{s+1-t} 
\\ &
\le M_s  C L^{k(s+1-t)}  \quad \hbox{for all $x \in \Z^d \cap \big( a + [0, \rho]^d\big)$.}
\end{split}
\end{align}
{Here $C = C(d,R)$ and}
 we used that  $|x-a|_1 \le d \rho$ 
{as well as  \eqref{eq:Bast_discrete_taylor}.}
 Taking into account that for $|\beta| =s+1$ we have $\nabla^\beta (\p -P) = \nabla^\beta \p$
and using   \eqref{eq:taylor_remainder_norms},  the definition of $M_s$, 
and the fact that $h_{k+1}/ h_k = 2$, we get
\begin{align}
\begin{split}
|\p - P|_{k, B} \le C  \frac{1}{h_k} L^{k \frac{d-2}{2} }  L^{k (s+1)} M_s
&\underset{   \eqref{eq:est_contraction_M}     }{\le} C     L^{-\frac{d-2}{2}} L^{-(s+1)}|\p|_{k+1, B} 
\\
&= C
 L^{-(s + \frac{d}{2})}|\p|_{k+1, B}.
\end{split}
\end{align}
This finishes the proof of  \eqref{eq:contraction_taylor_remainder}.

The proof of the second estimate is similar. We consider the space
\begin{align*}
\widetilde \Pcal_2 :=\Span \{ b_{i,\alpha} \otimes b_{j, \beta} : |\alpha| = |\beta| =1  \}
\end{align*}
Thus $\widetilde \Pcal_2$ is the non symmetrised counterpart of $\Pcal_2$.
In particular $S \widetilde \Pcal_2 =  \Pcal_2$ where $S$ is the symmetrisation operator. 
For $\widetilde P \in \widetilde \Pcal_2$ and $|\alpha|  + |\beta| \ge 3$ 
we have $(\nabla^{i, \alpha} \otimes \nabla^{j, \beta}) \widetilde P = 0$.
Using again that $h_{k+1}/ h_k = 2$ we deduce that
\begin{align} 
\begin{split}
 & \, h_k^{-2} L^{k (|\alpha| + |\beta|)}  L^{k (d-2)}  |(\nabla^{i, \alpha} \otimes \nabla^{j, \beta}) (g- \widetilde P)(x,y)|    \\
&\qquad\quad  \qquad\le  \, 4 L^{- (|\alpha| + |\beta| + d -2)} |g|_{k+1, B} \le 4 L^{-(d+1)} |g|_{k+1, B}   \qquad \hbox{if $|\alpha| + |\beta| \ge 3$.}
   \label{eq:remainder_g_high_r_contraction}
   \end{split}
\end{align}
To prove \eqref{eq:contraction_taylor_remainder_g} it only remains to estimate
 $\nabla^{i, \alpha} \otimes \nabla^{j, \beta} (g- \widetilde  P)$
for $|\alpha| = |\beta| = 1$. 
We define $\widetilde P \in \widetilde \Pcal_2$ by 
\begin{align}
\widetilde P = \sum_{(i', \alpha'), (j', \beta'), |\alpha'| = |\beta'|=1}  (\nabla^{i', \alpha'} \otimes \nabla^{j', \beta'} g)(a,a) \, \, \, b_{i', \alpha'} \otimes b_{j', \beta'}.
\end{align}
Then $\nabla^{i, \alpha} \otimes \nabla^{j, \beta} \widetilde  P = \hbox{const} = (\nabla^{i, \alpha} \otimes \nabla^{j, \beta} g)(a,a)$ for $|\alpha| = |\beta| =1$.

We now define
\begin{align}  \label{eq:define_M_g}
M_2 := \sup \{| (\nabla^{i,\alpha} \otimes \nabla^{j, \beta} g)(x, y)| : |\alpha| \ge 1, \, |\beta| \ge 1,\,  |\alpha| + |\beta| =3, \, x,y \in a + [0, \rho]^d \}.
\end{align}
Then 
\begin{align} \label{eq:bound_M_g}
M_2 \le h_{k+1}^2  L^{-3(k+1)} L^{-(k+1)(d-2)} |g|_{k+1,B}
\end{align}
We claim that for $|\alpha| = |\beta| =1$
\begin{align}    \label{eq:remainder_g_low_r_contraction}
| (\nabla^{i,\alpha} \otimes \nabla^{j,\beta}  g)(x,y) - 
(\underbrace{\nabla^{i,\alpha} \otimes \nabla^{j,\beta}  g)(a,a)}_{=\nabla^{i,\alpha} \otimes \nabla^{j,\beta}  \widetilde P}   |
 \le M_2( |x-a|_1 + |y-a|_1) \le 2 d \rho M_2 
 \end{align}
This estimate is a special case of the Taylor remainder estimate in Lemma 3.5.\ of \cite{BS15II}, but it can also be easily verified as follows. 
For $h: \R^{B^{++}} \times \R^{B^{++}} \to \R$ the difference $h(x,y) - h(a,a)$ can be estimated in $B^\ast \times B^\ast$ 
 by the maximum of the first order forward derivatives
of $h$ in $B^\ast$ times $|x-a|_1 + |y-a|_1$. Now apply this with $h = \nabla^{i,\alpha} \otimes \nabla^{j,\beta}  g$.

Since $\rho \le C L^k$ it follows from  \eqref{eq:remainder_g_low_r_contraction},   \eqref{eq:bound_M_g}
and  \eqref{eq:remainder_g_high_r_contraction} that 
$|g - \widetilde P|_{k, B} \le C L^{-(d+1)} |g|_{k+1, B}$. Application of the symmetrisation operator $S$ does not increase
the norm (see 
Lemma~\ref{le:symmetry_estimate})
and thus
$| Sg - S \widetilde P|_{k, B} \le   C L^{-(d+1)}$. Since $P:= S \widetilde P \subset \Pcal_2$ the assertion
\eqref{eq:contraction_taylor_remainder_g} follows.
\end{proof}

\begin{proof}[Proof of Lemma~\ref{le_contraction_I} (continued)]
It follows from the definition of the norm  $|g |_{j, B}$ for $j \in \{k , k+1\}$ and $g \in \BX^{\otimes r}$ 
in   \eqref{eq:tensor_norm}  and the fact that $h_{k+1}/ h_k = 2$ that
\begin{align} \label{eq:contraction_g_high_r}
|g|_{k, B} \le 8 L^{- \frac32 d} |g|_{k+1, B} \quad \forall g \in \BX^{\otimes r} \quad \forall r \ge 3. 
\end{align}
Since $\Pi_2 K(B)$ depends only on the second order Taylor polynomial of $K$ we get the estimate
\begin{align}\begin{split}  \label{eq:contraction_K_high_r}
|\langle (\1 - \Pi_2) K, g \rangle_0 |&= |\langle K, g \rangle_0| \le |K|_{k, B, T_0} \, | g|_{k,B} 
\\ &
\le 8 L^{-\frac32 d} |K|_{k, B, T_0} \, | g|_{k+1,B} \quad \forall g \in \BX^{\otimes r} \quad \forall r \ge 3.
\end{split}
\end{align}
Now for $\p \in \Xcal$ we have by the definition of $\Pi_2$, the boundedness of $\Pi_2$  and Lemma~\ref{le:discrete_taylor_approximation} 
\begin{align} 
\begin{split}
  |  \langle (\1 - \Pi_2) K(B), \p \rangle_0|& = \min_{P \in \Pcal_1} | \langle (\1 - \Pi_2) K(B), \p - P\rangle_0|   
\\&
 \le  \,  |  (\1 - \Pi_2) K(B)|_{k, B, T_0} \,  \,  \min_{P \in \Pcal_1} |  \p - P|_{k, B}   
 \\ &
\le  \,  C  \,   |K(B)|_{k, B, T_0}  \,  L^{-(d/2 + \lfloor d/2 \rfloor +1)}\,  |\p|_{k+1,B}.     \label{eq:contraction_K_r=1}
\end{split}
\end{align}
Since the pairing  $\langle (\1 - \Pi_2) K(B), g \rangle_0$ depends only on $Sg$ we get similarly for $g \in \Xcal \otimes \Xcal$
\begin{align} 
\begin{split} \label{eq:contraction_K_r=2}
 |  \langle (\1 - \Pi_2) K(B), g \rangle_0| &= \min_{P \in \Pcal_2} | \langle (\1 - \Pi_2) K(B),Sg - P\rangle_0|    
  \\&
\le  \,  |  (\1 - \Pi_2) K(B)|_{k, B, T_0} \,  \,  \min_{P \in \Pcal_2} |  Sg - P|_{k, B}    
\\ &
\le  \,  
C     \,  |K(B)|_{k, B, T_0}  \,  L^{-(d+1)} \, |g|_{k+1,B}.  
\end{split}
\end{align}
The desired assertion follows from 
  \eqref{eq:contraction_K_high_r}-- \eqref{eq:contraction_K_r=2}
and the definition   \eqref{eq:def_k_X_Tp_norm}
 of $|K(B)|_{k+1,B,T_0}$. 
\end{proof}

\chapter{Smoothness of the Renormalisation Map}\label{sec:smoothness}

In this chapter we prove smoothness of the renormalisation map $\myS_k$ formulated  in Theorem 6.7. 
The strategy is to write the  map $\myS_k$ as a composition of simpler maps and  to show smoothness for those maps. 
For this chapter we fix a scale $k$. 
To simplify typography, the functions  $H$ and $K$ without a scale index will in the following denote the quantities on scale $k$ 
while primed $H'$ and $K'$ are the quantities on the next scale $k+1$. 
In final Chapter 12 devoted to the  determination of the renormalized Hamiltonian and dealing with  an application of the implicit function theorem 
to the whole trajectory of relevant and irrelevant interactions $(H_k, K_k)$ we come back to explicit notation with scale indices.

\section{Decomposition of the renormalisation map}

Recall from Section~\ref{se:polymers} that the space of functionals $K \in M(\mathcal P_k)$ that factorise over connected components can be identified
with the space $M(\mathcal P_k^c)$ via the map $\iota_2: M(\mathcal P_k^c) \to M(\mathcal P_k)$ given by $(\iota_2 K)(X) = \prod_{Y \in \mathcal C(X)} K(Y)$. 
We often  do not distinguish between $K$ and $\iota_2K$. 
Similarly the space of functionals $F$  which factorise over $k$-blocks can be identified with the elements of 
$M(\mathcal B_k)$  via  $F^X := (\iota_1 F)(X) := \prod_{B \in \mathcal B_k(X)} F(B)$.

To simplify the notation we introduce the following abbreviations  from \cite{AKM16} for the Banach spaces involved in the decomposition of the map $S_k$:
\begin{align}
\begin{split}\label{eq:def_of_Banach_spaces}
\boldsymbol{M}_{}^{(A)}    & =(M(\mathcal{P}_{k}^c),\lVert\cdot\rVert_{k}^{(A)}),                 \\
{\boldsymbol{M}_{}'}^{(A)} & =(M(\mathcal{P}_{k+1}^c),\lVert\cdot\rVert_{k+1}^{(A)}),         \\
\boldsymbol{M}_{0}                 & =(M(\mathcal{B}_{k}),\lVert\cdot\rVert_{k,0}),                                 \\
\boldsymbol{M}_{|||}               & =(M(\mathcal{B}_k),\vertiii{\cdot}_k),                                                \\
B_{\kappa}          & =\left\{\boldsymbol{q}\in\mathbb{R}^{(d\times m)\times (d\times m)}_{\textrm{sym}}:\, | \boldsymbol{q}|_{\mathrm{op}}<\kappa\right\}. 
\end{split}
\end{align}
Here it is understood that $\boldsymbol{M}_{}^{(A)}$ consists of those elements of $M(\mathcal{P}_{k}^c)$
for which the norm $\lVert\cdot\rVert_{k}^{(A)}$ is finite and similarly for the other spaces.
The prime symbol in ${\boldsymbol{M}_{}'}^{(A)}$ indicates that the scale $k+1$.
The abbreviations ${\boldsymbol M}^{(A)}$ etc.  should  not be confused with the notation for the quadratic forms that appeared in Chapter~\ref{sec:weights}.

We also need a slight modification of the spaces 
$\boldsymbol{M}_{}^{(A)}$ because the renormalisation map does not preserve the factorisation on  the scale $k$, 
i.e., in general
for $K\in M(\mathcal{P}_k^{c})$
\begin{align}\label{eq:non_factorization}
\boldsymbol{R}_{k+1}^{(\boldsymbol{q})}K(X,\p)\neq \prod_{Y\in \mathcal{C}(X)}\boldsymbol{R}_{k+1}^{(\boldsymbol{q})}K(Y,\p)
\end{align}
(here we identified $K$ in $\iota_2 K$). 
In other words, $\boldsymbol{R}_{k+1}^{(\boldsymbol{q})}K$ cannot be identified with an element of $M(\mathcal{P}_k^c)$. 

In \cite{AKM16} this problem is solved by the use of the embedding $M(\mathcal{P}^c_k)\to M(\mathcal{P}_k)$ for $K$. 
This embedding is bounded by the submultiplicativity estimates from Lemma \ref{le:submult}. 
Moreover, the renormalisation map $\boldsymbol{R}_{k+1}^{(\boldsymbol{q})}$ is bounded on 
$M(\mathcal{P}_k)$ by the first part of Lemma~\ref{le:keyboundRk}, i.e., using \eqref{eq:bdRkl0}.
Their approach does not require  derivatives with respect to $\boldsymbol{q}$ of the  renormalisation map $\boldsymbol{R}_{k+1}^{(\boldsymbol{q})}$. 
Instead the authors of \cite{AKM16} are  dealing with a rather cumbersome loss of regularity. 

Here, we work only with differentiable functions.
Then the same approach can no longer be used because  for the concatenation $\boldsymbol{R}_{k+1}^{(\boldsymbol{q})} \iota_2 K$ 
we cannot bound the derivatives with respect to $\boldsymbol{q}$ of the renormalisation map
 $\boldsymbol{R}_{k+1}^{(\boldsymbol{q})}$. Indeed, our bound \eqref{eq:bdRkl1}
 in Lemma~\ref{le:keyboundRk} only applies to polymers in $\mathcal{P}_k$ satisfying $\pi(X)\in \Pckp$. 
 Therefore, we need to introduce an intermediate space between $M(\mathcal{P}_k^c)$ and $M(\mathcal{P}_k) $ which will be called $M(\Pclk)$. 
 This space is smaller than $M(\mathcal{P}_k) $ so that we can bound the derivatives of $\boldsymbol{R}_{k+1}^{(\boldsymbol{q})}K$ for $K\in M(\Pclk)$
  but larger than $M(\mathcal{P}_k^c)$ so that $\boldsymbol{R}_{k+1}^{(\boldsymbol{q})}$  factorizes over suitable decompositions of polymers $X\in \mathcal{P}_k$.
Roughly the space $M(\Pclk)$ consists of functionals that live on the scale $k$ but factor only on the  scale $k+1$.

 More precisely, we use the following definition. 
Recall the definition of the map $\pi: \mathcal P_k \to \mathcal P_{k+1}$ in \eqref{eq:defofpi} and  \eqref{eq:pifactor}.
\begin{definition} We say that $X \in \mathcal P_k \setminus \emptyset$ is a cluster,  $X\in\Pclk$,  if $\pi(X) \in \Pckp$. 
For $X \in \Pcal_k$, $Y\subset X$ is a cluster of $X$ if there is $U \in \mathcal C_{k+1}(\pi(X))$ 
such that
\begin{equation}  \label{eq:formula_cprime_components2}
Y = \bigcup_{Z \in \mathcal C(X): \pi(Z) \subset U}  Z 
\end{equation}
We use $\Ccl(X)$ to denote the set of all clusters of $X$.
\end{definition}

\begin{lemma}  \label{le:factorise_cprime} Assume that $L \ge 2^{d+2}+4R$. Let $X \in \mathcal P_k \setminus \emptyset$. 
Then
\begin{enumerate}[label=(\roman*),leftmargin=0.7cm]
\item For any $U \in \Ccal_{k+1}(\pi(X))$, there is a cluster $Y$ of $X$, $Y\in \Ccl(X)$, such that $\pi(Y)=U$;
\item $ X = \bigcup_{Y\in \Ccl(X)} Y$;
\item Two clusters of $Y_1, Y_2 \in \Ccl(X)$ are either identical or
strictly disjoint on scale $k$;
\item \label{it:factorise_cprime4} $\sum_{Y\in \Ccl(X)} |\Ccal(Y)| = |\Ccal(X)|$;
\item
If $K\in M(\Pcal_k)$ factors over connected components on the scale $k$ 
then
\begin{align}  \label{eq:factorise_next_scale}
	(\boldsymbol{R}_{k+1}^{(\boldsymbol{q})}K)(X,\p)=\prod_{Y\in \Ccl(X)} (\boldsymbol{R}_{k+1}^{(\boldsymbol{q})}K)(Y,\p). 
\end{align}
\end{enumerate}
\end{lemma}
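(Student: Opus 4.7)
The key observation underlying (i)--(iv) is that for any connected component $Z \in \Ccal(X)$, the image $\widetilde\pi(Z)$ is itself a connected $(k+1)$-polymer (either the closure $\overline{Z}$ if $Z$ is large, or a single block $B'\in\Bcal_{k+1}$ if $Z$ is small). Since $\pi(X)=\bigcup_{Z\in\Ccal(X)}\widetilde\pi(Z)$ by \eqref{eq:pifactor}, each such connected piece $\widetilde\pi(Z)$ must lie entirely in one connected component $U$ of $\pi(X)$. This induces a partition of $\Ccal(X)$ indexed by $\Ccal_{k+1}(\pi(X))$, namely $\Ccal(X)=\bigsqcup_{U\in\Ccal_{k+1}(\pi(X))}\Ccal_U(X)$ with $\Ccal_U(X)=\{Z\in\Ccal(X):\widetilde\pi(Z)\subset U\}$. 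Setting $Y_U=\bigcup_{Z\in\Ccal_U(X)}Z$ yields precisely the cluster associated to $U$, and $\pi(Y_U)=\bigcup_{Z\in\Ccal_U(X)}\widetilde\pi(Z)=U$ (the first equality by \eqref{eq:pifactor}, the second because the union exhausts $U$). This gives (i) and (ii). For (iii), any two clusters $Y_{U_1},Y_{U_2}$ with $U_1\neq U_2$ are disjoint unions of distinct components of $X$, so they are strictly disjoint on scale $k$ (their components were strictly disjoint already as components of $X$).

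For (iv) the plan is to verify that $\Ccal(Y_U)=\Ccal_U(X)$ as a set of $k$-polymers. The inclusion ``$\supset$'' is clear since each $Z\in\Ccal_U(X)$ is a connected $k$-polymer contained in $Y_U$; conversely, any two distinct $Z,Z'\in\Ccal_U(X)$ are strictly disjoint on scale $k$ (being components of $X$), hence no proper union of them is connected. Summing over $U$ then gives $\sum_{Y\in\Ccl(X)}|\Ccal(Y)|=\sum_{U}|\Ccal_U(X)|=|\Ccal(X)|$.

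The main part is (v), where I would combine the factorisation of $K$ on scale $k$ with the finite-range property of $\mu_{k+1}$. Writing $K(X,\p+\xi)=\prod_{Z\in\Ccal(X)}K(Z,\p+\xi)$ and regrouping according to clusters using (iv) we obtain
\begin{align*}
(\boldsymbol{R}_{k+1}K)(X,\p)=\int_{\Xcal_N}\prod_{Y\in\Ccl(X)}\prod_{Z\in\Ccal(Y)}K(Z,\p+\xi)\,\mu_{k+1}(\d\xi).
\end{align*}
Each inner product equals $K(Y,\p+\xi)$ by the assumed factorisation of $K$. The hard (but standard) part is to justify that the outer integral factorises over $Y$. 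For this I would use that for distinct $Y_1,Y_2\in\Ccl(X)$ the images $\pi(Y_1),\pi(Y_2)$ are strictly disjoint connected components of $\pi(X)$ in $\Pcal_{k+1}$, so that \eqref{eq:distU1U2} (whose hypothesis $L\ge 2^{d+2}+4R$ is exactly the one assumed here) gives $\dist(\pi(Y_1)^\ast,\pi(Y_2)^\ast)\ge L^{k+1}/2$. Combined with the inclusion $Y_i^\ast\subset\pi(Y_i)^\ast$ from \eqref{eq:pisetincl}, this places the supports of $\xi\mapsto K(Y_1,\p+\xi)$ and $\xi\mapsto K(Y_2,\p+\xi)$, viewed as functionals of $\nabla\xi$ (Lemma~\ref{le:shiftinvvsgradients}), at distance $\ge L^{k+1}/2$. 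The finite-range factorisation property \eqref{eq:finiteRangeProp} of $\mu_{k+1}$ then yields the desired factorisation of the integral, proving \eqref{eq:factorise_next_scale}. The main obstacle is a purely bookkeeping one: making sure that the neighbourhood inclusions and the finite-range distances line up, which is already packaged in \eqref{eq:pisetincl} and \eqref{eq:distU1U2}.
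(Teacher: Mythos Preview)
Your proposal is correct and follows essentially the same route as the paper: partition $\Ccal(X)$ according to which component of $\pi(X)$ contains $\widetilde\pi(Z)$, verify the cluster properties from this partition, and then for (v) combine the $k$-scale factorisation of $K$ with the finite-range property of $\mu_{k+1}$ via the chain $Y_i^\ast\subset\pi(Y_i)^\ast$ from \eqref{eq:pisetincl} and $\dist(\pi(Y_1)^\ast,\pi(Y_2)^\ast)\ge L^{k+1}/2$ from \eqref{eq:distU1U2}. The only cosmetic issue is that in (i) you invoke \eqref{eq:pifactor} to write $\pi(Y_U)=\bigcup_{Z\in\Ccal_U(X)}\widetilde\pi(Z)$, which tacitly uses $\Ccal(Y_U)=\Ccal_U(X)$---a fact you only spell out in (iv); since this identity is immediate (a union of strictly disjoint connected polymers has exactly those polymers as components), just state it up front.
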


\begin{proof}
Let $X \in \mathcal P_k$ and   $ U = \pi(X)$.
 By definition \eqref{eq:pifactor} of $\pi$ we have
 \begin{equation} \label{eq:cprime_surjective}  U = \bigcup_{Z \in \mathcal C(X)} \pi(Z).
 \end{equation}
Note first that a component of $X$ cannot be shared between two components of $U$:
\begin{equation} \label{eq:pi_connected_to_connected}
Z \in \Pck \text{ implies that }  \pi(Z) \in \Pckp.
\end{equation}
Indeed, if $Z \in \mathcal S_k$ then $\pi(Z)$ is a single block and hence connected. 
If $Z \in \Pck \setminus \mathcal S_k$ then $\pi(Z) = \overline Z$ and, in particular,
$Z \subset \pi(Z)$ and every block $B\in \Bcal_{k+1}(\pi(Z))$ contains at least one point from $Z$. 
For any two points $x,y\in  \pi(Z)$ consider $x'\in Z\cap B_x$, where  $B_x\in\Bcal_{k+1}(\pi(Z))$ is the block that contains the point $x$
and similarly $y'\in Z\cap B_y$. Given that $Z$  as well as any block are connected, there exist a path joining $x$ with $y$ via $x'$ and $y'$.
 
Thus, in view of  \eqref{eq:cprime_surjective} and the fact that a connected set cannot be contained in a union of two nonempty disjoint sets, we get
 \begin{equation} \label{eq:pi_of_component}
 Z \in \mathcal C(X) \text{ implies that }  \pi(Z) \text{ is contained in one  component of $U$ (on scale $k+1$)}
 \end{equation}
 For a connected component $U_1\in \Ccal(U)$ we consider the corresponding cluster $Y_1$ defined by \eqref{eq:formula_cprime_components2}, i.e., 
\begin{align}
Y_1=\bigcup_{Z\in \mathcal{C}(X):\,\pi(Z)\subset U_1} Z.
\end{align}
Then \eqref{eq:cprime_surjective} and \eqref{eq:pi_of_component} jointly
imply $\pi(Y_1)=U_1$ thus proving the first claim. Moreover, \eqref{eq:pi_of_component} also implies the second claim.
To prove the third  claim let $U_1$ and $U_2$ be two different components of $\pi(X)$. 
Again by \eqref{eq:pi_of_component}.
 the corresponding clusters $Y_1$ and $Y_2$ defined by  \eqref{eq:formula_cprime_components2}
 are disjoint. Since $Y_1$ and $Y_2$ are unions of $k$-components of $X$ they must by strictly
 disjoint on scale $k$. 
 
 The fourth claim follows now from the fact that clusters are union of distinct elements of $\Ccal(X)$. 
 
To prove the last claim,  it is sufficient to show that for different components $U_1$ and $U_2$ of 
$U = \pi(X)$ with the corresponding clusters $Y_1\subset U_1$ and $Y_2\subset U_2$,
the fields $\nabla\xi_{k+1}{\restriction_{Y_1^\ast}}$ and $\nabla\xi_{k+1}{\restriction_{Y_2^\ast}}$  are independent if 
$\xi_{k+1}$ is distributed according to  $\mu_{k+1}$.
Note that by \eqref{eq:pisetincl} $Y_{i}^\ast\subset {U_i}^\ast$ and
by \eqref{eq:distU1U2} $\mathrm{dist}(U_1^\ast,U_2^\ast)\geq \frac{L^{k+1}}{2}$
for $L\geq
2^{d+2}+4R$
which implies the independence of the gradient fields.
Therefore we find for any polymer $X\in \Pck$ and $K\in M(\Pclk)$
the identity  \eqref{eq:factorise_next_scale}.
\end{proof}

The space of functionals that factorise over clusters  can again be identified with
the space $M(\Pclk)$. Now we need to equip this space with a norm. It turns out that we need norms that involve, 
in addition to the  parameter $A$ that regulates the growth
depending on the number of blocks, another parameter $B$ that regulates the growth depending on the number of connected components of the polymer.
For $K\in M(\Pclk)$ and $A,B>1$, we define
\begin{align}
	\lVert K\rVert_{k}^{(A,B)}=\sup_{X\in \Pclk} A^{|X|_k}B^{|\mathcal{C}(X)|}\lVert K(X)\rVert_{k,X}. 
\end{align}
We also consider the norm $\lVert\cdot\rVert_{k:k+1}^{(A,B)}$ obtained by replacing, on the right hand side above, 
the norm $\lVert\cdot\rVert_{k,X}$  by the norm $\lVert\cdot\rVert_{k:k+1,X}$. 

Again we introduce abbreviations  for the corresponding normed spaces
\begin{align}
\begin{split}
\widehat{\boldsymbol{M}}^{(A,B)}     & =\{M(\Pclk),\lVert\cdot\rVert_{k}^{(A,B)}\},     \\
\widehat{\boldsymbol{M}}_{:}^{(A,B)} & =\{M(\Pclk),\lVert\cdot\rVert_{k:k+1}^{(A,B)}\} .
\end{split}
\end{align}

Recall  the definition of $K'=\myS_k(H,K,\boldsymbol q)$ in  \eqref{eq:define_Sk} and \eqref{eq:defofKk+1}: for   $U\in \mathcal{P}_{k+1}$ we have
\begin{align}
	K'(U,\p)=\sum_{X\in \mathcal{P}_k} \chi(X,U) \tilde{I}^{U\setminus X}(\p)
	\tilde{I}^{-X\setminus U}(\p)\int_{\Xcal_N}(\tilde{J}(\p)\circ 
	P(\p+\xi))(X)\,\mu_{k+1}^{(\boldsymbol q)}(\d\xi),                                                                                     
\end{align}
where $\tilde{I}=e^{-\tilde{H}}$, $\tilde{J}=1-\tilde{I}$,  $I=e^{-H}$, $P=(I-1)\circ K$,  and \\

\noindent
$\tilde{H}(B,\p)=(\Pi_2\boldsymbol{R}_{k+1}^{(\boldsymbol{q})}H)(B,\p)-(\Pi_2\boldsymbol{R}_{k+1}^{(\boldsymbol{q})}K)(B,\p)$.
\smallskip

Using first the definition of the circle product $\circ$  and then the factorisation property
 \eqref{eq:factorise_next_scale} (and \eqref{eq:circfactor} for $P_2=(I-1)\circ K$ to verify its assumption)  we get
 \begin{align}
 \begin{split}
&K'(U,\p)=\sum_{\substack{X_1,X_2\in \mathcal{P}_k             \\ 
X_1\cap X_2=\emptyset}} \chi(X_1 \cup X_2 ,U)\tilde{I}^{U\setminus (X_1 \cup X_2)}\tilde{I}^{-X_1 \cup X_2\setminus U}(\p) 
\tilde{J}^{X_1}(\p) \times  \\[-0.6cm]
& \hspace{8cm}\times\int_{\Xcal_N} P(X_2, \p+\xi)  \,\mu_{k+1}^{(\boldsymbol q)}(\d\xi)     \\
= &    \sum_{\substack{X_1,X_2\in \mathcal{P}_k             \\ X_1\cap X_2=\emptyset}}
\chi(X_1 \cup X_2 ,U)\tilde{I}^{U\setminus (X_1 \cup X_2)}\tilde{I}^{-X_1 \cup X_2\setminus U}(\p)        
 \tilde{J}^{X_1}(\p) 
   \prod_{Y\in \Ccl(X_2)}   (\myR_{k+1}^{(\boldsymbol{q})}P)(Y,\p)                                                                
 \end{split}
\end{align}
\bigskip

It is now easy to see that the map $\myS_k$ can be  rewritten as a composition of the following maps.
The exponential map 
\begin{align}    \label{eq:exponential_map_def} \begin{split}
E:\boldsymbol{M}_{0}\rightarrow \boldsymbol{M}_{|||},\quad E(H)=\exp(H),
\end{split}                                               
\end{align}
three polynomial maps
\begin{align}
\begin{split}
& P_1:\boldsymbol{M}_{|||}\times \boldsymbol{M}_{|||}\times \boldsymbol{M}_{|||}\times \widehat{\boldsymbol{M}}_{:}^{(A/(2\AP),B)}   \rightarrow {\boldsymbol{M}_{}'}^{(A)},\\
& P_1(I_1,I_2,J,K)(U,\p)\\	  
&  =\sum_{\substack{X_1,X_2\in \mathcal{P}_k  \\ X_1\cap X_2=\emptyset}} \chi(X_1\cup X_2,U)
I_1^{U\setminus (X_1\cup X_2)}(\p)I_2^{(X_1\cup X_2)\setminus U}J^{X_1}(\p) \prod_{Y \in \Ccl(X_2)} K(Y,\p),\\
& P_2:\boldsymbol{M}_{|||}\times \boldsymbol{M}_{}^{(A)}\rightarrow \boldsymbol{M}_{}^{(A/2)},\quad  P_2(I,K)=(I-1)\circ K,\\
& P_3:\boldsymbol{M}^{(A/2)}\rightarrow \widehat{\boldsymbol{M}}^{(A/2,B)},\quad P_3K(X,\p)=\prod_{Y\in \mathcal{C}(X)} K(Y,\p), 
\end{split}
\end{align}
and, finally,  two maps which
include an
integration with respect to $\mu_{k+1}^{(\boldsymbol q)}$. This is the point where regularity is lost for derivatives
in $\boldsymbol q$ direction if the original 
finite range decomposition from \cite{AKM13} is used.
These maps are given by
\begin{align}\begin{split}\label{defs1}
	  & R_1:\widehat{\boldsymbol{M}}_{}^{(A/2,B)}\times B_{\kappa}\rightarrow \widehat{\boldsymbol{M}}_{:}^{(A/(2\AP),B)} ,\\
	  & R_1(P,\boldsymbol{q})(X,\p)=(\myR^{(\boldsymbol{q})}_{k+1}P)(X,\p)=\int_{\Xcal_N}P(X,\p+\xi)\,\mu_{k+1}^{(\boldsymbol{q})}(\d\xi)   
	\end{split}\end{align}
	and
		\begin{align}
	\begin{split}\label{defs2}
	  & R_2:\boldsymbol{M}_{0}\times\boldsymbol{M}_{}^{(A)}\times B_{\kappa}\rightarrow \boldsymbol{M}_{0} ,                       \\
	  & R_2(H,K,\boldsymbol{q})(B,\p)=\Pi_2\left((\myR_{k+1}^{(\boldsymbol{q})}H)(B,\p)-(\myR_{k+1}^{(\boldsymbol{q})}K)(B,\p)\right).                   
	\end{split}
\end{align}
Here, the constant $\AP$ is as specified in Theorem~\ref{th:weights_final}~\ref{w:w7}.
In terms of these maps the map $\myS_k$ can be expressed  as 
\begin{multline}
	\label{eq:decomposition_S}
	\myS_k(H,K,\boldsymbol{q})  = \\
	P_1\Big(E\big(-R_2(H,K,\boldsymbol{q})\big),
	E\big(R_2(H,K,\boldsymbol{q})\big),1-E\big(-R_2(H,K,\boldsymbol{q})\big),R_1\big(P_3\big(P_2(E(-H),K)\big),\boldsymbol{q}\big)\Big).
	\end{multline}
Note that when we insert  in the arguments $I_1$ and $I_2$ of $P_1$ we find $I_1=I_2^{-1}$. 
Since the inversion is not continuous for the strong norm we  have to introduce the two terms as different arguments of $P_1$. 
They are,  however,  Hequal to $E(H)$ and $E(-H)$ for some $H$  and we clearly  have $\lVert H\rVert_{k,0}=\lVert -H\rVert_{k,0}$.

Compared to \cite{AKM16} the smoothness estimates for $R_1$ and $R_2$ change. Actually
they become much simpler because the bulk of the work has been done in \cite{Buc18}.
The estimate for $P_1$ changes slightly because of the slight changes in the combinatorics.
The proof for the smoothness of $E$ has been simplified.
The remaining smoothness estimates are very similar.

To control the polynomial maps $P_2$ and $P_3$ we will use the assumptions on 
$L$ and $h$ in Lemma~\ref{le:submult}, i.e., 
\begin{equation} \label{eq:hL_for_products}
L \ge 2^{d+3}+16R, \quad h \ge h_0(L),
\end{equation}
where $h_0(L)$ is as in \eqref{eq:definition_h0}.
For $P_1$ we need a slightly stronger assumption for $L$
\begin{align}\label{eq:hL_for_P1}
L \ge \max(2^{d+3}+16R, 4d(2^d+R)), \quad h \ge h_0(L).
\end{align}
For the maps $R_1$ and $R_2$ we use the assumption
\begin{align}  \label{eq:L_for_R1R2}
\begin{split}
L \ge  2^{d+3}+16R, 
\end{split}
\end{align}
in Lemma~\ref{le:keyboundRk}.
Finally, for the map $E$ we use the weaker condition
\begin{equation} \label{eq:L_for_map_E}
L \ge {3}.
\end{equation}

\section{The immersion \texorpdfstring{$E$}{E}}

\begin{lemma}  \label{le:smoothness_exp} 
Assume  \eqref{eq:L_for_map_E}.
Then the  map 
$$E : B_{\frac18}(M_0(\Bcal_k), \| \cdot \|_{k,0}) \to (M(\Bcal_k),       \vertiii{ \cdot }_{k,B}) \qquad \hbox{defined by $E(H) = e^{H}$}
$$
 is smooth
and the $r$-th derivative  (viewed as a  map from $B_{\frac18}(M_0(\mathcal B_k))$ to the set of  $r$-multilinear forms on  
$M_0(\Bcal_k)$ with values in  $M(\Bcal_k)$) is uniformly bounded.  More precisely if we set 
$$ \| D^r E(H)\| 
:= \sup \bigl\{    \opnorm[\big]{D^r E(H)(\dot H_1, \ldots, \dot H_r)}_{k, B} :  \| \dot H_i\|_{k,0} \le 1 \hbox{ for $i=1, \ldots, r$} \bigr\}$$
and
$$ C_r := 2^r e^{\frac14}  \max_{t \ge 0} e^{-\frac{t}{4}} (1+t)^{r},$$
then
$$ D^r E(H)(\dot H_1, \ldots, \dot H_r) = e^{H} \dot H_1 \ldots \dot H_r$$
and 
\begin{equation} \label{eq:bound_D_e_to_H}
 \| D^r E(H)\|  \le C_r\    \text{ for any } \  H \in B_{\frac18}(M_0, \| \cdot \|_{k,0}).
 \end{equation}
  Moreover, 
 \begin{equation} \label{eq:bound_exp_H}
 \opnorm[\big]{e^{H}- 1 }_{k, B} \le 8 \|H\|_{k,0}\  \text{ for any }\  H \in B_{\frac18}(M_0, \| \cdot \|_{k,0}).
 \end{equation}
\end{lemma}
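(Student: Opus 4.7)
The plan is to view the $r$-th derivative of $E$ at $H$ as the $r$-multilinear map
$(\dot H_1,\dots,\dot H_r)\mapsto e^H\dot H_1\cdots\dot H_r$, obtain a pointwise bound on the $|\cdot|_{T_\p}$ norm of this product, and then convert to the strong norm $\vertiii{\cdot}_{k,B}$ using the exponential weight $W_k^B(\p)=e^{\frac12(\p,\boldsymbol G_k^B\p)}$. All pointwise work is for a fixed block $B$, as $\vertiii{\cdot}_k=\vertiii{\cdot(B)}_{k,B}$ by translation invariance, and the product $e^H(X)=\prod_{B\in\Bcal_k(X)}e^{H(B)}$ puts $e^H$ in $M(\Bcal_k)$, inheriting locality and shift invariance from $H$.

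First I would establish the basic pointwise inequality. Iterating the product estimate \eqref{eq:product_estimate_concrete} gives $|H^n|_{k,B,T_\p}\le |H|_{k,B,T_\p}^n$ for every $n\in\NN$, so the series $\sum_{n\ge 0}H^n/n!$ converges in the (complete) quotient space on which $|\cdot|_{k,B,T_\p}$ is a norm, defines $e^H$ as an element of $C^{r_0}(\Vcal_N)$ after evaluation, and satisfies
\begin{equation}\label{eq:eH_pointwise}
|e^H|_{k,B,T_\p}\le e^{|H|_{k,B,T_\p}}.
\end{equation}
The same series argument applied to $\frac{d^r}{dt_1\cdots dt_r}\big|_{t=0}e^{H+\sum_i t_i\dot H_i}$ yields the identification $D^rE(H)(\dot H_1,\dots,\dot H_r)=e^H\dot H_1\cdots\dot H_r$ and, via the product estimate once more,
\begin{equation}\label{eq:DrE_pointwise}
|D^rE(H)(\dot H_1,\dots,\dot H_r)|_{k,B,T_\p}\le e^{|H|_{k,B,T_\p}}\prod_{i=1}^r|\dot H_i|_{k,B,T_\p}.
\end{equation}
Continuity in $H$ of all derivatives follows by bounding $|e^{H+\dot H}-e^H|_{T_\p}$ by $|\dot H|_{T_\p}\int_0^1 e^{|H+s\dot H|_{T_\p}}\,\d s$ and arguing analogously for higher derivatives, so $E$ is indeed $C^\infty$.

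The second step is to translate \eqref{eq:DrE_pointwise} into the strong norm. Set $t:=|\p|_{k,\ell_2(B)}^2\le(\p,\boldsymbol G_k^B\p)$. By Lemma~\ref{le:Htp_vs_Hk0} in the form $|H|_{T_\p}\le 2(1+t)\|H\|_{k,0}$ and, for the linear factors, $|\dot H_i|_{T_\p}\le(1+\sqrt t)^2\|\dot H_i\|_{k,0}\le 2(1+t)\|\dot H_i\|_{k,0}$, we obtain for $\|H\|_{k,0}\le\tfrac18$ and $\|\dot H_i\|_{k,0}\le 1$
\begin{equation*}
|D^rE(H)(\dot H_1,\dots,\dot H_r)|_{k,B,T_\p}\,W_k^{-B}(\p)\le e^{(1+t)/4}\cdot 2^r(1+t)^r\cdot e^{-t/2}=2^re^{1/4}(1+t)^re^{-t/4}.
\end{equation*}
Taking the supremum over $\p$ (equivalently over $t\ge 0$) gives \eqref{eq:bound_D_e_to_H} with the constant
$C_r=2^re^{1/4}\max_{t\ge 0}e^{-t/4}(1+t)^r$ stated in the lemma.

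For the auxiliary estimate \eqref{eq:bound_exp_H} I would use $e^H-1=\int_0^1 e^{sH}H\,\d s$ together with \eqref{eq:eH_pointwise} and the product estimate, getting
\begin{equation*}
|e^H-1|_{k,B,T_\p}\le |H|_{k,B,T_\p}\int_0^1 e^{s|H|_{k,B,T_\p}}\,\d s\le 2(1+t)\|H\|_{k,0}\,e^{(1+t)/4}.
\end{equation*}
Dividing by $W_k^B(\p)$ and optimising over $t$ (the maximum of $(1+t)e^{-t/4}$ is attained at $t=3$ with value $4e^{-3/4}$) yields $\vertiii{e^H-1}_{k,B}\le 8e^{-1/2}\|H\|_{k,0}\le 8\|H\|_{k,0}$. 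No step is really delicate here: the whole argument reduces to the submultiplicativity of $|\cdot|_{k,B,T_\p}$ from Lemma~\ref{le:norms_pointwise}, the coefficient bound of Lemma~\ref{le:Htp_vs_Hk0}, and the fact that the exponential weight in the strong norm dominates a polynomial in $t$ times $e^{(1+t)/4}$ as soon as $\|H\|_{k,0}\le\tfrac18$. The only mild subtlety is the interchange of the Taylor projection $\tay_\p$ with the infinite series defining $e^H$ and its derivatives, which is handled by the geometric bound $\sum_n|H|_{T_\p}^n/n!\le e^{|H|_{T_\p}}<\infty$ that makes the series absolutely convergent in the complete space $(C^{r_0}(\Vcal_N)/\sim,|\cdot|_{k,B,T_\p})$.
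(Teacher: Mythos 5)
Your proposal is correct and follows essentially the same route as the paper: the product property of the $|\cdot|_{k,B,T_\p}$ seminorm, the coefficient bound of Lemma~\ref{le:Htp_vs_Hk0} with $\|H\|_{k,0}\le\tfrac18$, domination of the resulting factor $2^re^{(1+t)/4}(1+t)^r$ by the strong weight, and the integral representation $e^H-1=\int_0^1 e^{sH}H\,\d s$, yielding exactly the constants $C_r$ and $8e^{-1/2}\le 8$. The only cosmetic difference is that you deduce Fr\'echet smoothness from continuity of the Gateaux derivatives, whereas the paper verifies the Fr\'echet remainder directly via $e^{\dot H}-1-\dot H=\dot H^2\sum_{m\ge 0}\dot H^m/(m+2)!$; both are fine given the estimates you already have.
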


\begin{proof}
We first recall some notation.
In   \eqref{eq:norl_ell2_phi} we defined the (semi)norm on fields
\begin{align}
 | \p|_{k, \ell_2(B)}^2 
=  \frac{1}{h_k^2}\sup_{(i, \alpha)  \in \mf v_1} \sum_{x \in B}  L^{2k (|\alpha| - 1)} |\nabla^\alpha \p_i(x)|^2.
\end{align}
Since   \eqref{eq:L_for_map_E} holds we can apply  Lemma~\ref{le:Htp_vs_Hk0} guaranteeing that
\begin{align}  \label{eq:est_HTp_bis}
|H|_{k,B,T_\p} \le 2( 1+ | \p |_{k, \ell_2(B)}^2) \, \| H\|_{k,0}  \text{ for all }  H \in M_0(\Bcal_k).
\end{align}
The strong norm $   \vertiii{ \cdot }_{k,B}$ is defined using the weight $W^B_k = e^{\frac12 (\p,\boldsymbol{G}_k^B \p)}$ where
\begin{align}
 (\p, \boldsymbol{G}_k^B \p) \underset{ \eqref{eq:strong_weight}}{=}  
 \frac{1}{h_k^2} \sum_{1\leq |\alpha|\leq \lfloor d/2 \rfloor +1}L^{2k(|\alpha|-1)}(\nabla^\alpha \p,\mathbb{1}_B      \nabla^\alpha\p)  \ge  |\p|^2_{k, \ell_2(B)}.
\end{align}
Thus
\begin{align}  \label{eq:immersion_bound_strong_norm}
  \vertiii{ F }_{k,B}
  \underset{\eqref{strongnorm},  \eqref{eq:strong_weight}}{=} 
  \sup_\p e^{-\frac12  (\p, \boldsymbol{G}_k^B \p)} |F|_{k,B, T_\p} \le \sup_\p e^{-\frac12 |\p|^2_{k, \ell_2(B)}} |F|_{k,B,T_\p}.
\end{align}

To prove the differentiability  we argue by induction. The main point is to show that
\begin{equation}  \label{eq:proof_diff_e_to_H}
\lim_{\dot H \to 0}  \frac{1}{\| \dot H\|_{k,0}}   \,  \sup_{ \|\dot H_i\|_{k,0} \le 1} 
\opnorm[\Big]{\underbrace{(e^{H + \dot H} -e^{H} - e^{H} \dot H) \dot H_1 \ldots \dot H_r}_{ =: f(\dot H)}}_{k,B}
 = 0.
\end{equation} 
We have 
$$f(\dot H) = e^{H} ( e^{\dot H} - 1 - \dot H)  \dot H_1 \ldots \dot H_r$$
In the following we assume, without loss of generality,  that
$$\| \dot H \|_{k,0} \le \frac1{16}.$$
Combining the equality 
$$ e^{\dot H} - 1 - \dot H = \sum_{m=2}^\infty \frac{1}{m!} \dot H^m = \dot H^2 \sum_{m=0}^\infty \frac{1}{(m+2)!} \dot H^m,$$
with  the product property of the $T_\p$ norm, the estimate $\sum_{m=0}^\infty \frac{1}{(m+2)!} x^m \le e^x$ valid for $x \ge0$, 
and  \eqref{eq:est_HTp_bis},  we infer  that
$$ | e^{\dot H} - 1 - \dot H|_{k,T_\p} \le  | \dot H|_{k,T_\p}^2  e^{| \dot H|_{k,T_\p}} \le 
\| \dot H\|_{k,0}^2 \, 4 (1 + | \p|_{k, \ell_2(B)}^2)^2    \, e^{\frac18 (1 + | \p|_{k, \ell_2(B)}^2)}.
$$
Thus, using  again the product property,  the assumptions $\|H \|_{k, 0} \le \frac18$ and $\| \dot H_i\|_{k,0} \le 1$,
as well as  \eqref{eq:est_HTp_bis},     we get 
\begin{align*} |f(\dot H)&|_{k, T_\p} \le   \, 
e^{|H|_{k, T_\p}} \, \, \, \| \dot H\|_{k,0}^2 \, 4 (1 + | \p|_{k, \ell_2(B)}^2)^2    \, e^{\frac18 (1 + | \p|_{k, \ell_2(B)}^2)} 
\, \,\,  \prod_{j=1}^r |\dot H_j|_{k, T_\p}  \\
\le & \, 
e^{\frac14 (1+ | \p|_{k, \ell_2(B)}^2)} \, \,  \, \| \dot H\|_{k,0}^2 \,  4 (1 + | \p|_{k, \ell_2(B)}^2)^2    \, e^{\frac18 (1 + | \p|_{k, \ell_2(B)}^2)} 
\, \, \, 2^r ( (1 + | \p|_{k, \ell_2(B)}^2)^r   \\  
\le & \,  \| \dot H\|_{k,0}^2  \,  2^{r+2}  
(1 + | \p|_{k,\ell_2(B)}^2)^{(r+2)} e^{\frac38}  e^{\frac38 | \p|_{k, \ell_2(B)}^2} \\
 \le  & \, 2^{r+2} C'_r e^{\frac38}  \| \dot H\|_{k, 0}^2 \, e^{
\frac12 |\p|^2_{k, \ell_2(B)}} ,
\end{align*}
where $C'_r = \sup_{t \ge 0}  e^{-\frac{t}{8}}  (1+ t)^{r+2}$.
Using \eqref{eq:immersion_bound_strong_norm} we get $\vertiiiReg{f(\dot H)}_{k,B}  \le 2^{r+2} C'_r e^{\frac38} \| \dot H\|_{k,0}^2$ 
and the assertion  \eqref{eq:proof_diff_e_to_H} follows.

To prove the bound  \eqref{eq:bound_D_e_to_H} we use the product property of the $T_\p$ norm to deduce that
\begin{align*}
&\,   | D^r E(H)(\dot H_1, \ldots, \dot H_r)|_{k,T_\p} \\
\le & \,  e^{|  H|_{k,T_\p}}  \, | \dot H_1 |_{k,T_\p} \ldots | \dot H_r |_{k,T_\p}\\
\le & \,e^{\frac14} e^{\frac14 | \p|_{k, \ell_2(B)}^2}  \, \| \dot H_1\|_{k,0} \ldots \| \dot H_r\|_{k,0} \, 2^r  (1+  | \p|_{k, \ell_2(B)}^2)^{r} , \\
\le &\,  C_r     e^{\frac12 | \p|_{k, \ell_2(B)}^2}  \,  \| \dot H_1\|_{k,0} \ldots \| \dot H_r\|_{k,0}.
\end{align*} 
Dividing  both sides by $ e^{\frac12 | \p|_{k, \ell_2(B)}^2}$,  taking the supremum over $\p$, and using    the definition\eqref{eq:immersion_bound_strong_norm},   we get  \eqref{eq:bound_D_e_to_H}.
Finally,  the bound   \eqref{eq:bound_exp_H} follows from \eqref{eq:bound_D_e_to_H} with $r=1$ since 
$ e^{H} - 1 = \int_0^1 DE(tH)(H) \, dt$
and $C_1 =2 e^{1/4}  \max_{ t \ge 0}   e^{-\frac{t}4 }(1+t) = 8 e^{-1/2}$. 
\end{proof}

\section{The map \texorpdfstring{$P_2$}{P2}}
We next consider the map
$$ P_2(I,K) = (I-1) \circ K.$$

\begin{lemma}\label{le:P2} Let $L$ and $h$ satisfy the lower bounds \eqref{eq:hL_for_products}.
Then the map $P_2$ restricted to  $B_{\rho_1}(1)\times B_{\rho_2}\subset \boldsymbol{M}_{|||}\times \boldsymbol{M}_{}^{(A)}$ with 
$\rho_1<(2A)^{-1}$ and $\rho_2<\frac{1}{2}$  is smooth for any $A\geq 2$ and satisfies the bounds
\begin{align}\label{eq:P2.keybd}
\frac{1}{j_1!j_2!}\lVert(D_I^{j_1}D_{K}^{j_2}P_2)(I,K)(\dot{I},\ldots,\dot{I},\dot{K},\ldots,\dot{K})\rVert_{k}^{(A/2)}\leq\bigl(2A\dot{\vertiii{I}}_k\bigr)^{j_1} 
\bigl(2 \lVert\dot{K}\rVert_{k}^{(A)}\bigr)^{j_2}.                                                                                               
\end{align}
 In particular, for $I\in B_{\rho_1}(1)$ and $K\in B_{\rho_2}$ this implies
\begin{align}\label{eq:P2.secbd}
\lVert P_2(I,K)\rVert_{k}^{(A/2)}\leq 2A\vertiii{I-1}_k+2\lVert K\rVert_{k}^{A}.
\end{align}	 
\end{lemma}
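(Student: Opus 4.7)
The plan is to exploit the fact that $P_2(I,K)$ is a polynomial map that is linear in each block value $I(B)$ and each connected-component value $K(Z)$. Expanding the circle product and using the factorisation of $K$ over connected components gives
\begin{equation*}
P_2(I,K)(X) = \sum_{Y\subset X}(I-1)^Y \prod_{Z\in \mathcal{C}(X\setminus Y)}K(Z),
\end{equation*}
so the normalised mixed derivative $\frac{1}{j_1!j_2!}D_I^{j_1}D_K^{j_2}P_2(I,K)(X)(\dot I^{j_1},\dot K^{j_2})$ is obtained by marking $j_1$ blocks $J_1\subset \mathcal{B}_k(Y)$ to carry $\dot I$ and $j_2$ connected components $J_2\subset \mathcal{C}(X\setminus Y)$ to carry $\dot K$, and summing over all markings.

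For each marked configuration I would apply submultiplicativity of the norms. The required bound
\begin{equation*}
\|\dot I^{J_1}(I-1)^{Y\setminus J_1}\dot K^{J_2}K^{\mathcal{C}(X\setminus Y)\setminus J_2}\|_{k,X}\le \vertiii{\dot I}_k^{j_1}\vertiii{I-1}_k^{|Y|_k-j_1}\prod_{Z\in J_2}\|\dot K(Z)\|_{k,Z}\prod_{Z'}\|K(Z')\|_{k,Z'}
\end{equation*}
follows from the pointwise product property of the $|\cdot|_{k,X,T_\p}$-norm (Lemma~\ref{le:submultofsimplenorm}) together with the weight-factorisation bounds in Theorem~\ref{th:weights_final}: properties \ref{w:w3} and \ref{w:w4b} let the weak and strong weights factor over strictly disjoint components/blocks, and \ref{w:w5} gives $w_k^X(\p)\ge W_k^Y(\p)\,w_k^{X\setminus Y}(\p)$. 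After bounding $\|K(Z)\|_{k,Z}\le \|K\|_k^{(A)}A^{-|Z|_k}$ and its analogue for $\dot K$, and multiplying by $(A/2)^{|X|_k}$, the powers of $A$ telescope and one is left with
\begin{equation*}
(A\vertiii{\dot I}_k)^{j_1}(\|\dot K\|_k^{(A)})^{j_2}\cdot 2^{-|X|_k}\sum_{Y\subset X}\binom{|Y|_k}{j_1}\binom{c(Y)}{j_2}\alpha^{|Y|_k-j_1}\beta^{c(Y)-j_2},
\end{equation*}
where $c(Y):=|\mathcal{C}(X\setminus Y)|$, $\alpha:=A\vertiii{I-1}_k<1/2$ and $\beta:=\|K\|_k^{(A)}<1/2$.

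The key analytic step is a bound for the combinatorial sum uniform in $X\in\Pck$. For $j_1=j_2=0$ I would split the term $Y=X$ (contributing $\alpha^{|X|_k}$) from the terms $Y\ne X$; since $X$ is connected, the latter have $c(Y)\ge1$, so $\beta^{c(Y)}\le\beta$, giving $\sum_Y\alpha^{|Y|_k}\beta^{c(Y)}\le\alpha^n+\beta\bigl((1+\alpha)^n-\alpha^n\bigr)$ with $n=|X|_k$. For $\alpha,\beta\le1/2$ both $\alpha/2$ and $(1+\alpha)/2$ are strictly less than $1$, so $2^{-n}(\cdots)$ is maximised at $n=1$ and equals $(\alpha+\beta)/2$; this proves \eqref{eq:P2.secbd}. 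For general $j_1,j_2$ and to obtain \eqref{eq:P2.keybd}, I would apply the Cauchy estimate for Banach-space-valued polynomials to $(t_1,t_2)\mapsto P_2(I+t_1\dot I,K+t_2\dot K)$, integrating on circles of radii $r_i$ chosen so that $(I+t_1\dot I,K+t_2\dot K)$ remains in $B_{\rho_1}(1)\times B_{\rho_2}$ and inserting the norm bound just established. Smoothness of $P_2$ on $B_{\rho_1}(1)\times B_{\rho_2}$ is then automatic from these uniform derivative estimates.

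The hard part is geometric-combinatorial: $c(Y)$ can be as large as $|X|_k-|Y|_k$, so $\binom{c(Y)}{j_2}$ and the number of subsets $Y$ can together grow exponentially in $|X|_k$ and would overwhelm the $2^{-|X|_k}$ prefactor. It is precisely the smallness assumption $\beta<1/2$ — one damping factor per connected component of $X\setminus Y$ — combined with $\alpha<1/2$ that cancels this combinatorial growth and delivers the desired uniform bound; without these smallness conditions $P_2$ would not be controlled in the $\|\cdot\|_k^{(A/2)}$-norm uniformly over connected polymers.
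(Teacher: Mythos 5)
Your expansion of $P_2(I,K)(X)$, the identification of the mixed derivatives with sums over marked configurations ($j_1$ blocks carrying $\dot I$, $j_2$ components carrying $\dot K$), and the reduction via submultiplicativity (Lemma~\ref{le:submult}, which is exactly the packaging of the pointwise product property plus the weight factorisation you cite) all match the paper's argument. Your direct proof of \eqref{eq:P2.secbd} is a genuinely different and valid route: you use connectedness of $X$ to get one factor $\beta$ from each $Y\neq X$ and obtain the bound $(\alpha+\beta)/2$, whereas the paper deduces \eqref{eq:P2.secbd} from the $j_1+j_2=1$ case of \eqref{eq:P2.keybd} by integrating $t\mapsto P_2(1+t(I-1),tK)$ from $(1,0)$, using $P_2(1,0)=0$. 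Either works.

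The gap is in your treatment of the main estimate \eqref{eq:P2.keybd}. A Cauchy estimate with radii $r_i$ chosen so that $(I+t_1\dot I,K+t_2\dot K)$ stays inside $B_{\rho_1}(1)\times B_{\rho_2}$ cannot give the uniform constants $2A$ and $2$: since $\rho_1$ may be arbitrarily close to $(2A)^{-1}$ and $\rho_2$ to $\tfrac12$, and $(I,K)$ may sit arbitrarily close to the boundary of these balls, the admissible radii degenerate and the resulting bound $M\,r_1^{-j_1}r_2^{-j_2}$ blows up, while \eqref{eq:P2.keybd} is claimed uniformly on the ball. (You would also need to complexify the spaces and norms, or substitute a real-variable Cauchy-type estimate, which you do not address.) The repair is available inside your own setup and is exactly what the paper does: in the marked-configuration sum you already displayed, bound $\binom{|Y|_k}{j_1}\,\alpha^{|Y|_k-j_1}\le 2^{|Y|_k}\alpha^{|Y|_k-j_1}\le 2^{j_1}$ and $\binom{c(Y)}{j_2}\,\beta^{c(Y)-j_2}\le 2^{j_2}$ using $\alpha\le\tfrac12$, $\beta\le\tfrac12$, so that $2^{-|X|_k}\sum_{Y\subset X}(\cdots)\le 2^{j_1+j_2}$ uniformly in $X$, which yields \eqref{eq:P2.keybd} directly. (Alternatively, your Cauchy strategy can be salvaged by first noting that your sup bound extends to the larger region $A\vertiii{I-1}_k\le 1$, $\lVert K\rVert_k^{(A)}\le 1$, which permits radii $r_1\ge (2A\vertiii{\dot I}_k)^{-1}$, $r_2\ge (2\lVert\dot K\rVert_k^{(A)})^{-1}$ independent of the base point; but then the complexification issue must still be handled, so the direct term-by-term estimate is the cleaner completion.) With that replacement, smoothness follows as in the paper from the explicit polynomial structure of the derivatives together with these uniform bounds.
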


On right hand side of \eqref{eq:P2.keybd} we used the convention
$$ a^0 = 1$$
that we will use   also in the rest of this section. 

\begin{proof}
We have
\begin{align}
\begin{split}
P_2(I,K)(X) & =((I-1)\circ K)(X)=\sum_{Y\in \mathcal{P}_k(X)}(I-1)^{X\setminus Y}K(Y) \\                           
& =\sum_{Y\in \mathcal{P}_k(X)}\prod_{B\in \mathcal{B}_k(X\setminus Y)}(I(B)-1)\prod_{Z\in \mathcal{C}(Y)}K(Y). 
\end{split}
\end{align}
Using \ref{it:norm1} and \ref{it:norm2} of Lemma \ref{le:submult} and $\Gamma_{k,A}(Y)=A^{|Y|_k}$ 
we get  
\begin{align}
\begin{split}\label{eq:P2.1}                                                                         
\lVert P_2(I,K)(X)\rVert_{k,X}
&\leq \sum_{Y\in \mathcal{P}(X)}        \vertiii{I-1}_k^{|X\setminus Y|}
\bigl(    \underbrace{ \lVert  K\rVert_{k}^{(A)}    }_{\le 1}      \bigr)^{|\mathcal{C}(Y)|} A^{-|Y|} \\ 
&\leq\Bigl( \frac1{2A} + \frac1A\Bigr)^{|X|} \le 1
\end{split}                                                                                          
\end{align}
where we used that $\sum_{Y \in \mathcal P(X)} a^{|X \setminus Y|} \, b^{|Y|} = (a+b)^{|X|}$ and $A \ge 2$. 

The derivatives of $P_2$ are given by
\begin{align}
	\begin{split}
	\frac{1}{j_1!j_2!}&(D_1^{j_1}D_2^{j_2}  P_2(I,K)(\dot{I},\ldots,\dot{I},\dot{K},\ldots,\dot{K}))(X)   \\
& =\hspace{-.2cm}\sum_{\substack{Y\in \mathcal{P}(X),Y_1\in \mathcal{P}(X\setminus Y), |Y_1|=j_1 \\ \mathcal{J}\subset \mathcal{C}(Y), 
|\mathcal{J}|=j_2}}\hspace{-.2cm} (I-1)^{X\setminus (Y\cup Y_1)}\dot{I}^{Y_1}\prod_{Z\in \mathcal{C}(Y)\setminus\mathcal{J}}K(Z)\prod_{Z\in \mathcal{J}}\dot{K}(Z).
	\end{split}
          \end{align}
Using  the bound $\binom{n}{j}\leq 2^n$, we can estimate the norm of the expression above similarly as in \eqref{eq:P2.1},
\begin{align}
	\begin{split}
	&\frac{1}{j_1!j_2!}  \norm{(D_1^{j_1}D_2^{j_2}P_2(I,K)(\dot{I},\ldots,\dot{I},\dot{K},\ldots,\dot{K}))(X)}_{k,X}       \\
	 & \leq \sum_{Y\in \mathcal{P}(X)} \tbinom{|X\setminus Y|}{j_1}\vertiii{I-1}_k^{|X\setminus Y|-j_1}\opnorm[\big]{\dot{I}}^{j_1}_k            
	\tbinom{|\mathcal{C}(Y)|}{j_2}\bigl(\norm{K}_{k}^{(A)}\bigr)^{|\mathcal{C}(Y)|-j_2}\bigl(\lVert \dot{K}\rVert_{k}^{(A)}\bigr)^{j_2}A^{-|Y|}\\
	& \leq \sum_{Y\in \mathcal{P}(X)} 2^{|X\setminus Y|}(2A)^{-|X\setminus Y|+j_1} \opnorm[\big]{\dot{I}}^{j_1}_k. 2^{|\mathcal{C}(Y)|}
	2^{-|\mathcal{C}(Y)|+j_2}\bigl(\lVert \dot{K}\rVert_{k}^{(A)}\bigr)^{j_2}A^{-|Y|}\\
	& \leq \Bigl(\frac{A}{2}\Bigr)^{-|X|}\bigl(2A\opnorm[\big]{\dot{I}}_k\bigr)^{j_1}\bigl(2\lVert \dot{K}\rVert_{k}^{(A)}\bigr)^{j_2} .
	\end{split}
\end{align}
Equation \eqref{eq:P2.secbd} follows from 
\begin{align}
	\frac{\d}{\d t}P_2(1+t(I-1),tK)=D_1P_2(1+t(I-1),tK)(I-1)+D_2P_2(1+t(I-1),tK)K
\end{align}
using that $P_2(1,0)=0$.

\end{proof}

\section{The map \texorpdfstring{$P_3$}{P3}}
 The smoothness of the maps $P_3$ is implied by similar estimates, but simpler,  as those for $P_2$.
\begin {lemma}\label{le:P3}
Let $L$ and $h$ satisfy the lower bounds \eqref{eq:hL_for_products} and let
$A\geq 2$, $B\geq 1$. Consider the map $P_3:\boldsymbol{M}^{(A/2)}\rightarrow \widehat{\boldsymbol{M}}^{(A/2,B)}$ given by
\begin{align}
	P_3K(X)=\prod_{Y\in \mathcal{C}(X)}K(Y). 
\end{align}
Its restriction to $B_\rho=\{K\in \boldsymbol{M}^{(A/2)}:\, \lVert K\rVert_{k}^{(A/2)}\leq \rho\} $ is smooth 
for any $\rho$ satisfying
\begin{align}
	\rho\leq (2B)^{-1}.
\end{align}
Moreover the following estimate holds for all $j\geq 0$
\begin{align}
	\frac{1}{j!}\lVert (D^jP_3K)(\dot{K}, \ldots,\dot{K})\rVert_{k}^{(A/2,B)} 
	\leq                                                                      
	\bigl(2B\lVert \dot{K}\rVert_{k,r}^{(A/2)})\bigr)^j.                     
\end{align}

\end {lemma}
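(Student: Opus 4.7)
The plan is to exploit the fact that $P_3$ is ``multilinear over connected components'': for a fixed polymer $X$, the value $P_3K(X)$ depends on $K$ only through the values $K(Y)$ for $Y\in\Ccal(X)$, and it is the product of these values. Hence the derivatives are computed by the standard product rule and the $j$-th derivative of $P_3K$ automatically vanishes on polymers with fewer than $j$ connected components. Explicitly,
\begin{align*}
\tfrac{1}{j!}(D^jP_3K)(\dot K,\ldots,\dot K)(X)
=\sum_{\substack{\Jcal\subset \Ccal(X)\\ |\Jcal|=j}}\ \prod_{Z\in \Ccal(X)\setminus \Jcal} K(Z)\,\prod_{Z\in \Jcal}\dot K(Z),
\end{align*}
and smoothness of $P_3$ on $B_\rho$ reduces to bounding this expression in the norm $\lVert\cdot\rVert_k^{(A/2,B)}$.

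The first step is to estimate the norm of a single summand. For this I would apply the submultiplicativity of the weak norm for strictly disjoint components given by Lemma~\ref{le:submult}\ref{it:norm1}, which yields
\begin{align*}
\lVert \textstyle\prod_{Z\in \Ccal(X)\setminus\Jcal} K(Z)\prod_{Z\in\Jcal}\dot K(Z)\rVert_{k,X}
\le \prod_{Z\in\Ccal(X)\setminus\Jcal}\lVert K(Z)\rVert_{k,Z}\prod_{Z\in\Jcal}\lVert \dot K(Z)\rVert_{k,Z}.
\end{align*}
Combined with $(A/2)^{|X|_k}=\prod_{Z\in\Ccal(X)}(A/2)^{|Z|_k}$ and the definition of $\lVert\cdot\rVert_k^{(A/2)}$, each factor is controlled by $\lVert K\rVert_k^{(A/2)}\le \rho$ or $\lVert\dot K\rVert_k^{(A/2)}$ respectively.

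Next, I sum over the choices of $\Jcal$ and introduce the factor $B^{|\Ccal(X)|}$ from the target norm. Using the crude bound $\binom{|\Ccal(X)|}{j}\le 2^{|\Ccal(X)|}$, we arrive at
\begin{align*}
\tfrac{1}{j!}\lVert (D^jP_3K)(\dot K,\ldots,\dot K)(X)\rVert_{k,X}(A/2)^{|X|_k}B^{|\Ccal(X)|}
\le (2B\rho)^{|\Ccal(X)|-j}\,(2B)^j\,(\lVert\dot K\rVert_k^{(A/2)})^j,
\end{align*}
valid for $|\Ccal(X)|\ge j$, the derivative being zero otherwise. Taking the supremum over $X$ and using $2B\rho\le 1$ (which is exactly the hypothesis $\rho\le (2B)^{-1}$) makes the first factor at most $1$, giving the desired
\begin{align*}
\tfrac{1}{j!}\lVert (D^jP_3K)(\dot K,\ldots,\dot K)\rVert_k^{(A/2,B)}\le (2B\lVert\dot K\rVert_k^{(A/2)})^j.
\end{align*}

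The main subtlety, and the only place where one must be careful, is the interplay between the factor $B^{|\Ccal(X)|}$ coming from the target norm and the factor $(A/2)^{|X|_k}$: the former is not dominated by the latter (as $B$ is independent of $A$ and $|X|_k\ge|\Ccal(X)|$ but can equal it on small sets), so the smallness hypothesis $\rho\le(2B)^{-1}$ must absorb every power of $B$ that appears, which is what the combinatorial identification of components with monomials delivers. Smoothness of $P_3$ on $B_\rho$ and the continuity of the derivatives then follow from the uniform bound on $D^jP_3$, as in the proof of Lemma~\ref{le:P2}.
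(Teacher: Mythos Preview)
Your proof is correct and follows essentially the same approach as the paper: both compute the derivative via the product rule over connected components, apply the submultiplicativity from Lemma~\ref{le:submult}\ref{it:norm1}, use the crude bound $\binom{|\Ccal(X)|}{j}\le 2^{|\Ccal(X)|}$, and absorb the resulting $(2B)^{|\Ccal(X)|}$ via the hypothesis $\rho\le(2B)^{-1}$. Your write-up is slightly more explicit about the factorisation $(A/2)^{|X|_k}=\prod_{Z\in\Ccal(X)}(A/2)^{|Z|_k}$, but the argument is the same.
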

\begin{proof}
We note that 
\begin{align}
	\frac{1}{j!}D^jP_3(K)(X)(\dot{K},\ldots,\dot{K})=\sum_{\substack{\mathcal{J}\subset \mathcal{C}(X) \\ \abs{\mathcal{J}}}=j}
	\prod_{Z\in \mathcal{C}(X)\setminus \mathcal{J}}K(Z)\prod_{Z\in \mathcal{J}}\dot{K}(Z).                 
\end{align}
Using the bound $\binom{|\mathcal{C}(X)|}{j}\leq 2^{|\mathcal{C}(X)|}$ and \ref{it:norm1} from Lemma \ref{le:submult}   for $K\in B_\rho$, we get 
\begin{align}
\begin{split}
	B^{|\mathcal{C}(X)|}\Bigl(\frac{A}{2}\Bigr)^{|X|}& \frac{1}{j!}\norm{(D^jP_3K)(\dot{K}, \ldots,\dot{K})(X)}_{k,X}\\
	& \leq (2B)^{|\mathcal{C}(X)|}\bigl( \lVert K\rVert_{k}^{(A/2)}\bigr)^{|\mathcal{C}(X)|-j} 
	\bigl( \lVert \dot{K}\rVert_{k,r}^{(A/2)}\bigr)^{j}
	 \leq \bigl( 2B \lVert \dot{K}\rVert_{k,r}^{(A/2)}\bigr)^{j}.                     
\end{split}
\end{align}
\end{proof}

\section{The map \texorpdfstring{$P_1$}{P1}}

Next we show smoothness of the outermost map $P_1$
given by
\begin{align}
\begin{split}
 P_1(&I_1,I_2,J,K)(U,\p)\\
 &=\sum_{\substack{X_1,X_2\in \mathcal{P}_k            \\ X_1\cap X_2=\emptyset}}
  \chi(X_1\cup X_2,U)I_1^{U\setminus (X_1\cup
	X_2)}(\p)I_2^{(X_1\cup X_2)\setminus U}J^{X_1}(\p) 
	\prod_{Y \in \Ccl(X_2)} K(Y,\p)
\end{split}
\end{align}

\begin{lemma}\label{le:P1} 
Let $L$ and $h$ satisfy the lower bounds  \eqref{eq:hL_for_P1}
and
\begin{equation} \label{eq:A0_in_P1_estimate}
A_0(L) =({ 48}\AP)^{\frac{L^d}{\upalpha}}
\end{equation}
with $\upalpha(d)=(1+2^d)^{-1}(1+6^d)^{-1}$ and $\AP$ as in Theorem~\ref{th:weights_final}~\ref{w:w7}.
Further, let $A\geq A_0(L)$, $B=A$ and
\begin{align}
	\rho_1 = \rho_2\leq \frac{1}{2}, \quad  \rho_3\leq A^{-2},\quad  \rho_4\leq 1.
\end{align}
Then the map $P_1$ restricted to  the neighbourhood 
$$
U=B_{\rho_1}(1) \times B_{\rho_2}(1) \times B_{\rho_3}(0) \times B_{\rho_4}(0)\subset \boldsymbol{M}_{|||}\times \boldsymbol{M}_{|||}\times 
\boldsymbol{M}_{|||}\times \widehat{\boldsymbol{M}}_:^{(A/(2\AP),B)}
$$
is smooth with the bound on derivatives,
\begin{align}\begin{split}\label{eq:P1:Claim}
	\frac{1}{i_1!i_2!j_1!j_2!}\lVert D^{i_1}_{I_1}D^{i_2}_{I_2}D^{j_1}_JD^{j_2}_K\,
	& P_1(I_1,I_2,J,K)  
	(\dot{I}_1,\ldots,\dot{I}_2,\ldots,  \dot{J},\ldots,\dot{K},\ldots)\rVert_{k+1,r}^{(A)}\\
	& \leq \dot{\vertiii{I_1}}^{i_1}\dot{\vertiii{I_2}}^{i_2}(A^2\dot{\vertiii{J}})^{j_1}(\rVert 
	\dot{K}\rVert_{k:k+1}^{(A/(2\AP),B)})^{j_2}.
	\end{split}
\end{align}
\end{lemma}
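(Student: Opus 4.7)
The plan is to reduce the estimate to the submultiplicativity of the norms provided by Lemma~\ref{le:submult}~\ref{it:norm2b} combined with a careful combinatorial enumeration whose growth is absorbed by choosing $A$ large.

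I would first differentiate $P_1$ explicitly. Since $P_1$ is polynomial in each of its arguments---with $I_1, I_2, J$ entering through single-block products and $K$ through a product over $\Ccl(X_2)$---the $(i_1, i_2, j_1, j_2)$-derivative is a sum over the same pair $(X_1, X_2)$ decorated by three block-subsets $Z_1 \subset U \setminus (X_1 \cup X_2)$, $Z_2 \subset (X_1 \cup X_2) \setminus U$, $Z_3 \subset X_1$ of block-cardinalities $i_1, i_2, j_1$, and a sub-collection $\mathcal J \subset \Ccl(X_2)$ of cardinality $j_2$, into which the test vectors $\dot I_1, \dot I_2, \dot J, \dot K$ are inserted in place of the corresponding factors.

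Next I would apply Lemma~\ref{le:submult}~\ref{it:norm2b} pointwise to each summand: that lemma is tailored precisely to estimate $\norm{\cdot}_{k+1, \pi(X_1 \cup X_2)}$ of a product of three block-wise factors living on pairwise disjoint $k$-subsets of $\pi(X_1 \cup X_2) \cup X_1 \cup X_2$ together with a $\Ccl(X_2)$-factor bounded in the $\norm{\cdot}_{k:k+1}$ norm. Using the smallness of $I_1-1, I_2-1, J$ and $K$ on the ball $U$, each summand is then controlled by the required product of derivative norms, multiplied by the small geometric weights $A^{-|X_2|_k}$ and $B^{-|\Ccl(X_2)|+j_2}$ inherited from the cluster norm, together with $A^{-2(|X_1|_k - j_1)}$ coming from $\vertiii{J}_k \le \rho_3 \le A^{-2}$.

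The heart of the argument is the combinatorial summation. Multiplying by $A^{|U|_{k+1}}$ and taking $\sup_U$, I would group the sum by the total polymer $X = X_1 \cup X_2$ and by the reblocking constraint $\pi(X) = U$, enumerate the decompositions $X = X_1 \sqcup X_2$, the subsets $Z_i$, and the sub-collection $\mathcal J$ via $\binom{n}{k} \le 2^n$, and invoke the key geometric input $|U^{\ast}|_k \le 2 |U|_k$ (valid for $L \ge 4d(2^d + R)$) together with $\chi(X_1 \cup X_2, U) = 1 \Rightarrow X_1 \cup X_2 \subset U^{\ast}$ to confine the enumeration to $U^\ast$ and bound the total combinatorial factor by a constant of the form $C(L)^{|U|_{k+1}}$. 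The explicit choice $A \ge A_0(L) = (48\AP)^{L^d / \upalpha}$ then ensures that the surviving negative powers of $A$ dominate both $C(L)^{|U|_{k+1}}$ and the $\AP^{|X_2|_k}$ implicit in $\norm{K}_{k:k+1}^{(A/(2\AP), B)}$, yielding the claimed estimate. The main obstacle is precisely this simultaneous bookkeeping across the two scales: the additional factor $A^2$ multiplying $\vertiii{\dot J}_k$ in the target bound is calibrated exactly to offset the single-block contributions of $J$ on blocks that lie in $X_1$ and thus lack an intrinsic negative power of $A$ from the cluster norm, and verifying that the stated $A_0(L)$ is large enough to accommodate all the simultaneous constraints is the core of the proof.
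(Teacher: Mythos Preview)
Your outline follows the paper's proof closely: both reduce to Lemma~\ref{le:submult}~\ref{it:norm2b}, enumerate derivatives via binomial factors bounded by $2^n$, and use $|U^\ast|_k\le 2|U|_k$ to confine the combinatorics. However, there is a genuine gap in your sketch.

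The point you leave unexplained is \emph{why} the accumulated negative $A$-exponent, namely $2|X_1|_k+|X_2|_k+|\mathcal{C}(X_2)|$, exceeds $(1+2\upalpha)|U|_{k+1}$. This is not routine bookkeeping: for a connected component $Y$ of $X_1\cup X_2$ that is large (i.e.\ $Y\in\Pck\setminus\mathcal{B}_k$), the only reason the exponent contributed by $Y$ beats $(1+2\upalpha)|\pi(Y)|_{k+1}$ is Lemma~\ref{le:app1}, which gives $|Y|_k\ge(1+2\upalpha)|\pi(Y)|_{k+1}$. The paper makes this explicit by decomposing $X_1\cup X_2$ into connected components and treating single-block components (where the factor $2$ from either $J$ or $B=A$ does the work) separately from larger components (where Lemma~\ref{le:app1} is essential); see \eqref{eq:P3comb1}--\eqref{eq:P3comb2}. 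Your proposal mentions neither Lemma~\ref{le:app1} nor the component decomposition, and without them the assertion that ``the surviving negative powers of $A$ dominate $C(L)^{|U|_{k+1}}$'' cannot be justified---indeed, for a single large connected $X_2$ with $|\mathcal{C}(X_2)|=1$ one only has $|X_2|_k+1$, and it is precisely Lemma~\ref{le:app1} that forces $|X_2|_k\ge(1+2\upalpha)|\pi(X_2)|_{k+1}$.

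A smaller point: the $B$-exponent coming from the cluster norm is $|\mathcal{C}(X_2)|$ (connected components), not $|\Ccl(X_2)|$ (clusters), by Lemma~\ref{le:factorise_cprime}~\ref{it:factorise_cprime4}; the distinction matters for the single-block component case.
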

\begin{proof}
We first note some simple inequalities for polymers.
Recall from Lemma~\ref{le:factorise_cprime} 
  that
\begin{equation}  \label{eq:factorise_cprime4}
\sum_{Y \in \mathcal \Ccl(X)} |\mathcal C(Y)| = |\mathcal C(X)|.
\end{equation}	
Next let $X \in \mathcal P_k$ and $U = \pi(X)$.
Then by \eqref{eq:pisetincl} we have  $X \subset U^\ast$ and hence
\begin{equation}  \label{eq:blocks_X_Uast}
|X \setminus U|_k + |U \setminus X|_k \le |U^*|_k, \quad |X|_k \le |U^*|_k.
\end{equation}
We also have
\begin{equation} \label{eq:estimate_Ustar_blocks}
|U^\ast|_k\leq 2|U|_k  \text{ if } L \ge 4d (2^d + R).
\end{equation}
Indeed for $B'\in \mathcal{B}_{k+1}$ and $k \ge 1$
\begin{align}  \label{eq:estimate_Bstar_blocks}
	|{B'}^\ast|_k\leq (L+2^{d+1})^d\leq L^d\Bigl(1+\frac{1}{2d}\Bigr)^d\leq L^d e^{\frac12},
\end{align}
while for $k=0$,
\begin{align*}
	 |{B'}^\ast|_0\leq (L+2^{d+1} + 2R)^d\leq L^d\Bigl(1+\frac{1}{2d}\Bigr)^d\leq L^d e^{\frac12}.
\end{align*}
Finally, for $X_2, X \in \mathcal P_k$ with $X_2 \subset X$ we use the identity
\begin{equation} \label{eq:count_components_smoothness}
	|\mathcal C(X_2)| =
	\sum_{Y \in \mathcal C(X)} |\mathcal C(X_2 \cap Y)|.
\end{equation}
It suffices to show that each connected component of $X_2$ is a connected component
of $X_2 \cap Y$ for some $Y \in \mathcal C(X)$ (with $Y \cap X_2 \ne \emptyset$) and vice versa. 
Now if $Z \in \mathcal C(X_2)$ then $Z$ is a connected subset of $X$ and hence contained in 
exactly one component $Y$ of $X$. Thus $Z$ is a connected subset of $X_2 \cap Y$.
In fact $Z \in \mathcal C(X_2 \cap Y)$ because
$\dist_\infty(Z, (X_2 \cap Y) \setminus Z) \ge \dist_\infty(Z, X_2 \setminus Z) \ge L^k$
as $Z \in \mathcal C(X_2)$. 

Conversely consider $Y \in \mathcal C(X)$ with $X_2 \cap Y \ne \emptyset$ and   $Z \in \mathcal C(X_2 \cap Y)$.
Then $Z$ is a connected subset of $X_2$. 
Moreover 
$ \dist_\infty(Z, (X_2 \setminus Y) \setminus Z)  \ge \dist_\infty(Y, X \setminus Y) \ge L^k$
and $\dist_\infty(Z, (X_2 \cap Y) \setminus Z) \ge L^k$. Thus $\dist(Z, X_2 \setminus Z) \ge L^k$
and therefore $Z \in \mathcal C(X_2)$. This concludes the proof of \eqref{eq:count_components_smoothness}.

Now let  $U\in\mathcal{P}_{k+1}^c$ be a connected polymer.
	Lemma~\ref{le:submult}  implies that 
\begin{align}
\begin{split}\label{uglyestimate}
  &\lVert P_1(I_1,I_2,J,K)\rVert_{k+1,U,r} \\
	&\hspace{0.5cm} \leq                                    
	\sum_{\substack{X_1,X_2\in \mathcal{P}_k\\ X_1\cap X_2=\emptyset}} \chi(X_1\cup X_2,U) \, 
	\vertiii{I_2}_k^{|(X_1\cup X_2)\setminus U|} \, 
	\vertiii{I_1}_k^{|U\setminus (X_1\cup X_2)|}  \, \times
	\\ &
	\hspace{6cm}\times 	\vertiii{J}_k^{|X_1|} \, \, \,\Big\lVert   \prod_{Y \in \Ccl(X_2)} K(Y)\Big\rVert_{k:k+1,X_2}
		\\
	& \hspace{0.2cm} \underset{   \eqref{eq:factorise_cprime4}   }{\leq}                                    
	\sum_{\substack{X_1,X_2\in \mathcal{P}_k\\ X_1\cap X_2=\emptyset}} \chi(X_1\cup X_2,U) \, 
	2^{|(X_1\cup X_2)\setminus U|}  \, 
	2^{|U\setminus (X_1\cup X_2)|} \, 
	A^{-2|X_1|}\times\\
	& \hspace{5cm}                            
	\times\Bigl(\frac{A}{2\AP}\Bigr)^{-|X_2|}B^{-|\mathcal{C}(X_2)|}\bigl(\lVert K\rVert_{k:k+1}^{(A/(2\AP),B)}\bigr)^{|\Ccl(X_2)|}
		\\
	&\hspace{0.2cm} \underset{    \eqref{eq:blocks_X_Uast}  }{\leq} 2^{ 2 |U^\ast|_k} \, (\AP)^{|U^\ast|_k}                
	\sum_{\substack{X_1,X_2\in \mathcal{P}_k\\ X_1\cap X_2=\emptyset}} \chi(X_1\cup X_2,U) \, 
	A^{-2|X_1|-|X_2|}B^{-|\mathcal{C}(X_2)|}\\
	& \underset{  \eqref{eq:estimate_Ustar_blocks}, \eqref{eq:count_components_smoothness}    }{\leq}
	\hspace{-.2cm}( {4} \AP)^{2|U|_k}                        
	\sum_{\substack{X_1,X_2\in \mathcal{P}_k\\ X_1\cap X_2=\emptyset}} \chi(X_1\cup X_2,U)
	\hspace{-.2cm}\prod_{Y\in \mathcal{C}(X_1\cup X_2)} \hspace{-.2cm} A^{-2|X_1\cap Y|-|X_2\cap Y|-|\mathcal{C}(X_2\cap Y)|} ,
\end{split}
\end{align}
	where we used $B=A$ to get the last inequality.

	Now we use the crucial fact that connected polymers $X$ with $X\notin \mathcal{B}_k$ satisfy
	the bound $|\pi(X)|_{k+1}<c|X|_k$  for some $c<1$ . For the precise formulation, we refer to the standard inequality  \eqref{eq:app1} 
	in Lemma \ref{le:app1}  (Appendix C). It states that for  polymers $X\in \mathcal{P}_k^c\setminus \mathcal{B}_k$, we have
\begin{align}
	|X|_k\geq (1+2\upalpha(d))|\pi(X)|_{k+1}, 
\end{align}
where $0<\upalpha(d)=((1+2^d)(1+6^d))^{-1}<1$ is a positive constant.
This implies, for $Y\in \mathcal{C}(X_1\cup X_2)$ such that $Y\notin \mathcal{B}_k$, that
\begin{align}\label{eq:P3comb1}\begin{split}
	2|X_1\cap Y|+|X_2\cap Y|+|\mathcal{C}(X_2\cap Y)|
	&\geq 
	|X_1\cap Y| +|X_2\cap Y|= |Y|
	\\ &
	\geq  (1+2\upalpha(d))|\pi(Y)|_{k+1}.
	\end{split}
\end{align}
	 If $Y \in \mathcal B_k$ we note that
	either $Y\subset X_1$ or $Y\subset X_2$. In either case we get
\begin{align}\label{eq:P3comb2}
	2|X_1\cap Y|+|X_2\cap Y|+|\mathcal{C}(X_2\cap Y)|=2= 2|\pi(Y)|_{k+1}\geq (1+2\upalpha(d))|\pi(Y)|_{k+1} 
\end{align}
where we used $\pi(Y)\in \mathcal B_{k+1}$.

Inserting \eqref{eq:P3comb1} and \eqref{eq:P3comb2} into \eqref{uglyestimate}, we get
\begin{align}
\begin{split}\label{eq:P1:1}
	\lVert P_1(I_1&,I_2,J,K)\rVert_{k+1,U}
	\\
	& \leq   (     { 4}      \AP)^{2|U|_k}
	\sum_{\substack{X_1,X_2\in \mathcal{P}_k\\ X_1\cap X_2=\emptyset}} \chi(X_1\cup X_2,U)
	\prod_{Y\in \mathcal{C}(X_1\cup X_2)}A^{-(1+2\upalpha)|\pi(Y)|_{k+1}} \\
	& \leq                                                                          
	(    { 4} \AP)^{2|U|_k} \, 3^{|U^\ast|_k} \, A^{-(1+2\upalpha)|U|_{k+1}}
	\underset{   \eqref{eq:estimate_Ustar_blocks} }{\leq} (   { 12} \AP)^{2|U|_{k}}
	A^{-(1+ 2 \upalpha)|U|_{k+1}}\\
	& \leq \Bigl(\frac{(   { 12} \AP)^{2L^d}}{A^{2\upalpha}}\Bigr)^{|U|_{k+1}}A^{-|U|_{k+1}} .
\end{split}
\end{align}
For the second inequality we used that $X_1 \cup X_2 \subset U^\ast$ if $\chi(X_1 \cup X_2, U) \ne 0$
and that there are $3^{|U^\ast|_k}$ possibilities for partitions of $U^\ast$ into three disjoint sets
$X_1$, $X_2$ and $X_3 = U^\ast \setminus (X_1 \cup X_2)$. We also used
that  by the  definition of $\pi$ we have $\pi(X) = \cup_{Y \in \mathcal C(X)} \pi(Y)$ and  thus
$|\pi(X)|_{k+1} \le \sum_{Y \in \mathcal C(X)} |\pi(Y)|_{k+1}$. 

Thus we get for $A \ge (  12 \AP)^{\frac{L^d}{\upalpha}}$ 
\begin{align}
	\lVert P_1(I_1,I_2,J,K)\rVert_{k+1}^{(A)}\leq 1.
\end{align}

Let us now proceed to   the bounds for  derivatives. Similarly to the  derivatives of $P_3$ in Lemma \ref{le:P2}, we get
\begin{align}
	{\frac1{j!}} 	D^j \Big(\prod_{Y \in \Ccl(X)} K(Y) \Big)(\dot{K},\ldots,\dot{K})=\sum_{\substack{\mathcal{J}\subset \Ccl(X) \\ |\mathcal{J}|=j}}\prod_{Y\in \mathcal{J}}\dot{K}(Y)
		\prod_{Y\in \Ccl(X)\setminus \mathcal{J}}K(Y).                                
\end{align}
For  $\lVert K\rVert_{k:k+1}^{(A/(2\AP),B)}\leq 1 $ we use Lemma~\ref{le:submult}  to get,
\begin{align}
\begin{split}
	\frac{1}{j!}\Big\lVert D^j \big(&\prod_{Y \in \Ccl(X)} K(Y) \big)(\dot{K},\ldots,\dot{K})\Big\rVert_{k:k+1,X} 
	\\ & 
	\leq 
	\sum_{\substack{\mathcal{J}\subset \mathcal{C}'(X)\\ |\mathcal{J}|=j}}
	\prod_{Y\in \mathcal{J}}\lVert \dot{K}(Y)\rVert_{k:k+1,Y}
	\prod_{Y\in \mathcal{C}'(X)\setminus \mathcal{J}}\lVert K(Y)\rVert_{k:k+1,Y}
	\\
	& \underset{  \eqref{eq:factorise_cprime4}}{\leq} 
	\tbinom{|\mathcal{C}'(X)|}{j} \Bigl(\frac{A}{2\AP}\Bigr)^{-|X|}B^{-|\mathcal{C}(X)|} \bigl(\lVert \dot{K}\rVert_{k:k+1}^{(A/(2\AP),B)}\bigr)^j.
\end{split}
\end{align}
A similar bound holds for the factors of $I_1$, $I_2$, and $J$.
Therefore, similarly to \eqref{uglyestimate},  we bound 
\begin{align}
\begin{split}
	& \frac{1}{i_1!i_2!j_1!j_2!}\lVert D_{i_1}^{I_1}D_{i_2}^{I_2}D_{j_1}^JD_{j_2}^K P_1(I_1,I_2,J,K)\rVert_{k+1,U} \\
	& \leq                                                                                                         
	\sum_{\substack{X_1,X_2\in \mathcal{P}_k\\ X_1\cap X_2=\emptyset}} \chi(X_1\cup X_2,U)
	\tbinom{|(X_1\cup X_2)\setminus U|}{i_2}\vertiii{I_2}_k^{|(X_1\cup X_2)\setminus U|-i_2}\dot{\vertiii{{I}_2}}_k^{i_2}
	\times\\
	& \qquad\qquad \times                                                                                          
	\tbinom{|U\setminus (X_1\cup X_2)|}{i_1}\vertiii{I_1}_k^{|U\setminus (X_1\cup X_2)|-i_1}\dot{\vertiii{{I}_1}}_k^{i_1}
	\tbinom{|X_1|}{j_1}\vertiii{J}_k^{|X_1|-j_1}\dot{\vertiii{{J}}}_k^{j_1}\times\\
	& \qquad\qquad                                                                                                 
	\times\tbinom{\mathcal{C}'(X_2)}{j_2}\Bigl(\frac{A}{2\AP}\Bigr)^{-|X_2|}B^{-|\mathcal{C}(X_2)|} 
	\bigl(\lVert \dot{K}\rVert_{k:k+1}^{(A/(2\AP),B)}\bigr)^{j_2}.
\end{split}
\end{align}
	Assume that $\chi(U,X_1\cup X_2)=1$. Then  $X_1\cup X_2\subset U^\ast$.
and we can bound the combinatorial factor by
\begin{align}
\begin{split}
	\tbinom{|(X_1\cup X_2)\setminus U|}{i_2}
	\tbinom{|U\setminus (X_1\cup X_2)|}{i_1}
	\tbinom{|X_1|}{j_1}
	\tbinom{|\mathcal{C}'(X_2)|}{j_2} &
	\leq 2^{|(X_1\cup X_2)\setminus U|+|U\setminus (X_1\cup X_2)|+|X_1|+|X_2|} \\
	   &                                                                            
	\underset{\eqref{eq:blocks_X_Uast}  }{\leq} 2^{2|U^\ast|_k}
	\underset{ \eqref{eq:estimate_Ustar_blocks}  }{\leq} 4^{2|U|_k}.
\end{split}
\end{align}
Then we bound, exactly as in \eqref{uglyestimate},
\begin{align}
\begin{split}\label{uglyestimate2}
	& \frac{1}{i_1!i_2!j_1!j_2!}\lVert D_{i_1}^{I_1}D_{i_2}^{I_2}D_{j_1}^JD_{j_2}^K P_1(I_1,I_2,J,K)\rVert_{k+1,U} \\
	& \leq(  {  16} \AP)^{2|U|_k}                                                                                           
	\sum_{\substack{X_1,X_2\in \mathcal{P}_k\\ X_1\cap X_2=\emptyset}} \chi(X_1\cup X_2,U)
	\prod_{Y\in \mathcal{C}(X_1\cup X_2)}A^{-2|X_1\cap Y|-|X_2\cap Y|-|\mathcal{C}(X_2\cap Y)|}\times\\
	& \qquad \qquad\times                                                                                          
	\bigl(\tfrac12 \dot{\vertiii{{I}_1}}_k\bigr)^{i_1}\bigl(\tfrac12 \dot{\vertiii{{I}_2}}_k\bigr)^{i_2}\bigl(A^2\dot{\vertiii{{J}}}_k\bigr)^{j_1}
	\bigl(\lVert \dot{K}\rVert_{k:k+1}^{(A/(2\AP),B)}\bigr)^{j_2}.
\end{split}
\end{align}

Now, we can conclude as in \eqref{eq:P1:1} that 
\begin{align}
\begin{split}
	&\frac{1}{i_1!i_2!j_1!j_2!}\lVert D_{i_1}^{I_1}D_{i_2}^{I_2}  D_{j_1}^JD_{j_2}^K P_1(I_1,I_2,J,K)\rVert_{k+1,U,r} \\
	&\qquad \leq                                                
	\Bigl(\frac{(   {48} \AP)^{2L^d}}{A^{2\upalpha}}\Bigr)^{|U|_{k+1}}A^{-|U|_{k+1}}
	\dot{\vertiii{{I}_2}}_k^{i_2}
	\dot{\vertiii{{I}_1}}_k^{i_1}
	\bigl(A \dot{\vertiii{{J}}}_k\bigr)^{j_1}
	\bigl(\lVert \dot{K}\rVert_{k:k+1,r}^{(A/(2A'),B)}\bigr)^{j_2}
	\\
	&\qquad \leq                                                
	A^{-|U|_{k+1}}
	\dot{\vertiii{{I}_1}}_k^{i_1}
	\dot{\vertiii{{I}_2}}_k^{i_2}
	\bigl(A^2\dot{\vertiii{{J}}}_k\bigr)^{j_1}
	\bigl(\lVert \dot{K}\rVert_{k:k+1,r}^{(A/(2A'),B)}\bigr)^{j_2}
\end{split}
\end{align}
once $A>(48 \AP)^{\frac{L^d}{\upalpha}}$. This implies the claim \eqref{eq:P1:Claim}.
\end{proof}

\section{The map \texorpdfstring{$R_1$}{R1}}
Next we discuss the smoothness of the maps $R_1$ and $R_2$ which depend explicitly on $\boldsymbol{q}$.
The proofs are similar to those in \cite{AKM16} however we do not have to deal
with the $\boldsymbol{q}$ derivatives explicitly because we already controlled them in Lemma \ref{le:keyboundRk}.
Let us begin with the map $R_1$ which is defined by 
$$ R_1(P,\boldsymbol{q})(X,\p)=(\myR^{(\boldsymbol{q})}_{k+1}P)(X,\p)=\int_{\Xcal_N}P(X,\p+\xi)\,\mu_{k+1}^{(\boldsymbol{q})}(\d\xi).$$

\begin{lemma}\label{le:R1}  Let $L$ and $h$ satisfy the lower bound  \eqref{eq:L_for_R1R2} 
and let $\kappa = \kappa(L)$ be the constant in Theorem~\ref{th:weights_final}.
For $B\geq 1$ and any $A\geq 4\AP$  the map $R_1$ restricted to $\widehat{\boldsymbol{M}}_{k}^{(A/2,B)}\times \mathcal{U}_{\kappa}$  
is smooth and satisfies
\begin{align}\label{eq:R1:Claim0}
	\lVert  D^j_P R_1(P,\boldsymbol{q})(X,\cdot)(\dot{P},\ldots \dot{P})\rVert_{k:k+1}^{(A/(2\AP),B)} \leq                                                                                                                    
	(\lVert \dot{P} \rVert_{k}^{(A/2)})^{j}(\lVert P \rVert_{k}^{(A/2)})^{1-j}.                        
\end{align}
and
\begin{align}\label{eq:R1:Claim}
\begin{split}
	\lVert D^\ell_{\boldsymbol{q}} D^j_P R_1(P,\boldsymbol{q})(X,\cdot)&(\dot{\boldsymbol{q}},\ldots,\dot{\boldsymbol{q}},\dot{P},\ldots \dot{P})\rVert_{k:k+1}^{(A/(2\AP),B)} 
	\\ &
	\leq C_\ell(L)   \lVert \dot{\boldsymbol{q}}\rVert^\ell (\lVert \dot{P} \rVert_{k}^{(A/2)})^{j}(\lVert P \rVert_{k}^{(A/2)})^{1-j}.        
	\end{split}                
\end{align}
for $\ell \geq 1 $ and $0\leq j\leq 1$. The constants $C_\ell(L)$ do not depend on $h$ or $A$.
The derivatives vanish for $j>1$.
\end{lemma}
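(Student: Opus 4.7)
The plan is to exploit two simple structural facts about $R_1$ together with the pointwise bounds for the integration map already established in Lemma~\ref{le:keyboundRk}. First, for each fixed $\boldsymbol{q}$ the map $P\mapsto R_1(P,\boldsymbol{q})$ is linear, since $(\myR_{k+1}^{(\boldsymbol{q})}P)(X,\p)$ is the convolution of $P(X,\cdot)$ with the measure $\mu_{k+1}^{(\boldsymbol{q})}$. Consequently $D_P^j R_1 \equiv 0$ for $j\ge 2$, while for $j=0,1$ we have
\[
D_P^0 R_1(P,\boldsymbol{q})=R_1(P,\boldsymbol{q}),\qquad D_P R_1(P,\boldsymbol{q})(\dot P)=R_1(\dot P,\boldsymbol{q}).
\]
Second, because $R_1$ is defined on $\widehat{\boldsymbol{M}}^{(A/2,B)}$, the relevant polymers $X$ always lie in $\Pclk$, so $\pi(X)\in\Pckp$ and the hypothesis of the derivative bound \eqref{eq:bdRkl1} is automatically satisfied.

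First I would establish \eqref{eq:R1:Claim0}. The cases $j\ge 2$ are vacuous by linearity, and the cases $j=0,1$ reduce to the same estimate for $R_1$ applied to $P$ or $\dot P$. Fix $X\in\Pclk$ and apply \eqref{eq:bdRkl0} to obtain
\[
\lVert R_1(P,\boldsymbol{q})(X)\rVert_{k:k+1,X}\le \AP^{|X|_k}\,\lVert P(X)\rVert_{k,X}.
\]
Multiplying both sides by $(A/(2\AP))^{|X|_k} B^{|\mathcal{C}(X)|}$, the factors of $\AP^{|X|_k}$ cancel and one recognises $(A/2)^{|X|_k} B^{|\mathcal{C}(X)|}\lVert P(X)\rVert_{k,X}\le \lVert P\rVert_k^{(A/2,B)}$. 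Taking the supremum over $X\in\Pclk$ yields the desired bound. (The norm written as $\lVert\cdot\rVert_k^{(A/2)}$ in the statement is the natural restriction of the $(A/2,B)$-norm; for $B\ge 1$ the $B$-weight only sharpens the bound.)

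Next I would treat the $\boldsymbol{q}$-derivatives in \eqref{eq:R1:Claim}. Smoothness of $\boldsymbol{q}\mapsto R_1(P,\boldsymbol{q})(X,\p)$ follows from the smoothness of $\boldsymbol{q}\mapsto \mu_{k+1}^{(\boldsymbol{q})}$ established through Theorem~\ref{prop:finalsmoothness} together with the commutation of Taylor expansion in $\p$ with convolution; moreover, the existence of $\partial_{\boldsymbol{q}}^\ell R_1$ as a $P$-linear map is automatic from the linearity in $P$. For the quantitative estimate I would apply \eqref{eq:bdRkl1} pointwise in $X\in\Pclk$:
\[
\lVert \partial_{\boldsymbol{q}}^\ell R_1(P,\boldsymbol{q})(X,\cdot)(\dot{\boldsymbol{q}},\ldots,\dot{\boldsymbol{q}})\rVert_{k:k+1,X}
\le C_\ell(L)\,\AP^{|X|_k}\,\lVert P(X)\rVert_{k,X}\,\lVert\dot{\boldsymbol{q}}\rVert^\ell,
\]
then weight by $(A/(2\AP))^{|X|_k} B^{|\mathcal{C}(X)|}$ and take the supremum exactly as above, arriving at
\[
\lVert \partial_{\boldsymbol{q}}^\ell R_1(P,\boldsymbol{q})(\dot{\boldsymbol{q}},\ldots,\dot{\boldsymbol{q}})\rVert_{k:k+1}^{(A/(2\AP),B)}\le C_\ell(L)\,\lVert\dot{\boldsymbol{q}}\rVert^\ell\,\lVert P\rVert_k^{(A/2)}.
\]
The mixed derivative case $j=1$ follows verbatim with $\dot P$ in place of $P$, again by linearity.

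There is no real obstacle here; all the analytic difficulty is already absorbed in Lemma~\ref{le:keyboundRk}, whose proof in turn rests on the integration estimate \eqref{eq:w7} for the weights and Theorem~\ref{prop:finalsmoothness}. The only points that require a moment of care are (i)~checking that every $X$ appearing in the norm $\lVert\cdot\rVert_{k:k+1}^{(A/(2\AP),B)}$ indeed belongs to $\Pclk$, so that \eqref{eq:bdRkl1} applies, and (ii)~verifying that the ratio $\AP^{|X|_k}/(A/(2\AP))^{|X|_k}\cdot(A/2)^{|X|_k}=1$, which is precisely why the constants in the two norms are chosen as $A/2$ and $A/(2\AP)$; this is where the assumption $A\ge 4\AP$ only enters in ensuring that the target ball sits inside the domain ball used in subsequent compositions, not in the estimate itself.
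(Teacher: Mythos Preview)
Your proof is correct and follows essentially the same route as the paper: linearity in $P$ trivialises the $P$-derivatives, the observation that $X\in\Pclk$ means $\pi(X)\in\Pckp$ so Lemma~\ref{le:keyboundRk} applies, and then the weighting by $(A/(2\AP))^{|X|_k}B^{|\mathcal C(X)|}$ cancels the $\AP^{|X|_k}$ from \eqref{eq:bdRkl0} and \eqref{eq:bdRkl1} exactly. Your additional remarks on why the constants $A/2$ and $A/(2\AP)$ are matched and on the $(A/2)$ versus $(A/2,B)$ norm are accurate and a bit more explicit than the paper's version.
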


\begin{proof} 	
Note first that the map $R_1$ is linear in $P$. Therefore the statement for the derivative in $P$ direction is trivial and we only need to consider the $\boldsymbol{q}$ derivative. Note that $X\in \Pclk$ is equivalent to the condition that $\pi(X)$ is connected.
Therefore we can apply Lemma \ref{le:keyboundRk}. From \eqref{eq:bdRkl0} we get
\begin{align}
	\lVert(\myR_{k+1}^{(\boldsymbol{q})}K)(X)\rVert_{k:k+1,X}\leq {\AP}^{|X|_k}\lVert K(X)\rVert_{k,X}.        
\end{align} 
and hence 
\begin{align}
\begin{split}
	\lVert   \boldsymbol{R}^{(\boldsymbol{q})}_{k+1}K\rVert_{k:k+1}^{(A/(2\AP),B)} 
	& =                                                                                                                           
	\sup_{X\in \Pclk}B^{|\mathcal{C}(X)|}\Bigl(\frac{A}{2\AP}\Bigr)^{|X|_k}\lVert 
	\boldsymbol{R}_{k+1}^{(\boldsymbol{q})}K(X)\rVert_{k:k+1,X}\\
	& \leq \sup_{X\in \Pclk}B^{|\mathcal{C}(X)|}\Bigl(\frac{A}{2\AP}\Bigr)^{|X|_k}{\AP}^{|X|_k}\lVert K(X)\rVert_{k,X} \\
	& = \lVert K(X)\rVert_{k:k+1}^{(A/2,B)}.                                                                                
\end{split}
\end{align}
Similarly, for $\ell \ge 1$, we get	
\begin{align}
\begin{split}
	\lVert D_{\boldsymbol{q}}^\ell \boldsymbol{R}^{(\boldsymbol{q})}_{k+1}K\rVert_{k:k+1}^{(A/(2\AP),B)} 
	& =                                                                                                                           
	\sup_{X\in \Pclk}B^{|\mathcal{C}(X)|}\Bigl(\frac{A}{2\AP}\Bigr)^{|X|_k}\lVert D_{\boldsymbol{q}}^\ell
	\boldsymbol{R}_{k+1}^{(\boldsymbol{q})}K(X)\rVert_{k:k+1,X}\\
	& \leq C_\ell(L)\sup_{X\in \Pclk}B^{|\mathcal{C}(X)|}\Bigl(\frac{A}{2\AP}\Bigr)^{|X|_k}{\AP}^{|X|_k}\lVert K(X)\rVert_{k,X} \\
	& =C_\ell(L) \lVert K(X)\rVert_{k:k+1}^{(A/2,B)}.                                                                                
\end{split}
\end{align}
\end{proof}

\section{The map \texorpdfstring{$R_2$}{R2}}

\begin{lemma}\label{lemmaS2}   Let $L$ and $h$ satisfy the lower bound   \eqref{eq:L_for_R1R2}.
For any $h\geq 1$ and $A\geq 1$ the map $R_2$ defined in \eqref{defs2} is smooth. 
Moreover  there exist a constant $C_0$ (which is independent of $L,h$ and $A$) and  for each $\ell \ge 1$  there exist   a constant $C_\ell(L)$ 
(which is independent of $h$ an $A$) such that
\begin{align}\label{boundss2}
	\lVert D^{j_1}_HD^{j_2}_KD^\ell_{\boldsymbol{q}} R_2(H,K,\boldsymbol{q})(\dot{H},\dot{K},\dot{\boldsymbol{q}})\rVert_{k,0}  \leq C_\ell(L)                                                                           
\begin{cases}     \lVert H\rVert_{k,0}+\lVert K\rVert_{k}^{(A)}\;\text{if}\;j_1=j_2=0   \\
\lVert \dot{H}\rVert_{k,0}\;\text{if}\;j_1=1,j_2=0                                     \\
\lVert \dot{K}\rVert_{k}^{(A)}\;\text{if}\;j_1=0,j_2=1,                                
\end{cases}                                                                            
\end{align}
and 
\begin{align}\label{trivialzeros2}
	D^{j_1}_HD^{j_2}_KD^\ell_{\boldsymbol{q}} R_2(H,K,\boldsymbol{q})(\dot{H},\ldots,\dot{H},\dot{K},\ldots,\dot{K},\dot{\boldsymbol{q}},\ldots,\dot{\boldsymbol{q}})=0 
	\quad\text{if}\;j_1+j_2\geq 2.                                                                                         
\end{align}
\end{lemma}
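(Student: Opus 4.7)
The map $R_2(H, K, \boldsymbol{q})(B,\p) = \Pi_2\bigl((\boldsymbol{R}^{(\boldsymbol{q})}_{k+1} H)(B,\p) - (\boldsymbol{R}^{(\boldsymbol{q})}_{k+1} K)(B,\p)\bigr)$ is linear in each of $H$ and $K$ separately, so the vanishing statement \eqref{trivialzeros2} is immediate. Since $R_2(0, 0, \boldsymbol{q}) = 0$ and $R_2$ is affine (in fact linear) in $(H, K)$, taking the first derivatives and evaluating at $\dot H = H$, $\dot K = K$ reduces the $j_1 = j_2 = 0$ case of \eqref{boundss2} to the two $j_1 + j_2 = 1$ cases. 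Everything thus comes down to estimating, for $\ell \ge 0$,
\begin{align*}
\bigl\| \Pi_2 \, \partial^\ell_{\boldsymbol{q}} \boldsymbol{R}^{(\boldsymbol{q})}_{k+1} G(B) \bigr\|_{k, 0}
\end{align*}
where $G$ is either $H \in M_0(\mathcal{B}_k)$ or $K \in M(\mathcal{P}^c_k)$.

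The plan is to chain three bounds. First, Lemma~\ref{le:Pi2_bounded} gives $\|\Pi_2 F(B)\|_{k, 0} \le C \, |F(B)|_{k, B, T_0}$ with $C$ an absolute constant. Second, since the weight satisfies $w^B_{k:k+1}(0) = 1$, the trivial inequality $|F(B)|_{k, B, T_0} \le \|F(B)\|_{k:k+1, B}$ holds. Third, the single-block case of Lemma~\ref{le:keyboundRk} bounds $\|\partial^\ell_{\boldsymbol{q}} \boldsymbol{R}^{(\boldsymbol{q})}_{k+1} G(B) \|_{k:k+1, B}$ by $\widetilde{C}_\ell(L)\, \AB \,\|G(B)\|_{k, B}$, where crucially $\AB$ is independent of $L$ and $\widetilde{C}_0 = 1$. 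The composition of these three inequalities yields a constant independent of $L$, $h$, $A$ when $\ell = 0$, and an $L$-dependent but $h$- and $A$-independent constant $C_\ell(L)$ when $\ell \ge 1$, exactly as demanded by \eqref{boundss2}.

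It remains to pass from $\|G(B)\|_{k, B}$ to the norms on the right-hand side of \eqref{boundss2}. For $G = H \in M_0(\mathcal{B}_k)$, Lemma~\ref{le:Htp_vs_Hk0} gives $\vertiii{H}_{k, B} \le 4 \|H\|_{k, 0}$; Theorem~\ref{th:weights_final}\ref{w:w5} applied with $X = \emptyset$ and $Y = B$ yields $w^B_k \ge W^B_k$, so the weak-norm bound $\|H(B)\|_{k, B} \le \vertiii{H(B)}_{k, B} \le 4 \|H\|_{k, 0}$ follows. For $G = K$, the definition of $\|\cdot\|^{(A)}_k$ gives directly $\|K(B)\|_{k, B} \le A^{-1} \|K\|^{(A)}_k \le \|K\|^{(A)}_k$ when $A \ge 1$. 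Plugging these into the chain established above completes the proof of \eqref{boundss2}.

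Smoothness of $R_2$ as a map between the designated Banach spaces is then automatic: the map is affine in $(H, K)$ with the linear part continuous, and its derivatives of all orders in $\boldsymbol{q}$ are continuous and uniformly bounded on the relevant balls by the same chain of estimates. The only place where anything delicate happens is the $\boldsymbol{q}$-derivative estimate built into Lemma~\ref{le:keyboundRk}, which in turn relies crucially on Theorem~\ref{prop:finalsmoothness} and the refined finite-range decomposition of \cite{Buc16} to avoid any loss of regularity in $\boldsymbol{q}$; once that input is in hand, all the remaining work here is essentially a bookkeeping exercise in threading the three norms $\|\cdot\|_{k,0}$, $|\cdot|_{k,B,T_0}$ and $\|\cdot\|_{k:k+1,B}$ together.
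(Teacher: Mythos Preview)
Your argument for the $K$-part is correct and matches the paper's proof exactly: chain Lemma~\ref{le:Pi2_bounded}, the trivial bound $|\cdot|_{k,B,T_0}\le\|\cdot\|_{k:k+1,B}$ (from $w_{k:k+1}^B(0)=1$), and the single-block case of Lemma~\ref{le:keyboundRk}, then use $\|K(B)\|_{k,B}\le A^{-1}\|K\|_k^{(A)}$.

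For the $H$-part the paper takes a different and more elementary route. Rather than passing through $\|H(B)\|_{k,B}$, it computes $\boldsymbol{R}_{k+1}^{(\boldsymbol{q})}H$ explicitly: since $H$ is a relevant Hamiltonian, integration against $\mu_{k+1}^{(\boldsymbol{q})}$ only shifts the constant coefficient $a_\emptyset$ by $\sum_{\mf v_2} a_{(i,\alpha),(j,\beta)}(\nabla^\beta)^*\nabla^\alpha \mathcal{C}_{k+1,ij}^{(\boldsymbol{q})}(0)$ (linear terms in $\xi$ integrate to zero, quadratic terms give this constant), and $\Pi_2$ then acts as the identity. This yields $\|D_{\boldsymbol{q}}^\ell R_{2,a}^{(\boldsymbol{q})}H\|_{k,0}\le(1+c_{2,\ell}h_k^{-2})\|H\|_{k,0}$ directly from \eqref{eq:discretebounds} and the definition of $\|\cdot\|_{k,0}$, using only $h\ge 1$ to bound $h_k^{-2}\le 1$.

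Your route through $\|H(B)\|_{k,B}\le\vertiii{H(B)}_{k,B}$ requires $w_k^B\ge W_k^B$, which is Theorem~\ref{th:weights_final}\ref{w:w5} and needs $h\ge h_0(L)$ --- strictly stronger than the $h\ge 1$ assumed in the lemma. So as written you prove a slightly weaker statement; it suffices for every later application (the smoothness theorem always imposes $h\ge h_0(L)$), but there is a hypothesis mismatch. The paper's direct computation also exposes the structural fact, used elsewhere (e.g.\ in showing $D_H K_{k+1}(0,0,\boldsymbol{q})=0$ in Theorem~\ref{prop:contractivity}), that $\boldsymbol{R}_{k+1}^{(\boldsymbol{q})}$ preserves the space of relevant Hamiltonians.
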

\begin{proof}
First we observe that
$R_2(H,K,\boldsymbol{q})=R_{2,a}^{(\boldsymbol{q})}H-R^{(\boldsymbol{q})}_{2,b}K$ where both 
$R_{2,a}^{(\boldsymbol{q})}$ and $R^{(\boldsymbol{q})}_{2,b}$ are linear maps given by 
\begin{align}
	R_{2,a}^{(\boldsymbol{q})}H=\Pi_2\myR_{k+1}^{(\boldsymbol{q})}H,\qquad R_{2,b}^{(\boldsymbol{q})}K=\Pi_2\myR_{k+1}^{(\boldsymbol{q})}K. 
\end{align}
This implies \eqref{trivialzeros2}. 
Due to the linearity with respect to $H$ and $K$ the bounds for the derivatives with respect to $H$ and $K$
follow from the case without derivatives in $H$ or $K$ direction.
We consider the two operators separately. 

The estimate for the operator $R_{2,a}^{(\boldsymbol{q})}$ is simple because its action on Hamiltonians can be calculated explicitly.
It only changes the constant part 
$a_\emptyset \rightarrow a_\emptyset + 	\sum_{(i, \alpha), (j, \beta) \in \mf v_2}   
a_{(i, \alpha), (j, \beta)}    \,    (\nabla^\beta)^\ast \nabla^\alpha \mathcal{C}^{(\boldsymbol{q})}_{k+1, ij}(0)$,
see
Proposition \ref{prop:contractivity}.
Using the bound \eqref{eq:discretebounds}  and the definition  \eqref{hamiltoniannorm}, we get
\begin{align}\label{raestimate}
	\lVert D_{\boldsymbol{q}}^\ell R_{2,a}^{(\boldsymbol{q})}H\rVert_{k,0}\leq \lVert H\rVert_{k,0}+c_{2,\ell}h_k^{-2}\lVert H\rVert_{k,0}\leq  
	(1 +  c_{2,\ell\eqref{eq:discretebounds}  }) \lVert H\rVert_{k,0} 
\end{align}
if $h \ge 1$.
	
Further, let us consider the map $R^{(\boldsymbol{q})}_{2,b}$. 
From the linearity of $\Pi_2$ and Lemma~\ref{le:Pi2_bounded} we get 
\begin{align}\begin{split}
	\lVert D^\ell_{\boldsymbol{q}}\Pi_2\boldsymbol{R}_{k+1}^{(\boldsymbol{q})}K(B,\cdot)(\dot{\boldsymbol{q}},\ldots, \dot{\boldsymbol{q}})\rVert_{k,0} 
	& \leq \lVert \Pi_2(D^\ell_{\boldsymbol{q}}\boldsymbol{R}_{k+1}^{(\boldsymbol{q})}K(B,\cdot))(\dot{\boldsymbol{q}},\ldots, \dot{\boldsymbol{q}})\rVert_{k,0}\\
	&\leq C_{\eqref{eq:bound_Pi2}}
	|D^\ell_{\boldsymbol{q}}\boldsymbol{R}_{k+1}^{(\boldsymbol{q})}K(B,0)(\dot{\boldsymbol{q}},\ldots,\dot{\boldsymbol{q}})|_{k, B, T_0}\\
	& \leq C_\eqref{eq:bound_Pi2}  \, 
	\lVert D^\ell_{\boldsymbol{q}}\boldsymbol{R}_{k+1}^{(\boldsymbol{q})}K(B)(\dot{\boldsymbol{q}},\ldots,\dot{\boldsymbol{q}})\rVert_{k:k+1,B}.
\end{split}
\end{align}
In the last step we used that by definition \eqref{middlenorm},
\begin{align}
	\lVert F(B)\rVert_{k:k+1,B}=\sup_{\p}   w_{k:k+1}^{-B}(\p) \, |F(B)|_{k,B, T_0} \geq |F(B)|_{k,B,T_0} 
\end{align}
since  $w_{k:k+1}^{-B}(0)=1$.
Now, Lemma \ref{le:keyboundRk} for $\ell \ge 1$ yields 
\begin{align}  \label{eq:R2b_bound} 
\begin{split}   
	\lVert D^\ell_{\boldsymbol{q}}\Pi_2\boldsymbol{R}_{k+1}^{(\boldsymbol{q})}K(B,\cdot)(\dot{\boldsymbol{q}},\ldots, \dot{\boldsymbol{q}})\rVert_{k,0} 
	& \leq C_\eqref{eq:bound_Pi2}    \,                                                                       
	\lVert  D^\ell_{\boldsymbol{q}}\boldsymbol{R}_{k+1}^{(\boldsymbol{q})}K(B)(\dot{\boldsymbol{q}},\ldots,\dot{\boldsymbol{q}})\rVert_{k:k+1,B}\\
	& \leq C_\eqref{eq:bound_Pi2} C_{\ell,\eqref{eq:bdRkl1}(L)} {\AB}   \, \lVert\dot{\boldsymbol{q}}\rVert^\ell     \, \lVert K(B)\rVert_{k,B}              \\                                                 	& \leq \frac{C_{\eqref{eq:bound_Pi2} C_{\ell,\eqref{eq:bdRkl1}}(L) }{\AB}\lVert\dot{\boldsymbol{q}}\rVert^\ell}{A}  \,  \lVert K\rVert_{k}^{(A)}. 
\end{split}
\end{align} 
	
This implies that 
\begin{align}\label{rbestimate}
	\lVert D^\ell_{\boldsymbol{q}}\boldsymbol{R}_{k+1}^{(\boldsymbol{q})}K\rVert_{k,0}\leq C_\ell(L) \lVert \dot{\boldsymbol{q}}\rVert^\ell\lVert K\rVert_{k}^{(A)}
\end{align}
for $\ell \ge 1$. 
The bounds  \eqref{raestimate} and \eqref{rbestimate} jointly yield the desired estimate for $\ell \ge 1$.
For $\ell = 0$ we get, instead of  \eqref{eq:R2b_bound}, a slightly sharper estimate,
\begin{align}  \label{eq:R2b_bound_0} 
	\lVert \Pi_2\boldsymbol{R}_{k+1}^{(\boldsymbol{q})}K(B,\cdot) \rVert_{k,0} & \leq C_{\eqref{eq:bound_Pi2}}    \,                                                                       
	\lVert  \boldsymbol{R}_{k+1}^{(\boldsymbol{q})}K(B)    \rVert_{k:k+1,B}	                                                                                                                                                         
	 \leq \frac{   C_{\eqref{eq:bound_Pi2}} {\AB}	  }{A}  \,  \lVert K\rVert_{k}^{(A)}. 
\end{align} 	
Together with  \eqref{raestimate} and the assumption $A \ge 1$ this implies the desired estimate for $\ell=0$ with
\begin{equation} \label{eq:bound_R2_ell0}
	C_0 = 1 +   c_{2,\ell, \eqref{eq:discretebounds}  } +    C_{\eqref{eq:bound_Pi2}}\frac{   \AB}{A}.
\end{equation}

\end{proof}
\begin{corollary}\label{cor:bdAB}
	The operators $\boldsymbol{A}_k^{(\boldsymbol{q})}$ and $\boldsymbol{B}_k^{(\boldsymbol{q})}$ satisfy the estimate \eqref{eq:qderivABC}.
\end{corollary}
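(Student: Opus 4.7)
The plan is to deduce the corollary by direct inspection of the explicit formulas \eqref{eq:defofAk} and \eqref{eq:defofBk} for $\boldsymbol{A}_k^{(\boldsymbol{q})}$ and $\boldsymbol{B}_k^{(\boldsymbol{q})}$, using the bounds already established for the covariance kernel in Theorem~\ref{thm:frd} and the operator $R_{2,b}^{(\boldsymbol{q})}$ in Lemma~\ref{lemmaS2}.

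For $\boldsymbol{A}_k^{(\boldsymbol{q})}$, the key observation is that the formula \eqref{eq:defofAk} decomposes as $\dot H \mapsto (\dot H\text{-part independent of } \boldsymbol{q}) + (\text{constant Hamiltonian with coefficient depending on } \boldsymbol{q})$. Concretely, the $\boldsymbol{q}$-dependence sits only in the constant term
\[
L^{(k+1)d} \sum_{(i,\alpha),(j,\beta)\in\mf v_2} a_{(i,\alpha),(j,\beta)}\, (\nabla^\beta)^\ast \nabla^\alpha \mathcal{C}^{(\boldsymbol{q})}_{k+1,ij}(0).
\]
Thus $\partial_{\boldsymbol{q}}^\ell \boldsymbol{A}_k^{(\boldsymbol{q})}\dot H$ for $\ell\geq 1$ is a constant Hamiltonian whose coefficient is $L^{(k+1)d}$ times a sum over $(i,\alpha),(j,\beta)\in\mf v_2$ of $a_{(i,\alpha),(j,\beta)}$ multiplied by $\partial_{\boldsymbol{q}}^\ell (\nabla^\beta)^\ast \nabla^\alpha \mathcal{C}^{(\boldsymbol{q})}_{k+1,ij}(0)$. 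The derivative bound \eqref{eq:discretebounds} from Theorem~\ref{thm:frd} gives that the second derivative of the kernel at the origin, with $\ell$ $\boldsymbol{q}$-derivatives, is bounded by $C_\ell(L)\, L^{-kd}\lVert\dot{\boldsymbol{q}}\rVert^\ell$. Combining with $h_k^2|a_{(i,\alpha),(j,\beta)}|\le \lVert\dot H\rVert_{k,0}$ and inserting into the definition \eqref{hamiltoniannorm} of $\lVert\cdot\rVert_{k+1,0}$ (so the constant part weighs in as $L^{(k+1)d}$ times the absolute coefficient, cancelling the explicit $L^{(k+1)d}$) yields $\lVert\partial_{\boldsymbol{q}}^\ell \boldsymbol{A}_k^{(\boldsymbol{q})}\dot H\rVert_{k+1,0}\leq C_\ell(L)\lVert\dot H\rVert_{k,0}\lVert\dot{\boldsymbol{q}}\rVert^\ell$, as required.

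For $\boldsymbol{B}_k^{(\boldsymbol{q})}$, the formula \eqref{eq:defofBk} reads
\[
(\boldsymbol{B}_k^{(\boldsymbol{q})}\dot K)(B',\p) = -\sum_{B\in\mathcal{B}_k(B')} \Pi_2\bigl(\boldsymbol{R}_{k+1}^{(\boldsymbol{q})}\dot K\bigr)(B,\p).
\]
The sum contains $L^d$ terms, and each term is precisely (up to sign) $R_{2,b}^{(\boldsymbol{q})}$ evaluated on the block $B$. The bound \eqref{eq:R2b_bound} from the proof of Lemma~\ref{lemmaS2} gives, for every block $B$,
\[
\lVert \partial_{\boldsymbol{q}}^\ell\Pi_2\boldsymbol{R}_{k+1}^{(\boldsymbol{q})}\dot K(B,\cdot)(\dot{\boldsymbol{q}},\ldots,\dot{\boldsymbol{q}})\rVert_{k,0}\leq \frac{C_{\eqref{eq:bound_Pi2}} C_{\ell,\eqref{eq:bdRkl1}}(L)\AB}{A}\,\lVert\dot{\boldsymbol{q}}\rVert^\ell\lVert\dot K\rVert_k^{(A)}.
\]
Summing over the $L^d$ blocks in $B'$ and translating from the scale-$k$ block norm to the scale-$(k+1)$ block norm (the relation $\lVert\sum_{B\in\mathcal{B}_k(B')}H(B)\rVert_{k+1,0}\le C L^d \lVert H\rVert_{k,0}$ follows directly by comparing the weights $h_k^2$ to $h_{k+1}^2$ and the factors $L^{kd}$ to $L^{(k+1)d}$ in definition \eqref{hamiltoniannorm}), one obtains the desired estimate $\lVert\partial_{\boldsymbol{q}}^\ell \boldsymbol{B}_k^{(\boldsymbol{q})}\dot K\rVert_{k+1,0}\le C_\ell(L,A)\lVert\dot{\boldsymbol{q}}\rVert^\ell\lVert\dot K\rVert_k^{(A)}$.

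Neither step involves any genuine obstacle: the machinery developed in Lemmas~\ref{le:Pi2_bounded}, \ref{le:keyboundRk} and \ref{lemmaS2}, together with the smooth $\boldsymbol{q}$-dependence of the finite range decomposition from Theorem~\ref{thm:frd}, already contains everything needed. The only mildly technical point is to carefully track the weights defining the $\lVert\cdot\rVert_{k,0}$ norms between consecutive scales when summing the block contributions for $\boldsymbol{B}_k^{(\boldsymbol{q})}$; this is a bookkeeping exercise in comparing $h_k,h_{k+1}$ and $L^{kd},L^{(k+1)d}$, not a conceptual difficulty.
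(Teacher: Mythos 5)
Your argument is correct and is essentially the paper's own proof: the paper writes $\boldsymbol{A}_k^{(\boldsymbol{q})}\dot H(B')=\sum_{B\in\mathcal{B}_k(B')}R_{2,a}^{(\boldsymbol{q})}\dot H(B)$ and $\boldsymbol{B}_k^{(\boldsymbol{q})}\dot K(B')=-\sum_{B\in\mathcal{B}_k(B')}R_{2,b}^{(\boldsymbol{q})}\dot K(B)$ and quotes the bounds \eqref{raestimate} and \eqref{rbestimate} from Lemma~\ref{lemmaS2}, which are obtained exactly as in your two steps (the kernel-derivative bound \eqref{eq:discretebounds} for the constant term of $\boldsymbol{A}_k^{(\boldsymbol{q})}$, and \eqref{eq:R2b_bound} via Lemmas~\ref{le:Pi2_bounded} and~\ref{le:keyboundRk} for $\boldsymbol{B}_k^{(\boldsymbol{q})}$), with the same scale-change bookkeeping for the block sum as in Lemma~\ref{le:bound_BQ}.
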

\begin{proof}
The operators $\boldsymbol{A}_k^{(\boldsymbol{q})}$ and $\boldsymbol{B}_k^{(\boldsymbol{q})}$ satisfy the identities
\begin{align}
\begin{split}
	\boldsymbol{A}_k^{(\boldsymbol{q})}H(B',\p) & =\sum_{B\in \mathcal{B}(B')}R_{2,a}^{(\boldsymbol{q})}H(B,\p)   \\
	\boldsymbol{B}_k^{(\boldsymbol{q})}K(B',\p) & =-\sum_{B\in \mathcal{B}(B')}R_{2,b}^{(\boldsymbol{q})}K(B,\p). 
\end{split}
\end{align}
Hence, the claim follows from  bounds \eqref{raestimate} and \eqref{rbestimate}.
\end{proof}

\section{Proof of Theorem~\ref{PROP:SMOOTHNESSOFS}}  \label{se:proof_smoothness_S}

\begin{proof}[Proof of Theorem  \ref{prop:smoothnessofS}]
The assertion follows from the smoothness of the individual maps $E$, $P_1$, $P_2$, $P_3$, $R_1$, and $R_2$
and the chain rule. To get an estimate for a neighbourhood $U_{\rho, \kappa} \ni 0$ on which the map $\myS_k$ is smooth
and to see on which parameters the constants $\rho$ and $\kappa$ depend,
we  sequentially trace the dependence back to the neighbourhoods on which the individual maps and their compositions are smooth. 

First, we fix  a constant  $A \ge A_0(L)$, where 
\begin{equation}  \label{eq:bound_A0_smooth}
	A_0(L)=({  48} \AP(L))^{\frac{L^d}{\upalpha}} \quad \hbox{with} \quad  \upalpha(d)=(1+2^d)^{-1}(1+6^d)^{-1}
\end{equation}
is as in  Lemma~\ref{le:P1} and  set 
$$
B=A.
$$
Thus, by   Lemma~\ref{le:P1},  the map   $P_1$ is smooth in a neighbourhood 
$O_1=B_{\rho_1}(1)\times B_{\rho_2}(1)\times B_{\rho_3}(0)\times B_{\rho_4}(0)$ with
$$
\rho_1 = \rho_2 = \tfrac12, \  \rho_3 = A^{-2}, \text{ and }  \rho_4=1.
$$
Using  Lemma \ref{le:R1}, we find a neighbourhood $O_2=B_{\rho_5}\times B_{\kappa}$ of the origin
such that  $R_1$ is smooth on $O_2$ and $R_1(O_2)\subset {B_{\rho_4}}$.
Indeed, we may take
\begin{equation}  \label{eq:bound_kappa_smooth}
	\kappa= \kappa(L) \text{ to be the constant } \kappa \text{ in  Theorem~\ref{th:weights_final} }
\end{equation}
and 
$$ 
\rho_5 =  \rho_4 = 1. 
$$
Similarly, by Lemma~\ref{le:smoothness_exp}, there  exists a neighbourhood $O_3 = B_{\rho_6}(0)$  
such that $E$ is smooth on $O_3$ and $E(O_3) \subset B_{\rho_1}(1) \cap B_{\rho_2}({1}) \cap B_{\rho_3}({1})$.
Indeed, since $A \ge A_0(L) \ge 2$,  it suffices to take 
$$
\rho_6 =  \tfrac18  \min(1, \rho_1, \rho_2, \rho_3) = \tfrac18 A^{-2}.
$$
In view of Lemma~\ref{lemmaS2}, there exists a neighbourhood $O_4 = B_{\rho_7}(0) \times B_{\rho_8}(0) \times B_{\kappa}$
such that $R_2(O_4) \subset B_{\rho_6}$. Indeed, we may take
$$ 
\rho_7 = \rho_8 = \frac{\rho_6}{C_{0, \eqref{boundss2}}} = \frac1{8 A^2 \, \,   C_{0, \eqref{boundss2}}}. 
$$
This defines the first restriction on the final neighbourhood $U_{\rho, \kappa}$, namely,
\begin{equation} \label{eq:first_cond_U_rho_kappa}
	U_{\rho, \kappa} \subset B_{\rho_7}(0) \times B_{\rho_8}(0) \times B_{\kappa}(0).
\end{equation}
	 	
The second restriction comes from the condition 
\begin{equation} \label{eq:second_cond_U_rho_kappa}
	P_3 (P_2 (E(-H), K))  \in B_{\rho_5}(0).
\end{equation}
To satisfy this condition, we note that by Lemma~\ref{le:P3} there exists a neighbourhood $O_5 =B_{\rho_9}(0)$ such that $P_3(O_5) \subset B_{\rho_5}$. 
It suffices to take 
$$ 
\rho_9 =\frac1{2B} \min(\rho_5, 1) = \frac1{2A}. 
$$
By Lemma~\ref{le:P2}, there exists a neighbourhood $O_6 = B_{\rho_{10}}(1) \times B_{\rho_{11}}(0)$
with $P_2(O_6) \subset B_{\rho_9}(0)$. Taking into account the bound  \eqref{eq:P2.secbd} and the fact that  $\rho_9 \le 1$, we may take
$$ 
\rho_{10} =  \frac{\rho_9}{4A} = \frac{1}{8 A^2},   \quad \rho_{11} = \frac{\rho_9}{4} = \frac{1}{8A}.
$$
Using once more Lemma~\ref{le:smoothness_exp}, we see that  the condition \eqref{eq:second_cond_U_rho_kappa} holds if 
$$
(H,K) \in B_{\rho_{12}}(0) \times B_{\rho_{11}}(0) \quad \hbox{with} \quad  \rho_{12} = \tfrac18 \rho_{10} = \frac{1}{64 A^2}.
$$
Combining this with \eqref{eq:first_cond_U_rho_kappa}, we see that the map $\myS_k$ is $C^\infty$ in the set 
$$
U_{\rho, \kappa} = B_\rho(0) \times B_\rho(0) \times B_{\kappa}(0)
$$
once
$$ 
\rho = \min(\rho_7, \rho_8, \rho_{11}, \rho_{12}) = \frac{1}{8 A^2} \min\Bigl(\frac{1}{C_{0, \eqref{boundss2}}}, \frac{1}{8}\Bigr).
$$
Since $A \ge A_0 \ge \AP \ge \AB$ we deduce from \eqref{eq:bound_R2_ell0} that
$$ 
C_{0, \eqref{boundss2}} \le  1 +   c_{2,\ell, \eqref{eq:discretebounds}  } +  C_{\eqref{eq:bound_Pi2}}.
$$
Thus we may take 
\begin{equation}  \label{eq:bound_rho_smooth}
	\rho = \frac{1}{8 A^2} 
	\min  \Bigl(  \frac{1}{  1 +   c_{2,\ell, \eqref{eq:discretebounds}}   +  C_{\eqref{eq:bound_Pi2}}}, 
	\tfrac{1}{8}\Bigr).
\end{equation}
Finally, the chain rule implies the estimate $\eqref{eq:propS:Claim}$.    
\end{proof}	

%%%%%%%%%%%%%%%%%%%%%%%%%%%%%%%%%%%%%%%%%%%%%

\chapter{Linearisation of the Renormalisation Map}\label{sec:contraction}
In this chapter we prove the bounds for the operator norms stated in Theorem  \ref{prop:contractivity}.
These contraction estimates make precise the idea  that $H$ and $K$ collect the relevant and the irrelevant terms, respectively. 
Throughout this chapter we assume that
$$
\boldsymbol{q} \in B_{ \kappa}
\text{ where } B_{\kappa} = B_\kappa(0) \text{ and } \kappa =  \kappa(L)\text{ is  defined in  Theorem~\ref{th:weights_final}.}
$$

\section{Bounds for the operator \texorpdfstring{$\protect\headingC$}{Cq}}
\label{subsec:contrC}

By  \eqref{eq:defofCk} we have, for $K\in M(\Pck)$,
\begin{align}  \label{eq:decomp_C}
(\boldsymbol{C}^{(\boldsymbol{q})} K)(U, \p) = F(U,\p) + G(U, \p)
\end{align}
where
$F\in M(\Pckp)$ is defined by
		\begin{align}\label{eq:definition_F_U_p}
			F(U,\p)=\sum_{\substack{X\in \Pck\setminus \mathcal{B} \\ \pi(X)=U}}\int_{\Xcal_N}K(X,\p+\xi)\,\mu_{k+1}^{(\boldsymbol{q})}(\d\xi).
		\end{align}
and 
\begin{align}  \label{eq:contraction_GBprime}
G(B', \p) =  \sum_{B \in \Bcal_k(B')}   G(B)(\p) 
\end{align}
with
 \begin{align}   \label{eq:contraction_GB}
 G(B)(\p) := (1- \Pi_2) \boldsymbol{R}_{k+1}^{(\boldsymbol{q})} K(B, \p).
 \end{align} 
if $B'$ is a $k+1$ block
while 
\begin{align}
G(U, \p) = 0 \quad \forall U \in \Pckp \setminus \mathcal{B}_{k+1}.
\end{align}
For ease of reading we  restate the key bound from Theorem~\ref{prop:contractivity} as Lemma~\ref{le:contr} below.
Recall the definition of the parameter $R$ in \eqref{eq:definition_R}
and let $\AB$ and $\AP(L)$ 
denote the constants which appear in the integration estimates in
Theorem~\ref{th:weights_final}~\ref{w:w7} and \ref{w:w8}.
Recall also that $\eta \in (0, \tfrac23]$ is a fixed parameter. This parameter actually controls  the contraction rate of the flow, see the definition of the norm \eqref{eq:norm_mcZ} in the set-up for the final fixed point argument.

\begin{lemma}\label{le:contr}
There exists an  $L_0$   such that for  each  $L\geq L_0$  there exists an $A_0(L)$ and a $h_0(L)$
with the property that for all $A \ge A_0(L)$ and all $h \ge h_0(L)$
\begin{align}  \label{eq:contraction_Ck}
\lVert\boldsymbol{C}^{(\boldsymbol{q})}\rVert^{(A)}=\sup_{\lVert K\rVert_{k}^{(A)}\leq 1}\lVert \boldsymbol{C}^{(\boldsymbol{q})}K\rVert_{k+1}^{(A)}
\leq \frac34 \eta   \qquad \hbox{for all  $\boldsymbol q \in B_\kappa$.}
\end{align}
We may take
 \begin{equation} \label{eq:L_0_single_block_contraction}
 L_0 = \max\bigl( (    4 \eta^{-1}   C' \AB C_1)^{\frac{1}{d'-d}}, 
  ( 32 \eta^{-1}       C' \AB (C_2 +1))^{\frac{2}{d}},  { 2^{d+3}+16R}  \bigr),
 \end{equation}
\begin{align}  \label{eq:lower_bound_A_contraction_rate} 
A_0(L) :=  \max\Bigl(   \frac{8}{\eta}  {\AP}^2L^d(2^{d+1}+1)^{d2^d}, 
\bigl( \frac{    8  \AP}{  {\eta} \updelta    }\bigr)^{\frac{1+2\upalpha}{ 2 \upalpha}} \Bigr)
\end{align}
 and $h_0(L)$ as in \eqref{eq:definition_h0} in  Theorem~\ref{th:weights_final}.
Here  $C_1 $ is the constant in the estimate
 $| (1- \Pi_2) K(B)|_{k+1,B,0} \le C_1 L^{-d'}  |K(B)|_{k,B,0}$
 in Lemma~\ref{le_contraction_I} 
 and $C_2$ is the constant in the estimate 
 $| \Pi_2 K(B)|_{k+1,B,0} \le C_2  |K(B)|_{k,B,0}$ 
 in Lemma~\ref{le:Pi2_bounded}.
Moreover  $d' = \lfloor d/2\rfloor + d/2  + 1$,
$ C' = \max_{x \ge 0}  (1+x)^5  e^{-\frac12 x^2}$, 
		and  $\upalpha$ and $\updelta$
		are the constants from Lemma \ref{le:app1} and Lemma \ref{le:app2}, respectively. 
\end{lemma}

The estimate \eqref{eq:L_0_single_block_contraction} says, roughly speaking, that the parameter $\eta$, 
which is a proxy for the contraction rate of the renormalisation flow, 
can be bounded by $\eta \le C L^{-1/2}$ for odd dimensions $d$ and by $C L^{-1}$ for even dimensions.
Actually,  the situation is more subtle:
to achieve the contraction rate  $L^{-1/2}$ or $L^{-1}$ in odd/even dimensions,   
our method of proof requires us  to choose the weighting parameter $A$ and the parameter $h_0$ sufficiently large, 
in dependence of $L$, see  \eqref{eq:lower_bound_A_contraction_rate}  and \eqref{eq:definition_h0}. 
The decay of $\eta$ as a function of $L$ could possibly be improved by extracting more
linear monomials in the definition of the projection $\Pi_2$, but we are not really interested in optimising the single-step contraction rate. 
Any value of $\eta \in (0, \frac23)$ is sufficient for the convergence of the overall renormalisation scheme.

There are two  different mechanisms that ensure contractivity of the  map $\boldsymbol{C}^{(\boldsymbol{q})}$. 
For the operator $F$ defined in \eqref{eq:definition_F_U_p} we use that the operation $\pi$ reduces the number of blocks, i.e., $|\pi(X)|_{k+1}<|X|_k$. 
The definition of the norm ensures that we gain a factor of $A^{|X|_k-|\pi(X)|_{k+1}}$ which can be used to cancel the combinatorial explosion of the number of terms. Here we need to choose $A$ sufficiently large to achieve a good contraction rate.
For the operator $G$, i.e., the contributions of single blocks,  this is not possible.
 For single block we use instead that $(1- \Pi_2) K$ measured at scale $k+1$ is much smaller than $K$ measured at scale
 $k$ (see Lemma~\ref{le_contraction_I}  and Lemma~\ref{le:contraction_single_block_prime} below). Here the decay like  $L^{-1/2}$ or $L^{-1}$ in odd and even dimensions, respectively, appears.  
 For the construction of the relevant weights for the large field regulator we need that $h$ is sufficiently large, 
 depending on $L$, \eqref{eq:definition_h0} in  Theorem~\ref{th:weights_final}. 
\medskip

We first consider the simpler large polymer term $F$.
\begin{lemma}\label{le:contrlarge}
Let $L\geq 2^{d+3}+16R$ and define   	
\begin{align}
A_0(L)  :=  \max \Bigl( \frac{8}{   {\eta}   } {\AP}^2L^d(2^{d+1}+1)^{d2^d}, 
\bigl(\frac{8 \AP}{ {\eta}    \updelta}\bigr)^{\frac{1+2\upalpha}{2\upalpha}} \Bigr)
\end{align}
where $\AP$ is the constant from     Theorem~\ref{th:weights_final}~\ref{w:w7} 
and  $\upalpha$ and $\updelta$
are the constants from Lemma \ref{le:app1} and Lemma \ref{le:app2}, respectively. 		
Then for all $A \ge A_0(L)$
\begin{align}
\lVert F\rVert_{k+1}^{(A)}\leq \tfrac{1}{4}{\eta}\lVert K\rVert_{k}^{(A)}. 
\end{align}
\end{lemma}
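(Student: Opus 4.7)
The plan is to estimate each term $\boldsymbol{R}_{k+1}^{(\boldsymbol{q})}K(X)$ appearing in the sum defining $F(U)$ individually, then to use the combinatorial/geometric inequalities from Lemma~\ref{le:app1} and Lemma~\ref{le:app2} together with the fact that large polymers $X$ are substantially reduced by coarse-graining to show that the resulting sum is dominated by the factor $A^{|U|_{k+1}}$. Concretely, I would first fix $U \in \Pckp$ and, for each $X \in \Pck \setminus \Bcal_k$ with $\pi(X)=U$, apply in sequence: the weight inequality $w_{k+1}^U \ge w_{k:k+1}^X$ from Theorem~\ref{th:weights_final}\ref{w:w6}; the primal‐norm monotonicity  $|H|_{k+1,\pi(X),T_\p} \le |H|_{k,X,T_\p}$ from Lemma~\ref{le:submultofsimplenorm}(ii) (whose hypothesis $L \ge 2^d+R$ is implied by $L \ge 2^{d+3}+16R$); and the integration estimate $\lVert \boldsymbol{R}_{k+1}^{(\boldsymbol{q})}K(X)\rVert_{k:k+1,X} \le \AP^{|X|_k}\lVert K(X)\rVert_{k,X}$ from Lemma~\ref{le:keyboundRk}. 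Combined with the trivial bound $\lVert K(X)\rVert_{k,X} \le A^{-|X|_k}\lVert K\rVert_k^{(A)}$, this yields
\begin{equation}\label{eq:contrlarge_aux1}
A^{|U|_{k+1}}\lVert F(U)\rVert_{k+1,U} \le \lVert K\rVert_k^{(A)}\sum_{\substack{X \in \Pck \setminus \Bcal_k \\ \pi(X)=U}} A^{|U|_{k+1}-|X|_k}\,\AP^{|X|_k}.
\end{equation}

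Next I would split the sum in \eqref{eq:contrlarge_aux1} according to whether $X$ is small (i.e.\ $X \in \Scal_k$) or large ($X \in \Pck \setminus \Scal_k$). For small $X \neq$ single block the image $U = \pi(X)$ must be a single $(k+1)$-block, so $|U|_{k+1}=1$ and $2 \le |X|_k \le 2^d$; moreover $X$ lies in $\widehat{B}$ for some $B \in \Bcal_k(U)$ by \eqref{eq:polymernghd}, which gives at most $L^d\,(2^{d+1}+1)^{d|X|_k}$ choices of $X$. After summation over $n=|X|_k$ from $2$ to $2^d$ the dominant term is $n=2$, producing a bound
\[
\sum_{\text{small }X\neq B:\;\pi(X)=U}\!\!\!\! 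A^{|U|_{k+1}-|X|_k}\AP^{|X|_k} \;\le\; 2^d L^d(2^{d+1}+1)^{d\cdot 2^d}\,\AP^2\,A^{-1},
\]
which is $\le \eta/8$ provided $A \ge \tfrac{8}{\eta}\AP^2 L^d(2^{d+1}+1)^{d\cdot 2^d}$, the first term in the definition of $A_0(L)$.

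For large $X$ the map $\pi$ acts as $\pi(X)=\overline X = U$, and Lemma~\ref{le:app1} gives $|X|_k \ge (1+2\upalpha)|U|_{k+1}$, hence $|U|_{k+1}-|X|_k \le -\tfrac{2\upalpha}{1+2\upalpha}|X|_k$. Consequently
\[
A^{|U|_{k+1}-|X|_k}\AP^{|X|_k} \;\le\; \bigl(\AP\,A^{-2\upalpha/(1+2\upalpha)}\bigr)^{|X|_k}.
\]
The combinatorial bound of Lemma~\ref{le:app2} controls the sum $\sum_{X \in \Pck\setminus \Scal_k,\,\overline X=U} r^{|X|_k}$ for $r$ sufficiently small in terms of the constant $\updelta$; taking $r = \AP A^{-2\upalpha/(1+2\upalpha)}$ and requiring the second term in the definition of $A_0(L)$, namely $A \ge (8\AP/(\eta\updelta))^{(1+2\upalpha)/(2\upalpha)}$, makes the large-polymer contribution also $\le \eta/8$. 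Adding the two contributions produces $A^{|U|_{k+1}}\lVert F(U)\rVert_{k+1,U} \le \tfrac14 \eta\,\lVert K\rVert_k^{(A)}$ uniformly in $U$, which is the claim after taking the supremum.

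The main obstacle is the delicate bookkeeping required to isolate the factor $(\AP/A)^{|X|_k}$ in a form to which the geometric reduction $|X|_k \ge (1+2\upalpha)|\pi(X)|_{k+1}$ can be applied, and to verify that the combinatorial growth of the number of polymers $X$ with $\pi(X)=U$ is beaten by the factor $A^{-2\upalpha/(1+2\upalpha)}$ from Lemma~\ref{le:app1}; the small-polymer contribution requires a separate, purely combinatorial argument because there the coarse-graining does not reduce the block count at all ($|\pi(X)|_{k+1}$ can equal $|X|_k$ for $X$ a single block, which we explicitly excluded, and for other small $X$ one still has $|U|_{k+1}=1$). Both arguments amount to choosing $A$ large enough so that the geometric series in $n=|X|_k$ are convergent with the desired smallness $\eta/8$.
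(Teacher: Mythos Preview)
Your approach is essentially identical to the paper's: the same chain of inequalities (weight comparison from Theorem~\ref{th:weights_final}\ref{w:w6}, norm monotonicity from Lemma~\ref{le:submultofsimplenorm}, integration bound from Lemma~\ref{le:keyboundRk}), the same small/large split, and the same use of Lemmas~\ref{le:app1} and~\ref{le:app2} for the large-polymer sum.

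The one numerical slip is in your small-polymer count: by stratifying over $n=|X|_k$ and bounding each of the $2^d-1$ strata separately you pick up an extra factor of $2^d$, so your displayed bound $2^d L^d(2^{d+1}+1)^{d\cdot 2^d}\AP^2 A^{-1}$ would require $A\ge 2^d A_0$ rather than the stated $A_0$. The paper sidesteps this by bounding the \emph{total} number of small polymers with $\pi(X)=U$ directly by $L^d(2^{d+1}+1)^{d\cdot 2^d}$ (pick one $k$-block of $X$ lying in $U$, then all of $X$ sits in a cube of $(2^{d+1}+1)^d$ blocks, from which at most $2^d$ are chosen), and then using $(\AP/A)^{|X|_k}\le(\AP/A)^2$ uniformly for $|X|_k\ge 2$---no stratification over $n$ is needed.
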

	
	\begin{proof}
		Lemma~\ref{le:submultofsimplenorm} 
		states that for $U = \pi(X)$ 
		\begin{align}
			\abs{\boldsymbol{R}_{k+1}^{(\boldsymbol{q})}K(X,\p)}_{k+1,U,T_\p}\leq
			 \abs{\boldsymbol{R}^{(\boldsymbol{q})}_{k+1}K(X,\p)}_{k,X,T_\p}. 
		\end{align}
		The inequality \eqref{eq:w6} in      Theorem~\ref{th:weights_final}~\ref{w:w6} 
		 implies that
		\begin{align}
			w_{k:k+1}^X(\p)\leq w_{k+1}^U(\p). 
		\end{align}
		We conclude that
		\begin{align}\begin{split}
			\lVert\boldsymbol{R}_{k+1}^{(\boldsymbol{q})}&K(X,\p)\rVert_{k+1,U}
			   =                                            
			\sup_{\p\in \Xcal_N} \frac{\left|\boldsymbol{R}_{k+1}^{(\boldsymbol{q})}K(X,\p)\right|_{k+1,U,T_\p}}{w_{k+1}^U(\p)}	
			  \\ &
			   \leq                                         
			\sup_{\p\in \Xcal_N}\frac{\left|\boldsymbol{R}_{k+1}^{(\boldsymbol{q})}K(X,\p)\right|_{k,X,T_\p}}{w_{k:k+1}^X(\p)}
		 =\lVert \boldsymbol{R}_{k+1}^{(\boldsymbol{q})}K(X,\p)\rVert_{k:k+1,X}. 
			\end{split}\end{align}
			Using this bound we can estimate
			\begin{align}\label{eq:C:1}
\begin{split}			
			A^{|U|_{k+1}}&\lVert F(U)\rVert_{k+1,U}
			\\ &
			\leq 
			A^{|U|_{k+1}}\bigg(\sum_{\substack{X\in\Pck\setminus \mathcal{S}_k\\
			\pi(X)=U}} \lVert \boldsymbol{R}_{k+1}^{(\boldsymbol{q})}K(X)\rVert_{k:k+1,X}+
			\sum_{\substack{X\in \mathcal{S}_k\setminus \mathcal{B}_k\\
			\pi(X)=U}} \lVert \boldsymbol{R}_{k+1}^{(\boldsymbol{q})}K(X)\rVert_{k:k+1,X}\bigg).
	\end{split}	\end{align}
		We bound the two summands separately. 		
		For the first term we use the bound $|\pi(X)|_{k+1} \le \frac{1}{1 + 2 \upalpha} |X|_k$ in  Lemma \ref{le:app1} and Lemma \ref{le:app2}. 
		Bounding in addition the map $\boldsymbol{R}_{k+1}^{(\boldsymbol{q})}$ 
		using  Lemma~\ref{le:keyboundRk} we infer that
		\begin{align}  \label{eq:bound_large_small_F}
			\begin{split}
			&A^{|U|_{k+1}}\sum_{\substack{X\in\Pck\setminus \mathcal{S}_k\\
			\pi(X)=U}} \lVert \boldsymbol{R}_{k+1}^{(\boldsymbol{q})}K(X)\rVert_{k:k+1,X}
\\ &			 
			 \leq                                                                                         
			A^{|U|_{k+1}}\sum_{\substack{X\in\Pck\setminus \mathcal{S}_k\\
			\pi(X)=U}} {\AP}^{|X|_k}\lVert K(X) \rVert_{k,X}
		 \leq A^{|U|_{k+1}}\sum_{\substack{X\in\Pck \setminus \mathcal{S}_k \\
			\pi(X)=U}} \lVert K\rVert_{k}^{(A)}\Bigl(\frac{\AP}{A}\Bigr)^{|X|_k}
	\\ &
			 \leq                                                                                         
			\lVert K\rVert_{k}^{(A)}\sum_{\substack{X\in\Pck \setminus \mathcal{S}_k\\
			\pi(X)=U}} \bigl(\AP{A}^{-\frac{2\alpha}{1+2\alpha}}\bigr)^{|X|_k}
			 \leq                                                                                         
			\tfrac{1}{8}    \eta  \lVert K\rVert_{k}^{(A)}
			\end{split}
		\end{align}
		for $A\geq \bigl(\frac{8     \AP}{     \eta      \updelta}\bigl)^{\frac{1+2\upalpha}{2\upalpha}}$.
		For the second contribution we observe that $\pi(X)$ is a single block for $X\in \mathcal{S}_k$, i.e.,
		 the second summand in 
		\eqref{eq:C:1} is only non-zero if $U\in \mathcal{B}_{k+1}$. 
		Moreover we can bound the number of small polymers $X$ that intersect a block $B'\in \mathcal{B}_{k+1}$ by
		$L^d(2^{d+1}+1)^{{d}2^d}$.
		 Indeed there are $L^d$ possibilities to pick the first block $B$ of $X$ and 
		 then all further blocks are 	contained in 
		a cube of side-length $(2^{d+1}+1)L^{k}$ centred at $B$ and there are at most $2^d$ of them.
		
		This implies for $U\in \mathcal{B}_{k+1}$ and $A\geq \AP$
		\begin{align}
			\begin{split}
			&A^{|U|_{k+1}}\hspace{-0.25cm}
			\sum_{\substack{X\in \mathcal{S}_k\setminus \mathcal{B}_k\\
			\pi(X)=U}} \lVert \boldsymbol{R}_{k+1}^{(\boldsymbol{q})}K(X)\rVert_{k:k+1,X}
			   \leq 
			A
			\sum_{\substack{X\in \mathcal{S}_k\setminus \mathcal{B}_k\\
			\pi(X)=U}} {\AP}^{|X|_k}\lVert K(X)\rVert_{k,X}
			\\			  & 
			  \leq 
			A\lVert K\rVert_{k}^{(A)}
			\sum_{\substack{X\in \mathcal{S}_k\setminus \mathcal{B}_k\\
			\pi(X)=U}} \Bigl({\frac{\AP}{A}}\Bigr)^{|X|_k}
			 \leq 
			A\lVert K\rVert_{k}^{(A)} L^d(2^{d+1}+1)^{{ d}2^d}\frac{{\AP}^2}{A^2}
			 \leq 
			\tfrac{1}{8}   \eta    \lVert K\rVert_{k}^{(A)}
			\end{split}
		\end{align}
		for $A\geq 8 \eta^{-1}  {\AP}^2L^d(2^{d+1}+1)^{{ d}2^d}$.
	\end{proof}
	
	Next we consider the contribution from single blocks.
	Recall  from   \eqref{eq:contraction_GB}   that for a $k$-block $B$ we defined $G(B)(\p) = (1- \Pi_2) \boldsymbol{R}_{k+1}^{(\boldsymbol{q})} K(B, \p)$.
	
\begin{lemma}  \label{le:contraction_single_block_new} Assume that $L\geq 2^{d+3}+16R$. Then we have
 \begin{equation} \label{eq:main_pointwise_contraction_estimate}
 |G(B)|_{k+1,B, T_\p} \le \AB  (1 + |\p|_{k+1,B})^5 \, \bigl(C_1 L^{-d'} +   8 (C_2+1) L^{-\frac32 d}  w^B_{k:k+1}(\p) \bigr) \,   \|K\|_{k,B}. 
 \end{equation}
 where 
 \begin{equation}
 d' = d/2 + \lfloor d/2\rfloor +1 > d
 \end{equation}
and where  $\AB$ is the constant which appears in the integration estimate for the weights in
Theorem~\ref{th:weights_final}~\ref{w:w8}.
 The constant $C_1$ is the constant in the estimate
 $| (1- \Pi_2) K(B)|_{k+1,B,0} \le C_1 L^{-d'}  |K(B)|_{k,B,0}$
 in Lemma~\ref{le_contraction_I} while the constant $C_2$ is the constant in the estimate 
 $| \Pi_2 K(B)|_{k+1,B,0} \le C_2  |K(B)|_{k,B,0}$ 
 in Lemma~\ref{le:Pi2_bounded}.
  \end{lemma}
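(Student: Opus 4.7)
The plan is to apply the two-norm estimate \eqref{eq:two_norm_concrete} to the functional $F = G(B) = (1-\Pi_2) \boldsymbol{R}_{k+1}^{(\boldsymbol{q})} K(B)$ so that
\begin{equation*}
|G(B)|_{k+1, B, T_\p} \le (1 + |\p|_{k+1, B})^3 \Bigl( |G(B)|_{k+1, B, T_0} + 16 L^{-\tfrac{3d}{2}} \sup_{t \in [0,1]} |G(B)|_{k, B, T_{t\p}} \Bigr).
\end{equation*}
The two terms on the right will be estimated separately, one providing the $C_1 L^{-d'}$ contribution and the other the $8(C_2+1) L^{-3d/2} w_{k:k+1}^B(\p)$ contribution.

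For the value at $\p = 0$ I would use that $\Pi_2$ commutes with $\boldsymbol{R}_{k+1}^{(\boldsymbol{q})}$ only in the sense that both act on Taylor polynomials, and invoke the contraction estimate Lemma~\ref{le_contraction_I} applied to $\tilde K := \boldsymbol R_{k+1}^{(\boldsymbol q)} K(B)$ to obtain $|(1-\Pi_2) \tilde K|_{k+1, B, T_0} \le C_1 L^{-d'} |\tilde K|_{k, B, T_0}$. The term $|\tilde K|_{k, B, T_0}$ is then controlled by the single-block integration property \eqref{eq:w8} of the weights, since
\begin{equation*}
|\tilde K|_{k,B,T_0} \le \int_{\Xcal_N} |K(B)|_{k,B,T_\xi} \, \mu_{k+1}^{(\boldsymbol q)}(\d\xi) \le \|K(B)\|_{k,B} \, \int_{\Xcal_N} w_k^B(\xi) \, \mu_{k+1}^{(\boldsymbol q)}(\d\xi) \le \tfrac{\AB}{2} \|K(B)\|_{k,B}.
\end{equation*}

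For the sup over $t \in [0,1]$ of $|G(B)|_{k, B, T_{t\p}}$ I would split $G(B) = \tilde K - \Pi_2 \tilde K$. The first piece is estimated as above but with $t\p$ in place of $0$, which picks up the factor $w_{k:k+1}^B(t\p) \le w_{k:k+1}^B(\p)$ (the last inequality holds since $\boldsymbol A_{k:k+1}^B$ is positive semi-definite, so $t \mapsto w_{k:k+1}^B(t\p)$ is increasing on $[0,1]$). For the $\Pi_2 \tilde K$ piece I would combine Lemma~\ref{le:Htp_vs_Hk0}, which gives $|\Pi_2 \tilde K|_{k, B, T_{t\p}} \le (1 + |t\p|_{k, \ell_2(B)})^2 \|\Pi_2 \tilde K\|_{k, 0}$, with the boundedness of $\Pi_2$ from Lemma~\ref{le:Pi2_bounded}, yielding $\|\Pi_2 \tilde K\|_{k, 0} \le C_2 |\tilde K|_{k, B, T_0} \le \tfrac{C_2 \AB}{2} \|K(B)\|_{k,B}$. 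Using the monotonicity bounds $|t\p|_{k, \ell_2(B)} \le |\p|_{k, \ell_2(B)} \le |\p|_{k, B} \le |\p|_{k+1, B}$ (the last two following from \eqref{eq:ell_phi_vs_sup_phi} and the direct comparison of the weights $\wpzc_j$), this contributes a factor of $(1 + |\p|_{k+1, B})^2$.

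Combining the two pieces via $w_{k:k+1}^B(\p) + C_2 (1 + |\p|_{k+1, B})^2 \le (C_2 + 1)(1 + |\p|_{k+1, B})^2 w_{k:k+1}^B(\p)$ (using $w_{k:k+1}^B \ge 1$) and plugging back into the two-norm estimate yields a bound with prefactor $(1 + |\p|_{k+1, B})^5$, as stated. The main subtlety here is the careful separation of the ``relevant'' part $\Pi_2 \tilde K$, which has only quadratic growth in the field but lacks the contraction factor $L^{-d'}$, from the ``irrelevant'' part $\tilde K$, whose large-field behaviour is controlled by $w_{k:k+1}^B(\p)$ but which decays only as $L^{-3d/2}$ through the change-of-scale factor in the two-norm estimate; the synergy of the two factors, together with the restriction $r_0 \ge 3$ (needed for the two-norm estimate), is precisely what makes the combined estimate genuinely contractive once summed against the integration constant $\AB$ which is independent of $L$.
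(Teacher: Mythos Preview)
Your proof is correct and follows essentially the same route as the paper: apply the two-norm estimate \eqref{eq:two_norm_concrete} to $G(B)$, control the $T_0$ term via the contraction estimate Lemma~\ref{le_contraction_I} together with the single-block integration bound \eqref{eq:w8}, and for the $\sup_t |G(B)|_{k,B,T_{t\p}}$ term split $G(B)=\tilde K-\Pi_2\tilde K$, handling $\tilde K$ through \eqref{eq:w8} with the monotonicity of $t\mapsto w^B_{k:k+1}(t\p)$ and $\Pi_2\tilde K$ through Lemmas~\ref{le:Htp_vs_Hk0} and~\ref{le:Pi2_bounded}. Your bookkeeping (carrying the factor $16$ from \eqref{eq:two_norm_concrete} against $\AB/2$ from \eqref{eq:w8} to produce the stated $8(C_2+1)$) is in fact cleaner than the paper's presentation.
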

 
 \begin{proof}  From the two norm estimate   \eqref{eq:two_norm_concrete}
 and  the contraction estimate   \eqref{eq:contraction_T0}
  we get 
 \begin{align}
\begin{split} 
  \,  |G(B)|_{k+1, B, T_\p}  &\le \,  (1 + |\p|_{k+1,B})^3  \,  \bigl( | (1 - \Pi_2) \boldsymbol R_{k+1} K(B) |_{k+1,B, T_0} 
  \\ &
 \hspace{1.5cm} +
 8  L^{-\frac32 d} \sup_{0 \le t \le 1}  |(1- \Pi_2) \boldsymbol R_{k+1}K(B)|_{k,B,T_{t\p}} \bigr)     \\
 &\le \,  (1 + |\p|_{k+1,B})^3  \,  \bigl(   C_1  L^{-d'}  |\boldsymbol R_{k+1} K(B)|_{k,B,T_0} 
\\ & 
 \hspace{1.5cm} + 
 8  L^{-\frac32 d} \sup_{0 \le t \le 1}  |(1- \Pi_2) \boldsymbol R_{k+1}K(B)|_{k,B,T_{t\p}} \bigr)
 \label{eq:contraction_estimate_G}
\end{split}
 \end{align}
 Now by Jensen's inequality and the estimate \eqref{eq:w8} in 
Theorem~\ref{th:weights_final}~\ref{w:w8}
  with $\p=0$ 
 \begin{align}
   |\boldsymbol R_{k+1} K(B)|_{k,B,T_0} \le & \,  \int_{\BX_N} |K(B)|_{k,B, T_\xi} \, \mu_{k+1}(\d \xi) \notag   \\
  \le & \,  \int_{\BX_N} \|K\|_{k,B} \, w_{k}^B(\xi) \, \, \mu_{k+1}(\d \xi) 
  \le  \AB \,   \| K\|_{k,B}.
  \label{eq:contraction_estimate_RK0}
 \end{align}

 The second term is bounded similarly.
 By \eqref{eq:est_HTp}, \eqref{eq:ell_phi_vs_sup_phi}, Lemma~\ref{le:Pi2_bounded} and  \eqref{eq:contraction_estimate_RK0} we get, for all $t \in [0,1]$,
 \begin{align}
& \,  | \Pi_2 \boldsymbol R_{k+1}K(B)|_{k,B,T_{t\p}} \le (1 +|\p|_{k,B})^2 \| \Pi_2 \boldsymbol R_{k+1} K(B)\|_{k, 0}   \notag \\
 \le & \,  C_2 (1 +|\p|_{k,B})^2 | \boldsymbol R_{k+1} K(B)|_{k, B, T_0}  
 \le C_2  (1 +|\p|_{k,B})^2 \AB \,   \| K\|_{k,B}.
 \label{eq:contraction_estimate_PRK}
 \end{align}

 Using the monotonicity of $t \mapsto w_{k:k+1}(t\p)$ we get  
  from   \eqref{eq:bdRkl0}  in Lemma~\ref{le:keyboundRk}
 \begin{align}
  | \boldsymbol R_{k+1}K(B)|_{k,B,T_{t\p}} 
 \le  w_{k:k+1}^B(\p) \,  \|  \boldsymbol R_{k+1} K(B)  \|_{k:k+1,B}   
 \le      \AB \,  w_{k:k+1}^B(\p) \,  \| K(B)  \|_{k,B}.
 \label{eq:contraction_estimate_RK}
 \end{align}
Since $|\p|_{k,B} \le |\p|_{k+1, B}$ the estimate  \eqref{eq:main_pointwise_contraction_estimate}
now follows from \eqref{eq:contraction_estimate_G},    \eqref{eq:contraction_estimate_RK0},
 \eqref{eq:contraction_estimate_PRK}  and 
\eqref{eq:contraction_estimate_RK}.
 \end{proof}

\begin{lemma}   \label{le:contraction_single_block_prime}
Assume that $L\geq 2^{d+3}+16 R$ and that $h\geq h_0(L)$ where $h_0(L)$ satisfies \eqref{eq:definition_h0}.
Let 
 $B' \in \mathcal B_{k+1}$ be a  $k+1$ block   and recall that 
$ G(B') = \sum_{B \in   \Bcal_k(B')} G(B)$.  Then 
\begin{equation} \label{eq:main_pointwise_contraction_estimate2}
 |G(B)|_{k+1,B',T_\p} \le C'  \AB    \bigl(C_1 L^{-d'} + 8  (C_2+1) L^{-\frac32 d}  \bigr) \,  w_{k+1}^{B'}(\p) \,   \|K\|_{k,B}. 
 \end{equation}
 and
 \begin{equation}
 \| G(B') \|_{k+1,B'} \le  \label{eq:main_pointwise_contraction_estimate3}
C' \AB   \bigl(C_1 L^{d-d'} +  8 (C_2+1)  L^{-\frac12 d}  \bigr) \,   \|K\|_{k,B}.
 \end{equation}
 where 
 $$ C' = \max_{x \ge 0}  (1+x)^5  e^{-\frac12 x^2}.$$
 In particular there exists an $L_0$   
  such that for $L \ge L_0$ and $h \ge h_0(L)$   
 \begin{align}   \label{eq:contraction_G_blocks}
 \| G \|_{k+1}^{(A)} \le   \tfrac12 \eta    \|K\|_k^{(A)}  \quad \forall A \ge 1.
 \end{align}
 We may take
 \begin{equation} \label{eq:L_0_single_block_contraction2}
 L_0 = \max\bigl(  
     (   4 \eta^{-1}       C' \AB C_1)^{\frac{1}{d'-d}},  
 (     32 \eta^{-1}      C' \AB (C_2 +1))^{\frac{2}{d}}, 2^{d+3}+16R \bigr).
 \end{equation}
 \end{lemma}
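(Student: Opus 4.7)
The plan is to derive all three bounds as a chain of consequences of the pointwise estimate in Lemma~\ref{le:contraction_single_block_new}, with the key technical step being the passage from the $B$-norm and the $w_{k:k+1}^B$ weight to the $B'$-norm and the $w_{k+1}^{B'}$ weight.

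First I would prove \eqref{eq:main_pointwise_contraction_estimate2}. Starting from Lemma~\ref{le:contraction_single_block_new}, I replace the $|\cdot|_{k+1,B,T_\p}$ norm by the $|\cdot|_{k+1,B',T_\p}$ norm (this does not increase the estimate since $B\subset B'$ implies $|\p|_{k+1,B}\le |\p|_{k+1,B'}$ and hence, as in Lemma~\ref{le:submultofsimplenorm}, the Taylor norm is monotone in the polymer). The task then becomes to absorb the factor $(1+|\p|_{k+1,B})^5\bigl(C_1 L^{-d'}+8(C_2+1)L^{-3d/2}\,w_{k:k+1}^B(\p)\bigr)$ into $C'\bigl(C_1 L^{-d'}+8(C_2+1)L^{-3d/2}\bigr)w_{k+1}^{B'}(\p)$. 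For the $w_{k:k+1}^B$-term I invoke Theorem~\ref{th:weights_final}\ref{w:w9} applied at scale $B$, which yields $(1+|\p|_{k+1,B})^5\, w_{k:k+1}^B(\p)\le (1+|\p|_{k+1,B})^5 e^{-|\p|_{k+1,B}^2/2}\, w_{k+1}^B(\p)\le C'\,w_{k+1}^B(\p)$; monotonicity of $w_{k+1}^{(\cdot)}$ in the polymer (Theorem~\ref{th:weights_final}\ref{w:w1}) gives $w_{k+1}^B\le w_{k+1}^{B'}$. For the $C_1 L^{-d'}$-term I use Theorem~\ref{th:weights_final}\ref{w:w9} at scale $B'$ together with the non-negativity of $\boldsymbol{A}_{k:k+1}^{B'}$, which gives $e^{|\p|_{k+1,B'}^2/2}\le w_{k+1}^{B'}(\p)$, and combine this with $(1+|\p|_{k+1,B})^5\le C'\,e^{|\p|_{k+1,B}^2/2}\le C'\,e^{|\p|_{k+1,B'}^2/2}$.

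Next I would obtain \eqref{eq:main_pointwise_contraction_estimate3} by summing \eqref{eq:main_pointwise_contraction_estimate2} over the $L^d$ blocks $B\in\Bcal_k(B')$. Dividing by $w_{k+1}^{B'}(\p)$ and taking the supremum over $\p$ produces the factor $L^d\cdot\bigl(C_1L^{-d'}+8(C_2+1)L^{-3d/2}\bigr)=C_1L^{d-d'}+8(C_2+1)L^{-d/2}$, which is precisely the claimed bound after noting that $\max_{B\in\Bcal_k(B')}\|K\|_{k,B}$ is the relevant coefficient.

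Finally, for \eqref{eq:contraction_G_blocks} I recall that $G(U)\equiv 0$ unless $U\in\Bcal_{k+1}$, so by definition of $\|\cdot\|_{k+1}^{(A)}$, $\|G\|_{k+1}^{(A)}=A\cdot\sup_{B'\in\Bcal_{k+1}}\|G(B')\|_{k+1,B'}$. Since for any $B\in\Bcal_k$ we have $B\in\Pck$ and so $\|K(B)\|_{k,B}\le A^{-1}\|K\|_k^{(A)}$, the factor $A$ cancels and \eqref{eq:main_pointwise_contraction_estimate3} yields $\|G\|_{k+1}^{(A)}\le C'\AB\bigl(C_1L^{d-d'}+8(C_2+1)L^{-d/2}\bigr)\|K\|_k^{(A)}$. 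Both terms tend to zero as $L\to\infty$ because $d'>d$ and $d/2>0$; requiring each to be at most $\eta/4$ reproduces exactly the lower bound on $L$ stated in \eqref{eq:L_0_single_block_contraction2}. No step is genuinely delicate, the only subtle point being the bookkeeping of weights in the first step; the contractive mechanism itself is already encoded in the estimate $L^{-d'}$ with $d'>d$ provided by Lemma~\ref{le_contraction_I}, which is what overcomes the $L^d$ combinatorial factor from reblocking.
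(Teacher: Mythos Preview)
Your approach is essentially the same as the paper's, and the proofs of \eqref{eq:main_pointwise_contraction_estimate3} and \eqref{eq:contraction_G_blocks} are correct and match the paper. There is, however, a small bookkeeping issue in your derivation of \eqref{eq:main_pointwise_contraction_estimate2}.

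For the $w_{k:k+1}^B$-term you write $(1+|\p|_{k+1,B})^5\, w_{k:k+1}^B(\p)\le C'\,w_{k+1}^B(\p)$ by invoking Theorem~\ref{th:weights_final}\ref{w:w9} ``at scale $B$''. But $B$ is a $k$-block, not a $(k{+}1)$-polymer: the weight $w_{k+1}^B$ is not defined (the operator $\boldsymbol{A}_{k+1}^X$ in \eqref{eq:defAk} is only introduced for $X\in\Pcal_{k+1}$), and property~\ref{w:w9} is stated only for $X\in\Pcal_{k+1}$. The same applies to the monotonicity $w_{k+1}^B\le w_{k+1}^{B'}$ you use next, since Theorem~\ref{th:weights_final}\ref{w:w1} requires both polymers to lie in $\Pcal_{k+1}$.

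The fix is immediate and is what the paper does: reverse the order of the two steps. First pass from $B$ to $B'$ using $|\p|_{k+1,B}\le|\p|_{k+1,B'}$ and $w_{k:k+1}^B\le w_{k:k+1}^{B'}$ (monotonicity~\ref{w:w1}, valid since both are $k$-polymers), and only then apply~\ref{w:w9} at $B'\in\Bcal_{k+1}$. Together with $w_{k:k+1}^{B'}\ge 1$ this gives the single chain
\[
(1+|\p|_{k+1,B'})^5\le (1+|\p|_{k+1,B'})^5\,w_{k:k+1}^{B'}(\p)\le C'\,w_{k+1}^{B'}(\p),
\]
which handles both the $C_1L^{-d'}$-term and the $w_{k:k+1}^B$-term at once. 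Your treatment of the constant term via $e^{|\p|_{k+1,B'}^2/2}\le w_{k+1}^{B'}(\p)$ is correct and is in fact a special case of this chain.
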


\begin{proof}
Indeed by 
  \eqref{eq:w9} in Theorem~\ref{th:weights_final}~\ref{w:w9}     
and the definition of $C'$ we have 
$$(1+ |\p|_{k+1, B'})^5 \le (1+ |\p|_{k+1, B'})^5 \,  w_{k:k+1}^{B'}(\p)  \le C' w_{k+1}^{B'}(\p). $$
Since $|\p|_{k+1,B} \le |\p|_{k+1,B'}$ and $w^B_{k:k+1} \le   w^{B'}_{k:k+1}$
the estimate  \eqref{eq:main_pointwise_contraction_estimate2} follows from 
Lemma~\ref{le:contraction_single_block_new}.
Now
 \eqref{eq:main_pointwise_contraction_estimate3}
follows from \eqref{eq:main_pointwise_contraction_estimate2} after summing over $B$, dividing by
$w_{k+1, B'}(\p)$ and taking the supremum over $\p$. 
Finally   \eqref{eq:contraction_G_blocks} holds  if we  take $L_0$ so large that
\begin{equation}  \label{eq:bound_L0_lemma_single_block}
C' \AB C_1 L_0^{d-d'} \le 
\frac14 \eta
 \quad  \hbox{and}   \quad 8 C' \AB (C_2 +1) L_0^{-d/2} \le 
\frac14 \eta.
\end{equation}
Clearly both conditions are satisfied if $L$ satisfies $L \ge L_0$ and $L_0$ is the number in 
\eqref{eq:L_0_single_block_contraction2}.
\end{proof}

\begin{proof}[Proof of Lemma~\ref{le:contr}]
This follows from   \eqref{eq:decomp_C},
Lemma~\ref{le:contrlarge} and Lemma~\ref{le:contraction_single_block_prime}.
\end{proof}

\section{Bound for the operator \texorpdfstring{$\bigl(\protect\headingA\bigr)^{-1}$}{(Aq)-1}   }
\label{se:bound_Aq} 

\begin{lemma}\label{le:Aq} Let $C_{2,0}$ be the constant in 
 \eqref{eq:discretebounds} for $\ell = 0$. Then for  
 \begin{align}   \label{eq:condition_h_bound_A_inverse}
 h \ge \sqrt{C_{2,0}}  
 \end{align}
and $h_k=2^kh$ the operator $\boldsymbol{A}^{(\boldsymbol{q})}: (M_0(\Bcal_k), \| \cdot \|_{k,0})\to
(M_0(\Bcal_{k+1}), \| \cdot \|_{k+1})$ satisfies 
	\begin{align}
		\norm{\bigl(\boldsymbol{A}^{(\boldsymbol{q})}\bigr)^{-1}} \leq \frac{3}{4}. 
	\end{align}
\end{lemma}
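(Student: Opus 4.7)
The plan is to exploit the explicit formula for $\boldsymbol{A}_k^{(\boldsymbol{q})}$ in \eqref{eq:defofAk}. Writing a relevant Hamiltonian $\dot H \in M_0(\Bcal_k)$ in terms of its coefficients $(a_\emptyset, a_{i,\alpha}, a_{\mpzc})$ with respect to the basis $\{\Mscr_\emptyset, \Mscr_{i,\alpha}, \Mscr_\mpzc\}$ as in \eqref{eq:relevant_hamiltonian_exp}, the calculation at the end of the proof of Theorem~\ref{prop:contractivity} shows that $\boldsymbol{A}_k^{(\boldsymbol{q})}$ preserves the linear and quadratic coefficients and shifts only the constant term: with
\[
c_{\mpzc}^{(\boldsymbol q)}:=(\nabla^\beta)^\ast \nabla^\alpha \mathcal{C}^{(\boldsymbol{q})}_{k+1, ij}(0),\qquad \mpzc=((i,\alpha),(j,\beta))\in \mf v_2,
\]
we have $(\boldsymbol{A}_k^{(\boldsymbol{q})} \dot H) \leftrightarrow (a_\emptyset + \sum_{\mpzc \in \mf v_2} a_\mpzc c_\mpzc^{(\boldsymbol q)}, a_{i,\alpha}, a_\mpzc)$. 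This map is therefore upper-triangular with identity on the diagonal, hence invertible for every $\boldsymbol q$, and $(\boldsymbol{A}_k^{(\boldsymbol{q})})^{-1}$ acts on $\dot H'\leftrightarrow (a'_\emptyset, a'_{i,\alpha}, a'_\mpzc)$ by sending the constant coefficient to $a'_\emptyset - \sum_{\mpzc} a'_\mpzc c_\mpzc^{(\boldsymbol q)}$ and leaving the other coefficients unchanged.

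Next, I would bound the three groups of terms in $\|(\boldsymbol{A}_k^{(\boldsymbol{q})})^{-1}\dot H'\|_{k,0}$ separately, using the definition \eqref{hamiltoniannorm} of $\|\cdot\|_{k,0}$ and the relation $h_{k+1}=2 h_k$. Write
\[
N_0 := L^{(k+1)d}|a'_\emptyset|,\quad N_1:= \sum_{(i,\alpha)\in \mf v_1} h_{k+1} L^{(k+1)d} L^{-(k+1)\frac{d-2}{2}}L^{-(k+1)|\alpha|}|a'_{i,\alpha}|,\quad N_2 := \sum_{\mpzc\in \mf v_2} h_{k+1}^2 |a'_\mpzc|,
\]
so that $\|\dot H'\|_{k+1,0}=N_0+N_1+N_2$. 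The constant contribution to $\|(\boldsymbol{A}_k^{(\boldsymbol{q})})^{-1}\dot H'\|_{k,0}$ is $L^{kd}|a'_\emptyset - \sum_\mpzc a'_\mpzc c_\mpzc^{(\boldsymbol q)}|$. The triangle inequality and $L^{kd}|a'_\emptyset|=L^{-d}N_0$ reduce the estimate of this piece to a bound on $|c_\mpzc^{(\boldsymbol q)}|$: since $|\alpha|+|\beta|=2$, the case $d+|\alpha|+|\beta|>2$ of \eqref{eq:discretebounds} with $\ell=0$ gives $|c_\mpzc^{(\boldsymbol q)}|\leq C_{2,0} L^{-kd}$ uniformly in $\boldsymbol q \in B_\kappa$. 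Therefore
\[
L^{kd}\sum_\mpzc |a'_\mpzc||c_\mpzc^{(\boldsymbol q)}|\leq C_{2,0} \sum_\mpzc |a'_\mpzc| = C_{2,0} h_{k+1}^{-2} N_2.
\]
For the linear terms, the ratio of the scale-$k$ weight to the scale-$k+1$ weight is $\tfrac12 L^{|\alpha|-d/2-1}\leq \tfrac12$ because $|\alpha|\leq \lfloor d/2\rfloor +1 \leq d/2+1$; for the quadratic terms the ratio is exactly $h_k^2/h_{k+1}^2 = \tfrac14$.

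Combining these observations gives
\[
\|(\boldsymbol{A}_k^{(\boldsymbol{q})})^{-1}\dot H'\|_{k,0}\leq L^{-d} N_0 + \tfrac12 N_1 + \Bigl(\tfrac14 + \tfrac{C_{2,0}}{h_{k+1}^2}\Bigr) N_2 \leq \max\Bigl(L^{-d},\tfrac12,\tfrac14+\tfrac{C_{2,0}}{h_{k+1}^2}\Bigr)\,\|\dot H'\|_{k+1,0}.
\]
The hypothesis $h\geq\sqrt{C_{2,0}}$ combined with $h_{k+1}=2^{k+1}h\geq 2h$ yields $C_{2,0}/h_{k+1}^2\leq 1/4$, so the maximum on the right is at most $3/4$, completing the proof. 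No genuine obstacle arises here: the argument is essentially bookkeeping of $L$- and $h$-weights together with the decay bound on $\mathcal{C}_{k+1}^{(\boldsymbol q)}$; the only slightly delicate point is verifying that the quadratic block of $\boldsymbol{A}_k^{(\boldsymbol q)}$ is really the identity, which follows from the observation that translation invariance of $\mu_{k+1}^{(\boldsymbol q)}$ kills the linear terms in $\xi$ when computing $\int \dot H(B,\p+\xi)\mu_{k+1}^{(\boldsymbol q)}(\d\xi)$ (cf.\ the derivation of \eqref{eq:defofAk} in the proof of Theorem~\ref{prop:contractivity}).
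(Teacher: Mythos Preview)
Your proof is correct and follows essentially the same approach as the paper: both use the explicit triangular form of $\boldsymbol{A}_k^{(\boldsymbol q)}$ from \eqref{eq:defofAk}, the covariance bound $|(\nabla^\beta)^\ast\nabla^\alpha\mathcal{C}^{(\boldsymbol q)}_{k+1,ij}(0)|\le C_{2,0}L^{-kd}$ from \eqref{eq:discretebounds}, and the scaling $h_{k+1}=2h_k$. The only difference is that you track the constant, linear, and quadratic pieces separately (obtaining coefficients $L^{-d}$, $\tfrac12$, and $\tfrac14+C_{2,0}/h_{k+1}^2$), whereas the paper bounds all three by $\tfrac12$ at once and then adds the correction $C_{2,0}/h_{k+1}^2\le\tfrac14$; your bookkeeping actually yields the slightly sharper bound $\tfrac12$, but both comfortably give $\tfrac34$.
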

\begin{proof}
	Let  $H'=  \boldsymbol{A}^{(\boldsymbol{q})}   H$. 
	As before we  denote the coefficients of  the expansion of $H$  and $H'$ in monomials by $a_\mpzc$ and $a'_\mpzc$, respectively. 
	Here $\mpzc \in \mf v$. 
	By \eqref{eq:defofAk}  we have $a'_\mpzc = a_\mpzc$ for $\mpzc \neq \emptyset$ and
	\begin{align}
	a'_\emptyset = a_\emptyset  + 		\sum_{(i, \alpha), (j, \beta) \in \mf v_2}   
		a_{(i, \alpha), (j, \beta)}    \,    (\nabla^\beta)^\ast \nabla^\alpha \mathcal{C}^{(\boldsymbol{q})}_{k+1, ij}(0).
	\end{align}
	Thus $\boldsymbol A := \boldsymbol A^{(\boldsymbol{q})}$ is invertible and by the definition \eqref{hamiltoniannorm} of the $\| \cdot \|_{k,0}$ norm
	in  connection with the relations $h_{k+1} \ge 2 h_k$  and $L \ge 2$ we get 
	 \begin{align} 
	 \begin{split}
 \| H \|_{k,0} = & \, L^{kd} \, |a_\emptyset| + \sum_{(i, \alpha) \in \mf v_1}
h_k L^{kd} L^{- k \frac{d-2}{2}} L^{-k |\alpha|} \, |a_{i, \alpha}| + \sum_{\mpzc \in \mf v_2}  
h_k^2\,  |a_\mpzc| \\
\le & \, L^{kd} \, |a'_\emptyset| + \sum_{(i, \alpha) \in \mf v_1}
h_k L^{kd} L^{- k \frac{d-2}{2}} L^{-k |\alpha|} \, |a'_{i, \alpha}| + \sum_{\mpzc \in \mf v_2}  
h_k^2\,  |a'_\mpzc|   \\
& \, + L^{kd} \sum_{(i, \alpha), (j, \beta) \in \mf v_2}   
		|a_{(i, \alpha), (j, \beta)} |   \,    |(\nabla^\beta)^\ast \nabla^\alpha \mathcal{C}^{(\boldsymbol{q})}_{k+1, ij}(0)| \\
\le & \,  \tfrac12 \| H'\|_{k+1,0} + L^{kd} \sum_{(i, \alpha), (j, \beta) \in \mf v_2}   
		|a_{(i, \alpha), (j, \beta)} |   \,    |(\nabla^\beta)^\ast \nabla^\alpha \mathcal{C}^{(\boldsymbol{q})}_{k+1, ij}(0)| 
		\end{split}
\end{align}
	The bound \eqref{eq:discretebounds} implies that for $((i, \alpha), (j, \beta)) \in \mf v_2$ 
	\begin{align}
		   \left|   (\nabla^\beta)^\ast \nabla^\alpha \mathcal{C}^{(\boldsymbol{q})}_{k+1, ij}(0) \right|
     \leq C_{2,0}\,  L^{-kd}. 
	\end{align}
	Using in addition that
	\begin{align}
	\sum_{(i, \alpha), (j, \beta) \in \mf v_2}   
		|a_{(i, \alpha), (j, \beta)}  |  
		= \sum_{(i, \alpha), (j, \beta) \in \mf v_2}   
		|a'_{(i, \alpha), (j, \beta)}  |  
		  \leq \frac{\lVert H'\rVert_{k+1,0}}{h_{k+1}^2} 
	\end{align}
	and $h_{k+1} = 2^{k+1} h \ge 2 h$ 
	we conclude that 
	\begin{align}
		\Vert \boldsymbol{A}^{-1}H'\rVert_{k,0}\leq \frac{1}{2}\lVert H'\rVert_{k+1,0}+
		\frac{C_{2,0} \lVert H'\rVert_{k+1,0}}{h_{k+1}^2}\leq \frac{3}{4}\lVert H'\rVert_{k+1,0}                                                                                                                              
	\end{align}
	provided that $h^2 \ge C_{2,0}$.
\end{proof}

\section{Bound for the operator \texorpdfstring{$\protect\headingB$}{Bq}}
\label{se:bound_Bq}

Recall from   \eqref{eq:defofBk}  that $\boldsymbol{B}_k^{(\boldsymbol{q})}: (M(\Pck), \| \cdot\|_k^{(A)}) \to (M_0(\Bcal_{k+1}), \| \cdot \|_{k+1,0})$ is defined by
\begin{align}
		(\boldsymbol{B}_k^{(\boldsymbol{q})}  {K})(B',\p) & =-\sum_{B\in\mathcal{B}_k(B')}
		\Pi_2\Bigl(\int_{\Xcal_N}   {K}(B,\p+\xi)\, \mu_{k+1}^{(\boldsymbol{q})}(\d\xi)\Bigr) 
\end{align}

\begin{lemma}  \label{le:bound_BQ}
Assume that 
\begin{equation} \label{eq:condition_L_bound_Bq}
L \ge  2^{d+3}+16R,
\end{equation}
and
\begin{equation}   \label{eq:condition_A_bound_Bq}
A\geq A_0 :=   3 C_2  \AB L^d    
\end{equation}
	where $C_2$ is the constant in Lemma~\ref{le:Pi2_bounded} and $\AB$ is the constant in
	Theorem~\ref{th:weights_final}~\ref{w:w8}.      
	Then the  operator norm of $\boldsymbol{B}^{(\boldsymbol{q})}$ satisfies
	\begin{align}
		\lVert \boldsymbol{B}^{(\boldsymbol{q})}\rVert\leq \frac{C_2 \AB L^d}{A}\leq \frac{1}{3}.
	\end{align}
\end{lemma}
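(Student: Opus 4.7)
The strategy is to exploit the translation invariance of $K$ and the fact that $(\boldsymbol{B}^{(\boldsymbol{q})}K)(B',\cdot)$ is, by construction, a relevant Hamiltonian at scale $k+1$ whose coefficients are obtained from the same kind of object on a single $k$-block. Concretely, writing $\tilde H(B, \cdot) := \Pi_2 \boldsymbol{R}_{k+1}^{(\boldsymbol{q})} K(B, \cdot)$ and expanding in the relevant monomials, translation invariance yields coefficients $a_{\mpzc}$ independent of $B \in \Bcal_k$; summing over the $L^d$ blocks $B \in \Bcal_k(B')$ produces exactly the scale $k+1$ Hamiltonian on $B'$ with the same coefficients, i.e. $(\boldsymbol{B}^{(\boldsymbol{q})}K)(B',\cdot)$ is the Hamiltonian with coefficients $-a_{\mpzc}$ at scale $k+1$.

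The first computation will compare the two weighted norms on a single set of coefficients. Reading off \eqref{hamiltoniannorm2}, the ratio of the scale-$(k+1)$ weight to the scale-$k$ weight is $L^d$ for the constant part, $2 L^{(d+2)/2 - |\alpha|}$ for the linear part (using $h_{k+1}/h_k = 2$), and $4$ for the quadratic part. Under the hypothesis $L \geq 2^{d+3}+16R$ all three ratios are bounded by $L^d$, so
\begin{equation*}
\| \boldsymbol{B}^{(\boldsymbol{q})}K \|_{k+1,0} \le L^d \, \| \tilde H \|_{k,0}.
\end{equation*}

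The second step is to chain the known ingredients to bound $\| \tilde H\|_{k,0}$ by $\|K\|_k^{(A)}$. The boundedness of $\Pi_2$ (Lemma~\ref{le:Pi2_bounded}) gives $\|\tilde H\|_{k,0} \le C_2\, |\boldsymbol{R}_{k+1}^{(\boldsymbol{q})} K(B)|_{k,B,T_0}$. Since $w_{k:k+1}^{-B}(0)=1$ this is bounded by $C_2\, \|\boldsymbol{R}_{k+1}^{(\boldsymbol{q})} K(B)\|_{k:k+1,B}$, and the single-block integration estimate in Lemma~\ref{le:keyboundRk} (the improved constant $\AB$ independent of $L$) yields $\|\boldsymbol{R}_{k+1}^{(\boldsymbol{q})} K(B)\|_{k:k+1,B} \le \AB \, \|K(B)\|_{k,B}$. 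Using $\|K(B)\|_{k,B} \le A^{-1} \|K\|_k^{(A)}$ (since $|B|_k = 1$) and concatenating, we conclude
\begin{equation*}
\| \boldsymbol{B}^{(\boldsymbol{q})}K \|_{k+1,0} \le \frac{C_2\, \AB\, L^d}{A}\, \|K\|_k^{(A)},
\end{equation*}
which is the asserted bound, and is $\le 1/3$ under the assumption $A \ge 3 C_2 \AB L^d$.

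There is no serious obstacle: the only thing worth verifying carefully is the comparison of the two relevant-Hamiltonian norms, in particular that the linear-term ratio $2 L^{(d+2)/2 - |\alpha|}$ stays bounded by $L^d$ for all $|\alpha| \ge 1$ and all $d \ge 2$, which is immediate from $L \ge 4$. Everything else is a direct application of results already established in the excerpt (boundedness of $\Pi_2$, the single-block integration estimate, and the weak norm in terms of $\|\cdot\|_k^{(A)}$).
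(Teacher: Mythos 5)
Your proposal is correct and follows essentially the same route as the paper's proof: translation invariance of the coefficients reduces the comparison to the ratio of the $\|\cdot\|_{k+1,0}$ and $\|\cdot\|_{k,0}$ weights (bounded by $\max(L^d, 2L^{d/2},4)\le L^d$), followed by the boundedness of $\Pi_2$, the single-block integration estimate with the $L$-independent constant $\AB$, and the factor $A^{-1}$ from $\|K(B)\|_{k,B}\le A^{-1}\|K\|_k^{(A)}$. The only cosmetic difference is that you pass through $\|\boldsymbol{R}_{k+1}^{(\boldsymbol{q})}K(B)\|_{k:k+1,B}$ via Lemma~\ref{le:keyboundRk} (exactly as the paper does in its estimate for $R_2$ in Lemma~\ref{lemmaS2}), while the paper here invokes the direct Jensen-plus-weight bound \eqref{eq:contraction_estimate_RK0}; both yield the same constant $C_2\AB L^d/A$.
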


\begin{proof}

Set $H'(B') = (\boldsymbol{B}_k^{(\boldsymbol q)} K)(B')$. For a $B \in \mathcal B_k(B')$ set $H(B) = - \Pi_2 {\boldsymbol R}^{(\boldsymbol q)}K(B)$. 
Then $H(B)$ can be written as 
$$ 
H(B) =   \sum_{x \in B}  \sum_{\mpzc  \in \mf v}   a_\mpzc   \, \Mscr_{\mpzc}(\{x\}).  
$$
By translation invariance $H'(B')$ can be written as
$$
 H'(B') =   \sum_{x \in B'}  \sum_{\mpzc  \in \mf v}  a_\mpzc   \, \Mscr_{\mpzc}(\{x\}) 
 $$
with the \emph{same} coefficients $a_\mpzc$.
Thus it follows from the definition  \eqref{hamiltoniannorm} of the norm $\| \cdot \|_{k,0}$
on relevant Hamiltonians  and the  relation  $h_{k+1} = 2 h_k$ that 

\begin{align}
\| \boldsymbol{B}^{(\boldsymbol{q})} K \|_{k+1, 0} \le \max(L^d, 2 L^{d/2}, 4)  \, \| \Pi_2 \boldsymbol R_{k+1}^{(\boldsymbol{q})} K(B)\|_{k,0}
\le L^d \, \| \Pi_2 \boldsymbol R_{k+1}^{(\boldsymbol{q})} K(B)\|_{k,0}.
\end{align}

Lemma~\ref{le:Pi2_bounded} and   \eqref{eq:contraction_estimate_RK0}
(which is a consequence of  
\eqref{eq:w8}) imply that
\begin{align}
\norm{\Pi_2(\boldsymbol{R}_{k+1}^{(\boldsymbol{q})}K)(B)}_{k,0}
  \le C_2           |{\boldsymbol{R}_{k+1}^{(\boldsymbol{q})} }K(B)|_{k,B,T_0} 
 \le C_2        \AB \|K {(B)}\|_{k,B}. 
\end{align}
Since $\|K{(B)}\|_{k, B} \le A^{-1} \|K\|_k^{(A)}$ the desired assertion follows.
\end{proof}

\chapter{Proofs of the Main Results}  \label{sec:proofs}

In this chapter we first state the final  representation formula for the partition functions in
\eqref{thmequality} and \eqref{thmequality2} below (cf. \eqref{eq:Zpertcomp_general}).
We then show that our 
main results on the free energy ${\mathcal W}_N(\mathcal K,\mathcal Q)$ (Theorem ~\ref{th:pertcomp_E}) and on the scaling limit (Theorem~\ref{th:scalinglimit}) follow easily from these representation formulas.

\section{Main result of the renormalisation analysis}
We fix $\zeta \in (0,1)$ and we recall from \eqref{eq:normE} that the Banach space $\boldsymbol{E}$ consists of functions
$\mathcal K:\Gcal=\left(\mathbb{R}^m\right)^{\Ical}\rightarrow \mathbb{R}$ such that that the following norm is finite
\begin{align}
	\norm{\mathcal  K }_\zeta = \sup_{z\in \Gcal} \sum_{|\alpha|\leq r_0} \frac1{\alpha !}\abs{\partial^\alpha 
	\mathcal K(z)}e^{-\frac{1}{2}(1-\zeta)\Qscr(z)} .                                           
\end{align}
 Recall that $\eta\in (0,\frac23 ]$ is a parameter controlling the rate of contraction of the renormalisation flow.
 (Cf. the definition of the norm \eqref{eq:norm_mcZ} introduced in the next chapter formalising this notion.)
 
\begin{theorem}\label{maintheorem}\label{MAINTHEOREM}
Let $\kappa = \kappa(L)$ be as in Theorem~\ref{th:weights_final}.
Let $L_0$, $h_0(L)$, $A_0(L)$, $\rho(A)$, $C_{j_1,j_2,j_3}(L,A)$, and $C_\ell(L,A)$ be such that the conclusions of 
 Theorem~\ref{prop:smoothnessofS}  and Theorem~\ref{prop:contractivity} hold for every triplet $(L,h,A)$ 
 with $L \ge L_0$, $h \ge h_0(L)$, $A \ge A_0(L)$.
  Assume also that
  \begin{equation}
 h_0(L) \ge \delta(L)^{-\frac12}
 \end{equation}
 where $\delta(L)$ is the constant from  Lemma~\ref{prop:W1} 
 chosen in \eqref{eq:defofdelta}.
 
 Then for every triplet $(L,h,A)$ with $L\geq L_0$, $h\geq h_0(L)$, and $A\geq A_0(L)$  
 there exists a $\rhoMT = \rhoMT(L,h,A) > 0$ such that for each $N \ge 1$ there are $C^\infty$ maps  
 $\widehat  e_N:B_\rhoMT(0)\subset \boldsymbol{E}\rightarrow \mathbb{R}$, 
$\widehat{\boldsymbol{q}}_N:B_\rhoMT(0)\subset\boldsymbol{E}\rightarrow B_\kappa(0)\subset\mathbb{R}^{(d\times m)\times (d\times m)}_{\textrm{sym}}$ 
and $\widehat K_N:B_\rhoMT(0) \subset \boldsymbol{E}\rightarrow \boldsymbol{M}_N^{(A)}$ 
(defined in \eqref{eq:def_of_Banach_spaces}) with the following properties.  
For each  $\mathcal{K}\in B_\rhoMT(0) \subset \boldsymbol{E}$,
	\begin{align}\label{thmequality} \begin{split}
		\int_{\Xcal_N} \sum_{X\subset T_N}\prod_{x\in X} &\mathcal{K}(D\p(x))\,\mu^{(0)}(\d \p)
		\\ & 
		=                                                        
		    \frac{ Z_N^{ (\widehat{\boldsymbol q}_N(\mathcal K))}  		%
		e^{L^{Nd} \widehat  e_N(\mathcal{K})}  }{   Z_N^{(0)}    }
		\int_{\Xcal_N}\left(1+
		\widehat K_N(\mathcal{K})(\TN,\p)\right)\,  \, \mu_{N+1}^{(   \widehat {\boldsymbol q}_N(\mathcal{K}))  }(\d\p),
		\end{split}
	\end{align}
where $Z_N^{ (\boldsymbol q)}$ denotes the normalisation introduced in \eqref{eq:measurenormalisation}.
The derivatives of these maps satisfy bounds that are uniform in $N$
and the map  $\widehat K_N$ is contracting in the sense that there are constants $\overline{C}_{\ell}(L, h, A)>0$ such that 
	\begin{align}\label{eq:keyboundKN}
	\frac{1}{\ell!} 	\lVert \partial^\ell_{\mathcal{K}}\widehat K_N(\mathcal{K})(\dot{\mathcal K}, \ldots, \dot{\mathcal K})\rVert_{N}^{(A)}\leq \overline{C}_{\ell}(L, h, A) \; \eta^N \, \| \dot{\mathcal K}\|_{\zeta}^\ell
	\end{align}
for every  $\ell \ge 0$. 

Moreover, 
	\begin{equation} \label{eq:keyboundKN_ell0}
          \int_{\Xcal_N} |  \widehat K_N(\mathcal K)(\TN, \p) | \, 
          \mu_{N+1}^{(\widehat {\boldsymbol q}_N(\mathcal K))}(d\p) \le \frac12.
	\end{equation}
More generally, the following identity holds for $f_N\in \Xcal_N$ and $\mathcal K \in B_\rhoMT(0)$,
	\begin{align}\label{thmequality2}
		\begin{split}
		\int_{\Xcal_N} e^{(f_N,\p)}\sum_{X\subset T_N}  &\prod_{x\in X} \mathcal{K}(D\p(x))\,\mu^{(0)}(\d \p) 
		                                                 =  \,                                               
		e^{\frac{1}{2}(f_N,\mathscr{C}^{(\widehat {\boldsymbol{q}}_N(\mathcal{K}))}f_N)}
		\frac{Z_N^{(\widehat {\boldsymbol{q}}_N(\mathcal{K}))}
		e^{L^{Nd}\widehat e_N(\mathcal{K})}}{Z_N^{(0)}}
		\\ &
		\int_{\Xcal_N}\left(1+\widehat K_N(\mathcal{K})(\TN	,	
			\p + \mathscr { C} ^ { (\widehat{\boldsymbol{q}}_N(\mathcal{K})) } f_N)\right)\, 
			\mu_ { N+1 } ^ { (\widehat{\boldsymbol{q}}_N(\mathcal{ K } )) } (\d\p).
		\end{split}
	\end{align}
\end{theorem}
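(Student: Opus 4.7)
The plan is to embed the integral on the left-hand side of \eqref{thmequality} into the enlarged family
$$\mathcal{Z}(H_0,K_0,\boldsymbol{q}) = \int_{\Xcal_N}(e^{-H_0}\circ K_0)(\TN,\p)\,\mu^{(\boldsymbol{q})}(\d\p),$$
where $K_0 \in M(\Pcal_0^c)$ factorises as $K_0(X,\p) = \prod_{x\in X}\mathcal{K}(D\p(x))$, $H_0 \in M_0(\Bcal_0)$ collects the free relevant parameters, and $\boldsymbol{q}\in B_\kappa$ tunes the covariance. The original integral is the value at $H_0=0$, $\boldsymbol{q}=0$, and one switches from $\mu^{(0)}$ to $\mu^{(\boldsymbol{q})}$ by $\mu^{(0)}(\d\p) = (Z_N^{(0)})^{-1}Z_N^{(\boldsymbol{q})}e^{-\frac12(\p,\boldsymbol{q}\nabla\p,\nabla\p)}\mu^{(\boldsymbol{q})}(\d\p)$, absorbing the extra quadratic form into $H_0$ and generating the prefactor $Z_N^{(\boldsymbol{q})}/Z_N^{(0)}$. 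Iterating the RG map $(H_{k+1},K_{k+1}) = \boldsymbol{T}_k^{(\boldsymbol{q})}(H_k,K_k)$ from Definition~\ref{def:Tk} and using Proposition~\ref{pr:properties_RG_map}, one obtains after $N$ steps
$$\mathcal{Z}(H_0,K_0,\boldsymbol{q}) = \tfrac{Z_N^{(\boldsymbol{q})}}{Z_N^{(0)}}\int_{\Xcal_N}(e^{-H_N}\circ K_N)(\TN,\p)\,\mu_{N+1}^{(\boldsymbol{q})}(\d\p);$$
since $\TN$ is the unique nonempty $N$-polymer, the circle product collapses to $e^{-H_N(\TN,0)} + K_N(\TN,\p)$.

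The analytical heart of the argument is a discrete stable manifold theorem that finds $H_0$ making $H_N$ vanish. Reformulate the coupled RG equations together with the boundary condition $H_N=0$ as a single fixed point for the sequence $Z = (H_0,\ldots,H_{N-1},K_1,\ldots,K_N)$ in a weighted Banach space with norm
$$\|Z\|_{\mathcal{Z}} = \max\bigl(\max_{0\le k\le N-1}\eta^{-k}\|H_k\|_{k,0},\ \max_{1\le k\le N}\eta^{-k}\|K_k\|_k^{(A)}\bigr).$$
The triangular form \eqref{eq:triangular_derivative} with $\|(\boldsymbol{A}^{(\boldsymbol{q})})^{-1}\|\le 3/4$, $\|\boldsymbol{B}^{(\boldsymbol{q})}\|\le 1/3$, $\|\boldsymbol{C}^{(\boldsymbol{q})}\|\le 3\eta/4$ from Theorem~\ref{prop:contractivity}, combined with the uniform $C^\infty$ bounds of Theorem~\ref{prop:smoothnessofS}, make the linearised problem a genuine contraction on $\mathcal{Z}$: iterating $\boldsymbol{A}^{(\boldsymbol{q})}$ backwards gains a factor $3/4$ per step, against which the exponential weight $\eta^{-k}$ pays a factor $\le 3/2$, so the net rate is strictly less than $1$. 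Applying an abstract contraction mapping principle (to be proved in Chapter~\ref{sec:finetuning}) produces a smooth map $(K_0,\boldsymbol{q}) \mapsto \widehat H_0(K_0,\boldsymbol{q})$ with the desired properties, in particular $\|K_N\|_N^{(A)} \lesssim \eta^N$. The main obstacle here is exactly controlling the expanding direction $\boldsymbol{A}^{(\boldsymbol{q})}$ uniformly in $N$; the weighted-sequence formulation is what makes this possible.

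It remains to fix $\boldsymbol{q}$ self-consistently as a function of $\mathcal{K}$. The quadratic part of $\widehat H_0(K_0(\mathcal{K}),\boldsymbol{q})$ defines a symmetric element of $\R^{(d\times m)\times(d\times m)}_{\rm sym}$ whose $\boldsymbol{q}$-derivative at the origin is essentially the identity (because the absorbed covariance perturbation appears linearly in $H_0$). A standard implicit function argument -- Lemma~\ref{lemmafixedpoint} in Chapter~\ref{sec:finetuning} -- then produces $\widehat{\boldsymbol{q}}_N(\mathcal{K})$ matching the two copies of $\boldsymbol{q}$ and a scalar $\widehat e_N(\mathcal{K})$ given by $L^{-Nd}$ times the constant part of $\widehat H_0$, from which \eqref{thmequality} follows with $\widehat K_N(\mathcal{K}) := K_N$. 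The bound \eqref{eq:keyboundKN} is the stable-manifold contraction estimate combined with the chain rule and with Proposition~\ref{prop:embedding}'s control of $\mathcal{K}\mapsto K_0$; \eqref{eq:keyboundKN_ell0} then follows by shrinking $\rhoMT$ using that $\|\cdot\|_N^{(A)}$ controls integrals against $\mu_{N+1}^{(\boldsymbol{q})}$ via property \eqref{eq:w7} of Theorem~\ref{th:weights_final}. Finally, the source-term identity \eqref{thmequality2} reduces to \eqref{thmequality} by completing the square: translating $\p \mapsto \p + \mathscr{C}^{(\boldsymbol{q})}f_N$ in $\mu^{(\boldsymbol{q})}$ produces the factor $e^{\frac12(f_N,\mathscr{C}^{(\boldsymbol{q})}f_N)}$ and shifts the argument of $\widehat K_N$ as displayed, since this translation is compatible with the finite range decomposition and with all subsequent RG manipulations.
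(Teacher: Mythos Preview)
Your overall architecture is right—stable manifold theorem on the trajectory space $\mathcal{Z}$, second implicit function theorem to fine-tune, completion of the square for the source term—and matches the paper's approach. But there is a genuine algebraic gap in the first step that breaks the argument as you have written it.

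You fix $K_0(X,\p)=\prod_{x\in X}\mathcal{K}(D\p(x))$ and propose to ``absorb the extra quadratic form into $H_0$'' when switching from $\mu^{(0)}$ to $\mu^{(\boldsymbol q)}$. This does not work: the global factor $e^{-\frac12\sum_x(\boldsymbol q\nabla\p(x),\nabla\p(x))}=e^{-\mathcal H(\TN,\p)}$ multiplies the \emph{entire} sum $\sum_X K_0(X,\p)=(e^0\circ K_0)(\TN,\p)$, whereas in the circle product $(e^{-H_0}\circ K_0)(\TN,\p)=\sum_Y e^{-H_0(Y,\p)}K_0(\TN\setminus Y,\p)$ the exponential sits only on the complement of the $K_0$-support. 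The correct manoeuvre (equation \eqref{eq:prefinalidentity} in the paper) is to modify \emph{both} factors: set $\widehat K_0(\mathcal K,\mathcal H)(X,\p)=e^{-\mathcal H(X,\p)}\prod_{x\in X}\mathcal K(D\p(x))$ and then
\[
\bigl(\widehat K_0(\mathcal K,\mathcal H)\circ e^{-\mathcal H}\bigr)(\TN,\p)=\Bigl(\sum_{X}\prod_{x\in X}\mathcal K(D\p(x))\Bigr)\,e^{-\mathcal H(\TN,\p)}.
\]
Without this, there is no clean identity linking your RG output back to the original integral, and the fine-tuning step has nothing to tune against. Relatedly, your justification for the second fixed point (``the $\boldsymbol q$-derivative is essentially the identity because the absorbed perturbation appears linearly in $H_0$'') is not the mechanism: the paper parametrises by the full relevant Hamiltonian $\mathcal H$, observes that $\widehat K_0(0,\mathcal H)=0$ forces $\widehat Z(0,\mathcal H)=0$ for all $\mathcal H$, hence $D_{\mathcal H}\Pi_{H_0}\widehat Z(0,0)=0$, and then the map $\Pi_{H_0}\widehat Z-\pi_2$ has $\mathcal H$-derivative $-\mathrm{Id}$ at the origin. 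The invertibility comes from the trivial $-\pi_2$, not from any linear response to $\boldsymbol q$. Minor points: the smoothness of $\mathcal K\mapsto K_0$ is Lemma~\ref{le:smoothness_K0}, not Proposition~\ref{prop:embedding}; the bound \eqref{eq:keyboundKN_ell0} uses the single-block estimate \eqref{eq:w8} rather than \eqref{eq:w7}; and for \eqref{thmequality2} you complete the square once in the full measure $\mu^{(\boldsymbol q)}$ and then invoke \eqref{eqfirstfixedpointid} with shifted argument—no compatibility with the finite range decomposition is needed.
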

Actually the proof shows that we may take $\rhoMT$ as the minimum $\tilde{\rhoMT}$ in Lemma~\ref{lemmafixedpoint}
and $\frac{ A}{2 \AB C_{1, \eqref{eq:keyboundKN}}} \eta^{-N}$.
Thus for $N \ge N_0(L,h,A)$ we may take $\rhoMT$ simply as in Lemma~\ref{lemmafixedpoint}.

We will prove this theorem at the end of  Chapter~\ref{sec:finetuning}.
In the remainder of the current  chapter we show how Theorem~\ref{maintheorem} implies the main results in Chapter~\ref{sec:setting}.

\section{Proof of the main theorem}  \label{sec:proof_GGM}

\begin{proof}[Proof of Theorem \ref{th:pertcomp_E}]
Choose  the parameter $\rhoMT$ in the statement of Theorem~\ref{th:pertcomp} as the number $\rhoMT(L, h_0(L), A_0(L))$ in Theorem \ref{maintheorem}. 
We apply first \eqref{eq:defvarsigmaN_general} and then  \eqref{eq:Zpertcomp_general}
and \eqref{thmequality} from Theorem~\ref{maintheorem} and obtain that the perturbative free energy can be expressed as
	\begin{align}
		\begin{split}\label{eq:pertcompdecomp}
		\overline{\myW}_N(\Kcal) \ &{=} -\frac{1}{L^{Nd}}\ln \myZ_N(\Kcal,\Qscr,0) \\
	         &  {=}                                 
		-\widehat  e_N(\mathcal{K})-\frac{1}{L^{Nd}}\ln\Bigl(\frac{Z_N^{(\widehat{\boldsymbol{q}}_N(\mathcal{K}))}}{Z_N^{(0)}}\Bigr)\\ 
		&  \qquad - \frac{1}{L^{Nd}}\ln\Bigl(\int_{\Xcal_N}\left(1+\widehat K_N(\mathcal{K})(\TN , \p)\Bigr)\,
		\mu_ { N+1}^{ (  \widehat{\boldsymbol{q}}_N(\mathcal { K })) } (\d\p)\right).
		\end{split}
	\end{align}
	The first term is $C^{\infty}$ uniformly in $N$ by Theorem \ref{maintheorem}.
	Similarly the second term is $C^{\infty}$ uniformly in $N$ by Theorem \ref{maintheorem},
	 Lemma~\ref{le:smoothness_Zq_div_Z0_term} below, and the chain rule.

	To address the last term we introduce the shorthand
	\begin{align}
		G(K_N,\boldsymbol{q})=\int_{\Xcal_N} K_N(X,\p)\,\mu_{N+1}^{(\boldsymbol{q})}(\d \p)=\myR_{N+1}^{(\boldsymbol{q})}K_N(\TN,0). 
	\end{align}
	Then the last term equals
	 $ L^{-Nd} \ln (1+G(\widehat K_N(\mathcal{K}),  \widehat{\boldsymbol{q}}_N(\mathcal{K})))$.
	Note that for any positive function $G$ the  derivative
	$D^k\ln (1+G)$ is given by a polynomial in derivatives of $G$ divided by $(1+G)^k$.
	It follows from  \eqref{eq:keyboundKN_ell0} that $1 + G \ge \frac12$. 
By the chain rule it is sufficient to show that $G: B_\kappa(0)\times \boldsymbol{M}_N^{(A)}\to \mathbb{R}$ 
is smooth because $\widehat{\boldsymbol{q}}(\Kcal)$ and $\widehat{K}_N(\Kcal)$ are smooth functions.  
For the derivatives with respect to $\boldsymbol{q}$ we use \eqref{eq:bdRkl1} from Lemma~\ref{le:keyboundRk} to estimate
	\begin{align}
		\begin{split}
		|\partial_{\boldsymbol{q}}^\ell G(K_N,\boldsymbol{q}))| 
		& \leq  \lVert \partial^\ell_{\boldsymbol{q}} \myR_{N+1}^{(\boldsymbol{q})}	K_N(\TN)\rVert_{N:N+1,\TN}   \\
		& \leq \overline{C}_{\ell}\,  \frac{\AB}{A}\lVert \widehat K_{N}\rVert_{N}^{(A)}.
		\end{split}
	\end{align}
Thus, we have established that $\overline{\mathcal W}_N $ is $C^{\infty}$ with uniform bounds.
\end{proof}
To show smoothness  of the second term on the right hand side of \eqref{eq:pertcompdecomp} we used the following result.
\begin{lemma}  \label{le:smoothness_Zq_div_Z0_term}
Let $F_N(\boldsymbol q) = \frac{1}{L^{Nd}}\ln\bigl(\frac{Z^{(\boldsymbol{q})}}{Z^{(0)}}\bigr). $
Then $F_N \in C^\infty(B_{\omega_0/2}(0))$ and the derivatives of $F_N$ can be bounded uniformly in $N$. 
\end{lemma}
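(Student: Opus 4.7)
The plan is to use Gaussian integration and Fourier diagonalisation to reduce $F_N$ to an explicit sum of logarithms of determinants of $m\times m$ matrices, and then to exploit the quadratic scaling in $p$ of both $\widehat{\mathcal A}^{(\boldsymbol q)}(p)$ and its inverse to obtain uniform bounds on the derivatives in $\boldsymbol q$.

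First, recall that $\mathscr A^{(\boldsymbol q)}$ is positive definite on $\Xcal_N$ for $\boldsymbol q\in B_{\omega_0/2}(0)$, as the corresponding quadratic form $\Qscr^{(\boldsymbol q)}$ satisfies \eqref{eq:conditionForQ} with $\omega_0$ replaced by $\omega_0/2$. Therefore the Gaussian integral
\begin{equation*}
Z_N^{(\boldsymbol q)} = (2\pi)^{\dim\Xcal_N/2}\,\det\bigl(\mathscr A^{(\boldsymbol q)}\bigr)^{-1/2}
\end{equation*}
converges, and since $\mathscr A^{(\boldsymbol q)}$ is translation invariant it is diagonalised by discrete Fourier transform over the dual torus $\widehat{T}_N\setminus\{0\}$ (the zero mode is excluded because we are on $\Xcal_N$). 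Setting $M(\boldsymbol q,p) := \widehat{\mathcal A}^{(\boldsymbol q)}(p)\,\widehat{\mathcal A}^{(0)}(p)^{-1}$, I obtain
\begin{equation*}
F_N(\boldsymbol q)\;=\;-\frac{1}{2L^{Nd}}\sum_{p\in\widehat T_N\setminus\{0\}}\ln\det M(\boldsymbol q,p).
\end{equation*}

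The key point is that $\widehat{\mathcal A}^{(\boldsymbol q)}(p)$ is \emph{affine} in $\boldsymbol q$ by \eqref{eq:definition_Aq}, and its derivative $\partial_{\boldsymbol q}\widehat{\mathcal A}^{(\boldsymbol q)}(p)(\dot{\boldsymbol q})$ has entries of the form $\sum\dot{\boldsymbol q}_{(\alpha,i),(\beta,j)}\overline{q_\alpha(p)}q_\beta(p)$ with $|\alpha|=|\beta|=1$, hence bounded by $C|\dot{\boldsymbol q}|\,|p|^2$. Combined with the lower bound $\widehat{\mathcal A}^{(\boldsymbol q)}(p)\ge (\omega/2)|p|^2$ from \eqref{Ahatestimate} (applied to $\Qscr^{(\boldsymbol q)}$), one has $|\widehat{\mathcal A}^{(\boldsymbol q)}(p)^{-1}|\le C|p|^{-2}$ uniformly in $\boldsymbol q\in B_{\omega_0/2}(0)$. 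Using the standard formula $\partial\ln\det A = \mathrm{Tr}(A^{-1}\partial A)$ and its higher-order analogues (which express $\partial_{\boldsymbol q}^\ell\ln\det M$ as a polynomial in traces of products of $M^{-1}$ and $\partial_{\boldsymbol q}M$, the latter being $\boldsymbol q$-independent since $M$ is affine), every factor $M^{-1}\partial_{\boldsymbol q}M = \widehat{\mathcal A}^{(\boldsymbol q)}(p)^{-1}\partial_{\boldsymbol q}\widehat{\mathcal A}^{(\boldsymbol q)}(p)$ has norm bounded by a constant $C_\ell$ independent of both $p$ and $N$.

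Consequently, for each $\ell\ge 1$ and $\boldsymbol q\in B_{\omega_0/2}(0)$,
\begin{equation*}
\bigl|\partial_{\boldsymbol q}^\ell F_N(\boldsymbol q)(\dot{\boldsymbol q},\dots,\dot{\boldsymbol q})\bigr|
\;\le\;\frac{C_\ell|\dot{\boldsymbol q}|^\ell}{2L^{Nd}}\,\bigl(\,|\widehat T_N|-1\,\bigr)
\;\le\;C_\ell|\dot{\boldsymbol q}|^\ell,
\end{equation*}
since $|\widehat T_N|=L^{Nd}$. The $\ell=0$ bound follows analogously from the fact that $M(\boldsymbol q,p)$ stays in a fixed compact neighbourhood of the identity, making $\ln\det M(\boldsymbol q,p)$ uniformly bounded. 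The main (mild) obstacle is a clean bookkeeping of the combinatorial formula for $\partial^\ell\ln\det$; this is purely algebraic and yields the stated uniform bound.
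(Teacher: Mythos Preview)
Your proof is correct and follows essentially the same approach as the paper: both reduce $F_N$ via Gaussian integration to a Fourier sum of $\ln\det$ of $m\times m$ matrices, then use that $\widehat{\mathcal A}^{(\boldsymbol q)}(p)$ is affine in $\boldsymbol q$ and scales like $|p|^2$ (with inverse scaling like $|p|^{-2}$) to get uniform-in-$p$ bounds on each summand, and finally divide out the $L^{Nd}-1$ terms. The only cosmetic difference is that the paper normalises by writing $\tfrac{1}{|p|^2}\widehat{\mathcal A}^{(\boldsymbol q)}(p)$ to obtain a uniformly bounded and invertible matrix, whereas you work directly with the ratio $M(\boldsymbol q,p)=\widehat{\mathcal A}^{(\boldsymbol q)}(p)\widehat{\mathcal A}^{(0)}(p)^{-1}$ and are a bit more explicit about the trace formula for derivatives of $\ln\det$; both amount to the same thing.
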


\begin{proof}
To emphasise the dependence on $N$,  we temporarily use $\mathscr{A}^{(\boldsymbol{q})}_N$ 
to denote the operator  $\mathscr{A}^{(\boldsymbol{q})}$  on $L^2(\Xcal_N)$ defined in \eqref{eq:definition_Aq}.   
Fourier transform diagonalises this operator in the scalar case $m=1$ and block-diagonalises it for general $m$ (with $m \times m$ blocks). 
By \eqref{eq:fourierA} the Fourier transform is given by 
\begin{align}\label{eq:fourierA_repeated}
	\widehat{\mathscr{A}^{(\boldsymbol{q})}_N}(p)=\sum_{\alpha,\beta\in \Ical} \overline{q}(p)^\alpha \boldsymbol{Q}_{\alpha\beta}
	q(p)^\beta { +} \sum_{|\alpha| = |\beta|=1} \overline{q}(p)^\alpha \boldsymbol q_{\alpha \beta}      q(p)^\beta,                                                   
\end{align}
where $\boldsymbol q_{\alpha \beta}$ denotes the $m\times m$ matrix with entries $\boldsymbol q_{(\alpha,i) (\beta,j)}$
and the $j$-th component of $q(p)$ is given by $q_j(p) = e^{i p_j} -1$. 
Since $\boldsymbol q \in B_{\omega/2}(0)$, it follows from  \eqref{Ahatestimate} that 
$\mathscr{A}^{(\boldsymbol{q})}_N$ is positive definite and Gaussian calculus gives
	\begin{align}
		\begin{split}\label{eq:detQuotient}
		F_N(\boldsymbol q) 
		  & =\frac{1}{L^{Nd}}  {   \frac12}\ln \frac{\det \mathscr{A}^{(\boldsymbol{q})}_N}{    \det \mathscr{A}^{(0)}_N}        
		 =\frac{1}{L^{Nd}} {  \frac12}\ \sum_{p\in \widehat{T}_N \setminus \{0\}} \ln 
		  \det\bigl(   \widehat{   \mathcal{A}^{(\boldsymbol q)}_N}(p)\bigr)  
		 - \ln \det  \bigl(   \widehat{\mathcal{A}^{(0)}_N}(p)    \bigr)\\
		  & =\frac{1}{L^{Nd}} {  \frac12}\ \sum_{p\in \widehat{T}_N \setminus \{0\}} \ln 
		  \det\bigl(  \tfrac1{|p|^2}    \widehat{   \mathcal{A}^{(\boldsymbol q)}_N}(p)\bigr)  
		 - \ln \det  \bigl(   \tfrac1{|p|^2}   \widehat{\mathcal{A}^{(0)}_N}(p)    \bigr).
		\end{split}
	\end{align}
	
Now it follows from \eqref{eq:fourierA_repeated} and  \eqref{Ahatestimate} that $\boldsymbol q \mapsto \frac{1}{|p|^2}  \widehat{\mathcal{A}_N^{(\boldsymbol{q})}}(p)$
 is linear in $\boldsymbol q$ and both $\frac{1}{|p|^2}  \widehat{\mathcal{A}_N^{(\boldsymbol{q})}}(p)$ and its inverse are bounded 
  uniformly in $N$ and $p \in \widehat T_N \setminus \{0\}$.
In particular,  $ \det\bigl(  \frac1{|p|^2}    \widehat{   \mathcal{A}^{(\boldsymbol q)}_N}(p)\bigr)$ lies in a fixed compact subset of $(0, \infty)$. 
The determinant is smooth and the logarithm is smooth away from 0.
Since the sum contains $L^{Nd}-1$ terms it follows that  the function $F_N$ is smooth and the derivatives are bounded uniformly in $N$. 
\end{proof}

\section{Proof of the scaling limit}\label{sec:proof_scaling_limit}
In the setting of \cite{AKM16} the scaling limit was derived by Hilger \cite{Hil15}.
Here we follow a similar strategy.
In this section $\Kcal$ is fixed and we use the abbreviations
\begin{equation} \label{eq:abbr_scaling_limit}
e_N = \widehat {e}_N(\mathcal K), \quad 
\boldsymbol q_N = \widehat {\boldsymbol q}_N(\mathcal K), \quad K_N = \widehat K_N(\mathcal K).
\end{equation}

\begin{proof}[Proof of Theorem~\ref{th:scalinglimit}] 
Recall that we consider $f \in C^\infty(\mathbb{T}^d; \R^m)$ where $\mathbb{T}^d = \R^d/ \Z^d$ and  define rescaled functions on $\TN = \Z^d/ (L^N \Z^d)$ by 
\begin{equation}
f_N(x) = L^{-N \frac{d+2}{2}} f(L^{-N} x) -  c_N
\end{equation}
where the constant $c_N$ is chosen so that 
\begin{equation}  \label{eq:mainthm_zero_avg}
\sum_{x \in T_N} f_N(x) = 0.
\end{equation}
Note that in the statement of Theorem~\ref{th:scalinglimit} we did not subtract 
to constant from $f_N$. Since $(c_N, \p) = 0$ for all $\p \in \Xcal_N$ subtracting the constant 
does, however, not affect the statement of Theorem~\ref{th:scalinglimit}.

We rewrite the right hand side of equation \eqref{eq:scaling_limit_abstract}
using the definition \eqref{eq:Zpertcomp_general} and \eqref{thmequality2} from Theorem~\ref{maintheorem},
	\begin{align}\label{eq:scalingstarting}
		\frac{\Zscr_N(\Kcal,\Qscr,f_N)}{\Zscr_N(\Kcal,\Qscr,0)}=                                                                       
		e^{\frac12( f_N,    \mathscr{C}^{({\boldsymbol{q}}_N)}_N    f_N)}
		\frac{\int_{\Xcal_N}\left(1+K_N(\TN,\p  +\mathscr{C}^{({\boldsymbol{q}}_N)} f_N)\right)
		\,\mu_{N+1}^{({\boldsymbol{q}}_N )}(\d\p)}
		{\int_{\Xcal_N}\left(1+K_N(\TN,\p)\right)\,\mu_{N+1}^{({\boldsymbol{q}}_N   )}(\d\p)}.                                                                                                                                  
	\end{align}
	Here we used that the contribution of the  term $Z^{( {\boldsymbol{q}}_N)}_N e^{L^{Nd}e_N}/Z^{(0)}_N$ in \eqref{thmequality2}
	cancels in the above ratio since this term  does not depend on $ f_N$.
	The matrix $\boldsymbol{q}_N$ depends on $N$, but it is bounded uniformly in  $N$.
	Thus, we find a subsequence $N_\ell \to \infty$ such that $\boldsymbol q_{N_{\ell}}$  converges to $\boldsymbol q$. 
	In the following we only consider this subsequence, but for ease of notation we still write $\boldsymbol q_N$. 
	One can actually show convergence of the whole sequence  \cite{Hil18} using the techniques from \cite{BS15V}. 
	
	We consider the two terms on the right hand side of  \eqref{eq:scalingstarting} in two steps. 
	First we show that the second term converges to 1 by showing that this holds for the numerator and the denominator  separately.
	Actually, it suffices to show convergence for the numerator since the denominator corresponds to the special case with $ f_N = 0$. 
	  Theorem~\ref{th:weights_final}\ref{w:w8} and the bound \eqref{eq:keyboundKN}  imply that
	\begin{align}
		\begin{split}\label{eq:errweight1}
		&\left| \int_{\Xcal_N}K_N(\TN,\p +  \mathscr{C}^{({\boldsymbol{q}}_N)} f_N)\,
		\mu_{N+1}^{({\boldsymbol{q}}_N)}(\d\p)  \right|
		 \\ & \hspace{3cm}
		  \leq 
		\frac{\lVert K_N\rVert_N^{(A)}}{A}  
		\int_{\Xcal_N}w_N^{\TN}(\p + \mathscr{C}^{({\boldsymbol{q}}_N)} f_N)\,
		\mu_{N+1}^{({\boldsymbol{q}}_N)}(\d\p)
		\\
		  & \hspace{3cm}\leq \overline{C}_0  \eta^N
		\frac1A  \,   \AB  \, \,
		 w_{N:N+1}^{\TN}(\mathscr{C}^{({\boldsymbol{q}}_N)} f_N).
		\end{split}
	\end{align}
	The  weight function can be bounded using  Theorem~\ref{th:weights_final}\ref{w:w2},
	\begin{align}
		\begin{split}\label{eq:errweight2}
		\ln(w_{N:N+1}^{\TN}(\mathscr{C}^{({\boldsymbol{q}}_N)} f_N))
		  & \leq 
		\frac{1}{2\lambda} \left(\mathscr{C}^{{(\boldsymbol{q}}_N)}_N f_N, \Malt_{N}\mathscr{C}^{({\boldsymbol{q}}_N)}_N  f_N\right).
		\end{split}
	\end{align}
	By \eqref{Ahatestimate} 
	 the Fourier modes of the kernel of $\mathscr{C}^{({\boldsymbol{q}}_N)}$ satisfy 
	 $|\widehat{\mathcal{C}}^{({\boldsymbol{q}}_N)}(p)|\leq C|p|^{-2}\leq CL^{2N}$. 
	Recall that $q_i(p)=e^{ip_i}-1$ and $q(p)^\alpha=\prod_{i=1}^d q_i(p)^{\alpha_i}$ for any multiindex $\alpha\in \mathbb{N}^d$. 
	Using the Plancherel
	identity \eqref{eq:Plancherel}, we get 
		\begin{align}
		\begin{split}\label{eq:errweight3}
		\bigl( \mathscr{C}^{({\boldsymbol{q}}_N)}  f_N &, \boldsymbol{M}_{N}  \mathscr{C}^{({\boldsymbol{q}}_N)}  f_N\bigr)
		\\ 
		  & = L^{-Nd}\sum_{1\le |\alpha|\leq M} \sum_{p\in \widehat{T}_N}                           
		(\widehat{\mathcal{C}}^{({\boldsymbol{q}}_N)}(p)\widehat{ f}_N(p), L^{2N(|\alpha|-1)}|q(p)^{2\alpha}|
		\widehat{\mathcal{C}}^{({\boldsymbol{q}}_N)}(p)\widehat{ f}_N(p))
		\\
		  & = L^{-Nd-2N}\sum_{1 \le|\alpha|\leq M}  \sum_{p\in \widehat{T}_N}                          
		L^{2N|\alpha|}|q(p)^{2\alpha}|\,|  \widehat{\mathcal{C}}^{({\boldsymbol{q}}_N)}(p)\widehat{ f}_N(p)|^2
		\\
		  & \leq                                                                                    
		CL^{-Nd+2N}\sum_{1 \le|\alpha|\leq M}\sum_{p\in \widehat{T}_N}
		L^{2N|\alpha|}|q(p)^{2\alpha}||\widehat{ f}_N(p)|^2
		\\
		  & =CL^{2N}\sum_{1 \le  |\alpha| \leq M} L^{2N|\alpha|}(\nabla^\alpha  f_N,\nabla^\alpha  f_N). 
		\end{split}
	\end{align}
	To estimate the discrete derivatives at $x$ we apply a Taylor expansion of $ f$ of order $r$. 
	This gives
	\begin{align}  \label{eq:taylor_mainthm}
		\begin{split}
		 f_N(x+a) & =L^{-N\frac{d+2}{2}}  f(L^{-N}x+L^{-N}a)                                                                     
		\\
		         & =L^{-N\frac{d+2}{2}}\Bigl( \sum_{0\leq \beta\leq r} \frac{(L^{-N}a)^\beta}{\beta!}\partial^\beta  f(L^{-N}x) 
		+R_r\Bigr)
		\end{split}
	\end{align}
	where $R_r$ denotes the remainder that can be bounded by  $ C_{r+1}   |\nabla^{r+1}f|_\infty |L^{-N}a|^{r+1}$.
	Since the discrete derivative of order $|\alpha|$ annihilates polynomials up to order $|\alpha|-1$ 
	and since the discrete derivative is a bounded operator,  the identity \eqref{eq:taylor_mainthm} implies that 
	\begin{align}\label{eq:discderivfN}
		|\nabla^{\alpha} f_N(x)|\leq C_{|\alpha|} L^{-N\frac{d+2}{2}}|\nabla^{|\alpha|} f|_\infty L^{-N|\alpha|} 
	\end{align}
	and thus
	\begin{align}\label{eq:errweight4}
		L^{2N}\sum_{1\le  |\alpha| \leq M} L^{2N|\alpha|}(\nabla^\alpha  f_N,\nabla^\alpha  f_N) 
		\leq C\sum_{r=0}^M |\nabla^r f|_{\infty}.                                             
	\end{align}
	Combining \eqref{eq:errweight1}, \eqref{eq:errweight2}, \eqref{eq:errweight3}, and \eqref{eq:errweight4}, we conclude that
	\begin{align}
		\left|\int_{\Xcal_N}K_N(\TN,\p + \mathscr{C}^{({\boldsymbol{q}}_N)} f_N)
		\,\mu_{N+1}^{({\boldsymbol{q}}_N )}(\d\p)\right|
		  & \leq \overline{C}_0  \eta^N \frac{\AB}{A}  \exp\left(C\sum_{r=0}^M |\nabla^r f|_{\infty}\right)\to 0
	\end{align}
	as $N \to \infty$. 
	This implies that the numerator on the right hand side of \eqref{eq:scalingstarting} converges to 1.
	
	The second step is to prove the convergence of  the prefactor 
	\begin{align}
	\frac12( f_N,\mathscr{C}^{({{\boldsymbol{q}_N}})}_N\ f_N)
	\to \frac{1}{2} (f,\mathscr{C}_{\mathbb{T}^d}f).
	\end{align}
	To show this we change the scaling of the system. 
	Instead of considering the system size growing with $N$ with the distance between the atoms remaining fixed, 
	here we fix  the system size and   let the distance between the atoms go to zero.
	
	We define the rescaled torus $T_N'$ and the corresponding dual torus $\widehat{T}_N'$ in Fourier space by 
	\begin{align}
		T_N'=L^{-N}T_N, \quad \widehat{T}_N'=L^N\widehat{T}_N.
	\end{align}
	Recall from \eqref{eq:definition_dual_torus}
	that 
	\begin{equation}   \label{eq:rescaled_dual_torus}
	\widehat{T}_N' = \{ \xi \in (2 \pi \mathbb{Z})^d : | \xi|_\infty \le (L^N -1) \pi \}
	\end{equation}
	 (here we use that $L$ is odd and hence $L^N-1$ is even and we identify the dual torus with its fundamental domain). 
	To make the notation clearer we will write $x$ and $z$ for coordinates in $T_N$ and $T_N'$, respectively,
	and similarly $p$ and $\xi$ for coordinates in $\widehat{T}_N$ and $\widehat{T}_N'$, respectively.
	Note that there  is an inclusion $T_N'\to (\mathbb{R}/\mathbb{Z})^d = \mathbb{T}^d$.
For a function $g: T_N'  \to \mathbb{C}$ we define the discrete Fourier transform by
	\begin{equation} \label{eq:discrete_FT_rescaled}
	\hat g(\xi) := L^{-dN} \sum_{z \in T_N'} g(z) e^{-i \xi \cdot z}  \quad \text{ for any } \xi \in \widehat{T}_N'.
	\end{equation}
	The prefactor $L^{-dN}$ is chosen  so that, for $g \in C^0(\mathbb{T}^d)$, the sum is the Riemann sum which corresponds
	to the integral for the coefficient in the Fourier series of $g$. 
	For brevity  we write for the rest of this section 
	\begin{equation} \label{eq:abbr_CqN_mainthm}
	\mathcal C_N = \mathcal C^{({\boldsymbol q}_N)}.
	\end{equation}
	This quantity should not be confused with the finite range decomposition at scale $N$. 
	We define the rescaled functions $f_N':T_N'\to \R^m$, 
	\begin{align}
		\begin{split}
		f_N'(z)                  & =   L^{N\frac{(d+2)}{2}} f_N(L^Nz)=f(z),       \\ 
			{\mathcal C}_N'(z)&  = L^{N (d-2)} {\mathcal C}_N(L^N z).  
		\end{split}
	\end{align}
	Note that the rescaling of $\mathcal C_N$ reflects the expected behaviour of the Green's function of the Laplacian,
	namely $\mathcal C_N(x) \sim |x|^{2-d}$.
	Then the corresponding Fourier transforms $\widehat{f}_N':\widehat{T}_N'\to \mathbb{C}^m$,
	and $\widehat{\mathcal{C}}_N':\widehat{T}_N'\to \mathbb{C}^{m\times m}_{\mathrm{her}}$
	satisfy 
	\begin{align}  \label{eq:rescaled_f_Nprime}
	      \begin{split}
	     & \widehat{f}_N'(\xi)        =L^{-Nd}\sum_{z\in T_N'}f(z)e^{-iz\xi}
		=L^{-N\frac{d-2}{2}}\sum_{x\in T_N} f_N(x)e^{-iL^{-N}\xi x}
		=L^{-N\frac{d-2}{2}}\widehat{f}_N(L^{-N}\xi) 
		\\
		&\text{ and } \;
		\widehat{\mathcal{C}}_N'(\xi)  =L^{-2N}   \widehat{\mathcal{C}}_N(L^{-N}\xi).  
	      \end{split}
	\end{align}
	Using  this rescaling, Plancherel identity,  and the zero-average condition  \eqref{eq:mainthm_zero_avg}, we find that
	\begin{align}
		\begin{split}\label{eq:scalrhs}                                                                     
		(f_N, \mathscr{C}_N f_N)                                                                                 
		=\frac{1}{L^{Nd}}\sum_{p\in \widehat{T}_N \setminus \{0\}}(\widehat{f}_N(p), 
		\widehat{\mathcal{C}}_N(p)\widehat{f}_N(p)) 
		=\sum_{\xi \in \widehat{T}_N' \setminus\{0\}} (\widehat{f}_N'(\xi),  \widehat{\mathcal{C}}_N'(\xi)\widehat{f}_N'(\xi)).  
		\end{split}                                                                                         
	\end{align}
	On the other hand, the Plancherel identity and the fact that $f$ has average $0$  yield
	\begin{align}\label{eq:scallhs}
		(f,\mathscr{C}f)=\sum_{\xi\in (2\pi\mathbb{Z})^d\setminus \{0\}} (\widehat{f}(\xi),\widehat{\mathcal{C}}_{\mathbb{T}^d}(\xi)\widehat{f}(\xi)),
	\end{align}
	where the Fourier modes are given  by
	\begin{align}   \label{eq:definition:mathcalChat}
		\begin{split}
		\widehat{f}(\xi)      & =\int_{\mathbb{T}^d} f(z)e^{-i\xi z} \text{ and }                               \\
		\widehat{\mathcal{C}}_{\mathbb{T}^d}(\xi) & =\left(\sum_{i,j=1}^d \xi_i\xi_j(\boldsymbol{Q}+\boldsymbol{q})_{ij}\right)^{-1}.
		\end{split}
	\end{align}
	The last expression is well defined because $\boldsymbol{Q}+\boldsymbol{q}$ is positive definite.
	Now we show the pointwise convergence
	\begin{align}\label{eq:scalingptwise}
		\lim_{N\to \infty} (\widehat{f}_N'(\xi),  \widehat{\mathcal{C}}_N'(\xi)\widehat{f}_N'(\xi))= 
		(\widehat{f}(\xi), \widehat{\mathcal{C}}_{\mathbb{T}^d}(\xi)\widehat{f}(\xi))                             
	\end{align}
	for all $\xi \in (2 \pi \Z)^d \setminus \{0\}$. 
	First, note that  $\widehat{f}_N'(\xi)\to \widehat{f}(\xi)$ for
	 all $\xi\in(2\pi\mathbb{Z})^d  \setminus \{0\}$
	since $\widehat{f}_N'(\xi)$ is a Riemann sum approximation of the integral for $\widehat{f}(\xi)$.
	For the covariance we observe that, by \eqref{eq:fourierA_repeated},
	\begin{align}
	\begin{split}  \label{eq:covariance_scaling_higher}
	&	\widehat{\mathcal{C}}_N'(\xi)  =\widehat{\mathcal{C}}_N(L^{-N}\xi)L^{-2N} \\
&		=\Bigl(\sum_{\alpha,\beta\in \Ical} L^N\overline{q}(L^{-N}\xi)^{\alpha}\boldsymbol{Q}_{\alpha \beta}L^Nq(L^{-N}\xi)^\beta
		+
		\hspace{-.3cm}\sum_{|\alpha|=|\beta|=1}^d\hspace{-.3cm} L^N\overline{q}(L^{-N}\xi)^\alpha\boldsymbol{q}_{\alpha \beta}L^Nq(L^{-N}\xi)^\beta\Bigr)^{-1}\hspace{-.25cm}.
		\end{split}
	\end{align}
	With  $N\to \infty$, we have $L^N q(L^{-N}\xi)^\alpha=L^N(e^{iL^{-N}\xi_j}-1)\to i\xi_j$  for
	$\alpha=e_j$ and $L^N q(L^{-N}\xi)^\alpha\to 0$ for $|\alpha|\geq 2$.
	Then the assumption that $\boldsymbol{q}_N\to \boldsymbol{q}$ along the considered subsequence 
	and the fact that the inversion of matrices is continuous yield
	\begin{align}
	\widehat{\mathcal{C}}_N'(\xi)\to \widehat{\mathcal{C}}_{\mathbb{T}^d}(\xi)\quad \text{as $N\to\infty$}.
	\end{align}
	This establishes \eqref{eq:scalingptwise}.
	
	Next we show that the Fourier modes are uniformly bounded from above.
	Note that $|\widehat{\mathcal{C}}_N'(\xi)|=L^{-2N}|\widehat{\mathcal{C}}_N(L^{-N}\xi)|\leq C |\xi|^{-2}$
	by \eqref{Ahatestimate}.
	The definition of $q(p)$ and the discrete integration by parts yield
	\begin{align}
		|q(p)|^{2r}\widehat{f}_N(p)=\sum_{x\in T_N} f_N(x)\Delta^r e^{-ipx}= 
		\sum_{x\in T_N} \Delta^r f_N(x)e^{-ipx}.                             
	\end{align}
	The bound $|p|\leq 2|q(p)|$  with the rescaling \eqref{eq:rescaled_f_Nprime}
	and \eqref{eq:discderivfN} 	 implies  
	\begin{align}
		\begin{split}
		|\xi|^{2r}|\widehat{f}_N'(\xi)|
		  & =L^{2rN}|p|^{2r}L^{-N\frac{d-2}{2}}|\widehat{f}_N(p)|  \leq
		  C_r \, L^{2rN}    L^{-N\frac{d-2}{2}}  \sum_{x \in T_N} |\Delta^r f_N(x)| 
		  \\
		  & \leq                                                  
    		C_r \, L^{2rN}L^{-N\frac{d-2}{2}}\sum_{x\in T_N} L^{-N\frac{d+2}{2}}
		\, |\nabla^{2r}f|_\infty  \,  L^{-2rN}
		\leq C_r|\nabla^{2r}f|_\infty
		\end{split}
	\end{align}
	for $\xi\in \widehat{T}_N'$ and  $p = L^{-N} \xi$.
	Note that by   \eqref{Ahatestimate} and \eqref{eq:rescaled_f_Nprime}, we have 
	$| \widehat{\mathcal{C}}_N'(\xi)| \le C L^{-2N} L^{2N} |\xi|^{-2} \le C |\xi|^{-2}$. 
	Hence,
	\begin{align}
		(\widehat{f}_N'(\xi),  \widehat{\mathcal{C}}_N'(\xi)   \widehat{f}_N'(\xi)) 
		\leq C_r|\xi|^{-2r { - 2}} \,  |\nabla^{2r}f|_\infty                     
		\qquad \text{ for any }   \xi\in\widehat{T}_N' \setminus \{0\}.
	\end{align}
		For $r \ge  \lfloor \frac{d}{2}\rfloor $,  the right hand side is  summable over  $\xi \in (2\pi \Z)^d \setminus \{0\}$
	and the dominated convergence theorem, with  the pointwise convergence \eqref{eq:scalingptwise}, implies that
	\begin{align}
	\begin{split}
	 \sum_{\xi \in \widehat{T}_N' \setminus\{0\}} (\widehat{f}_N'(\xi),  \widehat{\mathcal{C}}_N'(\xi)\widehat{f}_N'(\xi))
	&\underset{   \eqref{eq:rescaled_dual_torus} }{=}
	 \sum_{ \xi \in (2 \pi \Z)^d \setminus \{0\} }1_{|\xi|_\infty \le (L^N-1) \pi} \, \, 
	  (\widehat{f}_N'(\xi),  \widehat{\mathcal{C}}_N'(\xi)
	 \widehat{f}_N'(\xi))\\
&	\to  \sum_{\xi\in (2\pi\mathbb{Z})^d\setminus \{0\}}  (\widehat{f}(\xi),\widehat{\mathcal{C}}_{\mathbb{T}^d}(\xi)\widehat{f}(\xi)).
	\end{split}
	\end{align}
	 Now  \eqref{eq:scalrhs} and   \eqref{eq:scallhs}   show that 
	$ (f_N, \mathscr{C}_N f_N) \to (f, \mathscr{C}_{\mathbb{T}^d} f)$ (along the subsequence considered).
\end{proof}

\chapter{Fine-tuning of the Initial Conditions}\label{sec:finetuning}

In this chapter we prove Theorem \ref{maintheorem} by the use of a stable manifold theorem
and an additional application of the implicit function theorem to determine the renormalized Hamiltonian. 
The setting for the stable manifold theorem is very similar to the situation in 
Theorem 2.16 of \cite{Bry09} but for completeness and for the convenience of the reader we provide a detailed proof.

\section{The renormalisation maps as a dynamical system} 
\label{sec:TonZ}

The stable manifold theorem boils down to an application of the implicit function theorem
to the whole trajectory of relevant and irrelevant interactions $(H_k, K_k)$. 
We define the Banach space\begin{align}
	\mathcal{Z}=\{Z=(H_0,H_1,\ldots,H_{N-1},K_1,\ldots,K_N):H_k\in M_0(\mathcal{B}_k),K_k\in M(\mathcal{P}_k^c)\} 
\end{align}
equipped with the norm
\begin{align}   \label{eq:norm_mcZ}
	\lVert Z\rVert_{\mathcal{Z}}=\max\left( \max_{0\leq k\leq N-1} \frac{1}{\eta^k}\lVert H_k\rVert_{k,0},
	\max_{1\leq k\leq N}   \frac{1}{\eta^k}      \lVert K_k\rVert_{k}^{(A)}    \right)                                                                           
\end{align}
where
\begin{equation} \label{eq:choice_eta}
\eta \in \left(0,\tfrac23\right].
\end{equation}
is a fixed parameter.
Note that a bound on $\| Z \|_{\mathcal Z}$ implies exponential decay of the norms of $H_k$ and $K_k$ in $k$. 
The functionals $H_N$ and $K_0$ do not appear in $\mathcal{Z}$ because we want to achieve the final condition 
$H_N = 0$ and we treat $K_0$  as a fixed initial condition, see \eqref{defofk0} below.

We define a dynamical system $\mathcal{T}$ on $\mathcal{Z}$.
The map $\mathcal{T}$ depends in addition  on two parameters, a relevant  Hamiltonian 
$\mathcal{H}\in M(\mathcal{B}_0)$
and the interaction $\mathcal{K}\in \boldsymbol{E}$.
Here we fix  $\zeta \in (0,1)$ and we recall from \eqref{eq:normE} that the Banach space $\boldsymbol{E}$ consists of functions
$K:\Gcal=\left(\mathbb{R}^m\right)^{\Ical}\rightarrow \mathbb{R}$ 
so
  that the following norm is finite
\begin{align}
	\norm{ K }_\zeta = \sup_{z\in \Gcal} \sum_{|\alpha|\leq r_0} \frac1{\alpha !}\abs{\partial^\alpha 
	K(z)}e^{-\frac{1}{2}(1-\zeta)\Qscr_0(z)} .                                           
\end{align}
The Hamiltonian $\mathcal H$  will eventually allow us  
to extract the correct Gaussian part in the measure (the renormalized covariance).

More precisely we consider  a map
$\mathcal{T}:\boldsymbol{E}\times M(\mathcal{B}_0)\times \mathcal{Z}\rightarrow \mathcal{Z}$
defined by $\mathcal{T}(\mathcal{K},\mathcal{H},Z)=\tilde{Z}$ where the coordinates of $\tilde{Z}$  
are given by 
\begin{align}
\tilde{H}_0(\mathcal K, \mathcal H, Z) 
& =(\boldsymbol{A}_0^{(\boldsymbol{q}(\mathcal H))})^{-1}\left(H_{1}-\boldsymbol{B}_{0}^{(\boldsymbol{q}(\mathcal H))} \widehat K_0(\mathcal K, \mathcal H)\right),
\label{eq:tilde_H0_def}   \\
\tilde{H}_k(\mathcal K, \mathcal H, Z) 
& =(\boldsymbol{A}_k^{(\boldsymbol{q}(\mathcal H))})^{-1}(H'-\boldsymbol{B}_k^{(\boldsymbol{q}(\mathcal H))}K_k), \quad \hbox{for} \quad  1 \le k \le N-2, \\
\tilde H_{N-1}(\mathcal K, \mathcal H, Z) 
&= - (\boldsymbol{A}_{N-1}^{(\boldsymbol{q}(\mathcal H))})^{-1} \boldsymbol{B}_{N-1}^{(\boldsymbol{q}(\mathcal H))}K_{N-1}, 
\label{eq:tilde_H{N-1}_def} \\
\tilde{K}_{k+1}(\mathcal K, \mathcal H, Z) & =\myS_k(H_k,K_k,\boldsymbol{q}), \quad    \hbox{for} \quad 1 \le k \le N-1,         
\label{eq:tilde_Kk+1_def}       \\
\tilde{K}_1(\mathcal K, \mathcal H, Z) &= \myS_0(H_0, \widehat{K}_0(\mathcal K, \mathcal H), \boldsymbol q(\mathcal H)).                
\end{align}
Here the map $\widehat K_0$ is defined by
\begin{align}\label{defofk0}
	\widehat K_0(\mathcal K, \mathcal H) (X,\p)=\exp\left(-\mathcal{H}(X,\p)\right)\prod_{x\in X} \mathcal{K}(D \p(x)). 
\end{align}
and  $\boldsymbol{q}(\mathcal{H})$ is the projection on the coefficients of the quadratic part of $\mathcal{H}$, 
i.e., $\boldsymbol{q}_{(i,\alpha)(j,\beta)}=\frac12 a_{(i,\alpha),(j,\beta)}$ for $(i,\alpha)< (j,\beta)$,
$\boldsymbol{q}_{(i,\alpha)(j,\beta)}=\frac12 a_{(j,\beta),(i,\alpha)}$ for $(i,\alpha)> (j,\beta)$, and
$\boldsymbol{q}_{(i,\alpha)(j,\beta)}=a_{(i,\alpha),(j,\beta)}$ for $(i,\alpha)=(j,\beta)$ 
where $a_{(i,\alpha),(j,\beta)}$ denotes the coefficients of the quadratic term of $\mathcal{H}$.
The factor $\frac12$ arises because $\boldsymbol{q}$ is symmetric.
Note that the definition  \eqref{eq:tilde_H{N-1}_def} of $H_{N-1}$  reflects the final condition $\HN=0$.

One easily sees that 
\begin{align}
\begin{split} &  \mathcal{T}(\mathcal{K},\mathcal{H},Z)=Z  \quad 
\text{ \emph{if and only if} }  \boldsymbol{T}_k(H_k,K_k,\boldsymbol{q}(\mathcal H))=(H_{k+1},K_{k+1}) \\
&\hspace{2cm} \text{ for all } 0\leq k\leq N-1 \text{ with } \HN = 0 \text{ and } K_0 = \widehat K_0(\mathcal K, \mathcal H).
\end{split}
\end{align}
Here $\boldsymbol{T}_k$ is
the
renormalisation group map defined 
in Definition \ref{def:Tk}.
Proposition \ref{pr:properties_RG_map} and   \eqref{E:R...R} imply
 that a fixed point of $\mathcal{T}$ satisfies 
\begin{align}\label{eqfirstfixedpointid}
	\begin{split}
	\int_{\Xcal_N} (e^{-H_0}\circ \widehat{K_0}(\mathcal K, \mathcal H))&(\TN,\p+\psi)\; \mu^{(\boldsymbol{q}(\mathcal H))}(\d \p)\\
	&=\int_{\Xcal_N} \bigl(1 +K_N(\TN,\p+\psi)\bigr)\; \mu^{(\boldsymbol{q}(\mathcal H))}_{N+1}(\d \p). 
\end{split}\end{align}

\section{Existence of a fixed point of the map \texorpdfstring{$\mathcal{T}(\mathcal{K},\mathcal{H},\cdot)$}{T(K,H,.)}}
\label{sec:existfixpoint}

Theorem~\ref{propfixedpoint} below states that for sufficiently small $\mathcal H$ and $\mathcal K$ there is a
unique fixed point $\hat Z(\mathcal K, \mathcal H)$ which depends smoothly on $\mathcal K$ and  $\mathcal H$. 
In particular \eqref{eqfirstfixedpointid} holds with $H_0 = \Pi_{H_0} \widehat{Z}(\mathcal K, \mathcal H)$ and
$K_N = \Pi_{K_N} \widehat Z(\mathcal K, \mathcal H)$ where $\Pi_{H_0} Z$ and $\Pi_{K_N} Z$ denote the projection
onto the $H_0$ component and the $K_N$ component, respectively.
Now, the right hand side of \eqref{eqfirstfixedpointid} deviates from $1$ only by an error of order $O( \eta^N)$
and
the left hand side of \eqref{eqfirstfixedpointid} looks very similar to the functional
\begin{equation}  \label{eq:fine_tuning_real_functional}
\int_{\Xcal}  \sum_{X \subset \TN}  \prod_{x \in X} \mathcal K(D \p(x)) \, \mu^{(0)}(d\p)
\end{equation}
which we want to study, but is in general not identical to it due to the presence of the terms
$\Pi_{H_0} \widehat Z(\mathcal K, \mathcal H)$ and $\boldsymbol q(\mathcal H)$. 
Another application of the implicit function theorem leads to 
Lemma~\ref{lemmafixedpoint}  below which shows that 
there exist an $\mathcal H = \widehat {\mathcal H}(\mathcal K)$ such that
$\Pi_{H_0} \widehat Z(\mathcal K, \mathcal H) = \mathcal H$.
Then a short calculation shows that  the left hand side of 
\eqref{eqfirstfixedpointid} agrees 
with  the expression \eqref{eq:fine_tuning_real_functional}
up to an explicit scalar factor which involves the constant term in $\mathcal H$ 
and the ratio $Z^{(\boldsymbol q(\mathcal H))}/ Z^{(0)}$
of the Gaussian partition functions,  see  \eqref{finalidentity} below.
From this representation we will easily deduce the main theorem of the previous chapter, 
Theorem~\ref{maintheorem}.

 Recall the convention that, say, $C_{   \eqref{eq:bound_mcH_to_q} }$  denotes  the constant 
 which appears in  equation   
\eqref{eq:bound_mcH_to_q}.

\begin{theorem}\label{propfixedpoint} Let $\kappa = \kappa(L)$ be as in Theorem~\ref{th:weights_final}.
Let $L_0$, $h_0(L)$, $A_0(L)$, $\rho(A)$, $C_{j_1,j_2,j_3}(L,A)$, and $C_\ell(L,A)$ be such that the conclusions of 
 Theorem~\ref{prop:smoothnessofS}  and Theorem~\ref{prop:contractivity} hold for every triplet $(L,h,A)$
 with $L \ge L_0$, $h \ge h_0(L)$, $A \ge A_0(L)$.
 Assume also that
  \begin{equation} \label{eq:conditions_L_h_propfixedpoint}
h_0(L) \ge   \max(\delta(L)^{-1/2}, 1)
 \end{equation}
 where $\delta(L)$ is the constant introduced in \eqref{eq:defofdelta}.  
 Then for every triplet $(L,h,A)$
 that satisfies  $L \ge L_0$, $h \ge h_0(L)$, $A \ge A_0(L)$  there exist constants $\rho_1 = \rho_1(h,A) >0$, $\rho_2 = \rho_2(L) >0$
and $\overline{C}_{j_1, j_2, j_3}$  such that
 $\mathcal T$ is smooth in $B_{\rho_1}(0) \times B_{\rho_2}(0) \times B_{\rho(A)}(0) \subset
  \mathscr (M(\mathcal B_0); \| \cdot \|_{0,0}) \times {\boldsymbol E} \times \mathcal Z$,
 \begin{align}  \label{eq:bounds_T}
 \begin{split}
& \frac{1}{ j_1!  j_2!  j_3!}    \| D^{j_1}_{\mathcal K} D^{j_2}_{\mathcal H} D^{j_3}_{Z} \mathcal{T}(\mathcal K, \mathcal H,  Z)
 (\dot{\mathcal K}, \ldots \dot{\mathcal H}, \ldots,  \dot{Z}) \|_{\mathcal Z}
 \le  \overline{C}_{j_1, j_2, j_3}(L,A) \,  \| \dot{\mathcal K}\|^{j_1}_{\zeta} 
 \,   \|\dot{\mathcal H}\|^{j_2}_{0,0}  \,  \|\dot Z\|^{j_3}_{\mathcal Z}\\
& \text{ for all } (\mathcal K, \mathcal H, Z) \in B_{\rho_1}(0) \times B_{\rho_2}(0) \times B_{\rho(A)}(0)\text{ and }
 \end{split}
 \end{align}
 \begin{align} \label{eq:range_q_of_mcH}
 \boldsymbol q(\mathcal H) \in B_{\kappa}(0)  \text{ for all } \mathcal H \in B_{\rho_2}(0).
 \end{align}

Moreover there exist $\epsilon = \epsilon (L,h,A) >0$, 
 $\epsilon_1 = \epsilon_1(L,h,A) > 0$, $\epsilon_2 = \epsilon_2 (L,h,A) > 0$, and $C_{j_1,j_2}(L,A)>0$ such that,
for each $(\mathcal K, \mathcal H) \in B_{\epsilon_1}(0) \times B_{\epsilon_2}(0)$, there exists a unique 
$Z = \widehat Z(\mathcal K, \mathcal H)$ in $B_{\epsilon}(0)$ that satisfies 
\begin{align}\label{firstfixedpoint}
\mathcal{T}(\mathcal{K},\mathcal{H},\hat{Z}(\mathcal{K},\mathcal{H}))=\hat{Z}(\mathcal{K},\mathcal{H}).
\end{align}
The map $\hat Z$ is smooth on  $B_{\epsilon_1}(0) \times B_{\epsilon_2}(0)$ and  satisfies the  bounds
\begin{align}  \label{firstfixedpoint_bounds}
\begin{split}
\frac{1}{j_1! j_2!}	\lVert D^{j_1}_{\mathcal K} D^{j_2}_{\mathcal H} \hat{Z}(\mathcal{K},\mathcal{H})
(\dot{\mathcal K}, \ldots, \dot{\mathcal H}) \rVert_{\mathcal Z}
& \leq  C_{j_1,j_2}(L,h,A) \, \| \dot{\mathcal K}\|^{j_1}_{\zeta} \, 
\| \dot{\mathcal H}\|^{j_2}_{0,0}  \\
&   \text{ for all }\  (\mathcal K, \mathcal H) \in B_{\epsilon_1}(0) \times B_{\epsilon_2}(0).
\end{split}
\end{align}
\end{theorem}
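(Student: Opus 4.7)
The plan is to combine the smoothness and contraction estimates for the renormalisation maps $\boldsymbol{T}_k$ (Theorems~\ref{prop:smoothnessofS} and \ref{prop:contractivity}) into a joint smoothness statement for $\mathcal{T}$ on $\boldsymbol{E} \times M_0(\mathcal{B}_0) \times \mathcal{Z}$, observe that the linearisation $D_Z \mathcal{T}(0,0,0)$ is a contraction on $\mathcal{Z}$, and then apply the implicit function theorem to produce $\widehat{Z}$.

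First I would fix $\rho_2 = \rho_2(L) > 0$ so that $|\boldsymbol{q}(\mathcal{H})| < \kappa(L)$ whenever $\|\mathcal{H}\|_{0,0} \le \rho_2$; this is immediate from the definition of $\boldsymbol{q}(\mathcal{H})$ as a linear map extracting the quadratic coefficients of $\mathcal{H}$, and already gives \eqref{eq:range_q_of_mcH}. Next I would establish the smoothness bound \eqref{eq:bounds_T} component by component. The $\tilde{H}_k$ components involve only $(\boldsymbol{A}_k^{(\boldsymbol{q})})^{-1}$ and $\boldsymbol{B}_k^{(\boldsymbol{q})}$, which are uniformly bounded and have bounded $\boldsymbol{q}$-derivatives by Lemma~\ref{le:Aq} and Corollary~\ref{cor:bdAB}. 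The $\tilde{K}_{k+1}$ components are $\boldsymbol{S}_k(H_k, K_k, \boldsymbol{q}(\mathcal{H}))$, uniformly smooth in $k$ and $N$ by Theorem~\ref{prop:smoothnessofS}, provided the triple stays in the neighbourhood $\mathcal{U}_{\rho(A), \kappa(L)}$. For the initial map $\widehat{K}_0(\mathcal{K}, \mathcal{H})(X, \varphi) = e^{-\mathcal{H}(X, \varphi)} \prod_{x \in X} \mathcal{K}(D\varphi(x))$, smoothness into $(M(\mathcal{P}_0^c), \|\cdot\|_0^{(A)})$ on a neighbourhood of the origin follows from Lemma~\ref{le:smoothness_exp} for the exponential factor, combined with the product property of the $T_\varphi$-norm (Lemma~\ref{le:norms_pointwise}) for the product over $x \in X$, submultiplicativity of the weak norm (Lemma~\ref{le:submult}) for passing from single sites to polymers, and the factorisation of $w_0^X$ over strictly disjoint polymers (Theorem~\ref{th:weights_final}\ref{w:w3}). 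Taking the weighted maximum over $k$ built into $\|\cdot\|_{\mathcal{Z}}$ then yields \eqref{eq:bounds_T} with constants independent of $N$.

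For the fixed-point step I would first observe that $\mathcal{T}(0, \mathcal{H}, 0) = 0$ for every $\mathcal{H}$, since $\widehat{K}_0(0, \mathcal{H}) = 0$ and $\boldsymbol{S}_k(0, 0, \boldsymbol{q}) = 0$. By Theorem~\ref{prop:contractivity}, $\partial_{H_k} \boldsymbol{S}_k(0, 0, \boldsymbol{q}) = 0$, so $\mathcal{M} := D_Z \mathcal{T}(0, 0, 0)$ acts by $(\mathcal{M}\dot{Z})_{H_k} = \boldsymbol{A}_k^{-1}(\dot{H}_{k+1} - \boldsymbol{B}_k \dot{K}_k)$ with the convention $\dot{H}_N = 0$, and $(\mathcal{M}\dot{Z})_{K_{k+1}} = \boldsymbol{C}_k \dot{K}_k$ for $k \ge 1$ with $(\mathcal{M}\dot{Z})_{K_1} = 0$. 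Inserting the bounds $\|\boldsymbol{A}_k^{-1}\| \le 3/4$, $\|\boldsymbol{B}_k\| \le 1/3$, $\|\boldsymbol{C}_k\| \le \tfrac{3}{4}\eta$ together with $\eta \le 2/3$ into the weighted norm of $\mathcal{Z}$ gives $\|\mathcal{M}\|_{\mathcal{Z} \to \mathcal{Z}} \le 3/4$; explicitly, the $H_k$-component is bounded by $\tfrac{3}{4}\eta \cdot \|\dot{Z}\|_{\mathcal{Z}} + \tfrac{1}{4} \cdot \|\dot{Z}\|_{\mathcal{Z}}$ and the $K_{k+1}$-component by $\tfrac{3}{4}\|\dot{Z}\|_{\mathcal{Z}}$. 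Hence $I - \mathcal{M}$ is invertible on $\mathcal{Z}$ with $\|(I - \mathcal{M})^{-1}\| \le 4$, and the implicit function theorem applied to $F(\mathcal{K}, \mathcal{H}, Z) := Z - \mathcal{T}(\mathcal{K}, \mathcal{H}, Z)$ at $(0, 0, 0)$ provides a smooth map $\widehat{Z}$ on some $B_{\epsilon_1} \times B_{\epsilon_2}$ satisfying \eqref{firstfixedpoint}. Uniqueness in $B_\epsilon(0) \subset \mathcal{Z}$ follows from the contraction estimate on an $\epsilon$-ball, and the higher-derivative bounds \eqref{firstfixedpoint_bounds} follow by successively differentiating the identity $\widehat{Z} = \mathcal{T}(\mathcal{K}, \mathcal{H}, \widehat{Z})$, inverting $I - D_Z \mathcal{T}$ at each order, and combining with \eqref{eq:bounds_T} via the Fa\`a di Bruno formula.

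The main obstacle I expect is the embedding step for $\widehat{K}_0$: proving that $(\mathcal{K}, \mathcal{H}) \mapsto \widehat{K}_0(\mathcal{K}, \mathcal{H})$ is smooth as a map from a neighbourhood in $\boldsymbol{E} \times M_0(\mathcal{B}_0)$ into $B_{\rho(A)}(0) \subset (M(\mathcal{P}_0^c), \|\cdot\|_0^{(A)})$, with $N$-independent derivative bounds. This requires matching the exponential weight $e^{-(1-\zeta)\Qscr(z)/2}$ defining the $\boldsymbol{E}$-norm with the polymer regulator $w_0^X$ and absorbing the prefactor $A^{|X|_0}$ in $\|\cdot\|_0^{(A)}$ into the smallness of $\mathcal{K}$; this is the point at which the lower bound \eqref{eq:conditions_L_h_propfixedpoint} on $h$ enters, ensuring the regulators are compatible and allowing the strong-weight factor produced by Lemma~\ref{le:smoothness_exp} applied to $e^{-\mathcal{H}}$ to be controlled by $w_0^X$ on each single site.
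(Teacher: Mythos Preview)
Your proposal is correct and follows essentially the same route as the paper: establish smoothness of $\mathcal{T}$ component by component using Theorems~\ref{prop:smoothnessofS} and~\ref{prop:contractivity}, verify $\|D_Z\mathcal{T}(0,0,0)\|_{\mathcal{Z}\to\mathcal{Z}}\le 3/4$ from the bounds on $\boldsymbol{A}_k^{-1}$, $\boldsymbol{B}_k$, $\boldsymbol{C}_k$ exactly as you do, and apply the implicit function theorem to $Z-\mathcal{T}(\mathcal{K},\mathcal{H},Z)$. Your identification of the $\widehat{K}_0$ embedding as the main technical point is also right; the paper handles it by introducing an intermediate weight $w_{-1:0}^X(\varphi)=\exp\bigl(\tfrac12(1-\zeta)\sum_{x\in X}\Qscr(D\varphi(x))\bigr)$ and factoring $\widehat{K}_0=P_4(I(\mathcal{K}),E(\mathcal{H}))$ through the corresponding norm $\|\cdot\|_{-1:0}^{(4A)}$, which is the concrete device that makes your ``match the $\boldsymbol{E}$-weight with $w_0^X$'' step go through and is exactly where the condition $h\ge\delta(L)^{-1/2}$ is used.
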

An explicit choice of diameters $\rho_1$ and $\rho_2$ is
 \begin{equation} \label{eq:smoothness_mcT_neighbourhood}
 \rho_1(h,A) = \frac{\rho(A)}{2^{R_0 dr_0 +3} h^{r_0} A}  \text{ and }
 \rho_2(L) = \min\Bigl( \frac18, \frac{\kappa(L)}{C_{   \eqref{eq:bound_mcH_to_q} }(m,d)}\Bigr).
 \end{equation}	
The parameters $\epsilon$, $\epsilon_1$ and $\epsilon_2$ can be bounded from below by
$\rho_1, \rho_2, \rho(A)$, and the bounds on the first and second derivatives of $\mathcal T$.
We may take
\begin{align} \label{eq:conditions_eps_i_A} 
\begin{split}
&\epsilon = \min\left(    \frac{1}{48
\overline{C}_{0,0,2}}, \frac{\rho(A)}{2}   \right),  \quad
\epsilon_1 = \min\left(\frac{1}{24 \overline{C}_{1,0,1}}, \frac{\epsilon}{8 
 \overline{C}_{1,0,0}},\rho_1    \right), \quad
 \\ &
 \epsilon_2 = \min\left(\frac{1}{24 
\overline{C}_{0,1,1}}, \rho_2    \right)
\end{split}
\end{align}
where $  \overline{C}_{j_1, j_2, j_3}$ are the constants in  \eqref{eq:bounds_T}.
\medskip

The condition  \eqref{eq:conditions_L_h_propfixedpoint} is implied by the conditions we use to prove
Theorem~\ref{prop:smoothnessofS}  and Theorem~\ref{prop:contractivity}. We added it here since in principle
the conclusions of these theorems might hold under weaker conditions on $L$ and $h$.
Condition  \eqref{eq:conditions_L_h_propfixedpoint} is used in Lemma~\ref{le:smoothness_K0}
which ensures smoothness of the map $(\mathcal K, \mathcal H) \mapsto K_0$.

\begin{proof}[Proof of Theorem~\ref{propfixedpoint}; Set-up]
The proof is mostly along the lines of the proof of Proposition 8.1 in \cite{AKM16}.
The situation here is, however, much simpler than in   \cite{AKM16} because   no loss of regularity
occurs when we take derivative with respect to $\boldsymbol q$ (or $\mathcal H$).
Thus we can use the usual implicit function theorem in Banach spaces
which can be found, e.g., in   Chapter X.2 in \cite{Die60} or Theorem 4.E. in \cite{Zei95}.
To apply the implicit function theorem we verify its assumptions.
	
Here Theorem~\ref{prop:smoothnessofS}  and Theorem~\ref{prop:contractivity} are the 
key ingredients. The first result gives smoothness of the maps $\tilde K_k$ (except for $k=1$)
while Theorem~\ref{prop:contractivity}
will be used to show that the derivatives of $\mathcal{T}$ are small. 
Then we can apply the implicit function theorem to the map $\mathcal{T}-\pi_3$ where $\pi_3$
is the projection on the third component. The main remaining  point in showing smoothness of the 
map $\mathcal T$ is to show smoothness of the maps $(\mathcal K, \mathcal H) \mapsto K_0$. 
We first state and prove this result. Then we will continue the proof of Theorem~\ref{propfixedpoint}.
\end{proof}
	
\begin{lemma} \label{le:smoothness_K0}
Assume that $L\geq 5$ and
\begin{equation}  \label{eq:K0_restriction_on_h}
h \ge \max(\delta(L)^{-1/2}, 1   )
\end{equation}
where $\delta(L)$ is the constant defined  in \eqref{eq:defofdelta}.
Set 
\begin{align} \label{eq:rho_for_K0}
\rho_1 = (2^{R_0  d r_0 + 3} h^{r_0} A)^{-1}, \quad \rho_2 = \frac18. 
\end{align}
Then  the map $\widehat{K}_0:(\boldsymbol{E},\lVert\cdot\rVert_\zeta)\times(M_0(\mathcal{B}_0),\lVert\cdot\rVert_{0,0})\rightarrow
M(\mathcal{P}^{\rm c}_0),\lVert\cdot\rVert_{0})$  defined in \eqref{defofk0} is smooth on  $B_{\rho_1}(0)\times B_{\rho_2}(0)$ 
and there exist numerical constants $C_{j_2}$ and $C_{j_1, j_2}$ such that
\begin{align}\begin{split} \label{eq:bound_K0_ell0}
&\frac{1}{j_2!} \| D^{j_2}_{\mathcal H} \widehat K_0(\mathcal K, \mathcal H)(\dot{\mathcal H}, \ldots, \dot{\mathcal H})\|_0^{(A)} 
\\ &\qquad
\le C_{j_2} \,  2^{R_0 d r_0 +2} h^{r_0} A \, \| \mathcal K\|_{\zeta} \,  \, \| \dot{\mathcal H}\|^{j_2}
\text{ for all } (\mathcal K, \mathcal H) \in B_{\rho_1}(0) \times B_{\rho_2}(0), 
\end{split}
\end{align}  
with 
\begin{align}   \label{eq:bound_K0_ell0_bis }
C_0 = 1
\end{align}
and, for $j_1 \ge 1$,
\begin{align}
\begin{split}
&\| D^{j_1}_{\mathcal K} D^{j_2}_{\mathcal H} \widehat K_0(\mathcal K, \mathcal H)(\dot{\mathcal K}, \ldots \dot{\mathcal H})\|_0^{(A)}
\\ &\qquad \le  C_{j_1, j_2} (2^{R_0 d r_0 +3} h^{r_0} A)^{j_1}       \,  
 \|\dot{ \mathcal K}\|_{\zeta}^{j_1} \, \| \dot{\mathcal H}\|^{j_2}  
 \text{ for all } (\mathcal K, \mathcal H) \in B_{\rho_1}(0) \times B_{\rho_2}(0).   
\end{split}
\end{align}
\end{lemma}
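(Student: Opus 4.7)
The strategy is to exploit the multiplicative structure
\[
\widehat K_0(X,\varphi)=\prod_{x\in X}F(\{x\},\varphi),\qquad F(\{x\},\varphi):=e^{-\mathcal{H}(\{x\},\varphi)}\,\mathcal{K}(D\varphi(x)),
\]
which relies on the additivity $\mathcal{H}(B,\varphi)=\sum_{x\in B}\mathcal{H}(\{x\},\varphi)$ from \eqref{eq:description_relevant_hamiltonian}, and to reduce the global estimate to a single-block estimate. Combining the product property of the Taylor norm (Lemma~\ref{le:norms_pointwise}) with the monotonicity $|F(\{x\})|_{0,X,T_\varphi}\le |F(\{x\})|_{0,\{x\},T_\varphi}$ from Lemma~\ref{le:submultofsimplenorm} gives
\[
|\widehat K_0(X)|_{0,X,T_\varphi}\le \prod_{x\in X}|e^{-\mathcal{H}(\{x\})}|_{0,\{x\},T_\varphi}\,|\mathcal{K}(D\varphi(x))|_{0,\{x\},T_\varphi}.
\]

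The main technical step, and the principal obstacle, is a chain-rule estimate for the $\mathcal{K}$-factor. Using the characterisation $|\mathcal{K}|_{T_z}=\sum_{|\alpha|\le r_0}\frac{1}{\alpha!}|\partial^\alpha\mathcal{K}(z)|$ on $\Gcal$ (recalled in the proof of Proposition~\ref{prop:embedding}) and the definition of $\|\mathcal{K}\|_\zeta$, one has $|\mathcal{K}|_{T_z}\le \|\mathcal{K}\|_\zeta e^{\frac{1}{2}(1-\zeta)\Qscr(z)}$. Differentiating the composition $\mathcal{K}\circ D$ at $\varphi$ in directions $\dot\varphi_1,\dots,\dot\varphi_r$ produces, by linearity of $D$, a sum of terms of the form $\partial^\gamma\mathcal{K}(D\varphi(x))\cdot\prod_{\ell=1}^r\nabla^{\alpha_\ell}\dot\varphi_{i_\ell}(x)$ with $\alpha_\ell\in\Ical$. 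Since at scale $0$ the field weight $\wpzc_0(\alpha)$ equals $h$ and every $\alpha\in\Ical$ satisfies $|\alpha|_\infty\le R_0$, iterated forward differences give $|\nabla^{\alpha_\ell}\dot\varphi(x)|\le 2^{|\alpha_\ell|_1}h\,|\dot\varphi|_{0,\{x\}}\le 2^{R_0 d}h\,|\dot\varphi|_{0,\{x\}}$. Collecting $r_0$ such factors yields
\[
|\mathcal{K}(D\varphi(x))|_{0,\{x\},T_\varphi}\le (2^{R_0 d}h)^{r_0}\,\|\mathcal{K}\|_\zeta\, e^{\frac{1}{2}(1-\zeta)\Qscr(D\varphi(x))}.
\]
For the Hamiltonian factor, Lemma~\ref{le:smoothness_exp} with $\|\mathcal{H}\|_{0,0}\le\rho_2=\tfrac18$ gives $|e^{-\mathcal{H}(\{x\})}|_{0,\{x\},T_\varphi}\le 2\,W_0^{\{x\}}(\varphi)$.

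The second delicate point is weight comparison. Multiplying the block estimates over $x\in X$, the product $e^{\frac{1}{2}(1-\zeta)\sum_x\Qscr(D\varphi(x))}$ cancels exactly the quadratic contribution to $w_0^X$ in \eqref{eq:defAk} (using $\weightzeta=\zeta/4$, so $1-4\weightzeta=1-\zeta$), while the strong weight factorises as $W_0^X(\varphi)=\prod_x W_0^{\{x\}}(\varphi)$ because $\boldsymbol{1}_X$, and therefore $\boldsymbol G_0^X$, is additive in disjoint $X$. It remains to verify $W_0^X(\varphi)\le \exp\!\bigl(\tfrac{\delta_0}{2}(\varphi,\boldsymbol M_0^X\varphi)\bigr)$, i.e.\ the operator inequality $\boldsymbol G_0^X\le\delta_0\boldsymbol M_0^X$; this holds because $\boldsymbol G_0^X=h^{-2}\sum_{1\le|\alpha|\le\lfloor d/2\rfloor+1}(\nabla^\alpha)^*\boldsymbol 1_X\nabla^\alpha$ and $\boldsymbol M_0^X=\sum_{1\le|\alpha|\le M}(\nabla^\alpha)^*\chi_X\nabla^\alpha$ with $\pphi\le M$, $\boldsymbol 1_X\le \chi_X$, and $h\ge \delta_0^{-1/2}$ from \eqref{eq:conditions_L_h_propfixedpoint}. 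Combining everything and multiplying by $A^{|X|}$ gives a bound of the form $(C\,h^{r_0}A\,\|\mathcal{K}\|_\zeta)^{|X|}$ with $C\le 2^{R_0 d r_0+2}$; for $\|\mathcal{K}\|_\zeta<\rho_1$ this geometric expression is maximised at $|X|=1$ and delivers \eqref{eq:bound_K0_ell0} with $C_0=1$.

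For the higher derivatives I would note that $\widehat K_0(X,\cdot)$ is a polynomial of degree $|X|$ in $\mathcal{K}$ and depends smoothly on $\mathcal{H}$ through Lemma~\ref{le:smoothness_exp}. The Leibniz rule applied to $D^{j_1}_\mathcal{K} D^{j_2}_\mathcal{H}\widehat K_0(X)$ produces $\binom{|X|}{j_1}$ terms in which $j_1$ of the $\mathcal{K}$-factors are replaced by $\dot{\mathcal{K}}$ and $j_2$ derivatives fall on the exponential, each estimated by exactly the same single-block analysis (with the bound on $D^{j_2}_\mathcal{H} e^{-\mathcal H}$ furnished by Lemma~\ref{le:smoothness_exp}). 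The resulting expression in $|X|$ is absolutely convergent because $2^{R_0 d r_0+3}h^{r_0}A\,\|\mathcal{K}\|_\zeta<1$ on $B_{\rho_1}(0)$, and after absorbing the combinatorial factors $\binom{|X|}{j_1}$ into polynomial prefactors one obtains the stated constants $C_{j_1,j_2}$.
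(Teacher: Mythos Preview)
Your approach is correct and matches the paper's proof in all essential respects. The paper formalises the same argument by writing $\widehat K_0(\mathcal K,\mathcal H)=P_4(I(\mathcal K),E(-\mathcal H))$, introducing an auxiliary weight $w_{-1:0}^X(\varphi)=\exp\bigl(\tfrac12(1-\zeta)\sum_{x\in X}\Qscr(D\varphi(x))\bigr)$ with corresponding norm $\|\cdot\|_{-1:0}^{(4A)}$, and proving smoothness of $I$, $E$, $P_4$ separately (your chain-rule estimate for $\mathcal K\circ D$ is exactly their Lemma for $I$, your weight comparison $\boldsymbol G_0^X\le\delta_0\boldsymbol M_0^X$ is exactly their Lemma for $P_4$); you do the same computations directly at the pointwise level without naming the intermediate maps, which is slightly more economical but otherwise identical.
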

	
To prove this lemma we decompose $K_0$ into a series of maps and show smoothness for each of them.
Then the chain rule implies the claim.
It is convenient to introduce  the weight function
\begin{align}\label{eq:w-1}
w_{-1:0}^X(\p)=\exp\bigl(\tfrac12(1-\zeta)\sum_{x\in X}    \Qscr(D\p(x))  \bigr)
\end{align}
and define $\lVert\cdot\rVert_{-1:0}^{(4A)}$ as in \eqref{globalweaknorm} and \eqref{middlenorm}.
We can write $K_0(\mathcal{K},\mathcal{H})=P_4(I(\mathcal{K}),E(\mathcal{H}))$, 
where $E$ is the exponential defined in  \eqref{eq:exponential_map_def} 
and where the inclusion map  $I$ and the product map $P_4$ are given by
\begin{align}\label{eq:defofI}
I   & :(\boldsymbol{E},\lVert\cdot\rVert_{\zeta})\rightarrow (M(\mathcal{P}_0^c),\lVert \cdot\rVert_{-1:0}^{(4A)}),\; 
I(\mathcal{K})(X,\p)=\prod_{x\in X}\mathcal{K}(D\p(x)),\\ \label{eq:defofP4}
P_4 & :(M(\mathcal{P}_0^c),\lVert \cdot\rVert_{-1:0}^{(4A)})
\times (M(\mathcal{B}_0),\opnorm{\cdot}_0)            
\rightarrow (M(\mathcal{P}_0^c),\,\lVert \cdot\rVert_{0}^{(A)})), 
\\ 
 P_4&(K,F)(X,\p)=K(X,\p)F^X(\p).
\end{align}
Smoothness of $E$ was established in Lemma~\ref{le:smoothness_exp}.
We will now show smoothness of $I$ and of $P_4$
in Lemma~\ref{le:I} and Lemma~\ref{le:smoothness_P4}, respectively,  and then conclude the proof of
Lemma~\ref{le:smoothness_K0}.
		
\begin{lemma}\label{le:I}
Let  $I$ be the map defined in \eqref{eq:defofI}. Assume that
\begin{equation} 
\rho_1 \le (2^{R_0 d r_0+3} h^{r_0} A)^{-1}   \text{ and } h \ge 1.
\end{equation}
Then $I$ is smooth on  $B_{\rho_1}(0)\subset \boldsymbol{E}$ and, for all $\mathcal K \in B_{\rho_1}(0)$,
\begin{align}  \label{eq:bound_I_ell0}
\| I(\mathcal K) \|_{-1:0}^{(4A)} \le 2^{R_0  d r_0 +2} h^{r_0} A \, \| \mathcal K\|_{ \zeta},
\end{align}
\begin{equation}  \label{eq:bound_I_ell}
\frac{1}{j!}\lVert D^jI(\mathcal{K})(\dot{\mathcal K}, \ldots, \dot{\mathcal K}) \rVert^{(4A)}_{-1:0}\le  
(2^{R_0 dr_0 + 3} \, h^{r_0} A)^j  \,  \|\dot{\mathcal K}\|^j_{\zeta}.
\end{equation}
\end{lemma}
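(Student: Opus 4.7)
The strategy is to reduce everything to a single-point chain-rule estimate and then assemble the global bound by the product estimate from Lemma~\ref{le:norms_pointwise} together with the choice of regulator~$w_{-1:0}^X$, which is tailor-made so that the exponential weight in $\lVert \mathcal K\rVert_\zeta$ cancels exactly. Specifically, the first observation is that the field $\p \mapsto \mathcal K(D\p(x))$ factors as $\mathcal K \circ L_x$, where $L_x:\Vcal_N\to\Gcal$ is the linear map $L_x\p = D\p(x)$. Since Taylor expansion commutes with composition by a linear map (modulo truncation at order $r_0$), the Taylor polynomial of $\mathcal K\circ L_x$ at $\p$ is obtained from the Taylor polynomial of $\mathcal K$ at $D\p(x)$ by pulling back test functions through $L_x$.

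The main technical step, and the only real obstacle, is the single-point bound
\begin{equation*}
|\mathcal K(D\p(\cdot))|_{0,\{x\},T_\p} \le C\, h^{r_0}\sum_{|\alpha|\le r_0}\frac{1}{\alpha!}|\partial^\alpha \mathcal K(D\p(x))|,
\end{equation*}
where the constant $C$ depends only on $d$, $m$, $R_0$, and $r_0$. This is where the factor $h^{r_0}$ enters: a test function $g^{(r)}\in\Vcal_N^{\otimes r}$ with $|g^{(r)}|_{0,\{x\}}\le 1$ satisfies $|\nabla^{\alpha_1}\otimes\cdots\otimes\nabla^{\alpha_r}g^{(r)}(x_1,\dots,x_r)|\le h^r$ (because $\wpzc_0(\alpha)=h$ at scale $0$), and the chain rule shows that the derivative of order $r$ of $\mathcal K\circ L_x$ pairs such a test function with a combination of partial derivatives of $\mathcal K$ at $D\p(x)$. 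The combinatorial factors coming from (i) counting multi-indices on $\Ical\times\{1,\dots,m\}$, (ii) distributing derivatives among the $r$ slots, and (iii) the constraint $|\alpha|_\infty\le R_0$ on elements of $\Ical$ are all controlled crudely by $2^{R_0 d r_0}$, which is precisely the combinatorial factor that appears in $\rho_1^{-1}=2^{R_0 dr_0+3}h^{r_0}A$. Using $h\ge 1$ to absorb lower powers of $h$ into $h^{r_0}$, this gives
\begin{equation*}
|\mathcal K(D\p(\cdot))|_{0,\{x\},T_\p} \le C\,h^{r_0}\lVert \mathcal K\rVert_\zeta\, e^{\frac12(1-\zeta)\Qscr(D\p(x))}.
\end{equation*}

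The remainder of the argument is routine. By the product estimate \eqref{eq:product_estimate_concrete} applied pointwise, together with the monotonicity $|\mathcal K(D\p(\cdot))|_{0,X,T_\p}\le|\mathcal K(D\p(\cdot))|_{0,\{x\},T_\p}$ (which follows from $X^\ast\supset\{x\}^\ast$ whenever $x\in X$), we obtain
\begin{equation*}
|I(\mathcal K)(X)|_{0,X,T_\p} \le \prod_{x\in X}|\mathcal K(D\p(\cdot))|_{0,\{x\},T_\p} \le \bigl(C\,h^{r_0}\lVert \mathcal K\rVert_\zeta\bigr)^{|X|_0}w_{-1:0}^X(\p),
\end{equation*}
so that $\lVert I(\mathcal K)(X)\rVert_{-1:0,X}\le(C\,h^{r_0}\lVert \mathcal K\rVert_\zeta)^{|X|_0}$, and multiplying by $(4A)^{|X|_0}$ and taking the supremum over connected $X$ yields \eqref{eq:bound_I_ell0}, the critical $|X|_0$ being $1$ once $\lVert\mathcal K\rVert_\zeta\le\rho_1$ with $\rho_1$ small enough to force $4AC\,h^{r_0}\lVert \mathcal K\rVert_\zeta\le\tfrac12$.

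For the $j$-th Fr\'echet derivative, explicit differentiation of the product gives
\begin{equation*}
\frac{1}{j!}D^j I(\mathcal K)(\dot{\mathcal K},\dots,\dot{\mathcal K})(X,\p) = \sum_{\substack{Y\subset X\\|Y|=j}}\prod_{y\in Y}\dot{\mathcal K}(D\p(y))\prod_{x\in X\setminus Y}\mathcal K(D\p(x)).
\end{equation*}
Applying the single-point estimate factor by factor, bounding $\binom{|X|_0}{j}\le 2^{|X|_0}$, and using $\lVert\mathcal K\rVert_\zeta\le\rho_1$ to kill the factors involving $\mathcal K$ yields, after the same sup-over-$X$ argument, the estimate \eqref{eq:bound_I_ell}. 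Smoothness (continuity of all Fr\'echet derivatives) follows along the lines of Lemma~\ref{le:criterion_diff_zeta}, since the partial sums defining the derivative converge uniformly in $\p$ with the exponential weight.
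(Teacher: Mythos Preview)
Your proposal is correct and follows essentially the same approach as the paper: a single-point estimate for $|\mathcal K(D\p(\cdot))|_{0,\{x\},T_\p}$ via the Taylor expansion in monomials $\Mscr_\mpzc(\{x\})$, extended to general $X$ by the product property and the factoring weight $w_{-1:0}^X$, and then the derivative bound handled exactly as in Lemma~\ref{le:P3} with $\binom{|X|}{j}\le 2^{|X|}$. The paper makes the combinatorial constant $2^{R_0 d r_0}$ more transparent by writing $|\nabla^\alpha\dot\psi(x)|\le 2^{|\alpha|-1}\sup_{y\in\{x\}^\ast}|\nabla_i\dot\psi(y)|\le 2^{R_0 d}h\,|\dot\psi|_{0,\{x\}}$ (since $|\alpha|_1\le R_0 d$ for $\alpha\in\Ical$), but your chain-rule framing via $\mathcal K\circ L_x$ is equivalent; the reference to Lemma~\ref{le:criterion_diff_zeta} for smoothness is unnecessary, as the uniform derivative bounds you establish already yield Fr\'echet differentiability by the standard argument.
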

\begin{remark}  \label{re:h-dependence}
We could avoid  $h$-dependence of the constants and neighbourhoods
here and in all other statements in  this chapter as well as in Theorem~\ref{maintheorem}
if we work with the norm
\begin{align}  
\| \mathcal K\|_{\zeta, h} := \sup_{z\in \Gcal} \sum_{|\alpha|\leq r_0} \frac1{\alpha !} h^{|\alpha|}  \abs{\partial^\alpha 
K(z)}e^{-\frac{1}{2}(1-\zeta)\Qscr_0(z)},
\end{align}
This gives slightly better results, because, roughly speaking our current setting leads to 
conditions of the type  '$h^{r_0} \| \mathcal K\|_\zeta$ is small' while it suffices that
$\| \mathcal K\|_{\zeta,h}$ is small which is a weaker condition on the low derivatives of $\mathcal K$. 
We prefer, however,   to keep the notation in Chapter~\ref{sec:setting}
simple and not to introduce a  more complicated norm with another parameter.
\end{remark}
		
\begin{proof}
Note that the functional $I(\mathcal K)$ is  translation invariant, shift invariant and local.
Thus $I(\mathcal K)$ is an element of  $M(\mathcal{P}_0^c)$.
We  first estimate $|I(\mathcal{K})(\{x\})|_{0,\{x\},T_\p}$.
Let us introduce the set  $ \Ical_m = \{1, \ldots, m\}\times \Ical $
where we recall that $\Ical\subset\{0,\ldots,R_0\}^d\setminus \{0,\ldots,0\}$ (see \eqref{eq:def_Ical}).
We consider multiindices $\gamma\in \mathbb{N}_0^{\Ical_m}$.
Recall that for    $\mpzc=(i,\alpha)\in   \Ical_m$ we defined the monomials
\begin{equation} \label{eq:I_fine_tuning_Mm}
\Mscr_\mpzc(\{x\})(\dot{\psi}) := \nabla^\mpzc\dot{\psi}(x) := \nabla^\alpha\dot{\psi}_i(x).
\end{equation}
The Taylor expansion of order $r_0$ of  $I(\mathcal{K})(\{x\})$ is given by
\begin{align}
\tay_\p I(\mathcal{K})(\{x\})(\dot{\psi})=\sum_{|\gamma|\leq r_0}\frac{1}{\gamma!}
\partial^\gamma \mathcal{K}(D\p(x)) \prod_{\mpzc\in \mathcal{I}_m} (\nabla^\mpzc\dot{\psi}(x))^{\gamma_\mpzc}.
\end{align}
Hence we have 
\begin{align}
\tay_\p I(\mathcal{K})(\{x\}) = \sum_{|\gamma|\leq r_0}\frac{1}{\gamma!}
\partial^\gamma \mathcal{K}(D\p(x)) \prod_{\mpzc\in \mathcal{I}_m} ( \Mscr_\mpzc(\{x\})  )^{\gamma_\mpzc}.
\end{align}
The triangle inequality and the product property in  Lemma~\ref{le:norms_pointwise} imply
\begin{align}
\begin{split}\label{eq:K0:1_new}
|I(\mathcal{K})(\{x\})|_{0,\{x\},T_\p}&\leq 
\sum_{|\gamma|\leq r_0}\frac{1}{\gamma!}
\abs{\partial^\gamma  \mathcal{K}(D\p(x))}   
\, \, \Bigl|\prod_{\mpzc\in \mathcal{I}_m}( \Mscr_\mpzc(\{x\})  )^{\gamma_\mpzc}\Bigr|_{0,\{x\},T_0} \\
&\leq \sum_{|\gamma|\leq r_0}\frac{1}{\gamma!} 
\abs{\partial^\gamma  \mathcal{K}(D\p(x))} \, \, \, \,  \prod_{\mpzc\in \mathcal{I}_m} 
\abs*{ \Mscr_\mpzc(\{x\})}^{\gamma_\mpzc}_{0,\{x\},T_0}.
\end{split}	
\end{align}		
	
Next, we give a crude estimate for  $\bigl|  \Mscr_\mpzc(\{x\}) \bigr|_{0,\{x\},T_0}$.
The definition of $\{x\}^\ast=\{x\}+[-R,R]^d$ ensures that  the reiterated difference quotient
$\nabla^\alpha \dot{\p}(x)$ for $\alpha\in \Ical$ can be written as a linear combination
of values $\nabla_i\p(y)$ with $y\in \{x\}^\ast$ involving  at most  $2^{|\alpha|-1}$ terms.
Using an induction argument we easily see that, for $\alpha \in \Ical$,
\begin{align}\label{eq:K0:2_new}
|\nabla^\alpha_i \dot{\psi}(x)|\leq 2^{|\alpha|-1} \, \sup_{y\in \{x\}^\ast}|\nabla_i\dot{\psi}(y)| \leq 2^{R_0 d}\,  h \,  |\dot{\psi}|_{0,\{x\}},                                                  
\end{align}
where we used the definition  \eqref{eq:primal_norm}  of $| \cdot|_{0, X}$
and the fact that for $j=0$ the weights
in   \eqref{eq:definition_weights} reduce to $\wpzc_0(i,\alpha)  = h$. 
Now, the bound \eqref{eq:K0:2_new}
and the definition of the norm $| \cdot|_{0,\{x\}, T_0}$  by duality yield the estimate
\begin{align}
\abs{ \Mscr_\mpzc(\{x\}) }_{0,\{x\},T_0} \le  2^{R_0 d}\,  h.
\end{align}

From \eqref{eq:K0:1_new}, \eqref{eq:K0:2_new}, the condition $h \ge 1$, and the definition \eqref{eq:normE}, we infer that
\begin{align}
\frac{|I(\mathcal{K})(\{x\})|_{0,\{x\},T_\p}}{w_{-1:0}^{\{x\}}(\p)}\leq 2^{R_0 dr_0} \, h^{r_0} \, \lVert \mathcal{K}\rVert_{\zeta}
 \frac{e^{\frac12(1-\zeta) \Qscr_0(D\p(x))}}{e^{\frac12(1-\zeta) \Qscr_0(D\p(x))}}   
= 2^{R_0 dr_0} \, 	h^{r_0}\lVert \mathcal{K}\rVert_{\zeta}.
 \end{align}
 Given that $w_{-1,0}^X$ factors over any polymer, 
 the submultiplicativity estimate  \eqref{eq:product_estimate_concrete} combined with the trivial
estimate $| \cdot |_{0, X, T_\p} \le |\cdot |_{0, x, T_\p}$ whenever $x \in X$  implies that
\begin{align}
\frac{|  I(\mathcal{K})(X)|_{0,X,T_\p}}{w_{-1:0}^{X}(\p)}\leq 
 ( 2^{R_0 dr_0} \, h^{r_0})^{|X|} \, \lVert \mathcal{K}\rVert_{\zeta}^{|X|}.
\end{align}
Thus, for $ \rho_1 \le  (2^{R_0 dr_0} \, h^{r_0} 4A)^{-1}$,
\begin{align}   \label{eq:estimate_I_without_deriv}
\| I(\mathcal{K})(X) \|_{-1:0}^{(4A)}  \le  \sup_{X \in \mathcal P_0^c} \left(2^{R_0 dr_0} \, h^{r_0} 4A \,  \lVert \mathcal{K}
\rVert_{\zeta}\right)^{|X|}  \le 2^{R_0dr_0} \, h^{r_0} 4A\,   \lVert \mathcal{K}\rVert_{\zeta} \le 1.
\end{align}
This proves  \eqref{eq:bound_I_ell0}.
			
The derivatives are estimated similarly as in the proof of Lemma \ref{le:P3}.		
For $ \rho_1 \le  (2^{R_0dr_0+1} \, h^{r_0} 4A)^{-1}$ we get
\begin{align}   \label{eq:estimate_I_with_deriv}
\begin{split}
& (4A)^{|X|} \, \,   \frac{1}{j!}\lVert D^jI(\mathcal{K})(\dot{\mathcal{K}},\ldots, \dot{\mathcal{K}})(X)\rVert_{0,X} 
\le \tbinom{|X|}{j}      (2^{R_0 dr_0} \, h^{r_0} 4A)^{|X|} \, \,    \lVert \mathcal{K}\rVert_{\zeta}^{|X|-j}  \, \, 
\lVert \dot{\mathcal{K}}\rVert_{\zeta}^{j}   \\
&  \le   (2^{R_0 dr_0 + 1} \, h^{r_0} 4A)^{|X|}  \, \,   \lVert \mathcal{K}\rVert_{\zeta}^{|X|-j}  \, \, 
\lVert \dot{\mathcal{K}}\rVert_{\zeta}^{j}  \le 
\bigl(2^{R_0 dr_0 + 1} \, h^{r_0} 4A     \lVert \dot{\mathcal{K}}\rVert_{\zeta}\bigr)^{j}.
\end{split}
\end{align}
This shows that $I$ is smooth on $B_{\rho_1}(0)$ and the estimate \eqref{eq:bound_I_ell} holds.
\end{proof}
			
\begin{lemma}  \label{le:smoothness_P4}
Assume that 
\begin{equation} \label{eq:restriction_h_P4}
h \ge \max(\delta(L)^{-1/2}, 1),
\end{equation}				
where $\delta(L)$ was introduced in \eqref{eq:defofdelta}.
Then the  map $P_4$ defined in \eqref{eq:defofP4} is smooth on the set
$\boldsymbol{M}_{-1:0}^{(4A)}\times B_1(1)$ with $B_1(1)\subset (M(\mathcal{B}_0),\opnorm{\cdot}_0)$.
Moreover, on that set we  have
\begin{equation} \label{eq:P4_bound_ell0}
\| P_4(K, F)\|_0^{(A)} \le \| K\|_{-1:0}^{(4A)},
\end{equation}
\begin{align} \label{eq:P4_bound_ell}
\frac{1}{j_2!} \| D^{j_1}_K  D^{j_2}_F P_4(K, F)\|_0^{(A)} \le
\bigl(  \| K\|_{-1:0}^{(4A)}\bigr)^{1-j_1}   \bigl(  \| \dot K\|_{-1:0}^{(4A)}\bigr)^{j_1} \opnorm{\dot{F}}_0^j.
\end{align}
for $j_1 \in \{0,1\}$; the left hand side vanishes for $j _1 \ge 2$. 
\end{lemma}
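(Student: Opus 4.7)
The map $P_4(K,F)(X,\p) = K(X,\p)\,F^X(\p)$ is linear in $K$ and, for each fixed polymer $X$, a polynomial of degree $|X|_0$ in $F$. Consequently $D^{j_1}_K P_4 \equiv 0$ for $j_1 \ge 2$, and for $j_1 \in \{0,1\}$ and $j_2 \ge 0$ one has the explicit formula
\begin{align*}
\frac{1}{j_2!}\,D^{j_1}_K D^{j_2}_F P_4(K,F)(\dot K, \dot F, \ldots, \dot F)(X,\p) = \sum_{\substack{J \subset \mathcal B_0(X)\\ |J| = j_2}} \tilde K(X,\p)\, F^{X \setminus J}(\p)\, \dot F^J(\p),
\end{align*}
where $\tilde K = K$ if $j_1 = 0$ and $\tilde K = \dot K$ if $j_1 = 1$. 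The task therefore reduces to an $X$-uniform pointwise estimate of summands of this form.

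The key ingredient is the weight comparison $w_{-1:0}^X(\p)\,W_0^X(\p) \le w_0^X(\p)$, valid for all $\p$ and all $X \in \mathcal P_0$. Because $\weightzeta = \zeta/4$ by \eqref{eq:definition_weightzeta}, the operator $\boldsymbol A_0^X$ in \eqref{eq:defAk} splits as $(1-\zeta)\sum_{x\in X}\Qscr(D\cdot) + \delta_0\boldsymbol M_0^X$, so this pointwise inequality is equivalent to the operator bound $\delta_0\boldsymbol M_0^X \ge \boldsymbol G_0^X$, which is precisely the content of \eqref{eq:relationMkGk} at $k=0$ and holds under the hypothesis $h^2 \ge 1/\delta(L)$ imposed in \eqref{eq:restriction_h_P4}. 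Combining this with the submultiplicativity of the Taylor seminorm (Lemma~\ref{le:submultofsimplenorm}) and the factorization $W_0^X = \prod_{B \in \mathcal B_0(X)} W_0^B$ from Theorem~\ref{th:weights_final}\,\ref{w:w4b}, each summand admits the pointwise bound
\begin{align*}
w_0^{-X}(\p)\,|\tilde K(X)\,F^{X\setminus J}\,\dot F^J|_{0,X,T_\p} \le \|\tilde K(X)\|_{-1:0,X}\,\vertiii{F}_0^{\,|X|_0 - j_2}\,\vertiii{\dot F}_0^{\,j_2}.
\end{align*}

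To conclude, I would sum over the $\binom{|X|_0}{j_2} \le 2^{|X|_0}$ subsets $J$ with $|J| = j_2$, insert $\|\tilde K(X)\|_{-1:0,X} \le (4A)^{-|X|_0}\,\|\tilde K\|_{-1:0}^{(4A)}$, and use $\vertiii{F}_0 \le \vertiii{F-1}_0 + \vertiii{1}_0 \le 2$ for $F \in B_1(1)$. The resulting $X$-uniform weight ratio $A^{|X|_0}(4A)^{-|X|_0}\,2^{|X|_0}\,\vertiii{F}_0^{\,|X|_0 - j_2} \le 2^{-j_2}$ yields both \eqref{eq:P4_bound_ell0} and \eqref{eq:P4_bound_ell}. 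Smoothness of $P_4$ then follows since these $X$-uniform $j_2$-linear bounds define continuous Fréchet derivatives and, for each fixed $X$, the Taylor expansion of $F \mapsto F^X$ terminates after $|X|_0$ terms, the tail being controlled termwise by the same estimate. I expect the only delicate point to be the weight comparison $\delta_0\boldsymbol M_0^X \ge \boldsymbol G_0^X$: it crucially exploits the choice $M \ge \lfloor d/2\rfloor + 1$ so that $\boldsymbol M_0^X$ dominates every derivative order entering $\boldsymbol G_0^X$, together with the lower bound $h \ge \delta(L)^{-1/2}$ needed to absorb the factor $1/h_0^2$ against $\delta_0$.
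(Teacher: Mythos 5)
Your proposal is correct and follows essentially the same route as the paper: the weight comparison $w_0^X \ge w_{-1:0}^X\,W_0^X$ (reduced via $4\weightzeta=\zeta$ to $\delta_0\boldsymbol M_0^X \ge \boldsymbol G_0^X$, which is \eqref{eq:relationMkGk} at $k=0$ under $h\ge\delta(L)^{-1/2}$), submultiplicativity of the $T_\p$ seminorm, factorisation of $W_0^X$ over blocks, linearity in $K$, and the combinatorial factor $\binom{|X|_0}{j_2}\le 2^{|X|_0}$ absorbed by the ratio $A^{|X|_0}(4A)^{-|X|_0}$ together with $\opnorm{F}_0\le 2$ on $B_1(1)$. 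Your explicit derivative formula and the remark that the Taylor expansion of $F\mapsto F^X$ terminates merely make explicit what the paper leaves implicit, so no gap.
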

		
\begin{proof} 
For brevity we write $\delta$ instead of $\delta(L)$. 
It follows from the definitions of the quadratic forms  $\boldsymbol{M}_0^X$ and $\boldsymbol{G}_0^X$ in  \eqref{eq:defofMk}
and \eqref{eq:strong_weight} and assumption  \eqref{eq:restriction_h_P4} that  $\delta \boldsymbol M_0^X\geq \boldsymbol G_0^X$.
aking into account that  in \eqref{eq:defAk} we have  $\delta_0 = \delta$ and $4 \weightzeta = \zeta$
(see \eqref{eq:definition_weightzeta}) we deduce from \eqref{eq:defAk} that
\begin{align}  \label{eq:consistency_weightzeta}
(\p, \boldsymbol{A}_0^X \p) \ge (1- \zeta) \sum_{x \in X} \Qscr(D\p(x)) + (\boldsymbol{G}_0^X \p, \p).
\end{align}
Since $w_0^X(\p) = e^{\frac12 (\boldsymbol{A}_0^X \p, \p)}$ and $W_0^X(\p) = e^{\frac12 (\boldsymbol{G}_0^X \p, \p)}$ 
the definition of $w_{-1:0}^X$ in \eqref{eq:w-1} implies that 
\begin{align*}
w_0^X \ge w_{-1:0}^X W_0^X = w_{-1:0}^X \prod_{B \in \mathcal{B}_0(X)} W_0^B.
\end{align*}
Together with Lemma~\ref{le:norms_pointwise} we get 
\begin{align}
\begin{split}
\lVert P_4(K,F)(X)\rVert_{0,X}
& =\sup_{\p}\frac{|F^X(\p)K(X,\p)|_{0,X,T_\p}}{w_0^X(\p)}                                                     
\\  &
\leq                                                                                                       
\sup_{\p}\frac{|K(X,\p)|_{0,X,T_\p}}{w_{-1:0}^X(\p)}\hspace{-0.2cm}\prod_{B\in \mathcal{B}_0(X)}
\sup_{\p}\frac{|F(B,\p)|_{0,B,T_\p}}{W_0^B(\p)}  \\
& \leq \lVert K\rVert_{-1:0}^{(4A)}  \, (4A)^{-|X|}  \, \opnorm{F}^{|X|}_0
\end{split}
\end{align}
where we used that $F \in B_1(1) \subset B_2(0) \subset (M(\mathcal{B}_0),\opnorm{\cdot}_0)$. 			
Multiplying by $A^{|X|}$ and taking the supremum over $X$ we get \eqref{eq:P4_bound_ell0}.
			
To estimate  the derivatives, we observe that $P_4$ is linear in $K$. 
Therefore it is sufficient to note that
\begin{align}
\frac{1}{j!}\lVert D^j_F P_4(K,F)(\dot{F},\ldots,\dot{F})(X)\rVert_{0,X}  
\leq \lVert K\rVert_{-1:0}^{(4A)}\, (2A)^{-|X|} 2^{|X|} \,  \opnorm{\dot{F}}_0^j ,
\end{align}
where the additional factor $2^{|X|}$ is again the combinatorial factor of the derivatives.
Hence 
\begin{equation}
\frac{1}{j!}\lVert D^j_F P_4(K,F)(\dot{F},\ldots,\dot{F})\rVert_{0}^{(A)} \le
\lVert K\rVert_{-1:0}^{(4A)}\, \,     \opnorm{\dot{F}}_0^j.
\end{equation}
\end{proof}

\begin{proof}[Proof of Lemma~\ref{le:smoothness_K0}]
To see that the $\widehat K_0$ is smooth on $B_{\rho_1} \times B_{\rho_2}$ for the given values of $\rho_1$ and $\rho_2$,
it suffices to note that $I$ is smooth on $B_{\rho_1}$ and $E$ maps $B_{\frac18}(0)$
to $B_1(1)$  (see \eqref{eq:bound_exp_H}). Then the assertion follows from the fact that
$P_4$ is smooth on $\boldsymbol{M}_{-1:0}^{(4A)}\times B_1(1)$.
The bound  \eqref{eq:bound_K0_ell0} for $j_2 = 0$ with $C_0 =1$ follows from 
\eqref{eq:P4_bound_ell0} and  \eqref{eq:bound_I_ell0}. The other bounds follow from the bounds in 
Lemma~\ref{le:smoothness_exp},  Lemma~\ref{le:I}, and Lemma~\ref{le:smoothness_P4},  combined with the chain rule.
\end{proof}
	
\begin{proof}[Proof of Theorem~\ref{propfixedpoint}; Conclusion]
We first note that the map $\mathcal H \mapsto \boldsymbol q(\mathcal H)$ is linear and satisfies the bound
\begin{equation} \label{eq:bound_mcH_to_q}
|\boldsymbol q(\mathcal H)| \le \frac{C}{h^2} \norm{\mathcal H}_{0,0} \le C \norm{\mathcal H}_{0,0}. 
\end{equation}
This follows from the definition of the norm $\| \cdot \|_{0,0}$ in 
\eqref{hamiltoniannorm} and the fact that all norms on  $\mathbb{R}^{(d\times m)\times (d\times m)}_{\mathrm{sym}}$ are equivalent. 
	
Next, we establish smoothness of the coordinate maps for $\tilde{H}_k$ and $\tilde{K}_k$
in a neighbourhood of the origin.
First, we consider the maps $\tilde K'$ with $k \ge 1$. 
Then $\tilde K'(\mathcal K, \mathcal H, Z) = \myS_k(H_k, K_k, \boldsymbol q(\mathcal H))$ and, in particular,
$\tilde K'$ does not depend on $\mathcal K$. Smoothness of $\tilde K'$ follows from the smoothness of $\myS_k$
(see Theorem~\ref{prop:smoothnessofS}) and \eqref{eq:bound_mcH_to_q} as long as 
\begin{equation}  \label{eq:first_condition_rho2_fixed_point}
\rho_2 \le \frac{\kappa(L)}{C_{   \eqref{eq:bound_mcH_to_q} }    }.
\end{equation}
Regarding the bounds on the derivatives of $\tilde K'$ we have
\begin{align} \label{eq:bounds_tildeKk_k_ge_2}
\begin{split}
& \frac{1}{j_2! j_3!} \norm*{\frac{1}{\eta^{k+1}}  D^{j_2}_{\mathcal H} D^{j_3}_Z \tilde K_{k+1}(\mathcal K, \mathcal H, Z)
(\dot{\mathcal H}, \ldots, \dot{\mathcal Z}) }_{k+1}^{(A)} \\
&\hspace{3cm}\le  C_{j_2, j_3}(L,A) \frac1\eta \, \, \frac{1}{\eta^k} 
\left(     \| \dot K_k\|_{k}^{(A)} + \| \dot H_k\|_{k,0}\right)^{j_3} C_{   \eqref{eq:bound_mcH_to_q} }^{j_2} \, \|\dot{\mathcal H}\|_{0,0}^{j_2}\\
&\hspace{3cm} \le  C_{j_2, j_3}(L,A) \,  \| \dot{Z}\|_{\mathcal Z}^{j_3}  \, \, \|\dot{\mathcal H}\|_{0,0}^{j_2}.
\end{split}
\end{align}
Here, we used the convention that we  indicate the dependence of constants on fixed parameters like $\eta$.
Similarly, smoothness  of $\tilde H_k$ and the bounds on the derivatives follow from \eqref{eq:bound_mcH_to_q},
\eqref{eq:contractionABC} and \eqref{eq:qderivABC}.
	
The main point is to show smoothness of the composition map
$\tilde K_1(\mathcal K, \mathcal H, Z) = \myS_0(\widehat{K}_0(\mathcal K, \mathcal H), H_0, \boldsymbol q(\mathcal H))$ 
and to bound its derivatives. 
We first note that, for $\rho_1$ and $\rho_2$ given by \eqref{eq:smoothness_mcT_neighbourhood},
\begin{equation} \label{eq:tildeK_1_neighbourhoods}
\widehat K_0(B_{\rho_1}(0) \times B_{\rho_2}(0)) \subset B_{\rho(A)}(0).
\end{equation}
Indeed, this follows directly from \eqref{eq:bound_K0_ell0} with $j_2 =0$.
Now the desired properties of $\tilde K_1$ follow from Lemma~\ref{le:smoothness_K0},
Theorem~\ref{prop:smoothnessofS}, and the chain rule. 
	
Next  we show that, at the origin, the differential of the map $Z \mapsto \mathcal T(\mathcal K, \mathcal H, Z)$ is contraction.
From the definition of the maps $\tilde K_k$ and $\tilde H_k$  
in \eqref{eq:tilde_H0_def}--\eqref{eq:tilde_Kk+1_def},  in combination with
\eqref{eq:triangular_derivative} in Theorem~\ref{prop:contractivity}, it follows that
\begin{align}\label{derivativesofcomplicatedmap}
D_{H_{k+1}} \tilde H_k(0,0,0)         = (\boldsymbol A_k^{(0)})^{-1} \quad \hbox{for} \quad  0 \le k \le N-2, \\
D_{K_k} \tilde H_k(0,0,0) = -(\boldsymbol A_k^{(0)})^{-1}\boldsymbol B_k^{(0)},\text{ for }  1 \le k \le N-1,\\
D_{K_k} \tilde K_{k+1}(0,0,0) =\boldsymbol C_k^{(0)}  \text{ for } 1 \le k \le  N-1,                                                                                       
\end{align}
and all other derivatives vanish.
To estimate the operator norm of $D_Z \mathcal{T}(0,0,0)$, 
let $\dot Z\in\mathcal{Z}$ with $\lVert \dot{Z}\rVert_{\mathcal{Z}}\leq 1$ and set
\begin{align} 
Z' = D_Z \mathcal{T}(0,0,0) \dot Z.
\end{align}
We denote the coordinates of $\dot Z$ by $\dot H_k$ and $\dot K_k$ and the coordinates of $Z_{k+1}$ by $H_{k+1}$ and $K_{k+1}$.
The definition of the norm on $\mathcal{Z}$ implies that
$\lVert \dot{H}_k\rVert_{k,0}\leq \eta^k $ and $\lVert \dot{K}_k\rVert_{k}\leq \eta^k$.
The bounds from Theorem  \ref{prop:contractivity} implies for $1\leq k\leq N-2$ that
\begin{align}  
\begin{split}
\eta^{-k}\lVert {H}_k'\rVert_{k,0}           & \leq                                                       
\eta^{-k}\lVert (\boldsymbol{A}_k^{(0)})^{-1}\rVert\eta^{k+1}+
\eta^{-k}\lVert (\boldsymbol{A}_k^{(0)})^{-1}\rVert     \, \, \lVert \boldsymbol{B}_k^{(0)}\rVert   \eta^k 
\leq \tfrac{3}{4}(\eta+\tfrac13),
\end{split}
\end{align}
and for $2\leq k\leq N$ 
\begin{align}
\eta^{-k}\lVert {K}_k'\rVert                 & \leq \eta^{-k}\lVert \boldsymbol{C}_{k-1}^{(0)}\rVert {\eta^{k-1}}\leq   
\tfrac{1}{\eta} \, \tfrac34 \eta = \tfrac34,
\end{align}
and finally for the boundary terms
\begin{align}\begin{split}
\norm{{H}_0'}_{0,0}     & \leq \norm{(\boldsymbol{A}_0^{(0)})^{-1}}\eta\leq \tfrac{3}{4}\eta, \\
\eta^{-(N-1)}\lVert {H}_{N-1}'\rVert_{N-1,0} & \leq                                                       
\eta^{-(N-1)}\lVert (\boldsymbol{A}_{N-1}^{(0)})^{-1}\rVert   \, \, \lVert \boldsymbol{B}_{N-1}^{(0)}\rVert \eta^{N-1}
\leq \tfrac{3}{4}\cdot\tfrac{1}{3}=\tfrac14,
\\
\eta^{-1}\lVert {K}_1'\rVert                 & =0,                                                   
\\
\end{split}
\end{align}
Since $\eta \le \frac23$, these estimates  imply that
\begin{align}  \label{eq:contraction_mcT}
\norm{D_Z \mathcal{T}(0,0,0)} \le \tfrac34.                                                                         
\end{align}
Thus the assumptions of the implicit function theorem are satisfied for the map $\mathcal{T}-\pi_3$ 
and since $\mathcal{T}$  is  smooth (with bounds on the derivatives that are independent of $N$),  
the implicit function $\hat{Z}$ is defined in a neighbourhood
$B_{\epsilon_1}\times B_{\epsilon_2}\subset \boldsymbol{E}\times M(\mathcal{B}_0)$ 
with $\epsilon_1$ and $\epsilon_2$ independent of $N$ and the derivatives of $\widehat Z$ 
can be bounded independent of $N$. 
\end{proof}
	
	To show that the  choice of the constants $\epsilon_1$, 
$\epsilon_2$, and $\epsilon$ specified after Theorem~\ref{propfixedpoint} is sufficient, 
assume that 
\begin{equation} \label{eq:zeroth_condition_eps_i}
\epsilon_1 \le \rho_1, \quad \epsilon_2 \le \rho_2, \quad \epsilon \le \rho(A)/2
\end{equation}
and that
\begin{equation} \label{eq:first_condition_eps_i}
2 \overline C_{0,0,2}\,  \epsilon  + \overline C_{1,0,1} \, \epsilon_1 + \overline C_{0,1,1} \,  \epsilon_2 \le \tfrac18.	
\end{equation}
Then, for $(\mathcal K, \mathcal H, Z) \in B_{\epsilon_1} \times B_{\epsilon_2} \times \overline{B_\epsilon}$, we have 
\begin{equation} \label{eq:mcT_general_contraction}
\|D_Z \mathcal T(\mathcal K, \mathcal H, Z)\| \le \tfrac34 + \| D_{Z} \mathcal T(\mathcal K, \mathcal H, Z) - D_Z \mathcal T(0,0,0) \| \le \tfrac78.
\end{equation}
Note that  the definition of $\mathcal{T}$ implies that $\mathcal T(0,\mathcal H,0) = 0$  for all $\mathcal H \in B_{\epsilon_2}(0) \subset B_{\rho_2}(0)$.  
Thus, if in addition 
\begin{equation} \label{eq:second_condition_eps_i}
\overline C_{1,0,0}  \, \epsilon_1 \le \tfrac18 \epsilon,
\end{equation}
we have 
\begin{equation} \label{eq:mcT_general_selfmap}
 \| \mathcal T(\mathcal K, \mathcal H, 0) \| < \tfrac18 \epsilon   \text{ for all } \, (\mathcal H, \mathcal K) 
\in B_{\epsilon_1}(0) \times B_{\epsilon_2}(0).
\end{equation}
It follows from \eqref{eq:mcT_general_contraction} and  \eqref{eq:mcT_general_selfmap} that, 
for all  $(\mathcal K, \mathcal H) \in B_{\epsilon_1}(0) \times B_{\epsilon_2}(0)$, the map
$Z \mapsto \mathcal T(\mathcal K, \mathcal H, Z)$ is a contraction and maps the closed ball
$\overline{B_\epsilon}$ to itself. Thus, by the Banach fixed point theorem, there is a unique
$\widehat Z(\mathcal K, \mathcal H) \in \overline{B_\epsilon}$ such that
\begin{equation}
\mathcal T(\mathcal K, \mathcal H, \widehat Z(\mathcal K, \mathcal H)) = \widehat Z(\mathcal K, \mathcal H).
\end{equation}
Moreover,  $\| \widehat Z(\mathcal K, \mathcal H) \|  \le 8  \mathcal \| \mathcal{T}(\mathcal K, \mathcal H, 0)\| < \epsilon$
so that $\widehat Z(\mathcal K, \mathcal H) \in B_\epsilon(0)$. 
It follows from the implicit function theorem applied 
at the point $(\mathcal K, \mathcal H, \widehat Z(\mathcal H, \mathcal K))$ that the function $\widehat  Z$ is locally smooth. 
By the uniqueness, $\widehat  Z$ is smooth in $B_{\epsilon_1}(0) \times B_{\epsilon_2}(0)$.
Finally, one easily sees that the choices in  \eqref{eq:conditions_eps_i_A} imply
\eqref{eq:zeroth_condition_eps_i}, \eqref{eq:first_condition_eps_i}, and \eqref{eq:second_condition_eps_i}.

\section{Existence of a fixed point of the map \texorpdfstring{$\Pi_{H_0}\hat{Z}(\mathcal{K},\cdot)$}{PiZ(K,.)}}
\label{sec:exist2fixpoint}

Theorem~\ref{propfixedpoint} and the definition of the  norm $\| \cdot \|_{\mathcal Z}$  show the existence of  a sequence of  maps
 $H_k$ and $K_k$ such that 
 $\myR_{k+1}^{(\boldsymbol{q})}\left( e^{-H_k}\circ K_k\right)(\p)=\left(e^{-H_{k+1}}\circ K_{k+1}\right)(\p)$,
 the coordinate $K_k$ is exponentially decreasing, and $\HN=0$.  
 In particular,  equation \eqref{eqfirstfixedpointid} holds.
But this sequence  will  in general  not satisfy the correct initial condition because the $H_0$ 
coordinate of the fixed point is only given implicitly by the fixed point equation.
We can, however,  use the artificially inserted coordinate $\mathcal{H}$
and apply the  implicit function theorem once more to show that we can choose $\mathcal{H}$ such that $H_0=\mathcal{H}$.
Then a simple calculation shows that this fixed point satisfies the correct initial condition up to an explicit
scalar factor, see  \eqref{finalidentity} and   \eqref{finalidentity2} below.

We use the same notation as in Theorem~\ref{propfixedpoint}. 
In particular  $\widehat Z: B_{\epsilon_1}(0) \times B_{\epsilon_2}(0) \to \mathcal Z$ denotes the fixed point map.  We 
use $\Pi_{H_0}$ to denote  the bounded  linear map $\Pi_{H_0}:\mathcal{Z}\rightarrow M_0(\mathcal{B}_0)$ 
that extracts the coordinate $H_0$ from $Z$.
\begin{lemma}\label{lemmafixedpoint}
Under the assumptions of  Theorem~\ref{propfixedpoint}, 
there is a constant $\tilde \rhoMT >0$ that can be chosen independently of $N$ 
and a map  $\widehat{ \mathcal{H}}:B_{\tilde \rhoMT}(0)\subset \boldsymbol{E} \rightarrow B_{\epsilon_2}(0) \subset M_0(\mathcal{B}_0)$
 such that
\begin{equation} 
\Pi_{H_0}\hat{Z}(\mathcal{K},\widehat{\mathcal{H}}(\mathcal{K}))=\widehat{\mathcal{H}}(\mathcal{K})
\text{ and }  \boldsymbol q(\widehat{\mathcal H}(\mathcal K)) \subset B_{\kappa}(0)
\text{ for all } \mathcal K \in B_{\tilde \rhoMT}(0).
\end{equation} 
Moreover $\widehat{\mathcal{H}}$ is smooth in $B_{\tilde \rhoMT}(0)$ and its derivatives can be bounded
uniformly in $N$. 
We may take
\begin{equation}  \label{eq:rho_second_fixed_point}
{\tilde \rhoMT} = \min \Bigl(  \frac{1}{4 C_{1,1}},  \frac{\rho'}{2 C_{1,0 }},  \epsilon_1  \Bigr) \text{ where }  
\rho' = \min \Bigl(   \frac{1}{8C_{0,2}},  \frac{\epsilon_2}{2}\Bigr).
\end{equation}
Here, $C_{j_1, j_2}$ are the constants in the estimate 
 \eqref{firstfixedpoint_bounds} for the derivatives of $\widehat Z$. 
\end{lemma}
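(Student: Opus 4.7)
The plan is to reformulate the problem as a second application of the Banach fixed point theorem. Define the parameter-dependent map
\[
\Phi_{\mathcal K}(\mathcal H) \;:=\; \Pi_{H_0}\widehat Z(\mathcal K,\mathcal H),
\]
so that the desired identity $\Pi_{H_0}\widehat Z(\mathcal K,\widehat{\mathcal H}(\mathcal K))=\widehat{\mathcal H}(\mathcal K)$ reads $\Phi_{\mathcal K}(\widehat{\mathcal H}(\mathcal K))=\widehat{\mathcal H}(\mathcal K)$. The key structural observation is that $\mathcal T(0,\mathcal H,0)=0$ for every $\mathcal H\in B_{\epsilon_2}(0)$: indeed $\widehat K_0(0,\mathcal H)\equiv 0$ on nonempty polymers by \eqref{defofk0}, and at $Z=0$ every coordinate map $\tilde H_k,\tilde K_k$ reduces to an expression that vanishes because $\boldsymbol S_k(0,0,\boldsymbol q)=0$ and $\tilde H_k=\boldsymbol A_k^{-1}(0-\boldsymbol B_k\cdot 0)=0$. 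By the uniqueness clause of Theorem~\ref{propfixedpoint} this forces $\widehat Z(0,\mathcal H)=0$ throughout $B_{\epsilon_2}(0)$, so $\Phi_0\equiv 0$ and in particular all $\mathcal H$-derivatives of $\Phi_0$ vanish.

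From this degeneracy at $\mathcal K=0$ I extract the estimates required for a contraction argument by first-order Taylor expansion in $\mathcal K$ and in $\mathcal H$, using the uniform bounds $C_{j_1,j_2}$ from \eqref{firstfixedpoint_bounds}:
\[
\|\Phi_{\mathcal K}(0)\|_{0,0}\le C_{1,0}\|\mathcal K\|_\zeta,\qquad \|D_{\mathcal H}\Phi_{\mathcal K}(\mathcal H)\|\le C_{1,1}\|\mathcal K\|_\zeta+2C_{0,2}\|\mathcal H\|_{0,0}.
\]
With the choice $\rho'=\min(\tfrac{1}{8C_{0,2}},\tfrac{\epsilon_2}{2})$ the term $2C_{0,2}\|\mathcal H\|_{0,0}$ is at most $\tfrac14$ on $B_{\rho'}(0)$, and imposing $\|\mathcal K\|_\zeta\le\tfrac{1}{4C_{1,1}}$ makes $C_{1,1}\|\mathcal K\|_\zeta\le\tfrac14$; together these make $\Phi_{\mathcal K}$ a $\tfrac12$-contraction on $B_{\rho'}(0)$. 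Demanding in addition $\|\mathcal K\|_\zeta\le\tfrac{\rho'}{2C_{1,0}}$ keeps $\|\Phi_{\mathcal K}(0)\|_{0,0}\le\rho'/2$, so $\Phi_{\mathcal K}$ sends $B_{\rho'}(0)$ into itself. These three constraints, together with $\|\mathcal K\|_\zeta\le\epsilon_1$ (so that $\widehat Z(\mathcal K,\cdot)$ is defined on $B_{\epsilon_2}(0)$), give exactly the threshold $\tilde\rhoMT$ stated in \eqref{eq:rho_second_fixed_point}.

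The Banach fixed point theorem then yields a unique $\widehat{\mathcal H}(\mathcal K)\in B_{\rho'}(0)\subset B_{\epsilon_2}(0)$ for each $\mathcal K\in B_{\tilde\rhoMT}(0)$. Because $\mathrm{Id}-D_{\mathcal H}\Phi_{\mathcal K}(\mathcal H)$ has norm at least $\tfrac12$ on $B_{\rho'}(0)$, it is invertible, and the standard implicit function theorem applied to the smooth map $(\mathcal K,\mathcal H)\mapsto \mathcal H-\Phi_{\mathcal K}(\mathcal H)$ delivers smoothness of $\widehat{\mathcal H}$ on $B_{\tilde\rhoMT}(0)$; $N$-independent bounds on the derivatives of $\widehat{\mathcal H}$ follow by iteratively differentiating $\widehat{\mathcal H}(\mathcal K)=\Pi_{H_0}\widehat Z(\mathcal K,\widehat{\mathcal H}(\mathcal K))$ via Fa\`a di Bruno, using the uniform $C_{j_1,j_2}$ bounds from Theorem~\ref{propfixedpoint}. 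Finally, the inclusion $\widehat{\mathcal H}(\mathcal K)\in B_{\epsilon_2}(0)\subset B_{\rho_2}(0)$ together with \eqref{eq:range_q_of_mcH} immediately gives $\boldsymbol q(\widehat{\mathcal H}(\mathcal K))\in B_\kappa(0)$.

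No real obstacle is anticipated: the substantial technical work has already been carried out in Theorem~\ref{propfixedpoint}, and Lemma~\ref{lemmafixedpoint} is essentially a second, much cleaner, application of the implicit function theorem exploiting the collapsed structure $\widehat Z(0,\cdot)\equiv 0$. The only subtlety is the need for uniform smallness of $\|D_{\mathcal H}\Phi_{\mathcal K}\|$ on a whole ball rather than at a single point, which is why the auxiliary radius $\rho'$ is forced to depend on the second-order bound $C_{0,2}$.
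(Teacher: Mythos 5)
Your proposal is correct and follows essentially the same route as the paper: both rest on the observation $\mathcal T(0,\mathcal H,0)=0$, hence $\widehat Z(0,\mathcal H)=0$ and $D_{\mathcal H}\widehat Z(0,0)=0$, and then run a Banach fixed point/contraction argument for $\mathcal H\mapsto \Pi_{H_0}\widehat Z(\mathcal K,\mathcal H)$ on $\overline{B_{\rho'}(0)}$ with exactly the smallness conditions $2C_{0,2}\rho'+C_{1,1}\tilde\rhoMT\le\tfrac12$ and $C_{1,0}\tilde\rhoMT\le\tfrac12\rho'$, invoking the implicit function theorem for smoothness and \eqref{eq:range_q_of_mcH} for the inclusion $\boldsymbol q(\widehat{\mathcal H}(\mathcal K))\in B_\kappa(0)$. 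The only cosmetic difference is that the paper first gets existence of $\widehat{\mathcal H}$ abstractly from the implicit function theorem at $(0,0)$ and then verifies the explicit radii by the contraction argument, whereas you go straight to the contraction argument; the content is the same.
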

Note that the conditions $\widehat{\mathcal H}(B_{\tilde \rhoMT}(0)) \subset B_{\epsilon_2}(0)$,
\eqref{eq:range_q_of_mcH}, and the fact that $\epsilon_2 \le \rho_2$,
 imply that 
 \begin{align} \label{eq:range_q_bis}
  \boldsymbol q(\widehat{\mathcal H}(\mathcal K)) \in B_{\kappa}(0)\text{ for all  } \mathcal K \in B_\rho(0).
 \end{align}

\begin{proof} We first note that
$\mathcal T(0, \mathcal H, 0) = 0$. Hence by uniqueness of the fixed point we get
\begin{equation}  \label{eq:hatZ_zero}
\widehat Z(0, \mathcal H) = 0   \quad \hbox{  for all    $\mathcal H \in B_{\epsilon_2}(0)$}
\end{equation}
and in particular
\begin{equation} \label{eq:D2_hatZ}
D_{\mathcal H} \hat{Z}(0,0) = 0.
\end{equation}
We  now  consider the function 
$$
f =\Pi_{H_0}  \circ \widehat{Z}-\pi_2: B_{\epsilon_1} \times B_{\epsilon_2} 
\subset \boldsymbol{E}\times M_0(\mathcal{B}_0)\rightarrow M_0(\mathcal{B}_0),
$$
where $\pi_2(\mathcal K, \mathcal H) := \mathcal H$.
Condition  \eqref{eq:D2_hatZ} implies that 
$D_{\mathcal H} f(0,0) \dot{\mathcal H} = - \dot{\mathcal H}$.
 Hence we can apply
the implicit function theorem to $f$ and find a constant ${\tilde \rhoMT} >0$ and   
a smooth  function 
$\widehat{\mathcal{H}}:B_{\tilde \rhoMT}(0)\subset \boldsymbol{E}\rightarrow M_0(\mathcal{B}_0)$ such that
$f(   \Pi_{H_0}\hat{Z}(\mathcal{K}, \widehat {\mathcal H}(\mathcal{K})), \widehat{\mathcal H}(\mathcal K)     ) = 0$, i.e., 
$$
\Pi_{H_0}\hat{Z}(\mathcal{K}, \widehat {\mathcal H}(\mathcal{K}))=\widehat{ \mathcal H}(\mathcal{K}).
$$
Given that the derivatives of $\hat{Z}$ are bounded uniformly in $N$, we can choose  ${\tilde \rhoMT}$  independent of  $N$.

It only remains to show that the choice \eqref{eq:rho_second_fixed_point} for ${\tilde \rhoMT}$ is admissible
and $\widehat{\mathcal H}(B_\rho(0)) \subset B_{\epsilon_2}(0)$. 
To see this we argue exactly as in the proof of Theorem~\ref{propfixedpoint}. 
First, assume that  $\rho' \le \epsilon_2/2$ and
\begin{equation} \label{eq:first_condition_rho_second_fix}
2 C_{0,2}\,  \rho' + C_{1,1} \,  {\tilde \rhoMT}  \le \tfrac12.
\end{equation}
Then   $D_{\mathcal H} \widehat Z(0,0) = 0$   implies that
$$
\| D_{\mathcal H}  (\Pi_{H_0} \circ \hat Z)\| \le \| D_{\mathcal H} \widehat Z\| \le \tfrac12 \text{ in }
B_{{\tilde \rhoMT}}(0) \times B_{\rho'}(0).
$$
If, in addition,
\begin{equation} \label{eq:first_condition_rho_second_fix2}
 C_{1,0} \, {\tilde \rhoMT}\le \tfrac12 \rho',
\end{equation}
then $\|  (\Pi_{H_0} \circ \hat Z)(\mathcal K, 0)\| \le \frac12 \rho'$ for $\mathcal K \in B_\rho(0)$. Thus, for such
$\mathcal K$, the map $\mathcal H \mapsto (\Pi_{H_0} \circ \widehat Z)(\mathcal K, \mathcal H)$ is a contraction 
and maps $\overline{B_{\rho'}(0)}$ to itself. Hence, this map has a unique fixed point
$\widehat{\mathcal H}(\mathcal K) \in \overline{B_{\rho'}(0)} \subset B_{\epsilon_2}(0)$.
        Smoothness of $\widehat{\mathcal H}$ follows from the implicit function theorem.
\end{proof}

\section{Proof of Theorem~\ref{MAINTHEOREM}}
\label{sec:proofmain}

\begin{proof}
The heart of the matter is the identity  \eqref{finalidentity} below.
In combination with Lemma~\ref{lemmafixedpoint} and the identity \eqref{eqfirstfixedpointid}  it immediately yields
  the representation
\eqref{thmequality}. The further assertions  in Theorem~\ref{maintheorem}
then follow from the properties of the map $\widehat{\mathcal H}$ stated in Lemma~\ref{lemmafixedpoint}.
  To simplify the notation we write $\hat e$ and $\widehat{\boldsymbol q}$ instead of $\hat e_N$ and
 $\widehat{\boldsymbol q}_N$ for the maps whose existence is claimed in  Theorem~\ref{maintheorem}.

  Recall that for an ideal Hamiltonian $\mathcal H$ we denote  the matrix which defines the quadratic part by $\boldsymbol q(\mathcal H)$.
We denote the constant part by $e(\mathcal H)$.
Then 
\begin{align}  \label{eq:eq_ideal_hamiltonian}
\sum_{x\in \TN}\mathcal{H}(\mathcal{K})(x,\p)= e(\mathcal H) L^{Nd}
+\frac{1}{2}{\sum_{x\in\TN} \langle \boldsymbol{q}(\mathcal H) \nabla\p(x),\nabla\p(x)\rangle}, 
\end{align}
where we used that the sum over the linear terms in the field vanishes  because $\sum_{x\in\TN}\nabla^\alpha \p_i(x)=0$
for any $\p\in\Xcal_N$ and any multiindex $\alpha$ and $1\leq i\leq m$, due to the periodic boundary conditions. 
Recall that $\lambda$ is the Hausdorff measure on $\Xcal_N$. The definition of the partition function 
$Z^{(\boldsymbol q)}$ implies that 
\begin{align}
\begin{split}
e^{\frac12	\sum_{x\in\TN} \langle \boldsymbol{q}\nabla\p(x),\nabla\p(x)\rangle}\,\mu^{(0)}(\d\p)& = 
\frac{Z^{(\boldsymbol{q})}e^{-\frac{1}{2}\sum_{x\in\TN} \Qscr(D\p(x)) -
\langle   \boldsymbol{q}\nabla\p(x),\nabla\p(x)\rangle}\lambda(\d\p) }{Z^{(\boldsymbol{q})}Z^{(0)}}
\\ & =                                                                               
\frac{Z^{(\boldsymbol{q})}}{Z^{(0)}}\,\mu^{(\boldsymbol{q})}(\d\p).
\end{split}        
\end{align}
Recall also that $\mathcal K(X, \p) = \prod_{x \in X} \mathcal K(D \p(x))$ (see \eqref{eq:explanation_K}). Thus, 
by the definition \eqref{defofk0} of $\widehat{K}_0(\mathcal K, \mathcal H)$,
\begin{align}  \label{eq:prefinalidentity}
\begin{split}
(\widehat{K}_0(\mathcal K, \mathcal H)  \circ e^{ - \mathcal H}   )(\TN, \p) &= 
  (\mathcal K e^{  -  \mathcal H}  \circ e^{  -  \mathcal H})(\TN, \p)  \\
&=\sum_{X \subset \TN} \mathcal K(X, \p) e^{ - \mathcal H}(X, \p) \, e^{ - \mathcal H}(\TN\setminus X, \p)
\\ &
= \sum_{X \subset \TN} \mathcal K(X, \p)    \, \, e^{ - \sum_{x \in \TN} \mathcal H(x, \p)}.
\end{split}
\end{align}
Using the  identities  \eqref{eq:eq_ideal_hamiltonian}--\eqref{eq:prefinalidentity}, we get
\begin{align} \begin{split}  \label{finalidentity}
\int_{\Xcal_N} \sum_{X\subset \TN} & \mathcal{K}(X, \p)\,\mu^{(0)}(\d\p)\\             
&=\int_{\Xcal_N} \left(  \widehat K_0(\mathcal K, \mathcal H)\circ e^{   - \mathcal{H}   }\right)
\left(\TN,\p\right)\cdot   e^{\sum_{x\in \TN}\mathcal{H}(x,\p)}\,\mu^{(0)}(\d\p)\\
& =\frac{Z^{(\boldsymbol{q(\mathcal H)})}}{Z^{(0)}} e^{L^{Nd}     e(\mathcal H)}
 \int_{\Xcal_N}   \left(\widehat K_0(\mathcal K, \mathcal H) \circ  e^{   - \mathcal H}\right)(\TN,\p)\,\mu^{(\boldsymbol{q(\mathcal H)})}(\d\p).
\end{split}\end{align}
Now let $0<\rhoMT <\tilde{\rhoMT}$  with $\tilde{\rhoMT}$  as in Lemma~\ref{lemmafixedpoint} 
and define the following  maps on $B_\rhoMT(0) \subset \boldsymbol E$,
\begin{equation}
\widehat{\boldsymbol q}(\mathcal K) := \boldsymbol q( \widehat{\mathcal H}(\mathcal K)), \quad
\widehat{e}(\mathcal K) := e( \widehat{\mathcal H}(\mathcal K)), \quad
\widehat K_N(\mathcal K) := \Pi_{K_N} \widehat{Z}(\mathcal K, \widehat{\mathcal H}(\mathcal K)).
\end{equation}
Here $\Pi_{K_N}$ denotes the projection from $Z$ to the $K_N$ coordinate of $Z$. 
By Lemma~\ref{lemmafixedpoint} we have
\begin{align}
\Pi_{H_0} \widehat Z(\mathcal K, \widehat{\mathcal H}(\mathcal K)) = \widehat{\mathcal H}(\mathcal K).
\end{align}
Using the abbreviation $H_0 = \Pi_{H_0} \widehat Z(\mathcal K, \widehat{\mathcal H}(\mathcal K))$, we get 
\begin{align}   \label{finalidentity2}
\begin{split}
\int_{\Xcal_N} \left( \widehat K_0(\mathcal K,  \widehat{\mathcal H}(\mathcal K) )\circ e^{     -\widehat{\mathcal H}(\mathcal K)} \right)    
&(\TN,\p)\, 
\mu^{(\boldsymbol{q}(\widehat{\mathcal H}(\mathcal K))}(\d\p)\\
=&
 \int_{\Xcal_N}  \left(   \widehat K_0(\mathcal K, \widehat{\mathcal H}(\mathcal K)) \circ e^{- H_0}\right)(\TN,\p)
\,     \mu^{ (\boldsymbol{q}(\widehat{\mathcal H}(\mathcal K))}(\d\p) \\
 \underset{\eqref{eqfirstfixedpointid}  }{=}  \, 	 
&\int_{\Xcal_N}   \left( 1 + \widehat K_N(\mathcal K)(\TN, \p)  \right) \, \mu_{N+1}^{(\boldsymbol{q}(\widehat{\mathcal H}(\mathcal K))      }(\d\p).
 \end{split}	
\end{align}
Taking $\mathcal H = \widehat{\mathcal H}(\mathcal K)$ in \eqref{finalidentity} 
and using that $\boldsymbol q(\widehat{\mathcal H}(\mathcal K))= \widehat{\boldsymbol q}(\mathcal K)$ and 
$e(\widehat{\mathcal H}(\mathcal K)) = \widehat{e}(\mathcal K)$ we obtain the desired representation  \eqref{thmequality}.

Smoothness of maps 
$\widehat{\boldsymbol q} $, $\widehat{e}$ and $\widehat{K}_N$  as well
as bounds on the derivatives which are independent on $N$ follow from the same property 
for $\widehat{\mathcal H}$ and $\widehat{ Z}$ as well  the linearity  and uniform boundedness of the projections 
$\mathcal H \mapsto  \boldsymbol q(\mathcal H)$, $\mathcal H  \mapsto e(\mathcal H)$ and $Z \mapsto K_N$. 
In particular uniform bounds on the derivatives of $\mathcal K \mapsto \widehat{ Z}( \mathcal K, \widehat{\mathcal H}(\mathcal K))$
and the definition $\| \cdot \|_{\mathcal Z}$ imply that
\begin{align} \frac{1}{\eta^N}
\frac{1}{l!}    \|  D^\ell_{\mathcal K} \widehat{K}_N(\mathcal K)( \dot{\mathcal K}, \ldots, \dot{\mathcal K}) \|_N^{(A)} 
 \le C_\ell(L,h,A)  \, \| \dot{\mathcal K}\|_\zeta^\ell.
\end{align}
This proves \eqref{eq:keyboundKN}.
To show \eqref{eq:keyboundKN_ell0} we note that the definition of $\| \cdot\|_{N}^{(A)}$ and
Theorem~\ref{th:weights_final}~\ref{w:w8}
yield
\begin{align}
\begin{split}
	 \quad   \int_{\Xcal} \widehat K_N(\TN, \p) \, \mu_{N+1}^{(\boldsymbol q)}(d\p)
&\le  \int_{\Xcal}   \frac1A \| \widehat K_N\|_N^{(A)}    w_N(\p) \,  \mu_{N+1}^{(\boldsymbol q)}(d\p) \\
& \le \, \frac1A \| \widehat K_N\|_N^{(A)}    \AB  \,  w_{N:N+1}(0) 
=  \frac{\AB}{A}  \| \widehat K_N\|_N^{(A)}.
\end{split}
\end{align}
Given that $\widehat{K}_N(0) = 0$, it follows from   \eqref{eq:keyboundKN}   with $\ell = 1$ that 
$ \| \widehat K_N\|_N^{(A)} \le  C_{1, \eqref{eq:keyboundKN}}   \eta^N \| \mathcal K\|_\zeta$. 
Thus, the bound \eqref{eq:keyboundKN_ell0} holds if $\rhoMT$ satisfies, in addition, the bound 
\begin{equation}
\frac{\AB}{A} C_{1, \eqref{eq:keyboundKN}}   \eta^N \rhoMT \le \frac12.
\end{equation}

Finally,  the representation  \eqref{thmequality2}  can be derived arguing as in 
 \eqref{finalidentity} and \eqref{finalidentity2} and using Gaussian calculus.
More precisely, we use that for every positive quadratic form $\mathscr{C}$
\begin{align}
(f_N, \p + \mathscr{C} f_N) - \frac12 (\mathscr{C}^{-1} (\p + \mathscr C f_N) , \p + \mathscr C f_N) =
\frac12 (f_N, \mathscr C f_N) - \frac12 (\mathscr{C}^{-1} \p, \p). 
\end{align}
Since the Hausdorff measure $\lambda$ on $\Xcal_N$ is translation invariant, this implies that
\begin{align} \label{eq:translation_gaussian_main}
\int_{\Xcal_N} e^{(f_N, \p)} G(\p) \, \mu^{(\boldsymbol q)}(d\p) 
= e^{\frac12 ( f_N, {\mathscr C}^{(\boldsymbol q)} f_N)} \, \int_{\Xcal_N} G(\p + {\mathscr C}^{(\boldsymbol q)} f_N) \, \mu^{(\boldsymbol q)}(d\p).
\end{align}
Using now, first  \eqref{eq:prefinalidentity} as in  \eqref{finalidentity} and then  \eqref{eq:translation_gaussian_main},
we get
\begin{align} \label{finalidentity3}
\begin{split}
\int_{\Xcal_N} &e^{(f_N,\p)} \sum_{X\subset \TN} \mathcal{K}(X, \p)\,\mu^{(0)}(\d\p)  \\	                                                 
 =&  \, \frac{Z^{(\boldsymbol{q}(\mathcal H))}}{Z^{(0)}}e^{L^{Nd}{e}(\mathcal H)}
\int_{\Xcal_N} e^{(f_N,\p)}                                       
\bigl(\widehat K_0(\mathcal K, \mathcal H) \circ  e^{- \mathcal H} \bigr)(\TN,\p)\,\mu^{(\boldsymbol{q}(\mathcal H))}(\d\p)
\\
=& \,                                                                                                              
e^{\frac{1}{2}(f_N,\mathscr{C}^{(    \boldsymbol{q}(\mathcal H))}f_N)}    \,   
\frac{Z^{(\boldsymbol{q}(\mathcal H))}      }{Z^{(0)}}        e^{L^{Nd}  e(\mathcal H)}  
\\ &\hspace{2.5cm}  \int_{\Xcal_N}
\bigl(\widehat K_0(\mathcal K, \mathcal H) \circ      e^{- \mathcal H}  \bigr)(\TN,\p +\mathscr{C}^{(\boldsymbol{q}(\mathcal H))}f_N)
\,\mu^{(\boldsymbol{q}(\mathcal H))}(\d\p).
\end{split}
\end{align}
Taking as before  $\mathcal H = \widehat{\mathcal H}(\mathcal K)$, using the abbreviation
$H_0 = \Pi_{H_0} (\widehat Z(\mathcal K, \widehat{\mathcal H}(\mathcal K)) = \mathcal H$,
the relations $\boldsymbol q(\mathcal H) = \widehat{\boldsymbol q}(\mathcal K)$ and $e(\mathcal H) = \widehat{e}({ \Kcal})$,
and,  finally, the equality \eqref{eqfirstfixedpointid},  we see that the right hand side of 
 \eqref{finalidentity3} equals
 \begin{align}
e^{\frac{1}{2}(f_N,\mathscr{C}^{(  \widehat{  \boldsymbol{q}}({ \Kcal}))}f_N)}     \frac{Z^{(    \widehat{\boldsymbol{q}}(\mathcal K)    )}}{Z^{(0)}}e^{ L^{Nd}  
 \widehat{e}(\mathcal K)}
 \int_{\Xcal_N} \bigl(1+\widehat K_N(\mathcal K) \bigr)(\TN,\p  + {\mathscr C}^{(\widehat{\boldsymbol q}(\mathcal K))} f_N )  )
 \,\mu_{N+1}^{(   \widehat{\boldsymbol{q}}(\mathcal K))}(\d\p). 
 \end{align}
This concludes the proof of \eqref{thmequality2}  and thus of Theorem~\ref{maintheorem}.
\end{proof}
\newpage

\begin{appendix}
\chapter{Norms on Taylor Polynomials}  \label{se:norms_polynomials}

The following material is essentially contained in \cite{BS15I}. We include it 
for the convenience of the reader because the notation is simpler than in \cite{BS15I} (since we 
do not have to deal with fermions) and because we would like to emphasise that 
the basic results (product property, polynomial property and two-norm estimate) 
follow from general features of tensor products and are not dependent on the special choice 
of the norm  in \eqref{eq:primal_norm_appendix}.

Before we start on the details let us put this appendix more precisely into  context. The uniform smoothness estimates
for the polynomial maps and the exponential map in Chapter~\ref{sec:smoothness} rely heavily
on the submultiplicativity of the norms on the functionals $K(X, \p)$. This submultiplicativity
in turn is based on two ingredients: submultiplicativity of the weights (see Theorem~\ref{th:weights_final}~\ref{w:w3}-\ref{w:w6}
in Chapter~\ref{sec:weights}) and the choice of a submultiplicative norm on Taylor polynomials
which we address in this appendix. 

For smooth functions on $\R^p$ one can easily check that 
a suitable $\ell_1$ type norm on the Taylor coefficients (see  \eqref{eq:T0_norm_for_Rp_new} below) is submultiplicative.  
We deal with smooth maps on the space $\Xcal_N$ of fields and, more importantly, we
want the norm on Taylor polynomials to reflect the typical behaviour of the field on different scales
$k$, i.e., under the measure $\mu_{k+1}^{(\boldsymbol q)}$. In this setting a more systematic approach
to the construction of the norms is useful.

 The main idea is to view 
a homogeneous polynomial of degree $r$  on a finite dimensional space $\BX$
as a linear functional on the tensor product 
$\BX^{\otimes r}$. A norm on $\BX$ induces in a natural way norms on the tensor products
(see Definition  \eqref{eq:T0_norm_for_Rp_new}) and by duality on polynomials (see
 \eqref{eq:norm_g_all_degrees}, \eqref{eq:norm_P_decompose_untruncated} 
 and \eqref{eq:norm_P_decompose}).
 This norm automatically satisfies submultiplicativity (see Propositions~\ref{pr:product_estimate_new}
and~\ref{pr:product_estimate_taylor})
and in addition we get useful properties such as the
polynomial property  in Proposition~\ref{pr:polynomial_prop_gen} and 
the two-norm estimate in Proposition~\ref{pr:two_norm_new}.

\section{Norms on polynomials}
Let $\BX$ be a finite dimensional space vector space. For definiteness we consider only 
vector spaces over $\R$, but the arguments apply also to vector spaces over $\C$. 
The main idea is to linearise the action of polynomials on $\BX$. 
We say that $P : \BX \to \R$ is a polynomial if, in some (and hence in any) 
basis, $P$ is a polynomial in the coordinate with respect to that basis.
For $r$-homogeneous polynomials we can use the following representation
(alternatively this representation can be used as a
coordinate-free  definition of an $r$-homogeneous polynomial).

\begin{lemma}  \label{le:extension_polynomial}
Let $P$ be an $r$-homogeneous polynomial on $\BX$. 
Then there exist a unique symmetric 
element $\overline P$ of the dual space $(\BX^{\otimes r})'$
such that 
$ P(\xi) = \langle \overline P,  \xi \otimes \ldots \otimes \xi \rangle$.
\end{lemma}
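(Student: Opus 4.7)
The plan is to reduce the statement to the classical polarization identity for homogeneous polynomials, after first identifying symmetric linear functionals on $\BX^{\otimes r}$ with symmetric multilinear forms on $\BX^r$. By the universal property of the tensor product, there is a canonical bijection between $(\BX^{\otimes r})'$ and the space of $r$-linear forms $B : \BX \times \cdots \times \BX \to \R$, sending $\overline P$ to $(\xi_1,\ldots,\xi_r) \mapsto \langle \overline P, \xi_1 \otimes \cdots \otimes \xi_r\rangle$; under this bijection, symmetry of $\overline P$ (invariance under permutation of tensor factors) corresponds to symmetry of the multilinear form. It therefore suffices to produce a unique symmetric $r$-linear form $B$ on $\BX^r$ with $B(\xi,\ldots,\xi) = P(\xi)$ for all $\xi \in \BX$.

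For existence I would define the polarization
\begin{equation*}
B(\xi_1,\ldots,\xi_r) := \frac{1}{r!}\,\frac{\partial^r}{\partial t_1 \cdots \partial t_r}\, P(t_1 \xi_1 + \cdots + t_r \xi_r)\bigg|_{t_1=\cdots=t_r=0}.
\end{equation*}
Because $P$ is $r$-homogeneous, $P(t_1 \xi_1 + \cdots + t_r \xi_r)$ is a homogeneous polynomial of degree $r$ in $(t_1,\ldots,t_r)$ for each fixed $(\xi_1,\ldots,\xi_r)$, so the mixed derivative of order $r$ returns a finite quantity that depends on $(\xi_1,\ldots,\xi_r)$ only through the coefficient of the monomial $t_1 \cdots t_r$. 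This coefficient is manifestly symmetric in $(\xi_1,\ldots,\xi_r)$ and multilinear in each argument separately (one can verify multilinearity either from the explicit formula or by differentiating under linear substitutions in a single slot). Specialising to $\xi_1 = \cdots = \xi_r = \xi$ and using homogeneity gives $P((t_1+\cdots+t_r)\xi) = (t_1+\cdots+t_r)^r P(\xi)$; applying $\tfrac{1}{r!}\partial^r/\partial t_1\cdots \partial t_r$ then yields $B(\xi,\ldots,\xi) = P(\xi)$, since $\partial^r/(\partial t_1 \cdots \partial t_r) (t_1+\cdots+t_r)^r = r!$ at $t=0$. Defining $\overline P$ as the linear functional on $\BX^{\otimes r}$ associated to $B$ via the universal property then produces a symmetric element of $(\BX^{\otimes r})'$ with the required property.

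For uniqueness, suppose $\overline P_1$ and $\overline P_2$ are two symmetric elements of $(\BX^{\otimes r})'$ satisfying $\langle \overline P_i, \xi^{\otimes r}\rangle = P(\xi)$ for all $\xi$. Their associated symmetric multilinear forms $B_1, B_2$ agree on the diagonal $\{(\xi,\ldots,\xi) : \xi \in \BX\}$. Applying the polarization formula above to the polynomial $\xi \mapsto B_i(\xi,\ldots,\xi) = P(\xi)$ expresses $B_i$ in terms of values of $P$ alone; hence $B_1 = B_2$, and consequently $\overline P_1 = \overline P_2$.

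This argument is short and has no genuine obstacle; the only point that requires minor care is verifying that the $r$-fold mixed derivative at the origin really does pick out a symmetric multilinear form, which follows once one notes that on the $r$-homogeneous polynomial $P(t_1 \xi_1 + \cdots + t_r \xi_r)$ in $t$ only the monomial $t_1 \cdots t_r$ survives differentiation and evaluation at $t=0$. I would also briefly remark that the construction is independent of any choice of basis on $\BX$, which is useful when this lemma is later applied to the tensor-product norms on $\Vcal_N^{\otimes r}$.
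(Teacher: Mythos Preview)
Your proof is correct and takes essentially the same approach as the paper: both use the polarization formula $\frac{1}{r!}\partial_{t_1}\cdots\partial_{t_r} P(\sum t_i\xi_i)\big|_{t=0}$ for existence (together with the homogeneity computation $\partial_{t_1}\cdots\partial_{t_r}(t_1+\cdots+t_r)^r = r!$), and the same polarization for uniqueness. The only cosmetic difference is that you route the argument through the universal property identifying $(\BX^{\otimes r})'$ with $r$-linear forms, whereas the paper defines $\overline P$ directly on tensor products of basis vectors and extends by linearity.
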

Here we write $\langle \cdot, \cdot \rangle$ to denote the dual pairing of
$(\BX^{\otimes r})'$ and $\BX^{\otimes r}$. We say that $g \in \BX^{\otimes r}$ is symmetric 
if $Sg = g$ where the symmetrisation operator $S$  is defined in  \eqref{eq:symmetry_operator}.

\begin{proof}
Existence: define
$\langle \overline P,  \xi_1 \otimes \ldots \otimes \xi_r \rangle  = \frac{1}{r!} \frac{d}{dt_r} \ldots \frac{d}{dt_1}_{| t_i= 0}
P(\xi(t))$ where $\xi(t) = 
\sum_{i=1}^r  t_i \xi_i$
and where the $\xi_i$ run through a basis. Then  extend $\overline P$ by linearity.
Homogeneity implies that $$\overline P(\xi \otimes \ldots \otimes \xi) = \frac{1}{r!} \frac{d}{dt_r} \ldots \frac{d}{dt_1}_{| t_i= 0} (t_1 + \ldots + t_r)^r P(\xi) =P(\xi).$$
 Uniqueness: if $\overline P, \overline Q \in (\BX^{\otimes r})'$
are symmetric and $\langle \overline P - \overline Q, \xi(t) \otimes \ldots \otimes \xi(t) \rangle = 0$ then applying 
$\frac{d}{dt_r} \ldots \frac{d}{dt_1}_{| t_i= 0}$ we deduce that $\overline P - \overline Q=0$.
\end{proof}

We denote by $\bigoplus_{r=0}^\infty \BX^{\otimes r}$ the space of sequences $(g^{(0)}, g^{(1)}, \ldots)$
with $g^{(r)} \in \BX^{\otimes r}$ for which only finitely many of the $g^{(r)}$ are non-zero.
By writing a general polynomial $P$ as a sum of homogeneous polynomials
we can associate to $P$ a linear map on $\bigoplus_{r=0}^\infty \BX^{\otimes r}$
via\footnote{Actually polynomials act even more naturally on the space of symmetric tensor products
$\oplus_{m=0}^\infty \odot_m \BX$, see Chapters 1.9 and 1.10 in \cite{Fed69}, but the easier duality with $\oplus_{r=0}^\infty \BX^{\otimes r}$
is good enough of us.}
\begin{equation} \label{eq:duality_pairing_P_g}  \langle \overline P, g \rangle = \sum_{r=0}^\infty \langle 
  \overline {P^{(r)}}, g^{(r)}  \rangle.
\end{equation}
Here $\BX^{\otimes 0} := \R$ and $P^{(0)}$ is the constant term $P(0)$. 
We will define a norm on $\bigoplus_{r=0}^\infty \BX^{\otimes r}$. This induces a  norm on $P$ by duality.
The point is to define  the  norm on $\bigoplus_{r=0}^\infty \BX^{\otimes r}$ in such a way that the norm on $P$ 
enjoys the product property: $\| PQ \| \le \|P \| \, \|Q\|$.

Here we consider only finite dimensional spaces $\BX_i$. The study   of tensor products
of (infinite dimensional) Banach spaces has been a very active field of research
beginning with Grothendieck's seminal work \cite{Gro53}, see, e.g., \cite{DF93,Flo99,Rys02,DFS08,CG11,Pis12}.

Let $\BX_i$ be finite dimensional normed vector spaces over $\R$ and with dual spaces $\BX'_i$.
We say that an element of $\xi \in \BX_1 \otimes \ldots \otimes \BX_r$ is \textit{simple} if
$$ \xi = \xi_{1} \otimes \ldots \otimes \xi_{r}  \quad \hbox{with $\xi_i \in \BX_i$} \quad 
\hbox{and we define} \quad
 \| \xi_{1} \otimes \ldots \otimes \xi_{r} \| = \| \xi_{1}\|  \ldots \| \xi_{r}\|.$$
Note that by definition of the tensor product every element of 
$ \BX_1 \otimes \ldots \otimes \BX_r $ can be written as a finite combination of simple elements. 
We recall the definition of two standard norms on tensor products.

\begin{definition}  \label{de:projective_injective_norm}
The  projective norm (or largest reasonable norm)  on $ \BX_1 \otimes \ldots \otimes \BX_r $  is given by
$$ \| g \|_\wedge = \inf \left\{ \sum_{n}    \, \| \xi_n \|   : g  = \sum_n  \xi_n \,  \, 
\hbox{with  $\xi_n$ simple} \, \,  \right\}
$$
Here   the infimum  is taken over  finite sums. 
The injective norm (or smallest reasonable norm) is given by 
$$ \| g\|_{\vee} = \sup \left\{ 
\langle \xi'_1 \otimes  \ldots \otimes \xi'_r, g \rangle: \| \xi_i'\|_{\BX'_i} \le 1 \hbox{ for all $i =1, \ldots, r$}   \right \}.
$$
\end{definition}
There is a third important norm based on the Hilbertian structure, but we will not use this here.

One easily sees that 
\begin{equation} \label{eq:vee_less_wedge}
 \| g\|_\vee \le \|g\|_\wedge
\quad \hbox{and} \quad 
 \| \xi_{1} \otimes \ldots \otimes \xi_{r} \|_\vee = \| \xi_{1} \otimes \ldots \otimes \xi_{r} \|_\wedge
=  \| \xi_{1}\|  \ldots \| \xi_{r}\|.
\end{equation} 
Therefore for simple elements we  write $\| g\|$ instead of $\| g\|_\vee$ or $\| g \|_\wedge$. 

\begin{example} \label{ex:Rp_with_l_infty} We show that the injective norm on $(\R^p, |\cdot|_\infty)^{\otimes r}$ is 
the $\ell_\infty$ norm and the projective norm on $(\R^p, |\cdot|_1)^{\otimes r}$ is the $\ell_1$ norm.

Let $e_1, \ldots, e_p$ be the standard basis of $\R^p$. For $\p = \sum_{j=1}^p \p_j e_j$ set
$|\p|_\infty = \max_{1 \le j \le p} |\p_j|$ and consider $\BX = (\R^p, | \cdot|_\infty)$.
Denote the dual basis by $e'_j$, i.e.
$e'_j(\p) = \p_j$. 
Then the dual space consists of functionals of the form $\ell  =\sum_{j=1}^p a_j e'_j$ and the dual norm is
given by $|\ell |_{\BX'} = |a|_1 = \sum_{j=1}^p |a_j|$. Thus $\BX'$ is isometrically isomorphic to
$(\R^p, |\cdot|_1)$.
Let $E = \{1, \ldots, p\}$. Then an element $g \in \BX^{\otimes r}$ can be identified with an element
of $\R^{E^r}$ via $g = \sum_{(j_1, \ldots, j_r) \in E^r} g_{j_1 \ldots j_r} \, e_{j_1} \otimes \ldots \otimes e_{j_r}$.
Similarly $L \in  (\BX')^{\otimes r}$ can be uniquely expressed as
$L = \sum_{(j_1, \ldots, j_r) \in E^r} a_{j_1 \ldots j_r}  \, e'_{j_1} \otimes \ldots \otimes e'_{j_r}$.
We claim that
\begin{align}
\| g\|_\vee  = & \, |g|_\infty := \max_{(j_1, \ldots, j_r) \in E^r} |g_{j_1 \ldots j_r}|, 
       \label{eq:injective_Rp}\\
\|L\|_\wedge = & \, |L|_1 := \sum_{(j_1, \ldots, j_r) \in E^r} |a_{j_1 \ldots j_r}|.
\label{eq:projective_Rp_prime}
\end{align}
Indeed $\| L \|_\wedge \le |L|_1$ since $e'_{j_1} \otimes \ldots \otimes e'_{j_r}$ is simple. 
On the other hand for every simple $L = l_1 \otimes \ldots \otimes l_r$  with $l_i = \sum_{j_i=1}^p a^{(i)}_{j_i} e'_{j_i}$,
since   $L = \sum_{(j_1, \ldots, j_r) \in E^r} \bigl(\prod_{i=1}^r a^{(i)}_{j_i} \bigr) \, e'_{j_1} \otimes \ldots \otimes e'_{j_r}$, 
we have
\begin{align}  \label{eq:est_product_simple_Rp}
|L|_1 =\hspace{-.2cm} \sum_{(j_1, \ldots, j_r) \in E^r} \Bigl|\prod_{i=1}^r  a^{(i)}_{j_i} \Bigr|=
\hspace{-.2cm}\sum_{(j_1, \ldots, j_r) \in E^r} \prod_{i=1}^r  \abs{a^{(i)}_{j_i}}= \prod_{i=1}^r   \sum_{j_i=1}^p |a^{(i)}_{j_i}| =   \prod_{i=1}^r |\ell_i|_{\BX'}= \| L \|_\wedge.
\end{align}
Thus $|L|_1 = \|L\|_\wedge$ for all simple $L$ and by definition of $\| \cdot \|_\wedge$ this implies
$|L|_1 \le\|L \|_\wedge$ for all $L$. 

To prove    \eqref{eq:injective_Rp} we first note that
\begin{align*}
\pm g_{j_1 \ldots j_r} = \langle \pm e'_{j_1} \otimes \ldots \otimes e'_{j_r}, g \rangle \le \| g\|_\vee
\end{align*}
and hence $|g|_\infty \le \|g\|_\vee$. 
To prove the converse inequality we note that  for $l_i \in \BX'$ as above  using $\langle e'_{j},e_k\rangle= \delta_{j,k}$ and thus
\begin{align*}
 \langle \ell_1 \otimes \ldots \otimes \ell_r, g \rangle 
=  \sum_{(j_1, \ldots, j_r) \in E^r} g_{j_1 \ldots j_r} \prod_{i=1}^r  a^{(i)}_{j_i} \, \,  
\le |g|_\infty\prod_{i=1}^r   \sum_{j_i=1}^p |a^{(i)}_{j_i}| = |g|_\infty \prod_{i=1}^r \| \ell_i \|_{\BX'}.
\end{align*}
Thus  $\| g \|_\vee \le |g|_\infty$.
\end{example}
\bigskip

Define  dual norms on $(\otimes_{i=1}^r \BX_i)'$ by
\begin{align}\begin{split}
 \|L\|'_\vee : &= \sup \{ \langle L, g \rangle : g \in \otimes_{i=1}^r \BX_i , \, \| g \|_\vee \le 1\},
\\ 
 \|L\|'_\wedge : &= \sup \{ \langle L, g \rangle : g \in \otimes_{i=1}^r \BX_i, \, \| g \|_\wedge \le 1\}.
\end{split}\end{align}
The dual space $(\otimes_{i=1}^r \BX_i)'$ can be identified with
$(\otimes_{i=1}^r \BX'_i)$. Indeed, let $\xi'_i \in \BX'_i$, let $\xi_i$ run through a basis
of $\BX_i$ and define
$$ 
\iota(\xi'_1 \otimes \ldots \otimes \xi'_r)( \xi_{1} \otimes \ldots \otimes \xi_{r})= \prod_{i=1}^r   \langle \xi'_i, \xi_{i} \rangle.
$$
By linearity $\iota(\xi'_1 \otimes \ldots \otimes \xi'_r)$ can be extended to a linear functional on 
$\otimes_{i=1}^r \BX_i$, i.e., to an element of $(\otimes_{i=1}^r \BX_i)'$. Now let $\xi'_i$ run through a basis of $\BX'_i$.
Then $\iota$ can be extended to a unique  linear map from $(\otimes_{i=1}^r \BX'_i)$ to $(\otimes_{i=1}^r \BX_i)'$
and one easily checks that $\iota$ is injective and hence bijective since both spaces have the same dimension. 
With this identification and using the fact that the closed unit ball in the projective norm is the convex hull
$C = \mathop{\mathrm{conv}}(\{ \xi : \hbox{$\xi$ simple, $\| \xi\| \le 1 $} \})$, the Hahn-Banach separation theorem
and the fact that for finite dimensional spaces $\BX'' = \BX$  one easily verifies that
\begin{equation}  \label{eq:dual_tensor_norms}
\| L \|'_\wedge = \|L\|_\vee 
 \quad \hbox{and}  \quad
  \| L\|'_\vee = \| L \|_\wedge.
\end{equation}
One can also easily check that the projective and the injective norm are associative with respect to iterated tensorisation.

\begin{lemma} 
Assume that $\square = \vee$ or $\square = \wedge$. Then the following properties hold
\begin{enumerate}
\item (Tensorisation estimate) 
For $g \in {\BX}^{\otimes r}$, $h \in {\BX}^{\otimes s}$ and $L \in ({\BX}^{\otimes r})'$,
$M \in (\BX^{\otimes r})'$,
\begin{equation}  \label{eq:tensor_gen_consistency_primal}
 \| g \otimes h \|_\square  \le   \| g \|_\square    \| h \|_\square    \quad  \hbox{and} \quad 
\| L \otimes M \|'_\square \le \| L \|'_{\square} \, \| M \|'_\square.
\end{equation}
\item (Contraction estimate) 
For $L \in (\BX^{\otimes (r+s)})'$ and $h \in \BX^{\otimes s}$ define $M \in (\BX^{\otimes r})' $
 by
 $ \langle M, g \rangle  = \langle L, (g \otimes h)\rangle$.
 Then 
 \begin{equation}  \label{eq:tensor_contraction}
 \| M \|'_\square  \le   \| L \|'_\square    \, \| h \|_{\square}.
\end{equation} 
\end{enumerate}
\end{lemma}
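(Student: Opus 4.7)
The plan is to handle the four claims by exploiting the duality identities $\|L\|'_\wedge = \|L\|_\vee$ and $\|L\|'_\vee = \|L\|_\wedge$ already recorded in \eqref{eq:dual_tensor_norms}. These identities reduce the two dual tensorisation estimates to the primal ones applied on the finite-dimensional space $\BX'$, using the isometric identification $\iota$ between $(\BX^{\otimes r})'$ and $(\BX')^{\otimes r}$ discussed before the lemma (with the factorisation $\iota(\xi'_1\otimes\cdots\otimes\xi'_{r+s}) = \iota(\xi'_1\otimes\cdots\otimes\xi'_r)\otimes \iota(\xi'_{r+1}\otimes\cdots\otimes\xi'_{r+s})$ at the level of simple tensors). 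Thus it suffices to prove the primal tensorisation $\|g\otimes h\|_\square \le \|g\|_\square\,\|h\|_\square$ for both $\square\in\{\vee,\wedge\}$ and then deduce the contraction estimate from it.

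For $\square=\wedge$ I would argue directly from the definition. Fix $\varepsilon>0$ and pick finite simple decompositions $g=\sum_n \xi_n$ and $h=\sum_m \eta_m$ with $\sum_n\|\xi_n\|\le \|g\|_\wedge+\varepsilon$ and $\sum_m\|\eta_m\|\le \|h\|_\wedge+\varepsilon$. Then $g\otimes h=\sum_{n,m}\xi_n\otimes\eta_m$ is a finite sum of simple elements of $\BX^{\otimes(r+s)}$, each of norm $\|\xi_n\|\,\|\eta_m\|$ by the product property \eqref{eq:vee_less_wedge}, so
\begin{equation*}
\|g\otimes h\|_\wedge \le \sum_{n,m}\|\xi_n\|\,\|\eta_m\| \le (\|g\|_\wedge+\varepsilon)(\|h\|_\wedge+\varepsilon),
\end{equation*}
and sending $\varepsilon\downarrow 0$ gives the claim. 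For $\square=\vee$ I would use that on a simple element of $(\BX')^{\otimes(r+s)}$ the dual pairing with $g\otimes h$ factors,
\begin{equation*}
\langle \xi'_1\otimes\cdots\otimes\xi'_{r+s},\,g\otimes h\rangle = \langle \xi'_1\otimes\cdots\otimes\xi'_r,\,g\rangle\,\langle \xi'_{r+1}\otimes\cdots\otimes\xi'_{r+s},\,h\rangle,
\end{equation*}
and is therefore bounded in absolute value by $\|g\|_\vee\|h\|_\vee\prod_{i=1}^{r+s}\|\xi'_i\|_{\BX'}$; taking the supremum over simple tensors of norm at most one gives $\|g\otimes h\|_\vee \le \|g\|_\vee\|h\|_\vee$. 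The dual versions $\|L\otimes M\|'_\square \le \|L\|'_\square\,\|M\|'_\square$ then follow by applying these two primal estimates to the dual space $\BX'$ via \eqref{eq:dual_tensor_norms}, noting that $\square=\wedge$ on $(\BX^{\otimes r})'$ corresponds to $\square=\vee$ on $(\BX')^{\otimes r}$ and vice versa.

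The contraction estimate is then almost automatic: for any $g\in \BX^{\otimes r}$ with $\|g\|_\square\le 1$, the primal tensorisation just proved gives $\|g\otimes h\|_\square \le \|h\|_\square$, hence
\begin{equation*}
\|M\|'_\square = \sup_{\|g\|_\square\le 1}\langle M,g\rangle = \sup_{\|g\|_\square\le 1}\langle L,\,g\otimes h\rangle \le \|L\|'_\square\,\|h\|_\square,
\end{equation*}
which is exactly \eqref{eq:tensor_contraction}. There is no serious obstacle in the argument; the only bookkeeping item worth attending to is verifying that the identifications $(\BX^{\otimes r})' \simeq (\BX')^{\otimes r}$ intertwine tensorisation on the $L$-side with tensorisation of simple dual tensors, so that the duality \eqref{eq:dual_tensor_norms} really can be applied componentwise in the dual tensorisation step.
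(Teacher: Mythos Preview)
Your proof is correct and follows essentially the same approach as the paper: the primal tensorisation for $\wedge$ via $\varepsilon$-approximate simple decompositions, for $\vee$ via factorisation of the pairing on simple dual tensors, the dual tensorisation by applying the primal result on $\BX'$ through \eqref{eq:dual_tensor_norms}, and the contraction estimate as an immediate consequence of primal tensorisation plus the definition of the dual norm. The only cosmetic difference is the order in which you treat $\vee$ and $\wedge$.
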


\begin{proof}
To prove the first estimate in  \eqref{eq:tensor_gen_consistency_primal} for $\square = \vee$,
assume that $\| \xi'_i\| \le 1$
for $i \in \{ 1, \ldots, r+s\}$. Then
\begin{align*}
& \, \langle \xi'_1 \otimes \ldots  \otimes \xi'_{r+s} , g \otimes h \rangle
=  \langle \xi'_1 \otimes \ldots \otimes \xi'_r, g \rangle  \, \,  \langle \xi'_{r+1} \otimes \ldots \otimes \xi'_{r+s}, h \rangle
\le \|g\|_\vee     \, \, \|h\|_\vee.
\end{align*}
Next we
 consider $\square = \wedge$. 
For each $\delta >0$ there  exist $g_i, h_k$ simple such that
$$ \sum_i \   \|g_i \| \le (1+ \delta) \|g\|_\wedge, \quad 
 \sum_k     \|h_k \| \le (1+ \delta) \|h\|_\wedge.$$
 Now $g_i \otimes h_k$ is simple and thus
 $ \| g_i \otimes h_k\|_\wedge =  \| g_i \|  \, \| h_k \|$.
The assertion follows from the triangle inequality and  fact that
 $$ \sum_i \sum_k   \| g_i\| \, \| h_k \| =  \sum_i \| g_i\|   \, \, 
  \sum_k   \| h_k\|   \le (1+\delta)^2 \|g\|_\vee \, \| h\|_\vee.$$
 
 The second estimate in  \eqref{eq:tensor_gen_consistency_primal} follows from the first (applied to 
 $\BX'$ instead of $\BX$) and   \eqref{eq:dual_tensor_norms}.
Finally \eqref{eq:tensor_contraction} follows from  \eqref{eq:tensor_gen_consistency_primal}
and the definition of the dual norm.
\end{proof}

On $\BX^{\otimes r}$ we define the symmetrisation operator by
\begin{align}  \label{eq:symmetry_operator}
 S(\xi_1  \otimes \ldots \otimes   \xi_r) = \frac{1}{r!} \sum_{\pi \in S_r}  \xi_{\pi(1)} \otimes \ldots \otimes \xi_{\pi(r)},
 \end{align}
where the sum runs over all permutation of the set $\{1, \ldots, r\}$,
and extension by linearity.
Similarly we can define $S$ on $\BXp^{\otimes r} = (\BX^{\otimes r})'$.
Then
\begin{equation} \label{eq:dual_symmetrisation}
\langle S L, g \rangle = \langle L,  Sg \rangle. 
\end{equation}
Indeed the identity holds if $g$ is simple and hence by linearity for all $g$.

\begin{lemma} \label{le:symmetry_estimate}
For $\square = \vee$ or $\square = \wedge$ we have
\begin{equation}  \label{eq:symmetry_estimate_primal}
 \| Sg \|_\square \le \|g \|_\square \quad \forall g \in \BX^{\otimes r}
 \quad \hbox{and}
\quad 
  \| SL \|'_\square \le \|L \|'_\square   \quad \forall L \in \BX^{\otimes r}.
  \end{equation}
\end{lemma}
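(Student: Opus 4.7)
The key observation is that each permutation $\pi \in S_r$ induces a linear map
$\Pi_\pi : \BX^{\otimes r} \to \BX^{\otimes r}$ defined on simple tensors by
$\Pi_\pi(\xi_1 \otimes \cdots \otimes \xi_r) = \xi_{\pi(1)} \otimes \cdots \otimes \xi_{\pi(r)}$,
and $\Pi_\pi$ acts isometrically with respect to both $\|\cdot\|_\wedge$ and $\|\cdot\|_\vee$.
The lemma will then follow from the triangle inequality applied to
$Sg = \frac{1}{r!} \sum_{\pi \in S_r} \Pi_\pi g$, giving $\| Sg\|_\square \le \|g\|_\square$.

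For the projective norm, the isometry of $\Pi_\pi$ is essentially built into the definition.
Indeed, if $g = \sum_n \xi_n$ with $\xi_n$ simple and $\sum_n \| \xi_n\| \le (1+\delta) \|g\|_\wedge$,
then each $\Pi_\pi \xi_n$ is again simple with the same norm $\|\Pi_\pi \xi_n\| = \|\xi_n\|$
(since the factors $\|\xi_{n,i}\|$ are merely reordered). Hence $\Pi_\pi g = \sum_n \Pi_\pi \xi_n$
is a representation of $\Pi_\pi g$ as a sum of simple tensors with the same total norm,
so $\|\Pi_\pi g\|_\wedge \le (1+\delta) \|g\|_\wedge$. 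Letting $\delta \downarrow 0$ and averaging
over $\pi$ yields the bound $\|Sg\|_\wedge \le \|g\|_\wedge$.

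For the injective norm I would use duality. The permutation $\pi$ also acts on $(\BX')^{\otimes r}$
by the analogous formula, and one checks by testing on simple tensors that
$\langle \Pi_\pi \xi'_1 \otimes \cdots \otimes \xi'_r, g\rangle = \langle \xi'_1 \otimes \cdots \otimes \xi'_r, \Pi_{\pi^{-1}} g\rangle$,
hence by linear extension $\langle SL', g\rangle = \langle L', S g\rangle$ for all $L' \in (\BX')^{\otimes r}$
and $g \in \BX^{\otimes r}$ (this is \eqref{eq:dual_symmetrisation}). Now for any simple
$L' = \xi'_1 \otimes \cdots \otimes \xi'_r$ with $\|\xi'_i\| \le 1$, each permuted tensor
$\Pi_\pi L'$ is again a simple tensor with the same unit bounds on its factors, so
$|\langle \Pi_\pi L', g\rangle| \le \|g\|_\vee$. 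Averaging and using the triangle inequality gives
$|\langle L', Sg\rangle| = |\langle SL', g\rangle| \le \|g\|_\vee$, and taking the supremum over such $L'$
yields $\|Sg\|_\vee \le \|g\|_\vee$.

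For the dual estimates in \eqref{eq:symmetry_estimate_primal}, the cleanest route is to invoke
\eqref{eq:dual_tensor_norms}, which identifies $\|\cdot\|'_\vee$ with $\|\cdot\|_\wedge$ and
$\|\cdot\|'_\wedge$ with $\|\cdot\|_\vee$ on $(\BX^{\otimes r})' \simeq (\BX')^{\otimes r}$.
Since symmetrisation on the dual space is the same operation as on $(\BX')^{\otimes r}$,
the primal estimates applied to $\BX'$ in place of $\BX$ give the dual estimates.
No step here should present a real obstacle; the only point that requires a small verification
is the compatibility of $S$ with the dual pairing, which is why I would record
\eqref{eq:dual_symmetrisation} first and then lift everything from the primal bounds.
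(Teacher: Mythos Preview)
Your proof is correct and follows essentially the same approach as the paper: both arguments rest on the observation that permutations map simple tensors to simple tensors of the same norm, so $S$ sends a simple element of norm $1$ to a convex combination of such, and the injective case is handled by dualising via \eqref{eq:dual_symmetrisation}. The only minor difference is in how you derive the dual bound $\|SL\|'_\square \le \|L\|'_\square$: the paper gets it directly from \eqref{eq:dual_symmetrisation} and the primal estimate (since $\langle SL, g\rangle = \langle L, Sg\rangle$ and $\|Sg\|_\square \le \|g\|_\square$), whereas you go through \eqref{eq:dual_tensor_norms} and apply the primal estimate on $\BX'$; both routes are equally short and valid.
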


\begin{proof}
The second assertion follows from the first and \eqref{eq:dual_symmetrisation}.
To prove the first assertion  for $\square = \wedge$ it suffices to note that $S$ maps
a simple element of norm $1$  to a convex combination of simple elements
of norm $1$. 
For $\square = \vee$ we use \eqref{eq:dual_symmetrisation} to get
$ \langle \xi'_1 \otimes \ldots \xi'_r,  Sg\rangle  = \langle S (\xi'_1 \otimes \ldots \xi'_r),  g\rangle.$
Now we use again that $S$ maps
a simple element of norm $1$  to a convex combination of simple elements
of norm $1$.  
\end{proof}

We now define a norm on $\oplus_{r=0}^\infty \BX^{\otimes r}$ by
\begin{equation}    \label{eq:norm_Phi_untruncated}
\| g \|_{\BX, \square} := \sup_{r }  \| g^{(r)} \|_{\BX, \square}
\end{equation}
Here $\| g^{(0)}\| = |g^{(0)}|_\R$ where $| \cdot |_\R$ is the absolute value on $\R$. 
 For a polynomial  $P = \sum_r P^{(r)}$ written as a sum of homogeneous polynomials of degree $r$, 
the norm is defined by 
\begin{equation}  \label{eq:norm_g_all_degrees}
\| P \|'_{\BX, \square} = \sup \{ \langle \overline P, g \rangle  : \| g \|_{\BX, \square}  \le 1 \}
\end{equation}
 where $\langle \overline P, g \rangle$ was defined in \eqref{eq:duality_pairing_P_g}.
We have
\begin{equation}  \label{eq:norm_P_decompose_untruncated}
\| P \|'_{\BX, \square} = \sum_{r=0}^\infty   \| P^{(r)}\|'_{\BX, \square} = \sum_{r_0}^\infty  \| \overline{P^{(r)}}\|'_\square.
\end{equation}
Similarly we can define a seminorm by considering only test functions $g$ in the space
\begin{equation} 
 \Phi_\pn := \{ g \in \oplus_{r=0}^\infty \BX^{\otimes r} : g^{(r)} = 0 \quad \forall r > \pn \}.
\end{equation}
Then
\begin{equation}  \label{eq:norm_P_decompose}
\| P \|'_{\pn, \BX, \square} := \sup \{ \langle P, g \rangle : g \in \Phi_\pn, \, \| g \| \le 1\} = \sum_{r=0}^\pn   \| P^{(r)}\|'_{\BX, \square}.
\end{equation}
This defines is a seminorm on the space of all polynomials and a norm on  polynomials of degree $\le \pn$. 
When $\pn$ and $\BX$ and $\square$ are clear we simply write
$ \| P \| = \| P \|'_{\pn, \BX, \square}$.

\begin{proposition}[Product property]  \label{pr:product_estimate_new}  
Let $\pn \in \NN_0 \cup \{\infty\}$.
Assume that $\square \in \{ \vee, \wedge\}$.
 Let $P$ and $Q$
be polynomials on $X$. Then
\begin{equation}
\| P Q\|  \le \|P \|  \, \| Q\|.
\end{equation}
\end{proposition}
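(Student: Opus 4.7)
The plan is to reduce the product estimate to the homogeneous case, use the identification of polynomials with symmetric tensors from Lemma~\ref{le:extension_polynomial}, and then combine the tensorisation estimate \eqref{eq:tensor_gen_consistency_primal} with the symmetrisation estimate \eqref{eq:symmetry_estimate_primal}.

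First I would decompose $P = \sum_r P^{(r)}$ and $Q = \sum_s Q^{(s)}$ into homogeneous components and observe that the product expands as $(PQ)^{(t)} = \sum_{r+s=t} P^{(r)} Q^{(s)}$. In view of the additive formula \eqref{eq:norm_P_decompose} for $\| \cdot \|'_{\pn, \BX, \square}$, it therefore suffices to prove the homogeneous bound $\| P^{(r)} Q^{(s)} \|'_\square \le \| P^{(r)} \|'_\square \, \|Q^{(s)} \|'_\square$, since then
\begin{equation*}
\| PQ \| = \sum_{t=0}^{\pn} \| (PQ)^{(t)} \|'_\square \le \sum_{t=0}^{\pn} \sum_{r+s=t} \| P^{(r)} \|'_\square \, \| Q^{(s)} \|'_\square \le \Bigl( \sum_{r=0}^{\pn} \| P^{(r)}\|'_\square \Bigr) \Bigl( \sum_{s=0}^{\pn} \| Q^{(s)}\|'_\square \Bigr) = \|P\| \, \|Q\|,
\end{equation*}
where the last inequality uses that the constraint $r+s \le \pn$ picks out a subset of the index pairs in the full product of sums (valid because all summands are nonnegative).

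Next I would establish the homogeneous estimate. If $P^{(r)}$ is $r$-homogeneous and $Q^{(s)}$ is $s$-homogeneous, then $P^{(r)}(\xi) Q^{(s)}(\xi) = \langle \overline{P^{(r)}}, \xi^{\otimes r}\rangle \langle \overline{Q^{(s)}}, \xi^{\otimes s}\rangle = \langle \overline{P^{(r)}} \otimes \overline{Q^{(s)}}, \xi^{\otimes(r+s)}\rangle$. Since $S( \overline{P^{(r)}} \otimes \overline{Q^{(s)}})$ is a symmetric element of $(\BX^{\otimes(r+s)})'$ with the same evaluation on diagonal tensors $\xi^{\otimes(r+s)}$ (by \eqref{eq:dual_symmetrisation} and $S \xi^{\otimes(r+s)} = \xi^{\otimes(r+s)}$), the uniqueness part of Lemma~\ref{le:extension_polynomial} yields $\overline{P^{(r)} Q^{(s)}} = S(\overline{P^{(r)}} \otimes \overline{Q^{(s)}})$.

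Now applying the symmetrisation bound \eqref{eq:symmetry_estimate_primal} followed by the tensorisation estimate from \eqref{eq:tensor_gen_consistency_primal} for the dual norm gives
\begin{equation*}
\| \overline{P^{(r)} Q^{(s)}} \|'_\square = \| S(\overline{P^{(r)}} \otimes \overline{Q^{(s)}}) \|'_\square \le \| \overline{P^{(r)}} \otimes \overline{Q^{(s)}} \|'_\square \le \| \overline{P^{(r)}} \|'_\square \, \| \overline{Q^{(s)}} \|'_\square,
\end{equation*}
which is exactly the desired homogeneous product estimate. Combined with the first paragraph this completes the proof. The only mildly delicate point is the identification $\overline{P^{(r)} Q^{(s)}} = S(\overline{P^{(r)}} \otimes \overline{Q^{(s)}})$, which relies crucially on the uniqueness clause in Lemma~\ref{le:extension_polynomial}; everything else is a clean application of the two tensor-norm properties already recorded.
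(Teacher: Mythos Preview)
Your proof is correct and follows essentially the same approach as the paper: reduce to the homogeneous case, identify $\overline{P^{(r)}Q^{(s)}} = S(\overline{P^{(r)}}\otimes\overline{Q^{(s)}})$ via the uniqueness clause of Lemma~\ref{le:extension_polynomial}, and then apply the symmetrisation bound \eqref{eq:symmetry_estimate_primal} together with the tensorisation estimate \eqref{eq:tensor_gen_consistency_primal}. The only cosmetic difference is the order of presentation (you state the reduction to the homogeneous case first, the paper proves the homogeneous case first), but the argument is the same.
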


\begin{proof} We first show the assertion for an $r$-homogeneous polynomial $P$ and 
a $(s-r)$-homogeneous polynomial $Q$. If $r=0$ or $s-r=0$ the assertion is clear. 
We hence assume $r \ge 1$ and $s -r \ge 1$. 
We first note that 
 $\overline {PQ} = S (\overline  P  \otimes  \overline Q)$
where $S$ is the symmetrisation operator introduced above. 
Indeed both sides are symmetric elements of $\BX^{\otimes k}$ and they
agree on $\xi \otimes \ldots \otimes \xi$. 
Thus the desired identity  follows from the uniqueness statement in Lemma~\ref{le:extension_polynomial}.
Now  it follows from  the second estimate  in \eqref{eq:tensor_gen_consistency_primal}
and   \eqref{eq:symmetry_estimate_primal} 
that 
$
\| PQ \|  \le \| \overline P \|'_\square \, \| \overline Q\|'_\square = \| P \|  \, \| Q\| $.
This finishes the proof for homogeneous polynomials. 

 Finally consider general $P, Q$ and their decompositions into homogeneous
polynomials $P = \sum_r P^{(r)}$, $Q = \sum_s Q^{(s)}$.
Then it follows from \eqref{eq:norm_P_decompose}
 and the triangle inequality 
that 
$$ \| PQ \|  \le \sum_{s=0}^{\pn}  \sum_{r=0}^s  \| P^{(r)} Q^{(s-r)} \|  \le 
\sum_{s=0}^{\pn}  \| P^{(r)} \|   \|Q^{(s-r)} \|
\le \|P \|   \, \|Q\|.$$
\end{proof}

\section{Norms on polynomials in several variables}  \label{se:polynomials_several_variables}
The product property for polynomials can be easily extended to 
polynomials in several variables. To simplify the notation we illustrate this
for the case of two variables. A polynomial $P(x,y)$ on $\BX \times \BY$
which is $r$-homogeneous in $x$ and $s$-homogeneous in $y$
can be identified with an element $\overline P$ of $\BX^{ \otimes r} \otimes \BY^{\otimes s}$
which is symmetric in the sense that
$$ \overline P(\xi_{\pi(1)} \otimes \ldots \otimes \xi_{\pi(r)} \otimes \eta_{\pi'(1)} \otimes \ldots \otimes \eta_{\pi'(s)})
= \overline P(\xi_{1} \otimes \ldots \otimes \xi_{r} \otimes \eta_{1} \otimes \ldots \otimes \eta_{s})
$$
for all permutations $\pi$ and $\pi'$. 
We define a space of test functions 
$$\Phi_{\pn, s_0} := \{ 
 g \in \oplus_{r,s \in \NN_0} \BX^{\otimes r} \otimes \BY^{\otimes s} : g^{(r,s)}= 0 \quad \hbox{if $r > \pn$ or $s > s_0$} \}$$
 with the norm
 $$ \|g \|_\square := \sup_{r,s \in \NN}   \| g^{(r,s)} \|_{\BX, \BY, \square}.$$
 Decomposing a general polynomial in homogeneous pieces $P^{(r,s)}$ we define the pairing
 $$ \langle P, g \rangle = \sum_{r,s \in \NN_0}  \langle \overline{P^{(r,s)}}, g^{(r,s)}  \rangle$$
 and the dual norm
 $$ \|P\|'_\square = \|P \|'_{\pn, s_0, \BX, \BY, \square}  =
 \sup \{ \langle P, g \rangle : g \in \Phi_{\pn, s_0}, \, \| g\|_{ \square} \}.$$
 Then 
 $$ \| P \|'_\square = \sum_{r=0}^\pn \sum_{s=0}^{s_0}  \| P^{(r,s)} \|'_\square =  \sum_{r=0}^\pn \sum_{s=0}^{s_0}  \|\overline{ P^{(r,s)}} \|'_\square$$
 where $P^{(r,s)}$ are the $(r,s)$-homogeneous pieces of $P$. 
 
 For $M \in (\BX^{\otimes r_1} \otimes \BY^{\otimes s_1})'$ and $L \in (\BX^{\otimes r_2} \otimes \BY^{\otimes s_2})'$
 we define the tensor product $M \otimes L$ in $(\BX^{\otimes r_1 + r_2} \otimes \BY^{\otimes s_1 + s_2})'$
 by
 \begin{align*} & \,  \langle M \otimes L, \xi_1 \otimes \ldots \otimes \xi_{r_1+r_2} \otimes \eta_1 \otimes \ldots \otimes \eta_{s_1 + s_2} \rangle\\
 = & \, 
\langle M,  \xi_1 \otimes \ldots \otimes \xi_{r_1} \otimes \eta_1 \otimes \ldots \otimes \eta_{s_1} \rangle \, \, 
\langle L,  \xi_{r_1+1} \otimes \ldots \otimes \xi_{r_1+r_2} \otimes \eta_{s_1 +1} \otimes \ldots \otimes \eta_{s_1+s_2} \rangle.
\end{align*}
Then the same argument as before shows that 
\begin{align*} \| M \otimes L \|_\square \le \|M \|_\square \, \| L \|_\square \quad \hbox{for $\square \in \{ \vee, \wedge\}$}.
\end{align*}
We also define a symmetrisation operator $S_{\BX, \BY}$  which symmetrises separately in the variables on $\BX$ and the ones in $\BY$,
i.e., 
\begin{align*} & \, 
S (\xi_{1} \otimes \ldots \otimes \xi_{r} \otimes \eta_{1} \otimes \ldots \otimes \eta_{s})  
\\ & \hspace{3cm} :=  \frac1{r!} \frac1{s!} \sum_\pi \sum_{\pi'}
(\xi_{\pi(1)} \otimes \ldots \otimes \xi_{\pi(r)} \otimes \eta_{\pi'(1)} \otimes \ldots \otimes \eta_{\pi'(s)}).
\end{align*}
Again it is easy to see that $S$ has norm $1$. 
Thus for two homogeneous polynomials $P$ and $Q$ one sees as before
$$ \| P Q\| = \| S (\overline P \otimes \overline Q)\| \le \| \overline P \| \, \| \overline Q\| = \| P \| \, \| Q\|.$$
Now the product property for polynomials is obtained as before  by decomposing $P$ and $Q$ in $(r,s)$-homogeneous polynomials.

\section{Norms on Taylor polynomials}
\begin{definition} Let $p_0 \in \NN_0$, let $U \subset \BX$ be open and let $F \in C^\pn(U)$. 
For $\p \in U$ denote the Taylor polynomial of $F$ at $\p$ by $\tay_\p F$
and define
\begin{equation} \label{eq:norm_T_phi}
\| F\|_{T_\p} = \| \hbox{$\tay_\p$}  F \|'_{\pn,\BX,  \square}.
\end{equation}
where $\square$ refers to the norm used for the tensor products.
\end{definition}
When the norm on the tensor products is clear we often drop $\square$.

\begin{example}\label{Ex:T0norm}  Let $\BX = (\R^p, |\cdot|_\infty)$ and set
 $E = \{ 1, \ldots,  p\}$. 
 In   \eqref{eq:injective_Rp} we have seen that the injective norm of $g \in \BX^{\otimes r}$ is given by 
  $\| g \|_\vee = \max_{(j_1, \ldots j_r) \in E^r} |g_{j_1 \ldots j_r}| = | g |_\infty$. 
Let $F \in C^\pn(\BX)$. The Taylor polynomial of order $\pn$ at zero can be written as
$$ P(\p) = \sum_{r=0}^{r_0} \frac{1}{r!}   \sum_{j_1, \ldots, j_r=1}^{p}  \frac{\partial^r F}{\partial \p_{j_1} \ldots \partial \p_{j_r}}(0)  \,  \prod_{i=1}^r  \p_{j_i}
= \sum_{|\gamma|_1 \le \pn}   \frac{1}{\gamma!} \partial^\gamma F(0) \, \,   \p^\gamma$$
where the  last sum  runs over  multiindices $\gamma \in \NN_0^E$ and $|\gamma| := \sum_{j \in E} \gamma(j)$. 
The term corresponding to  $r=0$ is defined as $F(0)$.
We claim that
\begin{equation}  \label{eq:T0_norm_for_Rp_new}
\| F \|_{T_0} = \sum_{r=0}^{\pn}   \frac{1}{r!}   \sum_{j_1, \ldots, j_r=1}^{ p} \left| \frac{\partial^r F}{\partial \p_{j_1} \ldots \partial \p_{j_r}  }(0) \right|
=  \sum_{|\gamma| \le \pn}   \frac{1}{\gamma!} \left| \partial^\gamma F(0) \right|.
\end{equation}
 Indeed it suffices to verify the first identity, the second follows by the usual combinatorics. Denote the middle term in  \eqref{eq:T0_norm_for_Rp_new}
 by $M$. 
Since we use the $\ell_\infty$ norm on $\BX^{\otimes r} = \R^{E^r}$ we get for all $g \in \Phi_\pn$
$$ \langle F, g \rangle_0 =  \sum_{r=0}^{\pn} \frac{1}{r!}   \sum_{j_1, \ldots, j_r=1}^{\p}  \frac{\partial^r F}{\partial \p_{j_1} \ldots \partial \p_{j_r}}(0) \, 
g_{j_1 \ldots j_r} \le M   \sup_{0 \le r \le \pn} | g^{(r)}|_{\infty}  \le M \| g\|_{\BX, \vee}$$
The inequality becomes sharp if we take $g_{j_1  \ldots j_r} = \sgn  \frac{\partial^r F}{\partial \p_{j_1} \ldots \partial \p_{j_r}}(0)$.
This proves  \eqref{eq:T0_norm_for_Rp_new}.
\end{example}

\begin{proposition}[Product property, see \cite{BS15I}, Proposition 3.7]  \label{pr:product_estimate_taylor} 
Let  $U \subset \BX$ be open and let $F \in C^\pn(U)$. Then
$$ \| F G \|_{T_\p} \le \| F \|_{T_\p} \, \| G \|_{T_\p}.$$
\end{proposition}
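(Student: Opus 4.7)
The plan is to reduce the statement to the polynomial product estimate already established in Proposition~\ref{pr:product_estimate_new}. The bridge is the elementary fact that, modulo polynomials of degree strictly greater than $p_0$, the Taylor polynomial of a product is the product of the Taylor polynomials. More precisely, set $P := \tay_\varphi F$ and $Q := \tay_\varphi G$, which are polynomials of degree at most $p_0$. First I would observe that $\tay_\varphi(FG)$ and $PQ$ agree as polynomials when restricted to monomials of total degree $\le p_0$; that is, $PQ - \tay_\varphi(FG)$ is a polynomial all of whose homogeneous components have degree strictly greater than $p_0$. This is just the standard computation that, for $r \le p_0$,
\begin{equation*}
\frac{1}{r!} D^r(FG)(\varphi)(\dot\varphi,\ldots,\dot\varphi) = \sum_{s=0}^{r} \frac{1}{s!(r-s)!} D^sF(\varphi)(\dot\varphi,\ldots)\cdot D^{r-s}G(\varphi)(\dot\varphi,\ldots).
\end{equation*}

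Next I would use the fact that the norm $\|\cdot\|'_{p_0,\mathcal{X},\square}$ is defined by pairing against test functions in $\Phi_{p_0}$, i.e., $g = (g^{(0)},\ldots,g^{(p_0)},0,0,\ldots)$. Since the pairing in \eqref{eq:duality_pairing_P_g} is diagonal in the degree $r$, any homogeneous piece of degree $r > p_0$ contributes zero. Hence
\begin{equation*}
\langle PQ, g\rangle = \langle \tay_\varphi(FG), g\rangle \qquad \text{for all } g \in \Phi_{p_0},
\end{equation*}
which means $\|\tay_\varphi(FG)\|'_{p_0} \le \|PQ\|'_{p_0}$ (in fact the dual norms of the truncations agree). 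Consequently
\begin{equation*}
\|FG\|_{T_\varphi} = \|\tay_\varphi(FG)\|'_{p_0,\mathcal{X},\square} \le \|PQ\|'_{p_0,\mathcal{X},\square}.
\end{equation*}

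Finally I would apply Proposition~\ref{pr:product_estimate_new} to the polynomials $P$ and $Q$, which gives $\|PQ\|'_{p_0,\mathcal{X},\square} \le \|P\|'_{p_0,\mathcal{X},\square} \|Q\|'_{p_0,\mathcal{X},\square} = \|F\|_{T_\varphi} \|G\|_{T_\varphi}$, completing the proof.

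There is no real obstacle here: all the substantive work — symmetrisation, duality, and the interaction with the tensor norm $\square \in \{\vee,\wedge\}$ — has already been carried out in Proposition~\ref{pr:product_estimate_new}. The only point that requires care is the bookkeeping of the truncation: one must verify that passing from the untruncated product of Taylor polynomials to the Taylor polynomial of the product does not increase the dual norm with test functions in $\Phi_{p_0}$, which is immediate from the degree-diagonal structure of the pairing.
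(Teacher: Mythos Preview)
Your proof is correct and follows exactly the same approach as the paper: reduce to Proposition~\ref{pr:product_estimate_new} via the observation that the Taylor polynomial of the product is the truncation at degree $r_0$ of the product of the Taylor polynomials. The paper states this in a single sentence; you have spelled out the truncation bookkeeping explicitly, but the argument is identical.
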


\begin{proof} This follows from Proposition~\ref{pr:product_estimate_new}
and  the fact that the Taylor polynomial of the product is the product of the Taylor polynomials, truncated at  degree $\pn$. 
\end{proof}

 By the considerations in Section~\ref{se:polynomials_several_variables} the product property also holds for 
 polynomials in several variables.

\begin{proposition}[Polynomial estimate, see \cite{BS15I}, Proposition 3.10] \label{pr:polynomial_prop_gen} 
Assume that $\square \in \{ \vee, \wedge\}$. Let $F$ be a polynomial of degree $\overline{r} \le \pn$. 
Then
\begin{equation} \label{eq:polynomial_estimate}
\| F \|_{T_\p} \le (1 + \| \p\|)^{\overline{r}} \| F \|_{T_0}.
\end{equation}
\end{proposition}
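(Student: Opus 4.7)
The plan is to reduce to the homogeneous case and use the tensor-product machinery already developed: the contraction estimate \eqref{eq:tensor_contraction}, the tensorisation bound \eqref{eq:tensor_gen_consistency_primal}, and the fact that the symmetrisation operator has operator norm $1$ (Lemma~\ref{le:symmetry_estimate}).

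First I would decompose $F=\sum_{r=0}^{\overline r} F^{(r)}$ into homogeneous pieces. Since $\overline r\le \pn$, no truncation occurs in forming $\tay_\p F$, so $\|F\|_{T_\p}=\sum_{r=0}^{\overline r}\|F^{(r)}\|_{T_\p}$ and $\|F\|_{T_0}=\sum_{r=0}^{\overline r}\|F^{(r)}\|_{T_0}$ by \eqref{eq:norm_P_decompose}. Thus it suffices to prove $\|F^{(r)}\|_{T_\p}\le (1+\|\p\|)^r\|F^{(r)}\|_{T_0}$ for each $r$, since multiplying by $(1+\|\p\|)^{\overline r-r}\ge 1$ and summing gives the claim.

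Fix an $r$-homogeneous $F^{(r)}$ with associated symmetric multilinear form $\overline{F^{(r)}}\in(\BX^{\otimes r})'$ given by Lemma~\ref{le:extension_polynomial}. Expanding $(\p+\psi)^{\otimes r}$ and using the symmetry of $\overline{F^{(r)}}$ to collapse the sum over subsets of $\{1,\dots,r\}$ of size $k$ into a single symmetrised tensor,
$$
F^{(r)}(\p+\psi)=\sum_{k=0}^r\binom{r}{k}\bigl\langle\overline{F^{(r)}},\,\p^{\otimes(r-k)}\otimes\psi^{\otimes k}\bigr\rangle.
$$
Reading off the $k$-homogeneous piece of $\tay_\p F^{(r)}$, its associated functional in $(\BX^{\otimes k})'$ is the symmetrisation $S\,L_k$ of
$$
L_k:=\binom{r}{k}\,\overline{F^{(r)}}(\p^{\otimes(r-k)}\otimes\,\cdot\,)\in(\BX^{\otimes k})',
$$
where the contraction fills the first $r-k$ slots of $\overline{F^{(r)}}$ by $\p$. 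By the contraction estimate \eqref{eq:tensor_contraction}, the tensorisation bound $\|\p^{\otimes(r-k)}\|_\square\le \|\p\|^{r-k}$ from \eqref{eq:tensor_gen_consistency_primal}, and Lemma~\ref{le:symmetry_estimate},
$$
\|S L_k\|'_\square\le\|L_k\|'_\square\le\binom{r}{k}\,\|\overline{F^{(r)}}\|'_\square\,\|\p\|^{r-k}.
$$
Summing over $k$, applying \eqref{eq:norm_P_decompose} to $\tay_\p F^{(r)}$, using the binomial theorem, and noting that $\|\overline{F^{(r)}}\|'_\square=\|F^{(r)}\|_{T_0}$ (since $\tay_0 F^{(r)}=F^{(r)}$) yields
$$
\|F^{(r)}\|_{T_\p}\le\sum_{k=0}^r\binom{r}{k}\|\p\|^{r-k}\,\|F^{(r)}\|_{T_0}=(1+\|\p\|)^r\,\|F^{(r)}\|_{T_0}.
$$

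The main obstacle is the combinatorial bookkeeping in the expansion step: one needs to verify that the $k$-homogeneous piece of $\psi\mapsto F^{(r)}(\p+\psi)$ really does correspond to $S L_k$ with the binomial weight $\binom{r}{k}$, which relies crucially on the symmetry of $\overline{F^{(r)}}$ that collapses the sum over $\binom{r}{k}$ subsets of slot positions into a single symmetrised partial contraction. Everything else is a routine application of the tensor-product estimates already at our disposal.
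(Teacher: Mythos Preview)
Your proof is correct and follows essentially the same approach as the paper: expand $F^{(r)}(\p+\psi)$ using the symmetry of $\overline{F^{(r)}}$, bound each homogeneous piece via the contraction estimate \eqref{eq:tensor_contraction}, and sum using the binomial theorem. The paper organises the double sum over $r$ and the Taylor degree $s$ slightly differently (it keeps all $F^{(r)}$ together rather than reducing to the homogeneous case first), but the computation is the same.

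One minor slip: you write $\|F\|_{T_\p}=\sum_{r}\|F^{(r)}\|_{T_\p}$ with equality, citing \eqref{eq:norm_P_decompose}. That formula decomposes by the degree of $\tay_\p F$, not by the degree of $F$ at $0$; the degree-$s$ piece of $\tay_\p F$ mixes contributions from several $F^{(r)}$. You only get the inequality $\|F\|_{T_\p}\le\sum_{r}\|F^{(r)}\|_{T_\p}$ (by the triangle inequality applied degree-by-degree), but that is all you need. Also, since $\overline{F^{(r)}}$ is symmetric in all $r$ slots, your $L_k$ is already symmetric and the explicit symmetrisation $S$ is harmless but unnecessary.
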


\begin{proof} Let $F$ be a polynomial of degree $\overline{r}$ with homogeneous pieces $F^{(r)}$. Then
$ F(\p) = \sum_{r=0}^{\overline{r}}  \langle \overline{F^{(r)}}, \p \otimes \ldots \otimes  \p \rangle$.
Set
$ G( \xi) = F(\p + \xi)$. 
For $r > s$ define  
$B_{r}^{(s)} \in (\BX^{\otimes s})'$  by  
 $$\langle B_{r}^{(s)}, g \rangle = \langle \overline{F^{(r)}},  g \otimes 
 \p \otimes  \ldots  \otimes \p \rangle  \quad \text{for all $g \in \BX^{\otimes s}$.}
 $$
  Set $B_{s}^{(s)} = \overline{F^{(s)}}$. 
 Since the $\overline{F^{(r)}}$ are symmetric we get  
 $$ G(\xi) = \sum_{s=0}^{\overline{r}}   \langle B^{(s)}, \xi  \otimes \ldots \otimes  \xi \rangle \quad \hbox{where $B^{(s)} =
   \sum_{r=s}^{\overline{r}}      \binom{r}{s}  B_r^{(s)}$.}
 $$
 Now  by the contraction estimate   \eqref{eq:tensor_contraction} we have
 $ \| B_{r}^{(s)} \|'_\square  \le    \|\overline{F^{(r)}}\|'_\square   \, \| \p\|_{\BX}^{r-s}$.
Thus
 \begin{align*}
  \|G \|_{T_0} 
 & \le  \,  \sum_{s=0}^{\overline{r}} \sum_{r=0}^{\overline{r}} 1_{r \ge s} \binom{r}{s} \,  
  \|\overline{F^{(r)}}|'_{\square} \, \| \p\|_{\BX}^{r-s} \,  \, 1^s
  \\ &
 \le  \sum_{r=0}^{\overline{r}} (1 + \| \p\|_{\BX})^r \,   \|\overline{F^{(r)}}\|'_{\square} \leq (1 + \| \p\|_{\BX})^{\overline{r}} \,  \|F\|_{T_0}.
 \end{align*}
 Since  $ \| F\|_{T_\p} =  \|G \|_{T_0}$ this concludes the proof.
  \end{proof}

\begin{proposition}[Two norm  estimate, see \cite{BS15I}, Proposition 3.11] \label{pr:two_norm_new} 
Let $F \in C^\pn(\BX)$. 
Assume that $\square \in \{ \vee, \wedge\}$.
 Let  $\| \cdot \|_{\BX, \square}$ and $\| \cdot \|_{\BXt, \square}$ 
 denote norms on the tensor products
 $ \BX^{\otimes r}$  based on norms $\| \cdot \|_\BX$ and $\| \cdot \|_{\BXt}$. 
 Denote the corresponding norms of the Taylor polynomials of $F$ by 
 $ \| F\|_{T_\p}$ and $\| F\|_{\tilde T_\p}$.
 Define
 \begin{equation}  \label{eq:rho_n}
  \rho(\overline{r}) := 2 \sup \{  \| g  \|_{\BX, \square} : 
  g \in \BX^{\otimes r}, \,   \| g  \|_{\BXt, \square} \le 1, \, \overline{r}+1 \le r \le \pn \}.
 \end{equation}
 Then, for any ${\overline{r}} < \pn$,
 \begin{equation} \label{eq:two_norm_new}
\| F \|_{\tilde T_\p} \le (1 + \| \p \|_{\BXt})^{{\overline{r}}+1} \, \bigl( \|F\|_{\tilde T_0} + \rho(\overline{r})  \sup_{0 \le t \le 1}  \|F\|_{T_{t \p} }\bigr).
\end{equation}
\end{proposition}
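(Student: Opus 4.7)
The plan is to split $\tay_\p F = Q_{\le \overline{r}} + Q_{> \overline{r}}$ according to the degree in $\xi$ and estimate the two pieces separately. For the high-degree piece, every homogeneous component of $Q_{> \overline{r}}$ lies in $\BX^{\otimes r}$ with $r > \overline{r}$, so the defining inequality of $\rho(\overline{r})$ gives, by duality,
\begin{align*}
\|Q_{> \overline{r}}\|'_{\BXt, \square} \le \tfrac{\rho(\overline{r})}{2} \|Q_{> \overline{r}}\|'_{\BX, \square} \le \tfrac{\rho(\overline{r})}{2} \|F\|_{T_\p} \le \tfrac{\rho(\overline{r})}{2} \sup_{0 \le t \le 1} \|F\|_{T_{t\p}},
\end{align*}
which already absorbs half of the $\rho(\overline{r})$ factor in the target estimate.

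For $Q_{\le \overline{r}}(\xi) = \sum_{r=0}^{\overline{r}} \tfrac{1}{r!} D^r F(\p)(\xi^r)$, I would apply Taylor's theorem of order $\overline{r}-r$ with integral remainder to each map $\p \mapsto \tfrac{1}{r!} D^r F(\p)$, which is legitimate since $F \in C^{\overline{r}+1}$ (as $\overline{r}+1 \le r_0$). Summing over $r$ and reindexing by $k = r+s$, the binomial identity $\sum_{r=0}^k \binom{k}{r} D^k F(0)(\p^{k-r}, \xi^r) = D^k F(0)((\p+\xi)^k)$ collapses the polynomial contributions to the truncated polynomial $P_0(\p + \xi) := \sum_{k=0}^{\overline{r}} \tfrac{1}{k!} D^k F(0)((\p+\xi)^k)$ of degree $\le \overline{r}$ in $\xi$, while the integral remainders leave only $R(\xi) := \sum_{r=0}^{\overline{r}} \tfrac{1}{r!(\overline{r}-r)!} \int_0^1 (1-t)^{\overline{r}-r} D^{\overline{r}+1} F(t\p)(\p^{\overline{r}+1-r}, \xi^r)\,\mathrm{d}t$. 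The polynomial piece is controlled by Proposition~\ref{pr:polynomial_prop_gen} applied to $P_0$: since $P_0$ is a polynomial of degree $\le \overline{r}$, one has $\|P_0(\p + \cdot)\|'_{\BXt, \square} = \|P_0\|_{\tilde T_\p} \le (1+\|\p\|_{\BXt})^{\overline{r}} \|P_0\|_{\tilde T_0} \le (1+\|\p\|_{\BXt})^{\overline{r}} \|F\|_{\tilde T_0}$.

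For the remainder $R$, the key observation is that $D^{\overline{r}+1} F(t\p) \in (\BX^{\otimes(\overline{r}+1)})'$ has tensor degree $\overline{r}+1 > \overline{r}$. Contracting in the $\BXt$-norm via the contraction estimate \eqref{eq:tensor_contraction} and then invoking the dual $\rho$-bound gives
\begin{align*}
\|D^{\overline{r}+1}F(t\p)\|'_{\BXt, \square} \le \tfrac{\rho(\overline{r})}{2} \|D^{\overline{r}+1}F(t\p)\|'_{\BX, \square} \le \tfrac{\rho(\overline{r})}{2}(\overline{r}+1)! \|F\|_{T_{t\p}}.
\end{align*}
The contraction with $\p^{\overline{r}+1-r}$ produces a factor $\|\p\|_{\BXt}^{\overline{r}+1-r}$, the beta integral $\int_0^1 (1-t)^{\overline{r}-r}\,\mathrm{d}t = 1/(\overline{r}-r+1)$ combines with the factorials to leave $\binom{\overline{r}+1}{r}$, and summing over $r$ yields $\sum_{r=0}^{\overline{r}} \binom{\overline{r}+1}{r} \|\p\|_{\BXt}^{\overline{r}+1-r} \le (1+\|\p\|_{\BXt})^{\overline{r}+1}$. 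Hence $\|R\|_{\tilde T_0} \le \tfrac{\rho(\overline{r})}{2}(1+\|\p\|_{\BXt})^{\overline{r}+1} \sup_t \|F\|_{T_{t\p}}$. Combining the three estimates and using $(1+\|\p\|_{\BXt})^{\overline{r}} \le (1+\|\p\|_{\BXt})^{\overline{r}+1}$ for the polynomial contribution and $1 \le (1+\|\p\|_{\BXt})^{\overline{r}+1}$ for $Q_{> \overline{r}}$ produces the claimed inequality with total coefficient $\rho(\overline{r})$ in front of $\sup_t \|F\|_{T_{t\p}}$. The main obstacle will be carrying out the combinatorial collapse cleanly and the factorial bookkeeping: verifying both the binomial identity that reduces the Taylor polynomial sum to $(\p+\xi)^k$, and the matching of the beta integral with the factorials so that the remainder weights assemble exactly into $\binom{\overline{r}+1}{r}$.
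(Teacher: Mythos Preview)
Your proposal is correct and is essentially the same argument as the paper's proof, just organized in a slightly different order: the paper first splits $F = P + R$ with $P = \tay_0^{\overline r} F$ and then separates $\tay_\p R$ by degree, whereas you first split $\tay_\p F$ by degree and then, within the low-degree part, perform the Taylor expansion that produces $P_0(\p+\cdot)$ and the integral remainder. The resulting three pieces---your $P_0(\p+\cdot)$, your $R$, and your $Q_{>\overline r}$---coincide exactly with the paper's $\tay_\p P$, the low-degree part of $\tay_\p R$, and the high-degree part of $\tay_\p R$, and each is estimated in the same way (polynomial estimate, contraction plus $\rho$-bound on $D^{\overline r+1}F$, and direct $\rho$-bound, respectively).
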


\begin{proof} 
Let $P$ denote the Taylor polynomial of order ${\overline{r}}$ of  $F$ computed at $0$. 
By Proposition~\ref{pr:polynomial_prop_gen}  and the trivial estimate $\|P\|_{\tilde T_0}
\le \|F\|_{\tilde T_0}$ we have 
$$ \| P\|_{\tilde T_\p}   \le (1 + \| \p \|_{\BXt})^{{\overline{r}}} \,     \|P\|_{\tilde T_0}  
  \le (1 + \| \p \|_{\BXt})^{{\overline{r}}+1} \,     \|F\|_{\tilde T_0}.$$
Let $R = F- P$. It thus suffices to show that 
\begin{equation}  \label{eq:two_norm_r_new}
 \| R\|_{\tilde T_\p}   \le (1 + \| \p \|_{\BXt})^{{\overline{r}}+1} \, \rho(\overline{r})  \sup_{0 \le t \le 1}  \|F\|_{T_{t \p}}. 
\end{equation}
To abbreviate, set 
$$ M := \sup_{0 \le t \le 1} \| F\|_{T_{t \p}} = \sup_{0 \le t \le 1}  \sum_{s=0}^\pn    \frac{1}{s!} \| D^s F(t \p)\|'_{\BX, \square}.$$
Here we view $D^s F(\p)$ as an element of $(\BX^{\otimes s})'$. 
For $s \ge {\overline{r}}+1$ we have $D^s R = D^s F$ and
\begin{align*}
\langle D^s R(\p), g \rangle  = &  \, \langle D^s F(\p), g \rangle \le
\| D^s F\|'_{\BX, \square}  \, \, \| g\|_{\BX, \square} 
\le  \|D^s F\|'_{\BX, \square}\, \,  \frac12 \,   \rho(\overline{r})  \| g \|_{\BXt, \square} 
\end{align*}
and thus
\begin{equation}  \label{eq:two_norm_high_derivatives_new}
 \sum_{s={\overline{r}}+1}^\pn  \frac{1}{s!}  \| D^sR(\p) \|'_{ \BXt, \square} \le \frac12  \, \rho(\overline{r}) M.
 \end{equation}
For $s \le {\overline{r}}$ we apply the Taylor formula with remainder term in integral form to 
$\langle D^s R, g \rangle$
and get 
\begin{align}  \label{eq:two_norm_low_derivatives}
& \, | \langle D^s R(\p), g \rangle|\\
 \le & \,  \notag
\int_0^1   \frac{1}{({\overline{r}}-s)!} (1-t)^{{\overline{r}}-s} \,  | \langle D^{{\overline{r}}+1} F(t \p), g \otimes  \p \otimes  \ldots \otimes  \p \rangle| \, dt  \\
\le & \,   \notag
M \frac{({\overline{r}}+1)!}{({\overline{r}}+1-s)!}  \, \, 
\| g \otimes  \p \otimes  \ldots \otimes  \p \|'_{\BX, \square}\\
\le & \, \notag
\frac 12   \rho(\overline{r})   M    \frac{({\overline{r}}+1)!}{({\overline{r}}+1-s)!} \,\, 
\| g \otimes  \p \otimes  \ldots \otimes  \p \|'_{\BXt, \square}\\
\le & \, \notag
\frac 12   \rho(\overline{r})   M    \frac{({\overline{r}}+1)!}{({\overline{r}}+1-s)!} \, \,   \|g\|'_{\BXt, \square} \, \| \p\|_{\BXt}^{{\overline{r}}+1-s}.
\end{align}
Thus
\begin{align*}
\frac{1}{s!} \|D^sR(\p)\|'_{\BXt, \square} \le 
\frac 12    \rho(\overline{r})   M    \binom{{\overline{r}}+1}{s}   \| \p \|_{\BXt}^{{\overline{r}}+1-s} \, \, 1^s.
\end{align*}
Summing this from $s=0$ to ${\overline{r}}$ we get
$$ \sum_{s=0}^{\overline{r}} \frac{1}{s!} \|D^s R(\p)\|'_{ \BXt, \square} \le \frac 12    \rho(\overline{r})   M  (1 +   \| \p \|_{\BXt})^{{\overline{r}}+1}.$$ 
Together with \eqref{eq:two_norm_high_derivatives_new} this concludes the proof of 
 \eqref{eq:two_norm_r_new}
\end{proof}

\section{Examples with a more general  injective norm on \texorpdfstring{$\BX^{\otimes \protect\headingr}$}{Br}}  \label{se:standard_example}
We will be mostly interested in the case that the norm on $\BX$ is defined by a specific family of linear functionals
on $\BX$ (abstractly one can always define the norm in this way since for finite dimensional space  $\BX'' = \BX$).
Then the injective norm on $\BX^{\otimes r}$ is defined by the tensor products of these functionals
(see Proposition \ref{pr:concrete_tensor_norms} below).

Let $E$ be a finite set. On $\R^E$ consider 
a finite family $\myB$ of linear functionals $\ell : \R^{E} \to \R$. 
Let 
$$ N_\myB := \{ \p  \in \R^{E} : \ell(\p) = 0 \, \, \forall \ell \in \myB \}.$$
Then the linear functionals induce a norm on 
$ \BX := \R^{E} / N_{\myB}$,
namely
$$ \| \p \|_\BX := \sup \{ |\ell(\p)| : \ell \in \myB \}.$$

\begin{proposition}
The dual space of $\BX$ is given by
$ \BXp := \Span \{ \ell : \ell \in \myB\}$
and the norm on $\BXp$ is given by
\begin{equation}  \label{eq:concrete_dual_norm}
 \| \xi' \|_{\BXp} = \inf \big\{  \sum_n |\lambda_n|  :
  \xi' = \sum_n \lambda_n  \ell_n , \, \, \ell_n \in \myB, \, \lambda_n \in \R \big\}.
  \end{equation}
  In particular $\| \ell \|_{\BX'} \le 1$ for all $\ell \in \myB$. 
\end{proposition}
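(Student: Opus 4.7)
The plan is to unpack this as a standard finite-dimensional duality statement: the quotient $\BX = \R^E/N_\myB$ carries the $\ell^\infty$-type norm induced by the (finite) family $\myB$, so its dual should naturally carry the corresponding $\ell^1$-type norm, where one takes an infimum over representations because the map $\R^\myB \to \BXp$ given by $\lambda \mapsto \sum_\ell \lambda_\ell \ell$ is surjective but not injective.

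First I would establish the identification of $\BXp$ as a vector space. Each $\ell \in \myB$ vanishes on $N_\myB$ by definition of $N_\myB$, so it descends to a well-defined linear functional on $\BX$, and hence $V := \Span\{\ell : \ell \in \myB\} \subseteq \BXp$. For the reverse inclusion, view $V$ as a subspace of $(\R^E)'$. Then by construction $N_\myB = V^{\perp}$ (the annihilator in $\R^E$), and since $\R^E$ is finite dimensional, $V^{\perp\perp} = V$. Functionals on $\BX = \R^E/N_\myB$ correspond exactly to functionals on $\R^E$ vanishing on $N_\myB = V^\perp$, i.e.\ to $V^{\perp\perp} = V$. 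This proves $\BXp = V$.

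For the norm formula, I would use the isometric embedding
\[
\Phi : \BX \to (\R^{\myB}, |\cdot|_\infty), \qquad \Phi(\p)_\ell = \ell(\p),
\]
which is isometric by the very definition $\|\p\|_\BX = \sup_{\ell \in \myB}|\ell(\p)|$ (note $\Phi$ is well defined on the quotient since $\ell(\p)=0$ for all $\ell\in\myB$ when $\p \in N_\myB$). The dual map $\Phi' : (\R^\myB, |\cdot|_\infty)' \to \BXp$ is then a metric quotient map by the Hahn--Banach theorem: every $\xi' \in \BXp$ extends to a functional on $\R^\myB$ of the same norm, so the dual norm on $\BXp$ coincides with the quotient norm induced by $\Phi'$. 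Identifying $(\R^\myB, |\cdot|_\infty)'$ with $(\R^\myB, |\cdot|_1)$ as in Example~\ref{ex:Rp_with_l_infty}, the action of $\Phi'$ on $\lambda = (\lambda_\ell)_{\ell \in \myB}$ is $\Phi'(\lambda) = \sum_\ell \lambda_\ell \ell$, and the quotient norm is precisely
\[
\|\xi'\|_\BXp = \inf\Bigl\{\sum_\ell |\lambda_\ell| : \xi' = \sum_\ell \lambda_\ell \ell\Bigr\},
\]
which is \eqref{eq:concrete_dual_norm}. The final assertion $\|\ell\|_{\BXp} \le 1$ for $\ell \in \myB$ follows by taking the trivial representation $\lambda_\ell = 1$ and $\lambda_{\ell'} = 0$ for $\ell' \ne \ell$.

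There is no real obstacle here; everything is standard finite-dimensional duality. The only minor subtlety is keeping straight the two dualities — the annihilator computation $V^{\perp\perp} = V$ in $(\R^E, (\R^E)')$ used to identify $\BXp$ as a set, and the separate Hahn--Banach/quotient argument used to identify its norm — but both are routine once the embedding $\Phi$ is set up.
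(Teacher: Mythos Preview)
Your argument is correct. The paper takes a different but closely related route: it shows directly via Hahn--Banach separation that the closed unit ball of $\BXp$ equals the closed convex hull $C$ of $\myB \cup -\myB$, from which both the spanning statement and the norm formula follow (the latter being the Minkowski functional of $C$). Your approach instead factors through the isometric embedding $\Phi$ into $(\R^\myB, |\cdot|_\infty)$ and invokes Hahn--Banach extension to identify $\BXp$ with a metric quotient of $(\R^\myB, |\cdot|_1)$. The two arguments are dual to one another --- separation versus extension --- and equally standard; yours is a bit more systematic in that it cleanly separates the vector-space identification (via the biannihilator) from the norm computation (via the quotient), while the paper's is more direct and geometric, reading off both facts from the unit-ball characterisation in one step.
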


\begin{proof} Let $C$ denote the closed convex hull of $ \myB \cup -\myB$.
It follows from the definition of norm on $\BX$ that
$ C \subset \overline{ B_1(\BXp)}$.
For the reverse inclusion one uses that points  $\xi' \notin C$ can be separated
by a linear functional, i.e., that there exist a $g \in \BX$ such that
$ \xi'(g) > 1$ and $\tilde\xi'(g) \le 1  \quad \forall \tilde\xi' \in C$. This implies $\| g \| \le 1$ and hence
$\| \xi' \| > 1$. 
\end{proof}

\begin{proposition} \label{pr:concrete_tensor_norms}
The injective norm on $\BX^{\otimes r}$ can be characterized by
\begin{equation}  \label{eq:concrete_vee_norm}
 \| g \|_\vee = \sup \{| \langle \ell_1 \otimes \ldots \otimes \ell_r, g \rangle | : \ell_i \in \myB\}.
 \end{equation}
\end{proposition}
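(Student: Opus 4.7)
My plan is to prove the two inequalities comprising the equality in \eqref{eq:concrete_vee_norm} separately. Let
\[
M := \sup\{|\langle \ell_1 \otimes \cdots \otimes \ell_r, g\rangle| : \ell_i \in \myB\}.
\]
The inequality $M \le \|g\|_\vee$ is immediate from the previous proposition, which states that $\|\ell\|_{\BX'} \le 1$ for every $\ell \in \myB$: each $r$-tuple $(\ell_1,\dots,\ell_r) \in \myB^r$ is admissible in the defining supremum of the injective norm (recall Definition~\ref{de:projective_injective_norm}), and absolute values can be absorbed by replacing $\ell_i$ with $-\ell_i$, which still lies in the unit ball of $\BX'$ since $\|-\ell_i\|_{\BX'} = \|\ell_i\|_{\BX'} \le 1$.

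For the reverse inequality $\|g\|_\vee \le M$, I would exploit the characterization \eqref{eq:concrete_dual_norm} of the dual norm. Fix $\epsilon > 0$ and arbitrary $\xi_1',\dots,\xi_r' \in \BXp$ with $\|\xi_i'\|_{\BXp} \le 1$. By \eqref{eq:concrete_dual_norm}, for each $i$ there exists a finite representation
\[
\xi_i' = \sum_{n} \lambda_{i,n}\,\ell_{i,n}, \qquad \ell_{i,n} \in \myB, \qquad \sum_n |\lambda_{i,n}| \le 1 + \epsilon.
\]
Using multilinearity of the pairing $\langle \cdot, g\rangle$ in its tensor factors, expand
\[
\langle \xi_1' \otimes \cdots \otimes \xi_r', g\rangle
= \sum_{n_1,\dots,n_r} \lambda_{1,n_1}\cdots\lambda_{r,n_r}\,\langle \ell_{1,n_1} \otimes \cdots \otimes \ell_{r,n_r}, g\rangle.
\]
Taking absolute values and bounding each inner pairing by $M$ yields
\[
|\langle \xi_1' \otimes \cdots \otimes \xi_r', g\rangle|
\le M \prod_{i=1}^r \Bigl(\sum_{n_i} |\lambda_{i,n_i}|\Bigr)
\le (1+\epsilon)^r M.
\]
Taking the supremum over all such tuples $(\xi_1',\dots,\xi_r')$ gives $\|g\|_\vee \le (1+\epsilon)^r M$, and sending $\epsilon \to 0$ finishes the proof.

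There is essentially no serious obstacle here: the argument is a routine unfolding of definitions combined with the dual-norm representation from the preceding proposition. The only subtlety worth noting is that one should not appeal directly to the fact that the unit ball of $\BXp$ equals the closed convex hull of $\myB \cup -\myB$ without some care — one could alternatively argue via extreme points, using that a multilinear functional on a product of compact convex sets attains its supremum at a tuple of extreme points, and that the extreme points of $\overline{\mathrm{conv}}(\myB \cup -\myB)$ lie in $\myB \cup -\myB$ (since $\myB$ is finite). The $\epsilon$-approximation approach sketched above avoids invoking Krein--Milman and keeps the proof self-contained.
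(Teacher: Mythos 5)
Your proof is correct and follows essentially the same route as the paper: the easy inequality from $\|\ell\|_{\BXp}\le 1$, and the reverse inequality via the $(1+\epsilon)$-approximate decomposition of each $\xi_i'$ from \eqref{eq:concrete_dual_norm}, multilinear expansion, and letting $\epsilon\to 0$. The closing remark about extreme points is a harmless alternative but adds nothing beyond the paper's argument.
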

Note that in the special case $E = \{1, \ldots, p\}$ and $\myB = \{ e'_1, \ldots, e'_p\}$ we recover
    \eqref{eq:injective_Rp}.

\begin{proof} Denote the right hand side of \eqref{eq:concrete_vee_norm}  by $m$. 
 Since   $\| \ell \|_{\BXp} \le 1$ for all $\ell \in \myB$ we get  $m \le \| g\|_\vee$.
  To prove the reverse inequality, let $\delta > 0$ and assume that  $\| \xi'_k\| \le 1$. 
 Then by \eqref{eq:concrete_dual_norm}
 there exist $\lambda_{k,n} \in \R$ and $\ell_{k,n} \in \myB$ such that
 $\xi'_k = \sum_n \lambda_{k,n} \ell_{k,n}$ and $\sum_j |\lambda_{k,n}| \le 1+ \delta.$
 Thus 
 $ |\langle \xi'_1 \otimes \ldots \otimes \xi'_r, g\rangle| \le (1+\delta)^r m$
 and hence $\| g \|_\vee \le (1+\delta)^r m$. Since $\delta > 0$ was arbitrary
 we conclude that $\| g\|_\vee \le  m$.  
\end{proof}

\section{Main example}

We now come to our main example. Consider  the torus  $\Lambda = \Z^d/ L^N \Z^d$ and  set $\boldsymbol \Lambda = \{ 1, \ldots, m\}\times \Lambda$.
The elements of $\R^{\boldsymbol \Lambda} =
 \R^m  \otimes \R^\Lambda$ can be viewed as maps from $\boldsymbol \Lambda$ to $\R$
or as maps from $\Lambda \to \R^{m}$. We will use both viewpoints interchangeably.

We are interested in linear functionals $\R^{\boldsymbol \Lambda}$  which are based  on discrete derivatives.
More precisely let $e_1, \ldots, e_d$ denote the standard unit vectors in $\Z^d$ and set
\begin{equation} \label{eq:set_mcU+}
\mathcal U = \{ e_1, \ldots, e_d\}.
\end{equation}
We remark in passing that here our notation differs from  \cite{BS15II}.
There $\mathcal U$ denotes the set $\{ \pm e_1, \ldots, \pm e_d\}$.
For $e \in \mathcal U$ and $f :\Lambda \to \R$ the forward difference operator is given by
\begin{equation} \label{eq:difference_operator_loc}
\nabla^e f(x) = f(x+e) - f(x).
\end{equation}
For a multiindex $\alpha \in \NN_0^{\mathcal U}$ we write
\begin{equation}
\nabla^\alpha = \prod_{e \in \mathcal U} (\nabla^e)^{\alpha(e)}, \qquad \nabla^0 = \Id.
\end{equation}
For a pair $(i, \alpha) \in \{1, \ldots, m\} \times \NN_0^{\mathcal U} $ and $x \in \Lambda$ we define
\begin{equation}
\nabla^{i,\alpha}_x \p = \nabla^\alpha \p_i(x). 
\end{equation}
We set $N_\Lambda = \{ \p: \Lambda \to \R^p : \hbox{$\p$ constant}  \}$.
Given weights $w(i, \alpha)  > 0$ we define a norm on $\BX = \R^{\boldsymbol \Lambda}/ N_\Lambda$ by
\begin{equation}  \label{eq:primal_norm_appendix}
 \| \p \|_{\BX} = \sup_{x \in \Lambda} \sup_{1 \le i \le m} \sup_{1 \le |\alpha| \le \pphi}  w(i, \alpha)^{-1}  \nabla_x^{i, \alpha} \p.
\end{equation}
Here and in the following we always use the $\ell_1$ norm for multiindices
\begin{equation}
|\alpha| = |\alpha|_1 = \sum_{ i \in \Ucal} \alpha_i.
\end{equation}
On the scale $k$ we will usually use the weight
\begin{equation}  \label{eq:weight_appendix}
w_k(i, \alpha) = L^{-k |\alpha|} \hpzc_k,  \quad \hpzc_k = h_k L^{-k \frac{d-2}{2}}, \quad h_k = 2^k h.
\end{equation}

Note that for an element $\p \in \BX$ we cannot define a pointwise value $\p(x)$ but the 
derivative $\nabla^\alpha \p(x)$ are well defined if $\alpha \ne 0$. Indeed $\p$ is uniquely determined
by the derivatives with $|\alpha|_1 = 1$. 
We can choose a unique representative  $\widetilde \p$ of the equivalence class $\p + N$ 
by requiring $\sum_{x \in \Lambda} \widetilde \p(x)= 0$ and we sometimes identify
the space $\BX = \R^{\boldsymbol \Lambda}/ N$ with the space $\{ \psi  \in \R^{\boldsymbol \Lambda} : \sum_{x \in \Lambda} \psi(x) = 0\}$.

The tensor product $\BX \otimes \BX$ is the quotient of $\R^{\boldsymbol \Lambda} \otimes \R^{\boldsymbol \Lambda}$
by the vector space $ \Span \{ \hbox{constants} \otimes \p, \p \otimes \hbox{constants} : \p \in \R^{\boldsymbol \Lambda} \}$. Again an element $g^{(2)} \in 
\BX \otimes \BX$ does not have pointwise values $g_{ij}(x,y)$ but the derivatives $\nabla^{i,\alpha}  \otimes \nabla^{j, \beta} g^{(2)}(x,y)
=\nabla^\alpha \otimes \nabla^\beta g_{ij}(x,y)$
are well defined (for $\alpha \ne 0$ and $\beta \ne 0$) and the derivatives with $|\alpha|_1 = |\beta|_1=1$ determine $g^{(2)}$  uniquely. 
Here $\nabla^{i,\alpha}$ acts on the first argument of $g^{(2)}$ and $\nabla^{j, \beta}$ on the second argument. 
Similar reasoning applies to $\BX^{\otimes r}$ and by Proposition \ref{pr:concrete_tensor_norms} the injective norm on $\BX^{\otimes r}$ is given by
\begin{equation}   \label{eq:tensor_norm_appendix}
\| g^{(r)} \|_{\BX,\vee} = \sup_{x_1,  \ldots, x_r \in \Lambda}  \, \sup_{\mpzc \in \mf m_{\pphi,r}} \,  w(\mpzc)^{-1} \,  
\nabla^{\mpzc_1} \otimes \ldots \otimes \nabla^{\mpzc_r} g^{(r)}(x_1, \ldots, x_r).
\end{equation}
Here
\begin{equation}
w(\mpzc) = \prod_{j=1}^r w(\mpzc_j)
\end{equation}
and  $\mf m_{\pphi, r}$ is the set of $r$-tuples $\mpzc =(\mpzc_1, \ldots, \mpzc_r)$ with
$\mpzc_j = (i_j,  \alpha_j)$ and $1 \le |\alpha_j| \le \pphi$. 
Note that here each $\alpha_j$ is a multiindex, i.e., an element of $\mathbb{N}_0^{\Ucal}$, not a number.
For $\mpzc \in \mf m_{\pphi,r}$ consider the monomial
\begin{equation}  \label{eq:appendix_monomial}
\Mscr_\mpzc(\{x\})(\p) := \prod_{j=1}^r  \nabla^{\mpzc_j} \p(x).
\end{equation}
Then the element 
$\overline{\Mscr_\mpzc(\{x\}) }
\in (\BX^{\otimes r})'$ 
which corresponds to 
$\Mscr_\mpzc(\{x\})$ is
given by the symmetrisation $S (\nabla_x^{\mpzc_1} \otimes \ldots \otimes \nabla_x^{m_r})$. Thus
in view of    \eqref{eq:tensor_norm_appendix} and  \eqref{eq:symmetry_estimate_primal} we get
\begin{equation} \label{eq:estimate_monomial_appendix}
\| \Mscr_\mpzc(\{x \} )  \|_{T_0} = 
\| \overline {\Mscr_\mpzc(\{x\}) }\|'_{\BX, \vee} \le w(\mpzc).
\end{equation}

We consider functionals $F$  localised near a polymer $X \subset \Lambda$, i.e. 
$F(\p) = F(\psi)$ if $\p= \psi$ in $X^\ast$ where $X^\ast$ is the small set neighbourhood of $X$,
see \eqref{eq:nghbhdscompact}. 
Thus it is natural to work with field norms which are also localised.
There are different ways to do that.
We follow the approach in \cite{AKM16} and define
\begin{equation}  \label{eq:appendix_A_localised_norm}
\| \p \|_{\BX, X} := \sup_{x \in X^\ast} \sup_{1 \le i \le p} \sup_{1 \le |\alpha| \le \pphi} w(i, \alpha)^{-1} \,  |\nabla^{i, \alpha} \p(x)|,
\end{equation}
 see  \eqref{eq:primal_norm}.
Brydges and Slade 
take a more abstract approach and define 
$$ \| \p\|^\sim_{\BX, X} = \inf \{ \| \p - \xi\|_{\BX}  :  \xi_{ |\{1, \ldots, p\} \times X} = 0 \},$$
see eqns. (3.37)--(3.39) in   \cite{BS15I}.

The two approaches are very closely related. Indeed,
if the weights on scale $k$ satisfy $w(i,\alpha) = L^{-k|\alpha|} h_k(i)$ then one can use a cut-off argument
and the discrete Leibniz rule  to show that
$$  \| \p\|^\sim_{\BX, X}  \le
 C  \sup_{x \in X^+} \sup_{1 \le i \le p} \sup_{1 \le |\alpha| \le \pphi} w(i, \alpha)^{-1} \,  |\nabla^{i, \alpha} \p(x)|,$$
 see Lemma 3.3. in \cite{BS15II} for a similar result.

Conversely it follows directly from the definitions that for any set $X^\square$
$$ \| \p\|_{\BX, X} \le \| \p\|^\sim_{\BX, X^\square} \quad \text{as long as $X^\ast + [0, \pphi]^d  \subset X^\square$.} $$
Note that by  \eqref{eq:def_R}  we have $\pphi \le R$. Thus the definitions of $X^\ast$ and $X^+$ in  \eqref{eq:nghbhdscompact}
  imply in particular  that
$$  \| \p\|_{\BX, X}  \le  \| \p\|^\sim_{\BX, (X^+)^\ast}   \le  \| \p\|^\sim_{\BX, (X^+)^+} \, . $$

\section{Example with the projective norm on \texorpdfstring{$\BX^{\otimes \protect\headingr}$}{Xr}}   \label{se:appendix_sym_norm}
For the convenience of the reader we show that the approach taken in \cite{AKM16} 
fits in the current framework if we use the projective norm on
\texorpdfstring{$\BX^{\otimes r}$}{Xr}.
Since $\BX$ is finite dimensional the bidual $\BX''$ equals $\BX$. 
If we use the projective norm $\| \cdot \|_\wedge$  on $\BX^{\otimes r}$
then dual norm on  $L \in (\BX^{\otimes r})'$ is given by
\begin{equation}
\| L\|'_\wedge = \| L \|_\vee = \sup \{ \langle L, \xi_1 \otimes \ldots \otimes \xi_r \rangle : \| \xi_i \|_\BX  \le 1 \hbox{ for $i=1, \ldots, r$} \}. 
\end{equation}
This is the usual norm of multilinear maps. If $L$ is symmetric, i.e., $SL = L$,  then one can also define the symmetric norm
\begin{equation}
\|L\|_{\sym} := \sup \{ \langle L, \xi \otimes \ldots \otimes \xi \rangle : \| \xi \|_\BX  \le 1\}.
\end{equation}
For an $r$ homogeneous polynomial $P_r$ we have 
$\| \overline{P_r} \|_{\sym} = \sup \{ P_r(\xi) : \| \xi \| \le 1\}$. 
We claim that 
\begin{equation}  \label{eq:equivalence_sym_vee}
\| L \|_{\sym} \le \|L \|_\vee \le  \frac{r^r}{r!} \|L \|_{\sym}  \quad \forall L \in (\BX^{\otimes r})'.
\end{equation}
The first inequality is trivial and the second follows by polarisation.
Indeed, assume that $\| \xi_i \| \le 1$  consider the Rademacher functions $R_1, \ldots R_r : [0,1] \to \{-1,1\}$,
set $\xi(t) = \sum_{i=1}^r R_i(t) \xi_i$ and use that $\| \xi(t) \| \le r$ to deduce that 
$$ r! \,  \langle L, \xi_1 \otimes \ldots \otimes \xi_r\rangle = \int_0^1 \langle L, \otimes^r \xi(t) \rangle \, \prod_{i=1}^r R_i(t) \, \d t 
\le \| L \|_{\sym}  \,  r^r.$$
The second estimate in   \eqref{eq:equivalence_sym_vee} is sharp for 
$\BX= \R^r$ equipped with the $\ell_1$ norm and the 
permanent $ \langle L, \xi^{(1)} \otimes \ldots \otimes \xi^{(r)} \rangle := \sum_{\pi \in S_r}   \prod_{i=1}^r  \xi^{(i)}_{\pi(i)}$.
To get the upper bound  $\| L \|_{\sym} \le \frac{r!}{r^r}$ use the geometric-arithmetic mean inequality. To get the lower
bound $\|L \|_\vee \ge 1$ take $\xi^{(i)} = e_i$. 

One can define a norm on general polynomials by
\begin{equation}
\| P \|_{T_0, \sym} := \sum_{r=0}^{r_0}  \| \overline{P_r}\|_{\sym}.
\end{equation}
and a corresponding norm on $\|F \|_{T_\p, \sym}$ on the Taylor polynomials of $F$. 
This is the approach taken in \cite{AKM16}. It is easy to see that  $\| S( L \otimes M) \|_{\sym}
\le \|  L\|_{\sym} \,  \| M \|_{\sym}$ for symmetric $L$ and $M$ and that the product property
$\|  PQ \|_{T_0, \sym} \le \| P \|_{T_0, \sym}$ holds,  see the proof of Proposition  \ref{pr:product_estimate_new} above 
or \cite{AKM16} eqn.\ (5.2).

However,  the contraction estimate  \eqref{eq:tensor_contraction} does not hold in general for $\| \cdot \|_{\sym}$ and thus the polynomial
estimate and the two-norm estimate  need not hold  for $\| \cdot \|_{T_\p, \sym}$.
Since $\|  \cdot \|_{\sym}$ and $\| \cdot \|_\vee$ are equivalent by \eqref{eq:equivalence_sym_vee} 
   these estimates  do of course hold if one includes an additional multiplicative constant $C(r_0)$.
   
To see that the contraction estimate need not hold consider $\BX = \R^2$ with the $\ell_\infty$ norm
$|\xi|_\infty =\max(|\xi_1|, |\xi_2|)$. Let $e^1, e^2$ denote the dual basis and set 
$L = e^1 \otimes e^1 - e^2 \otimes e^2$, $\p = e_1 - e_2$ and $M(\xi) = L(\xi \otimes \p)$. 
Then $\langle L, \xi \otimes \xi\rangle )= \xi_1^2 - \xi_2^2$ and hence $\| L \|_{\sym} = 1$. 
Moreover $|\p|_\infty = 1$, but for $\xi = e_1 + e_2$ we have $M( \xi)  = \xi_1 + \xi_2 = 2$ and thus $\|M \|_{\sym} \ge 2$
(in fact, the equality holds). One easily obtains a counterexample for the polynomial estimate by taking $F(\p) = \langle L, \p \otimes \p\rangle$.

\chapter{Estimates for Taylor Polynomials in \texorpdfstring{$\Z^{\protect\headingd}$}{Zd}}
Here we give a proof of  the remainder estimate which was the key 
ingredient in proving, in Section~\ref{subsec:contrC}, the contraction estimate for the linearised operator $\boldsymbol{C}^{(\boldsymbol q)}$.
Recall that, for $f : \Z^d \to \R$,  the discrete  $s$-th order Taylor polynomial at $a$ is given  by
$$ \hbox{$\tay^s_a$} f(z) := \sum_{|\alpha| \le s} \nabla^\alpha f(a) \,   \, b_\alpha(z-a)$$
where 
$$ b_\alpha(z) = \prod_{i=1}^d \binom{z_j}{\alpha_j}   \quad \hbox{and} \quad 
\binom{z_j}{\alpha_j} = \frac{z_j \cdots (z_j - \alpha_j +1)}{\alpha_j!}.$$
It is easy to see that $\nabla^\beta b_\alpha =b_{\alpha-\beta}$ with the conventions
$b_0 \equiv 1$ and $b_{\alpha-\beta} = 0$ if $\alpha - \beta \notin \NN_0^{\{1, \ldots, d\}}$.
Recall that $\mathcal U = \{ e_1, \ldots, e_d\}$.

\begin{lemma}  \label{le:taylor_remainder}
Let $s \in \NN_0$, $\rho \in \NN$ and define
$$ M_{\rho, s}
 = \sup \{ |\nabla^\alpha f(z)| :  |\alpha| = s+1, \,  z \in \Z^d \cap \big( a + [0, \rho]^d\big)  \}.$$
Then for all $\beta \in \mathbb{N}_0^{\mathcal U}$ with $t = |\beta|  \le s$
\begin{equation} \label{eq:taylor_remainder}
\left| \nabla^\beta[ f(z) - \hbox{$\tay_a^s$} f(z)] \right| \le 
M_{\rho, s}
 \binom{|z-a|_1}{s-t+1} \quad \forall z \in \Z^d \cap \big( a + [0, \rho]^d\big).
\end{equation}
\end{lemma}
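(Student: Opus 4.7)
The plan is to reduce to the case $a=0$ by translation, introduce the remainder $g := f - \tay_a^s f$, and prove the bound by induction on the ``depth'' $r := s - t \ge 0$.

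First I would collect the two key structural facts about $g$. Since $\nabla^\gamma b_\alpha = b_{\alpha - \gamma}$ (with $b_{\delta}=0$ whenever $\delta \notin \mathbb{N}_0^d$) and $b_\delta(0) = \delta_{\delta,0}$, one reads off that $\nabla^\gamma \tay_a^s f(a) = \nabla^\gamma f(a)$ for all $|\gamma|\le s$ and $\nabla^\gamma \tay_a^s f \equiv 0$ for $|\gamma|\ge s+1$. Hence $\nabla^\beta g(a) = 0$ for every $|\beta| \le s$, while $\nabla^\alpha g(z) = \nabla^\alpha f(z)$ for $|\alpha|=s+1$, so $|\nabla^\alpha g(z)| \le M_{\rho,s}$ on $a + [0,\rho]^d$. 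These are the only properties of $g$ I will use.

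Next comes the base case $r=0$, i.e.\ $|\beta|=s$. Given $z \in a + [0,\rho]^d$, choose a monotone lattice path $a = z_0, z_1, \ldots, z_\ell = z$ with $z_j = z_{j-1} + e_{i_j}$ and $\ell = |z-a|_1$; every $z_j$ still lies in $a + [0,\rho]^d$. Telescoping,
\begin{equation*}
\nabla^\beta g(z) = \sum_{j=1}^\ell \bigl(\nabla^\beta g(z_j) - \nabla^\beta g(z_{j-1})\bigr) = \sum_{j=1}^\ell \nabla^{\beta + e_{i_j}} g(z_{j-1}),
\end{equation*}
and each summand is of total order $s+1$, hence bounded by $M_{\rho,s}$. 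This gives $|\nabla^\beta g(z)| \le M_{\rho,s}\,\ell = M_{\rho,s}\binom{|z-a|_1}{1}$, which is the claim for $r=0$.

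For the inductive step, assume the bound for order $t+1$, i.e.\ $|\nabla^{\beta + e_i} g(w)| \le M_{\rho,s}\binom{|w-a|_1}{r}$ on $a+[0,\rho]^d$, where $r = s - t$. Using the same monotone path and telescoping as above,
\begin{equation*}
|\nabla^\beta g(z)| \le \sum_{j=1}^\ell |\nabla^{\beta+e_{i_j}} g(z_{j-1})| \le M_{\rho,s} \sum_{k=0}^{\ell-1} \binom{k}{r} = M_{\rho,s} \binom{\ell}{r+1},
\end{equation*}
where the final equality is the hockey-stick identity. Since $\ell = |z-a|_1$ and $r+1 = s - t + 1$, this is precisely the desired estimate. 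The only point that requires any care is ensuring the inductive hypothesis is applicable at every intermediate point $z_{j-1}$, which is automatic because a monotone forward path from $a$ to $z \in a+[0,\rho]^d$ stays inside $a+[0,\rho]^d$; no genuine obstacle arises beyond bookkeeping.
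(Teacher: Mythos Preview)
Your proof is correct and takes a genuinely different route from the paper's. The paper first reduces to the case $t=0$ via the identity $\nabla^\beta \tay_0^s f = \tay_0^{s-t}\nabla^\beta f$, and then proves the $t=0$ case by induction on the \emph{dimension} $d$: the one-dimensional case uses the explicit discrete Taylor formula with remainder and summation by parts, and the induction step combines the $(d-1)$-dimensional bound in the variables $z'=(z_1,\dots,z_{d-1})$ with the one-dimensional bound in $z_d$, together with the identity $\binom{|z|_1}{s+1}=\sum_{|\alpha|=s+1} b_\alpha(z)$ for $z\in\NN_0^d$.

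Your argument instead inducts on the depth $r=s-t$, keeping all dimensions on an equal footing: you telescope $\nabla^\beta g$ along a single monotone lattice path from $a$ to $z$, apply the inductive hypothesis at each step, and close with the hockey-stick identity $\sum_{k=0}^{\ell-1}\binom{k}{r}=\binom{\ell}{r+1}$. This is more elementary---it avoids both the explicit one-dimensional remainder formula and the combinatorial identity relating $\binom{|z|_1}{s+1}$ to the $b_\alpha$---and treats the reduction in $t$ and the estimate simultaneously rather than in two stages. The paper's approach, on the other hand, makes the connection to the polynomials $b_\alpha$ more transparent and fits naturally into the discrete calculus framework used elsewhere.
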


The estimate is sharp for $a=0$ and $t=0$  since  the function $f(z) = \binom{z_1 + \ldots + z_d}{t+1}$ satisfies 
$\nabla^{\alpha} f = 1$ for all $\alpha$ with $|\alpha| = t+1$ (see proof).

\begin{proof}
This result in classical and is a (very) special case of Lemma 3.5 in \cite{BS15II}.
Since the notation here is simpler we include the short proof along the lines of \cite{BS15II}  for the 
convenience of the reader. We may assume that $a=0$.
It suffices to show  \eqref{eq:taylor_remainder} for $t=0$. Indeed, if the result is known for $t=0$ we can
use that $\nabla^\beta \tay_0^s f = \tay_0^{s-t} \nabla^\beta f$ and deduce that
$|\nabla^\beta f(z) - \tay_0^{s-t} \nabla^\beta f(z)| \le 
M_{\rho, s} \binom{|z|_1}{s-t+1}$.
Here we used that $M_{\rho, s-t}(\nabla^\beta f) \le M_{\rho, s}(f)$.

The proof for $t=0$ is by induction over the dimension $d$. 
We first note that for $z \in \NN_0^d$
$$ \binom{|z|_1}{s+1} = b_{s+1}(z_1 +\ldots + z_d) = \sum_{|\alpha| = s+1} b_\alpha(z).$$
Indeed  the first identity follows immediately from the definition of $b_{s+1}$ 
(as a polynomial on $\mathbb{Z}$) since $z_i \ge 0$. To prove the second identity we show that both side have the same discrete derivatives at $z=0$. 
Indeed the discrete derivative $\nabla^\beta$ of the left hand side evaluated at zero  is given by
$ b_{s+1-|\beta|}(0)$. This equals $1$ for $|\beta| = s+1$ and $0$ if $|\beta| \ne s+1$. The same assertion is
true for the right hand side. 

Thus it suffices to show that 
\begin{equation} \label{eq:remainder_induction_hypothesis}
| f(z) - \hbox{$\tay_0^s$} f(z) | \le 
M_{\rho, s}  \sum_{|\alpha| = s+1} b_\alpha(z).
\end{equation}
Note that if $z_j \in \Z$ and $0 < z_j < \alpha_j$ for some $j$ then $b_\alpha(z) = 0$. Thus
$$ b_\alpha(z) \ge 0 \quad \forall z \in \Z^d \cap [0, \rho]^d.$$

For $d=1$ we use the discrete Taylor formula with remainder
$$ f(z) = \sum_{r=0}^s  \nabla^r f(0) \, b_r(z) + \sum_{z'=0}^{z-1} b_s(z-1-z') \, \nabla^{s+1} f(z').$$
This formula is easily proved using induction over $s$ and the summation by parts formula
\begin{align*}  \sum_{z'=0}^{z-1} b_s(z-1-z') \, g(z')
=    b_{s+1}(z) g(0) + \sum_{z''=0}^{z-1}  b_{s+1}(z-1-z'') (g(z''+1) - g(z'')).
\end{align*}

Since
  $\nabla b_{s+1} = b_s$ we have
 $$\sum_{z'=0}^{z-1}| b_s(z-1-z')|  = \sum_{z'=0}^{z-1} b_s(z-1-z') = \sum_{z'=0}^{z-1}  b_s(z') =  b_{s+1}(z)$$
 and thus
  the Taylor formula with remainder implies   \eqref{eq:remainder_induction_hypothesis}
for $d=1$.

Now assume that  \eqref{eq:remainder_induction_hypothesis} holds for $d-1$.
Set $\alpha' = (\alpha_1, \ldots, \alpha_{d-1})$ and $\alpha = (\alpha', \alpha_d)$ and similarly $z = (z', z_d)$. 
Then the  induction hypothesis gives (for $z_j \ge 0$)
\begin{align}  \label{eq:appendixB_step_a}
 \Big| f(z', z_d) - \sum_{|\alpha'| \le s} \nabla^{\alpha'} \!   f(0,z_d)\,   b_{\alpha'}(z')   \Big| \le 
M_{\rho, s}
 \sum_{|\alpha'| = s+1} b_{\alpha',0}(z). 
 \end{align}
 Now by the result for $d=1$ applied to the $z_d$ direction 
$$ \Big|    \nabla^{\alpha'}  f(0,z_d) -   
 \sum_{\alpha_d \le s- |\alpha'|} \nabla^{(\alpha', \alpha_d)} f(0) \,   b_{\alpha_d}(z_d)  \Big|
\le 
M_{\rho, s}
    \, b_{s+1- |\alpha'|}(z_d).$$
   Since $b_{\alpha'}(z') \,  b_{\alpha_d}(z_d) = b_\alpha(z)$ it follows that
    \begin{align}  \label{eq:appendixB_step_b}
    \begin{split}
 &  \, \,     \Big|  \sum_{|\alpha'| \le s} \nabla^{\alpha'}  \! f(0,z_d) \,  b_{\alpha'}(z') - 
 \sum_{|\alpha| \le s} \nabla^\alpha f(0) \,   b_\alpha(z)  \Big| \\
   \le &  \, \, 
M_{\rho, s}
 \sum_{|\alpha'| \le s}      b_{\alpha'}(z') \, b_{s+1- |\alpha'|}(z_d) 
 =  M_{\rho, s}
 \sum_{|\alpha| = s+1, |\alpha'| \le s }   b_{\alpha}(z) . 
 \end{split}
 \end{align}
 Combining \eqref{eq:appendixB_step_a}   and \eqref{eq:appendixB_step_b} we see that 
  \eqref{eq:remainder_induction_hypothesis} holds for $d$. 
\end{proof}

\chapter{Combinatorial Lemmas}
In this appendix we state two lemmas that are used in the reblocking step.
\begin{lemma}\label{le:app1}
Let $X\in\Pck\setminus \mathcal{B}_k$ and $\upalpha(d)=(1+2^d)^{-1}(1+6^d)^{-1}$. 
Then 
\begin{align}\label{eq:app1}
|X|_k\geq (1+2\upalpha(d))|\pi(X)|_{k+1}. 
\end{align}
\end{lemma}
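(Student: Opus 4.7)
The plan is to split into two cases according to whether $X$ is small or large and reduce the large case to a purely combinatorial counting problem about the refinement of $(k+1)$-blocks by $k$-blocks.

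First I would handle the easy case $X \in \mathcal{S}_k \setminus \mathcal{B}_k$. By the definition of $\pi$ on small but non-block polymers, $\pi(X)$ is a single $(k+1)$-block, so $|\pi(X)|_{k+1}=1$, while $|X|_k \geq 2$ since $X$ is not a single $k$-block. Because $2\upalpha(d) = \frac{2}{(1+2^d)(1+6^d)} \leq \frac{2}{2\cdot 2} < 1$ (the constant is much smaller in any fixed dimension), we have $(1+2\upalpha)|\pi(X)|_{k+1} < 2 \leq |X|_k$ and the inequality is trivial.

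The real content is the case $X \in \Pck \setminus \mathcal{S}_k$, where $\pi(X)=\overline{X}$. Writing $u = |\overline X|_{k+1}$ and $n(B') := |\{B\in \mathcal{B}_k(X): B\subset B'\}|$ for $B'\in \mathcal{B}_{k+1}(\overline X)$, we have $n(B')\geq 1$ by definition of $\overline X$. Call $B'$ \emph{poor} if $n(B')=1$ and \emph{rich} otherwise, and let $p$, $r$ denote their respective counts, so $u = p+r$ and $|X|_k = \sum_{B'} n(B') \geq p + 2r = u + r$. The claim reduces to showing $r \geq 2\upalpha(d)\, u$.

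To prove $r \geq 2\upalpha u$, I would exploit the connectedness of $X$ at scale $k$ together with the sharp geometric restriction on poor blocks. If $B'$ is poor and contains the single $k$-block $B \subset X$, then for every $(k+1)$-neighbor $B''$ of $B'$ in $\overline X$ the connectedness of $X$ forces $B$ to sit in the particular "boundary slab" of $B'$ adjacent to $B''$. For $L \geq 3$ these slabs are disjoint for opposite faces, so a poor $B'$ can only have $(k+1)$-neighbors of $\overline X$ concentrated in one of the $2^d$ \emph{octant} directions around it, which accounts for the factor $(1+2^d)^{-1}$. The factor $(1+6^d)^{-1}$ then comes from averaging over the $3^d$-type extended neighborhood of any $(k+1)$-block appearing when one estimates how many poor blocks can accumulate around a given rich block. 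Putting this together by a discharging argument where each poor $B'$ sends a unit charge to an associated rich block within its extended $(k+1)$-neighborhood, and verifying that no rich block receives more than $(2\upalpha)^{-1}$ units, yields the desired lower bound on $r$.

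The main obstacle is the bookkeeping needed to extract the precise constant $\upalpha(d) = (1+2^d)^{-1}(1+6^d)^{-1}$: it is easy to show $r \geq c_d \, u$ for \emph{some} positive $c_d$ using connectedness alone, but pinning down the geometric constants requires a careful partition of the configurations of poor blocks around each rich one, distinguishing the contributions from corners, edges and faces of $(k+1)$-blocks. Once the discharging inequality $r \geq 2\upalpha u$ is established, the conclusion $|X|_k \geq u + r \geq (1+2\upalpha)|\overline X|_{k+1}$ is immediate.
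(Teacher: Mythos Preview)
Your treatment of the small case $X\in\mathcal{S}_k\setminus\mathcal{B}_k$ is correct and matches the paper's proof exactly. For the large case $X\in\Pck\setminus\mathcal{S}_k$, the paper does not give an argument at all but simply cites Lemma~6.15 of \cite{Bry09}, so your attempt at a self-contained proof is more ambitious than what the paper does.

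However, your discharging plan for the large case rests on a geometric claim that is false. You assert that if $B'$ is a poor $(k+1)$-block containing the single $k$-block $B\subset X$, then for \emph{every} $(k+1)$-neighbour $B''$ of $B'$ lying in $\overline X$, connectedness of $X$ forces $B$ to lie in the boundary slab of $B'$ adjacent to $B''$. Connectedness of $X$ only forces $B$ to touch \emph{some} $k$-block of $X$ outside $B'$, hence to lie in \emph{some} boundary slab; it says nothing about the slab toward a particular neighbour $B''$. A $k$-path from $B$ to a $k$-block inside $B''$ may detour through other $(k+1)$-blocks of $\overline X$ before reaching $B''$. Concretely, in $d=2$ with $L\ge 5$ you can place $B$ at the top-left corner of $B'$, connect it upward into the block above, run across through the diagonal block, and descend into the block $B''$ to the right of $B'$; then $B''\in\overline X$ is a $(k+1)$-neighbour of $B'$, yet $B$ is nowhere near the right slab of $B'$.

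Once this claim fails, your octant restriction on the location of $\overline X$-neighbours of a poor block collapses, and with it the route to the factor $(1+2^d)^{-1}$. The poor/rich decomposition and the target inequality $r\ge 2\upalpha u$ are a reasonable reformulation, but you would need a genuinely different mechanism to bound the number of poor blocks; this is not just bookkeeping. The argument in \cite{Bry09} proceeds differently and does not rely on this slab claim.
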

\begin{proof}
Recall that the map $\pi:\Pcal_k\to \mathcal{P}_{k+1}$ 
was defined in \eqref{eq:defofpi} and \eqref{eq:pifactor} and 
determines to which polymer on the next scale the contribution of a polymer $X$ is assigned. 
By definition, it satisfies $\pi(X)=\overline{X}$ for $X\in \Pck\setminus \mathcal{S}_k$ and in this case \eqref{eq:app1} was shown in Lemma 6.15 in \cite{Bry09}.
For $X\in \Scal_k\setminus \Bcal_k$ we use that $\pi(X)\in \Bcal_{k+1}$ to conclude
\begin{align}
|X|_k\geq 2 =2|\pi(X)|_{k+1}\geq (1+2\upalpha)|\pi(X)|_{k+1}.
\end{align}
\end{proof}
\begin{lemma}\label{le:app2}
There exists $\updelta(d,L)<1$ such that
\begin{align}\label{eq:app2}
\sum_{\substack{X\in \Pck\setminus \mathcal{S}_k \\
\pi(X)}=U}\updelta^{|X|_k}\leq 1        
\end{align}
for any $k\in \mathbb{N}$ and $U\in\Pckp$.
\end{lemma}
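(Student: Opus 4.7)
The plan is to reduce the statement to a geometric-type sum and to exploit the strict inequality $|X|_k \geq (1+2\upalpha)|\pi(X)|_{k+1}$ from Lemma~\ref{le:app1} to absorb the factor counting admissible ``seed blocks'' inside $U$. The main input from combinatorics will be a standard lattice animal estimate: there exists a constant $c_d>0$, depending only on the dimension, such that for every $k$-block $B_0$, the number of connected $k$-polymers $X\in\Pck$ with $|X|_k=n$ and $B_0\in\Bcal_k(X)$ is at most $c_d^n$. This follows from the usual counting of connected subgraphs in the nearest-neighbour graph on $\Bcal_k$ viewed under $|\cdot|_\infty$-adjacency (each block has $3^d-1$ neighbours), which reduces under scaling to counting connected subgraphs of $\Z^d$ containing a given vertex.

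Once this is in place, first I would observe that for $X\in\Pck\setminus\Scal_k$ we have $\pi(X)=\overline{X}$, hence $X\subset U$ if $\pi(X)=U$; therefore any block of $X$ must lie in $U$, which contains exactly $L^d|U|_{k+1}$ blocks on scale $k$. Fixing (say) the lexicographically smallest block of $X$ as the seed $B_0$, the number of $X\in\Pck\setminus\Scal_k$ with $\pi(X)=U$ and $|X|_k=n$ is bounded by
\begin{equation*}
  L^d |U|_{k+1}\, c_d^n.
\end{equation*}
Combining this with the inequality $n=|X|_k\geq (1+2\upalpha)|U|_{k+1}$ from Lemma~\ref{le:app1}, I get
\begin{equation*}
  \sum_{\substack{X\in\Pck\setminus\Scal_k\\ \pi(X)=U}} \updelta^{|X|_k}
  \le L^d|U|_{k+1} \sum_{n\ge (1+2\upalpha)|U|_{k+1}} (c_d\updelta)^n
  \le \frac{L^d|U|_{k+1}}{1-c_d\updelta}\, a^{|U|_{k+1}},
\end{equation*}
valid as soon as $c_d\updelta<1$, where $a:=(c_d\updelta)^{1+2\upalpha}$.

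It remains to choose $\updelta=\updelta(d,L)\in(0,1)$ small enough so that the right-hand side is at most $1$ for every integer $|U|_{k+1}\ge 1$. The function $m\mapsto m\, a^m$ on $\{1,2,\dots\}$ is maximised, for $a\le 1/e$, at $m=1$ with value $a$; so I impose $a\le 1/e$, i.e.\ $(c_d\updelta)^{1+2\upalpha}\le 1/e$, and then it suffices to require additionally
\begin{equation*}
  \frac{L^d\, a}{1-c_d\updelta}\le 1,
\end{equation*}
which is certainly satisfied if $c_d\updelta\le 1/2$ and $(c_d\updelta)^{1+2\upalpha}\le (2L^d)^{-1}$. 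These conditions determine an explicit admissible $\updelta=\updelta(d,L)\in(0,1)$, and the lemma follows.

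The only non-trivial step is the lattice animal bound; this is a classical estimate (for example it is used at an analogous place in \cite{Bry09}), but one has to be careful that the combinatorial constant $c_d$ indeed depends only on $d$ and not on $L$, so that the smallness requirement on $\updelta$ translates into the claimed dependence $\updelta=\updelta(d,L)$. Everything else is a direct geometric summation exploiting the gap $1+2\upalpha>1$ provided by Lemma~\ref{le:app1}.
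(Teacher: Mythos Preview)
Your argument is correct and essentially reconstructs what the paper defers to: the paper's own proof of this lemma consists of the single observation that $\pi(X)=\overline{X}$ for $X\in\Pck\setminus\Scal_k$ and then cites Lemma~6.16 in \cite{Bry09}. Your write-up is precisely the standard proof of that cited result---fix a seed block inside $U$ (at most $L^d|U|_{k+1}$ choices), bound the number of connected $k$-polymers of size $n$ through it by a lattice-animal constant $c_d^n$ depending only on $d$, and sum the resulting geometric series using the gap $|X|_k\ge(1+2\upalpha)|U|_{k+1}$ from Lemma~\ref{le:app1}. So there is no difference in approach; you have simply filled in what the paper outsources.
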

\begin{proof}
Recall that $\pi(X)=\overline{X}$ for $X\in \Pck\setminus \mathcal{S}_k$. Now this is Lemma 6.16 in \cite{Bry09}.
\end{proof}
\end{appendix}
\bigskip

{\centerline  {\textbf{Acknowledgements}}}
\smallskip

We are most grateful to David Brydges for explaining us both the grand picture and many of the fine details of his approach to rigorous renormalisation and for his constant encouragement.
SB and SM have been supported by the 
Deutsche Forschungsgemeinschaft (DFG, German Research Foundation) 
through the Hausdorff Center for Mathematics
(GZ EXC 59 and 2047/1, Projekt-ID 3906$ $85813) and the collaborative research centre
'The mathematics of emerging effects'
(CRC 1060, Projekt-ID   2115$ $04053 ). This work was initiated through the research group FOR 718 
'Analysis and Stochastics in Complex Physical Systems'.
The research of RK was  supported by the grants GA\v CR 16-15238S and 20-08468S.
RK and SB would also like to thank the Isaac Newton Institute for Mathematical Sciences for support and hospitality during the programme \emph{Scaling limits, rough paths, quantum field theory} (supported by EPSRC Grant Number EP/R014604/1) where some work on the final version of the paper was undertaken. 

\backmatter

\end{document}